\newtheorem{theorem}{Theorem}
\newtheorem{lemma}[theorem]{Lemma}
\newtheorem*{lemma*}{Lemma}
\newtheorem{proposition}[theorem]{Proposition}
\newtheorem{corollary}[theorem]{Corollary}
\newtheorem{remark}[theorem]{Remark}
\newtheorem*{fact*}{Fact}
\pgfplotsset{every axis/.append style={
                    axis x line=middle,    
                    axis y line=middle,    
                    axis line style={->}, 
                    ymajorticks=false,
                    xtick={-1,1},                     
                    }}
\tikzset{>=stealth}
\newcommand{\N}{\mathbb{N}}
\newcommand{\Z}{\mathbb{Z}}
\newcommand{\R}{\mathbb{R}}
\newcommand{\C}{\mathbb{C}}
\newcommand{\E}{\mathbb{E}}
\newcommand{\bigO}{\mathcal{O}}
\newcommand{\Tr}{\mathrm{Tr}\,}
\renewcommand{\d}{\, \mathrm{d}}
\renewcommand{\Re}{\mathrm{Re} \,}
\renewcommand{\Im}{\mathrm{Im} \,}
\newcommand{\1}{\mathbf{1}}
\newcommand{\ualpha}{\boldsymbol{\upalpha}}
\newcommand{\ubeta}{\boldsymbol{\upbeta}}
\newcommand{\p}{\mathsf p}
\numberwithin{equation}{section}
\numberwithin{theorem}{section}
\newcommand{\ds}{\displaystyle}
\def\@tocline#1#2#3#4#5#6#7{\relax
  \ifnum #1>\c@tocdepth 
  \else
    \par \addpenalty\@secpenalty\addvspace{#2}%
    \begingroup \hyphenpenalty\@M
    \@ifempty{#4}{%
      \@tempdima\csname r@tocindent\number#1\endcsname\relax
    }{%
      \@tempdima#4\relax
    }%
    \parindent\z@ \leftskip#3\relax \advance\leftskip\@tempdima\relax
    \rightskip\@pnumwidth plus4em \parfillskip-\@pnumwidth
    #5\leavevmode\hskip-\@tempdima
      \ifcase #1
       \or\or \hskip 1em \or \hskip 2em \else \hskip 3em \fi
      #6\nobreak\relax
    \dotfill\hbox to\@pnumwidth{\@tocpagenum{#7}}\par
    \nobreak
    \endgroup
  \fi}
\title[Hankel determinants with a multi-cut potential]{
Asymptotics of Hankel determinants with a multi-cut regular potential and Fisher-Hartwig singularities}
\author[C. Charlier, B. Fahs, C. Webb, and M.D. Wong]{Christophe Charlier$^1$ \and Benjamin Fahs$^2$ \and Christian Webb$^3$ \and Mo Dick Wong$^4$}
\date{ \today \\
    $^1$University of Copenhagen, Department of Mathematical Sciences, \texttt{charlier@math.ku.dk}\\%
    $^2$KTH Royal Institute of Technology, Department of Mathematics, \texttt{bfahs@kth.se}\\%
    $^3$University of Helsinki, Department of Mathematics and Statistics, \texttt{christian.webb@helsinki.fi}\\%
    $^4$Durham University, Department of Mathematical Sciences, \texttt{ mo-dick.wong@durham.ac.uk}%
}
\begin{document}

\begin{abstract}
We obtain large $N$ asymptotics for $N \times N$ Hankel determinants corresponding to non-negative symbols with Fisher-Hartwig (FH) singularities in the multi-cut regime. Our result includes the explicit computation of the multiplicative constant.

More precisely, we consider symbols of the form $\omega e^{f-NV}$, where $V$ is a real-analytic potential whose equilibrium measure $\mu_V$ is supported on several intervals, $f$ is analytic in a neighborhood of  $\textrm{supp}(\mu_V)$, and $\omega$ is a function with any number of jump- and root-type singularities in the interior of $\textrm{supp}(\mu_V)$. 

 While the special cases  $\omega\equiv1$ and $\omega e^f\equiv1$ have been considered previously in the literature, we also prove new results for these special cases. No prior asymptotics were available in the literature for symbols with FH singularities in the multi-cut setting.

As an application of our results, we discuss a connection between the spectral fluctuations of random Hermitian matrices in the multi-cut regime and the Gaussian free field on the Riemann surface associated to $\mu_V$. As a second application, we obtain new rigidity estimates for random Hermitian matrices in the multi-cut regime.

\end{abstract}

\maketitle
{\hypersetup{linkcolor=black}
	\tableofcontents}

\section{Introduction and main results}\label{sec: intro} 
In this article, we study the asymptotics of Hankel determinants $H_N(\nu)$ as $N\to \infty$, where
\begin{equation} \label{eq:HNnu}
H_N(\nu)=\det\left(\int_{\R}x^{i+j}\nu(x)dx\right)_{i,j=0}^{N-1},
\end{equation}
for certain integrable functions $\nu=\nu_N:\mathbb R\to [0,\infty)$ which we allow to vary with $N$. An important reason to study these objects is their connection to statistical mechanics and matrix integrals. 

For instance,  the classical fact (see e.g. \cite[Chapter 3]{Deift}) that $H_N$ has the multiple integral representation
\begin{equation}\label{eq:lgmi1}H_N(\nu)=\frac{1}{N!}\int_{\R^N}\prod_{1\leq i<j\leq N}(x_i-x_j)^2 \prod_{j=1}^N \nu(x_j)dx_j, \end{equation}
 provides a connection to the statistical mechanics of a gas of particles interacting through a logarithmic repulsion. 

The viewpoint of matrix integrals stems from the fact (see e.g. \cite[Chapter 5]{Deift}) that if $dM$ denotes the Lebesgue measure on the space of $N\times N$ Hermitian matrices and $V:\R\to \R$ is a smooth function with sufficient growth at $\pm \infty$, then
\begin{multline}\label{eq:lgmi2}
H_N(e^{-NV})=\frac{c_N}{N!} \int e^{-N\Tr V(M)}dM,\\  dM=\prod_{j=1}^N dM_{j,j} \prod_{1\leq i < j\leq N}d\Re M_{i,j}\, d\Im M_{i,j},
\end{multline}
where   $c_N=\pi^{-N(N-1)/2}\prod_{j=1}^Nj!$. The trace in \eqref{eq:lgmi2} is interpreted as $\Tr V(M)$ $=\sum_{j=1}^N V(\lambda_j)$, where $(\lambda_1,...,\lambda_N)\in \R^N$ denote the eigenvalues of the matrix $M$.

Such integrals play a role in a number of different fields: for example, they can be used to enumerate certain graphs \cite{BIZ,EML}, they are connected to orthogonal polynomials, and they are important in random matrix theory \cite{Deift}. The last two subjects feature prominently in this paper. We now discuss the role Hankel determinants play in the theory of random matrices, while we return to the connections to orthogonal polynomials in detail in Section \ref{eq:ortho}.

 Consider  the probability measure whose density is proportional to 
\begin{equation}\label{distributionM}e^{-N\Tr V(M)}dM \end{equation} on the space of Hermitian $N\times N$ matrices, and denote the eigenvalues of $M$ by $\lambda_1,\dots, \lambda_N$. The marginal density of $\lambda_1,\dots, \lambda_N$ is proportional to $\prod_{1\leq i<j\leq N} (\lambda_i-\lambda_j)^2\prod_{j=1}^Ne^{-NV(\lambda_j)}$, and it follows from \eqref{eq:lgmi1} that
\begin{equation}\label{eq:RMTrat}
\E e^{\Tr g(M)}=\frac{H_N(e^{g-NV})}{H_N(e^{-NV})},
\end{equation}
for suitable functions $g$, where $\E$ is the expectation over the random matrix ensemble, and the trace is interpreted as in \eqref{eq:lgmi2}.
Thus asymptotics of (the ratio of) Hankel determinants translate into characterizing the asymptotic behavior of the random variable $\Tr g(M)$ as $N\to\infty$, which in turn is of interest because it yields information about the spectrum of the underlying random matrix. Expectations of the form \eqref{eq:RMTrat} and their generalizations\footnote{Typical generalizations are e.g. $\beta$-matrix models, where in \eqref{eq:lgmi1} the term $(x_i-x_j)^2$ is replaced by $|x_i-x_j|^\beta$. A number of the references below also cover this more general setting, e.g. \cite{BG1,BG2,Eynard,   Johansson, SS, Shcherbina}.} have been studied extensively. It is a classical fact (see e.g. \cite[Section 6]{Deift}) for rather general classes of functions $g$, that the leading order behavior of $\Tr g(M)$ is governed asymptotically by a deterministic quantity which can be written as $N\int g(x)d\mu_V(x)$, where $\mu_V$ is known as the equilibrium measure associated to the potential $V$. We return to the equilibrium measure in more detail in Section \ref{sec:intro_em} below, for now we simply mention the probabilistic interpretation which is that it describes the asymptotic density of the eigenvalues of $M$ as $N\to \infty$, and in this limit all eigenvalues of $M$ will lie in any fixed neighbourhood of the support of $\mu_V$ with probability tending to $1$.

Our goal in this paper is to obtain asymptotics for \eqref{eq:lgmi2} and \eqref{eq:RMTrat} for certain potentials $V$ and functions $g$, and before moving on to the technical details we give a brief overview of the situations in which we are interested. We require $V$ to be real analytic, satisfying $V(x)/|\log x|\to +\infty$ as $x\to \pm \infty$, and that the equilibrium measure of $V$ is $k$-cut regular, $k=1,2,\dots$, which means that $d\mu_V$ is supported on $k$ intervals and additionally satisfies certain technical conditions which we define in Section \ref{sec:intro_em}. Under these conditions, we are interested in the following problems:
\begin{itemize}
\item[(a)] Computing asymptotics for the partition function $H_N(e^{-NV})$ up to and including the multiplicative constant. In the one-cut case ($k=1$) complete asymptotics are by now well-known;  they have been studied  by Ercolani and McLaughlin \cite{EML}, Bleher and Its \cite{BI}, Borot and Guionnet \cite{BG1}, Berestycki, Webb and Wong \cite{BWW}, and Charlier \cite{Charlier}. Obtaining asymptotics in the multi-cut case is a technical challenge, and has been studied both in the physics community by among others Eynard  (see e.g. \cite{Eynard}), and in the mathematics community by Borot and Guionnet \cite{BG1,BG2}, Claeys, Grava and McLaughlin \cite{CGML}, Sandier and Serfaty \cite{SS}, and Shcherbina \cite{Shcherbina}. 

While all of the above mentioned papers study the same problem, the authors have  focused on different goals.
The work by Sandier and Serfaty holds in an impressive level of generality, and covers not just potentials which are regular but a wide class of potentials, up to a multiplicative error term of $\mathcal O(N)$. The works of Eynard and of Borot and Guionnet look at the structure of the full expansion in the multi-cut regular case. The work of Claeys, Grava and McLaughlin \cite{CGML} is restricted to $2$-cut regular potentials, and in this case a particularly elegant result is obtained, describing the asymptotics up to and including the multiplicative constant, in terms of the equilibrium measure $\mu_V$ and special functions such as Jacobi's $\theta$-function and elliptic integrals. 

One of the main goals of the current paper is to obtain an analogue of the results in \cite{CGML} in the $k$-cut regular situation for $k=3,4,\dots$, and in  Theorem \ref{th:pfasy} below we present asymptotics for $H_N(e^{-NV})$ valid up to and including the multiplicative constant. Our formula is described in terms of the equilibrium measure and special functions such as Riemann's $\theta$-function. Our formula is also valid for $k=2$, in which case it agrees with the results of \cite{CGML}. In the special case where $V(x)=\frac{2\nu}{k}\Pi_k(x)^2$, and $\Pi_k$ is a monic polynomial of degree $k$ with $k$ real and distinct zeros and $\nu>0$ is fixed and sufficiently large, then  the equilibrium measure is $k$-cut regular, and  in Remark \ref{RemarkPoly1} we provide a  simplification of the asymptotics of Theorem \ref{th:pfasy}.

\item[(b)] Computing the asymptotics of $\E e^{\Tr f(M)}$ for $f$ which is real on $\R$ and analytic in a neighbourhood of the support  $\mu_V$. In the one-cut case this was solved   by Johansson \cite{Johansson} for certain classes of polynomial potentials and has later been approached for one-cut regular potentials through a variety of methods, see e.g. \cite{BG1,BWW}.  It follows from these works that as $N\to \infty$, $\Tr f(M)-N\int fd\mu_V$ converges in distribution to a normally distributed random variable.

In the $k$-cut regular setting, this no longer holds in general, as first noticed by Pastur in \cite{Pastur}, and further developed by Shcherbina in \cite{Shcherbina} and Borot and Guionnet in \cite{BG2}. Borot and Guionnet provided a formula for $\E e^{\Tr f(M)-N\int fd\mu_V}$ in terms of Riemann theta functions, and made the remarkable observation that the fluctuations of $\Tr f(M)-N\int fd\mu_V$ can be understood as being asymptotically a sum of two independent contributions: a normally distributed random variable, and a ``discrete Gaussian" random variable whose distribution depends on $N$. Both of the works \cite{Shcherbina,BG2} as well as more recent work of Bekerman, Leblé, and Serfaty \cite{BLS} give conditions on $V$ and $f$ under which one does have convergence to a normally distributed random variable in the $k$-cut case.

We derive asymptotics    for $\E e^{\Tr f(M)}$  by giving a proof in terms of Riemann-Hilbert techniques, which are similar to the asymptotics obtained in \cite{BG2}. This is the simplest aspect of our paper, after setting up the Riemann-Hilbert problem the proof is only four pages.  We give a new interpretation of the result by identifying the ``normally distributed part" of $\Tr f(M)-N\int fd\mu_V$ as coming from a variant of the Gaussian free field on a torus with $k-1$ holes -- this is due to the covariance kernel being the bipolar Green's function on this surface.

\item[(c)] Obtaining asymptotics of Hankel determinants with Fisher-Hartwig singularities in the support of the equilibrium measure, of both root-type and jump-type.  Namely, in the most general setting of this paper, we study the asymptotics of $H_N(\nu_N)$ with $\nu_N(x)=e^{f(x)} \omega(x) e^{-NV(x)}$, where $f$ is real and analytic in a neighbourhood of $\textrm{supp}(\mu_V)$, and  $\omega(x)=\prod_{j=1}^p \omega_{\alpha_j}(x)\omega_{\beta_j}(x)$ where
\begin{equation} \nonumber \omega_{\alpha_j}(x)=|x-t_j|^{\alpha_j},\qquad \omega_{\beta_j}(x)=\begin{cases}
e^{\pi i \beta_j}, & x<t_j\\
e^{-\pi i \beta_j}, & x\geq t_j
\end{cases}
\end{equation}
with $\alpha_j>-1$ and $\Re \beta_j =0$ for all $j$, with $t_1,\dots, t_p$ in the interior of the support of $\mu_V$ (i.e. in the bulk of the spectrum of the underlying random matrix ensemble).   In the one-cut case this was solved in a series of papers by Krasovsky \cite{Krasovsky}, Garoni \cite{Garoni}, Its and Krasovsky \cite{IK}, Berestycki, Webb and Wong \cite{BWW}, and Charlier \cite{Charlier}, but no progress has been made in the multi-cut case.  In Theorem \ref{th:FHasy} below, we provide complete asymptotics up to and including the multiplicative constant in the multi-cut case. 

In Section \ref{Sec:rig} we  apply these results to obtain some rigidity estimates for the eigenvalues in the bulk of the spectrum -- estimates which provide upper bounds on how much particles in the bulk can fluctuate from their expected locations. 
\end{itemize}

For each of the three problems, our proof relies on the asymptotic analysis of orthogonal polynomials with exponential weights, through the analysis of Riemann-Hilbert problems. The asymptotics of such polynomials have been thoroughly analyzed in the multi-cut setting in \cite{DKMVZ,DKMVZfirst,BI2}. Additionally, double scaling limits moving between one-cut support and two-cut support are relevant and have been well studied, see e.g. \cite{CKV,BI3,CKuij,BertolaLee,Claeys,Mo}, although these particular transitions do not feature in the current paper.

Before stating our main results in Sections \ref{sec:Main1}-\ref{sec:Main3}, we discuss some background concerning equilibrium measures and Riemann theta-functions in Section \ref{Prelim}.

\subsection{Preliminary setup}\label{Prelim}
We describe all of the mathematical objects in terms of which our results are presented, namely equilibrium measures, $\theta$-functions, and certain Riemann-surfaces.
\subsubsection{Equilibrium measures}\label{sec:intro_em}

A key concept for us will be the equilibrium measure associated to the confining potential $V$. We will assume in what follows that $V:\R\to \R$ is a real analytic function satisfying 
\begin{equation}\label{eq:Vgrowth}
\lim_{|x|\to \infty}\frac{V(x)}{\log |x|}=\infty.
\end{equation}
The associated equilibrium measure $\mu_V$ is the unique minimizer of the energy functional
\begin{equation}\label{eq:energyf}
I_V(\mu)= \iint_{\R\times \R} \log |x-y|^{-1}d\mu(x)d\mu(y)+\int_\R V(x)d\mu(x)
\end{equation}
defined on the space of Borel probability measures on $\R$. 

The equilibrium measure plays a crucial role in our analysis for the following heuristic reason. From \eqref{eq:lgmi1} with $\nu=e^{-NV}$, we see that the main contribution to $H_{N}(e^{-NV})$ comes from the $N$-tuples which minimize
\begin{align}\label{lol1}
\frac{1}{N^2}\left(\sum_{1\leq i < j \leq N}\log|x_{j}-x_{i}|^{-2}+N\sum_{j=1}^{N}V(x_{j})\right).
\end{align}
The $N=+\infty$ version of this minimization problem is precisely the problem of finding a probability measure minimizing \eqref{eq:energyf}. Hence, for large $N$, one expects the density of points of the $N$-tuples minimizing \eqref{lol1} to be well approximated by the equilibrium measure $\mu_{V}$.  As an introduction to equilibrium measures in the setting of random matrices we suggest \cite[Chapters 6.6 and 6.7]{Deift} and \cite[Chapter 11.2]{PS}.

The equilibrium measure $\mu_V$ may alternatively be characterised through the Euler-Lagrange equations \cite[Theorem 1.3, Chapter I.1]{Safftot} -- namely $\mu_V$ is the unique Borel probability measure with compact support and finite logarithmic energy  $I_V(\mu_V)<\infty$,  such that there exists a real constant $\ell$ for which
\begin{align}
2\int\log |x-y|d\mu_V(y)&=V(x)- \ell \qquad \text{for} \qquad x\in \mathrm{supp}(\mu_V), \label{eq:EL1}\\
2\int\log |x-y|d\mu_V(y)&\leq V(x)- \ell \qquad \text{for} \qquad x\notin \mathrm{supp}(\mu_V). \label{eq:EL2}
\end{align}

\begin{figure}\begin{center}
		\begin{tikzpicture}
		\draw [dashed]  (6,0)--(7,0);
		\node [above] at (0.3,-0.1){$a_1$};
		\node [above] at (1.5,-0.1){$b_1$};
		\node [above] at (2.5,-0.1) {$a_2$};
		\node [above] at (4,-0.1) {$b_2$};
		\node [above] at (9,-0.1) {$a_{k}$};
		\node [above] at (11,-0.1) {$b_{k}$};
		
		\draw[black,fill=black]  (0.3,0) circle [radius=0.04];
		\draw[black,fill=black]  (1.5,0) circle [radius=0.04];
		\draw[black,fill=black]  (2.5,0) circle [radius=0.04];
		\draw[black,fill=black]  (4,0) circle [radius=0.04];
		\draw[black,fill=black]  (9,0) circle [radius=0.04];
		\draw[black,fill=black]  (11,0) circle [radius=0.04];
		
		\draw (0.3,0)--(1.5,0);
		\draw  (2.5,0)--(4,0);
		\draw  (9,0)--(11,0);
		
		\end{tikzpicture} 
		\caption{The support $J=\textrm{supp}( \mu_V)$.}\label{SuppmuV}\end{center}
\end{figure}
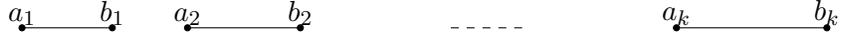

It was proven in \cite{DKM} that when $V$ is real analytic satisfying \eqref{eq:Vgrowth}, then the support of  $\mu_V$ consists of a finite union of closed intervals, which we denote by
\begin{equation}\label{eq:supp}
J=\mathrm{supp}(\mu_V)=\bigcup_{j=1}^k[a_j,b_j]
\end{equation}
with $-\infty<a_1<b_1<a_2<...<b_k<\infty$. Moreover, it was proven  that for $x\in J$, the density of $\mu_V$ has the form
\begin{equation}\label{eq:density}\begin{aligned}
d\mu_V(x)&=\psi_V(x) dx\\ \psi_V(x)&=\frac{1}{\pi i}\mathcal R^{1/2}_+(x)h_V(x),
\end{aligned}
\end{equation}
where $h_V(x)$ is a real analytic function, and where $\mathcal R_+^{1/2}(x)$ denotes the boundary values from the upper half plane of the function
\begin{equation}\label{eq:R}
\mathcal R^{1/2}(z)=\prod_{j=1}^k \left((z-a_j)(z-b_j)\right)^{1/2},
\end{equation}
with branches chosen such that $\mathcal R^{1/2}$ is analytic in $\C\setminus J$ and as $z\to \infty$, $\mathcal R^{1/2}(z)\sim z^k$.  

We will throughout the paper assume that the inequality \eqref{eq:EL2} is strict and that $h_V$ does not vanish on $J$. If $V$ satisfies these conditions, then we say that $V$ is $k$-cut regular.

Let $U_{V}$ be the domain of analyticity of $V$ -- in particular, as $V$ is real analytic, $\mathbb{R}\subset U_{V}$. Then $\psi_V$ admits an analytic continuation to $U_V\setminus J$ (see \cite{DKM}):
\begin{equation}\label{eq:psi}
\psi_V(z)= \begin{cases}
\frac{1}{\pi i}\mathcal R^{1/2}(z)h_V(z), & \mbox{if } z \in U_{V} \setminus J, \\
\frac{1}{\pi i}\mathcal R^{1/2}_{+}(z)h_V(z), & \mbox{if } z \in J.
\end{cases}
\end{equation}
With this notation $\psi_V(z)$ is well-defined for all $z \in U_{V}$, but analytic only in $z \in U_{V} \setminus J$.

By deforming the contours of integration in a suitable manner and taking the derivative of  \eqref{eq:EL1} with respect to $x$, one obtains the following useful formula 
\begin{equation} \label{ddzV}
V'(x)=-\oint_\Gamma \frac{\psi_{V}(w)}{x-w}dw,
\end{equation}
where $\Gamma$ is a closed curve oriented counter-clockwise in $U_V$ containing $x$ and $J$. 
By relying on \eqref{ddzV} it is  simple to verify that
\begin{equation} \label{eq:h}
h_V(x)=\frac{1}{4\pi i}\oint_\Gamma \frac{V'(z)}{z-x}\frac{dz}{\mathcal R^{1/2}(z)}. 
\end{equation}

Additionally, $\psi_V$ satisfies
\begin{equation} \label{zerointpsiV}
\int_{b_j}^{a_{j+1}} \psi_V(x)dx=0\end{equation}
for $j=1,2,\dots,k-1$.  Formula \eqref{zerointpsiV} is verified by representing $V(a_{j+1})-V(b_j)$ in terms of the left-hand side of \eqref{eq:EL1}, and comparing to the integral of the right-hand side of \eqref{ddzV} from $b_j$ to $a_{j+1}$.

We will also find it convenient to introduce the notation
\begin{equation}\label{eq:Omega}
\Omega_j=\int_{a_{j+1}}^{b_k}d\mu_V=\sum_{l=j+1}^k \mu_V([a_l,b_l])
\end{equation}
for $j=1,...,k-1$.

\subsubsection{Theta functions}\label{sec:intro_theta}
 Note that although \eqref{eq:energyf} is uniquely minimized, \eqref{lol1} is not necessarily so when $k\geq 2$. This, in turn, will produce some oscillations in the large $N$ asymptotics of $H_{N}(e^{-NV})$.

It turns out that these oscillations are described in terms of the Riemann $\theta$ function, which we now introduce. For a $(k-1)\times (k-1)$ symmetric complex matrix $\tau$ with positive definite imaginary part, one defines $\theta(\cdot|\tau): \C^{k-1}\to \C$, 
\begin{equation}\label{eq:thetadef}
\theta(\xi)=\theta(\xi|\tau)=\theta(\xi_1,...,\xi_{k-1}|\tau)=\sum_{n\in \Z^{k-1}} e^{2\pi i n^{T}\xi+\pi i n^{T}\tau n}.
\end{equation} 
They appear in the study of Riemann surfaces and we will return to their properties in greater detail in Section \ref{sec:thetaids}.

We  define theta functions with characteristics $\ualpha,\ubeta \in \mathbb C^{k-1}$  by 
\begin{equation}\label{eq:thetachar}\begin{aligned}
\theta\left[{\substack{\ualpha \\ \ubeta}}\right](\xi)&=e^{\pi i \ualpha^{T}\tau \ualpha+2\pi i \ualpha^{T}(\xi+\ubeta)}\theta(\xi+\tau \ualpha+\ubeta)\\ &=\sum_{n\in \Z^{k-1}}e^{2\pi i (n+\ualpha)^{T}(\xi+\ubeta)+\pi i (n+\ualpha)^{T}\tau (n+\ualpha)},
\end{aligned}
\end{equation}
for $\xi\in \C^{k-1}$.
Since $\theta=\theta\left[{\substack{0 \\ 0}}\right]$, all identities for $\theta\left[{\substack{\ualpha \\ \ubeta}}\right]$ are also valid for $\theta$ by setting $\alpha,\beta=0$. Note that $\xi \mapsto \theta\left[{\substack{\ualpha \\ \ubeta}}\right](\xi)$ is an entire function of $\xi\in \C^{k-1}$.

 Now assume that $2\ualpha,2\ubeta \in \mathbb Z^{k-1}$. If $4\ualpha^T\ubeta$ is even, then $\theta\left[{\substack{\ualpha \\ \ubeta}}\right]$ is an even function, and if $4\ualpha^T\ubeta$ is odd, then $\theta\left[{\substack{\ualpha \\ \ubeta}}\right]$ is an odd function (and in particular $\theta\left[{\substack{\ualpha \\ \ubeta}}\right](0)=0$). A fundamental property (see e.g. \cite[Chapter VI]{FK}) of the $\theta$-function is the quasi-periodicity property: if we write $e_j$ for the $j$th standard unit vector of $\C^{k-1}$ and $\tau_j$ for the $j$th column vector of $\tau$, then for all $\xi\in \C^{k-1}$
\begin{equation}\label{eq:quasi}\begin{aligned}
\theta\left[{\substack{\ualpha \\ \ubeta}}\right](\xi+e_j)&=(-1)^{2\ualpha_j}\theta\left[{\substack{\ualpha \\ \ubeta}}\right](\xi), \\  \theta\left[{\substack{\ualpha \\ \ubeta}}\right](\xi+\tau_j)&=(-1)^{2\ubeta_j}e^{-\pi i \tau_{j,j}-2\pi i \xi_j}\theta\left[{\substack{\ualpha \\ \ubeta}}\right](\xi). \end{aligned}
\end{equation}

\subsubsection{Riemann Surfaces} \label{Sec:calS}
Let $k\geq 2$, and let $\mathcal S$ be the two sheeted Riemann surface associated with $\mathcal R^{1/2}(z)$, constructed by gluing two copies of the Riemann sphere $\mathbb C \cup \{\infty\}$ along $J$, in such a manner that $\mathcal R^{1/2}(z)$ is meromorphic on $\mathcal S$, with a pole at $\infty$ on both sheets. Then $J_\pm$ on the first sheet is identified with $J_\mp$ on the second sheet, where $J_{\pm}=\left\{\lim_{\epsilon \downarrow 0}x\pm i\epsilon:\, x\in J\right\}$. As $z\to \infty$,  $z^{-k}\mathcal R^{1/2}(z)\to 1$ on the first sheet, while on the second sheet $z^{-k}\mathcal R^{1/2}(z)\to -1$. $\mathcal S$ is topologically equivalent to a torus with $k-1$ holes, and is thus of genus $k-1$.

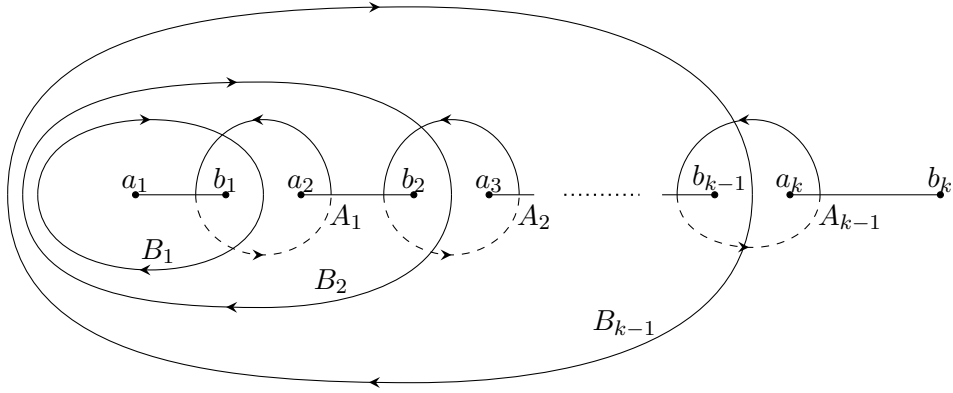
\begin{figure}[]
	\begin{center}
		\begin{tikzpicture}
		\draw [dotted,thick]  (6,0)--(7,0);
		\node [above] at (0.3,-0.1){$a_1$};
		\node [above] at (1.5,-0.1){$b_1$};
		\node [above] at (2.5,-0.1) {$a_2$};
		\node [above] at (4,-0.1) {$b_2$};
		\node [above] at (5,-0.1) {$a_3$};
		\node [above] at (8,-0.1) {$\,\, b_{k-1}$};
		\node [above] at (9,-0.1) {$a_{k}$};
		\node [above] at (11,-0.1) {$b_{k}$};
		
		\draw[black,fill=black]  (0.3,0) circle [radius=0.04];
		\draw[black,fill=black]  (1.5,0) circle [radius=0.04];
		\draw[black,fill=black]  (2.5,0) circle [radius=0.04];
		\draw[black,fill=black]  (4,0) circle [radius=0.04];
		\draw[black,fill=black]  (5,0) circle [radius=0.04];
		\draw[black,fill=black]  (8,0) circle [radius=0.04];
		\draw[black,fill=black]  (9,0) circle [radius=0.04];
		\draw[black,fill=black]  (11,0) circle [radius=0.04];
		
		\node [below] at (3.1,0) {$A_1$};
		\node [below] at (5.6,0) {$A_2$};
		\node [below] at (9.8,0) {$A_{k-1}$};
		\node [below] at  (0.6,-0.45){$B_1$};
		\node [below] at (2.9,-0.85){$B_2$};
		\node [below] at (6.8,-1.4) {$B_{k-1}$};

		\draw  (2.5,0)--(4,0);
		\draw  (0.3,0)--(1.5,0);
		\draw  (5,0)--(5.6,0);
		\draw  (7.3,0)--(8,0);
		\draw  (9,0)--(11,0);
		
		\draw[decoration={markings, mark=at position 0.5 with {\arrow[thick]{<}}},
		postaction={decorate}] (3.6,0) to [out=90,in=180] (4.5,1) to [out=0,in=90]  (5.4,0);
		\draw[dashed,decoration={markings, mark=at position 0.5 with {\arrow[thick]{>}}},
		postaction={decorate}] (7.5,0) to [out=270,in=180] (8.5,-0.7) to [out=0,in=270]  (9.4,0); 
		\draw[decoration={markings, mark=at position 0.5 with {\arrow[thick]{<}}},
		postaction={decorate}] (7.5,0) to [out=90,in=180] (8.5,1) to [out=0,in=90]  (9.4,0);
		\draw[dashed,decoration={markings, mark=at position 0.5 with {\arrow[thick]{>}}},
		postaction={decorate}] (3.6,0) to [out=270,in=180] (4.5,-0.8) to [out=0,in=270]  (5.4,0);        
		
		\draw[decoration={markings, mark=at position 0.5 with {\arrow[thick]{<}}},
		postaction={decorate}] (1.1,0) to [out=90,in=180] (2,1) to [out=0,in=90]  (2.9,0);
		\draw[dashed,decoration={markings, mark=at position 0.5 with {\arrow[thick]{>}}},
		postaction={decorate}] (1.1,0) to [out=270,in=180] (2,-0.8) to [out=0,in=270]  (2.9,0);

		\draw[decoration={markings, mark=at position 0.5 with {\arrow[thick]{>}}},
		postaction={decorate}] (-1.2,0) to [out=90,in=180] (2,1.5) to [out=0,in=90]  (4.5,0);
		\draw[decoration={markings, mark=at position 0.5 with {\arrow[thick]{<}}},
		postaction={decorate}] (-1.2,0) to [out=270,in=180] (2,-1.5) to [out=0,in=270]  (4.5,0);     
		
		\draw[decoration={markings, mark=at position 0.5 with {\arrow[thick]{>}}},
		postaction={decorate}] (-1,0) to [out=90,in=180] (0.5,1) to [out=0,in=90]  (2,0);
		\draw[decoration={markings, mark=at position 0.5 with {\arrow[thick]{<}}},
		postaction={decorate}] (-1,0) to [out=270,in=180] (0.5,-1) to [out=0,in=270]  (2,0);  
		
		\draw[decoration={markings, mark=at position 0.5 with {\arrow[thick]{>}}},
		postaction={decorate}] (-1.4,0) to [out=90,in=180] (4,2.5) to [out=0,in=90]  (8.5,0);
		\draw[decoration={markings, mark=at position 0.5 with {\arrow[thick]{<}}},
		postaction={decorate}] (-1.4,0) to [out=270,in=180] (4,-2.5) to [out=0,in=270]  (8.5,0);   
		
		\end{tikzpicture} 
		\caption{\label{ContHom}The canonical homology basis of $\mathcal S$ we consider. The solid parts are on the first sheet and the dashed parts are on the second sheet.}\end{center}
\end{figure}

Consider the cycles $A_1,\dots, A_{k-1}$ and $B_1,\dots,B_{k-1}$ in Figure \ref{ContHom}. Each cycle $B_j$ lives on the first sheet and encloses $\cup_{i=1}^j[a_i,b_i]$, while each cycle $A_j$ lives on both sheets, entering  the first sheet in the interval $(a_{j+1},b_{j+1})$ and passing to the second sheet through the interval $(a_j,b_j)$.

By standard theory, see e.g. \cite[Chapter III.2.8]{FK}, there is a unique basis of holomorphic 1-forms $\boldsymbol{\omega_1},\dots, \boldsymbol{\omega_{k-1}}$ on $\mathcal S$ satisfying
\begin{equation}\label{eq:Aint}
\oint_{A_i}\boldsymbol{\omega_j}=\delta_{i,j},
\end{equation}
for $i,j=1,2,\dots,k-1$, and  each $\boldsymbol{\omega_j}$ is of the form 
\begin{equation} \label{formomega} \boldsymbol{\omega_j}(z)=\frac{\mathsf Q_j(z)}{\mathcal R^{1/2}(z)}dz,
\end{equation}
where $\mathsf Q_j(z) $ is a polynomial of degree at most $k-2$ (see e.g. \cite[Chapter III.7]{FK}).\footnote{By $\mathsf Q_j(z)$, we mean $\mathsf Q_j({\bf P}(z))$, where ${\bf P}$ is the projection from $\mathcal S$ onto the complex plane.}

Note that by our definition of $A_i$ and $\boldsymbol{\omega_j}$, we have by contour deformation that
\begin{equation} \label{eq:Aint2}
\oint_{A_i}\boldsymbol{\omega_j}=-2\int_{b_i}^{a_{i+1}}\frac{\mathsf Q_j(x)}{\mathcal R^{1/2}(x)}dx,
\end{equation}
where the integral on the right-hand side is taken on the first sheet,
from which one can deduce (since $\mathcal R^{1/2}$ is real on $(b_i,a_{i+1})$) that the coefficients of $\mathsf Q_j$ are real valued. 

\medskip

An important quantity constructed from $\boldsymbol{\omega_j}$ is the period matrix $\tau=(\tau_{i,j})_{i,j=1}^{k-1}$ defined by 
\begin{equation}\label{eq:period}
\tau_{i,j}=\oint_{B_i}\boldsymbol{\omega_j}.
\end{equation}
We recall  from  \cite[Chapter III.3]{FK} (see also \cite{Schrader}) that $\tau$ is symmetric, that the entries of $\tau$ are purely imaginary, and that $-i\tau$ is positive definite.

\subsubsection*{The Abel map}
Define the Abel map $u(z)=(u_1(z),\dots, u_{k-1}(z))^T$ by
\begin{equation}\label{eq:Abel}
u_j(z)=\int_{b_k}^z\frac{\mathsf Q_j(s)}{\mathcal R^{1/2}(s)} ds, \qquad z\in (\C \cup \{\infty\})\setminus (-\infty,b_k],
\end{equation}
where the integration contour does not cross $(-\infty,b_k]$. 
For $k\geq 2$, we let $\ualpha=\frac{1}{2}e_1$ and let $\ubeta=\frac{1}{2}\sum_{j=1}^{k-1}e_j$ throughout the paper. Then $4\ualpha^T\ubeta$ is odd. Furthermore  $\left[{\substack{{ \ualpha} \\ { \ubeta}}}\right]$ is a non-singular characteristic, meaning that $\theta\left[{\substack{{ \ualpha} \\ { \ubeta}}}\right](u(z))$ is not identically zero, see Lemma \ref{Prime} below. It is a well known fact, for which  we also provide a proof in Lemma \ref{Prime}, that for fixed $\lambda \in \mathbb C\setminus J$,
\begin{equation}\label{Theta}
\Theta(z,\lambda)=
\frac{\theta\left[{\substack{{ \ualpha} \\ { \ubeta}}}\right](u(z)-u(\lambda))}{\theta\left[{\substack{\ualpha \\ \ubeta}}\right](u(z)+u(\lambda))} \end{equation}
is  analytic on $\mathbb C\setminus J$ as a function of $z$, with a single zero at $z=\lambda$. Here and below, unless otherwise stated, $\theta\left[{\substack{{ \ualpha} \\ { \ubeta}}}\right]=\theta\left[{\substack{{ \ualpha} \\ { \ubeta}}}\right](\cdot|\tau)$ with $\tau$ as in \eqref{eq:period}. When $k=1$, \eqref{Theta} is not well defined. In this case, we define
\begin{equation}\label{Thetaonecut} \begin{aligned}\Theta(z,\lambda)&=\frac{b_1-a_1}{2}\frac{z-\lambda}{\mathcal R^{1/2}(z)\mathcal R^{1/2}(\lambda)+z\lambda+a_1b_1-\frac{a_1+b_1}{2}(z+\lambda)}\\
&= \frac{2}{b_1-a_1}\frac{-\mathcal R^{1/2}(z)\mathcal R^{1/2}(\lambda)+z\lambda+a_1b_1-\frac{a_1+b_1}{2}(z+\lambda)}{z-\lambda}. \end{aligned}\end{equation}

Furthermore, provided $\lambda \in \mathbb C\setminus J$ is fixed, $\Theta(z,\lambda)$ remains bounded in a neighbourhood of $J$. Since $\theta\left[{\substack{{ \ualpha} \\ { \ubeta}}}\right]$ is odd, $\Theta(z,\lambda)=-\Theta(\lambda,z)$. Define
\begin{equation}\label{def:W} W(z,\lambda)=W(\lambda,z)=\frac{\partial }{\partial z} \frac{\partial}{\partial \lambda} \log \Theta(z,\lambda). \end{equation}
Then $W$ is meromorphic on $\mathbb C\setminus J$, and has a double pole at $z=\lambda$. In Proposition \ref{PropForM} below, we prove that

\begin{equation}\label{id:W}
W(z,\lambda)=\frac{1}{2(z-\lambda)^2}\left[\left(\frac{\gamma(z)}{\gamma(\lambda)}\right)^2+\left(\frac{\gamma(\lambda)}{\gamma(z)}\right)^2\right]-
2\sum_{i,j=1}^{k-1}(\partial_i \partial_j \log \theta)(0)u_i'(z)u_j'(\lambda),
\end{equation}
where $\gamma$ is defined by
\begin{equation}\label{eq:gamma}
\gamma(z)=\prod_{j=1}^k\Bigg(\frac{z-b_j}{z-a_j}\Bigg)^{1/4}, \qquad z\in \C\setminus J,
\end{equation}
and we choose the branch of the root such that $\gamma$ is analytic in $\C\setminus J$ and $\gamma(z)\to 1$ as $z\to \infty$.

Finally, define 
\begin{equation}\label{defwlambda}
w_z(\lambda)=\frac{\partial}{\partial\lambda}\log \Theta(z,\lambda)=\frac{\partial}{\partial \lambda}\log \Theta(\lambda,z).
\end{equation}
In \eqref{2ndrepwlambda} we provide the alternative representation
 \begin{equation}\label{repwlambda}
w_z(\lambda)=\frac{\mathcal R^{1/2}(z)}{\mathcal R^{1/2}(\lambda)(\lambda-z)}+2\mathcal R^{1/2}(z) \sum_{j=1}^{k-1} u_j'(\lambda)\int_{b_j}^{a_{j+1}}\frac{dx}{\mathcal R^{1/2}(x)(x-z)}.
\end{equation}

We are now in a position to state our main results.

\subsection{Main results: asymptotics of $H_N(e^{-NV})$}\label{sec:Main1} The simplest example is the setting of the Gaussian Unitary Ensemble. It is a classical result (see e.g. \cite[equation (3.3.10)]{Mehta}) that the asymptotics of the Hankel determinant with quadratic potential are given by
\begin{multline}\label{asymGUE}
\log H_N\left(e^{-2\sigma Nx^2}\right) \\
=-N^2\left(\frac{3}{4}-\frac{1}{2}\log \frac{1}{4\sigma}\right)+N\log (2\pi)-\frac{1}{12} \log N+\zeta'(-1)+\mathcal O(N^{-1}),
\end{multline}
uniformly for $\sigma$ in compact subsets of $(0,+\infty)$ as $N\to \infty$, where $\zeta'(-1)$ is the derivative of the Riemann-zeta function at $-1$.

More generally, for one-cut regular potentials $V$, the problem has been studied in \cite{EML,BI,BG1,BWW,Charlier}, and by \cite[Proposition 5.5]{BWW} or  \cite[Theorem 1.1]{Charlier} we have
\begin{multline} \log H_N(e^{-NV})=-N^2 I_V(\mu_V)+N\log 2\pi-\frac{1}{12} \log N\\ +\zeta'(-1)-\frac{1}{24}\log \left(\frac{\tilde \psi(a_1)\tilde \psi(b_1)|b_1-a_1|^3}{2^6}\right)+\mathcal O(N^{-1}),
\end{multline}
as $N\to \infty$, where $I_V$ was defined in \eqref{eq:energyf} and
\begin{equation} \label{tildepsi}
\tilde  \psi(q):=\lim_{\lambda\to q} \pi \left| \frac{\psi_{V}(\lambda)}{\left(\lambda-q\right)^{1/2}}\right|,
\end{equation} 
for $q\in \{a_1,b_1\}$. 

When $V$ is two-cut regular, Claeys, Grava and McLaughlin \cite[formula (1.9)]{CGML} obtained the following asymptotics
\begin{multline}\label{formCGML}
\log H_N(e^{-NV})=-N^2 I_V(\mu_V)+N\log (2\pi)-\frac{1}{6}\log N+ \log \theta(N\Omega)\\+2 \zeta'(-1) -\frac{1}{2}\log \frac{K(\mathrm k)}{\pi} -\frac{1}{24}\sum_{q\in\{a_j,b_j\}_{j=1}^2} \log \tilde \psi(q)\\+\frac{1}{8}\log (b_2-b_1)(a_2-a_1)
 -\frac{1}{8}\sum_{l,j=1}^2\log|b_j-a_l|+\mathcal O(N^{-1}),
\end{multline}
as $N\to \infty$, where $\mathrm k =\sqrt{\frac{(a_2-b_1)(b_2-a_1)}{(b_2-b_1)(a_2-a_1)}}$ and $K(\mathrm k)=\int_0^1 \frac{dx}{\sqrt{(1-x^2)(1-\mathrm k^2x^2)}}$ is the complete elliptic integral of the first kind, and where $\tilde \psi$ is given by \eqref{tildepsi} for $q\in \{a_j,b_j\}_{j=1}^2$. In \cite{CGML} the constant $\tau$ in $\theta(\cdot |\tau)$ had the alternative representation $iK(\mathrm k')/K(\mathrm k)$ where $\mathrm k'=\sqrt{1-\mathrm k^2}$; one may verify that this indeed matches our $\tau$ by standard formulas in \cite[3.147]{GR}.  

In the $k$-cut regular case an asymptotic formula was obtained by Borot and Guionnet \cite{BG2}, who found that as $N\to \infty$,
\begin{equation}\label{formBG} \log H_N(e^{-NV})=-N^2I_V(\mu_V) +N\log (2\pi) -\frac{k}{12}\log N + \log \theta(N\Omega)+\widehat C+\mathcal O(N^{-1}).
\end{equation} 
The formula for $\widehat C$  is rather involved (see Theorem 1.5 and Proposition 7.5 in version 5 of \cite{BG2}), and we have not found any direct comparisons between \eqref{formBG} and \eqref{formCGML} in the literature.
 This left the open problem of whether in fact, for $k=3,4,\dots$, there is a simple expression for $\widehat C$ such as the one appearing in \eqref{formCGML} in the two-cut case, which we address in the following theorem.

\begin{theorem}\label{th:pfasy}
	Let $V:\mathbb R \to \mathbb R$ be real-analytic, satisfying \eqref{eq:Vgrowth}, and assume that $V$ is $k$-cut regular for some $k\geq 2$. Then 
	\begin{multline} \label{MainFormula}\log H_N(e^{-NV})=-N^2I_V(\mu_V) +N\log (2\pi) -\frac{k}{12}\log N + \log \frac{\theta(N\Omega)}{\theta(0)}\\ +\frac{k}{4} \log 2+k\zeta'(-1)
-\frac{1}{24}\sum_{q\in\{a_j,b_j\}_{j=1}^k} \log \tilde \psi(q)\\ +\frac{1}{8}\Bigg(\sum_{1\leq l<j \leq k} [\log(a_j-a_l)+\log(b_j-b_l)]-\sum_{l,j=1}^k\log|b_j-a_l|\Bigg)
+o(1), 
\end{multline}
as $N\to \infty$, where $I_V$ was defined in \eqref{eq:energyf} and where $\tilde \psi$ is given by \eqref{tildepsi} for $q\in \{a_j,b_j\}_{j=1}^k$.
\end{theorem}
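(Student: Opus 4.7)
The plan is to combine the Deift-Zhou steepest descent method for the Riemann-Hilbert (RH) problem for orthogonal polynomials with a one-parameter differential-identity argument, extending the approach used by Claeys, Grava and McLaughlin \cite{CGML} in the $2$-cut case (and in the spirit of Charlier \cite{Charlier} in the $1$-cut case) to general $k\geq 2$.

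First I would introduce a smooth one-parameter family $\{V_s\}_{s\in[0,1]}$ of $k$-cut regular potentials with $V_1=V$ and $V_0$ a reference potential for which \eqref{MainFormula} is either known or independently verifiable -- for instance, a polynomial potential as in Remark~\ref{RemarkPoly1}, which provides a tractable starting point in the same topological class. The deformation is kept entirely inside the $k$-cut regular class so that the Deift-Zhou analysis applies uniformly in $s$. Differentiating the partition function in $s$ yields the identity
\begin{equation*}
\log H_N(e^{-NV}) - \log H_N(e^{-NV_0}) = -N\int_0^1 \E_{V_s} \sum_{j=1}^N (\partial_s V_s)(\lambda_j)\, ds,
\end{equation*}
which reduces the problem to an asymptotic expansion of a linear statistic uniform in $s$, up to and including $o(1)$.

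Next, for each $s$ I would run the standard steepest descent: normalize using the $g$-function built from $\mu_{V_s}$, open lenses along $J$, and construct a global parametrix from the Riemann $\theta$-function associated to the surface $\mathcal S$ of \S\ref{Sec:calS} together with the Szeg\H{o}-type function $\gamma$ from \eqref{eq:gamma}, gluing Airy parametrices at each of the $2k$ endpoints $a_j,b_j$ (soft edges by regularity). From the resulting small-norm RH problem I would extract, to the required precision, the one-point correlation function and hence $\E_{V_s}\sum_j (\partial_s V_s)(\lambda_j)$. The quasi-periodicity \eqref{eq:quasi} of the $\theta$-function with characteristics is precisely what allows the shift $N\Omega$ to thread through the global parametrix and ultimately produce the $\log(\theta(N\Omega)/\theta(0))$ contribution after integration in $s$.

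Finally, the constants are assembled as follows: each of the $2k$ Airy parametrices contributes, via the classical asymptotic expansion of the Airy model, a universal packet involving $\zeta'(-1)$, $\log 2$, $\log N$, and $-\tfrac{1}{24}\log\tilde\psi(q)$, summing over the $2k$ endpoints to $-\tfrac{k}{12}\log N + k\zeta'(-1)+\tfrac{k}{4}\log 2 - \tfrac{1}{24}\sum_q\log\tilde\psi(q)$; the matching of these Airy parametrices against the global $\theta$-parametrix produces the algebraic combination of $\log(a_j-a_l)$, $\log(b_j-b_l)$, and $\log|b_j-a_l|$ through evaluation of $\gamma$ and of the Abel map $u$ at the endpoints. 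The main obstacle, and the reason this has not previously appeared for $k\geq 3$, is to reduce the resulting $\theta$-function expressions to the elementary closed form in \eqref{MainFormula}; this will require the identities collected in \S\ref{sec:intro_theta} together with the ``bipolar Green's function'' representations \eqref{id:W}--\eqref{repwlambda}, as well as careful control of the $o(1)$ remainder uniformly in the oscillatory parameter $N\Omega$.
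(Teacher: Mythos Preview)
Your overall architecture---differential identity plus Deift--Zhou steepest descent, integrated against a reference potential in the same $k$-cut class---is exactly the paper's. But the part you pass over is the whole difficulty, and your account of where the constants come from is not right.

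The differential identity only gives $\partial_s\log H_N(e^{-NV_s})$; integrating it yields the \emph{difference} $\log H_N(e^{-NV})-\log H_N(e^{-NV_0})$. The constants $k\zeta'(-1)$, $-\tfrac{k}{12}\log N$, $\tfrac{k}{4}\log 2$, and $N\log(2\pi)$ are \emph{not} produced by the Airy parametrices during the $s$-integration: they sit entirely in the constant of integration $\log H_N(e^{-NV_0})$. (In the paper, the Airy contributions in the $s$-integration assemble into the total $s$-derivative of $-\tfrac{1}{24}\sum_q\log\tilde\psi(q)$ plus the endpoint-log combination, together with the $\theta$-term; see \eqref{enddiffid}.) So the entire content of the theorem beyond the differential-identity machinery is to find a $k$-cut regular $V_0$ for which $\log H_N(e^{-NV_0})$ can be computed to $o(1)$ accuracy \emph{independently}. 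You propose to take $V_0$ as in Corollary~\ref{RemarkPoly1}, but that corollary is a consequence of Theorem~\ref{th:pfasy} itself---using it as the reference is circular.

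The paper's resolution is nontrivial: take $V_0(x)=\tfrac{2\sigma}{k}T_k(x)^2$ with $\sigma>1$ (the Chebyshev potential), and exploit the exact identity of Lemma~\ref{le:op2gue} relating the orthogonal polynomials for $w_k(x)=|U_{k-1}(x)|\,w_1(T_k(x))$ to those for the one-cut weight $w_1$. This reduces $H_N(e^{-NV_0})$ to products of one-cut Hankel determinants (ultimately GUE, whence $\zeta'(-1)$ via Selberg). The reduction does not close in elementary form for fixed $\sigma>1$; the paper takes a further limit $\sigma\downarrow 1$ (after $N\to\infty$) to simplify the residual $\theta$- and Szeg\H{o}-type quantities, with both \eqref{LimCheby} and \eqref{MainResSec11} proved as double limits whose $\sigma$-divergent pieces cancel. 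Without supplying an equivalent mechanism for the reference case, your proposal does not establish the multiplicative constant.
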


We provide an outline of the proof of Theorem \ref{th:pfasy} in Section \ref{sec:pfasy}. The main technical ingredient is the Deift-Zhou steepest descent analysis of a Riemann-Hilbert problem associated to a system of orthogonal polynomials, and in essence the proof is divided into two steps:
\begin{itemize}
\item[Step 1.] Find an example of a $k$-cut regular potential $V_0$ for which we can obtain asymptotics of $\log H_N(e^{-NV_0})$.
\begin{itemize}
\item When $k=1$ the obvious choice is the Gaussian Unitary Ensemble, because the corresponding determinant is closely related to a Selberg integral for which asymptotics are well-known. This is the approach of e.g. \cite{BWW, Charlier}.
\item When $k=2$, the authors of \cite{CGML} viewed (roughly speaking) a certain symmetric two-cut potential as a one-cut potential reflected through the origin to obtain a reduction to the known one-cut asymptotics.
\item In the case of general $k$, we also make a reduction to one-cut asymptotics, but this is not possible through a reflection through the origin. We opt instead to build a potential $V_0$ in terms of the Chebyshev polynomials, which we call the Chebyshev potential, and rely on the properties of Chebyshev polynomials  to make a comparison with a one-cut potential. The inspiration for such a potential comes from the study of Toeplitz determinants - it was observed in \cite{Baik,Bthesis} that certain determinants with rotationally invariant symbols could be simplified, and the Chebyshev potential is our attempt at creating an analogue of such a rotational symmetry on the real line. This does in fact not work quite as smoothly as in the Toeplitz case, we need to take a limit where $V_0$ remains a $k$-cut potential but comes closer and closer to being a one-cut potential to complete Step 1. However, the ordering of limits is important, and in Section \ref{sec:special} we take care to avoid complicated double scaling limits (we always remain in the $k$-cut situation).
\end{itemize}
\item[Step 2.] Take a continuous deformation of potentials $V_s$ for a parameter $s\in[0,1]$ such that $V=V_1$ and such that we can compute the asymptotics of $\frac{\partial}{\partial s}\log H_N(e^{-NV_s})$. When $k\geq 2$ these deformations involve combinations of $\theta$-functions associated to a Riemann surface of genus $k-1$. These combinations  need to be integrated in terms of the parameter of deformation $s$, and the challenge is to  discern which combinations may be discarded as error terms, and to extract and simplify the remaining terms. \newline
For the reader unfamiliar with $\theta$-functions it might be a good idea to think of $k=2$ where many serious simplifications occur, for example: (i) $\theta$ is a complex function of a single variable, (ii) if $f$ is a non-trivial function on $\mathcal S$ then $\theta\circ  f$ is non-trivial (this is not necessarily true if $k\geq 3$), and (iii) $\theta(x\Omega)$ is periodic in $x\in \mathbb R$ (again this is not necessarily true if $k\geq 3$).
\end{itemize}

To make a  comparison between Theorem \ref{th:pfasy} and the asymptotics of Claeys, Grava and McLaughlin presented in \eqref{formCGML}, we rely on the identity $K(\mathrm k)=\frac{\pi}{2}\theta(0)^2$, and we find that the formulas match. We have not been able to make a comparison with the results of Borot and Guionnet \cite{BG2}.

The function $N \mapsto \theta(N\Omega)$ is in general quasi-periodic, however for certain potentials $V$ it is periodic, e.g. if $\mu_V ([a_j,b_j])=1/k$ for $j=1,\dots,k$. We now provide details for a class of polynomial potentials where this holds, and furthermore the asymptotics of Theorem \ref{th:pfasy} simplify. Let $\Pi_k$ be a monic polynomial of degree $k$ with $k$ distinct real roots, and let $1/\nu^*$ be the smallest local maximum of $\Pi_k(x)^2$. Suppose that $\nu>\nu^*$, so that $\Pi_k(x)^2-1/\nu$ has $2k$ zeros, which we denote by $a_1<b_1<\dots<a_k<b_k$. Then it is well-known that (see e.g. \cite[Theorem 11.2.7]{PS}) the potential $V(x)=\frac{2\nu}{k}\Pi_k(x)^2$ is $k$-cut regular, $J:= \mbox{supp} ( \mu_{V}) = \cup_{j=1}^{k}[a_{j},b_{j}]$, and
\begin{equation} \label{eqmeasPoly} d\mu_V(x)=\frac{2\nu}{\pi k} |\Pi_k'(x)|\sqrt{1/\nu-\Pi_k^2(x)}dx, \qquad \mbox{for } x\in J.
\end{equation} 
We provide a proof of \eqref{eqmeasPoly} in Section \ref{ProofPik} for the reader's convenience, where we also prove the following corollary.
\begin{corollary}\label{RemarkPoly1}
Let $V(x)=\frac{2\nu}{k}\Pi_k(x)^2$ and let $\nu>\nu^*$. Then, as $N\to \infty$, 
\begin{multline*} 
\log H_N(e^{-NV}) \\
=-\frac{N^2}{2k}\left(3/2+\log \nu+2\log 2\right) +N\log (2\pi)-\frac{k}{12}\log N + \log \frac{\theta(r\Omega)}{\theta(0)}
\\+\frac{k}{8} \log 2 -\frac{k}{16}\log \nu+\frac{k}{12}\log k+k\zeta'(-1)-\frac{1}{16}\sum_{q\in\{a_j,b_j\}_{j=1}^k} \log \left|\Pi_k'(q)\right|\\+\frac{1}{8}\Bigg(\sum_{1\leq l<j\leq k} [\log(a_j-a_l)+\log(b_j-b_l)]-\sum_{l,j=1}^k\log|b_j-a_l|\Bigg)
+o(1),
\end{multline*}
where $r\equiv N\, \textrm{mod}\,\,  k$.
\end{corollary}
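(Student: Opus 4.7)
My plan is to first verify the equilibrium measure formula \eqref{eqmeasPoly}, then substitute the resulting data into Theorem \ref{th:pfasy} and simplify. The central observation is that, since $\Pi_{k}$ is monic of degree $k$ and $\{a_{j},b_{j}\}_{j=1}^{k}$ are by definition the $2k$ simple real roots of $\Pi_{k}(x)^{2}-1/\nu$, one has the identification $\mathcal{R}(z)=\Pi_{k}(z)^{2}-1/\nu$, and hence
\[
\Pi_{k}(z)-\mathcal{R}^{1/2}(z)=\frac{1}{2\nu\,\Pi_{k}(z)}+O(\Pi_{k}(z)^{-3}) \qquad \text{as } |z|\to\infty.
\]

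To verify \eqref{eqmeasPoly} I would take as ansatz $h_{V}(z)=\tfrac{2\nu}{k}\Pi_{k}'(z)$ and construct the candidate Cauchy transform $g'(z):=\tfrac{V'(z)}{2}-\tfrac{2\nu}{k}\mathcal{R}^{1/2}(z)\Pi_{k}'(z)$. One then checks that (i) the above large-$z$ expansion gives $g'(z)=\Pi_{k}'(z)/(k\Pi_{k}(z))+O(z^{-k-1})=1/z+O(1/z^{2})$, so $\mu_{V}$ has unit mass; (ii) the jump $g'_{+}(x)-g'_{-}(x)=-2\pi i\psi_{V}(x)$ reproduces the claimed density, because the sign of $\mathcal{R}^{1/2}_{+}/i$ on $(a_{j},b_{j})$ and the sign of $\Pi_{k}'$ on $(a_{j},b_{j})$ both equal $(-1)^{k-j}$, so $\mathcal{R}^{1/2}_{+}(x)\Pi_{k}'(x)=i|\Pi_{k}'(x)|\sqrt{1/\nu-\Pi_{k}(x)^{2}}$ throughout $J$; and (iii) the strict inequality \eqref{eq:EL2} in each gap. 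Together with the non-vanishing of $h_{V}$ on $J$ (ensured by $\nu>\nu^{*}$, which forces $\Pi_{k}'\ne 0$ at each $a_{j},b_{j}$) this establishes $k$-cut regularity.

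The computational workhorse is then the change of variable $u=\Pi_{k}(x)$ on each $(a_{j},b_{j})$, under which $\Pi_{k}$ is a bijection onto $[-1/\sqrt{\nu},1/\sqrt{\nu}]$. This immediately gives $\mu_{V}([a_{j},b_{j}])=\tfrac{2\nu}{\pi k}\int_{-1/\sqrt{\nu}}^{1/\sqrt{\nu}}\sqrt{1/\nu-u^{2}}\,du=1/k$, so $\Omega_{j}=(k-j)/k$ and $N\Omega\equiv r\Omega\pmod{\mathbb{Z}^{k-1}}$; the first line of \eqref{eq:quasi} then gives $\theta(N\Omega)=\theta(r\Omega)$. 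The same substitution yields $\int V\,d\mu_{V}=1/(2k)$. To obtain the modified Robin constant $\ell$ I would apply \eqref{eq:EL1} at $x=b_{k}$ (where $V(b_{k})=2/k$), using the polynomial identity
\[
\prod_{j=1}^{k}(b_{k}-y_{j}(u))=\Pi_{k}(b_{k})-u=\frac{1}{\sqrt{\nu}}-u,
\]
with $y_{j}(u)$ denoting the unique preimage of $u$ in $(a_{j},b_{j})$, to collapse $\sum_{j}\log(b_{k}-y_{j}(u))$ into a single logarithm. The substitution $u=\sin\theta/\sqrt{\nu}$ together with the classical evaluation $\int_{-\pi/2}^{\pi/2}\log(1-\sin\theta)\cos^{2}\theta\,d\theta=\pi/4-\pi(\log 2)/2$ then delivers $\int\log(b_{k}-y)\,d\mu_{V}(y)=(1-\log(4\nu))/(2k)$, hence $\ell=(1+\log(4\nu))/k$ and $I_{V}(\mu_{V})=(3/2+\log\nu+2\log 2)/(2k)$, producing the leading $-N^{2}/(2k)$ coefficient in the corollary.

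The remaining ingredient is the hard-edge behaviour: the Taylor expansion at $q\in\{a_{j},b_{j}\}$ gives $|1/\nu-\Pi_{k}(x)^{2}|=\tfrac{2|\Pi_{k}'(q)|}{\sqrt{\nu}}|x-q|+O((x-q)^{2})$, whence $\tilde{\psi}(q)=\tfrac{2\sqrt{2}\,\nu^{3/4}}{k}|\Pi_{k}'(q)|^{3/2}$. Inserting this into the $\tilde{\psi}$-sum of \eqref{MainFormula} produces $-\tfrac{1}{24}\sum_{q}\log\tilde{\psi}(q)=-\tfrac{k}{8}\log 2-\tfrac{k}{16}\log\nu+\tfrac{k}{12}\log k-\tfrac{1}{16}\sum_{q}\log|\Pi_{k}'(q)|$, which combined with the explicit $\tfrac{k}{4}\log 2$ and $k\zeta'(-1)$ terms in \eqref{MainFormula} (using $\tfrac{k}{4}\log 2-\tfrac{k}{8}\log 2=\tfrac{k}{8}\log 2$) and with the unchanged last double-product line yields the claimed formula. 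The main obstacle is the logarithmic integral defining $\ell$: what makes it tractable is the product-to-single-log collapse from the polynomial identity above, which replaces what would otherwise be a $k$-term sum of transcendental integrals by a single clean trigonometric integral.
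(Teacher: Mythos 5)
Your proposal is correct and follows the same overall strategy as the paper: verify the equilibrium measure \eqref{eqmeasPoly}, deduce $\mu_V([a_j,b_j])=1/k$ (hence $\Omega_j=(k-j)/k$ and $\theta(N\Omega)=\theta(r\Omega)$ by periodicity), compute $\tilde\psi(q)=2^{3/2}\nu^{3/4}k^{-1}|\Pi_k'(q)|^{3/2}$, and evaluate $I_V(\mu_V)=\tfrac{\ell}{2}+\tfrac12\int V\,d\mu_V$ before substituting into Theorem \ref{th:pfasy}. The one genuine divergence is the computation of the Robin constant $\ell$: the paper sums the Euler--Lagrange identity \eqref{eq:EL1} over the $k$ zeros of $\Pi_k$ (where $V$ vanishes), which collapses $\sum_j\log|x_j-y|$ into $\log|\Pi_k(y)|$ and reduces $k\ell$ to the single integral $-\tfrac{4\nu}{\pi}\int_{-1/\sqrt\nu}^{1/\sqrt\nu}\sqrt{1/\nu-y^2}\,\log|y|\,dy$; you instead evaluate \eqref{eq:EL1} at the single endpoint $b_k$ and use $\prod_j(b_k-y_j(u))=\Pi_k(b_k)-u=1/\sqrt\nu-u$ to collapse the logarithmic potential into the trigonometric integral $\int_{-\pi/2}^{\pi/2}\log(1-\sin\theta)\cos^2\theta\,d\theta$. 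Both devices exploit the same monic-factorization trick and both yield $\ell=(1+\log\nu+2\log 2)/k$, so the choice is a matter of taste; your version has the mild advantage of only needing $V$ at one point, the paper's of producing an even log-integral with no endpoint singularity inside the range. Your verification of the equilibrium measure via a direct Cauchy-transform ansatz is likewise an unpacked version of the paper's appeal to Proposition \ref{pr:defo1}; just make sure the gap inequality \eqref{eq:EL2} is actually checked (sign of $q\mathcal R^{1/2}$ between $\xi_j$ and the band edges), as the paper does inside that proposition.
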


\subsection{Main results: ratio asymptotics $H_N(Fe^{-NV})/H_N(e^{-NV})$} 
We consider $k$-cut regular potentials $V$, and consider the ratio asymptotics $\frac{H_N(Fe^{-NV})}{H_N(e^{-NV})}$ under the following assumptions on $F$:
\subsubsection*{Assumptions on $F$}
\begin{itemize}
\item[(a)] $F$ is  non-negative on $\mathbb R$. On any compact subinterval of $\mathbb R$, $F$ is bounded and integrable.
\item[(b)] $f(x)=\log F(x)$ is real analytic in a neighbourhood of $J$. 
\item[(c)] There exists $c>0$ such that $\frac{F(x)}{e^{cV(x)}}\to 0$ as $x\to \pm \infty$.
\end{itemize}

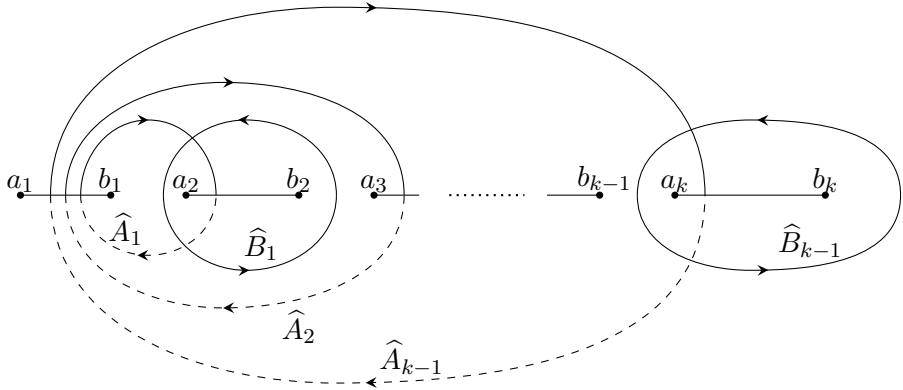
\begin{figure}[]
	\begin{center}
		\begin{tikzpicture}
		\draw [dotted,thick]  (6,0)--(7,0);
		\node [above] at (0.3,-0.1){$a_1$};
		\node [above] at (1.5,-0.1){$b_1$};
		\node [above] at (2.5,-0.1) {$a_2$};
		\node [above] at (4,-0.1) {$b_2$};
		\node [above] at (5,-0.1) {$a_3$};
		\node [above] at (8,-0.1) {$\,\, b_{k-1}$};
		\node [above] at (9,-0.1) {$a_{k}$};
		\node [above] at (11,-0.1) {$b_{k}$};
		
		\draw[black,fill=black]  (0.3,0) circle [radius=0.04];
		\draw[black,fill=black]  (1.5,0) circle [radius=0.04];
		\draw[black,fill=black]  (2.5,0) circle [radius=0.04];
		\draw[black,fill=black]  (4,0) circle [radius=0.04];
		\draw[black,fill=black]  (5,0) circle [radius=0.04];
		\draw[black,fill=black]  (8,0) circle [radius=0.04];
		\draw[black,fill=black]  (9,0) circle [radius=0.04];
		\draw[black,fill=black]  (11,0) circle [radius=0.04];
		
		\node [below] at (1.7,-0.1) {$\widehat A_1$};
		\node [below] at (4,-1.4) {$\widehat A_2$};
		\node [below] at (5.5,-1.8) {$\widehat A_{k-1}$};
		\node [below] at (3.5,-0.3){$\widehat B_1$};
		\node [below] at (10.8,-0.25) {$\widehat B_{k-1}$};

		\draw  (2.5,0)--(4,0);
		\draw  (0.3,0)--(1.5,0);
		\draw  (5,0)--(5.6,0);
		\draw  (7.3,0)--(8,0);
		\draw  (9,0)--(11,0);
		
		\draw[decoration={markings, mark=at position 0.5 with {\arrow[thick]{>}}},
		postaction={decorate}] (0.9,0) to [out=90,in=180] (3,1.5) to [out=0,in=90]   (5.4,0);
		\draw[dashed,decoration={markings, mark=at position 0.5 with {\arrow[thick]{<}}},
		postaction={decorate}] (0.7,0) to [out=270,in=180] (4.5,-2.5) to [out=0,in=270]  (9.4,0); 
		
		\draw[decoration={markings, mark=at position 0.5 with {\arrow[thick]{>}}},
		postaction={decorate}] (0.7,0) to [out=90,in=180] (4.5,2.5) to [out=0,in=90]  (9.4,0);
		\draw[dashed,decoration={markings, mark=at position 0.5 with {\arrow[thick]{<}}},
		postaction={decorate}] (0.9,0) to [out=270,in=180] (3,-1.5) to [out=0,in=270]  (5.4,0);        
		
		\draw[decoration={markings, mark=at position 0.5 with {\arrow[thick]{>}}},
		postaction={decorate}] (1.1,0) to [out=90,in=180] (2,1) to [out=0,in=90]  (2.9,0);
		\draw[dashed,decoration={markings, mark=at position 0.5 with {\arrow[thick]{<}}},
		postaction={decorate}] (1.1,0) to [out=270,in=180] (2,-0.8) to [out=0,in=270]  (2.9,0);

		\draw[decoration={markings, mark=at position 0.5 with {\arrow[thick]{<}}},
		postaction={decorate}] (2.2,0) to [out=90,in=180] (3.3,1) to [out=0,in=90]  (4.5,0);
		\draw[decoration={markings, mark=at position 0.5 with {\arrow[thick]{>}}},
		postaction={decorate}] (2.2,0) to [out=270,in=180] (3.3,-1) to [out=0,in=270]  (4.5,0);

		\draw[decoration={markings, mark=at position 0.5 with {\arrow[thick]{<}}},
		postaction={decorate}] (8.5,0) to [out=90,in=180] (10,1) to [out=0,in=90]  (12,0);
		\draw[decoration={markings, mark=at position 0.5 with {\arrow[thick]{>}}},
		postaction={decorate}] (8.5,0) to [out=270,in=180] (10,-1) to [out=0,in=270]  (12,0);   
		
		\end{tikzpicture} 
		\caption{\label{ContHom2}The second canonical homology basis of $\mathcal S$ we consider. The solid parts are on the first sheet and the dashed parts are on the second sheet.}\end{center}
\end{figure}

  When $V$ is one-cut regular, the problem has been well-studied (see e.g. \cite{Johansson, BG1,BWW}), and the following asymptotics hold as $N\to \infty$:
\begin{multline} \label{AsymJohan}\frac{H_N(Fe^{-NV})}{H_N(e^{-NV})}=e^{N\int_Jf(x)d\mu_V(x)}\\ \times \exp\left(\frac{1}{2}\oint_{\Gamma}\frac{\mathcal R^{1/2}(z)}{2\pi i }\int_J\frac{f(x)}{\mathcal R^{1/2}_+(x)(x-z)}dx f'(z)\frac{dz}{2\pi i}\right)(1+\mathcal O(N^{-1})),
\end{multline}
where $\Gamma$ is a closed loop oriented clockwise containing $[a_1,b_1]$.

When $V$ is $k$-cut regular, the problem has been studied in \cite{Shcherbina, BG2, BLS}, who all derive different representations for the asymptotics. The results in \cite{BG2} are similar to ours, but are given in the homology basis in Figure \ref{ContHom2}, see the discussion following Theorem \ref{th:smoothasy}. 

We start by discussing a condition under which $\Tr f(M)-N\int fd\mu_V$ converges to a normally distributed random variable. It was observed by Borot and Guionnet in \cite{BG2} that this occurs when $V$ is $k$-cut regular and
\begin{equation} \int_J \frac{f(x)x^jdx}{\mathcal R_+^{1/2}(x)}=0 \label{condition0}\end{equation}
for $j=0,1,\dots,k-2$. We verify this, and additionally we find that assuming \eqref{condition0}, the ratio asymptotics of $H_N(Fe^{-NV})/H_N(e^{-NV})$ are given precisely by \eqref{AsymJohan} also in the $k$-cut case,  now with $\Gamma=\cup_{i=1}^k \Gamma_i$ and where $\Gamma_i$ is a closed loop oriented counterclockwise containing $[a_i,b_i]$, giving a nice parallel to the one-cut case.

More generally, when \eqref{condition0} does not hold, the asymptotics of $\frac{H_N(Fe^{-NV})}{H_N(e^{-NV})}$ are governed by the following theorem.

\begin{theorem}\label{th:smoothasy} Let $V:\mathbb R \to \mathbb R$ be real-analytic, satisfying \eqref{eq:Vgrowth}, and assume that $V$ is $k$-cut regular for some $k\geq 2$.  Let $F$ satisfy the assumptions on $F$ above. Then as $N\to\infty$
	\begin{multline}\label{th:smoothasy1}
	\frac{H_N(Fe^{-NV})}{H_N(e^{-NV})}=e^{N\int_Jf(x)d\mu_V(x)}\frac{\theta(N\Omega+\Upsilon(f)|\tau)}{\theta(N\Omega|\tau)}\\ \times \exp \left[ \frac{1}{4}\oint_{\Gamma}\oint_{\widetilde \Gamma}W(z,\lambda)f(z)f(\lambda) \frac{dz}{2\pi i}\frac{d\lambda}{2\pi i}\right](1+\mathcal O(N^{-1}))
	\end{multline}
	where $\mu_V$, $\Omega$, $\mathcal R^{1/2}$, and $\{a_j,b_j\}_{j=1}^k$ are as in Section \ref{sec:intro_em}. We denote $\Gamma=\cup_{j=1}^k\Gamma_j$ and $\widetilde \Gamma=\cup_{j=1}^k\widetilde \Gamma_j$, where for each $j$, the smooth contours  $\Gamma_j$ and $\widetilde \Gamma_j$  surround $[a_j,b_j]$,  are oriented in a counter-clockwise manner, are contained in the domain of analyticity of $f$, and are defined such that $\Gamma$ and $\widetilde \Gamma$ don't intersect. The constant $\Upsilon=\Upsilon(f)\in \R^{k-1}$ is given by 
\begin{equation}\label{def:Omegahat}
	\Upsilon_m(f)=-\int_{J}\frac{\mathsf Q_m(z)}{\mathcal R_+^{1/2}(z)}f(z)\frac{dz}{\pi i},
\end{equation}
for $m=1,\dots,k-1$, where $\mathsf Q_m$ is defined by \eqref{eq:Aint}-\eqref{formomega}.

\end{theorem}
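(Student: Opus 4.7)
The plan is a $t$-interpolation combined with Deift-Zhou steepest descent. Starting from
$$\log \frac{H_N(Fe^{-NV})}{H_N(e^{-NV})} = \int_0^1 \mathbb E_t\!\left[\sum_{j=1}^N f(\lambda_j)\right] dt,$$
where $\mathbb E_t$ denotes expectation in the unitary-invariant ensemble with weight $w_t(x)=e^{tf(x)-NV(x)}$, the task reduces to obtaining an asymptotic expansion of the linear statistic $\mathbb E_t[\mathrm{Tr}\,f(M)]$ that is uniform in $t\in[0,1]$. By the assumptions on $F$, the family $\{w_t\}_{t\in[0,1]}$ is a uniformly bounded analytic perturbation of $e^{-NV}$, so the multi-cut steepest descent of \cite{DKMVZ} goes through with error bounds uniform in $t$.

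The RHP analysis itself uses the $g$-function associated with the unperturbed potential $V$, treating $e^{tf}$ as a bounded analytic multiplier on the jump. After the standard normalization $Y\mapsto T\mapsto S$, the jump on $J$ becomes the antidiagonal matrix with entries $\pm e^{\pm tf(x)}$, while constant diagonal jumps of the form $e^{-2\pi i N\Omega_j\,\mathrm{diag}(1,-1)}$ appear on each gap $(b_j,a_{j+1})$. The global parametrix $M^{(t)}$ on $\mathbb C\setminus J$ is built by combining a multi-cut Szegő-type function $D^{(t)}$, which removes the $e^{\pm tf}$ factor on $J$, with the theta-function construction of Section \ref{Sec:calS}. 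The key point is that $D^{(t)}$ is not single-valued on $\mathbb C\setminus J$: crossing the gap $(b_j,a_{j+1})$ produces a monodromy of the form $e^{\pm 2\pi i t \Upsilon_j(f)}$, with $\Upsilon_j$ given precisely by the A-period integral \eqref{def:Omegahat}, and this is absorbed by shifting the theta argument from $N\Omega$ to $N\Omega+t\Upsilon(f)$. Airy parametrices at each endpoint $a_j,b_j$ match $M^{(t)}$ in the usual way, reducing the problem to a small-norm RHP with remainder $R^{(t)}=I+O(1/N)$ uniformly in $t$.

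Using the standard expression of the one-point intensity $\rho_{N,t}(x)=K_{N,t}(x,x)$ in terms of $Y_t$, and deforming $\int f\,\rho_{N,t}\,dx$ to a contour encircling $J$, I would extract
$$\mathbb E_t[\mathrm{Tr}\,f(M)] = N\int_J f\,d\mu_V + \frac{d}{dt}\log\theta(N\Omega+t\Upsilon(f)\mid\tau) + t\,Q(f,f) + O(N^{-1})$$
uniformly in $t\in[0,1]$, where $Q(f,f)$ is a symmetric quadratic form in $f$ independent of $t$. Integration in $t$ from $0$ to $1$ then produces the leading term $N\int f\,d\mu_V$, the theta quotient $\log\bigl[\theta(N\Omega+\Upsilon)/\theta(N\Omega)\bigr]$, and a factor $\exp(\tfrac12 Q(f,f))$, giving the exponential structure of \eqref{th:smoothasy1}. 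The main obstacle, and the only place where the multi-cut geometry really intervenes, is to identify $\tfrac12 Q(f,f)$ with the double contour integral $\tfrac14\oint_\Gamma\oint_{\widetilde\Gamma} W(z,\lambda)f(z)f(\lambda)\,\tfrac{dz}{2\pi i}\tfrac{d\lambda}{2\pi i}$. Using the decomposition \eqref{id:W}, the $\gamma^{\pm 2}$ piece of $W$ reproduces the classical Johansson-type covariance inherited from the one-cut case, while the bilinear $(\partial_i\partial_j\log\theta)(0)\,u_i'(z)u_j'(\lambda)$ piece arises precisely from differentiating $M^{(t)}$ in the direction of the shift $\Upsilon(f)$ of the theta argument. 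Once this identification is made via contour manipulation, the result follows, which is consistent with the remark in Section \ref{sec:Main1} that, after setting up the RHP, the proof is short.
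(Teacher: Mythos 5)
Your proposal follows essentially the same route as the paper: the $t$-interpolation $e^{tf-NV}$, the differential identity expressing $\partial_t\log H_N$ through the reproducing kernel (equivalently $[Y^{-1}Y']_{11}$ on a contour around $J$), the global parametrix built from a multi-cut Szeg\H{o} function whose gap monodromies $e^{\pm 2\pi i t\Upsilon_j(f)}$ shift the theta argument to $N\Omega+t\Upsilon(f)$, Airy local parametrices, a small-norm estimate uniform in $t$, and integration in $t$. The only cosmetic difference is in the final identification of the quadratic form: the paper obtains the $W$ double integral directly from $\frac{d}{dz}\log D=d_f'$ and $\partial_z w_z(\lambda)=W(z,\lambda)$, whereas you propose matching the two pieces of the decomposition \eqref{id:W} separately, which amounts to the same computation.
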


For definiteness, one may take $\Gamma$ to surround $\widetilde \Gamma$ in Theorem \ref{th:smoothasy} above, however the result remains unchanged if $\widetilde \Gamma$ instead surrounds $\Gamma$.

\begin{remark}
With the random matrix interpretation, there is a fluctuation in terms of how many eigenvalues tend to fall in each interval of $\textrm{supp}\left(\mu_V\right)$. These fluctuations are described in terms of $\theta$-functions. The parameter $\tau$ determines how freely the eigenvalues can jump between intervals. For example, if $k=2$ and $-i\tau$ is small then the number of eigenvalues in each interval is nearly deterministic: indeed, if $V$ is an even function and $N$ is even, then most likely there are $N/2$ eigenvalues in each interval. (There are some exceptions to this statement, for example if $k=2$, if $V$ is an even function, if $-i\tau$ is small and if $N$ is odd, then naturally there is a symmetry and the eigenvalues go left or right with equal probability, and so with high probability one of the intervals will have $N/2+1/2$ eigenvalues and the other will have $N/2-1/2$.) On the other hand, when $-i\tau$ is large then there is a greater variance and the number of eigenvalues in each interval vary to a larger extent. When $k\geq 3$ the situation can be more nuanced and one could for example have a deterministic number of eigenvalues in on interval and fluctuations between the remaining intervals.  We provide more details in Corollary \ref{Corrflucint} below.
\end{remark}

We will prove Theorem \ref{th:smoothasy} for functions $F$ which satisfy conditions (a)-(c), and which in addition are H\"older continuous. Then it follows that the theorem also holds for functions $F$ satisfying conditions (a)-(c) but which are not H\"older continuous. To verify this one merely takes $F_1$ and $F_2$ to be H\"older continuous functions satisfying the conditions for $F$,  with $F_1=F_2=F$ on a neighbourhood of $J$, and otherwise $F_1\leq F\leq F_2$. Then it follows that 
\begin{equation} H_N(F_1e^{-NV}) \leq H_N(Fe^{-NV})\leq  H_N(F_2e^{-NV}), \end{equation}
and since \eqref{th:smoothasy1} holds for both $F_1$ and $F_2$ the theorem follows in full generality.

The proof of Theorem \ref{th:smoothasy} for H\"older continuous $F$ is based on the steepest descent analysis of Riemann-Hilbert problems, we give an overview in Section \ref{sec:smasy} below. Here the overarching method consists of (i) obtaining asymptotics for $\frac{\partial}{\partial t} \log H_N\left(e^{-NV}F_t\right)$  where $F_{t=0}=1$ and $F_{t=1}=F$, and (ii) integrating the asymptotics. This integration in $t$ is not completely straightforward since $\theta$-functions enter the picture, and we mention that the main parametrix has a different form to the known form developed in \cite{KuijVanlessen}.

 It may be extended to complex valued $f$ in a straightforward manner under the assumption that $\theta(N\Omega+t\Upsilon(f)|\tau)\neq 0$ for all $t\in (0,1]$ and $N$ sufficiently large.

Integrating by parts in the second term on the right-hand side of \eqref{th:smoothasy1} (see Section \ref{SecProofSmasy2}), we obtain  a second representation for the double integral on the right-hand side of \eqref{th:smoothasy1}:
\begin{equation} \label{smoothasy2}
\begin{aligned}
& \frac{1}{4} \oint_{\Gamma}\oint_{\widetilde \Gamma} W(z,\lambda)f(z)f(\lambda) \frac{dz}{2\pi i}\frac{d\lambda}{2\pi i} \\
& =\frac{1}{2}\oint_{\Gamma}\frac{\mathcal R^{1/2}(z)}{2\pi i }\int_J\frac{f(x)}{\mathcal R^{1/2}_+(x)(x-z)}dx f'(z)\frac{dz}{2\pi i}\\&
 +\int_J \mathcal R_+^{1/2}(z)\sum_{j=1}^{k-1}  \Upsilon_j(f)\int_{b_j}^{a_{j+1}}\frac{dx}{\mathcal R^{1/2}(x)(x-z)}f'(z)\frac{dz}{2\pi i}\\
 & +\frac{1}{2} \sum_{j=1}^{k-1}(f(a_{j+1})-f(b_j))\Upsilon_j(f) .
\end{aligned} 
\end{equation}

If \eqref{condition0} holds, then $\Upsilon(f)=0$, and so it follows by \eqref{smoothasy2} that \eqref{AsymJohan} holds.

To make the probabilistic interpretation of the non-Gaussian part which was observed in \cite{BG2}, we would like to represent the second term $\frac{\theta(N\Omega+\Upsilon(f)|\tau)}{\theta(N\Omega|\tau)}$ appearing in \eqref{th:smoothasy1} as $\mathbb E[e^{Y_N}]$ for some real random variable $Y_N$. However it is not clear from \eqref{eq:thetadef} that this holds. To do this it is convenient to consider a different cycle structure on the Riemann surface $\mathcal S$, which is the basis in Figure \ref{ContHom2}.  Let $\boldsymbol{\widehat \omega_j}$ be the basis of holomorphic one-forms satisfying 
\begin{equation} \int_{\widehat B_j}\boldsymbol{\widehat \omega_i}=\delta_{i,j}, \end{equation}
where $\widehat B_j$ is as in Figure \ref{ContHom2}, and denote
\begin{equation} \widehat u_j(z)=\int_{b_k}^z \boldsymbol{\widehat \omega_j}(\xi) . \end{equation}
Let $\widehat \tau$ be the period matrix
\begin{equation} \widehat \tau_{i,j}=\int_{\widehat A_j}\boldsymbol{\widehat \omega_i}, \end{equation}
and denote by $\widehat \theta(x)=\theta(x|\widehat \tau)$.
 It is easy to move between bases: if  $C$ is the $(k-1)\times (k-1)$ lower-triangular matrix with entries $C_{i,j}= 1$ for $j\leq i$ and $C_{i,j}=0$ otherwise, then 
\begin{equation} \boldsymbol{\widehat \omega }=C\tau^{-1}\boldsymbol \omega,  \qquad \widehat \tau=-C\tau^{-1}C^T. \end{equation}
Let
\begin{equation}
\widehat \Theta(z,\lambda)=
\frac{\widehat \theta\left[{\substack{{ \widehat{ \ubeta}} \\ { \widehat {\ualpha}}}}\right]\left(\widehat u(z)-\widehat u(\lambda)\right)}{\widehat \theta\left[{\substack{{ \widehat{ \ubeta}} \\ { \widehat {\ualpha}}}}\right]\left(\widehat u(z)+\widehat u(\lambda)\right)} , \qquad \widehat W(z,\lambda)= \frac{\partial}{\partial z} \frac{\partial}{\partial \lambda}\log \widehat \Theta(z,\lambda),\end{equation}
with $\widehat{ \ubeta}=-C^{-T}\ubeta$ and $\widehat {\ualpha}=C \ualpha$ (where $C^{-T}$ is the inverse transpose of $C$).
By Lemma \ref{le:jacobi} below, we have the identities
\begin{align}
\label{idchange1}\Theta(z,\lambda)&=\widehat \Theta(z,\lambda)\exp \left(4\pi i u(z)^T\tau^{-1}u(\lambda)\right),\\
\label{idchange2}\frac{\theta(N\Omega+\Upsilon(f))}{\theta(N\Omega)}&=\exp\left(-\pi i \Upsilon(f)^T\tau^{-1}\Upsilon(f) \right)\frac{\widehat \theta\left[\substack{-N\widehat \Omega\\0}\right]\left(\widehat \Upsilon(f)\right)}{\widehat \theta\left[\substack{-N\widehat \Omega\\0}\right](0)},
\end{align}
where $\widehat \Omega_j=\mu_V([a_{j+1},b_{j+1}])$, and  $\widehat \Upsilon(f)=\left(\widehat \Upsilon_j(f)\right)_{j=1}^{k-1}$ with
\begin{equation} \label{defhatUps}\widehat \Upsilon_j(f)=-\int_J \widehat u_{j,+}'(z)f(z)\frac{dz}{\pi i} .\end{equation}
Observe that $\widehat \Upsilon$ are purely imaginary.
By Theorem \ref{th:smoothasy} and relying on \eqref{idchange1}-\eqref{idchange2}, we find that  as $N\to \infty$,
\begin{multline}\label{smasybasis3}
	\frac{H_N(Fe^{-NV})}{H_N(e^{-NV})}=e^{N\int_Jf(x)d\mu_V(x)}\frac{\widehat \theta\left[\substack{-N\widehat \Omega\\0}\right]\left(\widehat \Upsilon(f)\right)}{\widehat \theta\left[\substack{-N\widehat \Omega\\0}\right](0)}\\ \times \exp \left[ \frac{1}{4}\oint_{\Gamma}\oint_{\widetilde \Gamma}\widehat W(z,\lambda)f(z)f(\lambda) \frac{dz}{2\pi i}\frac{d\lambda}{2\pi i}\right](1+\mathcal O(N^{-1})).
	\end{multline}
Observe that
\begin{align}\label{lol8}
\frac{\widehat \theta\left[\substack{-N\widehat \Omega\\0}\right]\left(\widehat \Upsilon(f)\right)}{\widehat \theta\left[\substack{-N\widehat \Omega\\0}\right](0)},
\end{align}
which appears on the right-hand side of \eqref{smasybasis3}, can be written as $\E e^{v_N(f)}$ (in other words, \eqref{lol8} is the Laplace transform of $v_N(f)$) with 
\[
v_N(f):= 2\pi i\widehat \Upsilon(f)^T \left(\mathcal{X}-\langle N\widehat \Omega\rangle \right)
\]  
where $\langle N\widehat \Omega\rangle \equiv N\widehat \Omega \mod  1$ and  $(\mathcal{X}_1,...,\mathcal{X}_{k-1})$ is a $\Z^{k-1}$-valued random variable with probability mass function given by
\begin{multline}\label{DiscreteProb}
\mathbb{P}(\mathcal{X}_{1}=x_{1},\ldots,\mathcal{X}_{k-1}=x_{k-1}) = \hat{c} e^{\pi i\left(x-\langle N\widehat \Omega\rangle\right)^{T}\widehat \tau \left(x-\langle N\widehat \Omega\rangle\right)}, \\ x=(x_{1},\ldots,x_{k-1}) \in \mathbb{Z}^{k-1},
\end{multline} 
where $\hat{c}$ is the normalization constant.

Integrating by parts (see Section \ref{SecProofeqlaplace} for details) we find the following equivalent expression for the asymptotics of Theorem \ref{th:smoothasy}:
\begin{equation}\label{eqlaplace}
\frac{H_N(Fe^{-NV})}{H_N(e^{-NV})}=e^{N\int_Jf(x)d\mu_V(x)}e^{\frac{1}{2}\mathcal L(f)}\frac{\widehat \theta\left[\substack{-N\widehat \Omega\\0}\right]\left(\widehat \Upsilon(f)\right)}{\widehat \theta\left[\substack{-N\widehat \Omega\\0}\right](0)}(1+\mathcal O(N^{-1})),
\end{equation}
as $N\to \infty$,
where
\begin{equation} \label{defLG}\begin{aligned}
\mathcal{L}(f)&=\iint_{J\times J}\mathcal G(z_+,\lambda_+)f'(\lambda)f'(z)d\lambda dz,\\
\mathcal G(z,\lambda)&=\frac{1}{2\pi^2}\left(\log \frac{1}{|\Theta(\lambda,z)|}-4\pi\sum_{l,j=1}^{k}\mathrm{Im}(u_{j}(z))\mathrm{Im}(u_{l}(\lambda))(\mathrm{Im}\tau)^{-1}_{j,l}\right),
\end{aligned}
\end{equation}
and $\mathcal G(z_+,\lambda_+)=\lim_{\epsilon \downarrow 0}\mathcal G(z+i\epsilon,\lambda+i\epsilon)$.
The representation in \eqref{eqlaplace} has a number of useful properties. 

For example, it gives the asymptotic distribution of $\#=(\#_j)_{j=1}^{k-1}$, where 
\begin{equation} \#_j=\textrm{The number of eigenvalues in $(a_{j+1}-\epsilon,b_{j+1}+\epsilon)$}, \end{equation}
for fixed and sufficiently small $\epsilon>0$:
\begin{corollary} \label{Corrflucint}
As $N\to \infty$,
\begin{equation} \mathbb P\left(\#=N\widehat \Omega-\langle N\widehat \Omega\rangle+x\right)= \mathbb{P}(\mathcal{X}_{1}=x_{1},\ldots,\mathcal{X}_{k-1}=x_{k-1})(1+o(1)). \end{equation} 
\end{corollary}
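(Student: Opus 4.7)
My plan is to derive the joint distribution of $\#$ by computing its characteristic function using (the complex-$f$ extension of) Theorem~\ref{th:smoothasy}, and then Fourier-inverting on the torus. Fix $\epsilon>0$ small enough that each set $I_j := (a_{j+1}-\epsilon, b_{j+1}+\epsilon)$ is disjoint from every $[a_l,b_l]$ with $l\neq j+1$. For $s=(s_1,\dots,s_{k-1})\in[0,2\pi)^{k-1}$ set
\begin{equation*}
F_s(x):=\exp\Bigl(i\sum_{j=1}^{k-1}s_j\mathbf{1}_{I_j}(x)\Bigr),\qquad f_s:=\log F_s,
\end{equation*}
so that $\Tr f_s(M)=i\,s\cdot\#$ and hence $\E[e^{is\cdot\#}]=H_N(F_se^{-NV})/H_N(e^{-NV})$. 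Even though $F_s$ is complex valued, $f_s$ is \emph{locally constant} on a neighborhood of $J$ and hence analytic there, which puts us within the scope of the complex-$f$ extension of Theorem~\ref{th:smoothasy} mentioned right after its statement.

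The next step is to apply \eqref{eqlaplace} to $f_s$. Because $f_s'\equiv 0$ on $J$, the term $\mathcal L(f_s)$ vanishes, and $\int_J f_s\,d\mu_V = i\,s\cdot\widehat\Omega$. The key local identity is $\widehat\Upsilon_m(f_s)=s_m/(2\pi)$ (up to an overall sign, which can be absorbed by $s\mapsto -s$): by \eqref{defhatUps},
\begin{equation*}
\widehat\Upsilon_m(f_s) = -\sum_{l=1}^{k-1}\frac{s_l}{\pi}\int_{a_{l+1}}^{b_{l+1}}\widehat u'_{m,+}(z)\,dz,
\end{equation*}
and the normalization $\int_{\widehat B_l}\boldsymbol{\widehat\omega_m}=\delta_{m,l}$ combined with $\widehat u'_{m,+}=-\widehat u'_{m,-}$ across $J$ forces $\int_{a_{l+1}}^{b_{l+1}}\widehat u'_{m,+}(z)\,dz=\pm\tfrac12\delta_{m,l}$. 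Substituting into \eqref{eqlaplace} yields
\begin{equation*}
\E[e^{is\cdot\#}] = e^{iNs\cdot\widehat\Omega}\,\frac{\widehat\theta\left[\substack{-N\widehat\Omega\\0}\right]\bigl(s/(2\pi)\bigr)}{\widehat\theta\left[\substack{-N\widehat\Omega\\0}\right](0)}\,\bigl(1+\mathcal O(N^{-1})\bigr).
\end{equation*}

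I would then recognize the theta ratio as the characteristic function of $\mathcal X-\langle N\widehat\Omega\rangle$. Reindexing the series \eqref{eq:thetachar} via $n=\lfloor N\widehat\Omega\rfloor+x$ with $x\in\Z^{k-1}$, so that $n-N\widehat\Omega=x-\langle N\widehat\Omega\rangle$, and comparing with the pmf \eqref{DiscreteProb} gives
\begin{equation*}
\frac{\widehat\theta\left[\substack{-N\widehat\Omega\\0}\right]\bigl(s/(2\pi)\bigr)}{\widehat\theta\left[\substack{-N\widehat\Omega\\0}\right](0)} = \E\bigl[e^{is\cdot(\mathcal X-\langle N\widehat\Omega\rangle)}\bigr].
\end{equation*}
Multiplying by the prefactor and using $N\widehat\Omega-\langle N\widehat\Omega\rangle=\lfloor N\widehat\Omega\rfloor$ produces
\begin{equation*}
\E\bigl[e^{is\cdot(\#-\lfloor N\widehat\Omega\rfloor)}\bigr] = \E\bigl[e^{is\cdot\mathcal X}\bigr]\bigl(1+\mathcal O(N^{-1})\bigr).
\end{equation*}
Since both $\#-\lfloor N\widehat\Omega\rfloor$ and $\mathcal X$ take values in $\Z^{k-1}$, Fourier inversion on $[0,2\pi)^{k-1}$ then yields $\Prob(\#=\lfloor N\widehat\Omega\rfloor+x)=\Prob(\mathcal X=x)+o(1)$; for each fixed $x$, $\Prob(\mathcal X=x)$ is bounded below by a strictly positive constant uniformly in $N$, so this upgrades to the multiplicative form $(1+o(1))$ stated in the corollary.

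The main obstacle is ensuring the $\mathcal O(N^{-1})$ remainder in the asymptotic for $\E[e^{is\cdot\#}]$ is \emph{uniform in} $s\in[0,2\pi)^{k-1}$, because Fourier inversion integrates over the full torus. I would address this by tracing the dependence of the remainder in the Riemann--Hilbert steepest descent analysis proving Theorem~\ref{th:smoothasy} on the function $f_s$; since $\{f_s\}_s$ is bounded on a fixed neighborhood of $J$ uniformly in $s$, the uniformity of the error should be inherited. A secondary technical point is the non-vanishing condition $\theta(N\Omega+t\Upsilon(f_s)|\tau)\neq 0$ for $t\in(0,1]$ underlying the complex-$f$ extension; this should hold uniformly in $s$ and $N$ because $\Upsilon(f_s)$ is purely imaginary while $-i\tau$ is positive definite, forcing the leading terms of the relevant theta series to be real and positive.
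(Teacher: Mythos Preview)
Your route via the characteristic function and Fourier inversion is different from the paper's, and the two obstacles you flag are precisely the ones the paper's argument is designed to avoid. The paper uses \emph{real} $s$ (the Laplace transform of $\#$), so Theorem~\ref{th:smoothasy} applies directly and no complex-$f$ extension---hence no non-vanishing condition---is needed; and instead of integrating over the torus, it argues by contradiction: extract a subsequence along which $\langle N\widehat\Omega\rangle\to y_*$, note that the Laplace transform of $\#-N\widehat\Omega$ then converges pointwise to that of $\mathcal X-y_*$ (with the law of $\mathcal X$ defined via \eqref{DiscreteProb} at $y_*$), and invoke the standard fact that convergence of Laplace transforms in a neighborhood of the origin implies convergence in distribution. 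No uniformity in $s$ is required.

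Your proposal has a genuine gap at the non-vanishing step. You assert that ``$\Upsilon(f_s)$ purely imaginary and $-i\tau$ positive definite force the leading terms of the theta series to be real and positive,'' but this is not correct: in $\theta(N\Omega+t\Upsilon(f_s)|\tau)=\sum_n e^{2\pi i n^T N\Omega}\,e^{-2\pi t n^T\operatorname{Im}\Upsilon(f_s)}\,e^{\pi i n^T\tau n}$, the last two factors are indeed positive, but the phase $e^{2\pi i n^T N\Omega}$ is merely unimodular, so cancellations can occur and the sum can vanish. Since the complex-$f$ extension of Theorem~\ref{th:smoothasy} is stated \emph{conditionally} on this non-vanishing (for all $t\in(0,1]$ and large $N$), and you need it simultaneously for all $s\in[0,2\pi)^{k-1}$, this is a real hole, not a technicality. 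The uniformity-in-$s$ of the $\mathcal O(N^{-1})$ remainder is a separate issue that you also do not resolve. Both difficulties disappear if you switch to real $s$ and use the paper's compactness argument in place of Fourier inversion.
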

\begin{proof}
We make a comparison of Laplace transforms. Let $f(\lambda)=s_j$ for $\lambda \in [a_{j+1}-\epsilon,b_{j+1}+\epsilon]$ for $j=1,\dots,k-1$,  and zero otherwise. Then
\begin{equation} \mathbb E \left[ e^{s_1 \#_1+\dots +s_{k-1} \#_{k-1}}\right]= \frac{H_N(e^{f-NV})}{H_N(e^{-NV})} . \end{equation}
Then $\mathcal L(f)=0$ and by the definition of $\widehat \Upsilon$ in \eqref{defhatUps} we have $\widehat \Upsilon(f)=\frac{1}{2\pi i}s$ where $s=(s_j)_{j=1}^{k-1}$. Also $\int_J f(\lambda) d\mu_V(\lambda)= \sum_{j=1}^{k-1} s_j \widehat \Omega_j$. By \eqref{eqlaplace}, 
\begin{equation} \label{Laplaces1sk} \mathbb E \left[ e^{s_1 \#_1+\dots +s_{k-1} \#_{k-1}}\right]=\exp\left(N\sum_{j=1}^{k-1}s_j\widehat \Omega_j\right)\frac{\widehat \theta\left[\substack{-N\widehat \Omega\\0}\right]\left(\frac{1}{2\pi i} s\right)}{\widehat \theta\left[\substack{-N\widehat \Omega\\0}\right](0)}(1+\mathcal O(N^{-1})), \end{equation}
as $N\to \infty$.

We now proceed by contradiction. Assume the corollary did not hold true. Then there is a sequence $N^{(k)}\in \mathbb Z$ with $k=1,2,\dots$ such that $\mathbb P\left(\#=N^{(k)}\widehat \Omega-\langle N^{(k)}\widehat \Omega\rangle+x\right)$ remains bounded away from $\mathbb{P}(\mathcal{X}_{1}=x_{1},\ldots,$ $\mathcal{X}_{k-1}=x_{k-1})$. Since $\langle N^{(k)}\widehat \Omega \rangle $ is in a compact subset of $\mathbb R^{k-1}$, the sequence $N^{(k)}$ in turn has a subsequence $N^{(k_j)}$ such that $\langle N^{(k_j)}\widehat \Omega \rangle $ is convergent, denote the limit by $y_*$. It follows that the Laplace transform of $\#-N\widehat \Omega $  converges, along the subsequence $N=N^{(k_j)}$,  to the Laplace transform  of $\mathcal{X}-y_{*}$ where the law of $\mathcal{X}$ is defined by \eqref{DiscreteProb} but with $\langle N \widehat \Omega \rangle $ replaced with $y_*$. Since the convergence of the Laplace transform implies that the corollary holds for $N^{(k_j)}$ (see e.g. \cite[page 390]{Bill}), we have obtained our contradiction.
\end{proof}

A second useful aspect of \eqref{eqlaplace} is that the expression for $\mathcal G$ is particularly natural because it is independent of basis - the equality in the second line of \eqref{defLG} also holds if we replace $\Theta$, $ u$ and $\tau$ by $\widehat \Theta$, $\widehat u$, and $\widehat \tau$. In Section \ref{Applications:connection} below, following the work of Kang and Makarov in \cite{KM}, we give a brief description of the role that $\mathcal G$ plays as the correlation kernel of the two dimensional Gaussian free field on $\mathcal S$ restricted to $J$.

We can compare \eqref{smasybasis3} to   \cite[(8.20)]{BG2}. Observe that \cite{BG2} deals with both real and imaginary $\phi$, so  \cite[(8.20)]{BG2} holds when we set $s=1$ and $\phi(x)=-if(x)$. Then the two formulas match if we assume that $\varpi_0=0$ in   \cite[(8.21)]{BG2}, up to a minor discrepancy of the multiplication of a unit of $i$ in the second line of  in \cite[(8.21)]{BG2} (which we believe is simply a typo).

\subsection{Main results: Fisher-Hartwig singularities}\label{sec:Main3}
Finally we describe ratio asymptotics in the situation where we have Fisher-Hartwig singularities. We introduce some further notation for this. Let $p\in \N_+:=\{1,2,...\}$, and
\begin{equation}\label{eq:omega}
\omega(x)=\prod_{j=1}^p \omega_{\alpha_j}(x)\omega_{\beta_j}(x), \quad \omega_{\alpha_j}(x)=|x-t_j|^{\alpha_j},\quad \omega_{\beta_j}(x)=\begin{cases}
e^{\pi i \beta_j}, & x<t_j\\
e^{-\pi i \beta_j}, & x\geq t_j
\end{cases}
\end{equation}
with $t_j\in \cup_{l=1}^k (a_l,b_l)$ and
with $\alpha_j>-1$ and $\Re \beta_j=0$ for all $j=1,\dots,p$.  A complete description of the asymptotics of $\frac{H_N(\omega F e^{-NV})}{H_N(e^{NV})}$ has been obtained in the one-cut case through the works of  \cite{Krasovsky, Garoni, IK, BWW, Charlier}, see \cite{Charlier} for the most general asymptotics available. We give a brief overview of the literature on determinants with FH singularities in Section \ref{LitOverFH}. Despite being well studied in the one-cut case, no progress has been made on this problem in the multi-cut case. We derive complete asymptotics in the multi-cut case which we describe in Theorem \ref{th:FHasy} below, but for the sake of exposition we start by considering pure root singularities (the case where $F=1, \beta=0$) and pure jump singularities ($F=1,\alpha=0$).

Let $\omega_\alpha(x)=\prod_{j=1}^p \omega_{\alpha_j}(x)$. By \eqref{eq:RMTrat}, if $\boldsymbol P$ is the characteristic polynomial of the random matrix  $M$ distributed according to \eqref{distributionM}, i.e. $\boldsymbol P(x)=\prod_{j=1}^N(x-\lambda_j)$  where $\lambda_j$ are the eigenvalues of $M$, then
\begin{equation} \frac{H_N(\omega_\alpha  e^{-NV})}{H_N(e^{-NV})}=\mathbb E\left(\prod_{j=1}^p\left| \boldsymbol P(t_j)\right|^{\alpha_j}\right). \end{equation}
When $V$ is one-cut regular, we have by \cite{Krasovsky,BWW}, 
\begin{multline}\label{onecutalpha}\mathbb E\left(\prod_{j=1}^p\left| \boldsymbol P(t_j)\right|^{\alpha_j}\right)=
e^{N \int  \log \omega_\alpha(x)d\mu_V(x)} \prod_{j=1}^p N^{\frac{\alpha_j^2}{4}} \frac{G\left(1+\frac{\alpha_j}{2}\right)^2}{G(1+\alpha_j)}\left(2\pi \psi_V(t_j)\right)^{\frac{\alpha_j^2}{4}}\\ \times 
\left(\frac{4}{b_1-a_1}\right)^{-\frac{\mathcal A^2}{4}}
 \prod_{1\leq j<l \leq p}|t_j-t_l|^{-\frac{\alpha_j\alpha_l}{2}}(1+\mathcal O((\log N)N^{-1})),
\end{multline}
as $N\to \infty$, where $\mathcal A=\sum_{j=1}^p \alpha_j$ and where $G$ is the Barnes G-function.  When $V$ is $k$-cut regular, we obtain in Theorem \ref{th:FHasy} below that
\begin{multline} \label{kcutalpha}\mathbb E\left(\prod_{j=1}^p\left| \boldsymbol P(t_j)\right|^{\alpha_j}\right)=e^{N \int  \log \omega_\alpha(x)d\mu_V(x)}\frac{\theta(N\Omega+\Upsilon(\log \omega_\alpha))}{\theta(N\Omega)}\exp \left(-\frac{\mathcal{A}^2}{4}\mathcal C_{\mathcal S} \right) \\ \prod_{j=1}^p N^{\frac{\alpha_j^2}{4} }\frac{G\left(1+\frac{\alpha_j}{2}\right)^2}{G(1+\alpha_j)}\left(2\pi \psi_V(t_j)\right)^{\frac{\alpha_j^2}{4}}
 \prod_{1\leq j<l \leq p}|t_j-t_l|^{-\frac{\alpha_j\alpha_l}{2}}(1+o(1)),
\end{multline}
as $N\to \infty$,
where  $\Upsilon(\log \omega_\alpha)$ is defined as in Theorem \ref{th:smoothasy} with $f$ replaced by $\log \omega_\alpha$, and $\mathcal C_{\mathcal S}$ is a constant depending only on $\mathcal S$ given by
\begin{equation}\label{defCS} \mathcal C_{\mathcal S}=\lim_{R\to +\infty}\left[ \int_{-R}^{a_1}\left| \frac{\widetilde{Q}(x)}{\mathcal{R}^{1/2}(x)} \right|dx-\log R\right],\end{equation}  
where $\widetilde{Q}$ is the unique monic polynomial of degree $k-1$ satisfying
\begin{equation} \label{intQhat}
\int_{b_j}^{a_{j+1}}\frac{\widetilde{Q}(x)}{\mathcal R^{1/2}(x)}dx=0
\end{equation}
for all $j=1,\dots,k-1$. By a contour deformation argument, one finds that if $t \in (a_j,b_j)$, then
\begin{equation} \Upsilon(\log|\cdot-t|)= \int_{-\infty}^{a_1}u'(x)dx-\frac{1}{2}\sum_{l=1}^{j-1}e_l.\end{equation}
In particular, although $\Upsilon(\log \omega_\alpha)$ depends on $t_j$, it only depends on which interval of  $\textrm{supp} ( \mu_V)$ the point $t_j$ lies in, not where in the interval it lies. We find it curious that the dependence on the location of $t_j$ is so simple -- if $t_j'$ is in the same interval as $t_j$ for $j=1,\dots,p$, then
\begin{multline}\frac{\mathbb E\left(\prod_{j=1}^p\left| \boldsymbol P(t_j)\right|^{\alpha_j}\right)}{\mathbb E\left(\prod_{j=1}^p\left| \boldsymbol P(t_j')\right|^{\alpha_j}\right)}
=e^{N\int \sum_{j=1}^p\alpha_j\log \left| \frac{x-t_j}{x-t_j'}\right| d\mu_V(x)}\\ \prod_{j=1}^p\left(\frac{\psi_V(t_j)}{\psi_V(t_j')}\right)^{\frac{\alpha_j^2}{4}}\prod_{1\leq j<l\leq p}\left|\frac{t_j-t_l}{t_j'-t_l'}\right|^{-\frac{\alpha_j\alpha_l}{2}}(1+o(1)),
\end{multline}
as $N\to \infty$.

The subleading terms are useful in studying ratios, for example relying on \eqref{idchange2} to change basis, we have
\begin{multline} \frac{\mathbb E\left(\left| \boldsymbol P(t_1)\right|^{\alpha_1}\left| \boldsymbol P(t_2)\right|^{\alpha_2}\right)}{\mathbb E\left(\left| \boldsymbol P(t_1)\right|^{\alpha_1}\right)\mathbb E\left(\left| \boldsymbol P(t_2)\right|^{\alpha_2}\right)}  \\
=|t_1-t_2|^{-\frac{\alpha_1\alpha_2}{2}}e^{-2\pi i\alpha_1\alpha_2\Upsilon(\log |\cdot -t_1|)^T\tau^{-1}\Upsilon(\log |\cdot -t_2|)} \exp\left(-\frac{\alpha_1\alpha_2}{2}\mathcal C_{\mathcal S}\right)\\ \times  \frac{\widehat \theta\left[\substack{-N\widehat \Omega\\0}\right]\left(\alpha_1\widehat \Upsilon(\log|\cdot-t_1|)+\alpha_2\widehat \Upsilon( \log |\cdot -t_2|)\right)}{\widehat \theta\left[\substack{-N\widehat \Omega\\0}\right]\left(\alpha_1\widehat \Upsilon(\log |\cdot -t_1|)\right)\widehat \theta\left[\substack{-N\widehat \Omega\\0}\right]\left(\alpha_2\widehat \Upsilon(\log |\cdot -t_2|)\right)} (1+o(1)),
\end{multline}
as $N\to \infty$, where for $t\in (a_j,b_j)$,
\begin{equation} \widehat \Upsilon(\log|\cdot -t|)= \int_{-\infty}^{a_1}\widehat u'(x)dx+\frac{1}{2}\widehat \tau_{j-1}. \end{equation}

We now consider pure jump singularities. 
Let $\omega_\beta(x)=\prod_{j=1}^p \omega_{\beta_j}(x)$. The Hankel determinant with pure jump singularities is the Laplace transform of the eigenvalue counting function as follows
\begin{align*}\mathbb E\left[e^{\sum_{j=1}^p2\pi  v_j\mathcal{H}_N(t_j) }\right]
&=\frac{H_N(\omega_\beta e^{-NV})}{H_N(e^{-NV})}e^{-N \int  \log \omega_\beta (x)d\mu_V(x)}, \qquad v_j = i \beta_j, \\
\mathcal{H}_N(t) &=   \sum_{j=1}^N \mathbf{1}\{\lambda_j \le t\} - N \mu_V((-\infty, t]), \qquad t \in \mathbb{R},
\end{align*}
where $\lambda_1,\dots, \lambda_N$ are the eigenvalues of the random matrix $M$.

 When $V$ is one-cut regular, we have \cite{IK, Charlier}
\begin{multline}\label{onecutbeta}
\mathbb E\left[e^{\sum_{j=1}^p2\pi  v_j\mathcal{H}_N(t_j) }\right]= \prod_{j=1}^p N^{v_j^2} 
 G(1+iv_j)G(1-iv_j)(2\pi \psi_V(t_j))^{v_j^2}\left|\widetilde \Theta(t_{j,+},t_{j,+})\right|^{-v_j^{2}}\\
\times \prod_{1\leq j<l \leq p} |t_j-t_l|^{-2v_jv_l}
 \left|\widetilde \Theta(t_{l,+},t_{j,+})\right|^{-2v_jv_l}(1+\mathcal O((\log N)N^{-1})),
\end{multline}
as $N\to \infty$, with 
\begin{equation} \label{thetatilde} \widetilde \Theta(z,w)=\frac{\Theta(z,w)}{w-z}, \end{equation}
and $\Theta$ given by \eqref{Thetaonecut}. When $V$ is $k$-cut regular, we obtain in  Theorem \ref{th:FHasy} below that
\begin{multline}\label{kcutbeta}
\mathbb E\left[e^{\sum_{j=1}^p2\pi  v_j\mathcal{H}_N(t_j) }\right] \\
= \prod_{j=1}^p N^{v_j^2} 
G(1+iv_j)G(1-iv_j)(2\pi \psi_V(t_j))^{v_j^2}\left|\widetilde \Theta(t_{j,+},t_{j,+})\right|^{-v_j^{2}}\\
\times \frac{\theta \left(N\Omega+2\pi \sum_{j=1}^pv_j\Upsilon(\mathbf{1}_{t_j}(x))\right)}{\theta(N\Omega)}  \\
 \times \prod_{1\leq j<l \leq p} |t_j-t_l|^{-2v_jv_l}
 \left|\widetilde \Theta(t_{l,+},t_{j,+})\right|^{-2v_jv_l} (1+o(1)),
\end{multline}
as $N\to \infty$,  where $\mathbf{1}_t(x)=\mathbf{1}\{t\le x\}$, and with $\widetilde \Theta$ given by \eqref{thetatilde} but where $\Theta$ is given by \eqref{Theta} instead. Relying on \eqref{idchange2} to change basis, we have
\begin{multline}
\frac{\mathbb E \left[ e^{v_1\mathcal{H}_N(t_1)+v_2\mathcal{H}_N(t_2)}\right]}{\mathbb E \left[ e^{v_1\mathcal{H}_N(t_1)}\right]\mathbb E \left[ e^{v_2\mathcal{H}_N(t_2)}\right]}= \frac{\widehat \theta\left[\substack{-N\widehat \Omega\\0}\right]\left(\widehat \Upsilon\left(v_1\mathbf{1}_{t_1}+v_2\mathbf{1}_{t_2}\right)\right)}{
\widehat \theta\left[\substack{-N\widehat \Omega\\0}\right]\left(\widehat \Upsilon\left(v_1\mathbf{1}_{t_1}\right)\right)
\widehat \theta\left[\substack{-N\widehat \Omega\\0}\right]\left(\widehat \Upsilon\left(v_2\mathbf{1}_{t_2}\right)\right)
}\\ \times \exp\left[ v_1v_2 \mathcal G(t_{1,+},t_{2,+})\right](1+o(1)),
\end{multline}
for fixed $v_1,v_2$ as $N\to \infty$,  and where $\mathcal G$ was defined in \eqref{defLG}.

The ratio asymptotics in the general situation for the Fisher-Hartwig case are provided by the following theorem, which is proven in Section \ref{sec:FHproof} based on results in Section \ref{sec:FH}. The main idea of the proof is to use a deformation from a smooth weight (for which the corresponding asymptotics are available from Theorem \ref{th:smoothasy}) to a Fisher-Hartwig weight by bringing singularities from the complex plane to the real line. We rely here on results due to Claeys, Its, and Krasovsky \cite{CIK} regarding Painlev\'e V and the associated RH problem, and it is the first time this type of transition has been relied on to give asymptotics for determinants with fixed Fisher-Hartwig singularities. Our conclusion after experimenting with this technique is that we obtained our results with fewer and simpler calculations than previous methods employed (and when $\theta$-functions enter the picture there is a very substantial benefit), at the expense of a more advanced theory (and a proof which is not self-contained, relying on \cite{CIK}).

\begin{theorem}[Ratio asymptotics for a FH symbol]\label{th:FHasy} Let $V:\mathbb R \to \mathbb R$ be real-analytic, satisfying \eqref{eq:Vgrowth}, and assume that $V$ is $k$-cut regular for some $k\geq 2$.   Let $F$ satisfy the conditions of Theorem \ref{th:smoothasy}, and denote $f(x)=\log F(x)$ for $x$ in  a neighbourhood of $ J$.
Let $\alpha_j>-1$ and $i\beta_j\in \R$. Then, as $N\to \infty$, 
\begin{equation}\label{asymptotics FH in main thm}
\begin{aligned}
\frac{H_N(\omega F e^{-NV})}{H_N(e^{-NV})}&=
e^{ N\int \left(f(x)+ \log \omega(x)\right)d\mu_V(x)}\frac{\theta(N\Omega+\Upsilon(f+\log \omega))}{\theta(N\Omega)}\prod_{j=1}^{p}N^{\frac{\alpha_j^2}{4}-\beta_j^2}  \\ & \times \exp\left(\frac{1}{4}\oint_{\Gamma}\oint_{\widetilde \Gamma}W(z,\lambda)f(z)f(\lambda)\frac{dz}{2\pi i}\frac{d\lambda}{2\pi i}\right)\\
&\times \exp\left[- \mathcal{A}\left(\int_J \frac{f(x)\widetilde{Q}(x)}{\mathcal R^{1/2}_+(x)}\frac{dx}{2\pi i}\right)-\frac{\mathcal A^2}{4}\mathcal C_{\mathcal S}\right]
\\
& \times\prod_{j=1}^{p}\exp \left(-\frac{\alpha_j}{2}f(t_j)+\frac{\beta_j}{\pi i}\mathcal{P.V.}\int_J w_{t_j,+}(\lambda_+)f(\lambda)d\lambda\right)\\ 
&\times \prod_{j=1}^p\frac{G(1+\frac{\alpha_j}{2}+\beta_j)G(1+\frac{\alpha_j}{2}-\beta_j)}{G(1+\alpha_j)}(2\pi \psi_V(t_j))^{\frac{\alpha_j^2}{4}-\beta_j^2}\\
& \times
\prod_{j=1}^p \left|\widetilde \Theta(t_{j,+},t_{j,+})\right|^{\beta_j^{2}}\exp\left(\frac{\mathcal{A}\beta_j}{2}\left[\int_{b_k}^\infty w_{t_j,+}(\lambda)d\lambda -\frac{\pi i}{2} \right]
 \right)  \\
 &\times 
\prod_{1\leq j<l \leq p}e^{\frac{\pi i}{2}(\alpha_l\beta_j-\alpha_j\beta_l)} |t_j-t_l|^{2\beta_j\beta_l-\frac{\alpha_j\alpha_l}{2}}\left|\widetilde \Theta(t_{l,+},t_{j,+})\right|^{2\beta_j\beta_l}
 \\  &
 \times (1+o(1)),
\end{aligned}
\end{equation}
where $\mathcal{A}=\sum_{j=1}^{p}\alpha_{j}$, $G$ is the Barnes G-function; $\Upsilon(f+\log \omega)$ is defined as in Theorem \ref{th:smoothasy} but with $f$ replaced by $f+\log \omega$; $W$ was defined in \eqref{def:W}; $w_{t_j,+}(\lambda_+)=\lim_{\epsilon \to 0}w_{t_j+i\epsilon}(\lambda+i\epsilon)$ where  $w_z(\lambda)$ was defined in \eqref{defwlambda}; $\widetilde{Q}$ is the unique monic polynomial of degree $k-1$ satisfying \eqref{intQhat}
for all $j=1,\dots,k-1$; $\mathcal C_{\mathcal S}$ was defined in \eqref{defCS}; where
\[
\widetilde \Theta(z,w)=\frac{\Theta(z,w)}{w-z},
\]
and  $\Theta$ was defined in \eqref{Theta}.
\end{theorem}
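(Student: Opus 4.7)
The plan is to reduce Theorem \ref{th:FHasy} to Theorem \ref{th:smoothasy} via a one-parameter deformation that moves the Fisher--Hartwig singularities continuously from off the real axis, where the resulting weight is analytic on a neighbourhood of $J$ and Theorem \ref{th:smoothasy} applies, onto their prescribed positions $t_j\in\mathrm{int}(J)$. Concretely, for $s>0$ I would take smoothed symbols $\omega_s(x)=\prod_{j=1}^p\omega_{\alpha_j,\beta_j,s}(x)$, obtained for example by displacing each $t_j$ to $t_j+is$ and choosing branches so that $F\omega_s$ is analytic near $J$ and tends to $F\omega$ as $s\to 0$. Theorem \ref{th:smoothasy} then gives asymptotics of $H_N(F\omega_s e^{-NV})/H_N(e^{-NV})$ for each fixed $s>0$; the task is to let $s\to 0$ while tracking every $s$-dependent and $N$-dependent counter-term produced by the merging of singularities onto $J$.

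The bridge between the smooth regime and the Fisher--Hartwig regime is a differential identity of Claeys--Its--Krasovsky type. Differentiating $\log H_N(F\omega_s e^{-NV})$ in $s$ and performing steepest-descent on the associated Riemann--Hilbert problem at finite $N$ expresses $\partial_s\log H_N$ as a derivative of a local Painlev\'e~V/confluent-hypergeometric parametrix at each $t_j$ plus contributions from the outer (global) parametrix. Integrating this identity from $s=s_0>0$ down to $s=0$ produces, through the Painlev\'e~V analysis of \cite{CIK}, all the purely local factors in \eqref{asymptotics FH in main thm}: the $N$-powers $N^{\alpha_j^2/4-\beta_j^2}$, the Barnes~$G$ prefactors, the density factors $(2\pi\psi_V(t_j))^{\alpha_j^2/4-\beta_j^2}$, and the self- and cross-interaction terms involving $\widetilde\Theta$. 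The cross-interactions are mediated by the outer parametrix, which in the multi-cut case is built from the $\Theta$-function of \eqref{Theta}, so that the coupling of the local parametrices produces the prefactors $|\widetilde\Theta(t_{l,+},t_{j,+})|^{2\beta_j\beta_l}$ and the corresponding $\beta$-weighted self-interactions.

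The remaining structure comes from combining the local analysis with Theorem \ref{th:smoothasy} applied at $s=s_0$ and then sending $s_0\to 0$. The theta-ratio $\theta(N\Omega+\Upsilon(f+\log\omega))/\theta(N\Omega)$ and the double-contour $W$-integral are inherited directly from \eqref{th:smoothasy1} with $f$ replaced by $f+\log\omega_s$, using continuity of $\Upsilon$ at $s=0$ (which is justified because, as noted after \eqref{kcutalpha}, $\Upsilon(\log|\cdot-t|)$ depends on $t$ only through which interval of $J$ it lies in). The constant factors $\mathcal C_{\mathcal S}$, the $\widetilde Q$-integral, the principal-value integrals of $w_{t_j,+}$ against $f$, and the single-integral term $\tfrac{1}{2}\mathcal A\beta_j[\int_{b_k}^\infty w_{t_j,+}-\pi i/2]$ will then arise when the logarithmically singular contribution of $\log\omega_s$ to the smooth asymptotics is separated from the regular part and compared, via the integration-by-parts identity \eqref{smoothasy2}, against the output of the local Painlev\'e~V renormalisation.

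I expect the principal obstacle to be the meticulous bookkeeping of theta-function contributions in the multi-cut setting. Unlike the one-cut case \cite{Charlier}, where the outer parametrix is elementary and \cite{CIK} applies essentially directly, the outer parametrix here lives on $\mathcal S$ and the matching at each $t_j$ generates theta-function factors depending on all the other $t_l$ through the Abel map. Showing that these global corrections collapse, after integration in $s$ from $s_0$ to $0$, into the single ratio $\theta(N\Omega+\Upsilon(f+\log\omega))/\theta(N\Omega)$ together with the explicit constants $\mathcal C_{\mathcal S}$ and the $\widetilde Q$-integral in \eqref{asymptotics FH in main thm} will require the quasi-periodicity identity \eqref{eq:quasi} and the explicit representations \eqref{id:W}, \eqref{repwlambda} to translate the Painlev\'e~V data into the outer-parametrix language.
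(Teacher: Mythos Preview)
Your proposal is correct and matches the paper's approach: the paper regularizes $\omega$ by symmetrically displacing each singularity to the pair $t_j\pm i\epsilon$ (which is the natural way to make the weight real and analytic near $\mathbb R$), applies Theorem~\ref{th:smoothasy} at $\epsilon=\epsilon_0$, and then integrates the differential identity \eqref{eq:DIFH} from $\epsilon_0$ down to $0$ using the Painlev\'e~V local parametrix of \cite{CIK}, finally letting $N\to\infty$ and then $\epsilon_0\to 0$. One point worth noting is that the theta-function bookkeeping you flag as the principal obstacle turns out to be tamer than you anticipate: in the paper's organization the entire theta-ratio $\theta(N\Omega+\Upsilon)/\theta(N\Omega)$ is inherited from the smooth result at $\epsilon_0$ via the continuity $\Upsilon_{\epsilon_0}\to\Upsilon_0$, while the $\epsilon$-integration of the differential identity is governed purely by the local Painlev\'e data (the global-parametrix corrections $\mathcal Y_{3,j},\mathcal Y_{4,j}$ are $O(1)$ uniformly and vanish after integrating over $[0,\epsilon_0]$ and sending $\epsilon_0\to 0$), so no theta-function cancellations between the two pieces are needed.
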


By combining Theorems \ref{th:pfasy} and \ref{th:FHasy}, we obtain the full asymptotics for $H_N(\omega F e^{-NV})$.
\begin{remark}
\label{Remarkonecut}
For $k=1$, we can verify that Theorem \ref{th:FHasy} matches with the asymptotics for the one-cut case given by \cite[Theorem 1.1]{Charlier}. Let us assume without loss of generality that $a_{1}=-1$ and $b_{1}=1$.  
Observe that for $k=1$ we have $\widetilde{Q}(x)=1$, set $\theta \equiv 1$, and recall that $\Theta$ is given by \eqref{Thetaonecut}. Hence,
\begin{equation} \label{wzonecut}
w_z(\lambda)=\frac{\mathcal R^{1/2}(z)}{\mathcal R^{1/2}(\lambda)(\lambda-z)}, \qquad W(z,\lambda)=\frac{z\lambda-1}{(z-\lambda)^2\sqrt{(z^2-1)(\lambda^2-1)}}.
\end{equation}
Using \eqref{Thetaonecut} and \eqref{wzonecut}, we verify that
\begin{equation}\label{lol5}
\int_{b_k}^\infty w_{t_j,+}(\lambda)d\lambda -\frac{\pi i}{2} = i\arcsin t_j. 
\end{equation}
With $\widetilde Q=1$ and by the definition of $\mathcal C_{\mathcal S}$ in \eqref{defCS}, we have $\mathcal C_{\mathcal S}=\log 2$ for $k=1$.
By substituting \eqref{wzonecut} and \eqref{lol5} in \eqref{asymptotics FH in main thm}, we can now verify that Theorem \ref{th:FHasy} with $k=1$ indeed matches with \cite[Theorem 1.1]{Charlier}.
\end{remark}
\begin{remark}\label{explicittau}
The quantities $\boldsymbol{\omega_j}, \, \tau,\, $ and $\Upsilon(f)$ were defined in \eqref{eq:Aint}, \eqref{formomega}, \eqref{eq:period} and \eqref{def:Omegahat}. Equivalently, each of these objects can also be written explicitly in terms of certain matrices as follows. Let $\boldsymbol Q:=(q_{j,r})_{j,r=1}^{k-1}$ and $\boldsymbol A:=(\boldsymbol A_{r,i})_{r,i=1}^{k-1}$, where 
\begin{align*}
\mathsf Q_j(x)=\sum_{r=1}^{k-1}q_{j,r}x^{r-1}, \qquad \mbox{ and } \qquad \boldsymbol A_{r,i}=\int_{b_i}^{a_{i+1}}\frac{x^{r-1}dx}{\mathcal R^{1/2}(x)}.
\end{align*}
Using the product formula for the  Vandermonde determinant, we infer that 
\begin{equation}\label{eq:detA}
\det \boldsymbol A = \int_{b_1}^{a_2} dx_1 \dots \int_{b_{k-1}}^{a_k} dx_{k-1} \prod_{i < j} (x_j - x_i) \prod_{j=1}^{k-1} \frac{1}{\mathcal{R}^{1/2}(x_j)}.
\end{equation} 
Since $x_i<x_j$ for $i<j$, and since $\mathcal R^{1/2}$ is real and does not change its sign on $(b_i,a_{i+1})$, the right-hand side of \eqref{eq:detA} is clearly non-zero and therefore $\boldsymbol A$ is invertible. By \eqref{eq:Aint} - \eqref{eq:Aint2}, we have
$-2\boldsymbol Q \boldsymbol A=I, $ and so
\begin{equation*}
\boldsymbol Q=-\frac{1}{2}\boldsymbol A^{-1}.
\end{equation*}
Similarly, let $\boldsymbol B_{j,r}:=\sum_{i=1}^j\int_{a_i}^{b_i} \frac{x^{r-1}dx}{\mathcal R_+^{1/2}(x)}$ and let $\boldsymbol B:=\left(\boldsymbol B_{j,r}\right)_{j,r=1}^{k-1}$. By \eqref{eq:period},
\begin{equation*} 
\tau=-\boldsymbol B \boldsymbol A^{-T}, 
\end{equation*}
where $\boldsymbol A^{-T}$ denotes the inverse transpose of $\boldsymbol A$. Finally, if we define
\begin{align*}
c_{r}(f) := \frac{1}{2\pi i} \int_{J} \frac{f(x)  x^{r-1}dx}{\mathcal{R}^{1/2}_+(x)}, \qquad \mbox{ and } \qquad \boldsymbol c(f):=\left(c_r(f)\right)_{r=1}^{k-1}, 
\end{align*}
then by \eqref{def:Omegahat} we have
\begin{equation*} 
\Upsilon(f)= \boldsymbol A^{-1}\boldsymbol c(f).
\end{equation*}
(Above, $-2\boldsymbol A$ is the matrix of  $A$-periods in Figure \ref{ContHom} and $2\boldsymbol B$ is the matrix of  $B$-periods).
\end{remark}

\subsection{Applications: connection to the Gaussian free field on a Riemann surface}\label{Applications:connection}

An active research topic in random matrix theory in the past years has been the connection between random matrices and the theory of logarithmically correlated random fields (see e.g. \cite{HKO,FyKe,CFLW} and references therein) -- that is random (generalized) functions $X:\R^d\to \R$ with a covariance of the form $\mathrm{Cov}(X(x),X(y))=\log |x-y|^{-1}+g(x,y)$ for some continuous function $g$. A prime example of such an object is the two-dimensional Gaussian free field, which is a Gaussian process on a subset of $\R^2$ whose covariance is given by the Green's function of the domain (with suitable boundary conditions).  See e.g. \cite{Sheffield} for a review on the Gaussian free field. As will be important for us shortly, similar objects can be defined on more general manifolds -- see e.g. \cite[Section 2]{KM} for a discussion of the Gaussian free field on compact Riemann surfaces.

Let us illustrate the connection between the Gaussian free field and random matrix theory with two examples taken from the one-cut setting. For this purpose, let $M$ be an $N\times N$ GUE random matrix normalized so that the equilibrium measure $d\mu$ has support $[-1,1]$, and define
\begin{align*}
& X_{1,N}(x) = \sqrt{2}\bigg(\log |\det(M-x)|-N\int \log|x-t|d\mu(t)\bigg), \\
& X_{2,N}(x) = \sqrt{2}\pi\bigg(\sum_{j=1}^N\mathbf 1\{\lambda_j\leq x\}-N\int_{-1}^xd\mu(y) \bigg),
\end{align*}
where the $\lambda_{j}$'s are the eigenvalues of $M$. $X_{1,N}$ and $X_{2,N}$ encode different types of information about the fluctuations of the spectrum of $M$ around the equilibrium measure. In the large $N$ limit, both $X_{1,N}$ and $X_{2,N}$ exhibit a covariance structure of logarithmically correlated random fields. Indeed, it is known (see e.g. \cite{FKS, CFLW}) that 
\begin{align}
& \lim_{N\to + \infty} \mathrm{Cov}(X_{1,N}(x),X_{1,N}(y)) = \log |x-y|^{-1} - \log 2, \nonumber \\
& \lim_{N\to + \infty} \mathrm{Cov}(X_{2,N}(x),X_{2,N}(y)) = \log |x-y|^{-1} \nonumber \\
& \hspace{5.065cm} + \log(1-xy+\sqrt{1-x^2}\sqrt{1-y^2}). \label{eq:cfcov}
\end{align}
The right-hand side of \eqref{eq:cfcov} can also be seen as the covariance of a variant of the Gaussian free field -- see e.g. \cite[Appendix A]{CFLW}. This shows that $X_{1,N}$ and $X_{2,N}$, which are objects that arise naturally in random matrix theory, can be understood, in the large $N$ limit, as being restrictions of (variants of) the two-dimensional Gaussian free field to certain one-dimensional subsets.

\medskip Assume from now on that $M$ is a $N\times N$ Hermitian matrix drawn from a multi-cut random unitary invariant ensemble. In this setting, the connection between spectral fluctuations of $M$ and the Gaussian free field is much more subtle. As was already pointed out in \cite{BG2}, in this case the fluctuations consist of two independent contributions: one that is connected to some $N$-dependent theta-functions and which does not necessarily have a meaningful limit as $N\to\infty$, and another one that does have a limit that is described by the restriction of a Gaussian free field on a Riemann surface to a one-dimensional subset. We now describe how Theorem \ref{th:smoothasy} can be used to obtain a new perspective on this connection. First, we discuss how Theorem \ref{th:smoothasy} is related to the spectral fluctuations of $M$, then we describe the ``theta-function part" of these fluctuations, and finally the ``Gaussian part". As before, we denote by $\lambda_{1},\ldots,\lambda_{N}$ the eigenvalues of $M$.

\medskip 

\underline{Theorem \ref{th:smoothasy} and spectral fluctuations:} Let $F$ be as in Theorem \ref{th:smoothasy}, and assume furthermore (for simplicity) that $\log F$ is compactly supported. Then the ratio of Hankel determinants appearing in Theorem \ref{th:smoothasy} can be understood as 
\begin{align}
\frac{H_{N}(Fe^{-NV})}{H_{N}(e^{-NV})}e^{-N\int \log F(x)d\mu_{V}(x)} & = \E e^{\sum_{j=1}^N \log F(\lambda_j)-N\int\log F(x)d\mu_V(x)}, \label{lol7} \\
& = \E e^{-\int_{\R} \left(\sum_{j=1}^N\mathbf 1\{\lambda_j\leq x\}-N\int_{-\infty}^x d\mu_V\right)h(x)dx}, \nonumber
\end{align}
with $h(x):=\frac{d}{dx}\log F(x)$. This last expression can be rewritten more compactly as $\mathbb{E}e^{-X_{N}(h)}$, where
\begin{align*}
X_{N}:C_c^\infty(\R)\to \R \; ; \; h \mapsto \int_{\R} \bigg(\sum_{j=1}^N\mathbf 1\{\lambda_j\leq x\}-N\int_{-\infty}^x d\mu_V\bigg)h(x)dx,
\end{align*}
and therefore the left-hand side of \eqref{lol7} can be viewed as the Laplace transform of $X_{N}(h)$. More generally, when studying random generalized functions, say $X: C_c^\infty(\R^d)\to \R$, one refers to $\mathbb{E}e^{-X(h)}$ as the Laplace transform of $X(h)$ and to $h\mapsto \E e^{-X(h)}$ as the Laplace functional of $X$. The above considerations show that the ratio of the Hankel determinants in Theorem \ref{th:smoothasy} encodes the Laplace functional of the eigenvalue counting function $X_{N}$ (which in turn, encodes the probability distribution of $X_{N}$).

\medskip 

\underline{The ``theta-function part" and discrete Gaussian random variables:} Let us first point out that the quantity
\begin{align}\label{lol8 2}
\frac{\widehat \theta\left[\substack{-N\widehat \Omega\\0}\right]\left(\widehat \Upsilon(f)\right)}{\widehat \theta\left[\substack{-N\widehat \Omega\\0}\right](0)},
\end{align}
which appears on the right-hand side of \eqref{eqlaplace}, can be written as $\E e^{-v_N(f)}$ (in other words, \eqref{lol8} is the Laplace transform of $v_N(f)$) with 
\[
v_N(f):= -2\pi i\sum_{j=1}^{k-1}\widehat \Upsilon_j(f)(\mathcal{X}_j-N\widehat \Omega_j)
\]  
where $(\mathcal{X}_1,...,\mathcal{X}_{k-1})$ is a $\Z^{k-1}$-valued random variable with probability mass function given by
\begin{multline*}
\mathbb{P}(\mathcal{X}_{1}=x_{1},\ldots,\mathcal{X}_{k-1}=x_{k-1}) = \hat{c} e^{\pi i(x-N\widehat \Omega)^{T}\widehat \tau(x-N\widehat \Omega)}, \\  \mbox{for } x=(x_{1},\ldots,x_{k-1}) \in \mathbb{Z}^{k-1},
\end{multline*} 
where $\hat{c}$ is the normalization constant. The Laplace functional $f \mapsto \E e^{-v_N(f)}$ can therefore be understood as that of a suitable random field $\hat{v}_{N}(x)$ whose values are discrete Gaussian random variables. Indeed, by the definition of $\widehat \Upsilon$ from \eqref{defhatUps}, we can write 
\[
v_N(f)=\int_{J}\hat{v}_N(x)f(x)dx:=\int_J 2\sum_{j=1}^{k-1}\widehat u_{j,+}'(x)(\mathcal{X}_j-N\widehat \Omega_j)f(x)dx.
\]
Let us now turn to the ``Gaussian part" of the fluctuations.

\medskip

\underline{Gaussian fluctuations and the Gaussian free field:} We now analyze the other part of the right-hand side of \eqref{eqlaplace}, namely $e^{\frac{1}{2}\mathcal L(f)}$. Since $f \mapsto \mathcal L(f)$ is quadratic in $f$, if we can show that $\mathcal L(f) \geq 0$ for all $f$, then this means that $f \mapsto e^{\frac{1}{2}\mathcal L(f)}$ is the Laplace functional of a Gaussian random field. Note that $\mathcal L(f)$ can be rewritten as $\mathcal L(f) = \iint_{J\times J} \mathcal{G}(x,y)f'(x)f'(y)dxdy$, where
\begin{align}\label{def of C}
\mathcal{G}(x,y):=\frac{1}{2\pi^2}\left(\log \frac{1}{|\Theta(x_+,
y_+)|}-4\pi\sum_{l,j=1}^{k-1}\mathrm{Im}(u_{j,+}(x))\mathrm{Im}(u_{l,+}(y))(\mathrm{Im}\tau)^{-1}_{j,l}\right).
\end{align}
In other words, showing $\mathcal{L}(f) \geq 0$ for all $f$ is equivalent to proving that $\mathcal{G}(x,y)$ is a covariance kernel. 

$\mathcal{G}(x,y)$ is in fact closely related to the (bipolar) Green's function, which is an object of central importance in the study of Riemann surfaces. We refer to e.g. \cite[Sections 2 and 3]{KM} for an in depth discussion of the role of the bipolar Green's function in the study of the Gaussian free field on a compact Riemann surface. For the convenience of the reader, we also briefly review here the concepts that are relevant for us.

We begin by recalling the definition and properties of the Laplace(-Beltrami) operator on a general Riemannian manifold (of which Riemann surfaces are a special case).  We then quote some results from \cite{KM} about how these general results translate to Riemann surfaces and what is known about the Laplacian Green's  function on Riemann surfaces. We refer the interested reader to e.g. \cite[Chapter 7]{Buser} for more on the Laplace(-Beltrami) operator.

First of all, we recall that for a $d$-dimensional smooth Riemannian manifold $\mathcal M$ with metric $g=(g_{i,j})_{i,j=1}^d$, the Laplace(-Beltrami) operator is defined in some given coordinate chart as 
\[
\Delta u=\frac{1}{\sqrt{\det(g_{p,q})_{p,q=1}^d}}\sum_{i,j=1}^d\partial_i\left(g^{i,j}\sqrt{\det(g_{p,q})_{p,q=1}^d}\partial_ju\right),
\]
for $u\in C^\infty(\mathcal M)$. Here we have written (as is common in Riemannian geometry), $g^{i,j}$ for the $i,j$ entry of the inverse of the matrix $(g_{i,j})_{i,j=1}^d$. Note that in the case where $\mathcal M=\R^d$ and where the metric tensor $g$ is just the constant identity matrix, this is the familiar Laplacian operator. 

A basic fact about the Laplace operator in this generality is that if the manifold is compact and does not have a boundary, then a spectral theorem holds. More precisely (see \cite[Theorem 7.2.6]{Buser}), there exists a sequence of real-valued $C^\infty(\mathcal M)$ functions $(\varphi_n)_{n=0}^\infty$ which form an orthonormal basis of $L^2(\mathcal M)$ (where the inner product is $\int_{\mathcal M}f(x)g(x)\sqrt{\det(g_{i,j}(x))_{i,j=1}^d}dx$) and they are eigenfunctions of the Laplacian operator: $\Delta \varphi_j=-\lambda_j\varphi_j$ with $0=\lambda_0<\lambda_1\leq \lambda_2\leq ...$. Furthermore, the only eigenfunctions with $\lambda_0=0$ are constant functions.

We can view a Riemann surface $\hat{\mathcal S}$ as a special case of a Riemannian manifold -- in this case, our metric is a conformal one: there exists a positive function $\rho:\hat{\mathcal S}\to \R$ such that $g_{i,j}=\rho \delta_{i,j}$ \cite[Lemma 2.3.3]{Jost}. In this case, the Laplacian operator becomes just $\rho^{-1}4\partial\bar \partial$, where $\partial = \frac{1}{2}\partial_{1}-\frac{i}{2}\partial_{2}$ and $\bar\partial = \frac{1}{2}\partial_{1}+\frac{i}{2}\partial_{2}$.

Just as in the study of the Laplacian on $\R^d$, a central importance in the study of the Laplacian on a Riemann surface is played by the notion of the Green's function. For this purpose, we introduce the resolvent kernel:
\[
R_\rho(z,w)=\sum_{n=1}^\infty \frac{1}{\lambda_n}\varphi_n(z)\varphi_n(w).
\]
It follows from the spectral theorem that on the subspace of $L^2(\mathcal M)$ which is the orthogonal complement of the constant functions, the integral operator with kernel $R_\rho$ is the inverse of the Laplacian, i.e. $\Delta^{-1}h(z)=\int_{\hat{\mathcal S}} R_{\rho}(z,w)h(w)\rho(w) dw$ for all $h$ such that $\int_{\hat{\mathcal S}} h(w)\rho(w) dw=0$. 

\medskip We will now specialize the above general theory to the Riemann surface $\mathcal{S}$ defined in Section \ref{Sec:calS}. Then we will verify that indeed $\mathcal L(f)\geq 0$. For the remainder of our discussion, we will think of $\rho$ as being given, and we will suppress it in our notation (for example, we will write $R$ instead of $R_{\rho}$).

\medskip It is known, see \cite[Corollary 3.2]{KM}, that there exists some constant $c$ only depending on $\mathcal S$ so that
\begin{align}\label{eq:thetabip}
&\frac{1}{\pi}\left(\log \left|\frac{\theta\left[{\substack{{ \ualpha} \\ { \ubeta}}}\right](u(z)-u(q))}{\theta\left[{\substack{{ \ualpha} \\ { \ubeta}}}\right](u(z)-u(p))}\right|-2\pi\sum_{l,j=1}^{k-1}\Im(u_l(p)-u_l(q))(\Im \tau)^{-1}_{j,l}\Im u_j(z)\right)\\
&=R(z,p)-R(z,q)+c,\notag
\end{align}
where $\left[{\substack{{ \ualpha} \\ { \ubeta}}}\right]$ is the odd non-singular characteristic defined by $\ualpha=\frac{1}{2}e_1$ and $\ubeta=\frac{1}{2}\sum_{j=1}^{k-1}e_j$. We mention here in passing that the function $R(z,p)-R(z,q)$ is the bipolar Green's function on $\mathcal{S}$ ($p$ and $q$ being the ``poles"). Using \eqref{eq:quasi}, \eqref{Theta}, \eqref{def of C} and \eqref{eq:ujump2}, one sees that if we let $z\to x_+$, $p\to y_+$, and $q\to y_-$, then the left-hand side tends to $2\pi \mathcal{G}(x,y)$, namely that 
\begin{equation}\begin{aligned}\label{C plus}
\mathcal{G}(x,y)&=\frac{1}{2\pi} (R(x_+,y_+)-R(x_+,y_-))+c\\ &=\sum_{n=1}^\infty\frac{1}{2\pi \lambda_n}\varphi_n(x_+)(\varphi_n(y_+)-\varphi_n(y_-))+c.
\end{aligned}\end{equation}
On the other hand, letting now $z\to x_-$, $p\to y_-$, and $q\to y_+$, and using again \eqref{eq:thetabip} (along with \eqref{eq:ujump2} and \eqref{eq:quasi}), one can check that 
\begin{equation}\begin{aligned}\label{C minus}
\mathcal{G}(x,y)&=\frac{1}{2\pi}(R(x_-,y_-)-R(x_-,y_+))+c \\ &= \sum_{n=1}^\infty\frac{-1}{2\pi \lambda_n}\varphi_n(x_-)(\varphi_n(y_+)-\varphi_n(y_-))+c.
\end{aligned}\end{equation}
Hence, combining \eqref{C plus} with \eqref{C minus},  we see that in fact
\begin{align}\label{C plus minus}
\mathcal{G}(x,y)=\sum_{n=1}^\infty \frac{1}{4\pi \lambda_n}(\varphi_n(x_+)-\varphi_n(x_-))(\varphi_n(y_+)-\varphi_n(y_-))+c.
\end{align}
Now, if we let $x\to b_k$ in the above expression, then $\varphi_n(x_+)-\varphi_n(x_-) \to 0$, while by \eqref{def of C} the left-hand side vanishes since $|\Theta(b_k,y_+)|=1$ and $\Im(u_{j,+}(b_k))=0$. We conclude that in fact $c=0$. 

Hence, it follows from $\mathcal{L}(f)= \iint_{J\times J} \mathcal{G}(x,y)f'(x)f'(y)dxdy$, \eqref{C plus minus} and $c=0$ that
\[
\mathcal{L}(f) =\sum_{n=1}^\infty \frac{1}{4\pi \lambda_n}\left(\int_J f'(x)(\varphi_n(x_+)-\varphi_n(x_-))dx\right)^2\geq 0.
\]
Since $f$ was arbitrary, this proves that $f \mapsto e^{\frac{1}{2}\mathcal L(f)}$ is the Laplace functional of a Gaussian random field $X$, such that $X(f)$ is a Gaussian random variable with mean zero and variance $\mathcal L(f)$.

It remains to address the claim that this Gaussian random field $X$ is related to the Gaussian free field on the Riemann surface $\mathcal{S}$. For this purpose, one first needs to define what is meant by the Gaussian free field on $\mathcal{S}$. Naively, one would like the Gaussian free field to be the centered (generalized) Gaussian process with covariance given by the Green's function of the surface, and this Green's function in turn one would like (by the spectral theorem) to define as $\sum_{n=0}^{+\infty} \lambda_n^{-1}\varphi_n(z)\varphi_n(w)$. In other words, one would like to define the free field $\mathcal{Y}$ as $\mathcal{Y}(f) = \int \hat{\mathcal{Y}}(z)f(z)dz$, with $\hat{\mathcal{Y}}(z)=\sum_{n=0}^{+\infty} \mathcal{Y}_n \lambda_n^{-1/2}\varphi_n(z)$, where $\{\mathcal{Y}_n\}_{n=0}^{+\infty}$ are independent and identically distributed standard real Gaussian variables. There is, however, a caveat: the issue here is the ``zero mode", namely that $\lambda_0=0$. For this reason, the free field $\mathcal{Y}$ is only defined ``up to constants" (because $z\mapsto \varphi_{0}(z)$ is a constant), so one can make sense of the random variable $\mathcal{Y}(f)=\int_{\mathcal S} \hat{\mathcal{Y}}(z)f(z)dz$ only for functions with $\int_{\mathcal S}f(z)dz=0$.

There are various ways of constructing random fields that circumvent this issue. In \cite{KM}, the approach taken by the authors is to look at  differences of the field. They introduce a variant of the free field, denoted $\Phi(p,q)$ and which can be thought of as $\Phi(p,q) = ``\hat{\mathcal{Y}}(p)-\hat{\mathcal{Y}}(q)"$ (since $\hat{\mathcal{Y}}(p)$ is not well-defined, one should interpret  this difference as $\sum_{n=1}^{+\infty} \mathcal{Y}_n \lambda_n^{-1/2}(\varphi_n(p)-\varphi_{n}(q))$, which removes the ambiguity about $\lambda_{0}$). More precisely, $\Phi(p,q)$ is defined with the following covariance: 
\begin{align*}
\E\Phi(p,q)\Phi(\tilde p,\tilde q)&=2\log\left|\frac{\theta\left[{\substack{{ \ualpha} \\ { \ubeta}}}\right](u(\tilde p)-u(q))\theta\left[{\substack{{ \ualpha} \\ { \ubeta}}}\right](u(\tilde q)-u(p))}{\theta\left[{\substack{{ \ualpha} \\ { \ubeta}}}\right](u(\tilde p)-u(p))\theta\left[{\substack{{ \ualpha} \\ { \ubeta}}}\right](u(\tilde q)-u(q))}\right|^2\\
&\quad -4\pi \sum_{j,l=1}^{k-1}\Im(u_j(p)-u_j(q))(\Im \tau^{-1})_{j,l}\Im(u_l(\tilde p)-u_l(\tilde q)).
\end{align*}
As pointed out in \cite{KM}, the above right-hand side is a difference of two bipolar Green's functions. Taking $p \to x_{+}$, $q \to x_{-}$, $\tilde{p} \to y_{+}$ and $\tilde{q} \to y_{-}$, we obtain that $\E\Phi(p,q)\Phi(\tilde p,\tilde q) \to 8\pi^{2} \mathcal{G}(x,y)$. Hence, the field $\Phi(p,q)$, when $(p,q)$ is restricted to $(p,q)=(x_{+},x_{-})$ with $x \in J$, describes the ``Gaussian part" of the fluctuations.

\medskip

\underline{Summary:} In this section, we have shown that Theorem \ref{th:smoothasy} can be interpreted as a statement about the eigenvalue counting function $x\mapsto \hat{X}_{N}(x)=\sum_{j=1}^N \mathbf 1\{\lambda_j\leq x\}-N\int_{a_1}^x d\mu_V(t)$. More precisely, since the Laplace functional of $f \mapsto X_{N}(f):=\int_{\mathbb{R}}\hat{X}_{N}(x)f(x)dx$ is asymptotically a product of two Laplace functionals, Theorem \ref{th:smoothasy} should be understood as saying that asymptotically, as $N\to \infty$, $X_{N}$ is distributed as a sum of two independent terms $v_N$ and $X$:
\begin{itemize}
\item $v_N$ is a random function of the form 
\[
2\sum_{j=1}^{k-1}\widehat{u}_{j,+}(x)(\mathcal{X}_j-N\widehat \Omega_j)
\]
where $\mathcal{X}_1,...,\mathcal{X}_{k-1}\in \Z$ are random variables with a probability mass function given by
\begin{align}\label{discrete Gaussian}
\mathbb{P}(\mathcal{X}_{j} = x_{j}, \; j=1,\ldots,k-1) = \hat{c} \;  e^{\pi i (x-N\widehat \Omega)^{T}\widehat \tau(x-N\widehat \Omega)}, \qquad x_{1},\ldots,x_{k-1}\in \mathbb{Z},
\end{align}
where $\hat{c}$ is the normalization constant. This field $v_N$ is the one that is connected to the $\theta$-functions in Theorem \ref{th:smoothasy}, and because of \eqref{discrete Gaussian}, the random vector $(\mathcal{X}_1,...,\mathcal{X}_{k-1})\in \Z^{k-1}$ is said to be a discrete Gaussian random variable.
\item $X$ is the restriction of a (variant of) the Gaussian free field on the Riemann surface to a one-dimensional set -- the covariance of this Gaussian process is 
\[
\mathcal{G}(x,y)=\frac{1}{2\pi^2}\left(\log \frac{1}{|\Theta(x_+,
y_+)|}-4\pi\sum_{l,j=1}^{k-1}\mathrm{Im}(u_{j,+}(x))\mathrm{Im}(u_{l,+}(y))(\mathrm{Im}\tau)^{-1}_{j,l}\right).
\]
\end{itemize}

This discussion elaborates on remarks from \cite{Shcherbina,BG2,BLS} where similar facts were pointed out, though the connection to the Gaussian free field was not touched on. We also emphasize that in Theorem \ref{th:smoothasy}, the asymptotic fluctuations depend on the potential only through what the associated Riemann surface $\mathcal{S}$ is, i.e. only through the endpoints of the support of $\mu_{V}$ -- if two potentials are associated to the same surface, the probability distribution of the asymptotic fluctuations are the same. This is an instance of universality that has not been emphasized in previous studies.

\subsection{Applications: eigenvalue rigidity}\label{Sec:rig}
Recall that $M = M_N$ is an $N \times N$ random Hermitian matrix sampled from the $k$-cut regular ensemble $d\mathbb{P}(M) \propto e^{-N \Tr V(M)}dM$, and let $\lambda_{(1)} \le \lambda_{(2)} \le \dots \le \lambda_{(N)}$ be its ordered eigenvalues.\footnote{Not to confuse with $(\lambda_1, \dots, \lambda_N) \in \mathbb{R}^N$, which are the unordered eigenvalues of $M$.} A very natural and important question in random matrix theory is to understand how much the eigenvalues $\lambda_{(n)}$ may deviate from their respective classical locations, defined by
\begin{align*}
\rho_n := \inf \left\{x\in \mathbb{R}: \mu_V((-\infty, x]) = \frac{n}{N}\right\}, \qquad n = 1, \dots, N
\end{align*}

\noindent where $\mu_V$ is the equilibrium measure \eqref{eq:density}. This problem is often known as eigenvalue rigidity, and has been studied under various settings in the literature \cite{BEY2012, BEY2014a, BEY2014b, EYY, Li2017, CFLW}.

Unlike the one-cut regular scenario which is addressed in \cite{CFLW}, in the multi-cut setting the eigenvalues near the edge may `jump between consecutive intervals'. Such a phenomenon is hinted at by the appearance of theta functions in the asymptotics for smooth linear statistics in Theorem \ref{th:smoothasy}, and was already observed in the work of Bekerman \cite{Bek2018}. Therefore, the best we can hope for is a rigidity result in the bulk of the spectrum, and this is precisely what is established in the following corollary.
\begin{corollary} The following are true.
\begin{itemize}[leftmargin=0.75cm]
\item[(i)] For each $0 < \delta < \frac{1}{2} \min_{i \in \{1, \dots, k\}} \mu_V([a_i, b_i])$, let
\begin{multline*}
I_N(\delta)=\Bigg\{n: \sum_{i=1}^{l-1}\mu_V([a_i, b_i])+\delta \leq \frac{n}{N}\leq \sum_{i=1}^l\mu_V([a_i, b_i])-\delta, \\
 \text{ for some } l=1,...,k\Bigg\}.
\end{multline*}

\noindent Then for any $\epsilon > 0$,
\begin{align*}
\lim_{N \to \infty} \mathbb{P} \left(|\lambda_{(n)}-\rho_n|\leq \frac{1+\epsilon}{\pi\psi_V(\rho_n)}\frac{\log N}{N}  \text{ for all } n \in I_N(\delta) \right) = 1
\end{align*}

\noindent where $\psi_V$ is the density \eqref{eq:density} of the equilibrium measure $\mu_V$.

\item[(ii)] For each $0<\eta < \tfrac{1}{2}\min_{i \in \{1, \dots, k\}} (b_i - a_i)$, let $J(\eta)= \bigcup_{i=1}^k [a_i + \eta, b_i - \eta]$, and consider the centred eigenvalue counting function
\begin{align*}
\mathcal{H}_N(u) &=  \left[ \sum_{i=1}^N \mathbf{1}\{\lambda_j \le u\} - N \mu_V((-\infty, u])\right], \qquad u \in \mathbb{R}.
\end{align*}

\noindent Then for any $\epsilon > 0$, 
\begin{align*}
\lim_{N \to \infty} \mathbb{P}\left(\sup_{x \in J(\eta)} |\mathcal{H}_N(x)| \le \frac{1}{\pi}(1+\epsilon) \log N\right)  = 1.
\end{align*}
\end{itemize}
\end{corollary}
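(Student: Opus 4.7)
The plan is to derive (ii) first using Theorem \ref{th:FHasy} as input, and then to deduce (i) from (ii) via the monotone relation between the counting function and the ordered eigenvalues.

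For (ii), specialize Theorem \ref{th:FHasy} to $F\equiv 1$, $p=1$, $\alpha_1=0$, and $\beta_1=iv$ with $v\in\R$: the resulting Hankel ratio equals the Laplace transform of $2\pi v\mathcal{H}_N(t)$, so \eqref{kcutbeta} gives
\begin{equation*}
\log\E\bigl[e^{2\pi v\mathcal{H}_N(t)}\bigr]=v^2\log N+C(v,t)+\log\frac{\theta\bigl(N\Omega+2\pi v\Upsilon(\mathbf{1}_t)\bigr)}{\theta(N\Omega)}+o(1),
\end{equation*}
uniformly for $t$ on compact subsets of $J(\eta)$. Since $\theta$ is continuous and $\Z^{k-1}$-periodic on $\R^{k-1}$, its numerator is uniformly bounded, and I work with the estimate $\E[e^{2\pi v\mathcal{H}_N(t)}]\leq C_vN^{v^2+o(1)}$. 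Applying the exponential Chebyshev inequality with $x=(1+\epsilon)(\log N)/\pi$ and optimizing at $v=1+\epsilon$ gives $\mathbb{P}(\mathcal{H}_N(t)\geq x)\leq CN^{-(1+\epsilon)^2+o(1)}$; the lower tail is handled analogously with $v<0$. To pass to the supremum over $J(\eta)$, cover it by an $O(N^{-1})$-spaced grid of $O(N)$ points; between consecutive grid points $\mathcal{H}_N$ fluctuates by at most $O(1)$, since it jumps by $1$ at each eigenvalue and decreases continuously at rate $N\psi_V$. A union bound yields, for any $\epsilon'\in(0,\epsilon)$,
\begin{equation*}
\mathbb{P}\Bigl(\sup_{x\in J(\eta)}|\mathcal{H}_N(x)|\geq \tfrac{(1+\epsilon)\log N}{\pi}\Bigr)\leq CN\cdot N^{-(1+\epsilon')^2+o(1)}\to 0,
\end{equation*}
since $(1+\epsilon')^2>1$, which proves (ii).

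For (i), note that for $n\in I_N(\delta)$ the classical location $\rho_n$ lies in $[a_l+\eta,b_l-\eta]\subset J(\eta)$ for some $\eta=\eta(\delta)>0$, where $\psi_V(\rho_n)\geq c_\delta>0$. The identity $\{\lambda_{(n)}>u\}=\{\mathcal{H}_N(u)<n-N\mu_V((-\infty,u])\}$, combined with the Taylor expansion $n-N\mu_V((-\infty,\rho_n+s])=-N\psi_V(\rho_n)s+O(Ns^2)$, shows that at $s_n=(1+\epsilon)(\log N)/(\pi N\psi_V(\rho_n))$ the event $\lambda_{(n)}>\rho_n+s_n$ forces $\mathcal{H}_N(\rho_n+s_n)<-(1+\epsilon')(\log N)/\pi$ for any $\epsilon'<\epsilon$ and all $N$ large, and symmetrically for $\lambda_{(n)}<\rho_n-s_n$. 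Hence the supremum bound of (ii), applied with $\epsilon'$ in place of $\epsilon$, implies the simultaneous rigidity statement of (i) for all $n\in I_N(\delta)$ at once, with no further union bound on $n$ needed.

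The main technical obstacle is controlling the theta-quotient in Step~1 uniformly in $N$: along subsequences where $\{N\Omega\}\bmod\Z^{k-1}$ approaches a zero of $\theta$, $|\theta(N\Omega)|$ becomes small, and the relative-error form of Theorem \ref{th:FHasy} does not immediately deliver a uniform bound. A finer inspection of the Riemann--Hilbert error estimates underlying Theorem \ref{th:FHasy}, confirming that the nominal denominator $\theta(N\Omega)$ is matched by a compensating factor in the absolute error, or alternatively a subpolynomial Diophantine lower bound on $|\theta(N\Omega)|$ from generic properties of $\Omega$, will be required to validate the MGF bound $C_vN^{v^2+o(1)}$ at every $N$.
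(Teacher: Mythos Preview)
Your overall strategy matches the paper's: extract a uniform moment-generating-function bound from Theorem~\ref{th:FHasy}, apply a Chernoff/union bound over an $O(N)$-point grid in $J(\eta)$, pass to the supremum via the monotonicity of the counting function, and then deduce (i) from (ii) by relating $\{\lambda_{(n)}>\rho_n+s_n\}$ to a tail event for $\mathcal H_N$ at the deterministic location $\rho_n+s_n$. Your route is in fact slightly more direct than the paper's, which inserts an intermediate ``preliminary rigidity'' estimate $|\lambda_{(n)}-\rho_n|<N^{-1+\epsilon}$ before the final comparison; working at $\rho_n\pm s_n$ rather than at $\lambda_{(n)}$ lets you skip that step.

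The technical obstacle you flag is not one. By Lemma~\ref{le:norealzeros}, $\theta(\xi|\tau)\neq 0$ for every $\xi\in\R^{k-1}$ whenever $-i\tau$ is real positive definite, which holds here. Since $\theta$ is $\Z^{k-1}$-periodic in the real directions and $N\Omega\in\R^{k-1}$, the values $\theta(N\Omega)$ lie in the continuous image of a torus on which $\theta$ is nowhere zero, so $|\theta(N\Omega)|$ is bounded below by a positive constant independent of $N$. The numerator $\theta(N\Omega+2\pi v\,\Upsilon(\mathbf 1_t))$ has a real argument as well (since $\mathcal R_+^{1/2}$ is purely imaginary on $J$, each $\Upsilon_m(\mathbf 1_t)$ is real), so it is likewise uniformly bounded above and below. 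Hence the bound $\E[e^{2\pi v\mathcal H_N(t)}]\le C_v N^{v^2}$ holds uniformly in $N$ and $t\in J(\eta)$ with no Diophantine input or further Riemann--Hilbert inspection; this is exactly \eqref{lbegammaH} in the paper.

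One phrasing to tighten: ``$\mathcal H_N$ fluctuates by at most $O(1)$ between consecutive grid points'' is not literally true (arbitrarily many eigenvalues could cluster there), but what you actually need---and what the paper uses---is the monotonicity sandwich $\mathcal H_N(s_l)-C\le\mathcal H_N(x)\le\mathcal H_N(s_{l+1})+C$ for $x\in[s_l,s_{l+1}]$, which gives $\sup_{J(\eta)}|\mathcal H_N|\le\max_l|\mathcal H_N(s_l)|+C$ directly.
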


\begin{remark}
Our result is also valid for $k=1$, but does not cover the edge where the rigidity estimate remains valid. For further details, we refer the readers to \cite[Theorem 1.2-1.3]{CFLW} where matching lower bounds were also established by techniques of Gaussian multiplicative chaos. Based on the one-cut result, we believe that our rigidity upper bounds for the multi-cut case are optimal, i.e. we expect
\begin{align*}
&\lim_{N \to \infty} \mathbb{P} \left(|\lambda_{(n)}-\rho_n|\ge \frac{1-\epsilon}{\pi\psi_V(\rho_n)}\frac{\log N}{N}  \text{ for all } n \in I_N(\delta) \right) = 1 \\
\text{and} \qquad &\lim_{N \to \infty} \mathbb{P}\left(\sup_{x \in J(\eta)} |\mathcal{H}_N(x)| \ge\frac{1}{\pi} (1-\epsilon) \log N\right) = 1
\end{align*}

\noindent to hold for any $\epsilon > 0$, and $\delta, \eta > 0$ sufficiently small. This would, however, require finer asymptotics for Hankel determinants with merging singularities and further probabilistic analysis which is beyond the scope of this paper.
\end{remark}

\begin{proof}
We begin by observing that one can choose $\eta =\eta(\delta)$ sufficiently small (and independent of $N$) such that $\rho_n \in J(2\eta)$ for all $n \in I_N(\delta)$. Indeed, if we define, for each $j = 1, \dots, k$,
\begin{align*}
\tilde{a}_j(\delta) &:= \inf\left\{x >a_j: \mu_V([a_j,x])\geq \delta\right\}\\
\text{and} \qquad \tilde{b}_j(\delta) &:= \sup\left\{x<b_j: \mu_V([x,b_j])\geq \delta\right\},
\end{align*}

\noindent then we may choose 
\begin{align*}
\eta := \tfrac{1}{2}\min_{i \in \{1, \dots, k\}} \left\{\min\left(\tilde{a}_i(\delta) - a_i, b_i - \tilde{b}_i(\delta)\right) \right\}.
\end{align*}

We will now prove both rigidity estimates in three steps.
\paragraph{Step 1: Chernoff bound.} By considering a Fisher-Hartwig singularity \eqref{eq:omega} with $p=1$, $\alpha_1 = 0$, $2\pi i\beta_1 = {\gamma} \in \mathbb{R}$ and $t_1 = u$, we obtain from Theorem \ref{th:FHasy} that
\begin{equation} \label{lbegammaH}
\mathbb{E}e^{\gamma \mathcal{H}_N(u)} \le C_{\gamma}(\eta) N^{\frac{\gamma^2}{4\pi^2}}
\end{equation}

\noindent for some constant $C_{\gamma}(\eta) > 0$ uniformly in $u \in J(\eta)$ and $N$ sufficiently large. We observe, using a union bound, that for any fixed $\epsilon > 0$ and finite subset $S \subset J(\eta)$,
\begin{align}
\label{eq:max_bound1} & \mathbb{P}\left(\sup_{x \in S} |\mathcal{H}_N(x)| > \frac{1}{\pi}(1+\epsilon/2) \log N\right)\\
\nonumber &  \le |S| \sup_{x\in S} \Bigg[\left|\mathbb{P}\left(\mathcal{H}_N(x) >\frac{1}{\pi} (1+\epsilon/2) \log N\right)\right|
\\
& \hspace{1.5cm} + \left|\mathbb{P}\left(-\mathcal{H}_N(x)> \frac{1}{\pi}(1+\epsilon/2) \log N\right)\right|\Bigg], \nonumber
\end{align}
\noindent and using the fact that $\mathbb P(\mathcal H>u)\leq \mathbb E \left(e^{\gamma(\mathcal H-u)}\right)$ for any real random variable $\mathcal H$, $u \in \mathbb R$, and $\gamma>0$ with $\gamma=2\pi$ and $u=\frac{1}{\pi}(1+\epsilon/2) \log N$, we find that the right-hand side of \eqref{eq:max_bound1} converges to $0$
provided that $|S| = \mathcal{O}(N)$. Our choice of $S = \{s_i\}_{i=1}^{cNk} \subset J(\eta)$ is such that
\begin{align*}
a_{j} + \eta = s_{cN(j-1) + 1} < s_{cN(j-1) + 2} < \dots < s_{cNj} = b_j - \eta
\end{align*}

\noindent are $cN$ equally spaced points on $[a_j + \eta, b_j - \eta]$ for each $j = 1, \dots, k$. For $c = c(\mu_V) > 0$ sufficiently large (and independent of $N$), we may assume without loss of generality that any interval of the form $[s_l, s_{l+1}]$ satisfies $\mu_V([s_l, s_{l+1}]) < 1/N$, and in particular contains at most one element in $\{\rho_n: n \in I_N(\delta)\}$.

\paragraph{Step 2: a preliminary rigidity estimate.} We claim that the event
\begin{align*}
\mathcal{G}_N := \left\{ |\lambda_{(n)}-\rho_n| < N^{-1+\epsilon} \quad \text{for all } n \in I_N(\delta) \right\}
\end{align*}

\noindent has a probability tending to $1$ as $N \to \infty$ for any fixed $\epsilon > 0$. This follows from a result of Li \cite{Li2017}, but we provide a self-contained proof below.

Indeed, suppose there exists some $n \in I_N(\delta)$ such that $\lambda_{(n)} > \rho_n + N^{-1+\epsilon}$. Since $\rho_n \in J(2\eta)$, we also have $\rho_n +N^{-1+\epsilon} \in J(\eta)$ for $N$ sufficiently large. If we choose $s_l \in S$ to be the largest element in $S$ such that $s_l <  \rho_n + N^{-1+\epsilon}$, then $s_l$ cannot be one of the ``endpoints" $\{b_j - \eta\}_{j=1}^k$ and hence
\begin{align*}
s_l - \rho_n 
= s_{l+1} - \mathcal{O}(N^{-1}) - \rho_n 
\ge \rho_n + N^{1-\epsilon}  - \mathcal{O}(N^{-1}) - \rho_n
\ge \tfrac{1}{2} N^{-1+\epsilon}
\end{align*}
\noindent for $N$ sufficiently large. In particular,
\begin{align*}
\mathcal{H}_N(s_l)
&=   \sum_{i=1}^N \mathbf{1}\{\lambda_j \le s_l\} - N \mu_V((-\infty, s_l])\\
& \le   \sum_{i=1}^N \mathbf{1}\{\lambda_j \le \lambda_{(n)}\} - N \mu_V((-\infty, \rho_n]) - N\mu_V((\rho_n, s_l])\\
& = - N\mu_V((\rho_n, s_l]) \le - \frac{\sqrt 2}{\pi} C N^{\epsilon}
\end{align*}

\noindent for some constant $C>0$ independent of $N$, and this has a vanishing probability since \eqref{eq:max_bound1} converges to $0$. The event that $\lambda_{(n)} < \rho_n - N^{-1+\epsilon}$ for some $n \in I_N(\delta)$ may be treated similarly.

\paragraph{Step 3: the strong rigidity estimates.} Thanks to the last step, we may assume without loss of generality the event $\mathcal{G}_N$, and in particular $ \lambda_{(n)} \in J(\eta)$ for all $n \in I_N(\delta)$ because
\begin{align*}
& \mathbb{P}\left(\exists n \in I_N(\delta):  \lambda_{(n)} \not \in J(\eta) \right)
\le \mathbb{P}\left(\exists n \in I_N(\delta):  |\lambda_{(n)} - \rho_n| > \eta \right)
\xrightarrow{N \to \infty} 0.
\end{align*}

\noindent But then every eigenvalue $\lambda_{(n)} \in J(\eta)$ must be contained in some closed interval $[s_l, s_{l+1}]$ and this means that
\begin{align*}
\mathcal{H}_N(\lambda_{(n)})
&=  \sum_{i=1}^N \mathbf{1}\{\lambda_j \le \lambda_{(n)} \} - N \mu_V((-\infty, \lambda_{(n)}])\\
& \le   \sum_{i=1}^N \mathbf{1}\{\lambda_j \le s_{l+1} \} - N \mu_V((-\infty, s_{l}])
= \mathcal{H}_N(s_{l+1}) + N\mu_V((s_l, s_{l+1}])
\end{align*}

\noindent where $N\mu_V((s_l, s_{l+1}]) \le N \left(\max_{x\in J} \psi_V(x)\right)|s_{l+1}-s_l| \le C$ for some constant $C> 0$. Since the supremum of $\mathcal{H}_N(\cdot)$ can only be realized at the eigenvalues, we have
\begin{align*}
& \mathbb{P}\left(\sup_{x \in J(\eta)} \mathcal{H}_N(x) > \frac{1}{\pi} (1+\epsilon) \log N\right) 
\\
& \le \mathbb{P}\left(\sup_{n \in I_N(\delta)} \mathcal{H}_N(\lambda_{(n)})> \frac{1}{\pi}(1+\epsilon) \log N\right) \\
& \le \mathbb{P}\left(\sup_{l \le cNk} \mathcal{H}_N(s_l) + C > \frac{1}{\pi}(1+\epsilon) \log N\right) \to 0.
\end{align*}

\noindent A similar argument also holds for bounding $\sup_{x \in J(\eta)} (-\mathcal{H}_N(x))$ and therefore
\begin{align*}
\lim_{N \to \infty} \mathbb{P}\left(\sup_{x \in J(\eta)} |\mathcal{H}_N(x)| >\frac{1}{\pi} (1+\epsilon) \log N\right) = 0.
\end{align*}

Now, for every $n \in I_N(\delta)$,
\begin{align*}
\frac{1}{N}\left| \mathcal{H}_N(\lambda_{(n)})\right|
&=  \left| \frac{n}{N} - \mu_V((-\infty, \lambda_{(n)}]) \right|\\
& =  \left| \mu_V((-\infty, \rho_n]) - \mu_V((-\infty, \lambda_{(n)}]) \right|\\
& =   \left| \int_{\rho_n}^{\lambda_{(n)}} \psi_V(x)dx \right| 
\ge   |\lambda_{(n)} - \rho_n| \min_{x \in [\rho_n, \lambda_{(n)}]^*} \psi_V(x)
\end{align*}

\noindent where we wrote $[\rho_n, \lambda_{(n)}]^*$ to mean the interval $[\rho_n, \lambda_{(n)}]$ if $\rho_n \le \lambda_{(n)}$, and $[\lambda_{(n)}, \rho_n]$ otherwise. In either case, on our good event $\mathcal{G}_N$ this interval is contained in $J(\eta)$ on which $\psi_V(x)$ is uniformly continuous and strictly bounded away from $0$ and hence $\min_{x \in [\rho_n, \lambda_{(n)}]^*} \psi_V(x) \ge (1+\epsilon)^{-1} \psi_V(\rho_n)$ for all $n\in I_N(\delta)$ when $N$ is sufficiently large. Combining everything, we obtain simultaneously for all $n \in I_N(\delta)$
\begin{align*}
|\lambda_{(n)} - \rho_n| 
&\le \frac{1+\epsilon}{ \psi_V(\rho_n)} \frac{|\mathcal{H}_N(\lambda_{(n)})|}{ N}\\
&\le \frac{1+\epsilon}{ \psi_V(\rho_n)} \frac{\max_{x \in J(\eta)} |\mathcal{H}_N(x)|}{ N}
\le \frac{1+\epsilon}{\pi \psi_V(\rho_n)} \frac{\log N}{N}
\end{align*}

\noindent with high probability, and this concludes our proof.
\end{proof}

\subsection{Fisher-Hartwig singularities: a literature overview.} \label{LitOverFH}Asymptotics of structured determinants (such as Toeplitz, Hankel and Fredholm determinants) with FH singularities have attracted considerable attention over the years, and we briefly review here this rich literature. 

 In the pioneering work \cite{FisherHartwig}, Fisher and Hartwig conjectured a formula for the asymptotics of large Toeplitz determinants with root-type and jump-type singularities; see also the subsequent work \cite{Lenard} by Lenard. An important motivation for studying such determinants was to better understand the long range correlation of the Ising model and the momentum of one dimensional impenetrable bosons, see \cite{Ehrhardt, Impetus} for reviews. The Fisher-Hartwig conjecture was proved by Widom \cite{Widom2}, Basor \cite{Basor, Basor2}, B\"ottcher and Silbermann \cite{BS1986}, and Ehrhardt and Silberman \cite{EhrSil}, for parameters where the conjecture was expected to hold. A different type of asymptotics were observed for certain parameter sets (for example when $\Re(\beta_1-\beta_2)=1$) by  Basor and Tracy in \cite{BasTra}, and finally the problem was solved in full generality by Deift, Its and Krasovsky in \cite{DIK, DeiftItsKrasovsky}. The aforementioned  works  deal with FH singularities that are bounded away from each other. In recent years, we have witnessed important progress in understanding the asymptotics of Toeplitz determinants with merging singularities \cite{CIK, CK2015, FahsUniform}.
 
 The works \cite{Krasovsky, Garoni, IK, BWW, Charlier} have already been mentioned earlier in the introduction and concern Hankel determinants with FH singularities in the bulk; see also \cite{CharlierGharakhloo} for further results in this direction. Asymptotic formulas for Hankel determinants with FH singularities have also been obtained in other regimes of the parameters: see \cite{ClaeysFahs} for two merging root-type singularities in the bulk, \cite{BCI2016, WXZ2018} for singularities close to the edges, and \cite{ChDeano} for a large jump-type singularity.
 
 Asymptotics of Fredholm determinants with FH singularities have also been widely studied, see \cite{BW1983, BB1995, BDIK2015, ChSine} for the sine-kernel determinant, \cite{BB2018, ChCl3} for the Airy-kernel determinant, \cite{BIP, ChBessel, ChCl4} for the Bessel-kernel determinant, and \cite{DXZ2020 thinning, ChMoreillon} for the Pearcey-kernel determinant.

Finally, we also mention the works \cite{BasorEhrhardt1, ForKea} on Toeplitz+Hankel determinants, \cite{CharlierMB} on Muttalib-Borodin determinants, and \cite{WebbWong, DeanoSimm, Charlier 2d jumps} for asymptotic results on determinants with ``planar" FH singularities.

All the results mentioned above deal with a one-cut setting; the present work provides the first result on FH asymptotics in the multi-cut regime.

\subsection{Outline of the remainder of the article}\label{sec:outline} $ $

\medskip 

In Section \ref{sec:RHP}, we recall the connection between Hankel determinants, orthogonal polynomials and Riemann-Hilbert (RH) problems. We also discuss some deformations of the symbol $\nu$ we will be considering. Based on these deformations and results we prove later in the article, we prove Theorem \ref{th:pfasy}, Theorem \ref{th:smoothasy}, and Theorem \ref{th:FHasy}. In Section \ref{sec:thetaids}, we recall some background information about Riemann surfaces and theta functions, and develop some identities for $\theta$-functions that will play an important role in our analysis. In Section \ref{sec:trans} we open the lens of our RH problem. In Sections \ref{sec:global} and \ref{sec:Local}, we present the main parametrix and the local parametrices respectively. In Section \ref{sec:snorm}, we show that using the parametrices, we can transform the actual problem we are interested in into one that can be solved asymptotically. In Section \ref{sec:nonsing}, we use our parametrices and asymptotic solution to prove the main estimates required for Theorem \ref{th:smoothasy}. In Section \ref{sec:FH}, we use Theorem \ref{th:smoothasy} as well as our parametrices and asymptotic solution to prove the main estimates required for Theorem \ref{th:FHasy}. In Section \ref{sec:special}, we analyze $H_N(e^{-NV})$ for a very special $V$, for which things can be solved rather explicitly. In Section \ref{sec:pasy}, we finally prove our main estimates for Theorem \ref{th:pfasy} in full generality by using the results of Section \ref{sec:special} and our parametrices and asymptotic solution.

\medskip 

{\bf Acknowledgements: } We are grateful to the referees for useful remarks. C.C. was supported by the European Research Council, Grant Agreement No. 682537, the Swedish Research Council, Grant No. 2015-05430, the Ruth and Nils-Erik Stenb\"ack Foundation, and the Novo Nordisk Fonden Project Grant 0064428. B.F. was supported by the G\"oran Gustafsson Foundation (UU/KTH),  by the Leverhulme Trust research programme grant RPG-2018-260, and by project KAW 2015.0270. C.W. was supported by the Academy of Finland grant 308123 and the Ruth and Nils-Erik Stenb\"ack Foundation. M.D.W. was supported by ERC Advanced Grant 740900 (LogCorRM).

\section{Differential identities and overview of the proofs}\label{sec:RHP}

The goal of this section is to give an overview of our strategy to prove Theorems \ref{th:pfasy}, \ref{th:smoothasy} and \ref{th:FHasy}. We first review the connections between Hankel determinants, orthogonal polynomials and Riemann-Hilbert (RH) problems, and then we present three differential identities for $\log H_{N}(\nu)$.

\subsection{Orthogonal polynomials and a RH problem} Let $\nu$ be a non-negative function on $\mathbb R$, whose support has positive Lebesgue measure, and which is  H\"older continuous and integrable with respect to $x^jdx$ for any $j\in \mathbb N:=\{0,1,2,...\}$. We consider the family of orthonormal polynomials $\{p_j(z)=\kappa_j z^j+\dots\}_{j\geq 0}$ characterized by
\begin{equation} \label{eq:ortho}
\int_\R p_j(x)p_l(x)\nu(x)dx=\delta_{j,l}, \qquad j,l = 0,1,2,\ldots,
\end{equation}
where the degree of $p_{j}$ is $j$ and its leading coefficient $\kappa_{j}$ is positive. Since $\nu$ is non-negative and its support has  positive Lebesgue measure, $p_j$ exist by the Gram-Schmidt orthogonalization procedure. The connection to Hankel determinants comes from the well-known fact that (see e.g. \cite[Chapter II]{Szego})
\begin{equation}\label{eq:hprod}
H_N(\nu)=\prod_{j=0}^{N-1}\kappa_j(\nu)^{-2}.
\end{equation}
This formula expresses $H_N(\nu)$ in terms of the leading coefficients of $p_{0}(z),$ $p_{1}(z),$ $\ldots$, $p_{N-1}(z)$. In the next subsections, we will obtain differential identities which express certain $\log$-derivatives of $H_N(\nu)$ in terms of $p_{N-1}(z)$ and $p_{N}(z)$ only. 
Let us define for $z\notin \R$
\begin{equation}\label{eq:Ydef}
Y(z)=Y_N(z;\nu)=\begin{pmatrix}
\frac{1}{\kappa_N}p_N(z) & \frac{1}{2\pi i \kappa_N}\int_\R \frac{p_N(x)}{x-z}\nu(x)dx\\
-2\pi i \kappa_{N-1}p_{N-1}(z) & -\kappa_{N-1}\int_\R \frac{p_{N-1}(x)}{x-z}\nu(x)dx
\end{pmatrix}.
\end{equation}
It was noticed in \cite{FIK} (see also e.g. \cite[Chapter 3.2]{Deift}) that $Y$ is the unique solution to the following RH problem.

\subsubsection*{The RH problem for $Y$} 
\begin{itemize}[leftmargin=0.75cm]
\item[$(a)$] $Y: \C\setminus \R\to \C^{2\times 2}$ is analytic.
\item[$(b)$] $Y$ has continuous boundary values on $\R $, namely the limits $Y_\pm(x)=\lim_{\epsilon \to 0^+}Y(x\pm i\epsilon)$ exist and $x\mapsto Y_\pm(x)$ are continuous functions on $\R$. Moreover, $Y_{+}$ and $Y_{-}$ are related by the following jumps
\begin{equation}\label{eq:Yjump}
Y_+(x)=Y_-(x)\begin{pmatrix}
1 & \nu(x)\\
0 & 1
\end{pmatrix} \qquad \text{for} \qquad x \in \R.
\end{equation}
\item[$(c)$] As $z\to \infty$, $z\in \C\setminus \R$, we have
\begin{equation}\label{eq:Yinfty}
Y(z)=(I+\mathcal O(z^{-1}))\begin{pmatrix}
z^N & 0\\
0 & z^{-N}
\end{pmatrix}.
\end{equation}
\end{itemize}

By relying on the Deift-Zhou steepest descent analysis, a method first developed in \cite{DZ}, the RH problem for $Y$ may be analyzed asymptotically as $N\to \infty$ for a wide range of symbols $\nu$. Of particular importance to us in this regard, is the analysis of $Y_N(z;e^{-NV})$ for multi-cut regular $V$ undertaken by Deift et. al. \cite{DKMVZ}, the work by Claeys, Its and Krasovsky on RH problems with emerging singularities \cite{CIK}, and the analysis of $Y$ in the multi-cut setting with a singularity at the origin undertaken by Kuijlaars and Vanlessen \cite{KuijVanlessen}.

\subsection{Differential identities}\label{sec:DI}
Let $\hat{s}_{1},\hat{s}_{0},\hat{s} \in \mathbb{R}$ be parameters satisfying $\hat{s}_{1}>\hat{s}_{0}$ and $\hat{s} \in [\hat{s}_{0},\hat{s}_{1}]$. Assume that $\nu=\nu_{\hat{s}}$ depends on $\hat{s}$, and that for all $x$, $\nu_{\hat{s}}(x)$ is differentiable with respect to $\hat{s} \in (\hat{s}_{0},\hat{s}_{1})$. In this subsection, we derive a general differential identity which expresses $\partial_{\hat{s}} \log H_{N}(\nu_{\hat{s}})$ in terms of $Y_{N}(\cdot;\nu_{\hat{s}})$, which we will subsequently rely on in three distinct settings to prove Theorems \ref{th:pfasy}, \ref{th:smoothasy} and \ref{th:FHasy}.

Let us write $p_j(x)=p_j^{(\hat{s})}(x)=\kappa_j^{(\hat{s})}x^j+\dots$ with $\kappa_j^{(\hat{s})}>0$ for the orthonormal polynomials with respect to $\nu_{\hat{s}}$. It readily follows from \eqref{eq:ortho} that
\begin{equation*} 
-2 \frac{1}{\kappa_j^{(\hat{s})}}\frac{\partial \kappa_j^{(\hat{s})}}{\partial \hat{s}} = -2\int_{\mathbb R}p_j(x)\left(\frac{\partial}{\partial \hat{s}}p_j(x)\right)\nu_{\hat{s}}(x)dx=\int_{\mathbb R} p_{j}^{2}(x)\frac{\partial }{\partial \hat{s}} \nu_{\hat{s}}(x)dx.
\end{equation*}
Hence, taking the log in \eqref{eq:hprod} and then differentiating with respect to $\hat{s}$, we obtain
\begin{multline} \label{eq:DI_gen}
\frac{\partial}{\partial \hat{s}} \log H_N(\nu_{\hat{s}}) =-\int_\R\left[\frac{\partial}{\partial \hat{s}}\sum_{j=0}^{N-1}p_j(x)^2\right] \nu_{\hat{s}}(x)dx
 \\
=\int_\R\left[\sum_{j=0}^{N-1}p_j(x)^2\right]\frac{\partial}{\partial \hat{s}} \nu_{\hat{s}}(x) dx.
\end{multline}
By the well-known Christoffel-Darboux identity (see e.g. \cite[equation (3.49)]{Deift}), 
\begin{multline}\label{CD formula Y}
\sum_{j=0}^{N-1} p_j(x)^2 = \frac{\kappa_{N-1}^{(\hat{s})}}{\kappa_{N}^{(\hat{s})}} \left[p_{N}'(x) p_{N-1}(x) - p_{N-1}'(x) p_{N}(x)\right] \\
=\frac{1}{2\pi i}[Y(x)^{-1} Y'(x)]_{21},
\end{multline}
where $Y=Y_{N}(\cdot;\nu_{\hat{s}})$. Substituting \eqref{CD formula Y} in \eqref{eq:DI_gen}, we find
\begin{equation}\label{der of log Hn on the real line}
\frac{\partial}{\partial \hat{s}} \log H_N(\nu_{\hat{s}})
=\frac{1}{2\pi i}\int_\R[Y(x)^{-1} Y'(x)]_{21}\frac{\partial}{\partial \hat{s}} \nu_{\hat{s}}(x) dx, \qquad \hat{s} \in (\hat{s}_{0},\hat{s}_{1}). 
\end{equation}
We will find it convenient to deform the  contour of integration. Let us assume that there are $l$  open intervals $I_1,\dots,I_l$  such that $\log \nu_{\hat{s}}$ is defined on $I_1,\dots, I_l$ and that  $\frac{\partial}{\partial \hat{s}}\log \nu_{\hat{s}}$ extends to an analytic function (in the variable $x$) on an open neighbourhood of each interval $I_j$, $j=1,\ldots,l$. In what follows, we will take either $l=1$ or $l=k$ depending on the differential identity we are considering. Within the open neighbourhood of $I_{j}$, we let $\Gamma_j$ be a smooth, closed curve, oriented counter-clockwise, which encloses $I_j$ and intersects the real line only at the two endpoints of $I_j$. Denote $\Gamma=\cup_{j=1}^l\Gamma_j$ and $I=\cup_{j=1}^l I_j$.

By the jump conditions \eqref{eq:Yjump} of $Y$,
\begin{align}
\label{eq:DI_gen2}
[Y(x)^{-1} Y'(x)]_{21}\, \nu_{\hat{s}}(x)=[Y^{-1}Y']_{11,-}(x)-[Y^{-1}Y']_{11,+}(x).
\end{align}
After performing a contour deformation in \eqref{der of log Hn on the real line} using \eqref{eq:DI_gen2}, we obtain
\begin{equation}\label{eq:DI}
\frac{\partial}{\partial \hat{s}} \log H_N(\nu_{\hat{s}})=
 \oint_{\Gamma} [Y(z)^{-1}Y'(z)]_{11}\frac{\partial}{\partial \hat{s}}\log \nu_{\hat{s}}(z) \frac{dz}{2\pi i}+r(\nu_{\hat{s}}), 
\end{equation}
where $ \hat{s} \in (\hat{s}_{0},\hat{s}_{1}),$
where $Y=Y_{N}(\cdot;\nu_{\hat{s}})$, and where
\begin{equation} \label{defrnu}
r(\nu_{\hat{s}})=\int_{\R\setminus I} [Y(x)^{-1}Y'(x)]_{21}\frac{\partial}{\partial \hat{s}} \nu_{\hat{s}}(x)\frac{dx}{2\pi i}.\notag \end{equation}

The advantage of expressing $\frac{\partial}{\partial \hat{s}} \log H_N(\nu_{\hat{s}})$ in terms of $Y_{N}(\cdot;\nu_{\hat{s}})$ is that the RH problem for $Y_{N}$ can be asymptotically analyzed as $N \to + \infty$. To prove Theorems \ref{th:pfasy}, \ref{th:smoothasy} and \ref{th:FHasy}, we will use \eqref{eq:DI} in three different ways. We provide a detailed outline of this in the next subsections.

\subsection{Smooth ratio asymptotics: outline of the proof of Theorem \ref{th:smoothasy}}\label{sec:smasy}
Suppose $F$ and $f$ are functions satisfying the conditions in Theorem \ref{th:smoothasy}, and that $F$ is H\"older continuous on $\mathbb R$. Then there is a union of $k$ open intervals $I=\cup_{j=1}^kI_j$ such that $f(x)=\log F(x)$ is well-defined and analytic on a neighbourhood of $I$, and $[a_j,b_j]\subset I_j$ for $j=1,\dots, k$. For $t\in[0,1]$, we define

\begin{equation} \label{deformnut} \nu_t(x)=F_t(x)e^{-NV(x)}, \qquad F_t(x)=\begin{cases}e^{tf(x)}&{\rm for } \,\,x\in I,\\
g_t(x)&{\rm for }\,\,x\in \mathbb R\setminus  I,
\end{cases} \end{equation}
where $g_t$ is given on $\{x\in \R: \mathrm{dist}(x,I)<\delta\}$ by
\begin{equation*}
g_t(x)  = 
e^{tf(x)}+\frac{\mathrm{dist}(x,I)}{\delta}\left(1-t+tF(x)-e^{tf(x)}\right), 
\end{equation*} 
 and otherwise, on  $\{x\in \R: \mathrm{dist}(x,I)\geq\delta\}$, 
\begin{equation*}
g_t(x)=1-t+tF(x).
\end{equation*}
In the above, $\delta > 0$ is a small but fixed constant, for which $f=\log F$ is analytic in a $\delta$-neighborhood of $I$ (i.e. the set $\{z\in \C: \mathrm{dist}(z,I)<\delta\}$) . The function $g_t$ is added to ensure that $F_t$ is H\"older continuous.

Note that $\nu_{t}(x)$ is differentiable in $t \in (0,1)$ for each $x \in \mathbb{R}$, satisfies $\log \nu_t(x)=-NV(x)+tf(x)$ for $x\in I$, and
\begin{align*}
\nu_0(x)=e^{-NV(x)}, \qquad \nu_1(x)=F(x)e^{-NV(x)}, \qquad x \in \mathbb{R}.
\end{align*}
Hence, we can apply \eqref{eq:DI} with $l=k$, $\hat{s}=t$, $\hat{s}_{0}=0$, $\hat{s}_{1}=1$ and $\nu_{\hat{s}}=\nu_{t}$. Integrating this identity in $t$ from $t=0$ to $t=1$, we obtain
\begin{equation}\label{eq:DIsmasy}\begin{aligned}
\log \frac{H_N(Fe^{-NV})}{H_N(e^{-NV})}&=\int_0^1
 \oint_{\Gamma} [Y(z)^{-1}Y'(z)]_{11}f(z) \frac{dz}{2\pi i}dt+\int_0^1r(\nu_t) dt, \\
r(\nu_t)&=\int_{\R\setminus I} [Y(x)^{-1}Y'(x)]_{21}\frac{\partial}{\partial t} F_t(x)e^{-NV(x)}\frac{dx}{2\pi i},
\end{aligned}
\end{equation}
where $Y=Y_{N}(\cdot;\nu_{t})$.

In Section 8 we prove that $r(\nu_t)\to 0$ as $N\to \infty$ uniformly for $t\in (0,1)$, and we obtain large $N$ asymptotics for $Y(z)^{-1}Y'(z)$ uniformly for $z\in \Gamma$ and $t\in [0,1]$. Theorem \ref{th:smoothasy} will then be proved by substituting these asymptotics in \eqref{eq:DIsmasy} and performing the integrations in $z$ and $t$.

\subsection{Fisher--Hartwig asymptotics: outline of the proof of Theorem \ref{th:FHasy}}\label{sec:FHproof}
Let $\epsilon_{0}>0$ be fixed. For $\epsilon\in(0,\epsilon_{0}]$, let $\nu_\epsilon(x)=F(x)e^{-NV(x)}\omega_\epsilon(x)$, where $\omega_\epsilon(x)$ is defined by
\begin{equation}\begin{aligned}\label{eq:omega_eps}
\omega_{\epsilon}(z) &= \prod_{j=1}^p \omega_{\alpha_j, \epsilon}(z)\omega_{\beta_j, \epsilon}(z)\\
\omega_{\alpha_j, \epsilon}(z) &= (z - (t_j - i\epsilon))^{\frac{\alpha_j}{2}}(z - (t_j + i\epsilon))^{\frac{\alpha_j}{2}}, \\
\omega_{\beta_j, \epsilon}(z) &= (z - (t_j - i\epsilon))^{\beta_j}(z - (t_j + i\epsilon))^{-\beta_j}e^{-i \pi \beta_j},
\end{aligned}\end{equation}
with branches chosen such that $\omega_\epsilon(z)$ is real for $z\in \mathbb R$  and analytic for $z\in \mathbb C\setminus \Sigma_{\omega_\epsilon}$, where $\Sigma_{\omega_\epsilon}$ consists of contours connecting $t_j+i\epsilon$ to $+i\infty$  and $t_{j}-i\epsilon$ to $-i\infty$.\footnote{There is some freedom in how the contour is chosen. However, in a neighbourhood of $t_j\pm i \epsilon$, we have to be  careful about the definition of the contour -- this is done in Section \ref{sec:Painleve} below equation \eqref{eq:sjdef} as the preimage of the jump contour of a model RH problem under a certain conformal mapping. Outside of this neighbourhood, we can choose it to be say smooth, not having self intersections, and so that none of the contours intersect each other.} We also define $\omega_{0}(x)$ by $\omega_{0}(x)=\lim_{\epsilon\to 0}\omega_\epsilon(x)$. The definition \eqref{eq:omega_eps} of $\omega_\epsilon(x)$ ensures that $\omega_{0}(x)=\omega(x)$, where $\omega$ is defined in \eqref{eq:omega}. We also let
\begin{multline}\label{logomegaepsilon}
\log \omega_\epsilon(z)=\sum_{j=1}^p\bigg[ (\alpha_j/2+\beta_j)\log(z-(t_j-i\epsilon))\\  +(\alpha_j/2-\beta_j)\log(z-(t_j+i\epsilon))-\pi i \beta_j\bigg],
\end{multline}
be analytic in $\mathbb{C}\setminus \Sigma_{\omega_\epsilon}$. The branch of $\log(z-(t_j-i\epsilon))$ is such that the cut is a path from $t_j-i\epsilon$ to $-i\infty$ and the argument is chosen to satisfy $\arg (z-(t_j-i\epsilon)) = 0$ whenever $z-(t_j-i\epsilon)>0$. Similarly, the branch of $\log(z-(t_j+i\epsilon))$ is such that the cut is a path from $t_j+i\epsilon$ to $i\infty$ and the argument is chosen to satisfy $\arg (z-(t_j+i\epsilon)) = 0$ whenever $z-(t_j+i\epsilon)>0$.

Clearly, $\nu_{\epsilon}(x)$ is differentiable in $\epsilon \in (0,\epsilon_{0})$ for each $x \in \mathbb{R}$, and $\frac{\partial }{\partial \epsilon}\log \nu_\epsilon(x)$ extends to a meromorphic function in $x \in \mathbb{C}\setminus \{t_{j}+ i\epsilon,t_{j}- i\epsilon\}_{j=1}^{p}$. In particular, for each $\epsilon \in (0,\epsilon_{0})$, $x \mapsto \frac{\partial }{\partial \epsilon}\log \nu_\epsilon(x)$ is analytic in a neighbourhood of $\mathbb{R}$. Hence, we can apply \eqref{eq:DI} with $l=1$, $I_{1}=(-R,R)$, $\hat{s}=\epsilon$, $\hat{s}_{0}=0$, $\hat{s}_{1}=\epsilon_{0}$, $\nu_{\hat{s}}=\nu_{\epsilon}$ and $R$ is a new parameter that is chosen sufficiently large so that $[a_{1},b_{k}]\subset I_{1}$. In this situation, the contour $\Gamma=\Gamma_{1}$ appearing in \eqref{eq:DIsmasy} encloses $I_1$, and passes  between the singularities $\{t_{j}+ i\epsilon,t_{j}- i\epsilon\}_{j=1}^{p}$ and the real line, so that $z\mapsto \frac{\partial }{\partial \epsilon}\log \nu_\epsilon(z)$ is analytic on the interior of $\Gamma_1$. We also observe that $\left[Y(z)^{-1}Y'(z)\right]_{11}$ is analytic for $z\in \mathbb C\setminus \mathbb R$. Hence, by deforming $\Gamma_1$ to a circle $B_R$ of radius $R$ centered at the origin, we pick up some residue contributions of the poles at $t_j\pm i\epsilon$, and we obtain
\begin{align*}
& \partial_\epsilon \log H_N(\nu_{\epsilon})  = r(\nu_\epsilon)\\
& + i\sum_{j=1}^p \left[ \left(\frac{\alpha_j}{2} - \beta_j\right) \left(Y^{-1} Y'\right)_{11}(t_j + i\epsilon) - \left(\frac{\alpha_j}{2} + \beta_j\right) \left(Y^{-1} Y'\right)_{11}(t_j - i\epsilon)\right]\\
& +\oint_{B_R} [Y(z)^{-1}Y'(z)]_{11}\sum_{j=1}^p \bigg[ \Big( \frac{\alpha_{j}}{2}-\beta_{j} \Big) \frac{-i}{z-(t_{j}+i\epsilon)} \\
& \hspace{4.2cm} + \Big( \frac{\alpha_{j}}{2}+\beta_{j} \Big) \frac{i}{z-(t_{j}-i\epsilon)} \bigg] \frac{dz}{2\pi i},
\end{align*}
where $Y=Y_{N}(\cdot;\nu_{\epsilon})$ and
\begin{align*}
r(\nu_\epsilon)&=\int_{\R\setminus (-R,R)} [Y(x)^{-1}Y'(x)]_{21} \frac{\partial}{\partial \epsilon} \nu_{\epsilon}(x) \frac{dx}{2\pi i}.
\end{align*}

Taking $R\to \infty$, we observe that $r(\nu_\epsilon)\to 0$, and by condition (c) in the RH problem for $Y$, the integral over $B_R$ also converges to zero. Thus, for $\epsilon \in (0,\epsilon_{0})$, we obtain
\begin{multline}\label{eq:DIFH}
\partial_\epsilon \log H_N(\nu_\epsilon)  \\ =  i\sum_{j=1}^p \left[ \left(\frac{\alpha_j}{2} - \beta_j\right) \left(Y^{-1} Y'\right)_{11}(t_j + i\epsilon) - \left(\frac{\alpha_j}{2} + \beta_j\right) \left(Y^{-1} Y'\right)_{11}(t_j - i\epsilon)\right],
\end{multline}
where $Y=Y_{N}(\cdot;\nu_{\epsilon})$. Integrating this identity from $\epsilon=0$ to $\epsilon=\epsilon_{0}$ yields
\begin{align}
\label{eq:FHrat intro}
\frac{H_N(\nu_0)}{H_N(e^{-NV})}=\frac{H_N(\nu_{\epsilon_0})}{H_N(e^{-NV})}\exp\left(-\int_0^{\epsilon_0} \partial_\epsilon \log H_N(\nu_\epsilon)d\epsilon\right).
\end{align}
Since $x\mapsto \nu_{\epsilon_0}(x)$ is a H\"older continuous function on $\mathbb{R}$, the asymptotics of $H_N(\nu_{\epsilon_0})/H_N(e^{-NV})$ as $N \to + \infty$ are given by Theorem \ref{th:smoothasy}. Considering this limit is the topic of Section \ref{Sec:RatioFH}. We will obtain an extra simplification by assuming that $\epsilon_0$ is fixed but small in the right hand side of \eqref{th:smoothasy1} in exchange for an error term $\mathcal O(\epsilon_0)$, resulting in \eqref{eq:rat2} which provides asymptotics of the form
\begin{equation}\label{doublelim1}
\frac{H_N(\nu_{\epsilon_0})}{H_N(e^{-NV})}=\textrm{Main Asymptotics}\times \left(1+\mathcal O \left(N^{-1}\right)+\mathcal O \left( \epsilon_0 \right) \right).
\end{equation}
The implicit constants in the $\mathcal O(N^{-1})$ term depend on $\epsilon_0$ while the implicit constants in the $\mathcal O(\epsilon_0)$ term are independent of both $N$ and $\epsilon_0$.  See \eqref{eq:rat2} for details, for now we give an overview of the structure of the error terms. 

Now consider the second term on the right hand side of \eqref{eq:FHrat intro}. We need to determine the large $N$ asymptotics of $\partial_\epsilon \log H_N(\nu_\epsilon)$ uniformly in $\epsilon \in [0,\epsilon_{0}]$. Note that $\nu_{\epsilon}$ is a regular weight for each $\epsilon \in (0,\epsilon_{0}]$, while $\nu_{0}$ is singular. In fact, a critical transition takes place in the asymptotics of $H_{N}(\nu_{\epsilon})$ as $N \to + \infty$ and simultaneously $\epsilon \to 0$. A similar phase transition was studied in \cite{CIK}. In Section \ref{sec:FH}, using \eqref{eq:DIFH} and a local analysis from \cite{CIK}, we obtain uniform asymptotics of $\partial_\epsilon \log H_N(\nu_\epsilon)$ up to a term of order $\mathcal O(1)$. Thus, integrating with respect to $\epsilon$ and taking the exponential, we obtain in \eqref{eq:intermediate} asymptotics of the form
\begin{equation}\label{doublelim2} \exp\left(-\int_0^{\epsilon_0} \partial_\epsilon \log H_N(\nu_\epsilon)d\epsilon\right)=\textrm{Main Asymptotics}\times \left(1+\mathcal O(\epsilon_0)\right), \end{equation}
where the implicit constants are independent of $N$. Substituting \eqref{doublelim1} and \eqref{doublelim2} into \eqref{eq:FHrat intro} we obtain asymptotics of the form
\begin{equation*} \frac{H_N(\nu_0)}{H_N(e^{-NV})}=\textrm{Main Asymptotics} \times \left(1+\mathcal O \left(N^{-1}\right)+\mathcal O \left( \epsilon_0 \right)\right),
\end{equation*}
with error terms as in \eqref{doublelim1}.
For Theorem \ref{th:FHasy} we would like to replace $\mathcal O \left(N^{-1}\right)+\mathcal O \left( \epsilon_0 \right)$ with $o(1)$. This is possible because for any $\delta>0$ there is an $\epsilon_0$ such that $\mathcal O \left( \epsilon_0 \right)<\delta/2$ and an  $N^{*}$ (depending on $\epsilon_0$) such that for $N>N^{*}$ we have $\mathcal O \left(N^{-1}\right)<\delta/2$. This concludes the overview of the limits taken to obtain Theorem \ref{th:FHasy}, for the explicit formulas see Section \ref{sec:FH}.

\medskip The method that was used in \cite{Krasovsky, IK, Charlier, CharlierGharakhloo} to obtain asymptotics of Hankel determinants with a one-cut regular potential and Fisher-Hartwig singularities proceeds via differential identities with respect to the parameters $\{\alpha_{j},\beta_{j}\}_{j=1}^{p}$. While it is, in principle, also possible to apply the same strategy to our situation, in practice this would represent a considerable challenge. Indeed, in the multi-cut regime, the large $N$ analysis of $Y_{N}(\cdot;\nu)$ involves Riemann $\theta$-functions that depend on the Fisher-Hartwig singularities in a complicated way. For this reason, the method we use here for the multi-cut regime differs substantially from these earlier works; instead of deforming in the Fisher-Hartwig parameters $\{\alpha_{j},\beta_{j}\}_{j=1}^{p}$, we use \eqref{eq:DIFH}, where $\nu_{\epsilon}$ is a deformation of $\nu$ in the locations $\{t_{j}\}_{j=1}^{p}$ of the singularities. 

\subsection{Partition function asymptotics: outline of the proof of Theorem \ref{th:pfasy}}\label{sec:pfasy}
Let $V$ be an arbitrary $k$-cut regular potential. To obtain the large $N$ asymptotics of $H_{N}(e^{-NV})$, we will consider a family of $k$-cut regular potentials $\{V_{s}\}_{s \in [0,2]}$ that (piecewise) smoothly interpolates between $V_{2}:=V$ and a reference potential $V_{0}$, and we will apply \eqref{eq:DI} with $l=k$, $\hat{s}=s$, $\hat{s}_{0}=0$, $\hat{s}_{1}=2$ and $\nu_{\hat{s}}=e^{-NV_{s}}$. After integrating this identity in $s$ from $s=0$ to $s=2$, we obtain
\begin{align}\label{sec2lol8}
& H_{N}(e^{-NV}) = H_{N}(e^{-NV_{0}}) \exp \bigg( \int_{0}^{2} \partial_{s}\log H_{N}(e^{-NV_{s}})ds \bigg), \\
& \frac{\partial}{\partial s} \log H_N(e^{-NV_{s}})=
 -N\oint_{\Gamma} [Y(z)^{-1}Y'(z)]_{11}\frac{\partial}{\partial s}V_{s}(z) \frac{dz}{2\pi i}+r(e^{-NV_{s}}),  \nonumber
\end{align}
for $ s \in (0,2)$,
where $Y=Y_{N}(\cdot;e^{-NV_{s}})$. As can be seen from the right-hand side of \eqref{sec2lol8}, it is important to choose $V_{0}$ appropriately so that the large $N$ asymptotics of $H_{N}(e^{-NV_{0}})$ can be computed as accurately as possible. In the one-cut case, one can simply choose $V_{0}(x)=2x^{2}$, because in this case $H_{N}(e^{-NV_{0}})$ reduces to a Selberg integral and can thus be evaluated exactly. This fact was implicitly used in \cite{BWW}. In the multi-cut regime, the task of finding $V_{0}$ is much more challenging. For the two-cut case, Claeys, Grava and McLaughlin in \cite[Proposition 2.1]{CGML} used $V_{0}(x)=x^{4}-4x^{2}$ for their reference potential, however their approach for the reference potential does not generalize to the $k$-cut setting when $k\geq 3$. We now describe our choice of $V_{0}$.

Let $T_k(x)=2^{k-1}x^k+\dots$ be the $k$th Chebyshev polynomial of the first kind, namely the unique polynomial of degree $k$ satisfying
\begin{equation}\label{eq:Tkcos}
T_k(\cos \theta)=\cos k\theta, \qquad \theta\in [0,\pi].
\end{equation} 
Below, we prove that the potential $V_0(x)=V_0(x;k,\sigma)=\frac{2\sigma}{k}T_k(x)^2$ with $\sigma>1$ is $k$-cut regular. In Section \ref{sec:special}, as a first step in proving Theorem \ref{th:pfasy}, we   compute  the large $N$  asymptotics of $H_{N}\left(e^{-NV_{0}}\right)$ for $\sigma>1$. However, we are not immediately able to fully simplify these asymptotics, but they do simplify in the limit $\sigma\to 1$. More precisely, we will prove that
\begin{multline}\label{LimCheby}
\lim_{\sigma\downarrow 1} \limsup_{N\to \infty}
\Bigg|\log H_{N}\left(e^{-NV_{0}}\right)+N^2\left(\frac{3}{4k}+\frac{1}{2k}\log \sigma +\log 2\right)\\-N\log (2\pi)+\frac{k}{12} \log N-\frac{k-3}{12}\log k-k\zeta'(-1) 
+\frac{k-1}{8}\log (\sigma-1)
\Bigg|= 0.
\end{multline}
An important aspect when proving \eqref{LimCheby} is that we always view $\sigma>1$ as fixed when taking the limit $N\to \infty$, and subsequently take the limit $\sigma\to 1$.

The second step in the proof of Theorem \ref{th:pfasy} is to obtain the large $N$ asymptotics of the integral appearing on the right-hand side of \eqref{sec2lol8}. To this end, we now introduce the family $\{V_{s}\}_{s \in [0,2]}$ of $k$-cut regular potentials that we consider. The interpolation between $V_{0}$ and $V$ is done in two steps: for $s \in [0,1]$, $x\mapsto V_{s}(x)$ is a polynomial of degree $2k$ and the support ${\rm supp}\, \mu_{V_s}$ varies with $s$ in such a way that ${\rm supp}\, \mu_{V_1}={\rm supp}\, \mu_V$. For $s \in [1,2]$, the potential $x\mapsto V_{s}(x)$ is no longer necessarily a polynomial, ${\rm supp}\, \mu_{V_{s}}={\rm supp}\, \mu_{V}$ remains unchanged, and the relation $V_{2} \equiv V$ holds. We now describe $\{V_{s}\}_{s \in [0,2]}$ in more detail. 

By \eqref{eq:Tkcos}, if $\sigma>1$, $\sigma T_k(x)^2-1$ has $2k$ zeros on the interval $(-1,1)$, and we label them by $ a_1(0)< b_1(0)< a_2(0)<\dots<  a_k(0)< b_k(0)$, see also Figure \ref{fig: abhat and xihat}. As discussed before Corollary \ref{RemarkPoly1} (i), $V_0(x)=\frac{2\sigma}{k}T_k(x)^2$ is $k$-cut regular,
\begin{align}\label{support of tchebyshev}
J_{0}:={\rm supp}\, \mu_{V_{0}} = \cup_{j=1}^{k}[a_{j}(0),b_{j}(0)] = \{x:\sigma T_k(x)^2\leq 1\},
\end{align}
and by \eqref{eqmeasPoly},
\begin{equation*}
d\mu_{V_0}(x)=\frac{2\sigma}{\pi k}|T_k'(x)|\sqrt{1/\sigma-T_k(x)^2}dx, \qquad x \in  J_0.
\end{equation*}

\begin{figure}[h!]
\begin{center}
\includegraphics[scale=0.4]{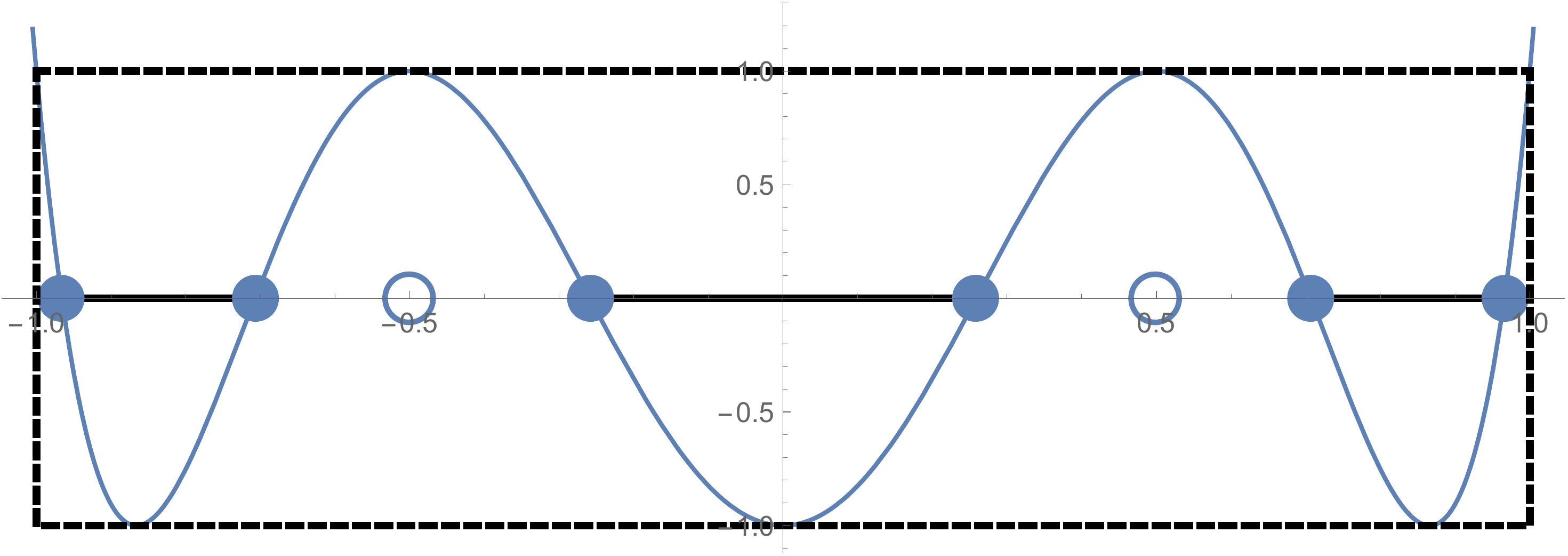}
\end{center}
\caption{\label{fig: abhat and xihat}The blue curve is $2T_{3}(x)^{2}-1$, the solid dots are $a_{j}(0),b_{j}(0)$, $j=1,2,3$, and the hollow dots are the zeros of $T_{3}'$. The dashed box has corners at $(\pm 1, \pm 1)$. The thick black lines represent $J_{0}$ with $\sigma=2$.}
\end{figure}

For $s\in [0,1]$, define
\begin{equation}\label{eq:Js}
J_s=\bigcup_{j=1}^k [a_j(s),b_j(s)]\quad \text{with} \quad \begin{cases}
a_j(s)=(1-s)a_j(0)+s a_j,\\
b_j(s)=(1-s)b_j(0)+sb_j,
\end{cases}
\end{equation}
where $a_j=a_{j}(1),b_j=b_{j}(1)$, $j=1,\ldots,k$ are the endpoints of ${\rm supp}\, \mu_{V}$. Note that the intervals $[a_j(s),b_j(s)]$, $j=1,\ldots,k$ remain disjoint for each $s\in [0,1]$. Denote 
\begin{equation} \label{def:Rs} \mathcal R_s(w)=\prod_{j=1}^k(w-a_j(s))(w-(b_j(s)), \end{equation} and let $\mathcal R_s^{1/2}(z)$ have branch cuts on $J_s$ and be positive for $z>b_k(s)$. In Section \ref{sec:deform} we prove the following proposition, which given a disjoint union of intervals $J=\cup_{j=1}^k[a_j,b_j]$ provides an explicit potential $\mathcal{V}_J$ whose equilibrium measure $\mu_{\mathcal{V}_J}$ is supported on $J$. We will subsequently apply this proposition to $J_s$ defined by \eqref{eq:Js}.

\begin{proposition}
\label{pr:defo1} Given a disjoint union of $k$ intervals $J=\bigcup_{j=1}^k[a_j,b_j]$ and a real constant $\widehat V \in \mathbb R$, the following claims hold.
\begin{itemize}
\item[(a)] There is a unique real monic polynomial $q$ of degree $k-1$ satisfying
\begin{equation} \label{zerointxil} 
\int_{b_j}^{a_{j+1}}q(x)\mathcal R^{1/2}(x)dx=0, \qquad j=1,\dots,k-1.
\end{equation}
The polynomial $q$ has a root in each interval $(b_j,a_{j+1})$, $j=1,\dots,k-1$.
\item[(b)] Let $q$ be the polynomial from (a) and let $\mathcal{V}_J$ be the degree $2k$ polynomial\footnote{Note that since we can expand $\mathcal R^{1/2}q$ around $\infty$ as a Laurent series with only finitely many terms with positive exponents, a routine residue at infinity calculation shows that the integral \eqref{eq:Vs} is a polynomial.} determined by $\mathcal{V}_J(1)=\widehat V$ and
\begin{equation}\label{eq:Vs}
\mathcal{V}_J'(z):=-\frac{1}{ic_J}\oint_{\Gamma}\frac{\mathcal R^{1/2}(x)q(x)}{z-x}dx,
\end{equation}
where $\Gamma$ is a counter-clockwise oriented contour surrounding $J$ and $z$, and
\begin{equation}\label{def:cs}
c_J=\frac{i}{2}\oint_{\Gamma}q(x)\mathcal R^{1/2}(x)dx>0.
\end{equation}
Then the equilibrium measure of $\mathcal{V}_J$ is $k$-cut regular and is given by 
\begin{equation} \label{SpecEqM}
d\mu_{\mathcal{V}_J}(x)=-\frac{i}{c_J}q(x)\mathcal R_{+}^{1/2}(x)dx, \qquad x \in J.
\end{equation}
\item[(c)] Suppose that $V$ is a polynomial of degree $2k$ with a positive leading coefficient and that ${\rm supp}{\mu_{V}}=J$ and $V(1)=\widehat V$. Then $V=\mathcal{V}_{J}$.
\item[(d)] The coefficients of the polynomials $\mathcal{V}_J$ and $q$ are smooth when considered as functions of $\{a_j,b_j\}_{j=1}^k$.
  \end{itemize}
\end{proposition}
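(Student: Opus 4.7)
My plan is to prove the four parts in sequence; parts (a) and (b) carry the main content, while (c) and (d) follow quickly from them.

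For part (a), I read \eqref{zerointxil} as $k-1$ linear equations on the $k-1$ non-leading coefficients of the monic polynomial $q$, and reduce existence and uniqueness to a single sign argument. On each gap $(b_j,a_{j+1})$, $\mathcal{R}^{1/2}$ is real with constant sign, so any polynomial $p$ that does not change sign on that gap has $\int_{b_j}^{a_{j+1}} p(x)\mathcal{R}^{1/2}(x)\,dx\neq 0$. Applied to a homogeneous solution $p$ of degree at most $k-2$, this forces $p$ to have a zero in each of the $k-1$ gaps, so $p\equiv 0$; existence and uniqueness of $q$ then follow by dimension counting. Applying the same observation to the actual monic $q$ of degree $k-1$ shows that $q$ has at least one root in each gap, and since $\deg q=k-1$ these are precisely all of its roots.

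For part (b), I first verify that \eqref{SpecEqM} defines a positive probability measure. Collapsing the contour $\Gamma$ in the definition of $c_J$ onto $J$ and using $\mathcal{R}^{1/2}_+=-\mathcal{R}^{1/2}_-$ yields $c_J=-i\int_J q(x)\mathcal{R}^{1/2}_+(x)\,dx$, which gives $\int_J d\mu=1$. A sign check based on part (a) — on $[a_j,b_j]$ the polynomial $q$ has sign $(-1)^{k-j}$, while tracking the branch of $\mathcal{R}^{1/2}$ across the gaps shows $\mathcal{R}^{1/2}_+$ has sign $(-1)^{k-j}\cdot i$ — gives $-iq\mathcal{R}^{1/2}_+>0$ on $J$, whence $c_J>0$ and the density is positive. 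To identify $\mu$ as the equilibrium measure of $\mathcal{V}_J$, I deform $\Gamma$ in \eqref{eq:Vs} onto $J$, picking up the residue at $x=z$ and the jump across $J$, to obtain
\[
\mathcal{V}_J'(z)=\frac{2\pi}{c_J}\mathcal{R}^{1/2}(z)q(z)+2\int_J\frac{d\mu(y)}{z-y}.
\]
Sending $z\to x\in J$ from above and invoking Plemelj shows $\mathcal{V}_J'(x)=2\,\mathrm{PV}\!\int_J(x-y)^{-1}\,d\mu(y)$, so $h(x):=2\int\log|x-y|\,d\mu(y)-\mathcal{V}_J(x)$ is locally constant on $J$. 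The constants match across the gaps because $h'(x)=-\frac{2\pi}{c_J}\mathcal{R}^{1/2}(x)q(x)$ integrates to zero on $(b_j,a_{j+1})$ by \eqref{zerointxil}, giving \eqref{eq:EL1}. For the strict inequality \eqref{eq:EL2}, I note that $h'$ changes sign exactly once on each gap (at the unique root $\xi_j$ of $q$), forcing $h$ to decrease strictly from $-\ell$ and return; on $(-\infty,a_1)$ and $(b_k,\infty)$ a similar sign computation combined with $h(x)\to-\infty$ at $\pm\infty$ gives strict inequality. Since $q$ is nonzero on $J$, the density is nonvanishing and $\mathcal{V}_J$ is $k$-cut regular.

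For part (c), since $V$ is a polynomial of degree $2k$ with positive leading coefficient, formula \eqref{eq:h} combined with a residue-at-infinity calculation shows $h_V$ is itself a polynomial of degree at most $k-1$. The gap conditions \eqref{zerointpsiV} then read $\int_{b_j}^{a_{j+1}} h_V(x)\mathcal{R}^{1/2}(x)\,dx=0$, so by (a), $h_V=\lambda q$; the total mass condition fixes $\lambda=\pi/c_J$, hence $\psi_V=\psi_{\mathcal{V}_J}$. By \eqref{ddzV} this gives $V'=\mathcal{V}_J'$ on a neighborhood of $J$, and hence as polynomials, so that $V=\mathcal{V}_J$ after invoking $V(1)=\widehat V=\mathcal{V}_J(1)$. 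For part (d), writing $q(x)=x^{k-1}+\sum_{\ell=0}^{k-2}c_\ell x^\ell$, the system \eqref{zerointxil} has coefficient matrix $\bigl(\int_{b_j}^{a_{j+1}}x^\ell\mathcal{R}^{1/2}(x)\,dx\bigr)_{j,\ell}$ with smooth entries in $\{a_j,b_j\}$, invertible by (a); Cramer's rule gives the smoothness of the $c_\ell$, and \eqref{eq:Vs} together with the smooth normalization $\mathcal{V}_J(1)=\widehat V$ gives smoothness of $\mathcal{V}_J$. The main obstacle I expect is the sign bookkeeping underlying part (b) — positivity of the density, matching of the Euler-Lagrange constants, and the strict inequality \eqref{eq:EL2} — all of which ultimately rely on the precise root distribution of $q$ established in part (a).
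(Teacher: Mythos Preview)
Your proof is correct and follows essentially the same route as the paper: verify the Euler--Lagrange conditions for the candidate measure via the contour-deformation identity $\mathcal{V}_J'(z)=\frac{2\pi}{c_J}\mathcal{R}^{1/2}(z)q(z)+2\int_J\frac{d\mu(y)}{z-y}$, use the gap conditions \eqref{zerointxil} to match the constants, and deduce strict inequality from the root distribution of $q$; parts (c) and (d) are handled the same way.

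The one genuine difference is in part (a). The paper shows the linear system for the coefficients of $q$ is nonsingular by writing the coefficient determinant as a Vandermonde-type multiple integral (in the spirit of \eqref{eq:detA}), which is visibly nonzero. You instead argue that any homogeneous solution $p$ of degree $\le k-2$ must change sign on each of the $k-1$ gaps (since $\mathcal{R}^{1/2}$ has constant sign there), forcing $k-1$ roots and hence $p\equiv 0$. Your argument is slightly more elementary and has the side benefit of immediately giving the root distribution of $q$ and, in (b), the explicit sign of the density and of $c_J$ --- points the paper passes over rather quickly. Conversely, the paper's determinantal formulation is what directly feeds into the smoothness argument for (d), whereas you need to invoke invertibility from (a) again there; both work equally well.
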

We defer the proof of Proposition \ref{pr:defo1} to Section \ref{sec:deform}.

For each $s \in [0,1]$, let $V_{s}:=\mathcal{V}_{J_{s}}$, where $\mathcal{V}_{J_{s}}$ is the degree $2k$ polynomial given by Proposition \ref{pr:defo1}. By Proposition \ref{pr:defo1} (d), $s \to V_{s}$ is a smooth deformation from the Chebyshev potential $V_0$ to a second potential $V_1$ which has the same support as $V$. We now construct a smooth deformation from $V_1$ to $V=V_2$. 
 For $s\in [1,2]$, let 
 \begin{equation}\label{eq:Vs2}
V_s(x)=(2-s)V_1(x)+(s-1)V(x),
\end{equation}
and we define 
\begin{equation*}
\mu_{V_s}(dx)=(2-s)\mu_{V_1}(dx)+(s-1)\mu_V(dx), \qquad x \in J_1=J.
\end{equation*}
We emphasize that the support of $\mu_{V_s}$ is independent of $s\in[1,2]$. Using the Euler-Lagrange equations \eqref{eq:EL1}--\eqref{eq:EL2}, it is easily verified that $\mu_{V_s}$ is the equilibrium measure of $V_s$. In particular, $V_s$ is a $k$-cut regular potential for every $s\in[1,2]$.

In sections \ref{sec:trans}-\ref{sec:snorm} we obtain asymptotics for $Y_N(z;e^{-NV_s})$ as $N\to \infty$, which we subsequently rely on  in Section  \ref{sec:pasy} to analyze the large $N$ asymptotics of $\frac{d}{ds} \log H_N(e^{-NV_s})$. Integrating the asymptotics of $\frac{d}{ds} \log H_N(e^{-NV_s})$ as in \eqref{sec2lol8}, we  prove in Section \ref{sec:pasy} that
\begin{multline} \label{MainResSec11}
\lim_{\sigma \downarrow 1}\limsup_{N\to \infty}\Bigg |\log \frac{H_N(e^{-NV})}{H_N(e^{-NV_0})}+N^2 \iint \log|x-y|^{-1}d\mu_{V}(x)d\mu_{V}(y)\\ +N^2\int V(x)d\mu_{V}(x)-N^2\left(\frac{3}{4k}+\frac{1}{2k}\log \sigma +\log 2\right)  - \log \frac{\theta(N\Omega)}{\theta(0)}  \\
 -\frac{1}{8}\Bigg(\sum_{1\leq l<j \leq k} [\log(a_j-a_l)+\log(b_j-b_l)]-\sum_{l,j=1}^k\log|b_j-a_l|\Bigg)\\
+\frac{1}{24}\sum_{x\in\{a_j,b_j\}_{j=1}^k} \log \tilde \psi(x) -\frac{k}{4} \log 2+\frac{k-3}{12}\log k-\frac{k-1}{8}\log (\sigma-1) \Bigg|=0,
\end{multline}
where $\tilde \psi$ was defined in Theorem \ref{th:pfasy}. Theorem \ref{th:pfasy} is obtained by taking the sum of \eqref{LimCheby} and \eqref{MainResSec11} (without the absolute value so that the $\log H_N\left(e^{-NV_0}\right)$ terms cancel) and rearranging the terms. To verify that this results in the desired $o(1)$ error term in Theorem \ref{th:pfasy}, we observe that given $\delta>0$, we may pick $\sigma>1$ and $N^*$ such that for $N>N^*$ the sum of \eqref{LimCheby} and \eqref{MainResSec11} is less than $\delta$. This concludes the proof of Theorem \ref{th:pfasy}.

The proof of \eqref{MainResSec11} relies a number of identities for $\theta$-functions, and we found the work of Claeys, Grava and McLaughlin \cite{CGML} in the two-cut case as a most useful starting point when determining what these identities should be, also in the $k$-cut case.

\subsection{Proof of Proposition \ref{pr:defo1}} \label{sec:deform}

We split the proof in four parts.

\underline{Proof of (a): existence and uniqueness of $q$.} Let us write 
\begin{align*}
q(x)=x^{k-1}+\sum_{j=0}^{k-2}q_j x^j, \qquad \mbox{with } q_0,\ldots,q_{k-2}\in \mathbb C.
\end{align*}
With this notation, \eqref{zerointxil} can be rewritten as
\begin{equation} \label{zerointxil2}
\int_{b_j}^{a_{j+1}}x^{k-1}\mathcal R^{1/2}(x)dx+\sum_{l=0}^{k-2}q_l \int_{b_j}^{a_{j+1}}x^l\mathcal R^{1/2}(x)dx=0, \qquad j=1,\dots,k-1.
\end{equation}
This is a system of $k-1$ equations for the $k-1$ variables $q_0,\dots,q_{k-2}$. Let $\mathcal B$ be the $(k-1)\times (k-1)$ matrix defined by $\mathcal B_{l,j}=\int_{b_j}^{a_{j+1}}x^{l-1}\mathcal R^{1/2}(x)dx$, and let $\mathcal B_k$ be the row vector given by $(\mathcal B_k)_{j} = \int_{b_{j}}^{a_{j+1}}x^{k-1}\mathcal{R}^{1/2}(x)dx$. Then \eqref{zerointxil2} is equivalent to 
\begin{equation}\label{zerointxil3}
\mathcal B_k+ (q_0 \;\; q_1 \;\; \dots \;\; q_{k-2}) \; \mathcal B=(0 \;\; 0 \;\; \dots \;\; 0).
\end{equation}
In a similar way as in \eqref{eq:detA}, we can realize $\det\mathcal B$ as the integral of a Vandermonde determinant, and we obtain that $\det \mathcal B\neq 0$, which implies that \eqref{zerointxil3} has a unique solution. 

Since the entries of $\mathcal B$ are real, it follows that $q_{0},\ldots,q_{k-2}$ are real too, and thus $x\mapsto q(x)$ is real-valued for $x \in \mathbb{R}$. Thus, for \eqref{zerointxil} to hold, $q$ must have a root on each of the intervals $(b_j,a_{j+1})$ for $j=1,\dots,k-1$.

\underline{Proof of (b): $\mathcal{V}_J$ is a $k$-cut regular potential.} Recall that $\mathcal{V}_{J}$ is defined by \eqref{eq:Vs} and $\mathcal{V}_{J}(1)=\widehat{V}$. Let us first show that $\mu_{\mathcal{V}_J}$ defined in \eqref{def:cs}--\eqref{SpecEqM} is the equilibrium measure of $\mathcal{V}_J$. For this, we must show that the Euler-Lagrange equations \eqref{eq:EL1}--\eqref{eq:EL2} hold.

Let $q$ be the unique monic polynomial of degree $k-1$ satisfying \eqref{zerointxil}. For $z\notin J$, define
\begin{align}\label{lol2}
G(z)=\int_{J}\frac{d\mu_{\mathcal{V}_J}(y)}{z-y}=\frac{1}{i c_J}\int_{J}\frac{\mathcal R_{+}^{1/2}(y)q(y)}{z-y}dy.
\end{align}
Let $\Gamma$ be a counter-clockwise oriented loop enclosing both $J$ and $z$. By deforming the contour $J$ to $\Gamma$ in \eqref{lol2} and using Cauchy's integral formula, we see that
\begin{equation}
G(z)=-\frac{\mathcal R^{1/2}(z)\pi q(z)}{c_J}-\frac{1}{2  ic_J}\oint_{\Gamma}\frac{\mathcal R^{1/2}(w)q(w)}{z-w}dw. \notag
\end{equation}
By \eqref{eq:Vs}, the above equation can be rewritten as
\begin{equation}
G(z)=-\frac{\mathcal R^{1/2}(z)\pi q(z)}{c_J}+\frac{1}{2}\mathcal{V}_J'(z).\label{eq:Gks}
\end{equation}
Also, from \eqref{lol2}, we note that
\begin{equation}\label{eq:Gks2} 
2 \, \frac{d}{dx} \int \log |x-y|d\mu_{\mathcal{V}_J}(y)=G_{+}(x)+G_{-}(x), \qquad x \in \mathbb{R}.
\end{equation}
Therefore, using \eqref{eq:Gks}, we find
\begin{equation} \label{ForEL1a}
 2\, \frac{d}{dx} \int \log |x-y|d\mu_{\mathcal{V}_J}(y)-\mathcal{V}_J'(x)=0, \qquad x\in J.
\end{equation}
Additionally, from \eqref{zerointxil} and \eqref{eq:Gks}-\eqref{eq:Gks2}, we obtain
\begin{multline}\label{ForEL1b}
\left(2\int\log |b_j-y|d\mu_{\mathcal{V}_J}(y)-\mathcal{V}_J(b_j)\right) \\
-\left(2\int\log |a_{j+1}-y|d\mu_{\mathcal{V}_J}(y)-\mathcal{V}_J(a_{j+1})\right)\\ =-\int_{b_j}^{a_{j+1}} \frac{d}{dx}\left(2\int\log |x-y|d\mu_{\mathcal{V}_J}(y)-\mathcal{V}_J(x)\right) dx=0.
\end{multline}
By \eqref{ForEL1a} and \eqref{ForEL1b} we obtain equality in the first Euler-Lagrange equation \eqref{eq:EL1} upon setting 
\begin{equation*}
\ell=-2\int\log |b_1-y|d\mu_{\mathcal{V}_J}(y)+\mathcal{V}_J(b_1). 
\end{equation*}

Let $\xi_{1},\ldots,\xi_{k-1}$ be the roots of $q$. By \eqref{eq:Gks}, 
\begin{equation*} 
2 \, \frac{d}{dx} \int \log |x-y|d\mu_{\mathcal{V}_J}(y)-\mathcal{V}_J'(x)<0,
\end{equation*}
for $b_k<x$ and for $b_j<x<\xi_j$ with $j=1,2,\dots, k-1$, and similarly
\begin{equation*} 
2\, \frac{d}{dx} \int \log |x-y|d\mu_{\mathcal{V}_J}(y)-\mathcal{V}_J'(x)>0, 
\end{equation*}
for $x<a_1$ and $\xi_j<x<a_{j+1}$ for $j=1,\dots,k-1$, which proves that \eqref{eq:EL2} holds with a strict inequality. This shows that $\mu_{\mathcal{V}_J}$ is the equilibrium measure of $\mathcal{V}_{J}$. Furthermore, since \eqref{eq:EL2} is strict and $q$ is non-zero on a neighbourhood of $J$, it follows that $\mu_{\mathcal{V}_J}$ is $k$-cut regular.

\underline{Proof of (c): uniqueness of $\mathcal{V}_{J}$.} Assume that $V$ is a polynomial of degree $2k$ with positive leading coefficient satisfying ${\rm supp}{\smash{\mu_{V}}}=J$ and $V(1)=\smash{\widehat{V}}$. We will prove that $V=\mathcal{V}_{J}$. By deforming the contour of \eqref{eq:h} to $\infty$, we find that $h_V$ is a polynomial of degree $k-1$. Since $\psi_V$ satisfies \eqref{eq:psi} and \eqref{zerointpsiV}, it follows from part (a) that $h_V(x)$ is given by $c\, q(x)$ for a certain constant $c$, and thus $\mu_V$ is given by \eqref{SpecEqM}. By \eqref{eq:psi} and \eqref{ddzV}, $V$ satisfies \eqref{eq:Vs}, and thus $V=\mathcal{V}_J$.

\underline{Proof of (d): smoothness of $q$ and $\mathcal{V}_{J}$.} Using the multiple integral representation of $\det \mathcal{B}$ (which is similar to \eqref{eq:detA}), we see that $\det \mathcal B$ remains bounded and remains bounded away from 0. Hence, it directly follows from \eqref{zerointxil3} and \eqref{eq:Vs} that the coefficients of the polynomials of $V_J$ and $q$ are smooth as functions of $\{a_j,b_j\}_{j=1}^k$. 

This concludes the proof of Proposition \ref{pr:defo1}.

\subsection{Proof of Corollary \ref{RemarkPoly1}}\label{ProofPik}
In this section we prove Corollary \ref{RemarkPoly1}.

For the reader's convenience we first provide a proof that the equilibrium measure of $V(x)=\frac{2\nu}{k}\Pi_k(x)^2$ is given by \eqref{eqmeasPoly}.

  Let $a_1<b_1<\dots<a_k<b_k$ be the ordered zeros of $\Pi_k(x)^2-1/\nu$, and let $J=\cup_{j=1}^k[a_j,b_j]$. Using Proposition \ref{pr:defo1}, we will prove that  $V(x)=\frac{2\nu}{k}\Pi_k(x)^2$ is equal to $\mathcal{V}_{J}$, where $\mathcal{V}_{J}$ is the unique polynomial of degree $2k$ satisfying $\mathrm{supp}( \mu_{\mathcal V_J})=J$ and $\mathcal V_J(1)=\frac{2\nu}{k}\Pi_k(1)^2$. In this setting, we have 
\begin{equation*}
\mathcal R^{1/2}(x)=\sqrt{\Pi_k(x)^2-1/\nu}.
\end{equation*}
Using the change of variables $y=\Pi_k(x)$, we verify that
\begin{align*}
\int_{b_{j}}^{a_{j+1}}\Pi_k'(x)\mathcal{R}^{1/2}(x)dx = 0, \qquad j=1,\ldots,k-1,
\end{align*}
and therefore the function $q$ appearing in \eqref{zerointxil} is given by $q(x)=\frac{1}{k}\Pi_k'(x)$. To evaluate the integral on the right-hand side of \eqref{def:cs}, we compute the residue of the integrand at $\infty$ using
\begin{equation}\label{lollies}
\mathcal R^{1/2}(x)=\Pi_k(x)-\frac{1}{2\nu x^k}+\mathcal O(x^{-(k+1)}), \qquad \mbox{as } x \to \infty,
\end{equation}
and obtain $c_J=\frac{\pi}{2\nu}$. Using also \eqref{lollies} in \eqref{eq:Vs}, we obtain
\begin{equation*} 
\mathcal{V}_J'(z)=-\frac{1}{ikc_J}\oint_\Gamma \frac{\mathcal{R}^{1/2}(x)\Pi_k'(x)}{z-x}dx=-\frac{1}{ikc_J}\oint_\Gamma \frac{\Pi_k(x)\Pi_k'(x)}{z-x}dx.
\end{equation*}
This last integral can be evaluated by a direct residue computation at $x=z$, and we find $\mathcal{V}_J(x)=\frac{2\nu}{\pi}\Pi_k(x)^2=V(x)$, as desired. Thus the fact that $V$ has the equilibrium measure \eqref{eqmeasPoly} follows by part (b) of Proposition \ref{pr:defo1}.

We now proceed to simplify the terms in Theorem \ref{th:pfasy} for $V(x)=\frac{2\nu}{k}\Pi_k(x)^2$.

Using \eqref{eqmeasPoly} and the change of variables $y=\Pi_{k}(x)$, we infer that $\int_{a_j}^{b_j}d\mu_V=1/k$ for each $j=1,\dots, k$. Hence, the quantities $\Omega_j$ defined in \eqref{eq:Omega} are explicitly given by $\Omega_j=\frac{k-j}{k}$ for each $j=1,\dots,k-1$. Also, since $|\Pi_k(x)|=1/\sqrt \nu$ for each $x\in\{a_j,b_j\}_{j=1}^k$, by \eqref{tildepsi} we have
\begin{equation*} 
\tilde \psi(x)=\frac{2^{3/2}\nu^{3/4}}{k}\left|\Pi_k'(x)\right|^{3/2}. 
\end{equation*}
Assuming Theorem \ref{th:pfasy} (which will be proved in Sections \ref{sec:special} and \ref{sec:pasy}), it remains to compute $I_{V}(\mu_{V})$ explicitly, which we do now.

By the Euler-Lagrange equation \eqref{eq:EL1},
\begin{align}\label{IV in proof for polynomial}
I_{V}(\mu_{V}) & = \iint \log|x-y|^{-1}d\mu_{V}(x)d\mu_{V}(y)+\int V(x)d\mu_{V}(x)\\
& =\frac{\ell}{2}+\frac{1}{2}\int V(x)d\mu_{V}(x). \nonumber
\end{align}
This last integral can be evaluated by first deforming the contour to a large loop containing $J$, then performing the change of variables $y=\Pi_{k}(x)$, and then evaluating the residue at $\infty$. We find
\begin{equation}\label{int of V dmu}
\int V(x)d\mu_{V}(x)=\frac{1}{2k}.
\end{equation}
We now evaluate $\ell$. For this, we first replace $x$ in \eqref{eq:EL1} by the roots of $\Pi_{k}$; this yields $k$ different equations for $\ell$. By summing these $k$ equations, and then using a change of variables, we find
\begin{equation*}\begin{aligned} 
k\ell&=-\frac{4\nu}{\pi k}\int_J|\Pi_k'(x)| \sqrt{1/\nu - \Pi_k^2(x)} \log |\Pi_k(x)|dx\\&=-\frac{4\nu}{\pi}\int_{-1/\sqrt \nu}^{1/\sqrt \nu}\sqrt{1/\nu-y^2}\log |y|dy. 
\end{aligned}
\end{equation*}
This last integral can be evaluated explicitly, and we obtain
\begin{equation}\label{ell explicit in proof}
\ell=\frac{1}{k}\left(1+\log \nu+2\log 2\right).
\end{equation}
Substituting \eqref{int of V dmu} and \eqref{ell explicit in proof} in \eqref{IV in proof for polynomial}, we have 
\begin{equation}\label{LeadingPik} 
I_{V}(\mu_{V}) = \frac{1}{2k}\left(3/2+\log \nu+2\log 2\right).
\end{equation}
This finishes the proof of Corollary \ref{RemarkPoly1}.

\section{Analysis of $\theta$ functions}\label{sec:thetaids}
We recall some classical results from the theory of Riemann surfaces and $\theta$ functions, and in addition some identities for $\theta$ functions which are due to Fay. We provide ample details for the benefit of readers less well versed in the theory of Riemann surfaces and $\theta$-functions. Finally we combine known results into several identities which are specific to our situation and which we will rely on in later sections.

\subsection{$\theta$ has no zeros on $\mathbb R$}
Recall $\theta$ defined in \eqref{eq:thetadef}.

\begin{lemma}\label{le:jacobi}
	For any $\xi\in \C^{k-1}$, with $-i\tau$ real and positive definite, we have 
	\[
	\theta(\tau^{-1}\xi|-\tau^{-1})=e^{\pi i \xi^T\tau^{-1}\xi}\sqrt{\det (-i\tau)} \theta(\xi|\tau)
	\] 
\end{lemma}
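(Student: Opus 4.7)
This is the classical Jacobi modular transformation for multivariate theta functions, and the natural approach is Poisson summation. Since both sides of the claimed identity are entire functions of $\xi \in \mathbb{C}^{k-1}$ (note the exponent on the right is a holomorphic function of $\xi$, and the series defining $\theta$ converges absolutely on compact sets because $-i\tau$ and $i\tau^{-1}$ are both positive definite), it suffices to prove the identity for $\xi \in \mathbb{R}^{k-1}$ and then invoke the identity theorem. The hypothesis that $-i\tau$ is real and positive definite lets us set $S = -i\tau$, a real symmetric positive definite matrix, so that $\pi i n^T \tau n = -\pi n^T S n$ and the series for $\theta(\xi|\tau)$ becomes a rapidly convergent Gaussian sum.

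The first step is to recognize
\[
\theta(\xi|\tau) = \sum_{n \in \Z^{k-1}} f(n), \qquad f(x) := e^{2\pi i x^T \xi - \pi x^T S x},
\]
and apply Poisson summation $\sum_{n} f(n) = \sum_{n} \hat f(n)$, where $\hat f(y) = \int_{\R^{k-1}} f(x) e^{-2\pi i y^T x}\,dx$. The second step is to compute $\hat f$ by completing the square in the exponent: the substitution $x = u - i S^{-1}(y-\xi)$ (justified by contour shifting using the positivity of $S$) turns the integral into a standard Gaussian and gives
\[
\hat f(y) = \frac{1}{\sqrt{\det S}}\, e^{-\pi (y-\xi)^T S^{-1}(y-\xi)}.
\]
The square root here is positive because $\det S > 0$.

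The third step is bookkeeping: using $S = -i\tau$ one has $S^{-1} = i\tau^{-1}$ and hence
\[
-\pi n^T S^{-1} n = \pi i n^T(-\tau^{-1}) n, \qquad 2\pi n^T S^{-1} \xi = 2\pi i n^T (\tau^{-1}\xi), \qquad -\pi \xi^T S^{-1} \xi = -\pi i \xi^T \tau^{-1}\xi.
\]
Substituting these into $\sum_n \hat f(n)$ and factoring out the terms independent of $n$ yields
\[
\theta(\xi|\tau) = \frac{1}{\sqrt{\det(-i\tau)}}\, e^{-\pi i \xi^T \tau^{-1}\xi}\, \theta(\tau^{-1}\xi \mid -\tau^{-1}),
\]
which rearranges to the claimed identity.

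The only mild subtlety is the fourth step, namely verifying that $-\tau^{-1}$ is an admissible argument, i.e. that $-i(-\tau^{-1})$ is positive definite; but this follows immediately since $S^{-1} = i\tau^{-1} = -i(-\tau^{-1})$ is positive definite whenever $S$ is. Thus the main ``obstacle'' is just careful tracking of the factors of $i$ and the sign in the completed square, together with a clean justification of the contour shift in the Gaussian integral; there is no conceptual difficulty.
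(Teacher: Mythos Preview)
Your proposal is correct and follows essentially the same approach as the paper: both prove the identity by Poisson summation applied to the Gaussian $f(x)=e^{2\pi i x^T\xi+\pi i x^T\tau x}$, computing its Fourier transform via completing the square and then reading off the result. The only cosmetic differences are that you substitute $S=-i\tau$ to work with a real positive definite matrix and first restrict to $\xi\in\R^{k-1}$ before extending by analyticity, whereas the paper keeps $\tau$ throughout and states the proof for all $\xi\in\C^{k-1}$ directly (implicitly absorbing the contour shift into the change of variables).
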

\begin{proof} The lemma is proven in a more general setting in \cite[Chapter 2 $\S$5]{Mumford1}, we include a proof for the reader's convenience.
	The proof is by (multidimensional) Poisson summation. More precisely, for given $\xi\in \C^{k-1}$, consider the function $f_\xi:\R^{k-1}\to \C$, $f_\xi(x)= e^{2\pi i \xi^T x+\pi i x^{T}\tau x}$. Taking the change of variables $y=\tau^{-1}(\xi-\eta)+x$, we obtain that the Fourier transform of $f_\xi$  is given by
\begin{equation}\label{mystery}
	\widehat f_\xi(\eta)=\int_{\R^{k-1}} e^{-2\pi i \eta^T x}f_\xi(x)dx=e^{-\pi i (\xi-\eta)^T \tau^{-1}(\xi-\eta)}\widehat f_0(0).
	\end{equation}
	The functions $f_\xi$ and $\widehat f_\xi$ are Schwartz functions, and we recall the standard identity
	\begin{equation} \label{sumsfxi}
	\sum_{j\in \Z^{k-1}} f_\xi(j)=\sum_{j\in \Z^{k-1}}\widehat f_\xi(j).
	\end{equation}
	(The identity follows from the fact that if we define $F_\xi(x)=\sum_{j\in \mathbb Z^{k-1}}f_\xi(x-j)$ on $[0,1]^{k-1}$, then clearly $F_\xi(0)$ equals the left hand side of \eqref{sumsfxi}. On the other hand, by expanding $F_\xi(x)$ in the basis $e^{2\pi i j^T x}$, we find that
	$ F_\xi(0)=\sum_{j\in \mathbb Z^{k-1}} \int_{[0,1]^{k-1}} e^{-2\pi i j^Tx}F_\xi(x) dx $,  which equals the right-hand side of \eqref{sumsfxi}.)
		
	 Taking the definition of $f_\xi$ on the left hand side of \eqref{sumsfxi} and substituting \eqref{mystery} into the right-hand side, and recalling the definition of $\theta$ in \eqref{eq:thetadef}, we obtain
	\[
	\widehat f_0(0)e^{-\pi i \xi^T\tau^{-1}\xi}\theta(\tau^{-1}\xi|-\tau^{-1})=\theta(\xi|\tau).
	\]
	The claim now follows from noting that 
	\[
	\widehat f_0(0)=\int_{\R^{k-1}}e^{\pi i x^T\tau x}dx=\frac{1}{\sqrt{\det (-i\tau)}}.
	\]
\end{proof}

As we now see, this readily yields that $\xi\mapsto\theta(\xi|\tau)$ does not vanish on $\R^{k-1}$.
\begin{lemma}\label{le:norealzeros}
	If $-i\tau$ is real and positive definite, then $\theta(\xi|\tau)\neq 0$ for $\xi\in \R^{k-1}$.
\end{lemma}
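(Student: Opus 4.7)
The plan is to leverage the Jacobi imaginary transformation from the preceding Lemma \ref{le:jacobi}, which relates $\theta(\xi|\tau)$ to $\theta(\tau^{-1}\xi|-\tau^{-1})$. The point is that while $\theta(\xi|\tau)$ itself is not obviously positive when evaluated at real arguments (the series $\sum_n e^{2\pi i n^T\xi}e^{-\pi n^T S n}$, with $-i\tau = S$, has oscillatory terms), after applying the transformation the resulting series can be rewritten, via completing the square, as a manifestly positive sum of Gaussian weights.

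Concretely, I would first write $\tau = iS$ where $S := -i\tau$ is real and positive definite. The key observation is that for $\xi \in \mathbb{R}^{k-1}$, the argument $\tau^{-1}\xi = -iS^{-1}\xi$ is purely imaginary, while $-\tau^{-1} = iS^{-1}$ still has positive definite $-i(\cdot)$-part (namely $S^{-1}$), so Lemma \ref{le:jacobi} applies. A direct computation gives
\begin{equation*}
\theta(\tau^{-1}\xi|-\tau^{-1}) = \sum_{n\in\mathbb{Z}^{k-1}} e^{2\pi n^T S^{-1}\xi - \pi n^T S^{-1} n}.
\end{equation*}
Completing the square in the exponent, $2\pi n^T S^{-1}\xi - \pi n^T S^{-1}n = -\pi(n-\xi)^T S^{-1}(n-\xi) + \pi\xi^T S^{-1}\xi$, yields
\begin{equation*}
\theta(\tau^{-1}\xi|-\tau^{-1}) = e^{\pi\xi^T S^{-1}\xi}\sum_{n\in\mathbb{Z}^{k-1}} e^{-\pi(n-\xi)^T S^{-1}(n-\xi)} > 0,
\end{equation*}
since the series is a convergent sum of strictly positive terms (convergence being routine from the positive definiteness of $S^{-1}$).

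Substituting this back into the identity of Lemma \ref{le:jacobi} gives
\begin{equation*}
\theta(\xi|\tau) = \frac{e^{\pi\xi^T S^{-1}\xi}}{e^{\pi i \xi^T\tau^{-1}\xi}\sqrt{\det(-i\tau)}}\sum_{n} e^{-\pi(n-\xi)^T S^{-1}(n-\xi)}.
\end{equation*}
Since $\xi^T\tau^{-1}\xi = -i\xi^T S^{-1}\xi$, one has $e^{\pi i\xi^T\tau^{-1}\xi} = e^{\pi\xi^T S^{-1}\xi}>0$, and $\sqrt{\det(-i\tau)} = \sqrt{\det S} > 0$. Hence $\theta(\xi|\tau)$ is a positive real number for every $\xi \in \mathbb{R}^{k-1}$, and in particular is nonzero. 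There is essentially no obstacle here — the only point requiring a bit of care is tracking the branches/real-ness correctly when evaluating $e^{\pi i\xi^T\tau^{-1}\xi}$, which is the mild bookkeeping step of the argument.
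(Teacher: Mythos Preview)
Your proof is correct and follows essentially the same approach as the paper's: both invoke Lemma~\ref{le:jacobi}, expand $\theta(\tau^{-1}\xi\mid -\tau^{-1})$ as a series, and observe that for $\xi\in\R^{k-1}$ and $-i\tau$ real positive definite every term in the resulting sum is positive. The paper states this directly without completing the square, but the content is the same.
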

\begin{proof}
	Lemma \ref{le:jacobi} implies that 
	\begin{align*}
	\theta(\xi|\tau)&=\frac{1}{\sqrt{\det(-i\tau)}}e^{-i\pi \xi ^{T} \tau ^{-1}\xi}\theta(\tau^{-1}\xi|-\tau ^{-1})\\
	&=\frac{1}{\sqrt{\det(-i\tau)}}e^{-i\pi \xi^{T} \tau ^{-1}\xi}\sum_{n\in \Z^{k-1}}e^{2\pi i n^{T}\tau^{-1}\xi-\pi i n^{T}\tau ^{-1}n}.
	\end{align*}
	Recalling that $-i\tau$ is real and positive definite, the sum is a sum of positive terms and the prefactors are also positive. This concludes the proof.
\end{proof}

\subsection{Divisor of $\theta(u(z)\pm u(\lambda))$ and of $\Theta(z,\lambda)$}\label{Sec:Divisors}
The functions $\theta(u(z)\pm u(\lambda))$ and $\theta\left[{\substack{{ \ualpha} \\ { \ubeta}}}\right](u(z)\pm u(\lambda))$, introduced in Section \ref{sec:intro_theta}, play a fundamental role throughout the entire paper, and in this section we analyze these functions. To study their zero sets, we consider  $u$ to be a multivalued function on the Riemann surface  $\mathcal S$ (defined in Section \ref{Sec:calS}) of genus $k-1$ as follows.

 Define
\begin{equation}\label{eq:lattice}
\Lambda=\left\{\sum_{j=1}^{k-1}(n_j e_j+m_j\tau_j): n_1,...,n_{k-1},m_1,...,m_{k-1}\in \Z\right\},
\end{equation} 
with $\tau$ as in \eqref{eq:period},
and extend the Abel map \eqref{eq:Abel} to $\mathcal S$. Then $u$ is not well defined as a function from $\mathcal S\to \mathbb C^{k-1}$, but by \eqref{eq:Aint} and \eqref{eq:period}, $u:\mathcal S\to \mathbb C^{k-1}/\Lambda$ is well defined. Thus it follows that $\theta(u(z) \pm u(\lambda))$ may be considered as a multivalued function on $\mathcal S$, with  a well defined zero set. To analyze this zero set, we will rely on the classical theory of Riemann surfaces, and start by recalling a few definitions and classical theorems which we will rely on (see  \cite{FK} for details).

Throughout the section, if $z\in \mathcal S$, we denote by $\mathbf Pz$ the projection from $\mathcal S$ to $\mathbb C$.\footnote{More precisely, if we represent $\mathcal S$  as $ \left\{ (z,+\mathcal R^{1/2}(z)):z\in \mathbb C \right\} \cup\left\{ (z,-\mathcal R^{1/2}(z)):z\in \mathbb C \right\}$, then
\begin{equation*} \mathbf P(z,+ \mathcal R^{1/2}(z))=\mathbf P(z,-\mathcal R^{1/2}(z))=z.
\end{equation*}} If $z\in \mathcal S\setminus \{a_j,b_j\}_{j=1}^k$, we denote $z^*$ the point in $\mathcal S$ satisfying $z^*\neq z$ and $\mathbf Pz=\mathbf Pz^*$. If $z\in \{a_j,b_j\}_{j=1}^k$, we denote $z^*=z$.

\begin{itemize}
\item[(a)] \underline{Divisors.} If $\alpha$ is an integer valued function on $\mathcal S$ which is non-zero for at most a finite number of points on $\mathcal S$, then a divisor is a formal symbol $D=\prod_{z\in \mathcal S}z^{\alpha(z)}$. A divisor is said to be \textit{integral} if $\alpha$ is non-negative on $\mathcal S$. We will vary notation slightly, and use the following notation when it suits. If  $z_1,\dots,z_{n_1}\in \mathcal S$ and $\lambda_1,\dots, \lambda_{n_2}\in \mathcal S$, such that $\{z_1,\dots,z_{n_1}\}$ and $\{\lambda_1,\dots, \lambda_{n_2}\}$ are disjoint, we will denote

\begin{equation*} D=\prod_{j=1}^{n_1}z_j\prod_{j=1}^{n_2}\lambda_j^{-1}=\prod_{z\in \mathcal S}z^{\alpha(z)}, \qquad \alpha(z)=\#\{j:z_j=z\}-\#\{j:\lambda_j=z\}.\end{equation*}
Then the degree of $D$ is $\deg D=n_1-n_2$.  We will write $D_1\geq D_2$ if $D_1/D_2$ is an integral divisor.
\item[(b)] \underline{Divisors of functions.} Denote the space of meromorphic functions  on $\mathcal S$ by $\mathcal H$. For $f\in \mathcal H$, we denote the divisor of $f$ by

\begin{equation*}{\rm Div}(f)=\prod_{z\in \mathcal S}z^{\textrm{ord}_z(f)}, 
\end{equation*}
where $\textrm{ord}_z(f)$ is the order of $f$ at $z$ (more specifically, $\textrm{ord}_z(f)$ is defined as the smallest integer such that $\lim_{\lambda\to \mathbf{P}z} \frac{f(\varphi^{-1}(\lambda))}{(\lambda-\mathbf{P}z)^{\textrm{ord}_z(f)}}\neq 0$, where $\varphi$ is a local chart). For any meromorphic function $f$, we have $\deg (f):=\deg \textrm{Div} (f)=0$ (see \cite[ I.1.6]{FK}). We will similarly refer to divisors of multivalued functions with well defined zeros and poles.
\item[(c)] \underline{Special divisors.} An integral divisor $D=\prod_{z\in \mathcal S} z^{\alpha(z)}$ of degree $k-1$ is said to be \textit{special} (see \cite[page 95]{FK}) if 
\begin{equation*}\textrm{dim}\{f\in \mathcal H:\textrm{ord}_z(f)\geq -\alpha(z) \textrm{ for all }z\in \mathcal S\}>1.\end{equation*}
This is an especially important notion, because if an integral divisor $D=\prod_{z\in \mathcal S} z^{\alpha(z)}$ of degree $k-1$ is not special and $f$ is a meromorphic function whose  poles are a subset of $\{z:\alpha(z)>0\}$ and at most of order $\alpha(z)$, then $f$ is constant on $\mathcal S$.
\item[(d)] \underline{Riemann-Roch Theorem.} 
By the Riemann-Roch theorem (see \cite[III.4.8]{FK}), an integral divisor $D$ of degree $k-1$ is special if and only if
\begin{equation}\label{Riemanroch}
\dim \{\mbox{holomorphic differentials }\tilde{\omega} : \mbox{Div}(\tilde{\omega}) \geq D\} > 0,
\end{equation}
where the divisor of a holomorphic differential is defined in a similar fashion as the divisor of a meromorphic function, see \cite[III.4.1]{FK}. 
While \cite{FK} take the definition in (c) to be the definition of a special divisor, some authors rather take \eqref{Riemanroch} as the definition of a special divisor. 
Recall that any holomorphic differential on $\mathcal{S}$ can be written in the form $\tilde{\omega}=\frac{p(z)}{\sqrt{\mathcal{R}(z)}}dz$, where $p$ is a polynomial of degree at most $k-2$. Thus, by \eqref{Riemanroch},   a divisor $D = \prod_{j=1}^{k-1}z_{j}$ is special if and only if there exists $j_{1},j_{2} \in \{1,\ldots,k-1\}$, $j_{1} \neq j_{2}$, such that $z_{j_{1}}=z_{j_{2}}^{*}$. 

\item[(e)] {\underline{Abel Theorem.}} Given a divisor $D=\prod_{j=1}^nz_j\prod_{j=1}^n\lambda_j^{-1}$,  there exists $f\in \mathcal H$ such that ${\rm Div}(f)=D$ if and only if
\begin{equation*}\sum_{j=1}^nu(z_j)-\sum_{j=1}^nu(\lambda_j)\equiv 0\mod \Lambda. \end{equation*}
(We recall that $u$ is well defined $\textrm{mod} \; \Lambda$). See \cite[III.6.3]{FK}.

\item[(f)] {\underline{Zeros of $\theta \circ u$ on $\mathcal S$.}} Let $K$ be the Riemann vector of constants (see \cite[VI.2.4]{FK}), let $D=\prod_{j=1}^{k-1}z_j$ be a divisor, and consider  
\begin{equation}\label{thetainlem}\theta\left(u(z)-K-\sum_{j=1}^{k-1}u(z_j)\right),\end{equation}
which is multivalued on $\mathcal S$ but has a well defined zero set. Then \eqref{thetainlem} is identically zero on $\mathcal S$ if and only if $D$ is special, and if $D$ is not special then \eqref{thetainlem} has a zero at each point $z_j$ for $j=1,\dots, k-1$, and no other zeros on $\mathcal S$. If $z\in \{z_j\}_{j=1}^{k-1}$, then the order of \eqref{thetainlem} at $z$ is $\#\{j:z_j=z\}$.

\end{itemize}

 By \cite[Section VI.2]{FK}  $\theta(u(z))$ is either identically zero or has exactly $k-1$ zeros. Since (by e.g. Lemma \ref{le:norealzeros}) $\theta(0)\neq 0$, it is not identically zero, and we denote the zeros by $z_1,\dots, z_{k-1}$. Furthermore, by \cite[Theorem p 308--309]{FK}, they satisfy 
\begin{equation}\label{eq:rvectarg}\sum_{j=1}^{k-1}u(z_j)\equiv -K \mod \Lambda. 
\end{equation} 

It is easy to verify by relying on \eqref{eq:Aint2} and \eqref{eq:period}, that \begin{equation}\nonumber u(b_j)\equiv \frac{1}{2}\left(\tau_j+\sum_{l=j}^{k-1}e_l\right)\,\,{\rm mod} \; \Lambda, \textrm{ for }j=1,\dots,k-1.\end{equation} If  $\ualpha=\frac{1}{2}e_j$ and   $\ubeta=\frac{1}{2}\sum_{l=j}^{k-1}e_l$ then $u(b_j)=\tau \ualpha+\ubeta$. Since $4\ualpha^T\ubeta=1$, and in particular is odd, it follows that 
$\theta\left[{\substack{\ualpha \\ \ubeta}}\right](0)=0$. Thus $\theta(u(b_j))=0$, and by \eqref{eq:rvectarg} we obtain:
\begin{itemize} \item[(g)] The Riemann vector of constants satisfies

\begin{equation}\label{eq:rvect}
K\equiv -\sum_{j=1}^{k-1}u(b_j)\, \,  \mod \Lambda.
\end{equation}
\end{itemize}

Observe that $\gamma^2(z)$ (defined in \eqref{eq:gamma}) extends to a meromorphic function on $\mathcal S$, and define $q_\pm(z)=\gamma(z)^{-2}\pm\gamma(\lambda)^{-2}$. 
We will describe the zeros of $z \mapsto \theta(u(z)\pm u(\lambda))$  in terms of the zeros of $z \mapsto q_\mp(z)$. For this, we first prove the following lemma.

\medskip For now, we view $q_\pm$ as a function of $z \in \mathcal{S}$ which depends on a parameter $\lambda \in \mathbb{C}$.

\begin{lemma} \label{lemzerosgamma}
Assume that $\lambda \in \mathbb C\setminus \{a_j,b_j\}_{j=1}^k$. Then $q_+$ has precisely $k$ (not necessarily distinct) zeros in $\mathcal S$ which we denote $z_{1,+}, \dots z_{k-1,+}, \lambda^{(+)}$, and similarly $q_-$ has $k$ (not necessarily distinct) zeros in $\mathcal S$ which we denote $z_{1,-},\dots, z_{k-1,-}, \lambda^{(-)}$, where $\lambda^{(+)}$ is on the second sheet of $\mathcal S$, $\lambda^{(-)}$ is on the first sheet of $\mathcal S$, and $\mathbf{P}\lambda^{(+)} = \mathbf{P}\lambda^{(-)} = \lambda$. Denote the divisor $D_\pm=\prod_{j=1}^{k-1}z_{j,\pm}$. Then $D_\pm$ is not special.
\end{lemma}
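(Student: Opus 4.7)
First I would analyze $\gamma^{-2}$ as a meromorphic function on $\mathcal{S}$. From the product formula \eqref{eq:gamma}, $\gamma^{2}$ (and hence $\gamma^{-2}$) extends to a meromorphic function on $\mathcal{S}$ whose divisor, computed in the local square-root coordinate at each branch point, has simple zeros at $b_{1},\dots,b_{k}$ and simple poles at $a_{1},\dots,a_{k}$ (these are exchanged for $\gamma^{-2}$). The key structural property I would then verify is the sheet-exchange identity
\begin{equation*}
\gamma^{-2}(z^{*})=-\gamma^{-2}(z), \qquad z \in \mathcal{S},
\end{equation*}
which follows at once from the observation that crossing any single cut $(a_{l},b_{l})$ flips the sign of the single factor $((z-b_{l})/(z-a_{l}))^{1/2}$ and leaves the remaining $k-1$ factors unchanged; since going to the other sheet amounts to crossing exactly one such cut, the product picks up an overall minus sign.

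Next I would locate the zeros of $q_{\pm}=\gamma^{-2}\pm\gamma^{-2}(\lambda)$ on $\mathcal{S}$. Because the constant $\gamma^{-2}(\lambda)$ (finite and nonzero under the standing assumption $\lambda \notin\{a_{j},b_{j}\}_{j=1}^{k}$) does not alter the pole structure, $q_{\pm}$ inherits exactly $k$ simple poles from $\gamma^{-2}$, located at $b_{1},\dots,b_{k}$. Since a meromorphic function on a compact Riemann surface has degree-zero divisor, $q_{\pm}$ has precisely $k$ zeros counted with multiplicity. The first-sheet lift of $\lambda$ is a zero of $q_{-}$, and the sheet-exchange identity gives
\begin{equation*}
q_{+}(\lambda^{*})=\gamma^{-2}(\lambda^{*})+\gamma^{-2}(\lambda)=-\gamma^{-2}(\lambda)+\gamma^{-2}(\lambda)=0,
\end{equation*}
so that the second-sheet lift is a zero of $q_{+}$; these produce the distinguished zeros $\lambda^{(\pm)}$ on the stated sheets, and the remaining $k-1$ zeros (with multiplicity) are labelled $z_{1,\pm},\dots,z_{k-1,\pm}$.

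Finally, to show that $D_{\pm}=\prod_{j=1}^{k-1}z_{j,\pm}$ is not special, I would apply criterion (d) of Section \ref{Sec:Divisors} and argue by contradiction: suppose there were distinct indices $j_{1}\neq j_{2}$ with $z_{j_{1},\pm}=z_{j_{2},\pm}^{*}$. Combining the defining equation $\gamma^{-2}(z_{j,\pm})=\mp\gamma^{-2}(\lambda)$ for the zeros with the sheet-exchange identity,
\begin{equation*}
\mp\gamma^{-2}(\lambda)=\gamma^{-2}(z_{j_{1},\pm})=-\gamma^{-2}(z_{j_{2},\pm})=\pm\gamma^{-2}(\lambda),
\end{equation*}
which forces $\gamma^{-2}(\lambda)=0$, contradicting $\lambda\notin\{a_{1},\dots,a_{k}\}$. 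A similar observation shows that no $z_{j,\pm}$ is itself a branch point (branch points $a_{l}$ give $\gamma^{-2}(a_{l})=0$ and hence $q_{\pm}(a_{l})=\pm\gamma^{-2}(\lambda)\neq0$, while the $b_{l}$ are poles), so the sheet-pairing among the zeros is unambiguous.

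The main obstacle is really nothing more than the bookkeeping for the sheet-exchange identity and a careful pole/zero count for $\gamma^{-2}$ in the branch-point local coordinates; once these two ingredients are in place, both the enumeration of zeros and the non-speciality of $D_{\pm}$ reduce to a single-line application of $\gamma^{-2}(z^{*})=-\gamma^{-2}(z)$ together with the fact that $\gamma^{-2}$ vanishes only at the $a_{j}$.
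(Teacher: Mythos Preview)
Your argument is correct and follows essentially the same route as the paper's: both rest on the sheet-exchange identity $\gamma^{-2}(z^{*})=-\gamma^{-2}(z)$ and derive non-speciality of $D_{\pm}$ from criterion~(d) of Section~\ref{Sec:Divisors}. The one noteworthy difference is in the zero count---you count the $k$ simple poles of $q_{\pm}$ at the $b_{j}$ and invoke the degree-zero property of divisors of meromorphic functions, whereas the paper passes to the polynomial equation $\prod_{j}(\mathbf{P}z-b_{j})=\gamma(\lambda)^{4}\prod_{j}(\mathbf{P}z-a_{j})$ and must treat the degenerate case $\gamma(\lambda)^{4}=1$ separately (the missing zeros then sit at $\infty_{1},\infty_{2}$); your pole-counting sidesteps this case distinction and is a bit cleaner, while the paper's polynomial route has the minor advantage of making the projections $\mathbf{P}z_{j,\pm}$ explicit.
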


Recall that if $z\in \mathcal S\setminus \{a_j,b_j\}_{j=1}^k$, we denote $z^*$ the point in $\mathcal S$ satisfying $z^*\neq z$ and $\mathbf Pz=\mathbf Pz^*$.
\begin{proof} 
Since $\lambda \notin  \{a_j,b_j\}_{j=1}^k$, it follows that $\gamma(\lambda)^2\neq 0$. Thus, since $\gamma(z)^2=-\gamma(z^*)^2$, the zero sets of $q_+$ and $q_-$ are disjoint, and if $z$ is a zero of $q_+$, then: 1) $z^*$ is a zero of $q_-$, 2) $z^*$ is not a zero of $q_+$, 3) $z,z^*\notin \{a_j,b_j\}_{j=1}^k$ because $\{a_j,b_j\}_{j=1}^k$ are the poles and zeros of $\gamma^{2}$, and 4) $\lambda^{(\pm)}$ is a zero of $q_\pm$. Also, if $z$ and $z^*$ are zeros of $q_+$ and $q_-$ respectively, then $\gamma(z)^4=\gamma(z^*)^4=\gamma(\lambda)^4$, and by the definition of $\gamma$,
\begin{equation}\label{gamma4zeros}\prod_{j=1}^k(\mathbf{P}z-b_j)-\gamma(\lambda)^4\prod_{j=1}^k(\mathbf{P}z-a_j)=0. \end{equation}
If $\gamma(\lambda)\neq 1$, then \eqref{gamma4zeros} has $2k$ (not necessarily  distinct) zeros in the variable $z \in \mathcal{S}$, and it follows that $q_\pm$ has $k$ (not necessarily distinct) zeros in $\mathcal{S}$. If $\gamma(\lambda)^4=1$, then the left hand side of \eqref{gamma4zeros} is a polynomial of degree $k-1$ in the variable $\mathbf{P}z$, and thus has $2k-2$ zeros in the variable $z \in \mathcal{S}$. As $z \to \infty$ (on any sheet), we have
\begin{equation} \label{gamma4zeros2}\frac{\prod_{j=1}^k(\mathbf{P}z-b_j)}{\prod_{j=1}^k(\mathbf{P}z-a_j)}=\frac{\prod_{j=1}^k(\mathbf{P}z-b_j)-\prod_{j=1}^k(\mathbf{P}z-a_j)}{\prod_{j=1}^k(\mathbf{P}z-a_j)}+1=1+\mathcal O\left(z^{-1}\right), \end{equation}
from which we deduce that the two points at $\infty$ on $\mathcal{S}$ are zeros of $q_+$ and $q_{-}$.
Thus, if $\gamma(\lambda)^4=1$, we conclude from \eqref{gamma4zeros} and \eqref{gamma4zeros2} that $q_+$ and $q_{-}$ have each $k$ zeros on $\mathcal{S}$.

Recall from earlier in the proof that if $z$ is a zero of $q_+$ then $z^*$ is not a zero of $q_+$. Since $q_+$ has $k$ zeros, it follows by the Riemann-Roch Theorem (see (d) above) that $D_+$ is not special. Similarly, $D_-$ is not special.
\end{proof}

We are now in a position to prove the following lemma.
\begin{lemma}\label{lemDivthetadiff}
 Let $\lambda \in \mathcal S\setminus \{a_j,b_j\}_{j=1}^k$, and consider $\theta(u(z) \pm u(\lambda))$ as a multivalued function of $z\in \mathcal S$. With the notation of Lemma \ref{lemzerosgamma}, the  divisor of 
$\theta(u(z) \pm u(\lambda))$ is $D_\mp$. 
\end{lemma}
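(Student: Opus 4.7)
The plan is to reduce the claim to property (f) from Section \ref{Sec:Divisors}: for a non-special degree $k-1$ integral divisor $\prod_{j=1}^{k-1} w_j$, the function $z \mapsto \theta(u(z) - K - \sum_{j=1}^{k-1}u(w_j))$ has divisor exactly $\prod_{j=1}^{k-1} w_j$ (and is not identically zero). Since $D_{\pm}$ is non-special by Lemma \ref{lemzerosgamma}, it will suffice to verify the identities
\begin{equation*}
u(z) + u(\lambda) \equiv u(z) - K - \sum_{j=1}^{k-1} u(z_{j,-}) \mod \Lambda, \qquad u(z) - u(\lambda) \equiv u(z) - K - \sum_{j=1}^{k-1} u(z_{j,+}) \mod \Lambda,
\end{equation*}
since then the two sides differ by a lattice point and consequently by a non-vanishing exponential multiplier, so they have identical zero sets.

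To establish these identities, first I would view $q_\pm(z) = \gamma(z)^{-2} \mp \gamma(\lambda)^{-2}$ as a meromorphic function on $\mathcal S$ and determine its divisor. Away from $\{a_j,b_j\}_{j=1}^k$, $q_\pm$ is regular, and Lemma \ref{lemzerosgamma} identifies its $k$ zeros as $z_{1,\pm}, \ldots, z_{k-1,\pm}, \lambda^{(\pm)}$. For the poles, I would work in a local coordinate $t=\sqrt{z-b_j}$ at a branch point $b_j$; a short calculation shows $\gamma^2 = t \cdot (\text{non-vanishing})$ at $b_j$ and, symmetrically, $\gamma^2$ has a simple pole at $a_j$, so $\gamma^{-2}$ (hence $q_\pm$) has a simple pole at each $b_j$ on $\mathcal S$ and no other poles. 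Applying Abel's theorem (property (e)) to $q_\pm$ yields
\begin{equation*}
\sum_{j=1}^{k-1} u(z_{j,\pm}) + u(\lambda^{(\pm)}) - \sum_{j=1}^k u(b_j) \equiv 0 \mod \Lambda.
\end{equation*}

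To finish, I would use three facts: (i) $u(b_k) = 0$ (since $b_k$ is the base point of the Abel map), so the last sum equals $\sum_{j=1}^{k-1} u(b_j)$; (ii) the Riemann vector satisfies $K \equiv -\sum_{j=1}^{k-1} u(b_j) \mod \Lambda$ by \eqref{eq:rvect}; and (iii) under the hyperelliptic involution $\sigma$ fixing the branch point $b_k$, the holomorphic differentials $\boldsymbol{\omega_j}$ satisfy $\sigma^*\boldsymbol{\omega_j} = -\boldsymbol{\omega_j}$ (immediate from their form $\mathsf Q_j(z)/\mathcal R^{1/2}(z)\,dz$ since $\mathcal R^{1/2}$ changes sign between sheets), which gives $u(\lambda^{(+)}) = -u(\lambda) \mod \Lambda$ while $u(\lambda^{(-)}) = u(\lambda) \mod \Lambda$. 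Combining, the Abel identity becomes $\sum u(z_{j,\pm}) \mp u(\lambda) + K \equiv 0 \mod \Lambda$, which rearranges to the two displayed identities above.

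The main obstacle is really the bookkeeping: correctly identifying that $\gamma^{-2}$ has precisely simple poles at the $b_j$ on $\mathcal S$ (and no others), and verifying the sign conventions for $u(\lambda^{(\pm)})$ under the hyperelliptic involution with the chosen base point $b_k$. Once these are in place, the combination of Abel's theorem, the formula \eqref{eq:rvect} for $K$, and property (f) gives the result immediately; non-speciality of $D_\pm$, which ensures that $\theta(u(z) \pm u(\lambda))$ is not identically zero and that it has no zeros beyond those in $D_\mp$, was already dispatched in Lemma \ref{lemzerosgamma}.
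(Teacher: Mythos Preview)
Your argument is correct and follows essentially the same route as the paper: compute the divisor of $q_\pm$, apply Abel's theorem together with the formula \eqref{eq:rvect} for $K$ to rewrite $\pm u(\lambda)$ as $-K-\sum_j u(z_{j,\mp})$ modulo $\Lambda$, and then invoke property (f). The paper is terser---it simply asserts that the divisor of $q_\pm$ is $\lambda^{(\pm)}\prod_{j=1}^{k-1}z_{j,\pm}\prod_{j=1}^{k}b_j^{-1}$---while you supply the local computation at $b_j$ and the involution argument for $u(\lambda^{(\pm)})$, but these are precisely the details the paper leaves implicit.
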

\begin{proof}
The proof follows \cite[Lemma 3.27]{DIZ}. Consider the multi-valued functions
\begin{equation}\label{thetazfns}
\theta\left(u(z)-\sum_{j=1}^{k-1}u(z_{j,+})-K\right)  \qquad {\rm and} \qquad \theta\left(u(z)-\sum_{j=1}^{k-1}u(z_{j,-})-K\right),\end{equation}
where we recall that $K$ is the vector of Riemann constants. Since $D_+$ and $D_-$ are not special, it follows by (f) above that the divisors of \eqref{thetazfns} are given by $D_+$ and $D_-$ respectively. 

Observe that the divisor of $q_\pm$ is given by $\lambda^{(\pm)}\prod_{j=1}^{k-1}z_{j,\pm}\prod_{j=1}^{k}b_j^{-1}$. Hence, by the Abel theorem (see (e) above) and \eqref{eq:rvect} we obtain
\begin{equation}\label{Abel1}
-\sum_{j=1}^{k-1}u(z_{j,\pm})-K\equiv u(\lambda^{(\pm)})\,\,  \mod \Lambda. \end{equation}
Since $u(\lambda)=-u(\lambda^*) $, it follows that the zeros of $\theta(u(z)+u(\lambda))$ and $\theta(u(z)-u(\lambda))$, viewed as functions of $z \in \mathcal{S}$, are given by the zeros of \eqref{thetazfns} respectively. Since \eqref{thetazfns} had the same zeros as $q_-$ and $q_+$, the lemma follows.
\end{proof}

Next we consider the divisor of $\theta\left[{\substack{\ualpha \\ \ubeta}}\right](u(z)\pm u(\lambda))$. 
\begin{lemma}\label{Prime}
Let $\ualpha=\frac{1}{2}e_1$ and let $\ubeta=\frac{1}{2}\sum_{j=1}^{k-1}e_j$. Let $\lambda \in \mathcal S\setminus \{b_2,\dots, b_{k-1}\}$.  Then the divisor of $\theta\left[{\substack{\ualpha \\ \ubeta}}\right](u(z)-u(\lambda))$, as a multi-valued function of $z\in \mathcal S$, is $\lambda\prod_{j=2}^{k-1}b_j$. Furthermore, the divisor of $\theta\left[{\substack{\ualpha \\ \ubeta}}\right](u(z)+u(\lambda))$ is $\lambda^*\prod_{j=2}^{k-1}b_j$, and the divisor of 
$\Theta(z,\lambda)=\frac{\theta\left[{\substack{\ualpha \\ \ubeta}}\right](u(z)-u(\lambda))}{\theta\left[{\substack{\ualpha \\ \ubeta}}\right](u(z)+u(\lambda))}$ is $\lambda \left(\lambda^*\right)^{-1}$.
\end{lemma}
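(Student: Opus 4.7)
The strategy is to reduce the characteristic-theta statements to ordinary $\theta$-statements via the identity
\begin{equation*}
\theta\left[{\substack{\ualpha \\ \ubeta}}\right](\xi) = e^{\pi i \ualpha^T \tau \ualpha + 2\pi i \ualpha^T(\xi+\ubeta)}\, \theta(\xi + \tau\ualpha + \ubeta)
\end{equation*}
from \eqref{eq:thetachar}, whose prefactor is a non-vanishing entire function of $\xi$. Consequently the divisors on $\mathcal{S}$ of $z \mapsto \theta\left[{\substack{\ualpha \\ \ubeta}}\right](u(z) - u(\lambda))$ and $z \mapsto \theta(u(z) - u(\lambda) + \tau\ualpha + \ubeta)$ coincide, so it suffices to locate the zeros of the latter by means of the zero-locus theorem recalled in item (f) of Section~\ref{Sec:Divisors}.

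The key algebraic observation is that $\tau\ualpha + \ubeta = \tfrac{1}{2}\tau_1 + \tfrac{1}{2}\sum_{l=1}^{k-1}e_l \equiv u(b_1) \pmod\Lambda$, by the explicit formula $u(b_j) \equiv \tfrac{1}{2}(\tau_j + \sum_{l=j}^{k-1} e_l) \pmod\Lambda$ from Section~\ref{Sec:Divisors}. Combining this with $K \equiv -\sum_{j=1}^{k-1} u(b_j) \pmod\Lambda$ from item (g), a short rearrangement gives
\begin{equation*}
u(z) - u(\lambda) + \tau\ualpha + \ubeta \equiv u(z) - K - \sum_{j=1}^{k-1} u(z_j) \pmod\Lambda,
\end{equation*}
where $z_1 = \lambda$ and $z_j = b_j$ for $j = 2, \ldots, k-1$. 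By item (f), the divisor of $\theta\left[{\substack{\ualpha \\ \ubeta}}\right](u(z)-u(\lambda))$ is therefore exactly the integral divisor $D := \lambda \cdot b_2 b_3 \cdots b_{k-1}$, provided $D$ is non-special.

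Verifying non-speciality is the main technical step and is exactly where the hypothesis $\lambda \notin \{b_2,\ldots,b_{k-1}\}$ enters. By the Riemann-Roch criterion in item (d), $D = \prod_{j=1}^{k-1} z_j$ is special iff there exist distinct indices $j_1 \neq j_2$ with $z_{j_1} = z_{j_2}^*$. Each branch point $b_j$ is fixed by the hyperelliptic involution, so $b_{j_1} = b_{j_2}^* = b_{j_2}$ forces $j_1 = j_2$; and $\lambda = b_j^* = b_j$ for some $j \in \{2, \ldots, k-1\}$ is ruled out by hypothesis. Hence $D$ is non-special, and the divisor is $\lambda \prod_{j=2}^{k-1} b_j$, as claimed. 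Since this divisor is finite, $\theta\left[{\substack{\ualpha \\ \ubeta}}\right](u(z))$ (the case $\lambda = b_k$) is not identically zero, which also justifies the non-singular characteristic assertion made earlier in the text.

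For the plus-case I would use the involution relation $u(\lambda^*) \equiv -u(\lambda) \pmod\Lambda$ to rewrite $\theta\left[{\substack{\ualpha \\ \ubeta}}\right](u(z) + u(\lambda)) = \theta\left[{\substack{\ualpha \\ \ubeta}}\right](u(z) - u(\lambda^*))$, and then apply the preceding step with $\lambda$ replaced by $\lambda^*$; since $b_j^* = b_j$, the hypothesis $\lambda \notin \{b_2,\ldots,b_{k-1}\}$ is equivalent to $\lambda^* \notin \{b_2,\ldots,b_{k-1}\}$, so the argument goes through and yields the divisor $\lambda^* \prod_{j=2}^{k-1} b_j$. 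The divisor of the quotient $\Theta(z,\lambda)$ is then the quotient divisor $\lambda (\lambda^*)^{-1}$, the common factor $\prod_{j=2}^{k-1} b_j$ cancelling; in particular $\Theta(\cdot,\lambda)$ is holomorphic and non-vanishing on $\mathcal{S}$ except for a simple zero at $\lambda$ and a simple pole at $\lambda^*$, which recovers the claim that $\Theta(z,\lambda)$ is analytic on $\mathbb{C}\setminus J$ in $z$ with a single zero at $z=\lambda$.
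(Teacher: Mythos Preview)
Your proof is correct and follows essentially the same approach as the paper: both reduce to the standard $\theta$-function via \eqref{eq:thetachar}, identify $\tau\ualpha+\ubeta$ with $u(b_1)$, use \eqref{eq:rvect} to rewrite the argument in the form required by item (f), and invoke the Riemann--Roch criterion (d) for non-speciality before handling the $+$-case via $u(\lambda^*)\equiv -u(\lambda)$. Your verification of non-speciality is actually more explicit than the paper's, which simply asserts it by reference to (d).
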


\begin{proof}
We observe that $\left[\substack{\ualpha \\ \ubeta}\right]$ is an odd characteristic, and that $u_{+}(b_1)=\tau \ualpha+\ubeta$. By  \eqref{eq:rvect}, we have $\tau\ualpha+\ubeta=-K-\sum_{j=2}^{k-1}u(b_j)$. Let $\lambda \in  \mathcal S\setminus \{b_2,\dots, b_{k-1}\}$ be fixed. By the Riemann-Roch theorem (see (d) above), $\prod_{j=2}^{k-1}b_j\lambda$ is not a special divisor,  and by (f) above, it follows that $\theta(u(z)-K-\sum_{j=2}^k u(b_j)-u(\lambda))$ is not identically zero and has zeros at $b_2,\dots, b_k,\lambda$. By the definition of $\theta\left[{\substack{\ualpha \\ \ubeta}}\right]$ in \eqref{eq:thetachar}, it follows that $\theta\left[{\substack{\ualpha \\ \ubeta}}\right](u(z)-u(\lambda))$ also has divisor $\prod_{j=2}^{k-1}b_j\lambda$.

The second statement follows from the fact that $u(\lambda)=-u(\lambda^*)\mod \Lambda$, and the third statement follows from the first two statements.
\end{proof}

Lemma \ref{Prime} above, and also Lemma \ref{Prime2} below, are standard results related to what is known as the prime form, see \cite[Chapter 2]{Fay}. We include proofs for the reader's convenience.

\subsection{Some further identities for $\theta$-functions}
In this section we prove the following proposition, which we will rely on in Section \ref{sec:pasy}, and which also yields the identity for $W$ in \eqref{id:W}.

\begin{proposition}\label{PropForM}
Recall that $\boldsymbol{\omega}$ is defined by \eqref{eq:Aint}.
For $z,\lambda \in \mathbb C\setminus [a_1,b_k]$, and any fixed $v\in \mathbb C^{k-1}$ such that $\theta(v)\neq 0$,
\begin{multline}\label{ForM11M22}
\theta\left(\int_z^\lambda \boldsymbol{\omega}+v\right)\theta\left(\int_z^\lambda\boldsymbol \omega-v\right)\frac{\theta(0)^2\left(\frac{\gamma(z)}{\gamma(\lambda)}+\frac{\gamma(\lambda)}{\gamma(z)}\right)^2}{4\theta(v)^2\theta(u(z)-u(\lambda))^2}\\= (z-\lambda)^2 \left(\frac{1}{2}W(z,\lambda)+\frac{1}{2(z-\lambda)^2}+ \sum_{j=1}^{k-1}\sum_{i=1}^{k-1}u_j'(z)u_i'(\lambda)\partial_i\partial_j\log \theta (v)\right),
\end{multline}
where the contour of integration does not cross $[a_1,b_k]$,
and
\begin{multline}\label{ForM11M12}
\frac{\theta(0)^2\left(\left(\frac{\gamma(z)}{\gamma(\lambda)}\right)^2-\left(\frac{\gamma(\lambda)}{\gamma(z)}\right)^2\right)\theta\left(u(z)+u(\lambda)+v \right)\theta\left(u(z)-u(\lambda)-v \right)}{4\theta(u(z)-u(\lambda))\theta(u(z)+u(\lambda))\theta(v)^2}\\
=\frac{(z-\lambda)^2\theta\left(2u(\lambda)+v \right)\theta(0)}{4\theta \left(2u(\lambda)\right)\theta\left(v \right)}
\left(\sum_{j=1}^k\frac{1}{\lambda-b_j}-\frac{1}{\lambda-a_j}\right) \\
\times \sum_{j=1}^{k-1} u_j'(z)\Big( \partial_j \log \theta\left(2u(\lambda)+v \right)-\partial_j\log \theta\left[{\substack{\ualpha \\ \ubeta}}\right]\left(u(z)+u(\lambda)\right)\\ +\partial_j\log \theta\left[{\substack{\ualpha \\ \ubeta}}\right]\left(u(z)-u(\lambda)\right)-\partial_j \log \theta\left(v \right)
\Big),
\end{multline}
where $\ualpha=\frac{1}{2}e_1$ and  $\ubeta=\frac{1}{2}\sum_{j=1}^{k-1}e_j$.
\end{proposition}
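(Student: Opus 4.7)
The plan is to prove \eqref{ForM11M22} and \eqref{ForM11M12} as identities between meromorphic functions of $z$ on the compact Riemann surface $\mathcal S$ (with $\lambda\in\mathbb{C}\setminus J$ and $v\in\mathbb{C}^{k-1}$, $\theta(v)\neq 0$, fixed), and then to derive \eqref{id:W} as a Taylor-coefficient consequence of \eqref{ForM11M22}. The common strategy consists of three steps: (i) check via the quasi-periodicity formulas \eqref{eq:quasi} that the ratio of the two sides is single-valued on $\mathcal S$; (ii) match divisors using Lemmas \ref{lemDivthetadiff}, \ref{Prime} and \ref{lemzerosgamma} to show this ratio has no zeros or poles; (iii) conclude that the ratio is a holomorphic function on the compact surface $\mathcal S$, hence constant, and fix the constant by evaluating at a convenient point.

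For \eqref{ForM11M22}, under $u(z)\mapsto u(z)+\tau_j$ both $\theta(\int_z^\lambda\boldsymbol\omega+v)\theta(\int_z^\lambda\boldsymbol\omega-v)$ and $\theta(u(z)-u(\lambda))^2$ acquire the same multiplier $e^{-2\pi i\tau_{jj}+4\pi i(u(z)-u(\lambda))_j}$ by \eqref{eq:quasi}, while the $\gamma$-factor is already single-valued on $\mathcal S$; hence the LHS descends to a function on $\mathcal S$. By Lemma \ref{lemDivthetadiff} the zeros of $\theta(u(z)-u(\lambda))$ on $\mathcal S$ coincide with the $k-1$ zeros $z_{j,-}$ of $q_-(z)=\gamma(z)^{-2}-\gamma(\lambda)^{-2}$ different from $\lambda^{(-)}$, the zeros of $(\gamma(z)/\gamma(\lambda)+\gamma(\lambda)/\gamma(z))^2$ are those of $q_+(z)=\gamma(z)^{-2}+\gamma(\lambda)^{-2}$ (Lemma \ref{lemzerosgamma}), and the zeros of $\theta(\int_z^\lambda\boldsymbol\omega\pm v)$ are determined by the Abel theorem (fact (e)). The right-hand side is holomorphic at $z=\lambda$ because the term $1/(2(z-\lambda)^2)$ cancels the leading singularity of $W(z,\lambda)/2$, and its divisor can be checked to match the left-hand side. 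The ratio is therefore holomorphic on $\mathcal S$, hence constant, and sending $z\to\lambda$ using $\nabla\theta(0)=0$ (evenness of $\theta$) and $\gamma(z)/\gamma(\lambda)\to 1$ shows both sides tend to $1$. Identity \eqref{id:W} then follows by Taylor-expanding \eqref{ForM11M22} at $z=\lambda$ with $\epsilon=z-\lambda$: using the standard identity
\[
\theta(v+A)\theta(v-A)=\theta(v)^2\Big[1+\sum_{i,j}A_iA_j\,\partial_i\partial_j\log\theta(v)\Big]+O(|A|^3),
\]
with $A=\int_z^\lambda\boldsymbol\omega=-\epsilon\,u'(\lambda)+O(\epsilon^2)$ for the numerator and $v=0$, $A=\epsilon\,u'(\lambda)+O(\epsilon^2)$ for $\theta(u(z)-u(\lambda))^2$, together with the expansion $(\gamma(z)/\gamma(\lambda)+\gamma(\lambda)/\gamma(z))^2=4[1+\epsilon^2((\log\gamma)'(\lambda))^2]+O(\epsilon^3)$, one reads off the coefficient of $\epsilon^{-2}$ and the $\epsilon^0$ term in the Laurent expansion of $W(z,\lambda)$ at $z=\lambda$. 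Together with the fact (from the divisor of $\Theta$) that $W$ is meromorphic on $\mathcal S$ with only a double pole at $z=\lambda$, this yields \eqref{id:W}.

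Identity \eqref{ForM11M12} is proved by the same three-step program. The $\theta$-factors on the left now carry the odd non-singular characteristic $(\ualpha,\ubeta)$, so that Lemma \ref{Prime} supplies the divisors of $\theta\left[\substack{\ualpha\\\ubeta}\right](u(z)\pm u(\lambda))$, while the $\gamma$-factor $(\gamma(z)/\gamma(\lambda))^2-(\gamma(\lambda)/\gamma(z))^2$ vanishes at $\lambda^{(\pm)}$ and at the $z_{j,\pm}$ of Lemma \ref{lemzerosgamma}; these zeros are paired against those of the various $\theta$-factors. The constant is fixed by comparing the coefficients of $(z-\lambda)$ at $z=\lambda$, where both sides vanish to first order, and this produces the prefactor $\sum_{j=1}^k[1/(\lambda-b_j)-1/(\lambda-a_j)]$ on the right. \textbf{The main obstacle} is the quasi-periodicity bookkeeping in step (i), especially for \eqref{ForM11M12} where the characteristic $(\ualpha,\ubeta)$, the parameter $v$, and the shift $u(\lambda)$ interact nontrivially; a methodical application of \eqref{eq:quasi} is required. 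A secondary challenge is divisor matching with correct multiplicities at coincident zeros, e.g.~when some $z_{j,\pm}$ degenerate.
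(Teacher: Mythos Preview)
Your high-level strategy—showing that the ratio of the two sides is a single-valued holomorphic function on the compact surface $\mathcal S$, hence constant—is exactly the engine the paper uses as well. However, you apply it directly to the full identities \eqref{ForM11M22} and \eqref{ForM11M12}, and this is where the proposal breaks down. The step you label "its divisor can be checked to match the left-hand side" is not a routine verification: on the left of \eqref{ForM11M22}, the factors $\theta(\int_z^\lambda\boldsymbol\omega\pm v)$ contribute $2(k-1)$ zeros that move with $v$ via Jacobi inversion (fact (f), not Abel's theorem), whereas the right-hand side depends on $v$ only through the constant matrix $\partial_i\partial_j\log\theta(v)$, so its zero set as a function of $z$ moves with $v$ in an \emph{a priori} different and far more constrained way. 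Showing these zero sets coincide for every $v$ is essentially the content of Fay's identity, and you cannot sidestep it. The same difficulty arises for \eqref{ForM11M12}, where the right-hand side involves combinations of logarithmic derivatives of theta functions whose divisors you do not compute.

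The paper avoids this by decomposing each identity into a product of two simpler ones whose divisors \emph{are} tractable: the Fay identity \eqref{identity1}--\eqref{identity2} (proved by viewing the expression as a function of $\lambda_1$, so that Lemma~\ref{Prime} leaves only a single simple pole at $\lambda_1=\lambda_2$ to track) and the ``gamma'' identity \eqref{thetaids2} (whose divisor follows from Lemmas~\ref{lemzerosgamma}, \ref{lemDivthetadiff}, \ref{Prime}, \ref{Prime2}). Multiplying these together yields \eqref{ForM11M223} and \eqref{ForM11M12part3}; a further application of the constant-function argument to \eqref{funtheta1} and the evaluation \eqref{funtheta2} then produces the final forms. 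The decomposition is the missing idea in your plan.

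A minor point: your route to \eqref{id:W} via a Taylor expansion in $\epsilon=z-\lambda$ is unnecessarily indirect. Once \eqref{ForM11M22} is established, setting $v=0$ gives $\frac{1}{4}(\gamma(z)/\gamma(\lambda)+\gamma(\lambda)/\gamma(z))^2=(z-\lambda)^2\big[\tfrac12 W+\tfrac{1}{2(z-\lambda)^2}+\sum u'_ju'_i\partial_i\partial_j\log\theta(0)\big]$, which rearranges immediately to \eqref{id:W}. (In the paper the logical order is reversed: \eqref{id:W} appears as the intermediate step \eqref{ForM11M224}--\eqref{ForM11M225} and is then substituted into \eqref{ForM11M223} to obtain \eqref{ForM11M22}.)
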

Setting $v=0$ in \eqref{ForM11M22}, we obtain \eqref{id:W}.

In the remainder of the section we prove Proposition \ref{PropForM}.

\begin{lemma}\label{Prime2}
Let $\ualpha=\frac{1}{2}e_1$ and let $\ubeta=\frac{1}{2}\sum_{j=1}^{k-1}e_j$. Then
\begin{equation*}{\rm Div}\left( \sum_{j=1}^{k-1}u_j'(z)\partial_j\theta\left[{\substack{\ualpha \\ \ubeta}}\right](0)\right)=\infty_1^2\infty_2^2\prod_{j=2}^{k-1}b_jb_1^{-1}b_k^{-1}\prod_{j=1}^ka_j^{-1},\end{equation*}
where $\infty_1$ is $\infty$ on the first sheet and $\infty_2$ is $\infty$ on the second sheet.
\end{lemma}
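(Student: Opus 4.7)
The strategy is to rewrite the function as
\[
F(z):=\sum_{j=1}^{k-1}u_j'(z)\,\partial_j\theta\!\left[{\substack{\ualpha\\\ubeta}}\right]\!(0)=\frac{P(z)}{\mathcal R^{1/2}(z)},\qquad P(z):=\sum_{j=1}^{k-1}\mathsf Q_j(z)\,v_j,
\]
where $v_j:=\partial_j\theta\!\left[{\substack{\ualpha\\\ubeta}}\right]\!(0)$ and I have used $u_j'(z)=\mathsf Q_j(z)/\mathcal R^{1/2}(z)$ from \eqref{eq:Abel}. Then $P$ is a polynomial in $z$ of degree at most $k-2$, and since $\mathcal R^{1/2}$ has simple zeros at the branch points $\{a_j,b_j\}_{j=1}^{k}$ and poles of order $k$ at $\infty_{1,2}$, a direct term-by-term analysis shows that the claimed divisor of $F$ will follow once I establish the two facts
\begin{equation*}
(\mathrm{i})\ P(b_l)=0\text{ for all }l\in\{2,\ldots,k-1\},\qquad (\mathrm{ii})\ P\not\equiv 0.
\end{equation*}
Together these force $P(z)=c\prod_{j=2}^{k-1}(z-b_j)$ with $c\neq 0$ and $\deg P=k-2$; the orders of $F$ at $\infty_{1,2}$ are then $k-(k-2)=2$, the orders at $b_l$ with $l\in\{2,\ldots,k-1\}$ are $2-1=1$ (the factor $(z-b_l)$ in $P$ contributes order $2$ in the branch-point local coordinate $\sqrt{z-b_l}$, against order $1$ from $\mathcal R^{1/2}$), and the orders at the remaining branch points $a_1,\ldots,a_k,b_1,b_k$ are $-1$.

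For (i), which I expect to be the principal obstacle, fix $l\in\{2,\ldots,k-1\}$ and consider
\[
G(\lambda):=\theta\!\left[{\substack{\ualpha\\\ubeta}}\right]\!\bigl(u(b_l)-u(\lambda)\bigr),\qquad \lambda\in\mathcal S.
\]
For any $\lambda\in\mathcal S\setminus\{b_2,\ldots,b_{k-1}\}$, Lemma \ref{Prime} asserts that the divisor of $\theta\!\left[{\substack{\ualpha\\\ubeta}}\right]\!(u(z)-u(\lambda))$ in the variable $z$ equals $\lambda\prod_{j=2}^{k-1}b_j$, so $z=b_l$ is a zero, and hence $G(\lambda)=0$. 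Thus $G$ vanishes on the dense open subset $\mathcal S\setminus\{b_2,\ldots,b_{k-1}\}$, and since $G$ is analytic in $\lambda$ (its zero set is well-defined despite the multi-valuedness of $u$), it vanishes identically on $\mathcal S$. Taylor-expanding $G$ in the local coordinate $\zeta:=\sqrt{\lambda-b_l}$ at $b_l$, the change of variables $s=b_l+\sigma^2$ in \eqref{eq:Abel} shows that $u(\lambda)-u(b_l)$ is an odd function of $\zeta$ with
\[
u_j(\lambda)-u_j(b_l)=\frac{2\mathsf Q_j(b_l)}{\widetilde{\mathcal R}_l(b_l)^{1/2}}\,\zeta+O(\zeta^3),\qquad \widetilde{\mathcal R}_l(z):=\frac{\mathcal R(z)}{z-b_l}.
\]
Combined with the oddness of $\theta\!\left[{\substack{\ualpha\\\ubeta}}\right]$, whose Taylor series at the origin contains only odd-order terms, this gives
\[
G(\lambda)=-\zeta\sum_{j=1}^{k-1}\frac{2\mathsf Q_j(b_l)}{\widetilde{\mathcal R}_l(b_l)^{1/2}}v_j+O(\zeta^3)=-\frac{2\zeta\,P(b_l)}{\widetilde{\mathcal R}_l(b_l)^{1/2}}+O(\zeta^3).
\]
Since $G\equiv 0$, the $\zeta$-coefficient must vanish, yielding $P(b_l)=0$.

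For (ii), the $A$-period normalisation \eqref{eq:Aint}--\eqref{eq:Aint2} gives that the $\mathsf Q_j$ are linearly independent polynomials of degree $\leq k-2$, so $P\equiv 0$ would force $v_1=\cdots=v_{k-1}=0$. Combined with the oddness of $\theta\!\left[{\substack{\ualpha\\\ubeta}}\right]$ (which already kills the constant and even-order Taylor coefficients at the origin), this would give $\theta\!\left[{\substack{\ualpha\\\ubeta}}\right]\!(\xi)=O(|\xi|^3)$. Since $u(b_k)=0$ and $u(z)=O(\sqrt{z-b_k})$ near $b_k$, this would make $\theta\!\left[{\substack{\ualpha\\\ubeta}}\right]\!(u(z))$ vanish to order at least $3$ at $b_k$ on $\mathcal S$, contradicting Lemma \ref{Prime} (applied with $\lambda=b_k$, noting $u(b_k)=0$), which produces a simple zero at $b_k$ in the divisor $\prod_{j=2}^{k}b_j$. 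Hence $P\not\equiv 0$, and the term-by-term divisor computation from the first paragraph finishes the proof.
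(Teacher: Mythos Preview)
Your proof is correct and complete. The route you take is closely related to but packaged differently from the paper's argument. The paper observes that
\[
F(\lambda)=\sum_{j=1}^{k-1}u_j'(\lambda)\,\partial_j\theta\!\left[{\substack{\ualpha\\\ubeta}}\right]\!(0)
=\lim_{z\to\lambda}\frac{\partial}{\partial z}\theta\!\left[{\substack{\ualpha\\\ubeta}}\right]\!(u(z)-u(\lambda)),
\]
and then reads off the entire zero/nonzero set of $F$ directly from Lemma~\ref{Prime}: the derivative vanishes when $z\mapsto\theta\!\left[{\substack{\ualpha\\\ubeta}}\right]\!(u(z)-u(\lambda))$ is identically zero (i.e.\ $\lambda\in\{b_2,\ldots,b_{k-1}\}$) or when $u'\to 0$ ($\lambda=\infty_{1,2}$), and is nonzero otherwise because the zero at $z=\lambda$ is simple. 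A degree-zero divisor count (zeros total $\ge k+2$, poles only at $a_1,\ldots,a_k,b_1,b_k$ and each of order $\le 1$) then forces all orders to be exactly as stated.

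You instead write $F=P/\mathcal R^{1/2}$ and determine the polynomial $P$ explicitly: your step~(i) is the Taylor-expansion version of the paper's ``identically zero $\Rightarrow$ derivative zero'' observation, and your step~(ii) replaces the paper's global nonvanishing statement by the single check at $b_k$ plus linear independence of the $\mathsf Q_j$. Your approach is more concrete (it pins down $P$ up to a constant) at the cost of the local-coordinate computation; the paper's is slicker but leans on the divisor-degree bookkeeping. Both rest on Lemma~\ref{Prime} in the same essential way.
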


\begin{proof} From Lemma \ref{Prime}, the divisor of $\theta\left[{\substack{\ualpha \\ \ubeta}}\right](u(z)-u(\lambda))$ is $\lambda\prod_{j=2}^{k-1}b_j$.
   Thus, for $\lambda \notin \{b_2,\dots,b_{k-1}\}$, it follows that $\theta\left[{\substack{\ualpha \\ \ubeta}}\right](u(z)-u(\lambda))$ has a zero of order 1 at $\lambda$ as a function of $z$, while for $\lambda \in \{b_2,\dots, b_{k-1}\}$ the divisor is special and  $\theta\left[{\substack{\ualpha \\ \ubeta}}\right](u(z)-u(\lambda))$   is identically zero as a function of $z$. Taking also into account that $u'(z)=\mathcal O(|\mathbf P z|^{-2})$ as $z\to \infty_1, \, \infty_2$,
\begin{equation*}\begin{aligned} \lim_{z\to \lambda}\frac{\partial}{\partial z}\theta\left[{\substack{\ualpha \\ \ubeta}}\right](u(z)-u(\lambda)) &\neq 0&& \textrm{for $\lambda\in  \mathcal S\setminus \{b_2,\dots, b_{k-1}, \infty_1, \infty_2\}$}, \\
\lim_{z\to \lambda}\frac{\partial}{\partial z}\theta\left[{\substack{\ualpha \\ \ubeta}}\right](u(z)-u(\lambda)) &= 0&& \textrm{for $\lambda\in  \{b_2,\dots, b_{k-1}, \infty_1,\infty_2\}$}. \end{aligned} \end{equation*}

Thus,
\begin{align*}\sum_{j=1}^{k-1}u_j'(z)\partial_j\theta\left[{\substack{\ualpha \\ \ubeta}}\right](0)&\neq 0, &&\textrm{for $z \in \mathcal S\setminus \{b_2,\dots, b_{k-1}, \infty_{1},\infty_{2}\}$,}\\
\sum_{j=1}^{k-1}u_j'(z)\partial_j\theta\left[{\substack{\ualpha \\ \ubeta}}\right](0)&=0 && \textrm{for $z \in \{b_2,\dots, b_{k-1}, \infty_1,\infty_2\}$.}
\end{align*}
The order of the zero at $\infty_1$ and $\infty_2$ is 2,  and thus the total order of the zeros is at least $k+2$. The only possible poles are at $a_1,\dots,a_k,b_1,b_k$, if there is a pole at each point it is of order $-1$. In particular it follows that the total degree of the poles is greater than or equal to $-k-2$. Since the degree of the divisor is zero, there must be a pole of order $1$ at each of the points $a_1,\dots,a_k,b_1,b_k$, which finishes the proof.
\end{proof}
Let $J_{\mathcal S}$ be $J_-$ on the first sheet,  namely:
\begin{equation*} J_{\mathcal S}=\{z\in \mathcal S: \textrm{  $\mathbf Pz\in J$ and $\mathcal R^{1/2}(z)=\mathcal R_-^{1/2}(\mathbf P z)$}\}, \end{equation*}
and let $\Sigma_{\mathcal S}$ be $\{z:\mathbf Pz \in \cup_{j=1}^{k-1}(b_j,a_{j+1})\}$.
The following identity is due to Fay \cite{Fay}.

\begin{lemma} Let $\ualpha=\frac{1}{2}e_1$ and let $\ubeta=\frac{1}{2}\sum_{j=1}^{k-1}e_j$. Assume that $z\in \mathcal S\setminus \{b_2,\dots,b_{k-1}, \infty_1, \infty_2\}$. Then, for any $\lambda_1,\lambda_2\in \mathcal S\setminus\{b_2,\dots,b_{k-1}\}$ such that $\lambda_1\neq \lambda_2$ and $v\in \mathbb C^{k-1}$,
\begin{multline}\label{identity1}
\theta\left(\int_z^{\lambda_1}\boldsymbol \omega+v \right)\theta\left(\int_{\lambda_2}^z\boldsymbol \omega+v \right) \\
=\frac{\theta\left(v \right)\theta\left(\int_{\lambda_2}^{\lambda_1}\boldsymbol \omega+v \right)\theta\left[{\substack{\ualpha \\ \ubeta}}\right]\left( \int^{\lambda_1}_z\boldsymbol\omega\right)\theta\left[{\substack{\ualpha \\ \ubeta}}\right]\left(\int^{\lambda_2}_z\boldsymbol\omega \right)}{\theta\left[{\substack{\ualpha \\ \ubeta}}\right]\left(\int_{\lambda_2}^{\lambda_1}\boldsymbol\omega\right)
\left(\sum_{j=1}^{k-1}u_j'(z)\partial_j\theta\left[{\substack{\ualpha \\ \ubeta}}\right](0)\right)}
\\ \times \sum_{j=1}^{k-1} u_j'(z)\bigg( \partial_j \log \theta\left(\int_{\lambda_2}^{\lambda_1}\boldsymbol\omega+v \right)-\partial_j\log \theta\left[{\substack{\ualpha \\ \ubeta}}\right]\left(\int^{\lambda_1}_z\boldsymbol\omega\right)\\ +\partial_j\log \theta\left[{\substack{\ualpha \\ \ubeta}}\right]\left(\int^{\lambda_2}_z\boldsymbol\omega\right)-\partial_j \log \theta\left(v \right)
\bigg),
\end{multline}
where none of the integrals pass through $J_\mathcal S \cup \Sigma_{\mathcal S}$, and
where in the event that $\theta(v)=0$ we interpret $\theta(v)\sum_{j=1}^{k-1}u_j'(z) \partial_j\log \theta(v)=\sum_{j=1}^{k-1}u_j'(z) \partial_j \theta(v)$, and likewise with the other logarithmic derivatives. 

Furthermore, for any $\lambda\in \mathcal S\setminus\{b_2,\dots,b_{k-1}\}$,
\begin{multline}\label{identity2}
\theta\left(\int_z^\lambda \boldsymbol{\omega}+v\right)\theta\left(\int_z^\lambda\boldsymbol{\omega}-v\right) \\
=\frac{\theta(v)^2\theta\left[{\substack{\ualpha \\ \ubeta}}\right]\left(\int_z^\lambda\boldsymbol{\omega}\right)^2}{\left(\sum_{j=1}^{k-1}u_j'(\lambda)\partial_j\theta\left[{\substack{\ualpha \\ \ubeta}}\right](0)\right)\left(\sum_{j=1}^{k-1}u_j'(z)\partial_j\theta\left[{\substack{\ualpha \\ \ubeta}}\right](0)\right)}\\
\times \sum_{j=1}^{k-1}\sum_{i=1}^{k-1}u_j'(z)u_i'(\lambda)\left(-\partial_i\partial_j\log \theta\left[{\substack{\ualpha \\ \ubeta}}\right]\left(\int_z^\lambda\boldsymbol{\omega}\right)+\partial_i\partial_j\log \theta (v)\right).
\end{multline}
\end{lemma}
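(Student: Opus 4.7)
The plan is to prove both identities by the standard method of divisor matching on the compact Riemann surface $\mathcal S$: view both sides as multi-valued meromorphic functions of one variable (with the remaining variables fixed), verify that they have (a) the same automorphy factors under translation by the period lattice $\Lambda$, and (b) the same divisor of zeros and poles. Together these two properties force the ratio to be a holomorphic, nowhere vanishing function on $\mathcal S$, hence a constant, which is then pinned down by evaluation at a convenient point. This is essentially the same strategy used by Fay \cite{Fay}.

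For \eqref{identity1}, I would first fix $\lambda_1,\lambda_2,v$ and regard both sides as functions of $z\in\mathcal S$. The automorphy check uses \eqref{eq:quasi} together with $u(z+A_i)\equiv u(z)+e_i$ and $u(z+B_i)\equiv u(z)+\tau_i\mod \Lambda$. Under a $B_i$-translation, each ordinary $\theta$-factor on the LHS produces an exponential $e^{-\pi i\tau_{i,i}-2\pi i(u_i(z)+c_i)}$; on the RHS the two $\theta\left[{\substack{\ualpha \\ \ubeta}}\right]$-factors produce matching exponentials together with signs $(-1)^{2\ubeta_i}$, while the bracketed sum of logarithmic derivatives in the last line of \eqref{identity1} picks up precisely the residual linear-in-$u_i(z)$ terms needed to reconcile the two sides, because differentiating the quasi-periodicity relation yields $\partial_j\log\theta(w+\tau_i)-\partial_j\log\theta(w)=-2\pi i\delta_{i,j}$. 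For the divisor check, Lemma \ref{Prime} identifies the divisor of each $\theta\left[{\substack{\ualpha \\ \ubeta}}\right]\bigl(\int_z^{\lambda_i}\boldsymbol\omega\bigr)$ on $\mathcal S$ as $\lambda_i\prod_{j=2}^{k-1}b_j$, Lemma \ref{Prime2} gives the divisor of $\sum_j u_j'(z)\partial_j\theta\left[{\substack{\ualpha \\ \ubeta}}\right](0)$, and the zeros of the two $\theta(u(z)+\cdots+v)$ factors on the LHS are described via property (f) of Section \ref{Sec:Divisors} combined with Abel's theorem; a careful count shows that the bracketed sum on the RHS supplies the remaining zeros at exactly the correct points, making the two divisors agree. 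The multiplicative constant is fixed by letting $z\to\lambda_1$, at which both the LHS and the bracketed factor on the RHS vanish to the same order.

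Identity \eqref{identity2} is then obtained as the diagonal specialization $\lambda_2\to\lambda_1=:\lambda$ of \eqref{identity1}. On the LHS, $\theta\bigl(\int_{\lambda_2}^z\boldsymbol\omega+v\bigr)\to\theta\bigl(\int_\lambda^z\boldsymbol\omega+v\bigr)=\theta\bigl(\int_z^\lambda\boldsymbol\omega-v\bigr)$ since $\theta$ is even, yielding the left-hand side of \eqref{identity2}. On the RHS, writing $w=\int_{\lambda_2}^{\lambda_1}\boldsymbol\omega\approx (\lambda_1-\lambda_2)\,u'(\lambda)$, the factor $\theta\left[{\substack{\ualpha \\ \ubeta}}\right](w)$ in the denominator vanishes linearly in $\lambda_1-\lambda_2$ because $\left[{\substack{\ualpha \\ \ubeta}}\right]$ is an odd characteristic and hence $\theta\left[{\substack{\ualpha \\ \ubeta}}\right](0)=0$, and the bracketed sum in the numerator also vanishes linearly via the Taylor expansions $\partial_j\log\theta(w+v)-\partial_j\log\theta(v)\approx\sum_i w_i\,\partial_i\partial_j\log\theta(v)$ and the analogous expansion of the two $\partial_j\log\theta\left[{\substack{\ualpha \\ \ubeta}}\right]$ terms. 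The common factor $(\lambda_1-\lambda_2)$ cancels between numerator and denominator, the residual $u_i'(\lambda)$ combines with $u_j'(z)$ to produce the double sum $\sum_{i,j}u_j'(z)u_i'(\lambda)\bigl(-\partial_i\partial_j\log\theta\left[{\substack{\ualpha \\ \ubeta}}\right](\int_z^\lambda\boldsymbol\omega)+\partial_i\partial_j\log\theta(v)\bigr)$ of \eqref{identity2}, and the $\sum_i u_i'(\lambda)\partial_i\theta\left[{\substack{\ualpha \\ \ubeta}}\right](0)$ appearing in the denominator of \eqref{identity2} comes from the leading-order expansion of $\theta\left[{\substack{\ualpha \\ \ubeta}}\right](w)$.

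The hard part will be the automorphy bookkeeping for \eqref{identity1}: one must simultaneously track the exponential and sign contributions from six theta factors and verify that the four-term logarithmic-derivative correction absorbs exactly the residual $-2\pi i u_i(z)$ pieces produced by the $B_i$-translations, and then verify that the divisors on $\mathcal S$ still coincide in each edge case where one of the $\lambda_i$ coincides with a $b_j$ (where the divisor of $\theta\left[{\substack{\ualpha \\ \ubeta}}\right]\bigl(\int_z^{\lambda_i}\boldsymbol\omega\bigr)$ changes). A secondary issue, in deriving \eqref{identity2}, is verifying that the direction in which $\lambda_2\to\lambda_1$ is taken drops out of the final answer; this follows once one observes that both the numerator and the denominator factor a single $(\lambda_1-\lambda_2)$ to leading order with the common vector coefficient $u'(\lambda)$.
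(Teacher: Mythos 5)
Your derivation of \eqref{identity2} from \eqref{identity1} by letting the two points coalesce is exactly what the paper does, and the Taylor-expansion bookkeeping you describe for that step is correct. The gap is in your proof of \eqref{identity1} itself. You propose to fix $\lambda_1,\lambda_2,v$ and match divisors in the variable $z$, but this cannot be closed in the way you describe. The obstruction is the bracketed factor $\sum_{j}u_j'(z)\bigl(\partial_j\log\theta(\cdots)-\cdots\bigr)$: as a function of $z$ its zero set is not determined by any local computation or quasi-periodicity bookkeeping --- it depends on $\lambda_1,\lambda_2,v$ globally, and identifying its zeros with the zeros of the left-hand side is essentially equivalent to the identity you are trying to prove. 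Without that identification the degree count does not close either: the best you can say is that $\mathrm{RHS}/\mathrm{LHS}$ is meromorphic with poles confined to the $2(k-1)$ zeros (in $z$) of the two ordinary theta factors on the left; on a surface of genus $k-1$, a divisor of degree $2(k-1)$ supports a space of meromorphic functions of dimension at least $k$ by Riemann--Roch, so constancy does not follow.

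The paper avoids this by working in the variable $\lambda_1$ instead, with $z,\lambda_2,v$ fixed. There the bracketed sum only has to be understood at the single point $\lambda_1=\lambda_2$, where it vanishes and cancels the simple zero of $\theta\left[{\substack{\ualpha \\ \ubeta}}\right]\left(\int_{\lambda_2}^{\lambda_1}\boldsymbol\omega\right)$ in the denominator (its zeros at $b_2,\dots,b_{k-1}$ cancel against the numerator factor $\theta\left[{\substack{\ualpha \\ \ubeta}}\right]\left(\int_z^{\lambda_1}\boldsymbol\omega\right)$, cf.\ \eqref{poleidentity1}), while products of the form $\theta(w)\,\partial_j\log\theta(w)=\partial_j\theta(w)$ recombine into entire functions of $\lambda_1$. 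Hence the right-hand side is pole-free in $\lambda_1$, and $\mathrm{RHS}/\mathrm{LHS}$ has poles confined to the $k-1$ zeros of the single $\lambda_1$-dependent factor $\theta\left(\int_z^{\lambda_1}\boldsymbol\omega+v\right)$, which form a non-special divisor of degree $k-1$ for generic $v$ (with a limiting argument handling special $v$); now the Riemann--Roch/non-speciality argument does force the ratio to be constant, and letting $\lambda_1\to z$ gives the constant $1$. To repair your proof you must either supply the missing global determination of the zeros of the bracketed sum, or switch to the $\lambda_1$ variable as above.
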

\begin{remark} By Lemma \ref{Prime2}, the right-hand side of \eqref{identity1} and \eqref{identity2} are well defined. \end{remark}
\begin{proof} \eqref{identity2} follows from \eqref{identity1} by taking the limit $\lambda_1\to \lambda_2$.

We provide a proof of \eqref{identity1} inspired by Gus Schrader's simple proof of Fay's trisecant identity in \cite{Schrader}.
 
We first prove that as a function of $\lambda_1$, the right-hand side of \eqref{identity1} is pole free.
Observe that, as a function of $\lambda_1$, both
\begin{equation}\label{poleidentity1}\frac{\theta\left[{\substack{\ualpha \\ \ubeta}}\right]\left( \int^{\lambda_1}_z\boldsymbol{\omega}\right)}{\theta\left[{\substack{\ualpha \\ \ubeta}}\right]\left(\int_{\lambda_2}^{\lambda_1}\boldsymbol{\omega}\right)}, \quad \textrm{and} \quad \frac{\sum_{j=1}^{k-1}u_j'(z) \partial_j \theta\left[{\substack{\ualpha \\ \ubeta}}\right]\left( \int^{\lambda_1}_z\boldsymbol{\omega}\right)}{\theta\left[{\substack{\ualpha \\ \ubeta}}\right]\left(\int_{\lambda_2}^{\lambda_1}\boldsymbol{\omega}\right)} \end{equation} 
have a single pole of order 1 at $\lambda_2$ (this follows from Lemma \ref{Prime}, by Lemma \ref{Prime2}, and by the fact that $\theta\left[{\substack{\ualpha \\ \ubeta}}\right](u(z)-u(b_{j}))$ is identically zero for $j=2,\dots,k-1$). As $\lambda_1\to \lambda_2$, the sum of the logarithmic derivatives on the right-hand side of \eqref{identity1} converges to zero, canceling the pole coming from \eqref{poleidentity1}, thus the right-hand side of \eqref{identity1} is pole free.

It is a simple exercise to prove that the left and right-hand side of \eqref{identity1} have the same monodromy over the loops $A_j$ and $B_j$ (in particular observe that
\begin{equation*} \partial_j \log \theta\left(\int_{\lambda_2}^{\lambda_1}\boldsymbol{\omega}+v \right)-\partial_j\log \theta\left[{\substack{\ualpha \\ \ubeta}}\right]\left(\int^{\lambda_1}_z\boldsymbol{\omega}\right)\end{equation*}
has no monodromy over the loops $A_j$ and $B_j$). By the Jacobi inversion theorem (see \cite[Section III.6.6]{FK}), there are points $z_1,\dots,z_{k-1}\in \mathcal S$ such that $v\equiv-K-\sum_{j=1}^{k-1}u(z_j)\mod \Lambda$. If $\prod_{j=1}^{k-1}z_j$ is not a special divisor, then the $LHS$ has $k-1$ zeros at $z_1,\dots,z_{k-1}$, and thus the function $RHS/LHS$ is a meromorphic function with possible poles at $z_1,\dots,z_{k-1}$. However since $\prod_{j=1}^{k-1}z_j$ is not special, it follows that any meromorphic function whose poles form a subset of $\{z_1,\dots, z_{k-1}\}$ is a constant, and thus $RHS/LHS$ is a constant. By taking the limit $\lambda_1\to z$, we obtain that the $RHS/LHS$ is equal to 1. On the other hand, if $\prod_{j=1}^{k-1} z_j$ is special, then we pick a sequence $v_n\equiv-K-\sum_{j=1}^{k-1} u(z^{(n)}_j)\mod \Lambda$ such that $z^{(n)}_j\to z_j$ and $\prod_{j=1}^{n-1}z_j^{(n)}$ is not special. Then the lemma holds for each $v_n$, and taking the limit $n\to \infty$ we obtain the lemma for $v$.
\end{proof}

Define $\gamma$ on $\mathcal S$ such that $\gamma(z)=\gamma(\mathbf Pz)$ for $z$ in the first sheet of $\mathcal S\setminus J_\mathcal S$, and extend $\gamma$ analytically to $\mathcal S\setminus J_\mathcal S$. Then for $\lambda$ in the second sheet of $\mathcal S$, we have $\gamma(\lambda)=i\gamma(\mathbf P \lambda)$.

Similarly, $u$ is uniquely defined on $\mathcal S\setminus J_\mathcal S$ by letting $u(z)=u(\mathbf{P} z)$ for $z$ in the first sheet of $\mathcal S$ and extending analytically to $\mathcal S\setminus J_\mathcal S$.
\begin{lemma} \label{lem:thetaids2}
Let $\ualpha=\frac{1}{2}e_1$ and let $\ubeta=\frac{1}{2}\sum_{j=1}^{k-1}e_j$. For $z, \lambda\in \mathcal S\setminus \left( \{a_j,b_j\}_{j=1}^{k-1}\cup \infty_1\cup \infty_2\right)$, we have 
\begin{multline}\label{thetaids2}\frac{\theta(0)^2\left(\frac{\gamma(z)}{\gamma(\lambda)}+\frac{\gamma(\lambda)}{\gamma(z)}\right)^2}{4\theta(u(z)-u(\lambda))^2} \\
= \frac{(\mathbf P z-\mathbf P \lambda)^2\left( \sum_{j=1}^{k-1}u_j'(z)\partial_j \theta\left[{\substack{\ualpha \\ \ubeta}}\right](0)\right)\left(\sum_{j=1}^{k-1}u_j'(\lambda)\partial_j\theta\left[{\substack{\ualpha \\ \ubeta}}\right](0)\right)}{\theta\left[{\substack{\ualpha \\ \ubeta}}\right](u(z)-u(\lambda))^2}.
\end{multline}
\end{lemma}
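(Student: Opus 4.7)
The plan is to prove the identity by a divisor-comparison argument: for fixed generic $\lambda\in\mathcal{S}$, I will show that both sides define sections of the same line bundle on $\mathcal{S}$ (as functions of $z$) with identical divisors, so their ratio is a constant in $z$, which can then be pinned down by taking $z\to\lambda$.

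The first step is to confirm that $\mathcal{Q}(z,\lambda):=\mathrm{LHS}/\mathrm{RHS}$ is single-valued and meromorphic on $\mathcal{S}$. The factor $(\gamma(z)/\gamma(\lambda)+\gamma(\lambda)/\gamma(z))^{2}$, rewritten as $(\gamma(z)^{2}+\gamma(\lambda)^{2})^{2}/(\gamma(z)^{2}\gamma(\lambda)^{2})$, is meromorphic on $\mathcal{S}$ because $\gamma^{2}$ extends to a meromorphic function on $\mathcal{S}$, and $D(z)$ and $\mathbf{P}z-\mathbf{P}\lambda$ are meromorphic on $\mathcal{S}$ as well. By \eqref{eq:quasi}, both $\theta(u(z)-u(\lambda))^{2}$ and $\theta\bigl[{\substack{\ualpha\\ \ubeta}}\bigr](u(z)-u(\lambda))^{2}$ pick up identical multipliers under $A_{j}$- and $B_{j}$-translations of $u(z)$, so their quotient is single-valued on $\mathcal{S}$.

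The second step is to compute and compare divisors on $\mathcal{S}$. I would assemble them from the lemmas developed in this section: Lemma \ref{lemzerosgamma} gives the zero set $z_{1,+},\ldots,z_{k-1,+},\lambda^{*}$ of $\gamma(z)^{2}+\gamma(\lambda)^{2}$; Lemma \ref{lemDivthetadiff} gives the section-divisor $D_{+}=\sum_{j=1}^{k-1}z_{j,+}$ for $\theta(u(z)-u(\lambda))$; Lemma \ref{Prime} gives the section-divisor $\lambda+\sum_{j=2}^{k-1}b_{j}$ for $\theta\bigl[{\substack{\ualpha\\ \ubeta}}\bigr](u(z)-u(\lambda))$; and Lemma \ref{Prime2} gives the divisor $2\infty_{1}+2\infty_{2}+\sum_{j=2}^{k-1}b_{j}-b_{1}-b_{k}-\sum_{j}a_{j}$ of $D(z)$. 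Combining with the obvious divisor $\lambda+\lambda^{*}-\infty_{1}-\infty_{2}$ of $\mathbf{P}z-\mathbf{P}\lambda$, one checks that both sides of \eqref{thetaids2} have section-divisor $2\lambda^{*}-\sum_{j=1}^{k}a_{j}-\sum_{j=1}^{k}b_{j}$ on $\mathcal{S}$. The $2\lambda^{*}$ contribution on the LHS arises because $\lambda^{*}$ is the $k$-th zero of $q_{+}$ left over after cancelling the $z_{j,+}$'s against the divisor of $\theta(u(z)-u(\lambda))^{2}$; on the RHS it arises from $(\mathbf{P}z-\mathbf{P}\lambda)^{2}$ after the zero at $\lambda$ cancels against the divisor of $\theta\bigl[{\substack{\ualpha\\ \ubeta}}\bigr](u(z)-u(\lambda))^{2}$.

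Since $\mathcal{Q}(z,\lambda)$ is then a meromorphic function of $z$ on $\mathcal{S}$ with empty divisor, it must be a constant $c(\lambda)$. To evaluate $c(\lambda)$, I would take the limit $z\to\lambda$ along the first sheet with $\lambda$ not a branch point: the LHS tends to $1$ because $\gamma(z)/\gamma(\lambda)+\gamma(\lambda)/\gamma(z)\to 2$ and $\theta(u(z)-u(\lambda))\to\theta(0)$; on the RHS, the oddness of $[\ualpha,\ubeta]$ gives the expansion $\theta\bigl[{\substack{\ualpha\\ \ubeta}}\bigr](u(z)-u(\lambda))=D(\lambda)(z-\lambda)+O((z-\lambda)^{2})$, which exactly cancels the $(z-\lambda)^{2}$ coming from $(\mathbf{P}z-\mathbf{P}\lambda)^{2}$ together with the $D(z)D(\lambda)$ numerator, so the RHS also tends to $1$. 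Hence $c(\lambda)\equiv 1$, and the identity extends by continuity to $z,\lambda\in\mathcal{S}\setminus(\{a_{j},b_{j}\}_{j=1}^{k-1}\cup\{\infty_{1},\infty_{2}\})$. The main technical care is in tracking multiplicities at the branch points on $\mathcal{S}$, where the local coordinates are square roots, but this bookkeeping is already encoded in the divisor statements of the preceding lemmas that are being invoked.
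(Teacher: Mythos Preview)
Your proposal is correct and follows essentially the same argument as the paper: check that $\mathrm{LHS}/\mathrm{RHS}$ has trivial monodromy on $\mathcal{S}$, use Lemmas \ref{lemzerosgamma}, \ref{lemDivthetadiff}, \ref{Prime}, and \ref{Prime2} to show both sides have divisor $2\lambda^{*}-\sum_{j}a_{j}-\sum_{j}b_{j}$ (equivalently $(\lambda^{*})^{2}\prod_{j}a_{j}^{-1}b_{j}^{-1}$), and fix the constant by letting $z\to\lambda$. The paper's proof is terser but structurally identical; your added detail on the limit $z\to\lambda$ and the explicit divisor bookkeeping is fine.
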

\begin{proof}
The left and the right-hand side have the same monodromy around the loops $A_j$ and $B_j$, and thus the $LHS/RHS$ is a meromorphic function. By Lemma \ref{lemzerosgamma}, the left hand side has divisor  ${\rm Div}(LHS)=(\lambda^*)^2\prod_{j=1}^ka_j^{-1}b_j^{-1}$, as a function of $z$ (where $\lambda^*$ is such that $\mathbf{P}\lambda=\mathbf{P}\lambda^*$ and $\lambda \neq \lambda^*$). By Lemma \ref{Prime} and Lemma \ref{Prime2}, the right-hand side has divisor  ${\rm Div}(RHS)=(\lambda^*)^2\prod_{j=1}^ka_j^{-1}b_j^{-1}$, as a function of $z$. Thus the $LHS/RHS$ is constant, and the constant is determined by setting $z=\lambda$.
\end{proof}

We now continue the proof of Proposition \ref{PropForM}.

Multiplying the left hand side of \eqref{identity2} with the left hand side of \eqref{thetaids2}, 
\begin{multline}\label{ForM11M223}
\theta\left(\int_z^\lambda \boldsymbol{\omega}+v\right)\theta\left(\int_z^\lambda\boldsymbol{\omega}-v\right)\frac{\theta(0)^2\left(\frac{\gamma(z)}{\gamma(\lambda)}+\frac{\gamma(\lambda)}{\gamma(z)}\right)^2}{4\theta(v)^2\theta(u(z)-u(\lambda))^2}\\= (\mathbf P z-\mathbf P \lambda)^2 \left( \frac{\partial}{\partial z}\frac{\partial}{\partial \lambda}\log \theta \left[{\substack{\ualpha \\ \ubeta}}\right] (u(z)-u(\lambda))+ \sum_{j=1}^{k-1}\sum_{i=1}^{k-1}u_j'(z)u_i'(\lambda)\partial_i\partial_j\log \theta (v)\right).
\end{multline}
Setting $v=0$, we obtain
\begin{multline} \label{ForM11M224}
\frac{\partial}{\partial z}\frac{\partial}{\partial \lambda}\log \theta \left[{\substack{\ualpha \\ \ubeta}}\right] (u(z)-u(\lambda))=\frac{1}{4(\mathbf P z-\mathbf P \lambda)^2}\left(\frac{\gamma(z)}{\gamma(\lambda)}+\frac{\gamma(\lambda)}{\gamma(z)}\right)^2\\ -\sum_{j=1}^{k-1}\sum_{i=1}^{k-1}u_j'(z)u_i'(\lambda)\partial_i\partial_j\log \theta (0).
\end{multline}
In particular, since $u(\lambda)=-u(\lambda^*)$ and $\gamma(\lambda)^2=-\gamma(\lambda^*)^2$, it follows that 
\begin{multline}\label{ForM11M224extra}
\frac{\partial}{\partial z}\frac{\partial}{\partial \lambda}\log \theta \left[{\substack{\ualpha \\ \ubeta}}\right] (u(z)+u(\lambda))
=\frac{\partial}{\partial z}\frac{\partial}{\partial \lambda}\log \theta \left[{\substack{\ualpha \\ \ubeta}}\right] (u(z)-u(\lambda^*))
\\=-\frac{1}{4(\mathbf P z-\mathbf P \lambda)^2}\left(\frac{\gamma(z)}{\gamma(\lambda)}-\frac{\gamma(\lambda)}{\gamma(z)}\right)^2 +\sum_{j=1}^{k-1}\sum_{i=1}^{k-1}u_j'(z)u_i'(\lambda)\partial_i\partial_j\log \theta (0).
\end{multline}
Substituting \eqref{ForM11M224} and \eqref{ForM11M224extra} into the definition of $W$ in \eqref{def:W}, we obtain that
\begin{equation}\label{ForM11M225}\frac{\partial}{\partial z}\frac{\partial}{\partial \lambda}\log \theta \left[{\substack{\ualpha \\ \ubeta}}\right] (u(z)-u(\lambda))=\frac{1}{2}W(z,\lambda)+\frac{1}{2(\mathbf Pz-\mathbf P\lambda)^2}.
\end{equation}
Substituting \eqref{ForM11M225} into \eqref{ForM11M223}, we obtain \eqref{ForM11M22}.

It follows from \eqref{thetaids2} that for $z,\lambda\in \mathcal S\setminus J_\mathcal S$,
\begin{multline}\label{thetaids3}\frac{\theta(0)\left(\frac{\gamma(z)}{\gamma(\lambda)}+\frac{\gamma(\lambda)}{\gamma(z)}\right)}{2\theta(u(z)-u(\lambda))}\\
= \frac{(\mathbf Pz- \mathbf P\lambda)\sqrt{\left( \sum_{j=1}^{k-1}u_j'(z)\partial_j \theta\left[{\substack{\ualpha \\ \ubeta}}\right](0)\right)\left(\sum_{j=1}^{k-1}u_j'(\lambda)\partial_j\theta\left[{\substack{\ualpha \\ \ubeta}}\right](0)\right)}}{\theta\left[{\substack{\ualpha \\ \ubeta}}\right](u(z)-u(\lambda))}.
\end{multline}
The square root is chosen such that it is equal to $\left( \sum_{j=1}^{k-1}u_j'(z)\partial_j \theta\left[{\substack{\ualpha \\ \ubeta}}\right](0)\right)$ when $\lambda=z$, and by Lemma \ref{Prime2} the branches may be chosen such that it is analytic on $\mathcal S\setminus J_{\mathcal S}$.
In particular it follows from Lemma \ref{Prime2} that  $\sum_{j=1}^{k-1}u_j'(z)\partial_j \theta\left[{\substack{\ualpha \\ \ubeta}}\right](0)$ has a pole at $b_k$, and by analytically continuing around $b_k$ and recalling the definition of $J_{\mathcal S}$, it follows that if $z$ is in the first sheet, then
\begin{equation}\label{sqrtpp*}
\sqrt{\sum_{j=1}^{k-1}u_j'(z^*)\partial_j \theta\left[{\substack{\ualpha \\ \ubeta}}\right](0)}=i\sqrt{\sum_{j=1}^{k-1}u_j'(z)\partial_j \theta\left[{\substack{\ualpha \\ \ubeta}}\right](0)}.
\end{equation}
Recall that for $z$ in the first sheet, $\gamma(z^*)=i\gamma(z)$. Denote the $LHS$ of \eqref{thetaids3} by $LHS(z,\lambda)$. Taking the product $LHS(z,\lambda)\times LHS(z^*,\lambda)$ and setting it equal to $RHS(z,\lambda)\times RHS (z^*,\lambda)$, and substituting $\gamma(z^*)=i\gamma(z)$, $u(z^*)=-u(z)$, and \eqref{sqrtpp*},  we obtain 
\begin{multline}\label{ForM11M12part1}
\frac{\theta(0)^2\left(\left(\frac{\gamma(z)}{\gamma(\lambda)}\right)^2-\left(\frac{\gamma(\lambda)}{\gamma(z)}\right)^2\right)}{4\theta(u(z)-u(\lambda))\theta(u(z)+u(\lambda))}\\ = \frac{(\mathbf Pz-\mathbf P\lambda)^2\left( \sum_{j=1}^{k-1}u_j'(z)\partial_j \theta\left[{\substack{\ualpha \\ \ubeta}}\right](0)\right)\left(\sum_{j=1}^{k-1}u_j'(\lambda)\partial_j\theta\left[{\substack{\ualpha \\ \ubeta}}\right](0)\right)}{\theta\left[{\substack{\ualpha \\ \ubeta}}\right](u(z)-u(\lambda))\theta\left[{\substack{\ualpha \\ \ubeta}}\right](u(z)+u(\lambda))}.
\end{multline}
Setting $\lambda_1=\lambda_2^*=\lambda$ in \eqref{identity1}, and recalling that $u(\lambda)=-u(\lambda^*)$, we obtain
\begin{multline}
\label{ForM11M12part2}
\theta\left(u(z)+u(\lambda)+v \right)\theta\left(u(z)-u(\lambda)-v \right)
\\
=-\frac{\theta\left[{\substack{\ualpha \\ \ubeta}}\right]\left( u(z)+u(\lambda)\right)\theta\left[{\substack{\ualpha \\ \ubeta}}\right]\left(u(\lambda)-u(z) \right)}{\theta\left[{\substack{\ualpha \\ \ubeta}}\right]\left(2u(\lambda)\right)
\left(\sum_{j=1}^{k-1}u_j'(z)\partial_j\theta\left[{\substack{\ualpha \\ \ubeta}}\right](0)\right)} \theta\left(v \right)\theta\left(2u(\lambda)+v \right)
\\ \times \sum_{j=1}^{k-1} u_j'(z)\Big( \partial_j \log \theta\left(2u(\lambda)+v \right)-\partial_j\log \theta\left[{\substack{\ualpha \\ \ubeta}}\right]\left(u(z)+u(\lambda)\right)\\ 
+\partial_j\log \theta\left[{\substack{\ualpha \\ \ubeta}}\right]\left(u(z)-u(\lambda)\right)-\partial_j \log \theta\left(v \right)
\Big),
\end{multline}
Multiplying \eqref{ForM11M12part1} and \eqref{ForM11M12part2} we obtain 
\begin{multline}\label{ForM11M12part3}
\frac{\theta(0)^2\left(\left(\frac{\gamma(z)}{\gamma(\lambda)}\right)^2-\left(\frac{\gamma(\lambda)}{\gamma(z)}\right)^2\right)\theta\left(u(z)+u(\lambda)+v \right)\theta\left(u(z)-u(\lambda)-v \right)}{4\theta(u(z)-u(\lambda))\theta(u(z)+u(\lambda))}\\
=\frac{(\mathbf Pz-\mathbf P\lambda)^2\left(\sum_{j=1}^{k-1}u_j'(\lambda)\partial_j\theta\left[{\substack{\ualpha \\ \ubeta}}\right](0)\right)}{\theta\left[{\substack{\ualpha \\ \ubeta}}\right]\left(2u(\lambda)\right)}
\theta\left(v \right)\theta\left(2u(\lambda)+v \right) \\
\times \sum_{j=1}^{k-1} u_j'(z)\Big( \partial_j \log \theta\left(2u(\lambda)+v \right)-\partial_j\log \theta\left[{\substack{\ualpha \\ \ubeta}}\right]\left(u(z)+u(\lambda)\right)\\ 
+\partial_j\log \theta\left[{\substack{\ualpha \\ \ubeta}}\right]\left(u(z)-u(\lambda)\right)-\partial_j \log \theta\left(v \right)
\Big),
\end{multline}
Now consider
\begin{equation} \label{funtheta1} \frac{\theta(0)^2\theta \left[{\substack{\ualpha \\ \ubeta}}\right] (u(z)-u(\lambda))^2 \left(\frac{\gamma(z)}{\gamma(\lambda)}+\frac{\gamma(\lambda)}{\gamma(z)}\right)^2}{4(\mathbf Pz-\mathbf P\lambda)^2\theta(u(z)-u(\lambda))^2\left(\sum_{j=1}^{k-1}u_j'(\lambda)\partial_j\theta\left[{\substack{\ualpha \\ \ubeta}}\right](0)\right)\left(\sum_{j=1}^{k-1}u_j'(z)\partial_j\theta\left[{\substack{\ualpha \\ \ubeta}}\right](0)\right)}
\end{equation} 
as a function of $z$.
By Lemma \ref{lemDivthetadiff}, Lemma \ref{Prime}, and Lemma \ref{Prime2}, \eqref{funtheta1} has no zeros or poles. By \eqref{eq:quasi}, \eqref{eq:Aint} and \eqref{eq:period}, \eqref{funtheta1} has no monodromy on $\mathcal S$ and is  a meromorphic function. Thus it is constant. By taking $z\to \lambda$, we obtain that \eqref{funtheta1} is identically equal to $1$. Recall from \eqref{eq:gamma} that $\gamma(\lambda)^2=-\gamma(\lambda^*)^2$, and that
 \begin{equation*}\gamma'(\lambda)=\frac{\gamma(\lambda)}{4}\left(\sum_{j=1}^k \frac{1}{\mathbf P\lambda-b_j}-\frac{1}{\mathbf P\lambda-a_j}\right).
\end{equation*}
By taking $z\to \lambda^*$ and taking the square root in \eqref{funtheta1}, we obtain
\begin{equation}\label{funtheta2}
\frac{\theta(0)\theta\left[{\substack{\ualpha \\ \ubeta}}\right](2u(\lambda))}{4\theta(2u(\lambda))\left(\sum_{j=1}^{k-1}u_j'(\lambda)\partial_j\theta\left[{\substack{\ualpha \\ \ubeta}}\right](0)\right)}\left(\sum_{j=1}^k\frac{1}{\mathbf P\lambda-b_j}-\frac{1}{\mathbf P\lambda-a_j}\right)=1.
\end{equation}
(To verify that the correct sign was taken when taking the square root we simply take $\lambda\to b_k$ in \eqref{funtheta2} and verify that the left hand side equals $1$ in this limit). Substituting \eqref{funtheta2} into the right-hand side of \eqref{ForM11M12part3} and dividing by $\theta(0)^2\theta(v)^2$, we obtain \eqref{ForM11M12}, concluding the proof of Proposition \ref{PropForM}.

\section{Analysis of the Riemann-Hilbert problem $Y$}\label{sec:trans}
As is standard in the analysis of RH problems (see \cite{Deift}), we will apply the method of non-linear steepest descent developed by Deift/Zhou to evaluate the large $N$-asymptotics for the Riemann-Hilbert problem for $Y$. In the multi-cut setting that we are interested in, this was developed by \cite{DKMVZ}, and we have also been inspired by the work of \cite{KuijVanlessen}. To avoid repetition, we consider a generic RH problem $Y(z) = Y_N(z)$ associated with the symbol $\nu_\epsilon(x)=F(x)e^{-NV(x)}\omega_\epsilon(x)$ from Section \ref{sec:FHproof}. Observe that we omit the parameters $s$ and $t$ in our notation but keep the parameter $\epsilon$ despite the fact that we deform the symbol in each of these variables. In this section, we discuss some transformations of the generic RH problem that are common to all of the RH problems relevant to us. Note that one should interpret $\omega_{\epsilon}(x) \equiv 1$ when $p = 0$, which corresponds to the situation when we deform our potential $V$ and interpolate our function $f$ as in Section \ref{sec:pfasy} and Section \ref{sec:smasy} respectively. 
\subsubsection*{The first transformation $Y \mapsto T$.}
Throughout this section, we will make use of the notation of Section \ref{sec:intro_em}. Let us introduce the ``$g$-function" 
\begin{align} \label{eq:g_fn}
g(z) = \int_{\R} \log (z-\lambda) d\mu_V(\lambda) = \int_{J} \log(z-\lambda) \psi_V(\lambda)d\lambda, \qquad z\in \mathbb C \setminus (-\infty, b_k],
\end{align}
where the principal branch is chosen for the logarithm. The $g$-function is analytic in $\mathbb C \setminus (-\infty, b_k]$ with continuous boundary values on $(-\infty,b_k]$ which we denote by $g_{\pm}$. Since
\begin{align*}
g_{+}(x) + g_{-}(x) = 2\int_\R \log|\lambda - x| d\mu_V(\lambda), \qquad x \in \R,
\end{align*}

\noindent the Euler-Lagrange equations \eqref{eq:EL1} and \eqref{eq:EL2}, combined with our assumption that the inequality in \eqref{eq:EL2} is strict, may be rewritten as
\begin{align}
\label{eq:EL1s}
g_{+}(x) + g_{-}(x) - V(x) + \ell = 0 & \qquad \text{for } x \in J, \\
\label{eq:EL2s}
g_{+}(x) + g_{-}(x) - V(x) + \ell < 0  & \qquad \text{for } x \in J^c.
\end{align}
We can then define our first transformation. For $z\in \C\setminus\R$, let
\begin{align}\label{eq:Tdef}
T(z) = e^{\frac{N}{2}\ell \sigma_3} Y(z) e^{-N(g(z) +\frac{\ell}{2}) \sigma_3}
\end{align}
\noindent where
\begin{align*}
\sigma_3 = \begin{pmatrix} 1 & 0 \\ 0 & -1 \end{pmatrix}, \qquad x^{\sigma_3}=\begin{pmatrix} x&0\\0&x^{-1}\end{pmatrix}.
\end{align*}
Relying on the fact that $g$ is analytic on $\mathbb C\setminus (-\infty,b_k]$ and the fact that $g(z)=\log(z)+\mathcal O(1/z)$ as $z\to \infty$, it follows that $T$ satisfies the the following RH problem.

\subsubsection{The RH problem for $T$} 

\begin{itemize}[leftmargin=0.75cm]
\item[$(a)$] $T: \C \setminus \R \to \C^{2 \times 2}$ is analytic.
\item[$(b)$] $T$ has continuous boundary values $T_\pm(x) = \lim_{\epsilon \to 0^+} T(x \pm i\epsilon)$ on $\R$ which satisfy the jump condition
\begin{align}\label{eq:Tjump}
T_+(x) = T_-(x) \begin{pmatrix}
e^{-N(g_{+}(x) - g_{-}(x))} &F(x)e^{N(g_{+}(x) + g_{-}(x) - V(x) + \ell)}\omega_{\epsilon}(x)\\
0 & e^{N(g_{+}(x) - g_{-}(x))}
\end{pmatrix}
\end{align}
\item[$(c)$] As $z \to \infty$,
\begin{align}\label{eq:Tinfty}
T(z) = I + \mathcal O(z^{-1}).
\end{align}
\end{itemize}

\subsubsection*{The second transformation $T \mapsto S$.} In this step we perform the ``opening of lenses". Consider the function
\begin{align}\label{eq:phidef}
\phi(z) := - \pi i \int_{b_k}^z \psi_V(\lambda) d\lambda,  \qquad z \in U_{V} \setminus (-\infty, b_k),
\end{align}
where the path of integration lies in $U_{V} \setminus (-\infty, b_k]$, and we recall that $U_{V}$ is the domain of analyticity of $V$. We observe that for $z\in J$, taking the derivative of \eqref{eq:g_fn}, we have by a residue calculation that  $\frac{d}{dz}(g_+(z)-g_-(z))=-2\pi i \psi_V(z)$, and combined with \eqref{eq:EL1s}, we obtain 
\begin{equation*}\begin{aligned}\frac{d}{dz} \left(g_+(z)-\frac{V(z)}{2}\right)&=\frac{d}{dz}\left(\frac{g_+(z)-g_-(z)}{2}+\frac{g_+(z)+g_-(z)-V(z)+\ell}{2}\right)\\ &=-\pi i \psi_V(z). \end{aligned} \end{equation*}
Since $\frac{d}{dz}\phi_+(z)=-\pi i \psi_V(z)$ on $J$, and $\phi(b_k)=g(b_k)-V(b_k)/2+\ell/2$
we obtain by analytic continuation onto $U_V\setminus (-\infty, b_k]$ that 
\begin{equation}\label{formulaphi} \phi(z) = g(z) - V(z)/2 + \ell/2.\end{equation} In particular, we have
\begin{align}\label{jumpphi1}
g_{+}(x) + g_{-}(x) - V(x) + \ell & = \phi_{+}(x) + \phi_{-}(x), && \text{for} \qquad x\in \R,\\
\phi_+(x)+\phi_-(x)&=0,&& \text{for} \qquad x\in J,\label{jumpphi2}\\
\pm(g_{+}(x) - g_{-}(x)) &= 2\phi_{\pm}(x),  && \text{for} \qquad x\in J,\label{jumpphi3}
\end{align}
where we observe that \eqref{jumpphi2} is obtained from \eqref{jumpphi1} and \eqref{eq:EL1s}.
Using these identities together, we rewrite the jumps for $T$ in terms of $\phi$. Furthermore, for $x \in J$, the matrix $T_{-}(x)^{-1}T_{+}(x)$ can be factorized as
\begin{equation}\begin{aligned}
\label{Factorization} &\begin{pmatrix}
e^{-2N\phi_{+}(x)} & e^{f(x)} \omega_{\epsilon}(x)\\
0 & e^{-2N \phi_{-}(x)}
\end{pmatrix}
= \begin{pmatrix}
1 & 0 \\
e^{-f(x)} \omega_{\epsilon}(x)^{-1} e^{-2N \phi_{-}(x)} & 1
\end{pmatrix}\\
& \qquad \qquad \times \begin{pmatrix}
0 & e^{f(x)} \omega_{\epsilon}(x) \\
- e^{-f(x)} \omega_{\epsilon}(x)^{-1} & 0
\end{pmatrix}
\begin{pmatrix}
1 & 0 \\
e^{-f(x)} \omega_{\epsilon}(x)^{-1} e^{-2N \phi_{+}(x)} & 1
\end{pmatrix},
\end{aligned}\end{equation}
where we observe that the left hand side indeed is the same jump matrix as that in \eqref{eq:Tjump} by \eqref{jumpphi2} and \eqref{jumpphi3}, and that equality between the left and the right-hand side hold by \eqref{jumpphi2}.
Around each interval $[a_j,b_j]$, we open lenses $J_j^+$ and $J_j^-$ as shown in Figure \ref{ContourS}, which are contained in $U_V$. The contour $J_j^+$ is in the upper half plane and $J_j^-$ is in the lower half plane. Moreover, both $J_j^+$ and $J_j^-$ start at $a_j$, end at $b_j$, and if any of the $t_l$ are in the interval $(a_j,b_j)$, then the contours pass nearby these points as well. There is some freedom in the choice of the lenses. In particular we require that $\Re \phi(z)$ is positive on $J_j^+$ and $J_j^-$,  see Lemma \ref{expsmalljump} below. The shape of the contours  $J_j^+$ and $J_j^-$ near the points $a_j,b_j$ and $t_l$ will be specified  in Section \ref{sec:para}. For now we mention that when $J_j^+$ and $J_j^-$ meets the points $a_j$ and $b_j$, each of them forms an angle $\pi/3$ with $(a_j,b_j)$.

We write $\Sigma_S= \cup_{j=1}^k (J_j^+\cup J_j^-)\cup \R$ for the contour consisting of the lenses and the real axis.
\begin{figure}[t]
	\begin{center}
		\begin{picture}(100,80)(-5,-40)
		\put(-138,-2){$a_1$}
		\put(-28,-2){$b_1$}
		\put(162,-2){$b_k$}
		\put(51,-2){$a_k$}
		
		\put(-102,2){$t_1$}
		\put(-62,2){$t_2$}
		\put(108,2){$t_p$}
		
		\put(-150,-5){\line(1,0){150}}
		\put(10,-6){$\dots\dots$}
		\put(50,-5){\line(1,0){150}}
		
		\put(140,10){\thicklines\vector(1,0){.0001}}
		\put(140,-5){\thicklines\vector(1,0){.0001}}
		\put(140,-20){\thicklines\vector(1,0){.0001}}		
		\put(90,10){\thicklines\vector(1,0){.0001}}
		\put(90,-5){\thicklines\vector(1,0){.0001}}
		\put(90,-20){\thicklines\vector(1,0){.0001}}		

		\put(-76,10){\thicklines\vector(1,0){.0001}}
		\put(-76,-5){\thicklines\vector(1,0){.0001}}
		\put(-76,-20){\thicklines\vector(1,0){.0001}}

		\put(-40,10){\thicklines\vector(1,0){.0001}}
		\put(-40,-5){\thicklines\vector(1,0){.0001}}
		\put(-40,-20){\thicklines\vector(1,0){.0001}}

		\put(-110,10){\thicklines\vector(1,0){.0001}}
		\put(-110,-5){\thicklines\vector(1,0){.0001}}
		\put(-110,-20){\thicklines\vector(1,0){.0001}}

		\put(-135,-5.15){\thicklines\vector(1,0){.0001}}
		\put(-10,-5.15){\thicklines\vector(1,0){.0001}}
		\put(180,-5.15){\thicklines\vector(1,0){.0001}}
		
		\put(190,-2){$\mathbb{R}$}
		\put(-85,20){$J_{1}^{+}$}
		\put(108,20){$J_{k}^{+}$}
		\put(-85,-35){$J_{1}^{-}$}
		\put(108,-35){$J_{k}^{-}$}
		
		\qbezier(-128,-5)(-113,25)(-98,-5)
		\qbezier(-98,-5)(-78,25)(-58,-5)
		\qbezier(-58,-5)(-43,25)(-28,-5)
		\qbezier(-128,-5)(-113,-35)(-98,-5)
		\qbezier(-98,-5)(-78,-35)(-58,-5)
		\qbezier(-58,-5)(-43,-35)(-28,-5)
        \qbezier(62,-5)(87,25)(112,-5)
        \qbezier(62,-5)(87,-35)(112,-5)
        \qbezier(112,-5)(137,25)(162,-5)
        \qbezier(112,-5)(137,-35)(162,-5)
		\end{picture}
		\caption{The contour $\Sigma_S$, in a situation where $t_1,t_2 \in (a_1,b_1)$ and $t_{p}\in (a_k,b_k)$. }
		\label{ContourS}
	\end{center}
\end{figure}
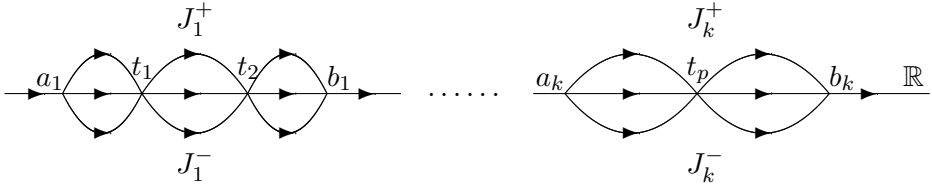
The next transformation is defined by
\begin{align} \label{eq:Sdef}
S(z) = T(z) \times
\begin{cases}
I, & \text{for } z \text{ outside the lenses},\\
\begin{pmatrix}
1 & 0 \\
-e^{-f(z)} \omega_{\epsilon}(z)^{-1} e^{-2N \phi(z)} & 1
\end{pmatrix},
& \begin{subarray}{l} \ds \text{for } z \text{ inside the lenses, } \\[0.1cm] \ds \hspace{0.6cm} \Im(z) > 0, \end{subarray} \\
\begin{pmatrix}
1 & 0 \\
e^{-f(z)} \omega_{\epsilon}(z)^{-1} e^{-2N \phi(z)} & 1
\end{pmatrix},
& \begin{subarray}{l} \ds \text{for } z \text{ inside the lenses, } \\[0.1cm] \ds \hspace{0.6cm} \Im(z) < 0. \end{subarray}
\end{cases}
\end{align}
Before stating the RH problem for $S$, we first note the following. For $j = 1, \dots, k-1$ and $x \in (b_j, a_{j+1})$,
\begin{align}\label{lol3}
g_{+}(x) - g_{-}(x)
= 2\pi i \int_{a_{j+1}}^{b_k} \mu_V(dx) = 2 \pi i\Omega_j.
\end{align}
By combining  \eqref{Factorization} and \eqref{lol3}, and the fact that $\phi$ is analytic on $\mathbb C\setminus(-\infty,b_k]$, we obtain that $S$ defined by \eqref{eq:Sdef} solves the following RH problem.

\subsubsection*{The RH problem for $S$}

\begin{itemize}[leftmargin=0.75cm]
\item[$(a)$] $S: \C \setminus \Sigma_S \to \C^{2\times 2}$ is analytic, where $\Sigma_S$ is as in Figure \ref{ContourS}.
\item[$(b)$] $S$ has the  jump relation $S_+(z)=S_-(z)J_S(z)$ for $z\in\Sigma_S$ $($where the orientation is as in Figure \ref{ContourS}$)$.
\begin{align}
& J_S(z)=\begin{pmatrix}
1 &F(z) e^{N(\phi_{+}(z) + \phi_{-}(z))}\omega_{\epsilon}(z) \\
0 & 1
\end{pmatrix}, & & z \in \R \setminus [a_1, b_k], \label{eq:Sjump1} \\
& J_S(z)= \begin{pmatrix}
0 & e^{f(z)}\omega_{\epsilon}(z) \\
- e^{-f(z)}\omega_{\epsilon}(z)^{-1} & 0
\end{pmatrix}, & & z \in \cup_{j=1}^k(a_j,b_j), \label{eq:Sjump2} \\
& J_S(z)=\begin{pmatrix}
e^{-2 \pi i N \Omega_j} &  F(z) e^{N(\phi_{+}(z) + \phi_{-}(z))}\omega_{\epsilon}(z)\\
0 & e^{2 \pi i N \Omega_j}
\end{pmatrix}, & & z \in (b_j, a_{j+1}), \label{eq:Sjump3} \\
& J_S(z)= \begin{pmatrix}
1 & 0\\
 e^{-f(z)}\omega_{\epsilon}(z)^{-1} e^{-2N \phi(z)} & 1
\end{pmatrix}, & & z \in (J^{+}\cup J^{-})\setminus  \{a_j,b_j\}_{j=1}^k. \label{eq:Sjump4}
\end{align}
where $J^{\pm} := \cup_j J_j^{\pm}$.
\item[$(c)$] $S(z) = I + \mathcal{O}(z^{-1})$ as $z \to \infty$.
\item[$(d)$] $S(z) = \mathcal{O}(1)$ as $z$ approaches any of the points of self-intersection of the contour $\Sigma_S$.
\end{itemize}
A crucial fact to our future analysis, see Lemma \ref{expsmalljump} below,  is that the jumps \eqref{eq:Sjump1} and \eqref{eq:Sjump4} converge pointwise to the identity $I$ as $N\to \infty$, and for $z$ bounded away from $a_j,b_j,t_j$, they tend uniformly to the identity.

\section{Main parametrix} \label{sec:global}
We construct a function $M(z)$ which is analytic on $\mathbb C\setminus [a_1,b_k]$, which has the same jumps as $S$ on $[a_1,b_k]\setminus \{a_j,b_j\}_{j=1}^k$, and which satisfies $M(z)=I+\mathcal O\left(z^{-1}\right)$ as $z\to \infty$. Then the jumps of $SM^{-1}$ will converge uniformly  to the identity except on open neighbourhoods of the points $a_j,b_j,t_j$. We divide the construction of such a function into two parts. First we deal with the special case $F(z)\omega_\epsilon(z)\equiv 1$ which is precisely the situation considered in the fundamental works \cite{DKMVZ,DIZ}. Subsequently we extend to the general situation $F(z)\omega_\epsilon(z)\not \equiv 1$, where we are inspired by the construction of the main parametrix considered by  Kuijlaars and Vanlessen in \cite{KuijVanlessen}, but also provide a new representation for this type of main parametrix. In Remark \ref{KuijVan} below, we comment on how our representation may be brought to the form presented in \cite{KuijVanlessen}.
\subsection{Main parametrix for $F(z)\omega_\epsilon(z)\equiv 1$}
Let $\lambda \in \mathbb C\cup\{ \infty \} \setminus J$, and let $v_1,\dots, v_{k-1}$ be real. Consider the following RH problem
\subsubsection*{RH problem for $N_\lambda$}
\begin{itemize}
\item[$(a)$] $N_\lambda $ is analytic on $\C \setminus [a_1, b_k] $.
\item[$(b)$] $N_\lambda$ has continuous boundary values $N_{\lambda,\pm}$ on $[a_1, b_k] \setminus  \{a_j, b_j\}_{j=1}^k$ satisfying
\begin{align}
N_{\lambda,+}(x) =N_{\lambda,-}(x) \times
\begin{cases}
\begin{pmatrix}
0 & 1 \\ -1 & 0 \end{pmatrix}, & \qquad x \in \bigcup_{j=1}^k (a_j, b_j), \\
e^{-2 \pi i v_j \sigma_3}, & \qquad x \in (b_j, a_{j+1}), \quad j = 1, \dots, k-1.
\end{cases}
\end{align}

\item[$(c)$] $N_\lambda(z) = I + \mathcal{O}(z-\lambda)$ as $z \to \lambda$ for $\lambda \in \mathbb C\setminus J$, while if $\lambda=\infty$, then $N_\infty(z)=I+\mathcal O(z^{-1})$ as $z\to \infty$.

\item[$(d)$] $N_\lambda(z) = \mathcal{O}((z-z_\ast)^{-\frac{1}{4}})$ as $z \to z_\ast \in  \{a_j, b_j\}_{j=1}^k$.
\item[$(e)$] As $z\to \infty$, $N_\lambda(z)\to N_\lambda(\infty)$ for some constant $N_\lambda(\infty)$.
\end{itemize}

In \eqref{eq:Mlamdef} below, we construct an explicit solution to the RH problem for $N_\lambda$.
When we wish to emphasize the dependence of $N_\lambda(z)$ on $v=(v_1,\dots,v_{k-1})^T$, we denote $N_{\lambda}(z;v)$.
In the special case $F(z)\omega_\epsilon(z)\equiv 1$, our main parametrix is given by $N_\infty(z;N\Omega)$.
While we only rely on $N_\infty$ to construct a main parametrix for the RH problem for $S$, we will later rely on the fact that 
\begin{equation}\label{Ninftylambda}N_\infty(\lambda)^{-1}N_\infty(z)=N_\lambda(z).\end{equation} We will construct a function which solves the RH problem for $N_\lambda$. The first step is to construct a solution for the special case where $v_1,\dots,v_{k-1}=0$, which we denote by $N_\lambda(z;0)$. This is given by
\begin{align*}
N_\lambda(z;0) = 
\begin{pmatrix}
\frac{1}{2}\left(\frac{\gamma(z)}{\gamma(\lambda)} +\frac{\gamma(\lambda)}{ \gamma(z)}\right)
&\frac{1}{2i}\left(\frac{\gamma(z)}{\gamma(\lambda)} -\frac{\gamma(\lambda)}{ \gamma(z)}\right)\\
-\frac{1}{2i}\left(\frac{\gamma(z)}{\gamma(\lambda)} -\frac{\gamma(\lambda)}{ \gamma(z)}\right)
& \frac{1}{2}\left(\frac{\gamma(z)}{\gamma(\lambda)} +\frac{\gamma(\lambda)}{ \gamma(z)}\right)
\end{pmatrix},
\end{align*}
where $\gamma$ was defined in \eqref{eq:gamma}. Then it follows that $N_\lambda(z;0)$ is analytic on  $\C\setminus J$ and that as $z\to \lambda$, $N_\lambda(z;0)\to I$. For $x\in (a_j,b_j)$, we have $\gamma_+(x)=i\gamma_-(x)$, and thus
\begin{equation*}N_{\lambda,+}(x)=N_{\lambda,-}(x)\begin{pmatrix}0&1\\-1&0\end{pmatrix}. \end{equation*}
Thus $N_\lambda(z;0)$ solves the RH problem for $N_\lambda$ in the special case where $v_1,\dots,v_{k-1}=0$.

To construct $N_{\lambda}$ for general values of $v_{1},\ldots,v_{k-1}$, we follow \cite{DIZ}, and use $\theta$ functions and the Abel map.
Recall that $\boldsymbol{\omega_j}$ was the unique holomorphic 1-form on $\mathcal S$ satisfying \eqref{eq:Aint}, and is of the form \eqref{formomega}. 
 
Recall $\mathsf Q_j$ from \eqref{formomega}.
Since $\mathsf Q_j$ is of degree at most $k-2$, we note that the limit $u_j(\infty)=\lim_{z\to \infty}u_j(z)$ exists, and furthermore since the residue at $\infty$ is $0$, $u_j(z)$ extends to an analytic function on $\mathbb C\setminus[a_1,b_k]$. By \eqref{eq:Aint}, \eqref{eq:period}, and the fact that  $\frac{\mathsf Q_j}{\mathcal R^{1/2}}$ has zero residue at $\infty$, it is easily verified that
\begin{equation} \label{eq:ujump1} u_+(x)-u_-(x)=\tau_j, \qquad x\in (b_j,a_{j+1}),\end{equation}
for $j=1,\dots,k-1$, where $\tau_j$ denotes the $j$'th column vector of $\tau$ (so that $\tau=(\tau_1,\dots,\tau_{k-1})$), and that
\begin{equation}\label{eq:ujump2}
u_{+}(x)+u_{-}(x)=-2\sum_{s=j}^{k-1}\int_{b_s}^{a_{s+1}}\frac{\mathsf Q(y)}{\mathcal R^{1/2}(y)}dy=\sum_{s=j}^{k-1}\oint_{A_s}\boldsymbol\omega=\sum_{s=j}^{k-1}e_s,
\end{equation}
for $x\in (a_j,b_j)$ and $j=1,\dots,k$, where $(e_j)_l=\delta_{l,j}$. We observe that $u_+$ and $u_-$ are continuous at the points $a_j$ and $b_j$, and thus
\begin{equation}\label{uofa}
u_+(a_j)=\frac{u_+(a_j)+u_-(a_j)}{2}+\frac{u_+(a_j)-u_-(a_j)}{2}=\frac{1}{2}\sum_{s=j}^{k-1}e_s+\frac{\tau_{j-1}}{2},\end{equation}
for $j=2,\dots, k$, and
\begin{equation} \label{uofb} u_+(b_j)=\frac{1}{2}\sum_{s=j}^{k-1}e_s+\frac{\tau_j}{2},\end{equation}
for $j=1,\dots, k-1$.

Recall the definition of the $\theta$-function in \eqref{eq:thetadef}. For $z\in \mathbb C\setminus [a_1,b_k]$, define 
\begin{align*}
m_\lambda(z)=m_\lambda(z; v) = 
\frac{\theta(0)}{\theta(v)} \begin{pmatrix}
\frac{\theta(u(z) + v -u(\lambda))}{\theta(u(z) -u(\lambda))} 
& \frac{\theta(u(z) - v +u(\lambda))}{\theta(u(z) +u(\lambda))} \\
\frac{\theta(u(z) + v  +u(\lambda))}{\theta(u(z) +u(\lambda))} 
&  \frac{\theta(u(z) - v  -u(\lambda))}{\theta(u(z) -u(\lambda))}
\end{pmatrix}.
\end{align*}
It follows from Lemma \ref{lemDivthetadiff}  that $m$ is well defined and meromorphic on $z\in \mathbb C\setminus [a_1,b_k]$.

By \eqref{eq:quasi}, we have the following standard properties 
\begin{equation}\label{quasitheta}
\theta(\xi+e_j)=\theta(\xi) \qquad \text{and} \qquad \theta(\xi+\tau_j)=e^{-\pi i \tau_{j,j}-2\pi i \xi_j}\theta (\xi).
\end{equation}
Thus it follows by \eqref{eq:ujump2} that 
\begin{equation*}m_{\lambda,+}(x)=m_{\lambda,-}(x)\begin{pmatrix}0&1\\1&0\end{pmatrix}, \end{equation*}
for $x\in \cup_{j=1}^k (a_j,b_j)$, and by \eqref{eq:ujump1} that
\begin{equation*} m_{\lambda,+}(x)=m_{\lambda,-}(x)e^{-2\pi i v_j\sigma_3}, \end{equation*}
for $x\in (b_j,a_j+1)$, with $j=1,\dots,k-1$.
As a consequence, if we define
\begin{multline} \label{eq:Mlamdef}
N_\lambda(z)=\frac{\theta(0)}{2\theta(v)}\\
\times  \begin{pmatrix}
\left(\frac{\gamma(z)}{\gamma(\lambda)} +\frac{\gamma(\lambda)}{ \gamma(z)}\right)\frac{\theta(u(z) + v -u(\lambda))}{\theta(u(z) -u(\lambda))} 
&\frac{1}{i}\left(\frac{\gamma(z)}{\gamma(\lambda)} -\frac{\gamma(\lambda)}{ \gamma(z)}\right) \frac{\theta(u(z) - v +u(\lambda))}{\theta(u(z) +u(\lambda))} \\
-\frac{1}{i}\left(\frac{\gamma(z)}{\gamma(\lambda)} -\frac{\gamma(\lambda)}{ \gamma(z)}\right)\frac{\theta(u(z) + v  +u(\lambda))}{\theta(u(z) +u(\lambda))} 
& \left(\frac{\gamma(z)}{\gamma(\lambda)} +\frac{\gamma(\lambda)}{ \gamma(z)}\right) \frac{\theta(u(z) - v  -u(\lambda))}{\theta(u(z) -u(\lambda))}
\end{pmatrix},
\end{multline}
then we observe that
\begin{equation*}
\left(N_\lambda(z)\right)_{ij}=\left(m_{\lambda}(z)\right)_{ij}\left(N_{\lambda}(z;0)\right)_{ij}, \end{equation*}
and  so it is easily verified by using the jumps of $m$ and $N_\lambda(z;0)$ that $N_\lambda(z)$ satisfies condition $(b)$ in the RH problem for $N_\lambda$.
By Lemma \ref{lemzerosgamma} and Lemma \ref{lemDivthetadiff}, the poles of $\frac{1}{\theta(u(z)\pm u(\lambda))}$ are cancelled by the zeros of $\frac{\gamma(z)}{\gamma(\lambda)} \mp \frac{\gamma(\lambda)}{ \gamma(z)}$, and thus condition (a) of the RH problem for $N_\lambda$ is satisfied. By further recalling that $\gamma(z)=\mathcal O((z-z_*)^{-1/4})$ as $z\to z_*$ for $z_*\in \{a_j,b_j\}_{j=1}^k$, condition (d) is satisfied. Conditions (c) and (e) are easily verified by the definition of $N_\lambda$. Thus $N_\lambda$ defined in \eqref{eq:Mlamdef} solves the RH problem for $N_\lambda$.

\subsection{Main parametrix for $F(z)\omega_\epsilon(z) \not \equiv 1$}
To construct a main parametrix valid also in the general setting $F(z)\omega_\epsilon(z)\not \equiv 1$ for $z\in J$, we first consider some properties of $\Theta$ defined in \eqref{Theta}.

By Lemma \ref{Prime}, $\Theta(z,\lambda)$ is analytic as a function of $z\in \mathbb C\setminus[a_1,b_k]$, with a zero of order $1$ at $z=\lambda$. By \eqref{eq:quasi}, \eqref{eq:ujump1}, and \eqref{eq:ujump2},
\begin{align}\label{JumpsTheta1}
\Theta(z_+,\lambda)\Theta(z_-,\lambda)&=1&&\textrm{for }z\in J,\\
\Theta(z_+,\lambda)&=e^{4\pi i u_j(\lambda)} \Theta(z_-,\lambda) &&\textrm{for }z\in (b_j,a_{j+1}) \textrm{ with }j=1,\dots,k-1, \label{JumpsTheta2}
\end{align}
where $\Theta(z_\pm,\lambda)=\lim_{\delta\downarrow 0}\Theta(z\pm i\delta, \lambda)$ for $z\in \mathbb R$.

Recall the definition of $w_z(\lambda)$ in \eqref{defwlambda}.

Since $\Theta(z,\lambda)$ has a zero of order $1$ at $z=\lambda$, it follows that 
\begin{align} \label{polewlambda} w_z(\lambda)&=\frac{1}{\lambda-z}+\mathcal O(1)&& \textrm{as }\lambda \to z\neq \infty,\\
w_\infty(\lambda)&=-\frac{1}{\lambda}+\mathcal O\left(\lambda^{-2}\right) &&\textrm{as }\lambda \to \infty.\label{polewlambda2}
 \end{align}
As a function of $z$, $w_z(\lambda)$ is analytic on $\mathbb C\setminus ([a_1,b_k]\cup\{\lambda\})$, while  as a function of $\lambda$, $w_z(\lambda)$ is analytic on $\mathbb C\setminus (J\cup \{z\})$ (the fact that there are no jumps on $(b_j,a_{j+1})$ follows by reversing the role of $z,\lambda$ in  \eqref{JumpsTheta2} and taking the derivative).
By \eqref{JumpsTheta1} and \eqref{JumpsTheta2}, we obtain
\begin{align}
\label{jumpswlambda1} w_z(\lambda_+)+w_z(\lambda_-)&=0&&\textrm{for }\lambda\in J,\\
\label{jumpswlambda2} w_{z_+}(\lambda)+w_{z_-}(\lambda)&=0&&\textrm{for } z \in J,\\
\label{jumpswlambda3} w_{z_+}(\lambda)-w_{z_-}(\lambda)&=4\pi i u_j'(\lambda)&&\textrm{for }z\in (b_j,a_{j+1})\,\, \textrm{with }j=1,\dots,k-1.
\end{align}
By the above jump conditions and \eqref{polewlambda}, and the fact that $w_z(\lambda)$ is bounded as a function of $z$ as $z\to x\in \{a_j,b_j\}_{j=1}^k$, it is easily verified, by considering it as a function of $z$, that
\begin{equation}\label{2ndrepwlambda}
\frac{w_z(\lambda)}{\mathcal R^{1/2}(z)}=\frac{1}{\mathcal R^{1/2}(\lambda)(\lambda-z)}+2 \sum_{j=1}^{k-1} u_j'(\lambda)\int_{b_j}^{a_{j+1}}\frac{dx}{\mathcal R^{1/2}(x)(x-z)}.
\end{equation}
Given a function $\mathfrak g$ analytic on a neighbourhood of $J$, define
\begin{equation}\label{defdg}
d_{\mathfrak g}(z)=-\frac{1}{4\pi i} \oint_{\gamma_{z,\mathfrak g}}\mathfrak g(\lambda)w_z(\lambda)d\lambda=\frac{1}{2\pi i}\int_J\mathfrak g(\lambda)w_z(\lambda_+)d\lambda,
\end{equation}
where $\gamma_{z,\mathfrak g}$ is a curve oriented counterclockwise enclosing $J$ but not $z$, and is such that $\mathfrak g$ is analytic in a neighbourhood of $\gamma_{z,\mathfrak g}$. For $z\in J$, it follows by 
\eqref{polewlambda} that
\begin{multline*} 
d_{\mathfrak g}(z+i\epsilon)+d_{\mathfrak g}(z-i\epsilon)=\frac{\mathfrak g(z+i\epsilon)+\mathfrak g(z-i\epsilon)}{2} \\
-\frac{1}{4\pi i}\oint_{\tilde \gamma} \mathfrak g(\lambda)(w_{z+i\epsilon}(\lambda)+w_{z-i\epsilon}(\lambda))d\lambda,
\end{multline*}
where $\tilde \gamma$ encloses both $J$ and $z\pm i\epsilon$. Taking the limit $\epsilon\to 0$, we obtain by \eqref{jumpswlambda2} that
\begin{equation}\label{jumpsdf1}
d_{\mathfrak g,+}(z)+d_{\mathfrak g,-}(z)=\mathfrak g(z), \end{equation}
for $z\in J$. Recall the definition of $\Upsilon$ in \eqref{def:Omegahat}. By \eqref{jumpswlambda3},
\begin{equation}\label{jumpsdf2}
d_{\mathfrak g,+}(z)-d_{\mathfrak g,-}(z)=-\oint_{\gamma_{z,\mathfrak g}}u_j'(\lambda)\mathfrak g(\lambda)d\lambda=-2\pi i \Upsilon_j(\mathfrak g),\end{equation}
for $z\in (b_j,a_{j+1})$ with $j=1,\dots,k-1$.

Recall that $F(z)$ has the form $F(z)=e^{f(z)}$ for $z$ in a neighbourhood of $J$, where $f$ is analytic.
Now define
\begin{equation}\label{eq:Ddef}
D(z)=\exp(d_f(z)+d_\epsilon(z)),
\end{equation}
where $d_\epsilon=d_{\log \omega_\epsilon}$.

The function $D$ will appear in our construction of the main parametrix, and the properties set out in the following lemma are important
\begin{lemma}\label{le:D}
 $D$ satisfies the following properties.
\begin{itemize}[leftmargin=0.75cm]
\item[$(a)$] $D: \C \setminus [a_1, b_k] \to \C$ is analytic.
\item[$(b)$] $D$ has continuous boundary values $D_\pm$ on $[a_1, b_k] \setminus  \{a_j, b_j\}_{j=1}^k$ satisfying
\begin{align}\label{eq:Djump}
\begin{cases}
D_+(z) D_-(z) =  e^{f(z)}\omega_{\epsilon}(z), & z \in \cup_{j=1}^k (a_j, b_j), \\
D_+(z) D_-(z)^{-1} = e^{-2\pi i \Upsilon_j}, & z \in (b_j, a_{j+1}), \quad j = 1, \dots, k-1.
\end{cases}
\end{align}
\item[$(c)$] There exists some $D_\infty \in \C$ such that as $z \to \infty$,
\begin{align}\label{eq:Dinfty}
D(z) = D_\infty (1 + \mathcal{O}(z^{-1})).
\end{align}
\item[$(d)$] $D$ and $D^{-1}$ are bounded for $z\in \mathbb C\setminus \left([a_1,b_k]\cup_j U_{t_j}\right)$, for any fixed open neighbourhoods $U_{t_j}$ containing $t_j$, uniformly over all deformation parameters in our differential identities, namely $\epsilon \in (0,\epsilon_0)$, $s\in [0,2]$, and $t\in [0,1]$.
\item[$(e)$] For $z$ in a neighbourhood of $t_j$, and $\Im z>0$,
\begin{equation}  \nonumber D(z)^{\pm 1}=\mathcal O\left((z-(t_j-i\epsilon))^{\pm \left(\alpha_j/2+\beta_j\right)}\right), \quad \frac{d}{dz}\log D(z)=\frac{\alpha_j/2+\beta_j}{z-(t_j-i\epsilon)}+\mathcal O(1), 
\end{equation}
uniformly for $0<\epsilon<\epsilon_0$, while for $\Im z<0$ we have
\begin{equation} \nonumber 
D(z)^{\pm 1}=\mathcal O\left((z-(t_j+i\epsilon))^{\pm \left(\alpha_j/2-\beta_j\right)}\right), \quad \frac{d}{dz}\log D(z)=\frac{\alpha_j/2-\beta_j}{z-(t_j+i\epsilon)}+\mathcal O(1).
\end{equation}
\end{itemize}
\end{lemma}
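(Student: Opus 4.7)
Parts (a) and (b) are immediate from the definition and the jump relations \eqref{jumpsdf1}--\eqref{jumpsdf2}. Since $f$ is analytic in a neighborhood of $J$ and, for $\epsilon\in(0,\epsilon_0)$, $\log\omega_\epsilon$ is analytic in a thin strip containing $J$ (the branch cuts $\Sigma_{\omega_\epsilon}$ are at distance $\epsilon$ from the real line), the integrals $d_f$ and $d_\epsilon$ are Cauchy-type integrals whose analyticity on $\mathbb{C}\setminus[a_1,b_k]$ is inherited from that of $z\mapsto w_z(\lambda_+)$ recorded in Section \ref{sec:global}. Exponentiating \eqref{jumpsdf1}--\eqref{jumpsdf2} with $\mathfrak{g}=f$ and $\mathfrak{g}=\log\omega_\epsilon$ and combining then gives the jumps in (b), with $\Upsilon_j$ interpreted as $\Upsilon_j(f+\log\omega_\epsilon)$.

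For (c), the key point is that $u(z)\to u(\infty)$ as $z\to\infty$ on the first sheet, so $\Theta(z,\lambda)$ and hence $w_z(\lambda_+)=\partial_\lambda\log\Theta(z,\lambda_+)$ tend to finite limits uniformly in $\lambda\in J$, with remainder $\mathcal{O}(z^{-1})$. Integrating against the $L^1$ function $f+\log\omega_\epsilon$ over $J$ yields $d_f(z)+d_\epsilon(z)=d_f(\infty)+d_\epsilon(\infty)+\mathcal{O}(z^{-1})$, and (c) follows upon exponentiation. Part (d) is a uniformity assertion: on $\mathbb{C}\setminus([a_1,b_k]\cup\bigcup_j U_{t_j})$, $w_z(\lambda_+)$ is bounded uniformly in $z$ and $\lambda\in J$, while the $L^1(J)$ norm of $\log\omega_\epsilon$ remains bounded as $\epsilon\downarrow 0$ because the limiting function $\log\omega$ has only logarithmic (hence locally integrable) singularities at the $t_j$; combining gives uniform boundedness of $D^{\pm 1}$ as claimed.

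The hardest part is (e), for which the plan is to extract the leading singular behavior of $d/dz\log D$ near $t_j$ from the contour representation
\[
\frac{d}{dz}d_\epsilon(z)=-\frac{1}{4\pi i}\oint_\gamma \log\omega_\epsilon(\lambda)\,W(z,\lambda)\,d\lambda,
\]
where $\gamma$ is a tight loop around $J$ staying in the strip of analyticity of $\log\omega_\epsilon$. The integrand has a double pole of $W$ at $\lambda=z$, semi-infinite branch cuts $\Sigma_{\omega_\epsilon}$ emanating from $t_\pm^l$ to $\pm i\infty$ across which $\log\omega_\epsilon$ jumps by $2\pi i(\alpha_l/2\pm\beta_l)$, and decays like $\log|\lambda|/|\lambda|^2$ at infinity. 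Deforming $\gamma$ outward, while going around the cuts via keyhole contours and around the pole at $\lambda=z$, one collects (i) a residue at $\lambda=z$ proportional to $\omega_\epsilon'(z)/\omega_\epsilon(z)$ (using $W=1/(\lambda-z)^2+\mathcal{O}(1)$), (ii) a keyhole contribution along each cut equal to its log-jump times $\int_{t_\pm^l}^{\pm i\infty}W(z,\lambda)\,d\lambda$, which may be expressed via $\tilde{w}(\lambda,z):=\partial_z\log\Theta(z,\lambda)$ evaluated at $t_\pm^l$, and (iii) a vanishing contribution at infinity. Because $\tilde{w}(t_+^j,z)$ has a simple pole in $z$ at $z=t_+^j$ that exactly cancels the matching pole of $\omega_\epsilon'/\omega_\epsilon$, for $\Im z>0$ near $t_j$ the surviving singular behaviour is of the form $\frac{\alpha_j/2+\beta_j}{z-(t_j-i\epsilon)}+\mathcal{O}(1)$, with the $\mathcal{O}(1)$ term controlled uniformly in $\epsilon\in(0,\epsilon_0)$; the analogous computation in the lower half plane produces the surviving pole at $t_+^j$. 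Integrating the logarithmic derivative yields the stated factorization of $D(z)^{\pm 1}$. The main obstacle will be the careful orientation and sign bookkeeping in the keyhole deformation around the semi-infinite branch cuts, together with verifying that all error terms stay $\mathcal{O}(1)$ uniformly as $\epsilon\downarrow 0$.
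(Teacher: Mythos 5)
Parts (a)--(c) follow the paper's route and are fine. The genuine gap is in your proof of (d): you claim that $w_z(\lambda_+)$ is bounded uniformly in $\lambda\in J$ and $z\in\mathbb C\setminus([a_1,b_k]\cup\bigcup_j U_{t_j})$. This is false, because by \eqref{polewlambda} $w_z(\lambda)$ has a simple pole at $\lambda=z$ and the set over which $z$ ranges is not at positive distance from $J$; it contains points arbitrarily close to $[a_1,b_k]$, which is exactly where (d) is needed (e.g.\ to get condition (d) of the RH problem for $M$ near $a_j,b_j$, and the estimates on contours touching $J$). Thus $\sup_{\lambda\in J}|w_z(\lambda_+)|$ blows up as $\mathrm{dist}(z,J)\to 0$, and your $L^1$ bound on $\log\omega_\epsilon$ cannot repair this. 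The fix is to extract the residue at $\lambda=z$ before estimating: enlarging the contour in \eqref{defdg} to a loop $\tilde\gamma$ enclosing both $J$ and $z$ gives $d_{\mathfrak g}(z)=\tfrac12\mathfrak g(z)-\frac{1}{4\pi i}\oint_{\tilde\gamma}\mathfrak g(\lambda)w_z(\lambda)\,d\lambda$; now $\tilde\gamma$ stays at fixed distance from $z$ and $J$, the integral is uniformly bounded, and $\tfrac12\mathfrak g(z)$ is bounded for $z$ outside the $U_{t_j}$. This is precisely the paper's argument.

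For (e) your route differs from the paper's and is substantially heavier. You differentiate first and deform a loop integral of $W(z,\lambda)\log\omega_\epsilon(\lambda)$ outward past $z$ and around the semi-infinite cuts, relying on a cancellation between the pole of $\omega_\epsilon'/\omega_\epsilon$ at $t_j+i\epsilon$ and the keyhole term involving $w_z(t_j+i\epsilon)$. That cancellation is real (it reduces to the boundedness of $w_z(\lambda)-\frac{1}{\lambda-z}=\partial_\lambda\log\widetilde\Theta(z,\lambda)$ near the diagonal), but you explicitly leave the orientation/sign bookkeeping and the uniformity in $\epsilon$ unresolved, and that is where the actual content of (e) lies. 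The paper's argument sidesteps all of this: it takes the second form of \eqref{defdg}, $d_\epsilon(z)=\frac{1}{2\pi i}\int_J\log\omega_\epsilon(\lambda)w_z(\lambda_+)\,d\lambda$, splits $\log\omega_\epsilon$ into the elementary logarithms of \eqref{logomegaepsilon}, and for $\Im z>0$ deforms $J$ downward for the $\log(\lambda-(t_j+i\epsilon))$ piece (no singularity crossed, hence uniformly bounded) and upward past $z$ for the $\log(\lambda-(t_j-i\epsilon))$ piece, picking up exactly the residue $\log(z-(t_j-i\epsilon))$ plus a bounded integral over the deformed contour. If you keep your route you must supply the omitted bookkeeping; otherwise switch to this two-line deformation.
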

\begin{proof}
Properties (a), (b), and (c) are easily verified by the above considerations of $d_{\mathfrak g}$. Property (d) follows by noting that if $\tilde \gamma$ encloses $z$ and $J$, then $d_f(z)=\frac{1}{2}f(z)-\frac{1}{4\pi i} \int_{\tilde \gamma}f(\lambda)w_z(\lambda)d\lambda$, and that the integral over $\tilde \gamma$ is bounded as $z\to a_j$ and $z\to b_j$ for $j=1,\dots,k$. 

For property (e), we recall the definition of $\log \omega_\epsilon$ in \eqref{logomegaepsilon}, and observe that for $z$ in a neighbourhood of $t_j$, the only non-trivial terms to consider are $\log(z-(t_j\pm i\epsilon))$. If $\Im z>0$ in a neighbourhood of $t_j$, then we consider the second equality in \eqref{defdg} setting $\mathfrak g(\lambda)=\log(\lambda-(t_j+i\epsilon))$, and observe that
\begin{equation*}\frac{1}{2\pi i}\int_J\log(\lambda-(t_j+i\epsilon))w_z(\lambda_+)d\lambda
\end{equation*}
remains uniformly bounded in the neighbourhood because we can deform the contour of integration downwards in the complex plane, while for 
\begin{multline*}
\frac{1}{2\pi i}\int_J\log(\lambda-(t_j-i\epsilon))w_z(\lambda_+)d\lambda \\
=\frac{1}{2\pi i}\int_{\widetilde J}\log(\lambda-(t_j-i\epsilon))w_z(\lambda)d\lambda+\log(z-(t_j-i\epsilon)), 
\end{multline*}
where $\widetilde J$ is a deformation of $J$ which passes above $t_j$ in the complex plane. The integral over $\widetilde J$ is bounded. Recalling \eqref{defdg},  the definition of $d_\epsilon=d_{\log \omega_\epsilon}$, and the definition of $\log \omega_\epsilon$ in \eqref{logomegaepsilon}, this concludes the proof of (e) for $\Im z>0$. The case where $\Im z<0$ is similar, and is left to the reader. The derivative of $\log D$ also follows similarly.
\end{proof}
Now define
\begin{align}\label{eq:Pdef}
M(z) := D_{\infty}^{\sigma_3} N_\infty (z; N \Omega + \Upsilon) D(z)^{-\sigma_3}.
\end{align}

The function $M$ will act as our main parametrix, approximating $S$ as described in the introduction of Section \ref{sec:global}. It is easily verified by relying on the properties of the RH problem for $N_\infty$ and the RH problem for $D$ that $M$ solves the following RH problem.
\subsubsection*{The RH problem for M}
\begin{itemize}[leftmargin=0.75cm]
\item[(a)] $M: \C \setminus [a_1, b_k] \to \C^{2\times 2}$ is analytic.
\item[(b)] $M$ has continuous boundary values $M_\pm$ on $[a_1, b_k] \setminus  \{a_j, b_j\}_{j=1}^k$ satisfying
\begin{align*}
M_{+}(x) = M_{-}(x) \times
\begin{cases}
\begin{pmatrix}
0 &  e^{f(x)}\omega_{\epsilon}(x) \\ -e^{-f(x)}\omega_{\epsilon}(x)^{-1} & 0
\end{pmatrix}, & x \in \bigcup_{j=1}^k (a_j, b_j), \\
e^{-2 \pi i N \Omega_j \sigma_3}, & \begin{subarray}{l} \ds x \in (b_j, a_{j+1}), \\[0.1cm] \ds  j = 1, \dots, k-1. \end{subarray}
\end{cases}
\end{align*}

\item[(c)] $M(z) = I + \mathcal O(z^{-1})$ as $z \to \infty$.

\item[(d)]  $M(z) = \mathcal{O}((z-z_\ast)^{-\frac{1}{4}})$ as $z \to z_\ast \in  \{a_j, b_j\}_{j=1}^k$.
\end{itemize}

\begin{remark}\label{KuijVan}
By \eqref{2ndrepwlambda} and \eqref{defdg} we find that
\begin{multline}\nonumber d_{\mathfrak g}(z)=-\frac{\mathcal R^{1/2}(z)}{4\pi i}\oint_{\gamma_{z,\mathfrak g}}\frac{\mathfrak g(\lambda)d\lambda}{\mathcal R^{1/2}(\lambda)(\lambda-z)}\\ -\frac{\mathcal R^{1/2}(z)}{2\pi i}\sum_{j=1}^k\oint_{\gamma_{z,\mathfrak g}} \mathfrak g(\lambda)u_j'(\lambda)d\lambda \int_{b_j}^{a_{j+1}} \frac{dx}{\mathcal R^{1/2}(x)(x-z)}.
\end{multline}
By \eqref{def:Omegahat}, this becomes
\begin{multline}\label{Altdg} 
d_{\mathfrak g}(z)=-\frac{\mathcal R^{1/2}(z)}{4\pi i}\oint_{\gamma_{z,\mathfrak g}}\frac{\mathfrak g(\lambda)d\lambda}{\mathcal R^{1/2}(\lambda)(\lambda-z)}\\
-\mathcal R^{1/2}(z)\sum_{j=1}^k\Upsilon_j(\mathfrak g) \int_{b_j}^{a_{j+1}} \frac{dx}{\mathcal R^{1/2}(x)(x-z)}.
\end{multline}
Clearly \eqref{Altdg} is of the same form as  \cite[formula (4.8)]{KuijVanlessen}, and by recalling the form for $\Upsilon(\mathfrak g)$ in Remark \ref{explicittau}, we can compare with \cite[formula (4.12)]{KuijVanlessen}.

\end{remark}

\section{Local parametrices} \label{sec:Local}
In this section we aim to construct functions $P^{(x)}(z)$ on neighbourhoods $U_x$ of $x\in\{a_j,b_j\}_{j=1}^k\cup \{t_j\}_{j=1}^m$, which have the same jump contours and jumps as $S$ on each neighbourhood $U_x$, and additionally satisfy $P^{(x)}(z)M(z)^{-1}$ $\to I$ as $N\to \infty$ uniformly for $z\in \partial U_x$. We start by constructing parametrices at $\{a_j,b_j\}_{j=1}^k$.
\subsection{The Airy local parametrix}\label{sec:Airy}
In this section, we construct a solution to an RH problem which provides good approximation to $S$ close to the branch points $a_i$, $b_i$, and we rely on the following model RH problem for $\Phi_{\rm Ai}$.

\subsubsection*{The RH problem for $\Phi_{\rm Ai}$}
\begin{figure}[h!]
\begin{center}
\begin{tikzpicture}
\node [below] at (5,0) {$0$};
\node [above] at (5.5,0.4) {$2\pi/3$};

\draw[decoration={markings, mark=at position 0.25 with {\arrow[thick]{>}}},
        postaction={decorate}][decoration={markings, mark=at position 0.75 with {\arrow[thick]{>}}},
        postaction={decorate}]  (2,0)--(8,0);
\draw[decoration={markings, mark=at position 0.5 with {\arrow[thick]{<}}},
        postaction={decorate}]  (5,0)--(4,-2);
\draw[decoration={markings, mark=at position 0.5 with {\arrow[thick]{<}}},
        postaction={decorate}]  (5,0)--(4,2);

\draw (5.4,0) arc (0:120:0.4) ;

\end{tikzpicture} 
\caption{The jump contour $\Sigma_{\mathrm{Ai}}$ for $\Phi_{\mathrm{Ai}}$.}\label{fig:Airy_contour}\end{center}
\end{figure}
\begin{itemize}[leftmargin=0.75cm]
\item[(a)] $\Phi_{\mathrm{Ai}}: \C \setminus \Sigma_{\mathrm{Ai}} \to \C^{2\times 2}$ is analytic, where the contour is given by $\Sigma_{\mathrm{Ai}} = \R \cup e^{\frac{2\pi i}{3}} \R^+ \cup e^{-\frac{2\pi i}{3}} \R^+$ and is oriented as in \Cref{fig:Airy_contour}.
\item[(b)] $\Phi_{\mathrm{Ai}}$ has continuous boundary values $\Phi_{\mathrm{Ai}, \pm}$ on $\Sigma_{\mathrm{Ai}} \setminus \{0\}$ satisfying
\begin{align*} 
\Phi_{\mathrm{Ai}, +}(\zeta)
& = \Phi_{\mathrm{Ai}, -}(\zeta) \begin{pmatrix} 0 & 1 \\ -1 & 0 \end{pmatrix}, \ \qquad \zeta \in (-\infty, 0), \\
\Phi_{\mathrm{Ai}, +}(\zeta) 
& = \Phi_{\mathrm{Ai}, -}(\zeta) \begin{pmatrix} 1 & 1 \\ 0 & 1 \end{pmatrix}, \quad \qquad  \zeta \in (0, \infty), \nonumber \\
\Phi_{\mathrm{Ai}, +}(\zeta) 
& = \Phi_{\mathrm{Ai}, -}(\zeta) \begin{pmatrix}1 & 0 \\ 1 & 1 \end{pmatrix}, \quad \qquad  \zeta \in \left(e^{\frac{2 \pi i}{3}} \R^+ \cup e^{-\frac{2\pi i}{3}} \R^+\right) \setminus \{0\}. \nonumber
\end{align*}
\item[(c)] As $\zeta \to 0$, $\Phi_{\mathrm{Ai}}(\zeta) = \mathcal{O}(1)$. 
\item[(d)] As $\zeta \to \infty$,
\begin{align}\label{eq:Aiinfty}
\Phi_{\mathrm{Ai}}(\zeta) = \frac{1}{\sqrt{2}} \zeta^{-\frac{1}{4}\sigma_3} 
\begin{pmatrix} 1 & i \\ i & 1 \end{pmatrix} \left( I + \frac{1}{8\zeta^{3/2}} \begin{pmatrix}\frac{1}{6}&i\\i&-\frac{1}{6}\end{pmatrix}+\mathcal O\left(\frac{1}{\zeta^3}\right)\right) e^{-\frac{2}{3} \zeta^{3/2} \sigma_3}.
\end{align}
\end{itemize}
An explicit solution to the RH problem for $\Phi_{\rm Ai}$ was constructed in \cite[Section 7]{DKMVZfirst} in terms of Airy functions.\footnote{
Our $\Phi_{\mathrm{Ai}}$ is given by $\sqrt{2\pi}e^{-\frac{\pi i}{12}} \Psi^{\sigma}$ in the notation of \cite{DKMVZfirst}.
}

\noindent {\bf{Local parametrices at the edge.}} \label{sec:para} Denote by $U_x$  an open disc of radius $\delta > 0$ centred at the point $x$. We choose $\delta > 0$ small enough such that the distance between any $U_{a_j}, U_{b_j}, U_{t_j}$ is at least $\delta$, and that all these open discs are enclosed by the contour $\Gamma$ (or one of the contours $\Gamma_j$) which appeared in the differential identity of Section \ref{sec:RHP}. For the purposes of our argument in Section \ref{sec:pasy}, when we deformed $V$, it is actually important that we can choose $\delta$ independent of $s$. We will also need to choose the discs so small that $h_V$ does not vanish in the disc. Moreover, note that as the discs are surrounded by $\Gamma$ (or one of the $\Gamma_j$), they are also contained in $U_V$ -- the domain of analyticity of $V$.

On each open disc $U_x$ where $x \in \{a_j, b_j\}_{j=1}^k$, let $\psi_V$ be as in \eqref{eq:psi} and $\zeta_{x}$ be the local variable given by the conformal mapping
\begin{align}\label{eq:zetadef}
\zeta_x(z) = \left(\frac{3 \pi i}{2} \int^z_x \psi_V(\lambda) d\lambda\right)^{\frac{2}{3}}, \qquad z \in U_x.
\end{align}

\noindent We choose the branch such that $\zeta_{b_j}(z)$ is positive for $z > b_j$ on $U_{b_j}$, and $\zeta_{a_j}(z)$ is positive for $z < a_j$ on $U_{a_j}$. 

We can finally be more precise about how we define the lenses used in the transformation \eqref{eq:Sdef}: inside $U_{a_j}$ and $U_{b_j}$ we choose them such that under $\zeta$, they are mapped to the contour $\Sigma_{\mathrm{Ai}}$ in \Cref{fig:Airy_contour}. We will return to how we define them close to $t_j$ later on. We then define the local parametrices $P^{(a_j)}(z)$ and $P^{(b_j)}(z)$ as
\begin{align}\label{eq:tildePdef}
P^{(a_j)}(z)& = 
E(z) \sigma_3 \Phi_{\mathrm{Ai}}\left(N^{2/3} \zeta_{a_j}(z)\right) \sigma_3 e^{-N \phi(z) \sigma_3} e^{-\frac{f(z)}{2} \sigma_3} \omega_{\epsilon}(z)^{-\frac{\sigma_3}{2}}, & z \in U_{a_j}, \\
P^{(b_j)}(z)&=E(z) \Phi_{\mathrm{Ai}}\left(N^{2/3} \zeta_{b_j}(z)\right) e^{-N \phi(z) \sigma_3} e^{-\frac{f(z)}{2} \sigma_3} \omega_{\epsilon}(z)^{-\frac{\sigma_3}{2}}, & z \in U_{b_j},\notag 
\end{align}

\noindent where $\phi$ is as in \eqref{eq:phidef} and
\begin{align*}
E(z) = \begin{cases}
M(z) e^{\frac{f(z)}{2} \sigma_3} \omega_{\epsilon}(z)^{\frac{\sigma_3}{2}} e^{N \pi i \Omega_{j-1} \eta(z) \sigma_3} \frac{1}{\sqrt{2}} \begin{pmatrix} 1 & i \\ i & 1 \end{pmatrix} \zeta_{a_{j}}(z)^{\frac{1}{4}\sigma_3} N^{\frac{\sigma_3}{6}}, & z \in U_{a_j}, \\
M(z) e^{\frac{f(z)}{2} \sigma_3} \omega_{\epsilon}(z)^{\frac{\sigma_3}{2}} e^{N \pi i \Omega_{j} \eta(z) \sigma_3} \frac{1}{\sqrt{2}} \begin{pmatrix} 1 & -i \\ -i & 1 \end{pmatrix} \zeta_{b_{j}}(z)^{\frac{1}{4}\sigma_3} N^{\frac{\sigma_3}{6}}, & z \in U_{b_j},
\end{cases}
\end{align*}

\noindent with $M$ as in Section \ref{sec:global}, $\Omega_0 = \Omega_k = 0$ and
\begin{align}\label{eq:etasign}
\eta(z) = \begin{cases}
1, & \Im z > 0, \\
-1, & \Im z < 0.
\end{cases}
\end{align}

\begin{lemma}\label{le:edgerhp}
For $x\in \{a_j,b_j\}_{j=1}^k$, the function $P^{(x)}$ defined in \eqref{eq:tildePdef} satisfies the following RH problem.
\begin{itemize}[leftmargin=0.75cm]
\item[$(a)$] $P^{(x)}: U_{x} \setminus \Sigma_{S} \to \C^{2\times 2}$ is analytic. Here $\Sigma_S$ is as in Section \ref{sec:trans}.
\item[$(b)$] On $U_x\cap \Sigma_S$,
\begin{equation*}P^{(x)}_+=P^{(x)}_-J_S,\end{equation*}
where we recall that $J_S$ is the jump matrix defined in condition $(b)$ for the RH problem for $S$.
\item[$(c)$] We have the matching condition
\begin{align}\label{eq:gl_match}
P^{(x)}(z) M(z)^{-1} = I + \mathcal{O}(N^{-1}), \qquad \mbox{ as } N \to + \infty,
\end{align}
uniformly for  $z \in \partial U_{x} $, and also uniformly over the parameters appearing in our deformations $V_s,\, e^{tf},\, \omega_\epsilon$.
\item[$(d)$] As $z\to x$, $P^{(x)}(z)=\mathcal O(1)$.
\end{itemize}
\end{lemma}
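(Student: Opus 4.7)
\bigskip

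\noindent\textbf{Proof plan.} The strategy is the standard verification for an Airy-type local parametrix: each of the four conditions (a)--(d) is checked by tracing the corresponding property of $\Phi_{\rm Ai}$ through the conformal change of variable $\zeta_{x}$, with the pre-factor $E(z)$ engineered precisely so that it is analytic on $U_{x}$ and provides the correct matching on $\partial U_{x}$.

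First, for (a), the only nontrivial point is the analyticity of $E(z)$ in $U_{x}$. Away from the real line inside $U_{x}$, $E$ is manifestly analytic, since $M$, $e^{f\sigma_{3}/2}\omega_{\epsilon}^{\sigma_{3}/2}$, $\zeta_{x}^{\sigma_{3}/4}$ and $e^{N\pi i \Omega_j \eta(z)\sigma_{3}}$ all are. On the real line, one has three possible sources of jumps to cancel: the jump of $M$ across $(a_{j},b_{j})$, which is $\begin{pmatrix}0&e^{f}\omega_{\epsilon}\\ -e^{-f}\omega_{\epsilon}^{-1}&0\end{pmatrix}$, is killed when conjugated by $\omega_\epsilon^{\sigma_{3}/2}e^{f\sigma_{3}/2}$ together with the constant factor $\frac{1}{\sqrt{2}}\bigl(\begin{smallmatrix}1&\pm i\\ \pm i&1\end{smallmatrix}\bigr)$ (noting that $\omega_\epsilon^{\sigma_3/2}$ has the same multiplicative jump as $M$ on $(a_j,b_j)$); the diagonal jump $e^{-2\pi iN\Omega_{j}\sigma_{3}}$ of $M$ on $(b_{j},a_{j+1})$ (or $(b_{j-1},a_{j})$) is killed by the corresponding jump of $e^{N\pi i\Omega_{j}\eta(z)\sigma_{3}}$; and the branch cut of $\zeta_{x}^{\sigma_{3}/4}$ at $x$ matches the branch cut of $\gamma(z)^{\pm 1}$ inside $M$. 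The remaining $(z-x)^{-1/4}$ type singularity at $x$ of $M$ is cancelled by the $(z-x)^{1/4}$ behaviour of $\zeta_{x}^{\sigma_{3}/4}$, giving an analytic $E$.

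For (b), note that by construction $\zeta_{x}$ maps the lens contour $\Sigma_{S}\cap U_{x}$ onto the Airy contour $\Sigma_{\rm Ai}$ (this is precisely how the lens is chosen in Section \ref{sec:trans}). Since $E(z)$ is analytic in $U_{x}$, it has no contribution to the jumps; the jump of $P^{(x)}$ on each ray of $\Sigma_{S}\cap U_{x}$ is therefore the jump of $\Phi_{\rm Ai}(N^{2/3}\zeta_{x}(z))$, conjugated by $e^{-N\phi(z)\sigma_{3}}e^{-f(z)\sigma_{3}/2}\omega_{\epsilon}(z)^{-\sigma_{3}/2}$ (and by an extra $\sigma_{3}$ at $a_{j}$, whose only effect is to swap $\Sigma_{\rm Ai}$-rays so that the oriented picture matches). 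Using the Airy jumps from the RH problem for $\Phi_{\rm Ai}$, the identities \eqref{jumpphi2}--\eqref{jumpphi3} for $\phi_{\pm}$ on $J$, and the fact that $\zeta_{b_{j}}(z)^{3/2}$ is pure imaginary on $(b_{j}-\delta,b_{j})$ (so that $e^{-\frac{2}{3}N\zeta_{b_j}(z)^{3/2}\sigma_3}$ carries the right oscillation), one recovers exactly $J_{S}$ as given in \eqref{eq:Sjump2} and \eqref{eq:Sjump4}.

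For (c), this is the most delicate step and will be the main obstacle. On $\partial U_{x}$, $N^{2/3}\zeta_{x}(z)\to\infty$ uniformly, so we apply \eqref{eq:Aiinfty}; the leading pre-factor $\frac{1}{\sqrt{2}}\zeta^{-\sigma_{3}/4}\bigl(\begin{smallmatrix}1&i\\ i&1\end{smallmatrix}\bigr)$ is exactly cancelled by the corresponding piece of $E(z)$, while the exponential $e^{-\frac{2}{3}\zeta_{x}(z)^{3/2}\sigma_{3}}$ is cancelled by $e^{N\phi(z)\sigma_{3}}$ thanks to the defining identity $\frac{2}{3}\zeta_{b_j}(z)^{3/2}=N\phi(z)$ (and $-\frac{2}{3}\zeta_{a_j}(z)^{3/2}=N\phi(z)$ after the $\sigma_{3}$-conjugation at $a_{j}$), keeping in mind that $\phi(z)$ picks up the term $-\pi i\Omega_{j-1}\eta(z)$ (resp.\ $-\pi i\Omega_{j}\eta(z)$) relative to the naive value, which is absorbed into the $e^{N\pi i\Omega\eta\sigma_{3}}$ factor in $E$. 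The subleading term $\frac{1}{8\zeta^{3/2}}\bigl(\begin{smallmatrix}1/6&i\\ i&-1/6\end{smallmatrix}\bigr)$ produces an $\mathcal O(N^{-1})$ correction; uniformity in the deformation parameters follows from Lemma \ref{le:D}(d) and the fact that the endpoints $a_{j},b_{j}$ and the conformal maps $\zeta_{x}$ depend smoothly on $s$, so that all constants are bounded.

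Finally, (d) is immediate: $\Phi_{\rm Ai}(\zeta)=\mathcal O(1)$ as $\zeta\to 0$ by condition (c) of its RH problem, $E(z)$ is analytic at $x$, and the remaining scalar factors $e^{-N\phi(z)\sigma_{3}}e^{-f(z)\sigma_{3}/2}\omega_{\epsilon}(z)^{-\sigma_{3}/2}$ are bounded near $x$ (here we use $x\neq t_{j}$, so $\omega_{\epsilon}$ is bounded and bounded away from zero in $U_{x}$).
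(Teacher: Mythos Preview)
Your proposal is correct and follows essentially the same approach as the paper: verify (a)--(d) by tracing the Airy model RH problem through the conformal map $\zeta_x$ and checking that $E$ is analytic so that it contributes neither jumps nor singularities. The paper's own proof is extremely terse (it simply points to the relevant ingredients and omits (d) entirely), so your version is in fact more detailed than what appears there.

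Two minor inaccuracies are worth cleaning up. First, in your discussion of (a), you say that ``$\omega_\epsilon^{\sigma_3/2}$ has the same multiplicative jump as $M$ on $(a_j,b_j)$''; in fact $\omega_\epsilon$ is analytic on $U_x$ (its branch cuts emanate from $t_j\pm i\epsilon$, which are bounded away from the endpoints), so it has no jump there---the cancellation of $M$'s jump on $(a_j,b_j)$ comes purely from the conjugation by $e^{f\sigma_3/2}\omega_\epsilon^{\sigma_3/2}$ together with the constant matrix $\tfrac{1}{\sqrt 2}\bigl(\begin{smallmatrix}1&\mp i\\\mp i&1\end{smallmatrix}\bigr)$ and the branch-cut jump of $\zeta_x^{\sigma_3/4}$. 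Second, in (c), the relation you want is $\tfrac{2}{3}(N^{2/3}\zeta_{b_j}(z))^{3/2}=N\phi(z)+N\pi i\Omega_j\eta(z)$ (equivalently $\tfrac{2}{3}\zeta_{b_j}(z)^{3/2}=\phi(z)+\pi i\Omega_j\eta(z)$), not $\tfrac{2}{3}\zeta_{b_j}(z)^{3/2}=N\phi(z)$ as written---you have dropped the $N^{2/3}$ scaling of the Airy argument and reinserted the $N$ in the wrong place. The net cancellation you describe is correct, but the stated identity is not.
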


\begin{proof}  Condition $(a)$ follows from the definition of $\Sigma_S$ on $U_x$ (we recall that $\Sigma_S$ was defined on $U_x$ so that $\zeta_x$ maps $\Sigma_S$ to $\Sigma_{\textrm{Ai}}$). Condition $(b)$ follows from condition $(b)$ for the RH problem for $\Phi_{\rm Ai}$, combined with the fact that $E,f,$ and $\omega_\epsilon$ are analytic on $U_x$, and that $\phi$ is analytic on $\mathbb C \setminus (-\infty,b_k]$ and on $J$ we have the condition \eqref{jumpphi2} while on $(b_j,a_{j+1})$ we have \eqref{lol3}.

A straightforward calculation making use of \eqref{eq:Aiinfty}, the definition of $\phi$ from \eqref{eq:phidef}, the definition of $\zeta$ from \eqref{eq:zetadef}, as well as the fact that $\Omega_j$ are real, shows that we have for $z\in \partial U_x$ (and $x$ being the relevant $a_j$ or $b_j$)
\begin{multline*}
P^{(x)}(z)M(z)^{-1}=I \\
+M(z)e^{\frac{f(z)}{2}\sigma_3}\omega_\epsilon(z)^{\frac{\sigma_3}{2}}\mathcal O\big(N^{-1}\zeta_x(z)^{-3/2}\big)\omega_\epsilon(z)^{-\frac{\sigma_3}{2}}e^{-\frac{f(z)}{2}\sigma_3}M(z)^{-1},
\end{multline*}
where the implied constant is universal. We have that $e^{\pm f(z)}$ and $\omega_\epsilon(z)^{\pm1}$ are uniformly  bounded on $\partial U_x$. The same holds for $M$, since it is analytic on $\mathbb C\setminus J$, and has continuous boundary values on $J\cap U_x\setminus\{x\}$. Thus we obtain condition $(c)$. Condition $(d)$ follows from a direct computation which we omit here and the proof is complete.

\end{proof}

We observe that by including the subleading term in condition $(d)$ in the RH problem for $\Phi_{\rm Ai}$ we obtain that as $N\to \infty$,
\begin{equation}\label{PM-1fine}\begin{aligned}
&P^{(x)}(z) M(z)^{-1}=I+\frac{1}{N} \Delta(z)+\mathcal O(1/N^2),\\
&\Delta(z)=
\frac{M(z)e^{N\pi i \Omega_{j-1}\eta(z)\sigma_3}}{8\zeta_{a_j}(z)^{3/2}}\sigma_3\begin{pmatrix}
\tfrac{1}{6} & i\\
i & -\tfrac{1}{6}
\end{pmatrix}\sigma_3 e^{-N\pi i \Omega_{j-1}\eta(z)\sigma_3}M(z)^{-1}, && z \in \partial U_{a_j},\\
&\Delta(z)=
\frac{M(z)e^{N\pi i \Omega_{j}\eta(z)\sigma_3}}{8\zeta_{b_j}(z)^{3/2}}\begin{pmatrix}
\frac{1}{6} & i\\
i & -\frac{1}{6}
\end{pmatrix}e^{-N\pi i \Omega_{j}\eta(z)\sigma_3}M(z)^{-1}, && z \in \partial U_{b_j}
,\end{aligned}
\end{equation}

We now turn to approximations close to the points $t_j$.

\subsection{The Painlev\'e local parametrix}\label{sec:Painleve}

Similarly to the Airy local parametrix, we will consider an approximation for $S$ near the points $t_j$ through a certain model RH problem for $\Psi$. As opposed to the Airy model RH problem, the solution to the RH problem for $\Psi$ may not be explicitly constructed, however the unique existence of a solution to $\Psi$ was proven in \cite{CIK}.\footnote{Let us point out a minor difference between our description of the problem and that of \cite{CIK}. Our parameters $\alpha$ and $\beta$ correspond to $2\alpha$ and $-\beta$ in \cite{CIK}. } The model RH problem $\Psi$ is associated with the Painlev\'{e} V  equation,  and depends on two complex parameters $\alpha,\beta$ and a positive one $s>0$.

\subsubsection*{RH problem for $\Psi$}

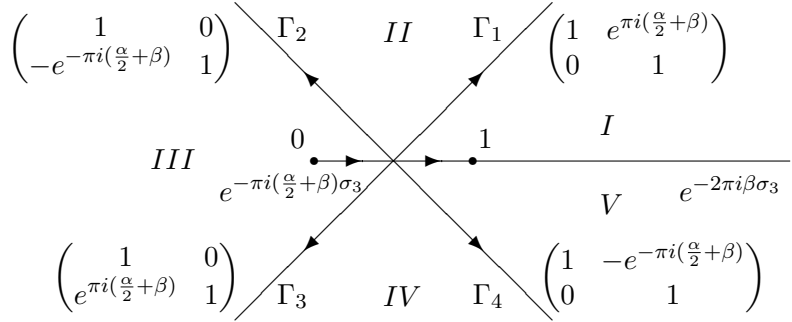
\begin{figure}[t]
	\begin{center}
		\begin{picture}(120,140)(-5,-65)
		
		\put(3,0){$0$}
		\put(74,0){$1$}
		
		\put(72,42){$\Gamma_{1}$}
		\put(-2,42){$\Gamma_{2}$}
		\put(-2,-59){$\Gamma_{3}$}
		\put(72,-59){$\Gamma_{4}$}
		
		\put(120,5){$I$}
		\put(38,40){$II$}
		\put(-50,-8){$III$}
		\put(38,-60){$IV$}
		\put(120,-25){$V$}
		
		\put(99,35){$\begin{pmatrix} 1 & e^{\pi i(\frac{\alpha}{2}+\beta)}\\ 0 &1\end{pmatrix}$}
		\put(-104,35){$\begin{pmatrix} 1 &0\\ -e^{-\pi i(\frac{\alpha}{2}+\beta)} &1\end{pmatrix}$}
		\put(-87,-52){$\begin{pmatrix} 1 &0\\ e^{\pi i(\frac{\alpha}{2}+\beta)} &1\end{pmatrix}$}
		\put(97,-52){$\begin{pmatrix} 1 & -e^{-\pi i(\frac{\alpha}{2}+\beta)}\\0 &1\end{pmatrix}$}
		\put(-24,-20){$e^{-\pi i (\frac{\alpha}{2}+\beta) \sigma_3}$}
		\put(150,-20){$e^{-2\pi i \beta \sigma_3}$}
		
		\put(-18,-65){\line(1,1){120}}
		\put(-18,55){\line(1,-1){120}}
		\put(12,-5){\line(1,0){60}}
		\put(72,-5){\line(1,0){120}}
		
		\put(12,-5){\thicklines\circle*{2.5}}
		\put(72,-5){\thicklines\circle*{2.5}}
		
		\put(8,-39){\thicklines\vector(-1,-1){.0001}}
		\put(76,29){\thicklines\vector(1,1){.0001}}
		\put(8,29){\thicklines\vector(-1,1){.0001}}
		\put(76,-39){\thicklines\vector(1,-1){.0001}}
		\put(30,-5){\thicklines\vector(1,0){.0001}}
		\put(60,-5){\thicklines\vector(1,0){.0001}}
		
		\end{picture}
		\caption{The jump contour $\Sigma_{\Psi}$ and the jump matrices for $\Psi$.}
		\label{figure: contour Psi-}
	\end{center}
\end{figure}

\begin{itemize}
	\item[(a)] $\Psi: \mathbb{C}\setminus \Sigma_{\Psi} \rightarrow \mathbb{C}^{2\times 2}$ is analytic, with $\Sigma_{\Psi}=\Gamma_1 \cup\Gamma_2 \cup\Gamma_3 \cup\Gamma_4 \cup[0,+\infty]$, and
	\[\Gamma_1 =\frac{1}{2}+e^{\frac{\pi i}{4}}\mathbb R^+,\ \Gamma_2 =\frac{1}{2}+e^{\frac{3\pi i}{4}}\mathbb R^+,\ \Gamma_3 =\frac{1}{2}+e^{-\frac{3\pi i}{4}}\mathbb R^+,\ \Gamma_4 =\frac{1}{2}+e^{-\frac{\pi i}{4}}\mathbb R^+,\]
	oriented as in Figure \ref{figure: contour Psi-}.
	\item[(b)] $\Psi$ has continuous boundary values on $\Sigma_\Psi\setminus \{0,\frac{1}{2},1\}$ and the following jumps:
	\begin{equation*}
	\begin{aligned}
	&\Psi_+(z)=\Psi_-(z)\begin{pmatrix} 1 & e^{\pi i(\frac{\alpha}{2}+\beta)}\\ 0 &1\end{pmatrix}  &&\textrm{for $z \in \Gamma_1$,}  \\
	&\Psi_+(z)=\Psi_-(z)\begin{pmatrix} 1 &0\\ -e^{-\pi i(\frac{\alpha}{2}+\beta)} &1\end{pmatrix}  &&\textrm{for $z \in \Gamma_2 $,}  \\
	&\Psi_+(z)=\Psi_-(z)\begin{pmatrix} 1 &0\\ e^{\pi i(\frac{\alpha}{2}+\beta)} &1\end{pmatrix}  &&\textrm{for $z \in \Gamma_3$,}  \\
	&\Psi_+(z)=\Psi_-(z)\begin{pmatrix} 1 & -e^{-\pi i(\frac{\alpha}{2}+\beta)}\\0 &1\end{pmatrix}  &&\textrm{for $z \in \Gamma_4$,}  \\
	&\Psi_+(z)=\Psi_-(z)e^{-\pi i(\frac{\alpha}{2}+\beta)\sigma_3}  &&\textrm{for $z \in (0,1)\setminus \{1/2\}$,}  \\
	&\Psi_+(z)=\Psi_-(z)e^{-2\pi i\beta \sigma_3}  &&\textrm{for $z >1$.}
	\end{aligned}
	\end{equation*}
	\item[(c)]There exist functions $p=p(s,\alpha,\beta),\,q=q(s,\alpha,\beta),\,r=r(s,\alpha,\beta)$  such that $\Psi(z)$ has the following behavior as $z\to \infty$: 
	\begin{equation}\label{Psi as}
	\Psi(z)=\left(I+\frac{1}{z}\begin{pmatrix}q&r\\p&-q\end{pmatrix}+\mathcal{O}(z^{-2})\right)z^{\beta\sigma_3}\exp\Big(-\frac{s}{2}z\sigma_3\Big).
	\end{equation}
	\item[(d)] The function $H_0(z):=\Psi(z)z^{-(\frac{\alpha}{4}+\frac{\beta}{2})\sigma_3}$, where the branch cut is taken on $(0,+\infty)$, is analytic in a neighbourhood of $0$. Similarly, the function
	\begin{equation*}
	H_1(z) = \Psi(z)(z-1)^{(\frac{\alpha}{4}-\frac{\beta}{2})\sigma_{3}} \left\{ \begin{array}{l l}
	e^{\pi i \beta \sigma_{3}}, & \Im z > 0, \\
	e^{-\pi i \beta \sigma_{3}}, & \Im z < 0,
	\end{array} \right.
	\end{equation*}
	where we choose the principal branch for the roots, is analytic in a neighbourhood of $1$. Furthermore, $\Psi$ is bounded near $1/2$.
\end{itemize}

For our purposes, we find the following modification of this RH problem to be convenient. 
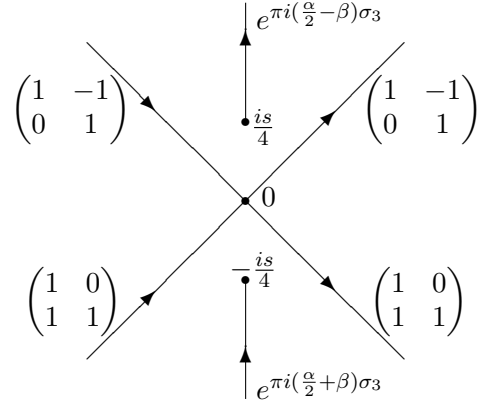
\begin{figure}[t]
	\begin{center}
		\begin{picture}(120,140)(-5,-70)
		\put(48,-7){$0$}
		\put(36,-34){$-\frac{is}{4}$}
		\put(44,20){$\frac{is}{4}$}
		
		\put(86,27){$\begin{pmatrix}1&-1\\0&1\end{pmatrix}$}
		\put(-47,27){$\begin{pmatrix}1&-1\\0&1\end{pmatrix}  $}
		\put(-42,-46){$\begin{pmatrix}1&0\\1&1\end{pmatrix} $}
		\put(89,-46){$\begin{pmatrix}1&0\\1&1\end{pmatrix} $}
		\put(46,60){$e^{\pi i (\frac{\alpha}{2}-\beta) \sigma_3}$}
		\put(46,-80){$e^{\pi i (\frac{\alpha}{2}+\beta) \sigma_3}$}
		
		\put(42,25){\line(0,1){45}}
		\put(42,-35){\line(0,-1){45}}
		\put(-18,-65){\line(1,1){120}}
		\put(-18,55){\line(1,-1){120}}
		
		\put(42,25){\thicklines\circle*{2.5}}
		\put(42,-5){\thicklines\circle*{2.5}}
		\put(42,-35){\thicklines\circle*{2.5}}
		
		\put(42,60){\thicklines\vector(0,1){.0001}}
		\put(42,-60){\thicklines\vector(0,1){.0001}}
		\put(8,-39){\thicklines\vector(1,1){.0001}}
		\put(76,29){\thicklines\vector(1,1){.0001}}
		\put(8,29){\thicklines\vector(1,-1){.0001}}
		\put(76,-39){\thicklines\vector(1,-1){.0001}}
		
		\end{picture}
		\caption{The jump contour $\Sigma_{\Phi_{\mathrm{V}}}$ and the jump matrices for $\Phi_{\mathrm{V}}$.}
		\label{figure: contour Psi(1)}
	\end{center}
\end{figure}

\subsubsection*{Modified Painlev\'{e} V model RH problem}\label{section: PV}

We define $\Phi_{\mathrm{V}}(z;s,\alpha,\beta) = \Phi_{\mathrm{V}}(z)$ by
\begin{equation}\label{eq:PhiV}
\Phi_{\mathrm{V}}(z;s,\alpha,\beta) = e^{\frac{s}{4}\sigma_{3}}s^{\beta \sigma_{3}} \Psi\Big( -\frac{2 i z}{s}+\frac{1}{2};s,\alpha,\beta \Big) \left\{ \begin{array}{l l}
e^{-\frac{\pi i}{2}(\frac{\alpha}{2}+\beta)\sigma_{3}}, & \Re z > 0, \\
e^{\frac{\pi i}{2}(\frac{\alpha}{2}+\beta)\sigma_{3}}, & \Re z < 0.
\end{array} \right.
\end{equation}
This is closely related to a function defined in \cite[(3.9)]{CIK}, denoted by $\Phi(\lambda,s)$ there. More precisely, we have with this notation
\begin{equation} \label{eq:PhiV2}
\Phi_{\rm V}(z,s)=\Phi(-2i z,s) \widehat G(z)^{\sigma_3},
\end{equation}
where
\begin{equation}\label{eq:Ghatdef}
\widehat{G}(z) = 2^{\beta}e^{\pi i \beta}e^{-\frac{\pi i \alpha}{4}}e^{i z} \Big( z+\frac{is}{4} \Big)^{\frac{1}{2}(\frac{\alpha}{2}+\beta)}\Big( z-\frac{is}{4} \Big)^{-\frac{1}{2}(\frac{\alpha}{2}-\beta)},
\end{equation}
with
\begin{equation*}
-\frac{\pi}{2}< \arg \Big(z+\frac{is}{4}\Big) < \frac{3\pi}{2}, \qquad - \frac{3\pi}{2} < \arg\Big(z-\frac{is}{4}\Big)<\frac{\pi}{2}.
\end{equation*}
It is straightforward to verify that $\Phi_{\rm V}$ satisfies the following RH problem
\subsubsection*{RH problem for $\Phi_{\rm V}$}
\begin{itemize}
\item[(a)] $\Phi_{\rm V}$ is analytic on $\mathbb C\setminus \Sigma_{\Phi_{\rm V}}$, where $\Sigma_{\Phi_{\rm V}}$ is as in Figure \ref{figure: contour Psi(1)}.
\item[(b)] On $\Sigma_{\Phi_{\rm V}}$, $\Phi_{\rm V}$ has the following jumps:
\begin{equation*}\begin{aligned}
\Phi_{\rm V,+}(z)&=\Phi_{\rm V,-}(z)\begin{pmatrix}1&-1\\0&1\end{pmatrix}&&\textrm{for $\arg z=\pi/4$ and $\arg z=3\pi/4$,}\\
\Phi_{\rm V,+}(z)&=\Phi_{\rm V,-}(z)\begin{pmatrix}1&0\\1&1\end{pmatrix}&&\textrm{for $\arg z=-\pi/4$ and $\arg z=-3\pi/4$,}\\
\Phi_{\rm V,+}(z)&=\Phi_{\rm V,-}(z)e^{\pi i \left(\frac{\alpha}{2}+\beta\right)\sigma_3}&&\textrm{for $\arg(z+is/4)=-\pi/2$,}\\
\Phi_{\rm V,+}(z)&=\Phi_{\rm V,-}(z)e^{\pi i \left(\frac{\alpha}{2}-\beta\right)\sigma_3}&&\textrm{for $\arg(z-is/4)=\pi/2$.}
\end{aligned} \end{equation*}
\item[(c)] The behaviour of $\Phi_{\rm V}(z)$ as $z\to \infty$ is inherited from condition (c) in the RH problem for $ \Psi$.
\item[(d)] The behaviour of $\Phi_{\rm V}(z)$ as $z\to \pm is/4$ is inherited from condition (d) in the RH problem for $\Psi$.
\end{itemize}
 
For the following results, we rely on \cite{CIK}.
\begin{proposition} \label{Prop:CIK}
Let $\alpha > -1$ and $\beta \in i\mathbb{R}$. The following statements hold.
\begin{itemize} \item[(a)] $\Psi$ is solvable all $s>0$.
\item[(b)] When considered as a function of $s$, $H_0(0)$ and $H_1(1)$ are analytic on an open set in the complex plane containing $(0,\infty)$. 
\item[(c)] As $s\to 0$,
\begin{equation*} 
H_0(0)=\mathcal O\left(\begin{pmatrix}s^{\alpha/2}&s^{-\alpha/2}+s^{\alpha/2}\\ s^{\alpha/2} &s^{-\alpha/2}+s^{\alpha/2}\end{pmatrix} \right), \quad H_1(1)=\mathcal O\left(\begin{pmatrix}s^{-\alpha/2}+s^{\alpha/2}&s^{\alpha/2}\\s^{-\alpha/2}+s^{\alpha/2} &s^{\alpha/2}\end{pmatrix} \right),
\end{equation*}
for $\alpha\neq 0$, while
\begin{align*}
H_{0}(0) = \bigO \left( \begin{pmatrix}
1 & \log s \\ 1 & \log s
\end{pmatrix} \right), \qquad H_{1}(1) = \bigO \left( \begin{pmatrix}
\log s & 1 \\ \log s & 1
\end{pmatrix} \right),
\end{align*} as $s\to 0$ for $\alpha=0, \, \beta \neq 0$.
\item[(d)] As $s\to \infty$, we have $\Phi(s/2;s),\, \Phi(-s/2;s)= I+\mathcal O\left(e^{-cs}\right)$ (where $\Phi$ is as in \cite{CIK}, see \eqref{eq:PhiV2} above for relation to $\Phi_{\rm V}$), for some constant $c>0$. Similarly, $\Phi'(s/2;s),\, \Phi'(-s/2;s)=\mathcal O\left(e^{-cs}\right)$ as $s\to \infty$.
\item[(e)] There hold the identities
\begin{align*}
(H_0(0)^{-1}H_0'(0))_{11}&=-\frac{s}{2}-\frac{\sigma(s)}{\frac{\alpha}{2}+\beta}+\frac{\alpha}{4}-\frac{\beta}{2},\\
(H_1(1)^{-1}H_1'(1))_{22}&=\frac{s}{2}+\frac{\sigma(s)}{\frac{\alpha}{2}-\beta}-\frac{\alpha}{4}-\frac{\beta}{2},
\end{align*}
where $\sigma$ is a real analytic function on $(0,+\infty)$ and satisfies
\begin{equation}\label{sigmalim}
\sigma(s)=\begin{cases}\alpha^2/4-\beta^2+\mathcal O(s)+\mathcal O(s^{1+\alpha}\log s)&\textrm{as } s\to 0,\\
\mathcal O\left(s^{-1+2\alpha}e^{-s}\right) &\textrm{as } s\to + \infty.\end{cases} \end{equation}
Furthermore, as $\hat s\to \infty$,
\begin{multline*}
\int_{0}^{\hat s} \bigg( \sigma(s) - \Big( \frac{\alpha^{2}}{4}-\beta^{2} \Big) \bigg) \frac{ds}{s} = - \bigg( \frac{\alpha^{2}}{4}-\beta^{2} \bigg) \log (\hat s) \\
- \log \frac{G(1+\frac{\alpha}{2}+\beta)G(1+\frac{\alpha}{2}-\beta)}{G(1+\alpha)} + o(1),
\end{multline*}
where $G$ is the Barnes G-function.
\item[(f)] As $z \to \infty$, we have
\begin{equation}\label{eq:PhiVasy}
\Phi_{\mathrm{V}}(z)\widehat{G}(z)^{-\sigma_{3}} = I + \bigO(z^{-1}),
\end{equation}
uniformly for $s \in (0,+\infty)$.
\end{itemize}
\end{proposition}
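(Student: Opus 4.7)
The proposition is a compendium of properties of the Painlev\'e V model RH problem, and each statement corresponds, up to notational bookkeeping, to a result already contained in \cite{CIK}. The overall plan is therefore to exhibit, for each item, the matching statement or short derivation from \cite{CIK}, keeping careful track of (i) the conventions in the footnote (our $\alpha,\beta$ correspond to $2\alpha,-\beta$ in \cite{CIK}), and (ii) the conjugation \eqref{eq:PhiV2} by $\widehat G$ that relates $\Phi_{\rm V}$ to the function $\Phi$ of \cite{CIK}. The main obstacle is merely organizational: making sure the normalization discrepancies and the choices of branches in \eqref{eq:Ghatdef} are propagated consistently through items (c), (e) and (f), where explicit formulas are asserted.

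Parts (a), (b) and (f) are the easy ones. Solvability of the RH problem for $\Psi$ for all $s>0$, and analyticity of its solution in $s$ along a complex neighborhood of $(0,\infty)$, are established in \cite{CIK} via a vanishing-lemma argument combined with the isomonodromic deformation of Painlev\'e V (the would-be poles in $s$ are ruled out by the positivity coming from $\alpha>-1$ and $\beta\in i\mathbb R$). Analyticity of $H_0(0)$ and $H_1(1)$ then follows directly from the local behaviour in condition (d) of the RH problem for $\Psi$. For (f), I would start from \eqref{eq:PhiV}, use the asymptotic \eqref{Psi as} of $\Psi$ at infinity, and verify by a direct expansion that the prefactor $e^{iz}(z+\tfrac{is}{4})^{\frac12(\frac{\alpha}{2}+\beta)}(z-\tfrac{is}{4})^{-\frac12(\frac{\alpha}{2}-\beta)}$ appearing in $\widehat G(z)$ exactly absorbs the $e^{-\frac{s}{2}z\sigma_3}$ factor and the algebraic $z^{\beta\sigma_3}$ factor coming from \eqref{Psi as}, leaving $I+\bigO(z^{-1})$ uniformly in $s\in(0,\infty)$.

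Parts (c) and (d) are the two asymptotic regimes analyzed in \cite{CIK}. For (d), as $s\to\infty$ the two critical points $\pm is/4$ are well separated, so a standard Deift-Zhou steepest descent with local Bessel / confluent hypergeometric parametrices at each singularity gives an error of order $e^{-cs}$ in the small-norm problem; this yields the stated bounds on $\Phi(\pm s/2;s)$ and $\Phi'(\pm s/2;s)$. For (c), as $s\to 0$ the two singularities merge, and \cite{CIK} carry out a matching against an appropriate coalescing model problem; the bounds on $H_0(0)$ and $H_1(1)$ follow by undoing the local change of variables $z=\tfrac{1}{2}-\tfrac{sz'}{2i}$, which produces the displayed $s^{\pm\alpha/2}$ (and $\log s$, for $\alpha=0$) behaviour.

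Part (e) is the substantive ``Painlev\'e V'' input. The isomonodromic deformation of $\Psi$ in $s$ produces a Lax pair whose compatibility is equivalent to a Painlev\'e V equation. Introducing the associated Jimbo-Miwa $\sigma$-Hamiltonian yields the two logarithmic-derivative identities displayed in (e), up to the explicit shift coming from the conjugation \eqref{eq:PhiV2} by $\widehat G$, which adds the explicit terms $-\tfrac{s}{2}+\tfrac{\alpha}{4}-\tfrac{\beta}{2}$ and $+\tfrac{s}{2}-\tfrac{\alpha}{4}-\tfrac{\beta}{2}$. The two endpoint asymptotics \eqref{sigmalim} are a direct consequence of inserting the small-$s$ and large-$s$ expansions from (c) and (d) into the Hamiltonian representation of $\sigma(s)$. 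Finally, the integrated identity involving the Barnes $G$-function is established in \cite{CIK} by recognizing that $\int_0^{\hat s}\sigma(s)\tfrac{ds}{s}$ equals (up to the leading $(\alpha^2/4-\beta^2)\log\hat s$ term) the $\log$-ratio of two Hankel or Toeplitz determinants whose large-parameter asymptotics, via Selberg-type / Widom-type formulas, provide exactly the stated Barnes $G$-function constant. Once the conventions of \cite{CIK} are aligned with ours, this is a direct quotation.
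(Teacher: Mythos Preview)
Your proposal is correct and takes essentially the same approach as the paper: each item is reduced to the corresponding result in \cite{CIK}, with the necessary bookkeeping for the convention change $(\alpha,\beta)\leftrightarrow(2\alpha,-\beta)$ and the conjugation by $\widehat G$. One small difference worth noting: for part (e) the paper does not simply quote the Hamiltonian identities but instead derives the differential equation $\partial_s(H_0(0)^{-1}H_0'(0))_{11}=-\sigma'(s)/(\tfrac{\alpha}{2}+\beta)-\tfrac12$ directly from the Lax pair $\Psi'\Psi^{-1}=A_\infty+\tfrac{A_0}{z}+\tfrac{A_1}{z-1}$ and $(\partial_s\Psi)\Psi^{-1}=B_1z+B_0$ (computing residues at $z=0$ and reading off $B_1=-\tfrac12\sigma_3$), then integrates and fixes the constant of integration by evaluating the left-hand side explicitly as $s\to\infty$ via part (d) and the $\widehat G$-conjugation \eqref{eq:PhiV2}; so the $\widehat G$ conjugation enters not as a direct shift of a known identity but as the tool for pinning down the integration constant.
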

\begin{remark}
In \cite{CIK} it was furthermore proven that $\sigma$ satisfies the Jimbo-Miwa-Okamoto form of the  Painlev\'e V equation:
\begin{multline*}
\left(s\frac{d^2\sigma}{ds^2}\right)^2=\left(\sigma-s\frac{d\sigma}{ds}+2\left(\frac{d\sigma}{ds}\right)^2+\alpha \frac{d\sigma}{ds}\right)^2\\
-4\left(\frac{d\sigma}{ds}\right)^2\left(\frac{d\sigma}{ds}+\alpha/2-\beta\right)\left(\frac{d\sigma}{ds}+\alpha/2+\beta\right).
\end{multline*}
\end{remark}
\begin{proof} Part (a) is the vanishing lemma proven in \cite[Section 4.4.1]{CIK}. For part (b), it is generally known that such functions are meromorphic functions of $s$ with only a finite number of possible poles  for values of $s$ where the RH problem is not solvable, and by part (a) it follows that they are analytic on an open set containing $(0,\infty)$. 

For part (c), we rely on the analysis of $\Psi$ as $s\to 0$ in \cite[Section 4.2.2]{CIK}, taking care that our $\alpha$ and $\beta$ correspond to $2\alpha$ and $-\beta$ in \cite{CIK}.
When $\alpha \neq 0,1,2,\dots$, it follows from  \cite{CIK}[(4.53), (4.23), and (4.75)], writing $\xi=\frac{\lambda}{s}+1$, that
\begin{multline*}
\Psi(\xi)=s^{\beta \sigma_3}E(\lambda)
\begin{pmatrix}1&c_0J(\lambda;s,\alpha/2,-\beta)\\0&1 \end{pmatrix}(\lambda+s)^{\left(\frac{\alpha}{4}+\frac{\beta}{2}\right)\sigma_3}\lambda^{\left(\frac{\alpha}{4}-\frac{\beta}{2}\right)\sigma_3}\begin{pmatrix}1&\hat g\\0&1 \end{pmatrix}\\ \times (I+o(1)), 
\end{multline*}
as $s\to 0$, uniformly for $\lambda$ in a fixed neighbourhood of $0$ intersected with the region where $\xi=\frac{\lambda}{s}+1$ is in region III, where $\hat g$ and $c_0$ are some constants and $E$ is an analytic function (and thus bounded) on a neighbourhood of $0$. By   \cite[formula (4.60)]{CIK}, $J$ is bounded, from which part (c) follows for $H_0(0)$ and for $\alpha\neq 0,1,2,\dots$, by the definition of $H_0$. The function $H_1(1)$ is evaluated similarly. When 
$\alpha=0,1,2,\dots$, by \cite[formulas  (4.67),  (4.23) and (4.75)]{CIK},
\begin{multline*}\Psi(\xi)=(I+o(1))s^{\beta \sigma_3}\widetilde E(\lambda)\begin{pmatrix}1&c_2 \widetilde J(\lambda;s,\alpha/2,-\beta)\\0&1 \end{pmatrix}\\ \times (\lambda+s)^{\left(\frac{\alpha}{4}+\frac{\beta}{2}\right)\sigma_3}\lambda^{\left(\frac{\alpha}{4}-\frac{\beta}{2}\right)\sigma_3}\begin{pmatrix}1&\frac{(-1)^{\alpha+1}}{\pi} \sin \pi(\alpha/2-\beta) \log (\lambda e^{-\pi i}) \\0&1 \end{pmatrix}\begin{pmatrix}1&\hat g\\0&1 \end{pmatrix}, \end{multline*}
as $s\to 0$, uniformly for $\lambda$ in a fixed neighbourhood of $0$ intersected with the region where $\xi=\frac{\lambda}{s}+1$ is in region III, where $\hat g$ and $c_2$ are some constants and $\widetilde E$ is an analytic function (and thus bounded) on a neighbourhood of $0$. By  \cite[formula  (4.71)]{CIK}, it follows that
\begin{multline*}
c_2\widetilde J(\lambda;s,\alpha,\beta)=-\frac{1}{\pi}\sin \pi (\alpha/2+\beta)\lambda^{\alpha/2-\beta}(\lambda+x)^{\alpha/2+\beta}\log(\lambda e^{-\pi i})\\ +\mathcal O\left(\textrm{max}\{1,s^\alpha|\log s |\}\right). \end{multline*}
Thus, by the definition of $H_0$, part (c) follows also for $\alpha=0,1,2,\dots$ for $H_0(0)$, and $H_1(1)$ is evaluated similarly.

Part (d) follows from   \cite[formulas (4.2) and (4.9)]{CIK}.

Now consider part (e). We will prove the identity for $H_0(0)^{-1}H_0'(0)$, the other one follows in a similar manner. It is easily verified that $\Psi'\Psi^{-1}$ is of the form
\begin{equation}\label{eq:Aident}
\Psi'(z,s)\Psi(z,s)^{-1}=A_\infty(s)+\frac{A_0(s)}{z}+\frac{A_1(s)}{z-1} 
\end{equation}
and similarly that
\begin{equation}\label{eq:Bident}
\left[\frac{\partial}{\partial s}\Psi(z,s)\right]\Psi(z,s)^{-1}=B_1(s)z+B_0(s)
\end{equation}
for some matrices $A_i,B_i$  which are independent of $z$. Inserting the expression $\Psi(z)=H_0(z)z^{(\frac{\alpha_j}{4}+\frac{\beta_j}{2})\sigma_3}$ into \eqref{eq:Aident} and evaluating the residue at $z=0$, we see that
\[
\left(\frac{\alpha_j}{4}+\frac{\beta_j}{2}\right)H_0(0)\sigma_3 H_0(0)^{-1}=A_0(s).
\]
On the other hand, expanding near zero $H_0(z)=H_{0,0}(I+H_{0,1} z+\mathcal O(z^2))$, we see that $H_0(0)^{-1}H_0'(0)=H_{0,1}$. Moreover, using $\Psi(z)=H_0(z)z^{(\frac{\alpha}{4}+\frac{\beta}{2})\sigma_3}$, we see by evaluating \eqref{eq:Bident} as well as its $z$-derivative, both at $z=0$, that
\[
\frac{\partial }{\partial s} H_{0,0}=B_0H_{0,0} \qquad \text{and} \qquad \frac{\partial }{\partial s}H_{0,1}=H_{0,0}^{-1}B_1H_{0,0}.
\]
Substituting the large $z$ expansion \eqref{Psi as} for $\Psi$ into \eqref{eq:Bident}, it follows that $B_1=-\frac{1}{2}\sigma_3$, so by our remark that $H_0^{-1}(0)H_0'(0)=H_{0,1}$, we obtain
\begin{equation}\nonumber \begin{aligned}
\frac{\partial}{\partial s} (H_0^{-1}(0)H_0'(0))_{11}&=-\frac{1}{2}(H_0^{-1}(0)\sigma_3 H_0(0))_{11}\\&=-\frac{1}{2}(H_0(0)\sigma_3 H_0(0)^{-1})_{11}=-\frac{1}{\frac{\alpha}{2}+\beta}A_{0,11}(s).
\end{aligned}\end{equation}

Using \cite[(4.103) and (4.110)]{CIK}, which state that   $A_{0,11}(s)=\sigma'(s)+\frac{\alpha}{4}+\frac{\beta}{2}$ (recall that our $\alpha$ and $\beta$ correspond to $2\alpha$ and $-\beta$ in \cite{CIK}), we find that
\begin{equation}\label{G-1G'sigmaeqn}
\partial_s (H_0^{-1}(0)H_0'(0))_{11}=-\frac{1}{\frac{\alpha}{2}+\beta}\sigma'(s)-\frac{1}{2},
\end{equation}
where $\sigma$ is a real analytic function satisfying \eqref{sigmalim}.
It remains to evaluate the integration constant here.  Recalling the notation $\Phi(z,s)=\Phi_{\rm V}( \frac{iz}{2},s)\widehat G(\frac{iz}{2})^{-\sigma_3}$, it follows by part (d) and the definition of $\widehat G$ in \eqref{eq:Ghatdef}, and by the relation between $\Psi$ and $\Phi$ in \eqref{eq:PhiV}, \eqref{eq:PhiV2}, that
\begin{align*}
&(H_0(0)^{-1}H_0'(0))_{11}= \lim_{z\to 0} \left( \big[ \Psi^{-1}(z)\Psi'(z) \big]_{11}- (\tfrac{\alpha}{4}+\tfrac{\beta}{2})\frac{1}{z} \right)\\
&=\lim_{z\to -i\frac{s}{4}}\left[-2i(\Phi(-2iz)^{-1}\Phi'(-2iz))_{11}+\frac{\widehat G'(z)}{\widehat G(z)}-(\tfrac{\alpha}{4}+\tfrac{\beta}{2})\frac{1}{z+i\frac{s}{4}}\right]\frac{is}{2}\\
&=-\frac{s}{2}+\frac{\alpha}{4}-\frac{\beta}{2}+o(1),
\end{align*}
as $s\to 0$.
Combined with \eqref{G-1G'sigmaeqn} and the limiting behaviour of $\sigma$ in \eqref{sigmalim}, we obtain part (e) for $H_0(0)^{-1}H_0'(0)$, the result for $H_1(1)^{-1}H_1'(-1)$ is proven in a similar manner.

\end{proof}

We are finally in a position to define our local parametrix near a point $t_j$.

\subsubsection*{The local parametrix near \texorpdfstring{$t_j$}{t}}

Let $\epsilon_{0}>0$  be sufficiently small but fixed, and let $U_{t_{j}}$ denote a sufficiently small but fixed disc centred at $t_{j}$. In this section, we follow \cite{CIK,ClaeysFahs} to construct a local parametrix in $U_{t_{j}}$ which will approximate $S$ uniformly for $0<\epsilon\leq \epsilon_0$. Let us define 
\begin{equation}\label{eq:fdef}
\zeta_{t_j}(z) = i \left( \left\{ \begin{array}{l l}
\phi(z), & \mbox{if } \Im z > 0 \\
-\phi(z), & \mbox{if } \Im z < 0
\end{array} \right\} - \frac{\phi(t_{j}+i\epsilon)-\phi(t_{j}-i\epsilon)}{2} \right).
\end{equation}
One can check e.g. from \eqref{eq:phidef} that $\phi(z) = \overline{\phi(\overline{z})}$, and thus
\begin{equation*}
\frac{\phi(t_{j}+i\epsilon)-\phi(t_{j}-i\epsilon)}{2} = i \Im \phi(t_{j}+i\epsilon).
\end{equation*}
One readily verifies that the function $\zeta_{t_j}$ is a conformal map in $U_{t_{j}}$, and satisfies $\zeta_{t_j}^{\prime}(t_{j}) = \pi  \psi_{V}(t_{j}) > 0$. This remark allows us to finally define what the lenses from Section \ref{sec:trans} look like in $U_{t_j}$: we choose the lenses so that $\zeta_{t_j}$ maps $\Sigma_S\setminus \mathbb R$ to $\Sigma_{\Phi_{\rm V}}$. We observe that $\zeta_{t_j}(t_j)$ is not necessarily equal to zero, but $t_j$ is not a singular point in the RH problem for $S$, so we simply deform the contour of $S$ slightly and assume that the lens intersects the real line at $t_j$ anyway.

To make use of our Painlev\'e model RH problem, we also need to define the $\alpha,\beta,$ and $s$ parameters. For $\alpha$ and $\beta$, we simply choose $\alpha_j,\beta_j$. For $s$, let us define
\begin{equation}\label{eq:sjdef}
s_{j,N} = 2N \Big( \phi(t_{j}+i\epsilon) + \phi(t_{j}-i\epsilon) \Big) = 4N \Re \phi(t_{j}+i\epsilon) > 0.
\end{equation}
Then $N\zeta_{t_j}(t_j\pm i\epsilon)=\pm \frac{i s_{j,N}}{4}$.

This allows us to define precisely what the branch cut of $\omega_\epsilon$ from Section \ref{sec:FHproof} is: we choose the cut going from $t_j+i\epsilon$ upward to be such that $\zeta_{t_j}$ maps the part of it in $U_{t_j}$ to $i(s_{j,N}/4,\infty)$, and similarly in the lower half plane -- one readily checks from basic properties of $\zeta_{t_j}$, that at least for small enough $U_{t_j}$, this gives rise to a valid branch cut (smooth, does not intersect itself, or spiral into itself). 

We can finally define our local parametrix:
\begin{equation}\label{eq:local}
P^{(t_{j})}(z) = E_{t_{j}}(z)\Phi_{\mathrm{V}}(N\zeta_{t_j}(z); s_{j,N},\alpha_{j},\beta_{j} ) Q(z)e^{-N\phi(z)\sigma_{3}}e^{-\frac{f(z)}{2}\sigma_{3}}\omega_{\epsilon}(z)^{-\frac{\sigma_{3}}{2}},
\end{equation}
with
\begin{equation}\label{eq:Qdef}
Q(z) = \left\{  \begin{array}{l l}
\begin{pmatrix}
0 & 1 \\ -1 & 0
\end{pmatrix}, & \Im z > 0, \\
I, & \Im z < 0,
\end{array} \right.
\end{equation}
and where $\Phi_{\mathrm{V}}$ is the solution to the modified Painlev\'{e} V model RH problem discussed above, and $E_{t_{j}}$ is given by
\begin{multline}\label{eq:Edef}
E_{t_{j}}(z) = M(z) \omega_{\epsilon}(z)^{\frac{\sigma_{3}}{2}}e^{\frac{f(z)}{2}\sigma_{3}} Q(z)^{-1} \Big( N\zeta_{t_j}(z)+\frac{is_{j,N}}{4} \Big)^{-(\frac{\alpha_{j}}{4}+\frac{\beta_{j}}{2})\sigma_{3}} \\ \times \Big( N\zeta_{t_j}(z)-\frac{is_{j,N}}{4} \Big)^{(\frac{\alpha_{j}}{4}-\frac{\beta_{j}}{2})\sigma_{3}}e^{-iN \Im \phi(t_{j}+i\epsilon)\sigma_{3}}e^{\frac{\pi i \alpha_{j}}{4}\sigma_{3}}e^{-\pi i \beta_{j} \sigma_{3}}2^{-\beta_{j} \sigma_{3}}.
\end{multline}
Here we choose the branch of the roots so that the cuts of the $N\zeta_{t_j}(z)\pm \frac{is_{j,N}}{4}$-terms are on $\Sigma_{\omega_\epsilon}$ discussed in Section \ref{sec:FHproof} -- to be precise, one defines $\log (N\zeta_{t_j}(z)\pm i \frac{s_{j,N}}{4})=\int_1^{N \zeta_{t_j}(z)\pm i \frac{s_{j,N}}{4}}\frac{dw}{w}$ for all $z$ off of the above mentioned cuts, in such a manner that the integration contour does not cross the cut.

The claim about $P^{(t_j)}$ is the following. 
\begin{lemma}\label{le:singrhp}
	The function $P^{(t_j)}$ satisfies the following RH problem.
	\begin{itemize}[leftmargin=0.75cm]
		\item[$(a)$] $P^{(t_j)}: U_{t_j} \setminus \Sigma_{S} \to \C^{2\times 2}$ is analytic -- here $\Sigma_S$ is as in Section \ref{sec:trans}.
		\item[$(b)$] $P^{(t_j)}$ satisfies the same jump conditions as $S$ on $\Sigma_S\cap U_{t_j}$. 
		\item[$(c)$] For $z \in \bigcup_{j=1}^k (\partial U_{a_j} \cup \partial U_{b_j})$, 
		\begin{align}\label{eq:loc_match}
		P^{(t_j)}(z)M(z)^{-1} = I + \mathcal{O}(N^{-1})
		\end{align}
		
		\noindent where the implied constant is uniform in $0<\epsilon\leq \epsilon_0$.
		\item[$(d)$] $P^{(t_j)}(z)=\mathcal O(1)$ as $z\to t_j$.
	\end{itemize}
\end{lemma}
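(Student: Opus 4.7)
The plan is to verify the four conditions of the RH problem in turn. The overall strategy is exactly that of \cite{CIK,ClaeysFahs} adapted to our multi-cut setting: the prefactor $E_{t_j}$ is designed so that $P^{(t_j)}$ inherits all of its jumps on $\Sigma_S \cap U_{t_j}$ from $\Phi_{\mathrm V}(N\zeta_{t_j}(\cdot))$ and the two conjugation factors $Q$ and $e^{-N\phi\sigma_3}e^{-f\sigma_3/2}\omega_\epsilon^{-\sigma_3/2}$, while at the same time the matrix multiplying $\Phi_{\mathrm V}$ stays analytic in $U_{t_j}$. I will first carry out analyticity of $E_{t_j}$ (this is the main obstacle), then read off (a) and (b) by a direct jump computation, then get (c) from the large-argument expansion \eqref{eq:PhiVasy} of $\Phi_{\mathrm V}$, and finally get (d) from condition (d) of the RH problem for $\Phi_{\mathrm V}$.

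\textbf{Step 1: analyticity of $E_{t_j}$ in $U_{t_j}$.} The factor $E_{t_j}$ has four possible sources of discontinuity in $U_{t_j}$: the jumps of $M$ on $(a_j,b_j)\cap U_{t_j}$, the jump of $Q^{-1}$ on $\mathbb R$, the jumps of $\omega_\epsilon^{\sigma_3/2}$ on $\Sigma_{\omega_\epsilon}\cap U_{t_j}$, and the branch cuts of $\bigl(N\zeta_{t_j}(z)\pm \tfrac{is_{j,N}}{4}\bigr)^{\mp(\alpha_j/4\pm\beta_j/2)\sigma_3}$. By construction $\zeta_{t_j}$ maps $\Sigma_{\omega_\epsilon}\cap U_{t_j}$ onto $\zeta_{t_j}^{-1}$ of the cuts of those roots, so these two sources of discontinuity are defined on the same set; a direct computation of boundary values, using $\omega_{\epsilon,\pm}(x)/\omega_{\epsilon,\mp}(x) = e^{\pm 2\pi i(\alpha_j/2\pm\beta_j)}$ across the respective cuts, shows that they cancel. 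On $(a_j,b_j)\cap U_{t_j}$ the jump of $M$ is \emph{off}-diagonal (see the RH problem for $M$), and the jump of $Q^{-1}$ is precisely the same anti-diagonal matrix $\bigl(\begin{smallmatrix}0&1\\-1&0\end{smallmatrix}\bigr)^{-1}$; in combination with the jump $\omega_{\epsilon,+}^{\sigma_3/2}\omega_{\epsilon,-}^{-\sigma_3/2}=(e^{f}\omega_\epsilon)^{\sigma_3/2}/\omega_{\epsilon,-}^{\sigma_3}$ coming from Lemma~\ref{le:D}(b), the product is continuous. It remains to rule out an isolated singularity at $t_j$: near $t_j$ one has $M(z)=\mathcal O(1)$ while the two root factors are bounded by fixed powers of $z-(t_j\pm i\epsilon)$, and no component of $E_{t_j}$ blows up, so Riemann's removable singularity theorem gives analyticity throughout $U_{t_j}$.

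\textbf{Step 2: conditions (a) and (b).} Given the analyticity of $E_{t_j}$, the remaining jumps of $P^{(t_j)}$ come from $\Phi_{\mathrm V}(N\zeta_{t_j}(\cdot))Q(\cdot)e^{-N\phi\sigma_3}e^{-f\sigma_3/2}\omega_\epsilon^{-\sigma_3/2}$, and all these are concentrated on $\Sigma_S\cap U_{t_j}$. For $z$ on the two lens-boundary pieces of $\Sigma_S\cap U_{t_j}$, the jump of $\Phi_{\mathrm V}$ is $\bigl(\begin{smallmatrix}1&\mp 1\\0&1\end{smallmatrix}\bigr)$ or $\bigl(\begin{smallmatrix}1&0\\1&1\end{smallmatrix}\bigr)$, and conjugation by $Q$ swaps rows and columns so as to produce $\bigl(\begin{smallmatrix}1&0\\ *&1\end{smallmatrix}\bigr)$; multiplying by $e^{-N\phi\sigma_3}e^{-f\sigma_3/2}\omega_\epsilon^{-\sigma_3/2}$ gives exactly the jump \eqref{eq:Sjump4}. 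On $(t_j,\infty)\cap U_{t_j}$, the $\Phi_{\mathrm V}$-jump $e^{\pi i(\alpha_j/2\mp\beta_j)\sigma_3}$ combines with the jump of $\omega_\epsilon^{-\sigma_3/2}$ across $\Sigma_{\omega_\epsilon}$ to yield the identity, which matches the fact that $(t_j,b_j)\subset J$ has off-diagonal jump for $S$ coming entirely from $Q$ (the upper and lower lenses), and on $(b_j,\infty)\cap U_{t_j}$ one reads off \eqref{eq:Sjump1} similarly; the computation on the piece $(-\infty,t_j)\cap U_{t_j}$ is analogous. The verification is just a long but routine check of $(1\times 2)\times(2\times 2)$-type jump identities.

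\textbf{Step 3: matching condition (c).} For $z\in\partial U_{t_j}$ one has $|N\zeta_{t_j}(z)|\geq cN$ uniformly in $\epsilon\in(0,\epsilon_0]$, so by \eqref{eq:PhiVasy}
\begin{equation*}
\Phi_{\mathrm V}(N\zeta_{t_j}(z)) = \bigl(I+\mathcal O(N^{-1})\bigr)\,\widehat G(N\zeta_{t_j}(z))^{\sigma_3}.
\end{equation*}
Substituting this into \eqref{eq:local} and using the explicit form \eqref{eq:Ghatdef} of $\widehat G$ together with \eqref{eq:sjdef}, the exponential $e^{iN\zeta_{t_j}(z)}$ from $\widehat G$ cancels $e^{-N\phi\sigma_3}$ up to the constant $e^{-iN\Im\phi(t_j+i\epsilon)\sigma_3}$ that was built into $E_{t_j}$; the two half-powers of $N\zeta_{t_j}(z)\pm is_{j,N}/4$ that remain cancel with the corresponding factors in \eqref{eq:Edef}; and the constants $2^{\beta_j}e^{\pi i\beta_j}e^{-\pi i\alpha_j/4}$ are exactly absorbed as well. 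What is left is $P^{(t_j)}(z)M(z)^{-1}=E_{t_j}(z)(\text{cancellation factor})E_{t_j}(z)^{-1}M(z)M(z)^{-1}+\mathcal O(N^{-1})=I+\mathcal O(N^{-1})$, where the implied constant depends only on $\sup_{\partial U_{t_j}}\|M^{\pm 1}\|$ and on $\sup_{\partial U_{t_j}}|\omega_\epsilon^{\pm 1/2}|$, both of which are uniformly bounded for $\epsilon\in(0,\epsilon_0]$ by Lemma~\ref{le:D}(d). This uniformity in $\epsilon$ is the essential point that distinguishes our setting from the purely ``smooth'' case and is crucial for the argument in Section \ref{sec:FH}.

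\textbf{Step 4: boundedness (d).} As $z\to t_j$, the conformal map gives $N\zeta_{t_j}(z)\to 0$, which corresponds to $\Psi(\xi)$ at $\xi=1/2$ through \eqref{eq:PhiV}; by condition (d) of the RH problem for $\Psi$, $\Psi$ is bounded near $1/2$, hence so is $\Phi_{\mathrm V}$ near $z=0$ away from $\pm is/4$. All other factors in \eqref{eq:local} are bounded at $t_j$ (the root factors in $E_{t_j}$ remain bounded as $z\to t_j$ since $N\zeta_{t_j}(t_j)\pm is_{j,N}/4$ are separated from $0$ by the fixed quantity $is_{j,N}/4$), which gives $P^{(t_j)}(z)=\mathcal O(1)$. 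The main obstacle in the whole proof is Step 1 — demonstrating that the carefully designed cancellations between $M$, $\omega_\epsilon^{\pm\sigma_3/2}$, $Q$, and the root factors indeed produce an analytic $E_{t_j}$ — but once that is in place the remaining three steps are direct computations.
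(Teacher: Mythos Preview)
Your overall approach is the same as the paper's (which simply defers (a), (b), (d) to \cite{CIK,ClaeysFahs} as ``routine'' and for (c) writes $P^{(t_j)}M^{-1}=I+E_{t_j}\,\mathcal O(N^{-1})\,E_{t_j}^{-1}$), and Steps 1, 3, 4 are essentially right. However, Step~2 contains a genuine geometric confusion about the jump contours that would make the argument fail as written.

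The jump $e^{\pi i(\frac{\alpha_j}{2}\mp\beta_j)\sigma_3}$ of $\Phi_{\mathrm V}$ lies on the \emph{vertical} segments $\pm i[s_{j,N}/4,\infty)$ in the $\zeta$-plane, not on the real axis. Under $\zeta_{t_j}$ these segments correspond to $\Sigma_{\omega_\epsilon}\cap U_{t_j}$ (the branch cuts emanating from $t_j\pm i\epsilon$), \emph{not} to $(t_j,\infty)$. Since $\Sigma_{\omega_\epsilon}$ is not part of $\Sigma_S$, what you must show there is that the $\Phi_{\mathrm V}$-jump cancels the jump of $\omega_\epsilon^{-\sigma_3/2}$ so that $P^{(t_j)}$ has \emph{no} jump across $\Sigma_{\omega_\epsilon}$. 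Conversely, on $\mathbb R\cap U_{t_j}$ the function $\Phi_{\mathrm V}(N\zeta_{t_j}(\cdot))$ has \emph{no} jump at all; the $S$-jump \eqref{eq:Sjump2} comes entirely from $Q$ (together with $\phi_++\phi_-=0$ on $J$ and the conjugation by $e^{-N\phi\sigma_3}e^{-f\sigma_3/2}\omega_\epsilon^{-\sigma_3/2}$). Your sentence about ``$(b_j,\infty)\cap U_{t_j}$'' is also vacuous, since $t_j$ lies in the interior of some $(a_l,b_l)$ and the small disc $U_{t_j}$ does not meet any gap.

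Two minor points: in Step~1 you invoke Lemma~\ref{le:D}(b) for a jump of $\omega_\epsilon$ on the real line, but $\omega_\epsilon$ is analytic across $\mathbb R$ (its cuts are on $\Sigma_{\omega_\epsilon}$), so that citation is unnecessary; and in Step~3 the uniform $\mathcal O(N^{-1})$ requires bounding $\|E_{t_j}^{\pm1}\|$, for which you should also note that the two root factors in \eqref{eq:Edef} combine to something uniformly $\mathcal O(1)$ on $\partial U_{t_j}$ (the $N^{\pm\alpha_j/4}$ contributions cancel, and the $\beta_j$-powers are unimodular since $\beta_j\in i\mathbb R$).
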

\begin{proof}
	Verifying a), b), and d) is a routine calculation making use of properties of $\Phi_{\mathrm V}$ and very similar to ones in \cite{CIK,ClaeysFahs} so we omit the details. Concerning the matching condition, in particular the uniformity in it, we point out that using \eqref{eq:PhiVasy} along with the definition of $\zeta_{t_j}$ from \eqref{eq:fdef}, that for $z\in \partial U_{t_j}$ 
	\begin{align*}
	P^{(t_j)}(z)M(z)^{-1}=I+E_{t_j}(z)\mathcal O (N^{-1}) E_{t_j}(z)^{-1}.
	\end{align*}
	with an implied constant that is uniform in $z$ and $0<\epsilon<\epsilon_0$. One also readily verifies that $P$ and  $\omega_\epsilon^{\pm 1}$ are bounded on $\partial U_{t_j}$ uniformly in $0<\epsilon<\epsilon_0$, and thus we obtain \eqref{eq:loc_match}.
\end{proof}

We now turn to performing our final transformation and solving our RH problem approximately.

\section{Solving the small norm problem}\label{sec:snorm}

As is standard in the RH analysis of these types of problems, we have the following lemma. Recall the contour $\Sigma_S$ from Figure \ref{ContourS} and the jumps $J_S$ from \eqref{eq:Sjump1}-\eqref{eq:Sjump4}.
\begin{lemma} \label{expsmalljump} Given $\delta>0$ (fixed but sufficiently small), there exists $c>0$, such that
\begin{equation*}J_S(z)=I+\mathcal O\left(e^{-cN}/(|z|^2+1)\right), \end{equation*}
as $N\to \infty$, uniformly for $z\in \Sigma_S\setminus J$ and $|z-z_0|>\delta$ for all $z_0\in \{a_j,b_j\}_{j=1}^k\cup\{t_j\}_{j=1}^p$.\end{lemma}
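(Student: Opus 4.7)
The plan is to prove the estimate by a case analysis on the three pieces of $\Sigma_S\setminus J$ appearing in \eqref{eq:Sjump1}--\eqref{eq:Sjump4}: the lens arcs $J^+\cup J^-$, the unbounded rays $(-\infty,a_1)\cup(b_k,\infty)$, and the gap intervals $(b_j,a_{j+1})$ for $j=1,\ldots,k-1$. On each piece the claim reduces to exponential decay of a single scalar exponential, either $e^{-2N\phi(z)}$ on the lens or $e^{N(\phi_+(z)+\phi_-(z))}$ on the real axis, modulated by prefactors from $f$ and $\omega_\epsilon$ which must be shown to be harmless uniformly in the deformation parameters.

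On the lens arcs, away from the prescribed $\delta$-neighbourhoods of $\{a_j,b_j\}_{j=1}^{k}\cup\{t_j\}_{j=1}^{p}$, $J_S-I$ has only the $(2,1)$ entry $e^{-f(z)}\omega_\epsilon(z)^{-1}e^{-2N\phi(z)}$. The key input is $\Re\phi(z)\geq c(\delta)>0$. This follows from the identification $\phi_+(x)=-i\pi\int_{b_k}^{x}\psi_V(\lambda)\,d\lambda$ on each interval $(a_j,b_j)$ (purely imaginary by \eqref{jumpphi2} and the reality of $V$), together with the Cauchy--Riemann calculation $\partial_y\Re\phi(x+i0^+)=\pi\psi_V(x)>0$ in the interior of each interval. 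Since $\psi_V$ is bounded below by some $c_1(\delta)>0$ on $[a_j+\delta/2,b_j-\delta/2]$ (the square-root zeros at the endpoints being excluded by the $\delta$-hypothesis), and the lens arcs were opened in the region $\{\Re\phi>0\}$ (Section \ref{sec:trans}), a compactness argument produces a uniform lower bound. The prefactors $e^{-f(z)}$ and $\omega_\epsilon(z)^{-1}$ are uniformly bounded on the lens staying $\delta$-away from $\{t_j\}$, including uniformly in $\epsilon\in(0,\epsilon_0]$ by direct inspection of \eqref{eq:omega_eps}. The factor $1/(|z|^2+1)$ is harmless since the lens is bounded.

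On the real pieces, the only nonzero entries of $J_S-I$ are in position $(1,2)$, of size $F(z)e^{N(\phi_+(z)+\phi_-(z))}\omega_\epsilon(z)$. The exponent is real and, by the strict Euler--Lagrange inequality \eqref{eq:EL2s} combined with \eqref{jumpphi1}, strictly negative on $\R\setminus J$. On the bounded part $[-R,a_1-\delta]\cup[b_k+\delta,R]\cup\bigcup_{j}[b_j+\delta,a_{j+1}-\delta]$ with $R$ fixed, a compactness argument gives $\phi_++\phi_-\leq-c(R,\delta)<0$, yielding $\mathcal{O}(e^{-cN})$; $1/(|z|^2+1)$ is again harmless. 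For the tails $|z|>R$, I would use the growth hypothesis $V(z)/\log|z|\to\infty$ together with $\phi_++\phi_-=2g(z)-V(z)+\ell=2\log|z|-V(z)+\mathcal{O}(1)$ as $|z|\to\infty$, condition (c) on $F$ ($F/e^{cV}\to 0$), and the polynomial growth of $\omega_\epsilon$ at infinity, to obtain $F(z)e^{N(\phi_++\phi_-)}\omega_\epsilon(z)\leq e^{-c'N}\cdot|z|^{-2}$ for $N$ and $|z|$ sufficiently large, absorbing one more application of the growth hypothesis to recover the $1/(|z|^2+1)$ factor.

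The main obstacle is that on the gap intervals $(b_j,a_{j+1})$ the diagonal entries $e^{\mp2\pi iN\Omega_j}$ of $J_S$ are generically not close to $1$, so the statement of the lemma should be understood as the assertion that \emph{$J_S-I$ agrees with the diagonal jump $e^{-2\pi iN\Omega_j\sigma_3}-I$ of the global parametrix $M$ up to the claimed exponentially small error}; equivalently, the off-diagonal entry of $J_S$ on these gaps is bounded by $e^{-cN}/(|z|^2+1)$, which is what drives the decay after conjugation by $M$ in the subsequent small-norm analysis (the jump of $M$ on $(b_j,a_{j+1})$ being exactly $e^{-2\pi iN\Omega_j\sigma_3}$, by the RH problem for $M$ in Section \ref{sec:global}). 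Granted this reading, the three steps above complete the proof; the remaining organizational care is in verifying that $c(\delta)$ and the prefactor bounds are uniform across all deformation parameters $s\in[0,2]$, $t\in[0,1]$, $\epsilon\in(0,\epsilon_0]$ used in the differential identities of Section \ref{sec:RHP}.
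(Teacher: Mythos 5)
Your proof is correct and follows essentially the same route as the paper's: the strict Euler--Lagrange inequality via $\phi_++\phi_-=2g-V+\ell$ on $\R\setminus J$ (with the growth hypothesis on $V$ handling the tails), and the Cauchy--Riemann argument giving $\Re\phi>0$ on the lens arcs; the only extra detail in the paper is the local expansion $\phi(z)=\phi(b_j)-c(z-b_j)^{3/2}+\dots$ together with the $2\pi/3$ exit angle, used to justify positivity of $\Re\phi$ where the lenses leave the discs $U_{a_j},U_{b_j}$, a point you instead defer to the requirement stated in Section \ref{sec:trans}. Your reading of the gap intervals is also accurate: on $(b_j,a_{j+1})$ the diagonal entries $e^{\mp 2\pi iN\Omega_j}$ of $J_S$ are not close to $1$, the paper's proof silently treats only the off-diagonal entry, and the lemma is indeed only ever invoked through the cancellation of these diagonal factors against the matching jump of $M$ in the small-norm analysis of Section \ref{sec:snorm}.
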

\begin{proof}
Combining $\phi(z) = g(z) - V(z)/2 + \ell/2$ with the (strict) Euler-Lagrange inequality \eqref{eq:EL2}, we infer that $\phi_{+}(x)+\phi_{-}(x) < 0$ for $x \in \mathbb{R}\setminus J$.  This yields the lemma for any fixed $x\in \mathbb R\setminus J$ with $x$ bounded away from $\{a_j,b_j\}_{j=1}^k\cup\{t_j\}_{j=1}^p$. For uniformity as $x\to \infty$,  we recall our assumption that $\frac{V(x)}{\log (|x|+1)}\to \infty$, and since $g$ has logarithmic growth at $\infty$, we obtain the uniformity required.

Now consider the contour $J^\pm$.
For $x \in J$, $\Re \phi_{\pm}(x) = 0$, and furthermore $\frac{d}{dx}  \phi_{\pm}(x) = \mp i \pi \psi_{V}(x)$. Thus, by the Cauchy-Riemann equations, $\Re \phi(z) > 0$ on $J^+$ and $J^-$ assuming they are chosen to be sufficiently near the real line. We observe that at the intersection points $J^\pm \cap \partial U_x$ for $x\in\{a_j,b_j\}_{j=1}^k$, we can not choose the contours $J^\pm$ to be sufficiently near the real line, because on $U_x$ the contours $J^\pm$ were already determined in Section \ref{sec:para} and the derivative of $\phi$ at $x$ is zero. We consider the intersection points near $b_j$, the ones near $a_j$ are considered similarly. As $z\to b_j$,
\begin{equation} \nonumber\phi(z)=\phi(b_j)-c(z-b_j)^{3/2}+\mathcal O((z-b_j)^{5/2}),\end{equation}
for some $c>0$, where we recall that $\phi(b_j)$ is purely imaginary. If $z_\pm\in J^\pm\cap \partial U_{b_j}$ then the argument of $(z_\pm-b_j)$ is close to  $\pm 2\pi/3$ when we choose $\delta$ to be sufficiently small (see Section \ref{sec:para}), and thus $\Re \phi(z_\pm)$ is positive. 
 \end{proof}

Recall the main parametrix $M$ satisfying the RH problem for $M$ as in Section \ref{sec:global}, the local parametrices at $t_1,\dots, t_p$ constructed in Section \ref{sec:Painleve}, and the local parametrices at $a_j,b_j$ for $j=1,\dots, k$ constructed in Section  \ref{sec:Airy}. Define
\begin{equation}\label{eq:Rdef}
R(z)=\begin{cases}
S(z)M(z)^{-1}, & z\notin \cup_{x\in\mathcal T} U_x\\
S(z)[P^{(x)}(z)]^{-1}, & z\in U_x \textrm{ for }x\in \mathcal T,
\end{cases}
\end{equation}	
where $ \mathcal T= \{a_j,b_j\}_{j=1}^k\cup  \{t_j\}_{j=1}^p$, and let $\Sigma_R$ be as in Figure \ref{ContourR}, i.e.
\begin{multline}\nonumber 
\Sigma_R=\cup_{x\in \mathcal T}\partial U_x\cup \{y:y\in J^\pm \textrm{ and } |y-x|>\delta \textrm{ for all $x\in \mathcal T$}\}\\ \cup \{y\in \mathbb R\setminus J: \, |y-x|>\delta \textrm{ for all }x\in\{a_j,b_j\}_{j=1}^k\}.
\end{multline}

Combining Lemma \ref{expsmalljump}, Lemma \ref{le:edgerhp}, and Lemma \ref{le:singrhp} with the uniform boundedness of $M$ for $z$ bounded away from $\mathcal T$, we see that $R$ satisfies the following small norm RH problem.
\begin{itemize}
\item[(a)] $R$ is analytic on $\mathbb C\setminus \Sigma_R$.
\item[(b)] On $\Sigma_R$,
\begin{equation}\nonumber 
R_+(z)=R_-(z)(I+\widehat \Delta(z)),
\end{equation}
where $\widehat \Delta(z)=\mathcal O\left(N^{-1}\right)$ for $z\in \partial U_x$ with $x\in \mathcal T$, and 
$\widehat \Delta(z)=\mathcal O\left(\frac{e^{-cN}}{|z^2|+1}\right)$, uniformly for $z\in \Sigma_R\setminus \cup_{x\in \mathcal T} \partial U_x$ and uniformly in the deformation parameters $t,s,\epsilon$ introduced in the differential identities of Section \ref{sec:RHP}.
\item[(c)] $R(z)\to I$ as $z\to \infty$. 
\end{itemize}

It is easily verified that $R$ has the form
\begin{equation}\label{FormR}
R(z)=I+\int_{\Sigma_R} \frac{R_-(u)\widehat \Delta(u)du}{(u-z)2\pi i}, \end{equation}
where we emphasize that the orientation on $\partial U_x$ for $x\in \mathcal T$ is clockwise. 
By standard small norm analysis  (see e.g. \cite[Section 7]{Deift}), it follows that $R_-= I+\mathcal O(N^{-1})$  in $L^2\left(\Sigma_R,\frac{du}{|u^2|+1}\right)$ as $N\to \infty$. Thus by \eqref{FormR} and the asymptotics of $\widehat \Delta$ in condition (b) for the RH problem for $R$, it follows that
\begin{equation}\label{smallnorm}
R(z)=I+\mathcal O \left(N^{-1}\right) \qquad \text{and} \qquad R'(z)=\mathcal O(N^{-1})
\end{equation}
as $N\to \infty$, uniformly for $z$ in compact subsets of $ \mathbb C\setminus \Sigma_R$ and uniformly in the  deformation parameters $t,s,\epsilon$ introduced in the differential identities of Section \ref{sec:RHP}. Furthermore, the jumps of $R$ on $J^\pm$ and $\partial U_x$ for $x\in \mathcal T$ extend to  analytic functions in  neighbourhoods of $J^\pm$ and $\partial U_x$. Thus these jump contours can be deformed, and so \eqref{smallnorm} also holds as $z$ approaches $J^\pm$ and $\partial U_x$ for $x\in \mathcal T$. The upshot is that there are open intervals $I_j$ such that $[a_j,b_j]\subset I_j$ and that \eqref{smallnorm} holds uniformly for $z\in \mathbb C \setminus (\mathbb R\setminus \cup_{j=1}^k I_j)$. The reason that \eqref{smallnorm} does not hold uniformly on $\mathbb R\setminus \cup_{j=1}^k I_j$ is that we did not assume that $F$ is  analytic in this region, and thus we cannot deform the jumps of $R$ here.

For future reference we observe that in the case where $F(x)\omega_\epsilon(x)\equiv 1$, by substituting \eqref{PM-1fine} into \eqref{FormR}, we obtain
\begin{equation} \label{eq:expR}\begin{aligned}R(z)&=I+\frac{1}{N} R^{(1)}(z)+\mathcal O(1/N^2), \\
R'(z)&=\frac{1}{N}\frac{d}{dz}R^{(1)}(z)+\mathcal O(1/N^2),\end{aligned} \end{equation}
as $N\to \infty$, uniformly for $z\in \mathbb C\setminus \left( \mathbb R\setminus \cup_{j=1}^k I_j\right)$,
where
\begin{equation}\label{formulaR1}
R^{(1)}(z)= 
\sum_{x\in \mathcal \{a_j,b_j\}_{j=1}^k}\oint_{\partial U_x} \frac{\Delta(u)du}{(u-z)2\pi i}, 
\end{equation}
and where $\Delta(u)$ is as in \eqref{PM-1fine} (so that $\widehat \Delta(u)=\frac{1}{N}\Delta(u)+\mathcal O(1/N^2)$ on $\partial U_x$).

\begin{figure}[t]
	\begin{center}
		\begin{picture}(100,80)(-5,-40)
		\put(-138,5){$U_{a_1}$}
		\put(-28,5){$U_{b_1}$}
		\put(162,5){$U_{b_k}$}
		\put(51,5){$U_{a_k}$}
		
		\put(-105,8){$U_{t_1}$}
		\put(-65,8){$U_{t_2}$}
		\put(105,8){$U_{t_p}$}
		
		\put(-128,-5){\thicklines\circle{10}}
		\put(-98,-5){\thicklines\circle{10}}
		\put(-28,-5){\thicklines\circle{10}}	
		\put(-58,-5){\thicklines\circle{10}}
		\put(62,-5){\thicklines\circle{10}}
		\put(112,-5){\thicklines\circle{10}}
		\put(162,-5){\thicklines\circle{10}}
		
		\put(-150,-5){\line(1,0){17}}
		\put(-23,-5){\line(1,0){17}}
		
		\put(10,-6){$\dots\dots$}
		\put(45,-5){\line(1,0){12}}
		\put(167,-5){\line(1,0){15}}
		
		\put(140,10){\thicklines\vector(1,0){.0001}}
		\put(140,-20){\thicklines\vector(1,0){.0001}}		
		\put(90,10){\thicklines\vector(1,0){.0001}}
		\put(90,-20){\thicklines\vector(1,0){.0001}}		
		
		\put(-76,10){\thicklines\vector(1,0){.0001}}
		\put(-76,-20){\thicklines\vector(1,0){.0001}}
		
		\put(-40,10){\thicklines\vector(1,0){.0001}}
		\put(-40,-20){\thicklines\vector(1,0){.0001}}
		
		\put(-110,10){\thicklines\vector(1,0){.0001}}
		\put(-110,-20){\thicklines\vector(1,0){.0001}}

		\put(-137,-5.15){\thicklines\vector(1,0){.0001}}
		\put(-10,-5.15){\thicklines\vector(1,0){.0001}}
		\put(180,-5.15){\thicklines\vector(1,0){.0001}}
		\put(55,-5.15){\thicklines\vector(1,0){.0001}}

     	\put(-125,0){\thicklines\vector(1,0){.00001}}
		\put(-95,0){\thicklines\vector(1,0){.00001}}
		\put(-55,0){\thicklines\vector(1,0){.00001}}
		\put(-25,0){\thicklines\vector(1,0){.00001}}

        \put(65,0){\thicklines\vector(1,0){.00001}}
        \put(115,0){\thicklines\vector(1,0){.00001}}
        \put(165,0){\thicklines\vector(1,0){.00001}}

		\qbezier(-125,-1.3)(-113,22)(-101,-1.3)
		\qbezier(-95,-1.3)(-78,22)(-61,-1.3)
		\qbezier(-55,-1.3)(-43,22)(-31,-1.3)
		\qbezier(-125,-8.7)(-113,-32)(-101,-8.7)
		\qbezier(-95,-8.7)(-78,-32)(-61,-8.7)
		\qbezier(-55,-8.7)(-43,-32)(-31,-8.7)
		\qbezier(65,-1.3)(87,22)(109,-1.3)
		\qbezier(65,-8.7)(87,-32)(109,-8.7)
		\qbezier(115,-1.3)(137,22)(159,-1.3)
		\qbezier(115,-8.7)(137,-32)(159,-8.7)
		\end{picture}
		\caption{The contour $\Sigma_R$. }
		\label{ContourR}
	\end{center}
\end{figure}
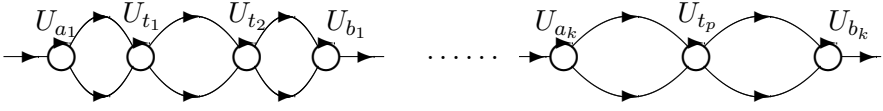

\section{Ratio asymptotics for non-singular symbols}\label{sec:nonsing}
In this section, we obtain large $N$ asymptotics for ratio asymptotics of Hankel determinants with non-singular symbols, thereby proving Theorem \ref{th:smoothasy}. To do so, we will rely on the asymptotics of $Y$ obtained in the Riemann-Hilbert analysis, and specialize to symbols of the form $\nu_t(z)=F_t(z)e^{-NV(z)}$  as denoted in \eqref{deformnut}, with $tf=\log F_t$ analytic in a neighbourhood of $J$, and we fix $\omega_\epsilon=1$. Substituting these asymptotics for $Y$ into  the right-hand side of \eqref{eq:DIsmasy} and integrating, we obtain Theorem \ref{th:smoothasy}. 

We denote  $\Upsilon(tf)=t\Upsilon(f)$, where the quantity $\Upsilon$ was defined in \eqref{def:Omegahat}.  Fix a contour $\Gamma_j$  oriented counter-clockwise enclosing $[a_j,b_j]$, and furthermore enclosing the lenses around $[a_j,b_j]$ and the discs $U_{a_j},$ $U_{b_j}$ (defined in Section \ref{sec:Local}) such that $f$ and $V$ are analytic in an open set containing $\Gamma_j$ and $[a_j,b_j]$, such that $\Gamma_j$ intersects $\mathbb{R}$ only at two points. Recall that this is precisely the setting of Section \ref{sec:DI} and Section \ref{sec:smasy}, and that we denoted by $I_j$ the largest subinterval of $\mathbb R$ enclosed by $\Gamma_j$. Finally, we denoted $I=\cup_{j=1}^kI_j$ and $\Gamma=\cup_{j=1}^k\Gamma_j$.

For $z\in \bigcup_j \Gamma_j\cup(\R\setminus I)$ and $t\in (0,1)$, we have by \eqref{eq:Tdef}, \eqref{eq:Sdef}, and \eqref{eq:Rdef}
\[
Y(z)=e^{-N\frac{\ell}{2}\sigma_3} R(z)M(z)e^{N(g(z)+\frac{\ell}{2})\sigma_3},
\]
where we recall the definition of $g$ in \eqref{eq:g_fn}, and  that $Y$, $R$ and $M$ depend on $t$. So
\begin{multline*}
Y(z)^{-1}Y'(z)=e^{-N(g(z) +\frac{\ell}{2})\sigma_3}\bigg(M(z)^{-1}R(z)^{-1}R'(z)M(z)\\+M(z)^{-1}M'(z)+Ng'(z)\sigma_3\bigg) e^{N(g(z) +\frac{\ell}{2})\sigma_3},
\end{multline*}
and we recall that $r(\nu_t)$ is expressed in terms of the quantity
\begin{equation}\label{Y-1Y'1}
\big(Y(z)^{-1}Y'(z)\big)_{21}=e^{2N(g(z)+\frac{\ell}{2})}\left(M(z)^{-1}R(z)^{-1}R'(z)M(z)+M(z)^{-1}M'(z)\right)_{21},
\end{equation}
while
\begin{equation}\label{Y-1Y'2}
\big(Y(z)^{-1}Y'(z)\big)_{11}=\left(M(z)^{-1}R(z)^{-1}R'(z)M(z)+M(z)^{-1}M'(z)\right)_{11}+Ng'(z)
\end{equation}
appears in formula \eqref{eq:DIsmasy}.
\begin{lemma} As $N\to \infty$, \label{lem:Y-1Y'f}
\begin{equation}\label{eq:ratdiasy1}
 \sup_{t\in(0,1)}|r(\nu_t)|=
\sup_{t\in(0,1)}\left|\int_{\R\setminus I}[Y(x)^{-1} Y'(x)]_{21}\frac{d}{dt}F_t(x)e^{-NV(x)}\frac{dx}{2\pi}\right|=\mathcal O\left(e^{-cN}\right),
\end{equation}
for some $c>0$,
where $r(\nu_t)$ is as defined in \eqref{eq:DIsmasy}.
\end{lemma}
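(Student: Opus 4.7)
The strategy is to insert the representation \eqref{Y-1Y'1} for $[Y(x)^{-1}Y'(x)]_{21}$ into the definition of $r(\nu_t)$ and control each factor separately. This yields
\[
r(\nu_t)=\int_{\R\setminus I} e^{N(2g(x)+\ell-V(x))} \bigl(M(x)^{-1}R(x)^{-1}R'(x)M(x)+M(x)^{-1}M'(x)\bigr)_{21}\,\partial_t F_t(x)\,\frac{dx}{2\pi i},
\]
where the boundary values on $\R\setminus I$ are taken from, say, the upper half-plane. The exponential prefactor $e^{N(2g(x)+\ell-V(x))}$ will provide exponential decay via the strict Euler-Lagrange inequality, while the remaining factors will be shown to be bounded with integrable polynomial decay at infinity.

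\textbf{Control of the matrix factor.} On $\R\setminus I$, the parametrix $M$ is analytic (its jumps lie on $[a_1,b_k]\subset I$) and satisfies $M(z),M(z)^{-1}=I+\mathcal O(z^{-1})$ at infinity, so $M$, $M^{-1}$, $M'$ are uniformly bounded on $\R\setminus I$ with $M'(x)=\mathcal O(x^{-2})$. By \eqref{smallnorm}, $R(z)-I$ and $R'(z)$ are $\mathcal O(N^{-1})$ uniformly for $z\in \C\setminus(\R\setminus\cup_j I_j)$ and uniformly in the deformation parameter $t$. Approaching $x\in\R\setminus I$ from $z=x+i\delta$ with $\delta\downarrow 0$, the boundary values $R_+(x)$ and $R_+'(x)$ therefore inherit these bounds. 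Consequently, the full matrix factor $(M^{-1}R^{-1}R'M+M^{-1}M')_{21}(x)$ is $\mathcal O((1+x^2)^{-1})$, uniformly in $t\in(0,1)$.

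\textbf{Control of $\partial_t F_t$ and the exponential factor.} Inspection of the definition \eqref{deformnut} gives a uniform bound $|\partial_t F_t(x)|\le C(1+F(x))$, and assumption (c) on $F$ yields $F(x)\le C e^{c_F V(x)}$ for some $c_F>0$. The strict Euler-Lagrange inequality \eqref{eq:EL2s} combined with continuity, compactness and the growth condition $V(x)/\log|x|\to\infty$ (recalling $g(x)=\log|x|+\mathcal O(1)$ at infinity) produces constants $\delta_0,R_0>0$ such that $2g(x)+\ell-V(x)\le -\delta_0$ on $(\R\setminus I)\cap[-R_0,R_0]$ and $2g(x)+\ell-V(x)\le -V(x)/2$ for $|x|\ge R_0$. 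Thus, for $N$ with $N>4c_F$,
\[
\bigl|e^{N(2g(x)+\ell-V(x))}\,\partial_t F_t(x)\bigr|\le C\bigl(e^{-N\delta_0}+e^{-NV(x)/4}\bigr),
\]
uniformly for $x\in\R\setminus I$ and $t\in(0,1)$. Combining with the $\mathcal O((1+x^2)^{-1})$ bound from the previous step and integrating yields $|r(\nu_t)|=\mathcal O(e^{-cN})$ for some $c>0$, uniformly in $t$.

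\textbf{Main obstacle.} The only real subtlety is obtaining the uniform (in $t$) bounds for both $R$ and $R'$ up to the contour $\R\setminus I$, where $R$ has exponentially small jumps and where we have not assumed $F$ extends analytically. This is handled precisely because the estimate \eqref{smallnorm} holds throughout $\C\setminus(\R\setminus\cup_j I_j)$, so we may evaluate $R$ and $R'$ slightly off the real axis (where they are genuinely analytic and small) and pass to the boundary — the exponential smallness of the jump matrix on $\R\setminus I$ from Lemma \ref{expsmalljump} then ensures both one-sided boundary values satisfy the same estimate. Uniformity in $t$ follows from the uniformity built into Lemma \ref{expsmalljump} and the small-norm analysis.
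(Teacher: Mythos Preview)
Your approach has a genuine gap in the control of $R$ and $R'$ on $\R\setminus I$. First, a factual correction: you write that $M$ is analytic on $\R\setminus I$ because ``its jumps lie on $[a_1,b_k]\subset I$''. But $I=\cup_j I_j$ is a union of $k$ disjoint intervals each containing one $[a_j,b_j]$, so the gaps $(b_j,a_{j+1})$ are \emph{not} contained in $I$. On these gaps $M$ has the jump $M_+=M_-e^{-2\pi i N\Omega_j\sigma_3}$. This is minor (the boundary values $M_\pm,M_\pm'$ are still bounded), but it signals the real difficulty.

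The serious problem is your claim that $R_+(x)$ and $R_+'(x)$ inherit the bound \eqref{smallnorm} on $\R\setminus I$. The paper states explicitly, right after \eqref{smallnorm}, that the estimate does \emph{not} hold uniformly on $\R\setminus\cup_j I_j$, precisely because $F$ is not assumed analytic there and the jump contour cannot be deformed. Your proposed fix---``evaluate slightly off the axis and pass to the boundary''---does not work: a bound that is uniform on an open set need not persist to its closure, and for $R'$ (the derivative of a Cauchy integral along the contour) one would need at least $C^1$-regularity of the jump data on $\R\setminus I$, whereas $F_t$ is only assumed H\"older. The representation \eqref{Y-1Y'1} therefore cannot be used pointwise on $\R\setminus I$ without substantial extra work.

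The paper avoids this entirely by a different device. It sets $H(z)=e^{-N(2g(z)+\ell)}[Y^{-1}Y']_{21}(z)$ and observes that $[Y^{-1}Y']_{21}$ is an entire polynomial while $e^{-2Ng}$ is analytic on $\C\setminus[a_1,b_k]$ (and has only a scalar multiplicative jump $e^{-4\pi i N\Omega_j}$ on each gap). Hence $H$ is analytic on $\C\setminus[a_1,b_k]$ with $H(z)=\mathcal O(z^{-2})$ at infinity, and one can write $H(z)=\oint_\gamma \frac{H(x)}{2\pi i(z-x)}dx$ for a loop $\gamma$ surrounding $[a_1,b_k]$ and intersecting $\R$ only inside $\cup_j I_j$. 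On $\gamma$ the representation \eqref{Y-1Y'1} \emph{is} valid and bounded (since \eqref{smallnorm} applies there), which yields uniform boundedness of $H$ on $(-\infty,a_1)\cup(b_k,\infty)$. For the gaps, the paper modifies $H$ to $H_j=H\cdot e^{-4\pi i N\Omega_j}$ on a region below the gap to remove the jump, and applies the same Cauchy-integral argument with a contour that dips below the gap. This global-analyticity argument is what replaces your unjustified boundary estimate for $R'$.
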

\begin{proof} Let 
\begin{equation}\label{defH} H(z)=e^{-N(2g(z)+\ell)}\left[Y(z)^{-1}Y(z)'\right]_{21}.\end{equation} Then \eqref{eq:ratdiasy1} becomes 
\begin{equation}\label{intH+}
\sup_{t\in(0,1)}\left|\int_{\R\setminus I}H_+(x)\frac{d}{dt}F_t(x)e^{N(2g_+(x)+\ell-V(x))}dx\right|=\mathcal O\left(e^{-cN}\right).
\end{equation}
We will show that $H_+(x)$ is uniformly bounded for $x\in \R\setminus I$. By \eqref{lol3} combined with the Euler-Lagrange equation \eqref{eq:EL2} (which is a strict inequality by our assumptions) the integrand on the left-hand side of \eqref{intH+} is exponentially small for any fixed $x\in \mathbb R\setminus I$. By \eqref{eq:Vgrowth} and the fact that $g(x)=\mathcal O(\log |x|)$ as $x\to \pm \infty$, the definition of $F_t$ in \eqref{deformnut} and assumption (c) for $F$, we obtain 
\begin{equation}\frac{d}{dt} F_t(x)e^{N(2g_+(x)+\ell-V(x))}=\mathcal O\left(e^{-cN}/x^2\right)\end{equation} 
as $x\to \infty$. Thus, upon proving the boundedness of $H_+$, we have proven the lemma.

Since $e^{-2Ng(z)}=\mathcal O(z^{-2N})$ as $z\to \infty$ and $\left[Y(z)^{-1}Y(z)'\right]_{21}=\mathcal O(z^{2N-2})$ as $z\to \infty$, $H(z)=\mathcal O(z^{-2})$ as $z\to \infty$. Also, since $\left[Y(z)^{-1}Y(z)'\right]_{21}$ is an entire function and since $e^{g(z)}$ is analytic on $\mathbb C\setminus [a_1,b_k]$, it follows that $H$ is analytic on $\mathbb C\setminus [a_1,b_k]$. Combining the facts that $H(z)=\mathcal O(z^{-1})$ as $z\to \infty$,  that $H(z)$ is analytic on $\mathbb C\setminus [a_1,b_k]$, and  that $H$ is bounded in a neighbourhood of $[a_1,b_k]$, we obtain that
\begin{equation}\nonumber H(z)=\int_{a_1}^{b_k} \frac{(H_+(x)-H_-(x))dx}{2\pi i(x-z)}. \end{equation}
Thus, if $\gamma$ is a counter-clockwise oriented loop containing $[a_1,b_k]$, and $z$ is not contained in $\gamma$, then
\begin{equation}\label{formH} H(z)=\oint_\gamma \frac{H(x)dx}{2\pi i(z-x)}.  \end{equation}
Along $\gamma$, we have by \eqref{Y-1Y'1} that
\begin{align*}
H(z)=\big(M(z)^{-1}R(z)^{-1}R'(z)M(z)+M(z)^{-1}M'(z)\big)_{21}.
\end{align*}
We recall that \eqref{smallnorm} holds uniformly for $z\in \mathbb C\setminus(\mathbb R\setminus \cup_{j=1}^k I_j)$, where each $I_j$ is an open interval such that $[a_j,b_j]\subset I_j$. We let $\gamma$ be such that it intersects $\mathbb R$ in $I_1$ and $I_k$ (and not in the support $\mu_V$), and so in particular \eqref{smallnorm} holds uniformly for $z\in \gamma$. Thus, since additionally $M(z)$ and $M'(z)$ are uniformly bounded for $z\in \gamma$, it follows that $H(z)$ is bounded uniformly for $z\in \mathbb R\setminus [a_1-\delta,b_k+\delta]$ for $\delta>0$. 

Now we consider $H_+(z)$ for $z\in [b_j+\delta,a_{j+1}-\delta]$. Let $\Sigma_j$ be a curve connecting $b_j$ to $a_{j+1}$ in the lower half of the complex plane. Let $O_j$ be the region contained by $[b_j,a_{j+1}]$ and $\Sigma_j$. Let $H_j(x)=H(x)$ for $x\in \mathbb C\setminus O_j$, and
$H_j(x)=H(x)e^{-4\pi  i N \Omega_j}$ for $x\in O_j$. Then by \eqref{lol3}, $H_{j,+}(x)=H_{j,-}(x)$ for $x\in (b_j,a_{j+1})$.  Then, similarly to \eqref{formH}, we have
\begin{equation}H_+(z)=\oint_{\gamma_j}\frac{H_j(x)dx}{2\pi i (z-x)}\nonumber \end{equation}
for $z\in (b_j+\delta,a_{j+1}-\delta)$, where $\gamma_j$ is a loop containing $[a_1,b_k]\setminus(b_j,a_{j+1})$ but not $z$ (and is such that $H_j$ is analytic on $\gamma_j$). We assume that $\gamma_j$ only intersects $\mathbb R$ in $\cup_{i=1}^k I_i$, and thus $H_j(x)$ is uniformly bounded for $x\in \gamma_j$ (similarly to the boundedness of $H(x)$ for $x\in \gamma$ above). Thus it follows that $H_+(z)$ is uniformly bounded for $z\in \mathbb R\setminus I$, which concludes the proof of the lemma.
\end{proof}

\medskip

We now integrate $[Y(z)^{-1}Y'(z)]_{11}$ appearing in \eqref{eq:DIsmasy}.  Recall $N_\lambda$ defined in \eqref{eq:Mlamdef}, satisfying the RH problem for $N_\lambda$. By considering the Riemann-Hilbert problem that $N_\lambda(z)$ satisfies, it is easily verified that $N_{\lambda}(z) = N_\infty(\lambda)^{-1} N_\infty(z)$ for any $\lambda, z \in \C\setminus [a_1,b_k]$. Thus we may rewrite
\[
N_\infty (z)^{-1}N_\infty '(z)=N_\infty(z)^{-1}N_\infty(\lambda)N_\infty(\lambda)^{-1}N_\infty  '(z)=N_\lambda(z)^{-1}N_\lambda '(z).
\]

On the other hand, since $\lambda$ was arbitrary and $N_z(z)=I$ by construction, we find that
\begin{equation}\label{M-1M'infty}
N_\infty (z)^{-1}N_\infty  '(z)=\lim_{\lambda\to z}N_\lambda '(z).
\end{equation}
We rely on this to evaluate \eqref{Y-1Y'2}. Since $V$ and $f$ are analytic on a neighbourhood of $\Gamma_j$, for $j=1,\dots,k$, the small norm estimate for $R$ in \eqref{smallnorm} holds uniformly for $z\in \cup_{j=1}^k\Gamma_j$.
Recall the definition of $M$ in \eqref{eq:Pdef},
and let $D(z)=\exp(d_f(z))$ with $d_f$ is as in \eqref{defdg}. By \eqref{M-1M'infty},  \eqref{Y-1Y'2} and \eqref{smallnorm},
\begin{multline}\label{Y-1Y'f}
[Y(z)^{-1} Y'(z)]_{11}=Ng'(z)-t\frac{D'(z)}{D(z)}+\lim_{\lambda\to z}N_\lambda '(z)_{11}+\mathcal O(N^{-1})\\
=Ng'(z)-t\frac{d}{dz}\log D(z)+\sum_{j=1}^{k-1} \frac{\partial_j \theta(N\Omega+t\Upsilon(f))}{\theta(N\Omega+t\Upsilon(f))}u_j'(z)+\mathcal O(N^{-1}),
\end{multline}
as $N\to \infty$, uniformly for $t\in (0,1)$ and $z\in \bigcup_j\Gamma_j$. Substituting the definition of $D(z)$ from \eqref{eq:Ddef}, we thus obtain
\begin{multline}\label{eq:tdiasy}
\oint_{\Gamma} [Y(z)^{-1}Y'(z)]_{11}f(z)\frac{dz}{2\pi i}
=N \oint_{\Gamma} g'(z)f(z)\frac{dz}{2\pi i}
-t\oint_{\Gamma}d_f'(z)f(z)\frac{dz}{2\pi i}\\
\qquad +\sum_{m=1}^{k-1}\frac{\partial_m \theta(N\Omega+t\Upsilon(f))}{\theta(N\Omega+t\Upsilon(f))}\oint_{\Gamma} u_m'(z)f(z)\frac{dz}{2\pi i}+o(1),
\end{multline}
as $N\to \infty$,
where $o(1)$ is uniform in $t\in (0,1)$. We now evaluate the right-hand side.

\underline{The first term:} For the first term here, we note that by the definition of $g$ from \eqref{eq:g_fn}, Fubini,  and Cauchy's integral formula, we have
\begin{equation}\label{eq:1stterm}
\oint_{\Gamma} g'(z)f(z)\frac{dz}{2\pi i}=\int_J \left(\oint_{\Gamma} \frac{1}{z-x}f(z)\frac{dz}{2\pi i}\right)d\mu_V(x)=\int_J f(x)d\mu_V(x).
\end{equation}
\underline{The second term:} By \eqref{defdg} and the fact that $\frac{\partial}{\partial z}w_z(\lambda)=W(z,\lambda)$, we obtain
\begin{equation}- \oint_{\Gamma}d_f'(z)f(z)\frac{dz}{2\pi i}=\frac{1}{2}\oint_\Gamma\oint_{\widetilde \Gamma} W(z,\lambda)f(z)f(\lambda) \frac{d\lambda}{2\pi i}\frac{dz}{2\pi i}. \end{equation}
\underline{The third term:} By \eqref{jumpsdf2}
\begin{equation}\begin{aligned}\label{eq:3rdtermpart} 
\sum_{m=1}^{k-1}\frac{\partial_m \theta(N\Omega+t\Upsilon(f))}{\theta(N\Omega+t\Upsilon(f))}\oint_{\Gamma}u_m'(z)f(z)\frac{dz}{2\pi i}&=\sum_{m=1}^{k-1}\frac{\partial_m \theta(N\Omega+t\Upsilon(f))}{\theta(N\Omega+t\Upsilon(f))}\Upsilon_m(f)\\& =\frac{d}{dt}\log \theta(N\Omega+\Upsilon(tf)).
\end{aligned}\end{equation}
\underline{Combining everything:} Combining the first, second and third term we obtain
\begin{multline}\label{eq:tdiasy2}
\oint_{\Gamma} [Y(z)^{-1}Y'(z)]_{11}f(z)\frac{dz}{2\pi i}
=\frac{d}{dt}\Bigg[tN \int_J f(x)d\mu_V(x)
\\+\frac{t^2}{4}\oint_\Gamma\oint_{\widetilde \Gamma} W(z,\lambda)f(z)f(\lambda) \frac{d\lambda}{2\pi i}\frac{dz}{2\pi i}
\qquad +\log \theta(N\Omega+\Upsilon(tf))\Bigg]+o(1),
\end{multline}
as $N\to \infty$,
where $o(1)$ is uniform in $t\in (0,1)$. Substituting \eqref{eq:tdiasy2} into \eqref{eq:DIsmasy} we obtain Theorem \ref{th:smoothasy}. 

\subsection{Proof of \eqref{smoothasy2}}\label{SecProofSmasy2}
To prove \eqref{smoothasy2}, observe that
\begin{multline}
\oint_\Gamma\oint_{\widetilde \Gamma} W(z,\lambda)f(z)f(\lambda) \frac{d\lambda}{2\pi i}\frac{dz}{2\pi i}\\
=\oint_\Gamma\int_J \left(W(z,\lambda_-)-W(z, \lambda_+)\right)f(z)f(\lambda) \frac{d\lambda}{2\pi i}\frac{dz}{2\pi i}.
\end{multline}
Taking integration by parts,
 \begin{multline}\oint_\Gamma\oint_{\widetilde \Gamma} W(z,\lambda)f(z)f(\lambda) \frac{d\lambda}{2\pi i}\frac{dz}{2\pi i}\\ =\frac{1}{2\pi i}\sum_{j=1}^k\oint_\Gamma \left[f(b_j)\left(w_{b_{j,-}}(z)-w_{b_{j,+}}(z)\right)-f(a_j)\left(w_{b_{j,-}}(z)-w_{b_{j,+}}(z)\right)\right]f(z)\frac{dz}{2\pi i}
\\+\oint_\Gamma \int_J (w_{\lambda_+}(z)-w_{\lambda_-}(z))f'(\lambda)f(z) \frac{d\lambda}{2\pi i}
\frac{dz}{2\pi i}.
\end{multline}
By \eqref{jumpswlambda3},
\begin{equation}w_{b_{j,-}}(z)-w_{b_{j,+}}(z)=w_{a_{(j+1),-}}(z)-w_{a_{(j+1),+}}(z)=-4\pi i u_j'(z), \end{equation} 
for $j=1,\dots,k-1$, and since $w_\lambda(z)$ is analytic for $\lambda\in \mathbb C\setminus [a_1,b_k]$, 
\begin{equation} w_{b_{k,-}}(z)-w_{b_{k,+}}(z)=w_{a_{1,-}}(z)-w_{a_{1,+}}(z)=0. \end{equation}
Thus, using additionally \eqref{jumpswlambda2},
\begin{multline}\label{ProofSmasy2}\oint_\Gamma\oint_{\widetilde \Gamma} W(z,\lambda)f(z)f(\lambda) \frac{d\lambda}{2\pi i}\frac{dz}{2\pi i} =2\sum_{j=1}^{k-1}(f(a_{j+1})-f(b_j))\Upsilon_j(f)
\\+2\oint_\Gamma \int_J w_{\lambda_+}(z)f'(\lambda)f(z) \frac{d\lambda}{2\pi i}
\frac{dz}{2\pi i}.
\end{multline}
Then \eqref{smoothasy2} follows from \eqref{ProofSmasy2} and \eqref{repwlambda}.

\subsection{Proof of \eqref{eqlaplace} } \label{SecProofeqlaplace}
The asymptotics of \eqref{eqlaplace} follow from a second integration by parts and subsequently changing the homology basis from $u_j$ to $ \widehat u_j$.
Our starting point is \eqref{ProofSmasy2}. By \eqref{repwlambda}, $w_{\lambda_+}(z_{\pm})=\pm \frac{1}{z-\lambda}+\mathcal O(1)$ as $z\to \lambda$ (for $z,\lambda$ in $J$), and thus
\begin{equation}\label{ointwlambda+f} \oint_\Gamma  w_{\lambda_+}(z)f(z) 
\frac{dz}{2\pi i}=\mathcal P.\mathcal V. \int_J \left[w_{\lambda_+}(z_-)-w_{\lambda_+}(z_+)\right]f(z) 
\frac{dz}{2\pi i},
\end{equation}
where $\mathcal P.\mathcal V.\int$ is the principal value integral. By \eqref{repwlambda}, by the fact that $\mathcal R^{1/2}_{\pm}(x)$ and $u_{j,\pm}'(x)$ are  imaginary on $J$, and by the fact that $\mathcal R^{1/2}(x)$ is real on $\mathbb R\setminus J$, it follows that $w_{\lambda_+}(z_{\pm})$ is real on $J$. Thus
\begin{equation}w_{\lambda_+}(z_{\pm})=\frac{\partial}{\partial z}\log |\Theta(\lambda_+,z_{\pm})|. \end{equation}
Thus, taking integration by parts in \eqref{ointwlambda+f},
\begin{multline}\oint_\Gamma  w_{\lambda_+}(z)f(z) 
\frac{dz}{2\pi i}=\int_J\log \left|\frac{\Theta(\lambda_+,z_+)}{\Theta(\lambda_+,z_-)}\right|f(z)\frac{dz}{2\pi i}\\
+\frac{1}{2\pi i}\sum_{j=1}^k\left(f(b_j)\log \left| \frac{\Theta(\lambda_+,b_{j,-})}{\Theta(\lambda_+,b_{j,+})}\right|-f(a_j)\log \left| \frac{\Theta(\lambda_+,a_{j,-})}{\Theta(\lambda_+,a_{j,+})}\right|\right)
\end{multline}
From \eqref{JumpsTheta2}, it follows that  if $x\in \{b_j,a_{j+1}\}$, then
\begin{equation} \label{jumpsThetaabs2}
\log\left|\Theta(z,x_\pm )\right|=\pm \Re \left(2\pi i  u_j(z)\right), \end{equation}
for $z \in J$.  By \eqref{JumpsTheta1}, $\left| \Theta(\lambda_+,z_+)\Big/\Theta(\lambda_+,z_-)\right|=\left| \Theta(\lambda_+,z_+)\right|^2$. Thus,
\begin{multline}\oint_\Gamma  w_{\lambda_+}(z)f(z) 
\frac{dz}{2\pi i}=2i\sum_{j=1}^{k-1}\left(f(a_{j+1})-f(b_j)\right)\textrm{Im} (u_{j,+}(\lambda))
\\ +2\int_J\log \left|\Theta(\lambda_+,z_+)\right|f'(z)\frac{dz}{2\pi i}
\end{multline}
Substituting into the right hand side of \eqref{ProofSmasy2} we obtain
\begin{multline}\label{Proofeql3}\oint_\Gamma\oint_{\widetilde \Gamma} W(z,\lambda)f(z)f(\lambda) \frac{d\lambda}{2\pi i}\frac{dz}{2\pi i} =2\sum_{j=1}^{k-1}(f(a_{j+1})-f(b_j))\Bigg(\Upsilon_j(f)\\ +\frac{1}{\pi}\int_Jf'(\lambda)\textrm{Im}u_{j,+}(\lambda)d\lambda\Bigg)
+4\int_J\int_J \log \left|\Theta(\lambda_+,z_+)\right|f'(z)f'(\lambda)\frac{dz}{2\pi i}\frac{d\lambda}{2\pi i}.
\end{multline}
Finally, we want to bring the right hand side of \eqref{Proofeql3} to a slightly different form. 
Taking integration by parts in the definition of $\Upsilon_j$ \eqref{def:Omegahat} and relying on \eqref{eq:ujump1},
\begin{equation}\nonumber
	\Upsilon_j=-\oint_{\Gamma} u_j(z)f'(z)\frac{dz}{2\pi i}-\frac{1}{2\pi i}\sum_{l=1}^{k-1}(f(b_l)-f(a_{l+1}))\tau_{l,j}.
\end{equation}
On $J$, we have $\Re u_{j,+}=\Re u_{j,-}$, and so
	\begin{equation}\label{Omegajintparts} \Upsilon_j=\frac{1}{\pi}\int_J \mathrm{Im}(u_{j,+}(z))f'(z)dz-\frac{1}{2\pi i}\sum_{l=1}^{k-1}(f(b_l)-f(a_{l+1}))\tau_{l,j}.
\end{equation}

We recall that $\tau$ is symmetric and $\mathrm{Im}\tau=(-i)\tau$ is positive definite. From this, and using the fact that $\sum_{l=1}^{k-1}\tau_{m,l} (\tau ^{-1})_{l,j}=\delta_{m,j}$, we find
\begin{multline}
\sum_{j=1}^{k-1}(f(a_{j+1})-f(b_j))\left(\Upsilon_j(f)+\frac{1}{\pi}\int_Jf'(\lambda)\textrm{Im}u_{j,+}(\lambda)d\lambda\right)
\\ =2\pi i \sum_{j,l=1}^{k-1}\Upsilon_j(f)\Upsilon_l(f) (\tau^{-1})_{j,l}\\
	-\frac{2}{\pi}\sum_{j,l}\iint_{J\times J}\mathrm{Im}(u_{j,+}(z))\mathrm{Im}(u_{l,+}(\lambda))f'(z)f'(\lambda)(\mathrm{Im}\tau)^{-1}_{l,j}dzd\lambda. 
\end{multline}

Thus, recalling the definition of $\mathcal L (f)$ from \eqref{defLG}, we obtain
 \begin{equation}\oint_\Gamma\oint_{\widetilde \Gamma} W(z,\lambda)f(z)f(\lambda) \frac{d\lambda}{2\pi i}\frac{dz}{2\pi i} =2\mathcal L(f) -4\pi i \sum_{j,l=1}^{k-1}\Upsilon_j(f)\Upsilon_l(f) (\tau^{-1})_{j,l}.
\end{equation}
Substituting into \eqref{th:smoothasy1} and relying on \eqref{idchange2}, we obtain \eqref{eqlaplace}.

\section{Ratio asymptotics for Fisher-Hartwig symbols}\label{sec:FH}
The goal of this section is to prove the main technical estimates needed for the proof of Theorem \ref{th:FHasy} in Section \ref{sec:FHproof} by using the asymptotics of Section \ref{sec:snorm} and the parametrices from Section \ref{sec:para} as well as the results for the smooth symbol, namely Theorem \ref{th:smoothasy}.

To be more precise, the main idea of the proof is to write (in the notation of Section \ref{sec:FHproof})
\begin{equation}
\label{eq:FHrat}
\frac{H_N(\nu_0)}{H_N(e^{-NV})}=\frac{H_N(\nu_{\epsilon_0})}{H_N(e^{-NV})}\exp\left(-\int_0^{\epsilon_0} \partial_\epsilon \log H_N(\nu_\epsilon)d\epsilon\right)
\end{equation}
and use the identity \eqref{eq:DIFH} to express the logarithmic derivative in terms of the solution to a RH problem. We evaluate the asymptotics of the RH problem by relying on  the local parametrix from Section \ref{sec:Painleve} and the asymptotics of the small norm problem from Section \ref{sec:snorm}. We  use Theorem \ref{th:smoothasy} to evaluate the asymptotics of $H_N(\nu_{\epsilon_0})$ for fixed $\epsilon_0>0$. Actually, the asymptotics of the integral above are not quite as precise as those in the previous section in that we will have error terms of size $\mathcal O(\epsilon_0)$, but taking in the end $\epsilon_0\to 0$ (after we take $N\to\infty$) will provide the desired estimates. We now turn to analyzing the integral term above -- after this, we will see how its $\epsilon_0\to 0$ asymptotics combine with those of the ratio term to produce the result we are after.

\subsection{The integral term}

Our main task in evaluating $\int_0^{\epsilon_0} \partial_\epsilon \log H_N(\nu_\epsilon)d\epsilon$  is to analyze the asymptotics of the differential identity \eqref{eq:DIFH}.
	
Let us begin by noting that for $z$ close enough to $t_j\pm i\epsilon$, and in particular for $z\in U_{t_j}$ (recall that $U_{t_j}$ is a fixed disc centered at $t_j$), we can make use of \eqref{eq:Tdef}, \eqref{eq:Sdef}, and \eqref{eq:Rdef}, to write
\[
Y(z)=e^{-\frac{N\ell }{2}\sigma_3}R(z)P^{(t_j)}(z)e^{Ng(z)\sigma_3}e^{\frac{N\ell}{2}\sigma_3}.
\]
Recalling the definition of $P^{(t_j)}$ from \eqref{eq:local}, a short calculation  shows that  (suppressing the dependence of $\Phi_{\rm V}$ on $\alpha_j,\beta_j$, and $s_j$ and that of $E_{t_j}$ on $t_j$) 
\begin{equation}(Y(z)^{-1}Y'(z))_{11}=\mathcal Y_{1,j}(z)+\mathcal Y_{2,j}(z)+\mathcal Y_{3,j}(z)+\mathcal Y_{4,j}(z), 
\label{e:Y-1Y}
\end{equation}
where
\begin{align*}
\mathcal Y_{1,j}(z) & =  \big[Ng'(z)-N\phi'(z)-\tfrac{f'(z)}{2}-\tfrac{\omega_\epsilon'(z)}{2\omega_\epsilon(z)}\big],\\
\mathcal Y_{2,j}(z) & = N \zeta_{t_j}'(z)\big[\Phi_{\rm V}(N\zeta_{t_j}(z))^{-1}\Phi_{\rm V}'(N\zeta_{t_j}(z))\big]_{rr},\\
\mathcal Y_{3,j}(z) &= \big[\Phi_{\rm V}(N\zeta_{t_j}(z))^{-1}E(z)^{-1}E'(z)\Phi_{\rm V}(N\zeta_{t_j}(z))\big]_{rr},\\
\mathcal Y_{4,j}(z) &=\big[\Phi_{\rm V}(N\zeta_{t_j}(z))^{-1}E(z)^{-1}(R^{-1}R')(z)E(z)\Phi_{\rm V}(N\zeta_{t_j}(z))\big]_{rr},
\end{align*}
for $z$ outside the lenses and in $U_{t_j}$, where $r=2$ for $\Im z>0$ and $r=1$ for $\Im z<0$.

\medskip 

\underline{The $\mathcal Y_{3,j}$- and $\mathcal Y_{4,j}$-terms}. We focus on the term $\mathcal Y_{3,j}(z)$ in the situation where $z\to t_j+i\epsilon$. The remaining cases, namely $\mathcal Y_{3,j}(z)$ with $z\to t_j-i\epsilon$ or $\mathcal Y_{4,j}(z)$ with $z\to t_j\pm i\epsilon$ will be similar and we will leave the details to the reader. 
We will show below that in the $N\to\infty$ and $\epsilon_0\to 0$ limit, $\lim_{z\to t_j+i\epsilon}\mathcal Y_{j,3}(z)=\mathcal O(1)$ uniformly in $\epsilon\in(0,\epsilon_0]$, so after integration over $(0,\epsilon_0]$, this term tends to zero when we first let $N\to\infty$ and then $\epsilon_0\to 0$.

We begin by arguing that $E(z)^{\pm 1}$ and $E'(z)$ are uniformly bounded in $z\in U_{t_j}$, $N$, and $\epsilon\leq \epsilon_0$, so it remains to control the contribution from $\Phi_{\rm V}$. To do this we rely on Proposition \ref{Prop:CIK}.

To see the boundedness of the $E$-terms, we recall the definition of $E$ in \eqref{eq:Edef}, the definition of $M$ in \eqref{eq:Pdef}, the fact that $N_\infty(z;N\Omega+\Upsilon)$ from \eqref{eq:Mlamdef} and $N_\infty'(z;N\Omega+\Upsilon)$ are uniformly bounded on $U_{t_j}$, so using part (e) of Lemma \ref{le:D}, one can check with a routine calculation that $E(z)^{\pm 1}$ and $E'(z)$ are  bounded as $N \to + \infty$ uniformly for $z \in U_{t_{j}}$ and $0<\epsilon\leq \epsilon_0$ -- the boundedness in $N$ makes use of the fact that we are assuming that $\beta\in i\R$. Note that from \eqref{smallnorm}, we see that by the same reasoning also $E^{-1}R^{-1}R'E$ is uniformly bounded in $U_{t_j}$ (actually uniformly $\mathcal O(N^{-1})$).

We turn to the contribution of the $\Phi_{\rm V}$-terms. Recall that for any matrix  $A\in \C^{2\times 2}$ and $q\in \C\setminus \{0\}$, we have  $(q^{-\sigma_3}Aq^{\sigma_3})_{22}=A_{22}$. From Proposition \ref{Prop:CIK} (b) and (d), formula \eqref{eq:PhiV}, and condition (d) for the RH problem for $\Psi$ from Section \ref{sec:Painleve},  we see that for some $c_0>0$ fixed, we have for $\epsilon>c_0 N^{-1}$ (so that $s_{j,N}$ is bounded away from $0$), that the $\Phi_{\rm V}$-terms contribute only a uniformly bounded amount. We now turn to the case $\epsilon=\mathcal O(N^{-1})$, corresponding to small $s$.

Again using the fact that $(q^{-\sigma_3}Aq^{\sigma_3})_{22}=A_{22}$, one finds from \eqref{eq:PhiV}, \eqref{eq:fdef}, and condition (d) in the RH problem for $\Psi$, that
\begin{multline*}
\lim_{z\to t_j+i\epsilon}\mathcal Y_{3,j}(z)=\Bigg[\left(H_1(1)s_{j,N}^{\frac{\alpha_j}{2}\sigma_3}\right)^{-1}s_{j,N}^{-\beta_{j}\sigma_{3}}e^{-\frac{s_{j,N}}{4}\sigma_{3}}E(t_j+i\epsilon)^{-1}\\ E'(t_j+i\epsilon) e^{\frac{s_{j,N}}{4}\sigma_{3}}s_{j,N}^{\beta_{j}\sigma_{3}}H_1(1) s_{j,N}^{\frac{\alpha_j}{2}\sigma_3}\Bigg]_{22}.
\end{multline*}
We  use Proposition \ref{Prop:CIK} item c) (and the fact that $H_1$ has determinant one) to see that the second column of $H_1(1) s^{\frac{\alpha_j}{2}\sigma_3}$ is bounded as $s\to 0$.  Since the other quantities are bounded, we conclude that for $\epsilon \leq c_0/N$
\begin{equation*}\begin{aligned}
\lim_{z\to t_j+i\epsilon}\mathcal Y_{3,j}(z)&=\mathcal O(1)
\end{aligned}
\end{equation*}
uniformly in $\epsilon\in(0,c_0/N]$. Note that this bound also is true for $\epsilon\geq c_0/N$ by our discussion above.  

In fact, by \eqref{eq:sjdef}, and the fact that $\phi$ has a non-zero derivative at $t_j$, it follows that $s_{j,N}>\epsilon$ for $N$ sufficiently large, and so $\lim_{z\to t_j+ i\epsilon}\mathcal Y_{3,j}(z)=\mathcal O\left(1\right)$. As we mentioned before, an analogous statement holds for $\lim_{z\to t_j-i\epsilon}\mathcal Y_{3,j}(z)$ and $\lim_{z\to t_j\pm i\epsilon}\mathcal Y_{4,j}(z)$.

Thus, as $N\to \infty$,
\begin{multline}\label{epsilonFirst}
\left(\frac{\alpha_j}{2}-\beta_j\right)(Y(t_j+i\epsilon)^{-1}Y'(t_j+i\epsilon))_{11}-\left(\frac{\alpha_j}{2}+\beta_j\right)(Y(t_j-i\epsilon)^{-1}Y'(t_j-i\epsilon))_{11} \\ = \lim_{u\to \epsilon}\bigg[\left(\frac{\alpha_j}{2}-\beta_j\right)\left(\mathcal Y_{1,j}(t_j+iu)+\mathcal Y_{2,j}(t_j+iu)\right)\\-\left(\frac{\alpha_j}{2}+\beta_j\right)\left(\mathcal Y_{1,j}(t_j-iu)+\mathcal Y_{2,j}(t_j-iu)\right)\bigg]+\mathcal O\left(1\right) ,
\end{multline}
uniformly in $\epsilon \in(0,\epsilon_0]$.

\medskip

\underline{The $\mathcal Y_{1,j}$-term.} Directly from the definition \eqref{eq:omega_eps}, one readily verifies that
\begin{equation}\label{derivlogomeps}
\frac{\omega_\epsilon'(z)}{\omega_\epsilon(z)}=\sum_{l=1}^p \left(\frac{\frac{\alpha_l}{2}+\beta_l}{z-(t_l-i\epsilon)}+\frac{\frac{\alpha_l}{2}-\beta_l}{z-(t_l+i\epsilon)}\right).
\end{equation}
Thus, we conclude that 
\begin{multline}\label{eq:Y4asy}
\lim_{u\to \epsilon}\left[\left(\frac{\alpha_j}{2}-\beta_j\right)\mathcal Y_{1,j}(t_j+iu)-\left(\frac{\alpha_j}{2}+\beta_j\right)\mathcal Y_{1,j}(t_j-iu)-i\frac{\frac{\alpha_j^2}{4}+\beta_j^2}{u-\epsilon}\right]\\ 
=N\left((\tfrac{\alpha_j}{2}-\beta_j)(g'-\phi')(t_j+i\epsilon)-(\tfrac{\alpha_j}{2}+\beta_j)(g'-\phi')(t_j-i\epsilon)\right) \\
+\frac{i}{2\epsilon}\bigg(\frac{\alpha_j^2}{4}-\beta_j^2\bigg)+\mathcal O(1),
\end{multline}
uniformly in $\epsilon\in(0,\epsilon_0]$. In particular, when we integrate $\epsilon$ over $(0,\epsilon_0]$ and take $\epsilon_0\to 0$, the $\mathcal O(1)$-term is irrelevant. 

\underline{The $\mathcal Y_{2,j}$-terms.} We turn to the analysis of the last term -- $\mathcal Y_{2,j}$, and aim to show that after adding $i\frac{\frac{\alpha_j^2}{4}+\beta_j^2}{u-\epsilon}$ to cancel the corresponding term from $\mathcal Y_{1,j}$, there exists a finite limit which we can control. Here we rely heavily on results and notation in \cite{CIK}. We will also use arguments similar to \cite[Section 5.7]{ClaeysFahs}. 

Let us first express the relevant quantities in terms of $\Psi$ from Section \ref{sec:Painleve}. By \eqref{eq:PhiV},
\begin{multline*}
\big[\Phi_{\rm V}(N\zeta_{t_j}(t_j\pm iu))^{-1})\Phi_{\rm V}'(N\zeta_{t_j}(t_j\pm iu))\big]_{rr}\\=-\tfrac{2i}{s_{j,N}}\big[\Psi(\tfrac{1}{2}-\tfrac{2i N\zeta_{t_j}(t_j\pm iu)}{s_{j,N}})^{-1}\Psi'(\tfrac{1}{2}-\tfrac{2i N\zeta_{t_j}(t_j\pm iu)}{s_{j,N}})\big]_{rr}.
\end{multline*}

Let us now note that from \eqref{eq:fdef} and \eqref{eq:sjdef}, we see that as $u\to \epsilon$, $-2i N\zeta_{t_j}(t_j\pm iu)/s_{j,N}\to \pm \frac{1}{2}$, so we see that this is related to the asymptotics of $\Psi^{-1}\Psi'$ near $0$ and $1$. From part (d) of the RH problem for $\Psi$ from Section \ref{sec:Painleve}, we see that this is most conveniently expressed in terms of the analytic functions $H_0$ and $H_1$ (now used with $\alpha=\alpha_j$, $\beta=\beta_j$, $s=s_{j,N}$). Using the definition and analyticity of $H_0$, and $H_1$, we find that 
\begin{multline} \nonumber
\lim_{u\to \epsilon}\Bigg\{(\Phi_{\rm V}(N\zeta_{t_j}(t_j+ iu))^{-1}\Phi_{\rm V}'(N\zeta_{t_j}(t_j+iu)))_{22}+\tfrac{2i}{s_{j,N}}(H_1^{-1}(1)H_1'(1))_{22}\\
-\left(\tfrac{\alpha_j}{4}-\tfrac{\beta_j}{2}\right)\frac{1}{N\zeta_{t_j}(t_j+iu)-\frac{is_{j,N}}{4}}\Bigg\} = 0,
\end{multline}
and
\begin{multline}\nonumber
\lim_{u\to \epsilon}\Bigg\{(\Phi_{\rm V}(N\zeta_{t_j}(t_j- iu))^{-1}\Phi_{\rm V}'(N\zeta_{t_j}(t_j-iu)))_{11}+\tfrac{2i}{s_{j,N}}(H_0^{-1}(0)H_0'(0))_{11}\\
 -\left(\tfrac{\alpha_j}{4}+\tfrac{\beta_j}{2}\right)\frac{1}{N\zeta_{t_j}(t_j-iu)+\frac{is_{j,N}}{4}}\Bigg\}=0,
\end{multline}
where $H_0$ and $H_1$ depend of course in a non-trivial manner on $s$ and thus $\epsilon$ and $N$, so we still need to understand their behavior. 

By Proposition \ref{Prop:CIK} (e), we conclude that 
\begin{equation}\label{eq:Y3asy} \begin{aligned}
&\lim_{u\to \epsilon}\Bigg\{\left(\frac{\alpha_j}{2}-\beta_j\right)\mathcal Y_{2,j}(t_j+iu)-\left(\frac{\alpha_j}{2}+\beta_j\right)\mathcal Y_{2,j}(t_j-iu)
\\&  +(\tfrac{\alpha_j}{2}-\beta_j)N\zeta_{t_j}'(t_j+iu) \\
& \times \left[\tfrac{2i}{s_{j,N}}\left(\tfrac{s_{j,N}}{2}+\tfrac{\sigma(s_{j,N},\alpha_j,\beta_j)}{\frac{\alpha_j}{2}-\beta_j}-\tfrac{\alpha_j}{4}-\tfrac{\beta_j}{2}\right)-\frac{\tfrac{\alpha_j}{4}-\tfrac{\beta_j}{2}}{N\zeta_{t_j}(t_j+iu)-\frac{is_{j,N}}{4}}\right]\\
&   -(\tfrac{\alpha_j}{2}+\beta_j)N\zeta_{t_j}'(t_j-iu)\\
& \times \left[\tfrac{2i}{s_{j,N}}\left(-\tfrac{s_{j,N}}{2}-\tfrac{\sigma(s_{j,N},\alpha_j,\beta_j)}{\frac{\alpha_j}{2}+\beta_j}+\tfrac{\alpha_j}{4}-\tfrac{\beta_j}{2}\right)-\frac{\tfrac{\alpha_j}{4}+\tfrac{\beta_j}{2}}{N\zeta_{t_j}(t_j-iu)+\frac{is_{j,N}}{4}}\right]\Bigg\} \notag\\
&   =0. 
\end{aligned}\end{equation}

Expanding $\zeta_{t_j}(t_{j} \pm iu) = \pm \frac{is_{j,N}}{4N} \pm i\zeta_{t_j}'(t_{j} \pm i\epsilon)(u-\epsilon) + \bigO((u-\epsilon)^{2})$, and using the notation $\sigma_{j}(s):=\sigma(s;\alpha_j,\beta_j)$, we obtain
\begin{align}\label{eq:Y3asy 2}
&\lim_{u\to \epsilon}\left[\left(\frac{\alpha_j}{2}-\beta_j\right)\mathcal Y_{2,j}(t_j+iu)-\left(\frac{\alpha_j}{2}+\beta_j\right)\mathcal Y_{2,j}(t_j-iu)+i\frac{\frac{\alpha_j^2}{4}+\beta_j^2}{u-\epsilon}\right]\\ \notag
&\qquad = (\tfrac{\alpha_j}{2}-\beta_j)N\zeta_{t_j}'(t_j+i\epsilon)\left(\tfrac{-2i}{s_{j,N}}\right)\left(\tfrac{s_{j,N}}{2}+\tfrac{\sigma_{j}(s_{j,N})}{\frac{\alpha_j}{2}-\beta_j}-\tfrac{\alpha_j}{4}-\tfrac{\beta_j}{2}\right) \\
&\qquad  \quad -(\tfrac{\alpha_j}{2}+\beta_j)N\zeta_{t_j}'(t_j-i\epsilon)\left(\tfrac{-2i}{s_{j,N}}\right)\left(-\tfrac{s_{j,N}}{2}-\tfrac{\sigma_{j}(s_{j,N})}{\frac{\alpha_j}{2}+\beta_j}+\tfrac{\alpha_j}{4}-\tfrac{\beta_j}{2}\right)+\mathcal O(1),\notag
\end{align}
uniformly in $\epsilon\in(0,\epsilon_0]$.
\medskip

\underline{Combining the $\mathcal Y$-terms.} By \eqref{eq:fdef} and  \eqref{eq:sjdef}, 
we have 
\begin{equation*}\begin{aligned}\frac{ds_{j,N}(\epsilon)}{d\epsilon}&=2N(\zeta_{t_j}'(t_j+i\epsilon)+\zeta_{t_j}'(t_j-i\epsilon)),\\
\zeta_{t_j}'(t_j\pm i\epsilon)&=\pm i \phi'(t_j\pm i\epsilon),\end{aligned} \end{equation*}
where here and below we emphasize the $\epsilon$ dependence of $s_{j,N}=s_{j,N}(\epsilon)$.

Thus we obtain
\begin{align}\label{eq:Y3asy 2b}
	&\lim_{u\to \epsilon}\left[\left(\frac{\alpha_j}{2}-\beta_j\right)\mathcal Y_{2,j}(t_j+iu)-\left(\frac{\alpha_j}{2}+\beta_j\right)\mathcal Y_{2,j}(t_j-iu)+i\frac{\frac{\alpha_j^2}{4}+\beta_j^2}{u-\epsilon}\right]\\ \notag
	& =\left(\frac{\alpha_j}{2}-\beta_j\right)N\phi'(t_j+i\epsilon)-\left(\frac{\alpha_j}{2}+\beta_j\right)N\phi'(t_j-i\epsilon)\\
	\notag &  +\frac{1}{i} \frac{d s_{j,N}(\epsilon)}{d\epsilon} \frac{1}{s_{j,N}(\epsilon)}\left( \sigma_j(s_{j,N}(\epsilon))-\frac{\alpha_j^2}{4}+\beta_j^2\right) \nonumber \\
& +\frac{1}{2i} \frac{d\log s_{j,N}(\epsilon)}{d\epsilon} \left( \frac{\alpha_j^2}{4}-\beta_j^2\right)
	+\mathcal O(1). \nonumber 
\end{align}

By substituting  \eqref{eq:Y4asy} and \eqref{eq:Y3asy 2b} into \eqref{epsilonFirst},  relying on \eqref{eq:DIFH} and \eqref{e:Y-1Y}, we obtain after a short calculation that as $N\to\infty$ 
\begin{align*}
& \log \frac{H_N(\nu_{\epsilon_0})}{H_N(\nu_0)} = \sum_{j=1}^{p} \int_{0}^{\epsilon_0}
 \bigg( \sigma_{j}(s_{j,N}(\epsilon)) - \Big( \frac{\alpha_{j}^{2}}{4}-\beta_{j}^{2} \Big) \bigg) \frac{1}{s_{j,N}(\epsilon)} \frac{ds_{j,N}(\epsilon)}{d\epsilon}d\epsilon\\
&\quad   +N\sum_{j=1}^p\left(\frac{\alpha_j}{2}\left( g(t_j+i\epsilon_0)+g(t_j-i\epsilon_0)\right) +\beta_j \left(g(t_j-i\epsilon_0)-g(t_j+i\epsilon_0)\right)-i\pi\beta_j\right)\\
&\quad  -N\sum_{j=1}^p\left(\frac{\alpha_j}{2}\left( g_+(t_j)+g_-(t_j)\right) -\beta_j \left(g_+(t_j)-g_-(t_j)\right)-i\pi\beta_j\right)\\
&\quad  +\sum_{j=1}^p \frac{\frac{\alpha_j^2}{4}-\beta_j^2}{2}\int_0^{\epsilon_0} \left(\frac{d\log s_{j,N}(\epsilon)}{d\epsilon}-\frac{1}{\epsilon}\right)d\epsilon+\mathcal O(\epsilon_0).
\end{align*}

To obtain asymptotics for the first term on the right-hand side, we rely on Proposition \ref{Prop:CIK} part (e).
For the second and third terms, we recall the definition of $\omega_\epsilon$ from Section \ref{sec:FHproof} and $g$ from \eqref{eq:g_fn}, to obtain that 
\begin{multline}\nonumber
\int \log \omega_\epsilon(x)d\mu_V(x)\\ =\sum_{j=1}^p \left(\frac{\alpha_j}{2}\left( g(t_j+i\epsilon)+g(t_j-i\epsilon)\right) +\beta_j \left(g(t_j-i\epsilon)-g(t_j+i\epsilon)\right)-i\pi \beta_j\right)
\end{multline}
and by the definition of $\omega$ in \eqref{eq:omega}
\begin{align*}
\int \log \omega(x)d\mu_V(x)=\sum_{j=1}^p&\left(\frac{\alpha_j}{2}\left( g_+(t_j)+g_-(t_j)\right) -\beta_j \left(g_+(t_j)-g_-(t_j)\right)-i\pi\beta_j\right)
\end{align*}	

For the last term we observe that by \eqref{eq:sjdef},
\begin{equation}
\frac{d\log s_{j,N}(\epsilon)}{d\epsilon}=\frac{1}{\epsilon}+\mathcal O(1),
\label{eq:logs}
\end{equation}
as $\epsilon\to 0$ -- in fact, this logarithmic derivative is independent of $N$. Thus we obtain that as $N\to \infty$,
\begin{align}\label{eq:intermediate}
H_N(\nu_0)&=(1+\mathcal O(\epsilon_0))H_N(\nu_{\epsilon_0})\prod_{j=1}^p\frac{G(1+\frac{\alpha_j}{2}+\beta_j)G(1+\frac{\alpha_j}{2}-\beta_j)}{G(1+\alpha_j)}s_{j,N}(\epsilon_0)^{\frac{\alpha_j^2}{4}-\beta_j^2}\\
&\quad \times  e^{-N\int \log \omega_{\epsilon_0}(x)d\mu_V(x)+N\int \log \omega(x)d\mu_V(x)}.\notag
\end{align}

Naturally the right-hand side should not depend on $\epsilon_0$. Our next task is to see how the $\epsilon_0$-dependence cancels in the limit where $N\to\infty$. For this, we look more carefully at $H_N(\nu_{\epsilon_0})$ using Theorem \ref{th:smoothasy}.

\subsection{The ratio term} \label{Sec:RatioFH}

In this section, we study the ratio term in the right-hand side of \eqref{eq:FHrat}. For notational simplicity, we replace here $\epsilon_0$ by $\epsilon$. By Theorem \ref{th:smoothasy}, if $\epsilon$ is fixed, then
\begin{multline}
\frac{H_N(\nu_{\epsilon})}{H_N(e^{-NV})}=e^{N\int \left(f(x)+ \log \omega_\epsilon(x)\right)d\mu_V(x)}\frac{\theta(N\Omega+\Upsilon_{\epsilon})}{\theta(N\Omega)}\\ \times \exp\left(\frac{1}{4}\oint_{\Gamma}\oint_{\widetilde \Gamma}W(z,\lambda)\left(f(z)+\log \omega_\epsilon(z)\right)\left(f(\lambda)+\log \omega_\epsilon(\lambda)\right)\frac{dz}{2\pi i}\frac{d\lambda}{2\pi i}\right) \\
\times (1+\mathcal O(N^{-1})),
\label{eq:rat}
\end{multline}
as $N\to \infty$, where $\Gamma=\cup_{j=1}^k \Gamma_j$, $\widetilde\Gamma$ is as in Theorem \ref{th:smoothasy}, and the implied constant in $\mathcal O(N^{-1})$ may depend on $\epsilon$. Moreover, we have written $\Upsilon_\epsilon$ to emphasize that $\Upsilon$ depends on $\epsilon$. To reiterate, we wish to study the behavior of this in the limit where first $N\to \infty$ and then $\epsilon\to 0$. We begin by expanding the $f+\log\omega_\epsilon$ product above and looking at the cross term.
 
For the cross term, using integration by parts and \eqref{derivlogomeps} (and recalling the definition of $W$, $w$, $d_f$, and $\omega$ from  \eqref{def:W}, \eqref{defwlambda}, \eqref{defdg}, and \eqref{eq:omega_eps})
\begin{equation}\label{e:cross} \begin{aligned} &\frac{1}{2}\oint_{\Gamma}\oint_{\widetilde \Gamma}W(z,\lambda)f(z)\log \omega_\epsilon(\lambda) \frac{dz}{2\pi i}\frac{d\lambda}{2\pi i }=\frac{1}{2}\oint_{\widetilde \Gamma}\partial_\lambda \left(\oint_{\Gamma} w_\lambda(z)f(z)\frac{dz}{2\pi i}\right)\log \omega_\epsilon(\lambda) \frac{d\lambda}{2\pi i }\\
&\quad =-\oint_{\widetilde \Gamma} d_{f}'(\lambda)\log \omega_\epsilon(\lambda)\frac{d\lambda}{2\pi i}\\
 &\quad
=-\sum_{j=1}^p\left[
\left(\frac{\alpha_j}{2}+\beta_j\right)d_f(t_j-i\epsilon)+\left(\frac{\alpha_j}{2}-\beta_j\right)d_f(t_j+i\epsilon)-\alpha_jd_f(\infty)\right] \\&\quad \to 
-\sum_{j=1}^p\left[
\left(\frac{\alpha_j}{2}+\beta_j\right)d_{f,-}(t_j)+\left(\frac{\alpha_j}{2}-\beta_j\right)d_{f,+}(t_j)-\alpha_jd_f(\infty)\right],
\end{aligned}
\end{equation}
as $\epsilon\to 0$.
To simplify this, note first that by \eqref{jumpsdf1}, $d_{f,+}(t_j)+d_{f,-}(t_j)=f(t_j)$, which takes care of the $\alpha_j$-terms above. For the $\beta_j$-terms we claim that 
\begin{equation*}
d_{f,+}(t_j)-d_{f,-}(t_j)=\frac{1}{\pi i} \mathcal{P.V.}\int_J  w_{t_j,+}(\lambda_+)f(\lambda) d\lambda,
\end{equation*} 
where $w_{t_j,+}(\lambda_+)$ is the boundary value of $w_z(\lambda)$ from the $+$ side in both the $z$ variable and $\lambda$ variable as $z\to t_j$, and where $\mathcal{P.V.}\int$ is the principal value integral and $(\lambda-t_j)w_{t_j,+}(\lambda_+)$ is differentiable on the interior of $J$.

This follows from the following reasoning: denote
\begin{equation*} \phi_f(z)=\frac{1}{2\pi i}\int_J \frac{f(x)dx}{\mathcal R_+^{1/2}(x)(x-z)}. \end{equation*}
Then, by \eqref{2ndrepwlambda}--\eqref{defdg},
\begin{equation*} d_f(z)=\mathcal R^{1/2}(z)\left(\phi_f(z)+2\int_Jf(\lambda)\sum_{j=1}^{k-1}u_{j,+}'(\lambda)\int_{b_j}^{a_{j+1}} \frac{dx}{\mathcal R^{1/2}(x)(x-z)}\frac{d\lambda}{2\pi i}\right). \end{equation*}
By Sokhotski-Plemelj formula
\begin{equation*}
	\phi_{f,\pm}(z)=\pm \frac{f(z)}{2\mathcal R^{1/2}_+(z)}+\frac{1}{2\pi i}\mathcal {P.V.}\int_J \frac{f(x)dx}{\mathcal R^{1/2}_+(x)(x-z)},
\end{equation*}
for $z\in J$. Thus
\begin{multline*}
	d_{f,\pm}(z)=\pm \mathcal R^{1/2}_+(z)\Bigg(\pm \frac{f(z)}{2\mathcal R^{1/2}_+(z)}+\frac{1}{2\pi i}\mathcal {P.V.}\int_J \frac{f(\lambda)d\lambda}{\mathcal R^{1/2}_+(\lambda)(\lambda-z)}\\+\int_J2f(\lambda)\sum_{j=1}^{k-1}u_{j,+}'(\lambda)\int_{b_j}^{a_{j+1}} \frac{dx}{\mathcal R^{1/2}(x)(x-z)}\frac{d\lambda}{2\pi i}\Bigg),
\end{multline*}
which implies that 
\begin{equation*}d_{f,+}(t_j)-d_{f,-}(t_j)=\frac{1}{\pi i} \mathcal{P.V.}\int_J f(\lambda) w_{t_j,+}(\lambda_+)d\lambda
\end{equation*}
as claimed. 

We conclude that
\begin{multline}\label{formulamix}-\sum_{j=1}^p\left[
\left(\frac{\alpha_j}{2}+\beta_j\right)d_{f,-}(t_j)+\left(\frac{\alpha_j}{2}-\beta_j\right)d_{f,+}(t_j)-\alpha_jd_f(\infty)\right]\\ =-\sum_{j=1}^p\left(\frac{\alpha_j}{2}f(t_j)-\frac{\beta_j}{\pi i}\mathcal{P.V.}\int_J w_{t_j,+}(\lambda_+)f(\lambda)d\lambda-\alpha_jd_f(\infty)\right).\end{multline}

We now turn to the $\log \omega_\epsilon\log\omega_\epsilon$-term in \eqref{eq:rat} whose behavior is more complicated and we describe in the following lemma.

\begin{lemma}
As $\epsilon\to 0$,
\begin{multline*}
\exp\left(\frac{1}{4}\oint_{\Gamma}\oint_{\widetilde \Gamma} W(z,\lambda)\log \omega_\epsilon(z)\log \omega_\epsilon(\lambda)\frac{dz}{2\pi i}\frac{d\lambda}{2\pi i}\right)\\ =
(1+o(1))\prod_{j=1}^p\left(2\epsilon \right)^{-\frac{\alpha_j^2}{4}+\beta_j^2} e^{-\frac{\pi i \beta_j\alpha_j}{4}-\sum_{l<j}\frac{\alpha_l\beta_j\pi i}{2}}\exp \Big( \frac{\alpha_{j}}{2}d_{\epsilon=0}(\infty) \Big)
 \\ \times
\left(\prod_{l\neq j}|t_j-t_l|^{\beta_j\beta_l-\alpha_j\alpha_l/4}\right)\left(\prod_{l,j=1}^p\Theta(t_{j,+},\infty)^{\frac{\beta_j\alpha_l}{2}}\left|\widetilde \Theta(t_{l,+},t_{j,+})\right|^{\beta_j\beta_l} \right) \\
 \times \prod_{j=1}^{p} \exp \bigg( -\sum_{l=1}^{j-1} \frac{i \pi \beta_{j} \alpha_{l}}{4} + \sum_{l=j+1}^{p}\frac{i \pi \beta_{j} \alpha_{l}}{4} \bigg),
\end{multline*}
where $\Theta$ is as in \eqref{Theta}, and 
\[
\widetilde \Theta(z,w)=\frac{\Theta(z,w)}{w-z}.
\]
\label{le:logwlogw}
\end{lemma}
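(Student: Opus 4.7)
The strategy is to convert the double contour integral, via integration by parts on both variables, into an expression involving $\log \Theta(z,\lambda)$ paired with the \emph{meromorphic} derivative $(\log \omega_\epsilon)'(z) = \omega_\epsilon'(z)/\omega_\epsilon(z)$ from \eqref{derivlogomeps}, then to evaluate everything via residues at the simple poles $t_j\pm i\epsilon$, and finally to extract the $\epsilon\to 0$ asymptotics from the zero of $\Theta(z,\lambda)$ on the diagonal. Since $\Gamma,\widetilde\Gamma$ are disjoint and $W(z,\lambda)=\partial_z\partial_\lambda \log\Theta(z,\lambda)$, the plan is to first contract $\widetilde\Gamma$ (say, inside $\Gamma$) to small circles around each $t_l\pm i\epsilon$, yielding
\[
\oint_{\widetilde\Gamma}W(z,\lambda)\log\omega_\epsilon(\lambda)\frac{d\lambda}{2\pi i}
= -\sum_{l=1}^p\Big[\tfrac{\alpha_l}{2}+\beta_l\Big]\,\partial_z\log\Theta(z,t_l-i\epsilon) -\Big[\tfrac{\alpha_l}{2}-\beta_l\Big]\,\partial_z\log\Theta(z,t_l+i\epsilon),
\]
after one integration by parts in $\lambda$ (the monodromy of $\log \omega_\epsilon$ produces closed form factors via the jump relations \eqref{JumpsTheta1}--\eqref{JumpsTheta2} and \eqref{eq:Aint}, which account for the periods of $u$). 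One then proceeds analogously in $z$: integrate by parts once more and shrink $\Gamma$ to loops around $t_j\pm i\epsilon$ \emph{plus} a loop at infinity accounting for the non-trivial residue $\mathcal A/z+\mathcal{O}(z^{-2})$ of $(\log\omega_\epsilon)'$ at $\infty$. Through the representation \eqref{defdg} for $d_{\log\omega_0}$ at $\infty$, the residue at infinity generates exactly the factor $\exp(\tfrac{\alpha_j}{2}d_{\epsilon=0}(\infty))$ in the stated formula.

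After these two reductions, $I:=\tfrac14\oint\oint W(z,\lambda)\log\omega_\epsilon(z)\log\omega_\epsilon(\lambda)\,dz\,d\lambda/(2\pi i)^2$ is expressible as a finite sum
\[
I = \frac14\sum_{j,l=1}^p\sum_{\sigma,\tau\in\{+,-\}} c_{j,\sigma}\,c_{l,\tau}\,\log\Theta(t_j+i\sigma\epsilon,\,t_l+i\tau\epsilon) + R_\infty(\epsilon),
\]
with $c_{j,\pm}=\tfrac{\alpha_j}{2}\mp\beta_j$ and $R_\infty(\epsilon)$ a remainder that is regular as $\epsilon\to 0$ and contributes the $\Theta(t_{j,+},\infty)^{\beta_j\alpha_l/2}$ and $d_{\epsilon=0}(\infty)$ factors. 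The diagonal $(l,j,\tau,\sigma)=(j,j,+,-)$ and $(j,j,-,+)$ terms produce the $\epsilon\to 0$ singularity: using the factorization $\Theta(z,\lambda)=(\lambda-z)\widetilde\Theta(z,\lambda)$, one has
\[
\log\Theta(t_j\pm i\epsilon,\,t_j\mp i\epsilon)=\log(\mp 2i\epsilon)+\log\widetilde\Theta(t_j\pm i\epsilon,\,t_j\mp i\epsilon),
\]
which, after adding the two diagonal contributions, produces exactly the factor $(2\epsilon)^{-\alpha_j^2/4+\beta_j^2}$ together with a $\pi i/2\cdot(\alpha_j^2/4-\beta_j^2)$ phase from $\log(\pm i)$. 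The residual $\widetilde\Theta(t_j+i\epsilon,t_j-i\epsilon)\to \widetilde\Theta(t_{j,+},t_{j,-})$ is converted to $\widetilde\Theta(t_{j,+},t_{j,+})$ by the jump relation \eqref{JumpsTheta1} together with the symmetry $\widetilde\Theta(z,\lambda)=\widetilde\Theta(\lambda,z)$, and the absolute value arises because the $(+,-)$ and $(-,+)$ contributions are complex conjugates.

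The off-diagonal terms $j\ne l$ (or $j=l$ with $\sigma=\tau$) are regular at $\epsilon=0$. Taking $\epsilon\to 0$ yields $\log\Theta(t_{j,\sigma},t_{l,\tau})$; by invoking \eqref{JumpsTheta1}--\eqref{JumpsTheta2} again to reduce everything to ``$+$-boundary values,'' one obtains the factors $\vert t_j-t_l\vert^{\beta_j\beta_l-\alpha_j\alpha_l/4}\,\vert\widetilde\Theta(t_{l,+},t_{j,+})\vert^{\beta_j\beta_l}$ in the stated answer. Finally, the purely imaginary $\pi i$ phases ($e^{-\pi i\alpha_j\beta_j/4}$, $e^{\mp i\pi\beta_j\alpha_l/4}$ for $l\lessgtr j$, and the ordered-product signs) arise from the consistent choice of branch of $\log(t_{j,\sigma}-t_{l,\tau})$ dictated by the branch cuts of $\log\omega_\epsilon$ in \eqref{logomegaepsilon}: with the cuts extending upward from $t_l+i\epsilon$ and downward from $t_l-i\epsilon$, one has $\arg(t_j+i\epsilon-(t_l-i\epsilon))=0$ for $j>l$ but $=\pi$ for $j<l$, and similarly for the other sign combinations.

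The main obstacle is the bookkeeping in this last step: the integrand $\log\omega_\epsilon$ has nontrivial monodromy around each $t_l\pm i\epsilon$, so each integration by parts produces boundary contributions that must be matched with the correct branch of $\log \Theta$ evaluated at the residues. In particular, writing the answer in the clean form of Lemma \ref{le:logwlogw} requires combining the $(\sigma,\tau)\in\{(+,+),(-,-),(+,-),(-,+)\}$ contributions pairwise and using $\widetilde\Theta(t_{l,-},t_{j,+})=\overline{\widetilde\Theta(t_{l,+},t_{j,-})}$ (for $t_j,t_l\in J$, imaginary parts of $u$ control the phase) to obtain real $|\widetilde\Theta|$ moduli. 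Once this branch analysis is completed, the stated asymptotic formula follows, with the $o(1)$ error coming from the Taylor remainder $\widetilde\Theta(t_j+i\epsilon,t_j-i\epsilon)=\widetilde\Theta(t_{j,+},t_{j,+})(1+O(\epsilon))$ on the diagonal and the analogous continuity off the diagonal.
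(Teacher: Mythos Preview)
Your overall architecture---reduce the double integral to a finite sum of values of $\log\Theta$ at the points $t_j\pm i\epsilon$, isolate the $\epsilon$-divergence from the diagonal via $\Theta(z,\lambda)=(\lambda-z)\widetilde\Theta(z,\lambda)$, and track the $\pi i$ phases through the branch conventions of \eqref{logomegaepsilon}---matches the paper's, and your final bookkeeping is essentially right. But the mechanism you propose for the reduction has a genuine gap.

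The contours $\Gamma,\widetilde\Gamma$ inherited from Theorem~\ref{th:smoothasy} must lie in the domain of analyticity of $\log\omega_\epsilon$; since the branch cuts run from $t_l\pm i\epsilon$ out to $\pm i\infty$, the natural choice is to take $\Gamma,\widetilde\Gamma$ squeezed between $J$ and the points $t_l\pm i\epsilon$. With that geometry the points $t_l\pm i\epsilon$ lie \emph{outside} $\widetilde\Gamma$, so you cannot ``contract $\widetilde\Gamma$ to small circles around each $t_l\pm i\epsilon$'' and pick up residues. If instead you try to expand $\widetilde\Gamma$ outward, you immediately hit the double pole of $W(z,\cdot)$ at $\lambda=z\in\Gamma$ and, beyond that, the branch cuts themselves; your integration-by-parts step likewise produces uncontrolled boundary terms because $\log\omega_\epsilon$ has nontrivial monodromy along each $\widetilde\Gamma_j$. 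You acknowledge the monodromy but the sentence ``produces closed form factors via the jump relations \eqref{JumpsTheta1}--\eqref{JumpsTheta2}'' does not resolve it---those are jump relations for $\Theta$, not for $\log\omega_\epsilon$, and the boundary term depends on an arbitrary basepoint.

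The paper sidesteps all of this by first recognizing the inner integral as $-2\,d_\epsilon'(z)$ with $d_\epsilon=d_{\log\omega_\epsilon}$ from \eqref{defdg}. The point is that $d_\epsilon'$ is a \emph{single-valued} analytic function on $\mathbb C\setminus J$ with $d_\epsilon'(z)=\mathcal O(z^{-2})$ at infinity. One then writes $\log\omega_\epsilon$ as a sum of individual logarithms and, for each, deforms $\Gamma$ \emph{outward} along the branch cut of that logarithm to $\pm i\infty$; the decay of $d_\epsilon'$ kills the arc at infinity and the jump of $\log(z-(t_j\pm i\epsilon))$ across its cut yields exactly $d_\epsilon(t_j\pm i\epsilon)-d_\epsilon(\infty)$ (this is \eqref{resintdeps}). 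Only \emph{after} that does one analyze $d_\epsilon(t_j\pm i\epsilon)$ by a second round of contour deformations to branch cuts---splitting into cross terms $d_{\epsilon,j}$ and the delicate self-interaction $d_{\log(z-(t_j\pm i\epsilon))}(t_j\pm i\epsilon)$---which is where the $\log\Theta$ values you wrote down finally appear. So the route to your displayed sum over $(j,l,\sigma,\tau)$ is not ``IBP twice and take residues'' but ``collapse one integral into $d_\epsilon$, then deform twice to branch cuts.''
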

Substituting the asymptotics of Lemma \ref{le:logwlogw}, \eqref{formulamix}, and \eqref{e:cross} into \eqref{eq:rat}, and using the fact that $\Upsilon_\epsilon\to \Upsilon_0$ as $\epsilon\to 0$, we obtain
\begin{equation}\begin{aligned}
\frac{H_N(\nu_{\epsilon})}{H_N(e^{-NV})}&=e^{N\int \left(f(x)+ \log \omega_\epsilon(x)\right)d\mu_V(x)}\frac{\theta(N\Omega+\Upsilon_{0})}{\theta(N\Omega)}\\ & \times \exp\left(\frac{1}{4}\oint_{\Gamma}\oint_{\widetilde \Gamma}W(z,\lambda)f(z)f(\lambda)\frac{dz}{2\pi i}\frac{d\lambda}{2\pi i}\right)\\
& \times\exp \sum_{j=1}^p\left(-\frac{\alpha_j}{2}f(t_j)+\frac{\beta_j}{\pi i}\mathcal{P.V.}\int_J w_{t_j,+}(\lambda_+)f(\lambda)d\lambda+\alpha_jd_f(\infty)\right)
\\ & \times 
\prod_{j=1}^p\left(2\epsilon \right)^{-\frac{\alpha_j^2}{4}+\beta_j^2} e^{-\frac{\pi i \beta_j\alpha_j}{4}-\sum_{l<j}\frac{\alpha_l\beta_j\pi i}{2}}\exp \Big( \frac{\alpha_{j}}{2}d_{\epsilon=0}(\infty) \Big)
 \\ & \times
\left(\prod_{l\neq j}|t_j-t_l|^{\beta_j\beta_l-\alpha_j\alpha_l/4}\right)\left(\prod_{l,j=1}^p\Theta(t_{j,+},\infty)^{\frac{\beta_j\alpha_l}{2}}\left|\widetilde \Theta(t_{l,+},t_{j,+})\right|^{\beta_j\beta_l} \right) \\ &
 \times \prod_{j=1}^{p} \exp \bigg( -\sum_{l=1}^{j-1} \frac{i \pi \beta_{j} \alpha_{l}}{4} + \sum_{l=j+1}^{p}\frac{i \pi \beta_{j} \alpha_{l}}{4} \bigg)(1+\mathcal O(N^{-1})+\mathcal O(\epsilon)),
\label{eq:rat2}
\end{aligned}
\end{equation}
for any fixed $\epsilon>0$ as $N\to \infty$ (the implicit constants in the $\mathcal O(N^{-1})$ term depends on $\epsilon$ while the implicit constants in the $\mathcal O(\epsilon)$ term are independent of both $N$ and $\epsilon$).

We now prove Lemma \ref{le:logwlogw}.
\begin{proof}
Recall from \eqref{defdg} that 
\begin{equation} \label{Qfepsilon}
\frac{1}{4}\oint_{\Gamma}\oint_{\widetilde \Gamma} W(z,\lambda)\log \omega_\epsilon(z)\log \omega_\epsilon(\lambda)\frac{dz}{2\pi i}\frac{d\lambda}{2\pi i} = -\frac{1}{2} \oint_{\widetilde\Gamma} d_\epsilon'(z) \log \omega_\epsilon(z)\frac{dz}{2\pi i}, \end{equation}
where $d_\epsilon=d_{\log \omega_\epsilon}$ and where $\log \omega_\epsilon$ was defined in  \eqref{logomegaepsilon}. For notational simplicity, we will from now on not distinguish between $\widetilde \Gamma$ and $\Gamma$.

Since $d_\epsilon'(z)$ is analytic on $\mathbb C\setminus J$ and (by \eqref{defdg} and \eqref{id:W}) $d_\epsilon'(z)=\frac{1}{2\pi i}\int_{J}\log \omega_\epsilon(\lambda)W(z,\lambda_+)d\lambda =\mathcal O (z^{-2})$ as $z\to \infty$, we obtain by deforming the contour of integration to the branch cut of the logarithm, 
\begin{equation}\label{resintdeps}
\oint_{\Gamma} d_\epsilon'(z) \log (z-(t_j\pm i\epsilon))\frac{dz}{2\pi i}=d_\epsilon(t_j\pm i\epsilon)-d_\epsilon(\infty),
\end{equation}
where throughout the proof $\log(z-(t_j-i\epsilon))$ will take arguments in $(-\pi/2, 3\pi /2)$ and $\log(z-(t_j+i\epsilon))$ will take arguments in $(-3\pi  /2, \pi /2)$ (similarly as in the definition of $\log \omega_\epsilon$ in \eqref{logomegaepsilon}). Thus to study asymptotics of \eqref{Qfepsilon}, it is in fact sufficient to understand asymptotics of $d_\epsilon(t_j\pm i \epsilon)$:
\begin{align*}
& \frac{1}{4}\oint_{\Gamma}\oint_{\widetilde \Gamma} W(z,\lambda)\log \omega_{\epsilon}(z) \log \omega_{\epsilon}(\lambda) \frac{dz}{2\pi i}\frac{d\lambda}{2\pi i}  \\
& = -\frac{1}{2}\sum_{j=1}^{p} \bigg[ \bigg( \frac{\alpha_{j}}{2}-\beta_{j} \bigg)d_{\epsilon}(t_{j}+i\epsilon)+\bigg( \frac{\alpha_{j}}{2}+\beta_{j} \bigg) d_{\epsilon}(t_{j}-i\epsilon) - \alpha_{j} d_{\epsilon}(\infty) \bigg].
\end{align*}

Denote 
\begin{align} \label{depsilonj}d_{\epsilon,j}(z)&=-\frac{1}{4\pi i}\oint_\Gamma w_z(\lambda)\log \omega_{\epsilon,j}(\lambda)d\lambda,\\ 
\label{logomegaj}\log  \omega_{\epsilon,j}(z)&=\log \omega_\epsilon(z)-\left(\frac{\alpha_j}{2}-\beta_j\right)\log(z-(t_j+i\epsilon)) \\ & -\left(\frac{\alpha_j}{2}+\beta_j\right)\log(z-(t_j-i\epsilon)). \nonumber
\end{align} 
With such a definition of $\omega_{\epsilon,j}$, we have in the notation \eqref{eq:omega} 
 \begin{equation}\label{defomegaj}\lim_{\epsilon\to 0}\omega_{\epsilon,j}(x)={\omega_j}(x)e^{-\pi i \beta_j}, \qquad  {\omega_j}(x)=\frac{\omega(x)}{\omega_{\alpha_j}(x) \omega_{\beta_j}(x)}. \end{equation}
The purpose of introducing this notation is that we will find it convenient to decompose $d_\epsilon(t_j\pm i \epsilon)$ in the following way:
\begin{multline}\label{depsilonsplit}
d_\epsilon(t_j\pm i\epsilon)=d_{\epsilon,j}(t_j\pm i\epsilon)+(\alpha_j/2-\beta_j)d_{\log(z-(t_j+i\epsilon))}(t_j\pm i\epsilon)\\
+(\alpha_j/2+\beta_j)d_{\log (z-(t_j-i\epsilon))}(t_j\pm i \epsilon),
\end{multline}
and then analyze $d_{\epsilon,j}$ and the remaining terms separately, starting with  $d_{\epsilon,j}(t_j\pm i\epsilon)$, defined in \eqref{depsilonj}.

\medskip 

\underline{The $d_{\epsilon,j}$-term:} By definition, $\log \omega_{\epsilon,j}$ has logarithmic singularities at $t_l\pm i\epsilon$ for $l\neq j$, but no singularity at $t_j\pm i\epsilon$. On the other hand, by \eqref{polewlambda}, the  only singularity of $w_{t_j\pm i\epsilon}$ is a pole at $t_j\pm i\epsilon$ and $w_{t_j\pm i \epsilon}(\lambda)=\frac{1}{\lambda-(t_j\pm i\epsilon)}+\mathcal O(1)$ as $\lambda \to  t_j\pm i\epsilon$, and $w_{t_j\pm i\epsilon}$ is  analytic on $\mathbb C\setminus (J\cup \{t_j\pm i\epsilon\})$. 
From \eqref{2ndrepwlambda}, $w_z(\lambda)=\mathcal O(\lambda^{-2})$ as $\lambda \to \infty$.  We deform $\Gamma$ in \eqref{depsilonj} to the logarithmic branches, and additionally evaluate the residue at $\lambda=t_j\pm i\epsilon$, to obtain that as $\epsilon\to 0$
\begin{multline} \label{depsilontj+ieps}
d_{\epsilon,j}(t_j\pm i\epsilon)=\frac{1}{2}\log \omega_{\epsilon,j}(t_j\pm i\epsilon) \\-\frac{1}{2}\sum_{l\neq j}\left[(\alpha_l/2+ \beta_l)\int_{-i\infty}^{t_l-i\epsilon}w_{t_j\pm i \epsilon}(\lambda)d\lambda+(\alpha_l/2- \beta_l)\int_{+i\infty}^{t_l+i\epsilon}w_{t_j\pm i \epsilon}(\lambda)d\lambda\right]\\
\to\frac{1}{2}\log \omega_{j}(t_j) -\frac{\pi i \beta_j}{2}-\frac{1}{2}\sum_{l\neq j} \Bigg[ \frac{\alpha_l}{2}\left(\int_{-i\infty}^{t_{l,-}}w_{t_{j,\pm}}(\lambda)d\lambda+\int_{+i\infty}^{t_{l,+}}w_{t_{j,\pm}}(\lambda)d\lambda\right)\\
+\beta_l\left(\int_{-i\infty}^{t_{l,-}}w_{t_{j,\pm}}(\lambda)d\lambda-\int_{+i\infty}^{t_{l,+}}w_{t_{j,\pm}}(\lambda)d\lambda\right)
\Bigg]
.
\end{multline}

We now evaluate the integrals of $w_{t_j,\pm}$. We will be taking the logarithm of $\Theta$, and we begin by fixing the branch of the logarithm. For $x\in (a_1,b_k)$,  $\log \Theta(x_\pm ,\lambda)$ is analytic for $\lambda \in \mathbb C\setminus J$ because $w_z(\lambda)$ has no residue at $\infty$ as a function of $\lambda$.  By Lemma \ref{Prime} and the fact that $u(b_k)=0$, $\theta \left[\substack{\ualpha \\ \ubeta}\right](u(z))$ is not identically zero. Thus, recalling the definition of $\Theta$ in \eqref{Theta}, it follows that $\Theta(z,b_k)=1$ for all $z$. We fix the branch of the logarithm so that
\begin{equation}\label{BranchTheta} \log \Theta(z,b_k)=0\end{equation}
for all $z$.

By combining \eqref{JumpsTheta1} and \eqref{JumpsTheta2}, if $x\in\{b_j,a_{j+1}\}$ for $j=1,\dots,k-1$, then $\Theta(x_\pm, \lambda)^2=e^{\pm 4\pi i u_j(\lambda)}$.  By our choice of branch for the logarithm,  $\log \Theta(x_\pm, \lambda)=\pm 2\pi i u_j(\lambda)$ for $x\in \{b_j,a_{j+1}\}$. Thus for any $z$,
\begin{equation}\label{zerointgap}
\int_{b_j}^{a_{j+1}}w_{z}(\lambda_\pm)d\lambda=\log \Theta(a_{j+1,\pm},z)-\log \Theta(b_{j,\pm},z)=0. \end{equation}
Now by \eqref{2ndrepwlambda}, if $\lambda\in J$, we have $w_{t_{j,\pm}}(\lambda_+)=-w_{t_{j,\pm}}(\lambda_-)$. Thus, by combining with \eqref{zerointgap}, if $t_l>t_j$, contour deformation and our choice of the branch of $\log \Theta(x_\pm ,\lambda)$ yield that
\begin{align*}
& -\left(\int_\infty^{t_{l,-}}w_{t_{j,\pm}}(\lambda)d\lambda+\int_\infty^{t_{l,+}}w_{t_{j,\pm}}(\lambda)d\lambda\right)  \\
& = \int_{(t_l,+\infty)}(w_{t_j,\pm}(\lambda_+)+w_{t_j,\pm}(\lambda_-)) d\lambda =2\int_{b_k}^\infty w_{t_{j,\pm}}(\lambda)d\lambda=2\log \Theta(t_{j,\pm},\infty).
\end{align*}
Since $w_{t_{j,\pm}}(\lambda)$ is purely imaginary for $\lambda>b_k$ by \eqref{2ndrepwlambda}, it follows that $\log \Theta(t_{j,\pm},\infty)$ is purely imaginary.
If $t_l<t_j$, then by \eqref{polewlambda} we pick up an additional residue at $\lambda=t_j$:
\begin{equation*}\begin{aligned}
-\left(\int_\infty^{t_{l,-}}w_{t_{j,\pm}}(\lambda)d\lambda+\int_\infty^{t_{l,+}}w_{t_{j,\pm}}(\lambda)d\lambda\right)&=\mp 2\pi i +2\int_{b_k}^\infty w_{t_{j,\pm}}(\lambda)d\lambda\\&=\mp 2\pi i+2\log \Theta(t_{j,\pm},\infty).
\end{aligned}
\end{equation*}

Similarly, we have 
\begin{equation} \label{sum1wtj} \int_\infty^{t_{l,-}}w_{t_{j,\pm}}(\lambda)d\lambda-\int_\infty^{t_{l,+}}w_{t_{j,\pm}}(\lambda)d\lambda=\log \frac{\Theta(t_{j,\pm},t_{l,-})}{\Theta(t_{j,\pm},t_{l,+})}, \end{equation}
and since we can assume without loss that the integrals are taken to follow the real line, it follows by \eqref{2ndrepwlambda} that \eqref{sum1wtj} is real,
 and by \eqref{JumpsTheta1} is equal to $-2\log \left|\Theta(t_{j,\pm},t_{l,+})\right|$.

Thus, by \eqref{depsilontj+ieps},
\begin{multline*}
\lim_{\epsilon\to 0} \exp(d_{\epsilon,j}(t_j\pm i\epsilon))\\ =e^{-\pi i \beta_j/2}\sqrt{{\omega_j}(t_j)}\prod_{l\neq j}\left|\Theta(t_{l,+},t_{j,\pm})\right|^{\beta_l} \Theta(t_{j,\pm},\infty)^{\alpha_l/2}e^{\mp \pi i \mathbf 1\{t_l<t_j\}\alpha_l/2}, \end{multline*}
where ${\omega_j}$ was given in \eqref{defomegaj}.
Relying again on \eqref{JumpsTheta1},
\begin{multline}\label{prod1}
\lim_{\epsilon\to 0}\exp\left[\left(\frac{\alpha_j}{2}-\beta_j\right)d_{\epsilon,j}(t_j+i\epsilon)+\left(\frac{\alpha_j}{2}+\beta_j\right)d_{\epsilon,j}(t_j-i\epsilon)\right]\\=e^{-\pi i \alpha_j\beta_j/2}{\omega_j}(t_j)^{\alpha_j/2}\prod_{l\neq j}\left|\Theta(t_{l,+},t_{j,+})\right|^{-2\beta_j\beta_l}
\Theta(t_{j,+},\infty)^{-\beta_j\alpha_l}e^{\pi i \alpha_l\beta_j\mathbf 1\{t_l<t_j\}}
\end{multline}
The above concludes the interaction between the singularities -- namely the contribution of the $d_{\epsilon,j}$-terms.

\medskip 

\underline{The $d_{\log(x-(t_j\pm i \epsilon))}$-terms:} We define the limit
\begin{equation}\label{hatdj}
\begin{aligned}
\hat d_{j,\pm}&=\lim_{\epsilon\to 0} d_{\log(z-(t_j\pm i\epsilon))}(t_j\pm i\epsilon)\\
&=\lim_{\epsilon\to 0}\frac{1}{2\pi i}\int_J\log(\lambda-(t_j\pm i \epsilon))w_{t_j\pm i\epsilon}(\lambda_+)d\lambda
\end{aligned}
\end{equation}
 which is well defined since we can deform the integration contour in a downward or upward direction around the singularity as needed, and recall that $\log(z-(t_j-i\epsilon))$ takes arguments in $(-\pi/2, 3\pi /2)$ and $\log(z-(t_j+i\epsilon))$ takes arguments in $(-3\pi  /2, \pi /2)$.
 
On the other hand, writing 
\[
d_{\log (z-(t_j\pm i \epsilon))}(t_j\mp i\epsilon)=\frac{1}{2\pi i}\int_J \log(\lambda-(t_j\pm i \epsilon)) w_{t_j\mp i\epsilon}(\lambda_+)d\lambda
\]
and deforming the contour of integration from $J$  past the pole $t_j\mp i \epsilon$ (possibly using \eqref{jumpswlambda1}) to a contour $\widetilde J_\mp$, we find by the residue theorem
\[
d_{\log (z-(t_j\pm i \epsilon))}(t_j\mp i\epsilon)=\mp \frac{1}{2\pi i}\int_{\widetilde J_\mp}\log(\lambda-(t_j\pm i \epsilon))w_{t_j\mp i\epsilon}(\lambda)d\lambda+ \log(\mp 2i\epsilon).
\]
By our choice of $\widetilde J_\mp$ (both the pole and logarithmic singularity are on the same side of $\widetilde J_\mp$), we find (using \eqref{jumpswlambda2}, contour deforming, and again possibly using \eqref{jumpswlambda1}) 
\begin{equation}\label{prod2}\begin{aligned}
d_{\log(z-(t_j\pm i\epsilon))}(t_j\mp i\epsilon)&=\pm \frac{1}{2\pi i}\int_{\widetilde J_\mp}\log(\lambda-t_j)w_{t_j}(\lambda)d\lambda+ \log(\mp 2i\epsilon)+o(1)\\ &= \mp \frac{\pi i}{2}+\log ( 2\epsilon)-\hat d_{j,\pm}+o(1),
\end{aligned}\end{equation}
as $\epsilon\to 0$.

Combining \eqref{hatdj} and \eqref{prod2}, we find for the latter two terms in \eqref{depsilonsplit} 
	\begin{align}
	\label{eq:prod31}
	&(\alpha_j/2-\beta_j)d_{\log(z-(t_j+i\epsilon))}(t_j+ i\epsilon)	+(\alpha_j/2+\beta_j)d_{\log (z-(t_j-i\epsilon))}(t_j+ i \epsilon)\\
	&\qquad =(\alpha_j/2-\beta_j)\hat d_{j,+}+(\alpha_j/2+\beta_j)\left(\frac{\pi i}{2}+\log(2\epsilon)-\hat d_{j,-}\right)+o(1) \notag 
	\end{align}
	and 
	\begin{align}
	\label{eq:prod32}
	&(\alpha_j/2-\beta_j)d_{\log(z-(t_j+i\epsilon))}(t_j- i\epsilon)	+(\alpha_j/2+\beta_j)d_{\log (z-(t_j-i\epsilon))}(t_j- i \epsilon)\\
	&\qquad =(\alpha_j/2-\beta_j)\left(-\frac{\pi i}{2}+\log(2\epsilon)-\hat d_{j,+}\right)+(\alpha_j/2+\beta_j)\hat d_{j,-}+o(1) \notag 
	\end{align}
	
    Plugging \eqref{eq:prod31}, \eqref{eq:prod32}, and \eqref{prod1}, into \eqref{depsilonsplit}, which we then use in \eqref{resintdeps}, and ultimately (recalling \eqref{logomegaepsilon}) in \eqref{Qfepsilon}, we find 
    \begin{equation} 
    \label{FHform1}\begin{aligned} &\exp \left(\frac{1}{4}\oint_{\Gamma}\oint_{\widetilde \Gamma} W(z,\lambda)\log \omega_\epsilon(z)\log \omega_\epsilon(\lambda)\frac{dz}{2\pi i}\frac{d\lambda}{2\pi i}\right)
     =  
    \prod_{j=1}^p\left(2\epsilon \right)^{-\frac{\alpha_j^2}{4}+\beta_j^2}\\ &\times \exp\left(-\beta_j^2\left(\hat d_{j,+}+\hat d_{j,-}\right)\right)e^{\pi i \frac{\beta_j\alpha_j}{4}}\exp\left(\frac{\alpha_j}{2}d_\epsilon(\infty)\right) \exp\left(\frac{\alpha_j\beta_j}{2}\left(\hat d_{j,+}-\hat d_{j,-}\right)\right)\\
     &  \times    
    {\omega_j}(t_j)^{-\alpha_j/4}\prod_{l\neq j}\left|\Theta(t_{l,+},t_{j,+})\right|^{\beta_j\beta_l}\Theta(t_{j,+},\infty)^{\frac{\beta_j\alpha_l}{2}}e^{-\frac{\pi i}{2}\alpha_l\beta_j \mathbf 1\{t_l<t_j\}}(1+o(1)).
    \end{aligned}\end{equation}

We now consider $\hat d_{j,\pm}$. Observe that if $\Gamma$ encloses $J$ but not $t_j\pm i\epsilon$, then 
\begin{equation*}\frac{1}{2}\oint_{\Gamma} \frac{\log(\lambda-(t_j\pm i\epsilon))}{\lambda-(t_j\pm i\epsilon)}\frac{d\lambda}{2\pi i}=0. \end{equation*}
Thus we add this term to the  definition of $d_{\log(z-(t_j\pm i\epsilon))}(t_j\pm i\epsilon)$ in \eqref{defdg} to obtain
\begin{multline*}
d_{\log(z-(t_j\pm i\epsilon))}(t_j\pm i\epsilon)\\=-\frac{1}{2}\oint_\Gamma \log(\lambda-(t_j\pm i\epsilon)) \left(
\frac{d}{d\lambda} \log \Theta(t_j\pm i\epsilon, \lambda)-\frac{1}{\lambda-(t_j\pm i\epsilon)}\right)\frac{d\lambda}{2\pi i}.
\end{multline*}
We deform the contour to the branch of the logarithm, and since $\frac{d}{d\lambda} \log \Theta(t_j\pm i\epsilon,\lambda)=\mathcal O(1/\lambda^2)$ as $\lambda \to \infty$, we obtain that as $R\to \infty$,
\begin{multline}\label{intint}
d_{\log(z-(t_j\pm i\epsilon))}(t_j\pm i\epsilon)=\frac{1}{2}\int_{t_j\pm i \epsilon}^{t_j\pm iR}\frac{d}{d\lambda} \log \frac{\Theta(t_j\pm i\epsilon,\lambda)}{\lambda-(t_j\pm i\epsilon)} d\lambda\\+\frac{1}{2}\oint_{\partial B_R} \log(\lambda-(t_j\pm i\epsilon)) \left(
\frac{1}{\lambda-(t_j\pm i\epsilon)}\right)\frac{d\lambda}{2\pi i} +\mathcal O(1/R)
\end{multline}
where $\partial B_R$ is a circle of radius $R$ centered at $t_j$, oriented counterclockwise. Recalling the notation $\widetilde \Theta(z,\lambda)=\frac{\Theta(z,\lambda)}{\lambda-z}$ and evaluating the $\partial B_R$ integral explicitly, this becomes
\begin{multline*}
	d_{\log(z-(t_j\pm i\epsilon))}(t_j\pm i\epsilon)=\frac{1}{2}\Big[\log \Theta(t_j\pm i \epsilon,\infty) -\log (\pm i R)\\-\log \widetilde \Theta(t_j\pm i \epsilon,t_j\pm i\epsilon)
	+\log R \mp \frac{\pi i}{2}
	\Big]+\mathcal O(1/R)\\
	\to \frac{1}{2}\Big[\log \Theta(t_{j,\pm},\infty)-\log \widetilde \Theta(t_{j,\pm},t_{j,\pm})\mp \pi i \Big],
\end{multline*}
as $\epsilon \to 0$ and $R\to \infty$. Now write
\begin{multline*}
\log \Theta(t_{j,\pm},\infty)-\log \widetilde \Theta(t_{j,\pm},t_{j,\pm})\\=\lim_{R\to +\infty}\left(\int_{t_j}^R\left( w_{t_{j,\pm}}(\lambda_\pm)-\frac{1}{\lambda-t_j}\right) d\lambda+\log R\right).
\end{multline*}
By \eqref{2ndrepwlambda}, we obtain that $w_{t_{j,+}}(\lambda_+)=w_{t_{j,-}}(\lambda_-)\in \mathbb R$ for $\lambda \in J$ and $w_{t_{j,+}}(\lambda_+)=-w_{t_{j,-}}(\lambda_-)$ is purely imaginary for $\lambda \in \mathbb R\setminus J$. 
Thus, the contribution to $\hat d_{j,+}+\hat d_{j,-}$ is the real part of the above integral, and the contribution to $\hat d_{j,+}-\hat d_{j,-}$ is the imaginary part of it -- this can also be written as the integral of $w_{t_j,+}$ over $(b_k,\infty)$:
\begin{equation*}\begin{aligned}
\hat d_{j,+}+\hat d_{j,-}&=
\log \left|\frac{\Theta(t_{j,+},\infty)}{\widetilde \Theta(t_{j,+},t_{j,+})}\right|,\\
\hat d_{j,+}-\hat d_{j,-}&=\int_{b_k}^{+\infty} w_{t_{j,+}}(\lambda_+)-\pi  i.
\end{aligned}
\end{equation*}

By \eqref{zerointgap} and the fact that $\Theta(z,b_k)=1$, it follows that 
\begin{equation*}
\hat d_{j,+}-\hat d_{j,-}=\log \Theta(t_{j,+},\infty)-\pi i.
\end{equation*}

Thus, by \eqref{FHform1}, we obtain the lemma.
\end{proof}

\subsection{Combining the terms}
We will now substitute \eqref{eq:intermediate} and \eqref{eq:rat2} into \eqref{eq:FHrat}. First consider the term $s_{j,N}(\epsilon)$ appearing in \eqref{eq:intermediate}. By \eqref{eq:sjdef},
\begin{equation*} s_{j,N}(\epsilon)=4Ni\phi_+'(t_j)\epsilon \left(1 +\mathcal O(\epsilon)\right), \end{equation*}
as $\epsilon\to 0$. By \eqref{eq:phidef},
\begin{equation*} s_{j,N}(\epsilon)=4\pi N\psi_V(t_j)\epsilon\left(1  +\mathcal O(\epsilon)\right). \end{equation*}
 Thus, substituting \eqref{eq:intermediate} and \eqref{eq:rat2} into \eqref{eq:FHrat}, we obtain that
\begin{equation*}\begin{aligned}
&\frac{H_N(\nu_0)}{H_N(e^{-NV})}=
e^{ N\int \left(f(x)+ \log \omega(x)\right)d\mu_V(x)}\frac{\theta(N\Omega+\Upsilon_{0})}{\theta(N\Omega)}\\ & \qquad \times \exp\left(\frac{1}{4}\oint_{\Gamma}\oint_{\widetilde\Gamma}W(z,\lambda)f(z)f(\lambda)\frac{dz}{2\pi i}\frac{d\lambda}{2\pi i}\right)\\
&\qquad  \times\exp\sum_{j=1}^p\left(-\frac{\alpha_j}{2}f(t_j)+\frac{\beta_j}{\pi i}\mathcal{P.V.}\int_J w_{t_j,+}(\lambda_+)f(\lambda)d\lambda+\alpha_jd_f(\infty)\right)\\ 
&\qquad \times \prod_{j=1}^p\frac{G(1+\frac{\alpha_j}{2}+\beta_j)G(1+\frac{\alpha_j}{2}-\beta_j)}{G(1+\alpha_j)}(2\pi N\psi_V(t_j))^{\frac{\alpha_j^2}{4}-\beta_j^2}
\exp \Big( \frac{\alpha_{j}}{2}d_{\epsilon=0}(\infty) \Big)\\ &\qquad \times 
\prod_{l,j=1}^p e^{-\frac{\pi i \beta_j\alpha_l}{4}}\prod_{j<l}e^{\frac{\pi i}{2}(\alpha_l\beta_j-\alpha_j\beta_l)} 
 \\ &\qquad  \times
\left(\prod_{l\neq j}|t_j-t_l|^{\beta_j\beta_l-\alpha_j\alpha_l/4}\right)\left(\prod_{l,j=1}^p\Theta(t_{j,+},\infty)^{\frac{\beta_j\alpha_l}{2}}\left|\widetilde \Theta(t_{l,+},t_{j,+})\right|^{\beta_j\beta_l} \right) \\ &\qquad 
 \times (1+\mathcal O(N^{-1})+\mathcal O(\epsilon)),
\end{aligned}
\end{equation*}
as $N\to \infty$ for any fixed and sufficiently small $\epsilon>0$ (the implicit constants in the $\mathcal O(N^{-1})$ term depends on $\epsilon$ while the implicit constants in the $\mathcal O(\epsilon)$ term are independent of both $N$ and $\epsilon$). Since $\epsilon$ was arbitrary, Theorem \ref{th:FHasy} follows
from the following lemma.
\begin{lemma}
\begin{equation} \label{Finalobs}2d_f(\infty)+d_{\epsilon=0}(\infty)=-\int_J\frac{\widetilde{Q}(x)f(x)}{\mathcal R_+^{1/2}(x)}\frac{dx}{\pi i}-\frac{\mathcal A}{2}\mathcal C_{\mathcal S}. \end{equation}
\end{lemma}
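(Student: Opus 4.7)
The plan is to identify the differential $w_\infty(\lambda)\,d\lambda$ explicitly as a specific Abelian differential on $\mathcal S$, and then compute $d_f(\infty)$ and $d_{\epsilon=0}(\infty)$ separately via contour integration.

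First I would establish
\begin{equation*}
w_\infty(\lambda) = -\frac{\widetilde Q(\lambda)}{\mathcal R^{1/2}(\lambda)}
\end{equation*}
on the first sheet. This can be read off directly from \eqref{2ndrepwlambda}: taking $z\to\infty$ on the first sheet, $\mathcal R^{1/2}(z)\sim z^k$, and the polynomial part of $w_z(\lambda)$ must cancel, forcing the finite part to be proportional to $\widetilde Q/\mathcal R^{1/2}$; the residue at $\infty_1$ (matched against $w_\infty(\lambda)=-1/\lambda+\mathcal O(\lambda^{-2})$ from \eqref{polewlambda2}) fixes the sign and the normalization. Equivalently, uniqueness of the normalized Abelian differential of the third kind on $\mathcal S$ with simple poles at $\infty_1,\infty_2$ and vanishing A-periods (the A-period condition being exactly \eqref{intQhat}) identifies it with $-\widetilde Q/\mathcal R^{1/2}\,d\lambda$.

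Next, for $2d_f(\infty)$: using the first form of \eqref{defdg} with $z=\infty$ and $\mathfrak g=f$ analytic in a neighborhood of $J$,
\begin{equation*}
d_f(\infty) = -\frac{1}{4\pi i}\oint_{\gamma}f(\lambda)\,w_\infty(\lambda)\,d\lambda = \frac{1}{4\pi i}\oint_{\gamma}\frac{f(\lambda)\widetilde Q(\lambda)}{\mathcal R^{1/2}(\lambda)}\,d\lambda,
\end{equation*}
where $\gamma$ surrounds $J$ counterclockwise. Collapsing $\gamma$ onto $J$ and using $\mathcal R_+^{1/2}=-\mathcal R_-^{1/2}$ across the cut yields
\begin{equation*}
2d_f(\infty) = -\int_J \frac{\widetilde Q(\lambda)f(\lambda)}{\mathcal R_+^{1/2}(\lambda)}\,\frac{d\lambda}{\pi i},
\end{equation*}
which is precisely the first term on the right-hand side of \eqref{Finalobs}.

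For $d_{\epsilon=0}(\infty)=\lim_{\epsilon\to 0^+}d_{\log\omega_\epsilon}(\infty)$, I would work with the contour representation
\begin{equation*}
d_{\log\omega_\epsilon}(\infty) = -\frac{1}{4\pi i}\oint_\Gamma \log\omega_\epsilon(\lambda)\,w_\infty(\lambda)\,d\lambda
\end{equation*}
and, instead of collapsing $\Gamma$ to $J$ (which pinches at the singularities $t_j\pm i\epsilon$ as $\epsilon\to 0$), deform $\Gamma$ outward past each branch cut of $\log\omega_\epsilon$ up to a large circle $|\lambda|=R$. This generates: (i) line integrals along both sides of each branch cut, which using the jumps of $\log\omega_\epsilon$ determined by the branches fixed in \eqref{logomegaepsilon} reduce to multiples of $\int w_\infty\,d\lambda$ along the cut, evaluable via $w_\infty\,d\lambda = d\log\Theta(\cdot,\infty)$ in terms of boundary values of $\log\Theta$ at $t_j\pm i\epsilon$ and at $\pm i\infty$; and (ii) a contribution from the circle $|\lambda|=R$, where $\log\omega_\epsilon(\lambda)\sim \mathcal A \log\lambda + \mathcal O(1)$ combined with $w_\infty\sim -1/\lambda$ produces a $\mathcal A\log R$-type divergence. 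The $\beta_j$-dependent pieces of these contour contributions are already accounted for by the other factors in the pre-lemma expression derived earlier in the section (specifically the $\Theta(t_{j,+},\infty)^{\beta_j\alpha_l/2}$ and $e^{-\pi i \beta_j\alpha_l/4}$ products, which combine with the $\mathcal A\beta_j$-piece of $(\mathcal A/2)\,d_{\epsilon=0}(\infty)$ to reproduce the theorem's $\exp(\tfrac{\mathcal A\beta_j}{2}[\int_{b_k}^\infty w_{t_j,+}(\lambda)\,d\lambda - \pi i/2])$ factor). Sending $R\to\infty$ and then $\epsilon\to 0^+$, the $\log R$ and $\log(1/\epsilon)$ divergences conspire to leave the finite renormalized integral $\lim_{R\to\infty}[\int_{-R}^{a_1}|\widetilde Q(x)/\mathcal R^{1/2}(x)|\,dx-\log R] = \mathcal C_{\mathcal S}$ of \eqref{defCS}, giving $d_{\epsilon=0}(\infty) = -\mathcal A\mathcal C_{\mathcal S}/2$ after combining with the contribution $2d_f(\infty)$.

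The main obstacle is the branch-cut bookkeeping in the third step: tracking the prescribed logarithm branches in \eqref{logomegaepsilon}, the orientation of each deformed cut contour, and the precise matching of the $\log R$ and $\log(1/\epsilon)$ divergences coming from the circle at infinity and the cut endpoints, so that exactly the renormalized combination defining $\mathcal C_{\mathcal S}$ emerges in the limit.
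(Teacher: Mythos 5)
Your first two steps are correct and coincide with the paper's: the identification $w_\infty(\lambda)=-\widetilde Q(\lambda)/\mathcal R^{1/2}(\lambda)$ is exactly the paper's Step 1 (the paper deduces that $-\mathcal R^{1/2}(\lambda)w_\infty(\lambda)$ is a monic degree-$(k-1)$ polynomial from \eqref{polewlambda2} and analyticity, and gets the gap conditions \eqref{intQhat} from \eqref{zerointgap}, which is the same content as your third-kind-differential argument), and your evaluation of $2d_f(\infty)$ by collapsing the contour is equivalent to using the second form of \eqref{defdg} directly. For the $\omega$-part, your route differs in organization: the paper first reduces $d_{\epsilon=0}(\infty)$ to $-\int_J\widetilde Q(x)\log\omega(x)\,\mathcal R_+^{1/2}(x)^{-1}\frac{dx}{2\pi i}$ (equation \eqref{Step1goalQ}) and then proves the single clean identity \eqref{Step2goalQ}, $\int_J\frac{\widetilde Q(x)\log|x-t|}{\mathcal R_+^{1/2}(x)}\frac{dx}{2\pi i}=\frac12\mathcal C_{\mathcal S}$, by deforming to the keyhole contour of Figure \ref{FigDeform} around $(-\infty,b_k]$ plus a large circle; your version deforms the original representation of $d_{\log\omega_\epsilon}(\infty)$ outward past the vertical cuts at $t_j\pm i\epsilon$. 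Both produce $\mathcal C_{\mathcal S}$ through the same $\log R$ renormalization, but the paper's order of operations (take $\epsilon\to0$ on $J$ first, then deform) avoids all the $\epsilon$-dependent cut bookkeeping you flag as the "main obstacle". Incidentally, there is no $\log(1/\epsilon)$ divergence anywhere: $\log|x-t_j|$ is integrable against $\widetilde Q/\mathcal R_+^{1/2}$ on $J$, so $d_{\log\omega_\epsilon}(\infty)$ converges as $\epsilon\to0$ and the only renormalization is in $R$.

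The genuine gap is your treatment of the $\beta_j$-terms. The lemma is a self-contained identity whose right-hand side contains no $\beta_j$, so you cannot prove it by asserting that the $\beta_j$-dependent pieces of your contour computation "are already accounted for by other factors in the pre-lemma expression" — either those pieces vanish (and you must show it), or they do not (and then the quantity you have computed is not the left-hand side of \eqref{Finalobs} as the lemma defines it). Concretely, $\log\omega_\epsilon$ in \eqref{logomegaepsilon} contains, for each $j$, the combination $\beta_j\big[\log(z-(t_j-i\epsilon))-\log(z-(t_j+i\epsilon))\big]-\pi i\beta_j$, which on $J$ in the limit $\epsilon\to0$ is the step function $\log\omega_{\beta_j}(x)=\pm\pi i\beta_j$; its pairing with $w_\infty(\lambda_+)=-\widetilde Q(\lambda)/\mathcal R_+^{1/2}(\lambda)$ produces a difference of integrals over $J\cap(-\infty,t_j)$ and $J\cap(t_j,\infty)$ which must be analyzed, not deferred. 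The paper's proof disposes of this by substituting only $\log\omega(x)=\sum_j\alpha_j\log|x-t_j|$ into \eqref{Step1goalQ} at the end of Step 2, so that $d_{\epsilon=0}(\infty)$ contributes $p$ copies of \eqref{Step2goalQ} weighted by $\alpha_j$ and nothing else; your write-up needs an explicit statement and justification of what happens to the jump part at this point, since as written your computation would yield an additional $t_j$-dependent term proportional to $\beta_j$ on the left-hand side, contradicting the $\beta$-independence of the right-hand side.
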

where $\widetilde Q$ is the unique monic polynomial of degree $k-1$ satisfying \eqref{intQhat} and $\mathcal C_{\mathcal S}$ was defined in \eqref{defCS}.
\begin{proof}

\noindent \underline{Step 1:} We first prove that
\begin{equation} \label{Step1goalQ}2d_f(\infty)+d_{\epsilon=0}(\infty)=-\int_J\frac{\widetilde{Q}(x)(2f(x)+\log \omega(x))}{\mathcal R_+^{1/2}(x)}\frac{dx}{2\pi i}. \end{equation}
Recall the definition of $d_f$ from \eqref{defdg} and note that
\begin{equation*}-\mathcal R^{1/2}(\lambda)w_\infty(\lambda)\end{equation*}
is a monic polynomial of degree $k-1$, which follows from the fact that  $w_\infty(\lambda)$ is analytic on $\mathbb C\setminus J$
 and from \eqref{polewlambda2}, \eqref{jumpswlambda1}. Secondly we have \eqref{zerointgap}, from which it follows that 
\begin{equation}
\int_{b_j}^{a_{j+1}}\frac{\mathcal R^{1/2}(\lambda)w_\infty(\lambda)}{\mathcal R^{1/2}(\lambda)}d\lambda=0
\end{equation}
for $j=1,\dots,k-1$.
 The fact that $\widetilde{Q}$ exists and is unique, is identical (up to replacing $\mathcal R(x)$ by $\frac{1}{\mathcal R(x)}$) to  the proof of Proposition \ref{pr:defo1} (a).

\noindent \underline{Step 2:} We prove that
\begin{equation}\label{Step2goalQ}\int_J\frac{\widetilde{Q}(x)\log |x-t|}{\mathcal R_+^{1/2}(x)}\frac{dx}{2\pi i}=\frac{1}{2} \mathcal C_S,
\end{equation}
for any $t\in J$. Substituting \eqref{Step2goalQ} into \eqref{Step1goalQ}, recalling that $\log \omega(x)=\sum_{j=1}^p\alpha_j\log |x-t_j|$, we obtain the lemma.

To prove \eqref{Step2goalQ}, first observe that
\begin{equation} \int_J\frac{\widetilde{Q}(x)\log |x-t|}{\mathcal R_+^{1/2}(x)}\frac{dx}{2\pi i}=\int_J\frac{\widetilde{Q}(x)\log (x-t)_+}{\mathcal R_+^{1/2}(x)}\frac{dx}{4\pi i}-\int_J\frac{\widetilde{Q}(x)\log (x-t)_-}{\mathcal R_-^{1/2}(x)}\frac{dx}{4\pi i},\end{equation}
where  $\arg (x-t) \in (-\pi,\pi)$ in the logarithm. By \eqref{intQhat}, it follows that
\begin{equation}\label{Step2int1} \int_J\frac{\widetilde{Q}(x)\log |x-t|}{\mathcal R_+^{1/2}(x)}\frac{dx}{2\pi i}=\int_{a_1}^{b_k}\frac{\widetilde{Q}(x)\log (x-t)_+}{\mathcal R_+^{1/2}(x)}\frac{dx}{4\pi i}-\int_{a_1}^{b_k}\frac{\widetilde{Q}(x)\log (x-t)_-}{\mathcal R_-^{1/2}(x)}\frac{dx}{4\pi i}.\end{equation}
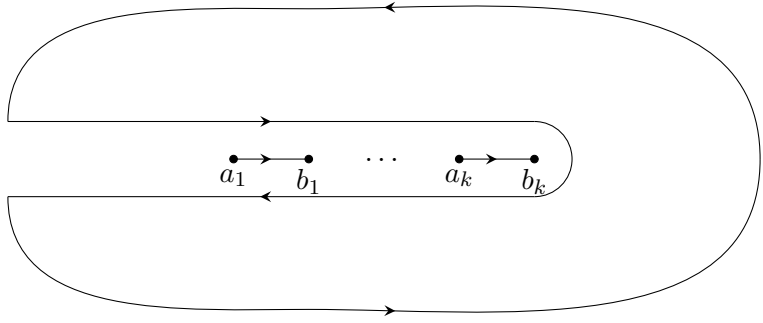
\begin{figure}
\begin{center}
\begin{tikzpicture}
\node [below] at (-1,0) {$a_1$};
\node [below] at (0,0) {$b_1$};
\node [below] at (2,0) {$a_k$};
\node [below] at (3,0) {$b_k$};
\node at (1,0) {$\dots$};

\draw [fill] (-1,0) circle [radius=0.05];
\draw [fill] (0,0) circle [radius=0.05];
\draw [fill] (2,0) circle [radius=0.05];
\draw [fill] (3,0) circle [radius=0.05];

\draw[decoration={markings, mark=at position 0.5 with {\arrow[thick]{<}}},
        postaction={decorate}]  (-4,0.5) to [out=90,in=180] (0,2) to [out=0,in=90] (6,0);

\draw[decoration={markings, mark=at position 0.5 with {\arrow[thick]{>}}},
        postaction={decorate}]  (-4,-0.5) to [out=-90,in=180] (0,-2) to [out=0,in=-90] (6,0);

\draw (3,0.5) to [out=0,in=90] (3.5,0) to [out=-90,in=0] (3,-0.5);

\draw[decoration={markings, mark=at position 0.5 with {\arrow[thick]{>}}},
        postaction={decorate}]  (-1,0)--(0,0);

\draw[decoration={markings, mark=at position 0.5 with {\arrow[thick]{>}}},
        postaction={decorate}]  (2,0)--(3,0);

\draw[decoration={markings, mark=at position 0.5 with {\arrow[thick]{>}}},
        postaction={decorate}]  (-4,0.5)--(3,0.5);

\draw[decoration={markings, mark=at position 0.5 with {\arrow[thick]{<}}},
        postaction={decorate}]  (-4,-0.5)--(3,-0.5);
\end{tikzpicture} 
\caption{The contour $\gamma_0$.}
\label{FigDeform}
\end{center}
\end{figure}
Let $C_R$ be a circle of radius $R$ oriented counterclockwise, beginning at argument $-\pi$ and ending at argument $\pi$. 
Now let $\gamma_0$ be the contour in Figure \ref{FigDeform}, namely the union of $C_R$ and  an indentation along $(-R,b_k]$ along which the contour follows first the $+$ side from $-R$ to $b_k$ and then the $-$ side from $b_k$ to $-R$.  
 Then by \eqref{Step2int1}
\begin{multline} \label{Step2Int2}0=\int_{\gamma_0} \frac{\widetilde Q(x) \log(x-t)}{\mathcal R^{1/2}(x)}\frac{dx}{4\pi i}=\int_J\frac{\widetilde{Q}(x)\log |x-t|}{\mathcal R_+^{1/2}(x)}\frac{dx}{2\pi i}\\+\int_{-R}^{a_1}\frac{\widetilde{Q}(x)(\log (x-t)_+-\log(x-t)_-)}{\mathcal R^{1/2}(x)}\frac{dx}{4\pi i}+\int_{C_R}\frac{\log(x-t)\widetilde Q(x)}{\mathcal R^{1/2}(x)}\frac{dx}{4\pi i}.
\end{multline}
 As $R\to \infty$,
\begin{equation} \int_{C_R}\frac{\log(x-t)\widetilde Q(x)}{\mathcal R^{1/2}(x)}\frac{dx}{4\pi i}= \frac{\log R}{2}+o(1), \end{equation}
and thus
\begin{equation} \int_{-R}^{a_1}\frac{\widetilde{Q}(x)(\log (x-t)_+-\log(x-t)_-)}{\mathcal R^{1/2}(x)}\frac{dx}{4\pi i}+\int_{C_R}\frac{\log(x-t)\widetilde Q(x)}{\mathcal R^{1/2}(x)}\frac{dx}{4\pi i}\to -\frac{1}{2}\mathcal C_{\mathcal S},
\end{equation}
as $R\to \infty$. Substituting this into \eqref{Step2Int2} in the limit $R\to \infty$, we obtain \eqref{Step2goalQ}.
\end{proof}

\section{Partition function asymptotics for the \texorpdfstring{$k$}{k}-cut Chebyshev potential}\label{sec:special}

Recall that Chebyshev polynomial of the first kind $T_0(x)=1$ and $T_k(x)=2^{k-1}x^k+\dots$ for $k=1,2,\dots$  is the unique polynomial of degree $k$ satisfying $T_k(\cos \theta)=\cos k\theta$, and that we denoted the Chebyshev potential by $V_0(x)=\frac{2\sigma}{k}T_k(x)^2$. To emphasize the $k$ dependency we will in this section denote $V_{0,k}=V_0$.

In this section we obtain asymptotics of $H_N\left(e^{-NV_0}\right)$ as $N\to \infty$, and our main goal is to prove \eqref{LimCheby}. We do this in three parts. First, in Section \ref{Secperiodic}, we obtain general and exact results for polynomials that are orthogonal with respect to a ``Chebyshev-type" weight. Secondly, in Section \ref{SecasymCheb}, we apply this general result to our specific situation by relying on Theorem \ref{th:smoothasy}. Finally, in Section \ref{Sec:limsigma}, we take the limit $\sigma \downarrow 1$.

\subsection{Results for Chebyshev-type orthogonal polynomials}\label{Secperiodic}
In this section, we consider a ``Chebyshev-type"  weight. The particular structure of these weights allow us, in Lemma \ref{le:op2gue} below, to obtain exact results.

 An analogue of Lemma \ref{le:op2gue}, valid for orthogonal polynomials on the unit circle, was proven by \cite{Baik}, see also \cite[eq (1.61)]{Bthesis}. Lemma \ref{le:op2gue} and the analogue in \cite{Baik,Bthesis} are related by the connection between  orthogonal polynomials on the unit circle and orthogonal polynomials on the interval $[-1,1]$ obtained in \cite{DIK}. Thus, although we prove Lemma \ref{le:op2gue} directly, one could alternatively obtain it by combining results from \cite{Baik,Bthesis} and \cite{DIK}.

Denote the Chebyshev polynomial of the second kind by $U_k(x)=2^kx^k+\dots$, which satisfies $U_k(\cos \theta)=\frac{\sin((k+1)\theta)}{\sin \theta}$, and recall that $T_k'(x)=kU_{k-1}(x)$.

\begin{lemma}\label{le:op2gue}
Let $w_1$ be a non-negative, even function on $[-1,1]$, and for $k\geq 1$ define 
\begin{equation}\nonumber
w_k(x)=|U_{k-1}(x)|w_1(T_k(x)).
\end{equation}
Let $P_j^{(k)}$ denote the monic orthogonal polynomials associated with the weight $w_k$ on $[-1,1]$,  
and let $\kappa_j^{(k)} > 0$ satisfy 
\begin{equation}\label{chebkappas}
\int_{-1}^1 P_j^{(k)}(x)^2 w_{k}(x) dx = \left(\kappa_j^{(k)}\right)^{-2}. \end{equation}
Denote the $N \times N$ Hankel determinant associated with the weight $w_k$ by $H_N(w_k)$, as defined in \eqref{eq:HNnu}.
 Then
\begin{itemize}
\item[(a)] The monic orthogonal polynomials satisfy
\begin{equation}
P_{nk}^{(k)}(x) =2^{-n(k-1)} P_n^{(1)}(T_k(x)), \qquad \mbox{for any } k \geq 1, \; n \geq 0, \label{orthok0} 
\end{equation}
and if $r=1,2,\dots,k-1$, then 
\begin{equation}\label{orthok}
P_{nk+r}^{(k)}(x) =2^{-n(k-1)-r+1}\frac{U_{r-1}(x) P_{n+1}^{(1)}(T_k(x))+\frac{ P_{n+1}^{(1)}(1)}{ P_n^{(1)}(1)}U_{k-r-1}(x) P_n^{(1)}(T_k(x))}{U_{k-1}(x)}. \end{equation}

	\item[(b)] For any $k \geq 1$ and $n \geq 0$, $\kappa_{nk}^{(k)}$ satisfies
	\begin{equation}\nonumber \left(\kappa_{nk}^{(k)}\right)^{-2}=\left(\kappa_n^{(1)}\right)^{-2} 2^{-2n(k-1)},
	\end{equation}
and if $r=1,2,\dots, k-1$, then
\begin{equation}\nonumber \left(\kappa_{nk+r}^{(k)}\right)^{-2}=\left(\kappa_n^{(1)}\right)^{-2} 2^{-2n(k-1) - 2r + 1} \frac{P_{n+1}^{(1)}(1) }{P_n^{(1)}(1) }.\end{equation}
\item[(c)] For any $k \geq 1$ and $n \geq 0$, the Hankel determinant satisfies
\begin{equation}\nonumber
	H_{nk}(w_{ k}) =\frac{H_n(w_1)^k P_n^{(1)}(1)^{k-1}}{2^{n(k-1)(nk-1)}},
\end{equation}
and if $r=1,2,\dots, k-1$, then
\begin{equation} \nonumber
H_{nk+r}(w_k)=\frac{H_n(w_1)^{k-r}H_{n+1}(w_1)^r P_n^{(1)}(1)^{k-r} P_{n+1}^{(1)}(1)^{r-1}}{2^{n(k-1)(nk-1)+2nr(k-1)+(r-1)^2}}.
\end{equation}
\end{itemize}
\end{lemma}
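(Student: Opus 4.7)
The proof rests on the substitution $x=\cos\theta$. Under this change of variables one has $T_k(\cos\theta)=\cos(k\theta)$, $U_{k-1}(\cos\theta)\sin\theta=\sin(k\theta)$, and the weight transforms as $w_k(\cos\theta)\sin\theta\,d\theta=|\sin(k\theta)|w_1(\cos(k\theta))\,d\theta$. The evenness of $w_1$ forces $P_n^{(1)}$ to have parity $(-1)^n$, so $P_n^{(1)}(T_k(x))$ has parity $(-1)^{nk}$ and $w_k$ is itself even. The key analytic device is the folding decomposition $[0,\pi]=\bigcup_{l=0}^{k-1}[l\pi/k,(l+1)\pi/k]$, realized by $\theta=(l\pi+\psi)/k$ with $\psi\in[0,\pi]$. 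Under this folding $\cos(k\theta)=(-1)^l\cos\psi$, $|\sin(k\theta)|=\sin\psi$, $P_n^{(1)}(\cos(k\theta))=(-1)^{ln}P_n^{(1)}(\cos\psi)$, and $w_1(\cos(k\theta))=w_1(\cos\psi)$. Summation over $l\in\{0,\dots,k-1\}$ then acts as a Fourier filter that isolates those frequencies in $\theta$ which are multiples of $k$.

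For part (a) with $r=0$, the polynomial $2^{-n(k-1)}P_n^{(1)}(T_k(x))$ is manifestly monic of degree $nk$. To verify orthogonality against $\cos(s\theta)$ for $s<nk$, I write $s=m'k+r'$ with $0\le r'<k$ and apply the folding. When $r'=0$ the inner $l$-sum collapses to a constant multiple of $T_{m'}(\cos\psi)$, of degree $m'<n$ in $\cos\psi$, so the $\psi$-integral vanishes by orthogonality of $P_n^{(1)}$ with respect to $w_1$. When $r'\ne 0$, the $l$-sum is a finite geometric series with ratio $\zeta=(-1)^{n+m'}e^{ir'\pi/k}$ satisfying $\zeta\ne1$ and $\zeta^k=\pm1$; if $\zeta^k=1$ the sum vanishes directly, while the remaining case $\zeta^k=-1$ is exactly the case $j+nk$ odd in which the original $x$-integrand has odd parity and the integral vanishes by the evenness of $w_k$. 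For part (a) with $1\le r\le k-1$, I first verify that the proposed numerator vanishes at every zero $x_j=\cos(j\pi/k)$ of $U_{k-1}$, using $T_k(x_j)=(-1)^j$, the parity identity $P_{n+1}^{(1)}(\pm1)=(\pm1)^{n+1}P_{n+1}^{(1)}(1)$, and $\sin((k-r)j\pi/k)=(-1)^{j+1}\sin(rj\pi/k)$, the last giving $U_{k-r-1}(x_j)=-(-1)^j U_{r-1}(x_j)$. The constant $P_{n+1}^{(1)}(1)/P_n^{(1)}(1)$ is then forced as the unique choice producing divisibility by $U_{k-1}$, and a direct leading-coefficient computation gives the prefactor $2^{-n(k-1)-r+1}$ and hence monicity. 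Orthogonality against $\cos(s\theta)$ for $s<nk+r$ is checked by writing $P_{nk+r}^{(k)}(\cos\theta)=[\sin(r\theta)P_{n+1}^{(1)}(\cos k\theta)+C\sin((k-r)\theta)P_n^{(1)}(\cos k\theta)]/\sin(k\theta)$; integrating against the weight $|\sin(k\theta)|w_1(\cos k\theta)\,d\theta$ produces a $\mathrm{sign}(\sin k\theta)$ factor, after which the product-to-sum expansion of $\sin(r\theta)\cos(s\theta)$ and $\sin((k-r)\theta)\cos(s\theta)$ followed by the folding reduces everything to $\psi$-integrals of $P_n^{(1)}$ or $P_{n+1}^{(1)}$ against polynomials of degree $\le n$ in $\cos\psi$, with the specific value $C=P_{n+1}^{(1)}(1)/P_n^{(1)}(1)$ precisely cancelling the single remaining resonant Fourier mode.

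For part (b), I compute $(\kappa_{nk+r}^{(k)})^{-2}=\int_{-1}^{1} (P_{nk+r}^{(k)}(x))^2 w_k(x)\,dx$ by the same folding machinery. Squaring the numerator and applying $\sin^2(r\theta)=\tfrac12(1-\cos(2r\theta))$ to the diagonal terms, the filter kills the cross term up to $\int P_n^{(1)}P_{n+1}^{(1)}w_1=0$, and leaves a combination of $(\kappa_n^{(1)})^{-2}$ and $(\kappa_{n+1}^{(1)})^{-2}$; tracking constants, and using once more $C=P_{n+1}^{(1)}(1)/P_n^{(1)}(1)$, gives the stated closed forms. Part (c) then follows immediately from $H_N(w_k)=\prod_{j=0}^{N-1}(\kappa_j^{(k)})^{-2}$ by splitting the product into $n$ full blocks of $k$ factors each plus a leftover of $r$ factors, collecting the powers of $2$, and telescoping the resulting powers of $P_n^{(1)}(1)$. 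The main technical obstacle is the orthogonality verification in part (a) for $r\ne 0$, where the interplay between the product-to-sum expansion, the $l$-folding, and the $\mathrm{sign}(\sin k\theta)$ factor must be carefully tracked; the guiding principle is that the same value $C=P_{n+1}^{(1)}(1)/P_n^{(1)}(1)$ forced by divisibility in the definition of $P_{nk+r}^{(k)}$ also kills the resonant Fourier mode at $s=nk+r$ in the orthogonality test, and once this observation is made the remaining computations are routine.
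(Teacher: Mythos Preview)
Your overall approach—the substitution $x=\cos\theta$, the folding $\theta=(l\pi+\psi)/k$, and the geometric-series/parity filter on the $l$-sum—matches the paper's proof essentially step by step (the paper tests against the basis $P_{n'}^{(1)}(T_k(x))\,T_{r'}(x)$ rather than $T_s(x)$, a cosmetic difference). One mischaracterization: in part~(a) for $r\ne 0$, the constant $C=P_{n+1}^{(1)}(1)/P_n^{(1)}(1)$ does \emph{not} cancel any resonant mode in the orthogonality proof. If you actually carry out the product-to-sum and folding, each surviving resonant $\psi$-integral is of the form $\int_{-1}^1 U_m P_{n+1}^{(1)}w_1$ with $m\le n$ or $\int_{-1}^1 U_m P_n^{(1)}w_1$ with $m\le n-1$, and vanishes by orthogonality of $P_{n+1}^{(1)}$ or $P_n^{(1)}$ individually. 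The value of $C$ is forced only by divisibility by $U_{k-1}$, not by orthogonality.

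More seriously, your route to part~(b) by direct squaring does not work as stated. After squaring and folding, the factor $1/|\sin k\theta|$ becomes $1/\sin\psi$, and the terms that survive the $l$-sum give
\[
\int_{-1}^1\frac{1}{1-y^2}\Bigl[\tfrac12\bigl(P_{n+1}^{(1)}(y)\bigr)^2-Cy\,P_{n+1}^{(1)}(y)P_n^{(1)}(y)+\tfrac{C^2}{2}\bigl(P_n^{(1)}(y)\bigr)^2\Bigr]w_1(y)\,dy,
\]
not a linear combination of $(\kappa_n^{(1)})^{-2}$ and $(\kappa_{n+1}^{(1)})^{-2}$; the extra $1/(1-y^2)$ obstructs the reduction you claim. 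The paper avoids this by computing the norm as $\int P_{nk+r}^{(k)}\,q\,w_k$ with the monic test polynomial $q(x)=2^{-n(k-1)-r+1}P_n^{(1)}(T_k(x))T_r(x)$; since only one factor then carries the $1/\sin(k\theta)$, it is absorbed by $|\sin(k\theta)|$ in the weight, leaving a $\mathrm{sign}(\sin k\theta)$ but no denominator, and the folding reduces cleanly to $\tfrac{C}{2}(\kappa_n^{(1)})^{-2}$. You should adopt this pairing in place of the direct square.
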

\begin{proof} Part (c) of the lemma follows in a straightforward manner by substituting part (b) of the lemma into the standard formula
\begin{equation}\label{repHankelwk}H_N(w_k)=\prod_{j=0}^{N-1}\left(\kappa_j^{(k)}\right)^{-2},\end{equation} 
and relying on the fact that $P_0^{(1)}(1)=1$. 

We structure the proof of parts (a) and (b) of the lemma as follows. First we prove that the functions \eqref{orthok0} and \eqref{orthok} are monic polynomials, secondly we prove that \eqref{orthok0} are the orthogonal polynomials of degree $nk$, thirdly we prove that \eqref{orthok} are the orthogonal polynomials of degree $nk+r$ for $r=1,\dots,k-1$, and finally we prove  part (b) of the lemma.

\underline{Step 1.} By the fact that the leading coefficient of $T_k$ and $U_{k-1}$ is $2^{k-1}$, it follows that \eqref{orthok0} is a monic polynomial, and that if \eqref{orthok} is a polynomial, then it is monic.

We prove that the functions defined in \eqref{orthok} indeed are polynomials. If $x_j=\cos( j\pi/k)$ is the $j$-th root of $U_{k-1}(x)$ for $j=1,\dots,k-1$, then 
\begin{equation*}
U_{k-r-1}(x_j)=\frac{\sin(k-r)j\pi/k}{\sin(j\pi/k)}=(-1)^{j+1}\frac{\sin \pi rj/k}{\sin \pi j/k}=(-1)^{j+1}U_{r-1}(x_j),
\end{equation*}
and furthermore $T_k(x_j)=(-1)^j$, and thus the numerator in the fraction of the right-hand side of \eqref{orthok} at $x=x_j$ is
\begin{equation}\label{zero1}
U_{r-1}(x_j)P_{n+1}^{(1)}((-1)^j)\left(1+(-1)^{j+1}\frac{P_{n+1}^{(1)}(1)P_n^{(1)}((-1)^j)}{P_n^{(1)}(1)P_{n+1}^{(1)}((-1)^j)}\right),
\end{equation}
which is zero, since the fact that  $w_1$ is even implies that  $P_{2n}^{(1)}$ is an even polynomial and $P_{2n+1}^{(1)}$ is an odd polynomial. Thus it follows that \eqref{orthok} is a polynomial.

\underline{Step 2.}
Now we verify that  \eqref{orthok0} satisfies the condition of orthogonality. We need to prove that if $p_j$ is a polynomial of degree $j$, then
\begin{equation}\label{int1instead}
\int_{-1}^1P_n^{(1)}(T_k(x))p_j(x)w_k(x)dx=0
\end{equation}
for any $j=0,\dots,nk-1$. 
Let $n'<n$ and $r'=0,1,\dots,k-1$, and consider  
\begin{equation} \label{int1}
\int_{-1}^1P_n^{(1)}(T_k(x))\left[P_{n'}^{(1)}(T_k(x))T_{r'}(x)\right]w_k(x)dx.
\end{equation}
Since $\{P_{n'}^{(1)}(T_k(x))T_{r'}(x)\}_{n<n', \, r'<k}$ forms a basis for the polynomials of degree less than $nk$, it is sufficient to show that
 \eqref{int1} is zero to prove  \eqref{int1instead}.
If $nk$ and $n'k+r'$ do not have the same parity, then the change of variables $x \to -x$, together with $T_{k}(-x)=(-1)^{k}T_{k}(-x)$ and $P_{n}^{(1)}(-x)=(-1)^{n}P_{n}^{(1)}(x)$ shows that the integral \eqref{int1} is zero. Let us now focus on the case where $nk$ and $n'k+r'$ have the same parity. Letting $x=\cos \phi$, \eqref{int1} becomes
\begin{equation*}
\int_0^\pi P_n^{(1)}(\cos k\phi)P_{n'}^{(1)}(\cos k\phi)\cos(r'\phi)|\sin k\phi| \, w_1(\cos k\phi)d\phi.
\end{equation*}
 Thus we obtain
 \begin{multline}\nonumber
 \sum_{s=0}^{k-1}\int_0^{\pi/k} P_n^{(1)}(\cos (k\theta+s\pi))P_{n'}^{(1)}(\cos (k\theta+s\pi))\\ \times \cos\left(r'\theta+\frac{r's\pi}{k}\right)\left|\sin \left(k\theta+s\pi\right)\right| w_1(\cos k\theta)d\theta, 
\end{multline}
which, by the fact that $P_{2n}^{(1)}$ is even and $P_{2n+1}^{(1)}$ is odd, yields
\begin{equation}\label{int2}
\sum_{s=0}^{k-1}(-1)^{s(n+n')}\int_0^{\pi/k} P_n^{(1)}(\cos k\theta)P_{n'}^{(1)}(\cos k\theta)\cos\left(r'\theta+\frac{r's\pi}{k}\right)\sin k \theta \,  w_1(\cos k\theta)d\theta.
\end{equation}
If $r'=0$,
\begin{multline}\nonumber
\int_0^{\pi/k} P_n^{(1)}(\cos k\theta)P_{n'}^{(1)}(\cos k\theta)\sin k \theta \, w_1(\cos k\theta)d\theta
\\=\frac{1}{k}\int_0^\pi P_n^{(1)}(\cos \alpha)P_{n'}^{(1)}(\cos \alpha)\sin \alpha \,w_1(\cos \alpha) d\alpha,
\end{multline}
which is zero by orthogonality of $P_n^{(1)}$ (assuming $n'<n$). If $r'\neq 0$, then
\begin{multline}\nonumber
\sum_{s=0}^{k-1}(-1)^{s(n+n')}\cos\left(r'\theta+\frac{r's\pi}{k}\right) \\
=\frac{1}{2}\sum_{s=0}^{k-1}(-1)^{s(n+n')}\left(e^{i\left(r'\theta+\frac{r's\pi}{k}\right)}+e^{-i\left(r'\theta+\frac{r's\pi}{k}\right)}\right).
\end{multline}
Since
\begin{equation} \label{sum2}
 \sum_{s=0}^{k-1}(-1)^{s(n+n')}e^{i\frac{r's\pi}{k}}=\frac{1-(-1)^{k(n+n')+r'}}{1-e^{\pi i(n+n'+r'/k)}},
\end{equation}
and we assumed that $kn$ and $kn'+r'$ had the same parity, meaning that the right-hand  side of \eqref{sum2} is zero, it follows that \eqref{int2} is zero, which implies that \eqref{int1} is zero, as desired.

\underline{Step 3.}
We now verify that \eqref{orthok} satisfies the condition of orthogonality. Consider 
\begin{multline} \label{int3}
\int_{-1}^1\frac{U_{r-1}(x)P_{n+1}^{(1)}(T_k(x))+\frac{P_{n+1}^{(1)}(1)}{P_n^{(1)}(1)}U_{k-r-1}(x)P_n^{(1)}(T_k(x))}{U_{k-1}(x)} \\
\times P_{n'}^{(1)}(T_k(x))T_{r'}(x)w_k(x)dx.
\end{multline}
We will prove that \eqref{int3} is zero for $n'k+r'<nk+r$ for $r=1,2,\dots,k-1$ and $r'=0,1,\dots,k-1$.
Again, the change of variable $x \mapsto -x$ shows immediately that the above integral is $0$, provided that $n'k+r'$ and $nk+r$ do not have the same parity. We now focus on the case where $n'k+r'$ has the same parity as $nk+r$. Letting $x=\cos \phi$, and summing over $\phi=\frac{\pi  s}{k}+\theta$ for $s=0,1,2,\dots,k-1$ (with $\theta\in(0,\pi/k)$), we obtain
\begin{multline*}
\sum_{s=0}^{k-1}\int_0^{\pi/k} \Bigg(\sin \left(r\theta+\frac{\pi rs}{k}\right)P_{n+1}^{(1)}(\cos(k\theta+\pi s)) \\
+\frac{P_{n+1}^{(1)}(1)}{P_n^{(1)}(1)} \sin\left((k-r)\theta+\frac{ \pi (k-r)s}{k}\right)P_n^{(1)}(\cos(k\theta+\pi s))\Bigg)\\ \times P_{n'}^{(1)}(\cos(k\theta+s\pi))\cos\left(r' \theta+\frac{\pi sr'}{k}\right)\frac{|\sin (k\theta+\pi s)|}{\sin(k\theta+\pi s)}w_1(\cos k\theta)d\theta.
\end{multline*}
Since $P_{2n}^{(1)}$ are even functions and $P_{2n+1}^{(1)}$ are odd functions, we obtain that \eqref{int3} is given by
\begin{multline}\label{int4}
\sum_{s=0}^{k-1}(-1)^{s(n+n')}\int_0^{\pi/k} \Bigg(\sin \left(r\theta+\frac{\pi rs}{k}\right)P_{n+1}^{(1)}(\cos k\theta)\\
-\frac{P_{n+1}^{(1)}(1)}{P_n^{(1)}(1)} \sin\left((r-k)\theta+\frac{ \pi rs}{k}\right) P_n^{(1)}(\cos k\theta)\Bigg)\\ \times P_{n'}^{(1)}(\cos k\theta)\cos\left(r' \theta+\frac{\pi sr'}{k}\right)w_1(\cos k\theta)d\theta.
\end{multline}

Write
\begin{multline}\label{contrib1}
\sin \left(r\theta+\frac{\pi rs}{k}\right)\cos\left(r' \theta+\frac{\pi sr'}{k}\right) \\
=\frac{1}{2}\Im\left( e^{i(r+r')(\theta+\pi  s/k)}+e^{i(r-r')(\theta+\pi s /k)}\right)
\end{multline}
and
\begin{multline}\label{contrib2}
\sin \left((r-k)\theta+\frac{\pi rs}{k}\right)\cos\left(r' \theta+\frac{\pi sr'}{k}\right) \\
=\frac{(-1)^s}{2} \Im \left( e^{i(r+r'-k)(\theta+\pi  s/k)}+e^{i(r-r'-k)(\theta+\pi s /k)}\right).
\end{multline}
Since we assume that $nk+r$ and $n'k+r'$ have the same parity, it follows that
\begin{equation}\label{2 sums}
\sum_{s=0}^{k-1}(-1)^{s(n+n')}e^{\frac{\pi i s}{k}(r\pm r')}=0,
\end{equation}
for $r\pm r' \neq 0\mod k$. 
Thus the first terms on the RHS of \eqref{contrib1} and \eqref{contrib2} do not make any contribution to \eqref{int4} for $r+r'\neq k$, and the same holds for the second terms but for $r-r'\neq 0$. In order to conclude that \eqref{int4} (and hence \eqref{int3}) is always zero, we inspect the remaining cases not covered by the geometric sum argument \eqref{2 sums}. 

Let us start with $\Im e^{i(r-r')(\theta+\pi s/k)}$ on the RHS of \eqref{contrib1}. We only need to check the case $r - r' = 0$, but $\Im e^{i(r-r')(\theta+\pi s/k)}$ is trivially zero. The same argument holds for the term $\Im e^{i(r+r'-k)(\theta+\pi s/k)}$ on the RHS of \eqref{contrib2} but with $r + r' = k$.

Now consider the first term on the RHS of \eqref{contrib1} with $r+r' = k$. The contribution to \eqref{int4} from $\Im e^{i(r+r')(\theta+\pi s/k)}$ involves integrals of the form
\begin{equation*}
\int_0^{\pi/k} \sin \left(r\theta+r'\theta \right)P_{n+1}^{(1)}(\cos k\theta)P_{n'}^{(1)}(\cos k\theta)w_1(\cos k\theta) d\theta,
\end{equation*}
which is zero by orthogonality of the polynomials $P_n^{(1)}$ since $n+1 > n'$. Similarly, when $r - r' = 0$ the contribution from $\Im e^{i(r-r')(\theta+\pi s /k)}$ on the RHS of \eqref{contrib2} leads to integrals of the form
\begin{equation}\label{int5}
\frac{1}{2}\sum_{s=0}^{k-1}(-1)^{s(n+n')} \int_0^{\pi/k} \frac{P_{n+1}^{(1)}(1)}{P_n^{(1)}(1)} \sin k\theta \,  P_n^{(1)}(\cos k\theta)P_{n'}^{(1)}(\cos k\theta)w_1(\cos k\theta) d\theta.
\end{equation}

\noindent which is equal to zero again thanks to orthogonality (since we assumed at the beginning that $nk+r > n'k + r'$, which implies $n > n'$ when $r=r'$).  Thus the polynomials \eqref{orthok} are orthogonal.

\underline{Step 4.} We prove part (b) of the lemma.

When $r=0$, it follows by the definition \eqref{orthok0} and by taking the successive changes of variables $x=\cos \theta$, $\theta = \frac{\pi s}{k}+\alpha$ ($s=0,\ldots,k-1$) and $y = \cos k\alpha$, that
\begin{equation*}
\int_{-1}^1P_{nk}^{(k)}(x)^2w_k(x)dx=2^{-2n(k-1)}\int_{-1}^1P_n^{(1)}(y)^2w_1(y)dy,
\end{equation*}
from which the first claim in part (b) follows by \eqref{chebkappas}.

For $r\neq0$, we observe that all the arguments leading up to \eqref{int5} hold also for $n=n'$ and $r=r'$, and thus \eqref{int3} is given by \eqref{int5} which is equal to
\begin{equation*}\frac{P_{n+1}^{(1)}(1)}{2P_n^{(1)}(1)}\int_{-1}^1P^{(1)}_n(x)^2w_1(x)dx.\end{equation*}
 It follows by the orthogonality of $P_{nk+r}^{(k)}$ and the fact that $T_k$ has leading coefficient $2^{k-1}$, that
\begin{equation}\nonumber
\int_{-1}^1 P_{nk+r}^{(k)}(x)^2w_k(x)dx=2^{-2n(k-1)-2r+1}\frac{P_{n+1}^{(1)}(1)}{P_n^{(1)}(1)}\int_{-1}^1P_n^{(1)}(x)^2w_1(x)dx,
\end{equation}
which proves the second claim in part (b) by \eqref{chebkappas}.
\end{proof}

\subsection{Asymptotics of  $H_N(e^{-NV_0})$} \label{SecasymCheb}
In this Section we obtain asymptotics of $H_N(e^{-NV_0})$ as $N\to \infty$ for fixed $\sigma>1$, where we recall that $V_0(x)=V_{0,k}(x)=\frac{2\sigma}{k}T_k(x)^2$. An analogue of this result for Toeplitz determinants can be found in \cite{Baik}, \cite{Marchal}, and \cite[eq (1.61)]{Bthesis}.

In Section \ref{sec:pfasy}, we proved that $V_0$ is $k$-cut regular for $\sigma>1$, and that the equilibrium measure associated to $V_0$ is supported on 
 the set $J_{0}=\bigcup_{j=1}^k [a_j(0),b_j(0)]$, where $a_j(0),b_j(0)\in(-1,1)$ are the (ordered) zeros of $\sigma T_k(x)^2-1$, and on $J_{0}$ it is given by
\begin{equation}\label{eqlibCheby}
d\mu_{V_0}(x)=\frac{2\sigma}{\pi ki}T_k'(x)\left(T_k(x)^2-1/\sigma\right)^{1/2}_+dx
\end{equation}
where $\left(T_k(x)^2-1/\sigma\right)^{1/2}$ is analytic on $\mathbb C\setminus J_{0}$ and  behaves like $ 2^{k-1}x^k$ as $x\to +\infty$.
 We will obtain asymptotics for the Hankel determinant $H_N\left(e^{-NV_0}\right)$ as $N\to \infty$, by expressing $H_N\left(e^{-NV_0}\right)$ in terms of the leading coefficients of the (rescaled) Hermite polynomials.

For $\sigma>1$,  let $1(x)$ be an even non-negative  H\"older continuous function on $\mathbb R$ satisfying the properties that $1(x)=0$ for all $x$ in an  open neighbourhood of $\mathbb R\setminus (-1,1)$, and $1(x)=1$ in an open neighbourhood of $\{x: \, x^2\leq 1/\sigma\}$. For definiteness, we define $1(x)$ explicitly as follows. Given $\sigma_1>1$ we define $1(x)$ for $\sigma>\sigma_1$ as follows:
\begin{equation} \label{def1b}
1(x)=\begin{cases} 1 & \textrm{for } x^2< 1/\sigma_1+(1-1/\sigma_1)/3\\
0& \textrm{for }x^2>1-(1-1/\sigma_1)/3,\\
1-\frac{x^2-1/\sigma_1-(1-1/\sigma_1)/3}{(1-1/\sigma_1)/3}& \textrm{for } \frac{1}{\sigma_1} + \frac{1-1/\sigma_1}{3} \leq x^2 \leq 1- \frac{1-1/\sigma_1}{3}. \end{cases}
\end{equation}
 Fix $k\geq 1$, and define 
\begin{equation}\label{w1wk}w_1(x)=1(x) \frac{e^{-2n\sigma x^2}}{\sqrt{1-x^2}}, \quad w_{k}(x) =|U_{k-1}(x)|w_1(T_k(x))=1(T_k(x))\frac{e^{-nkV_{0,k}(x)}}{\sqrt{1-x^2}}.\end{equation}
Then the conditions of Lemma \ref{le:op2gue} hold, and in particular we will utilize part (c) of the lemma, which we will combine with Theorem \ref{th:smoothasy}. 
  By Theorem \ref{th:smoothasy}, we have
\begin{equation} \label{Hnk1}\begin{aligned}
H_{nk}(w_k)&=H_{nk}\left(e^{-nkV_{0,k}}\right)(1+\mathcal O(n^{-1}))\\
& \times \frac{\theta(nk \Omega+\Upsilon|\tau)}{\theta(nk\Omega|\tau)}e^{nk\int_{J_{0,k}}f(x)d\mu_{V_{0,k}}(x)} e^{\mathcal Q_{J_{0,k}}(f)},
\\  \mathcal Q_{J_{0,k}}(f)&=  \frac{1}{4}\oint_{\Gamma} \oint_{\widetilde \Gamma}W(z,\lambda)f(z)f(\lambda) \frac{dz}{2\pi i}\frac{d\lambda}{2\pi i},
\end{aligned}\end{equation}
as $n\to \infty$ for fixed $\sigma>1$, 
where $f(x) = -\frac{1}{2}\log (1-x^2)$ for $J_{0,k}=\{x:T_{k}(x)^{2} \leq \frac{1}{\sigma}\}$, and where we place extra emphasis on the $k$-dependence by writing $J_0=J_{0,k}$ and $V_0=V_{0,k}$. We recall that $W$ was defined in \eqref{def:W} for $k=2,3,\dots$  and when $k=1$ we use the formula for $W$ in \eqref{wzonecut}. If $k=1$, the $\theta$ functions should be interpreted as equal to $1$. 
 Formula \eqref{Hnk1} gives asymptotics for $H_{nk}$, and we now give a similar formula for $H_{nk+r}$, with $r=1,\dots,k-1$. Define
\begin{equation*} \widetilde V_{0,k}(x)=\frac{2\widetilde \sigma}{k}T_k(x)^2, \qquad \widetilde \sigma=\left(1+\tfrac{r}{nk}\right)\sigma. \end{equation*}
Let  $\widetilde w_k$ be as $w_k$ defined in \eqref{w1wk}, but with $\sigma$ and $V_{0,k}$ replaced by $\widetilde \sigma $ and $\widetilde V_{0,k}$ for $k=2,3,\dots$. Observe that $(nk+r)V_{0,k}(x)=nk\widetilde V_{0,k}(x)$, so that
\begin{equation*}
 \widetilde w_k(x)=1(T_k(x))\frac{e^{-nk\widetilde V_{0,k}(x)}}{\sqrt{1-x^2}}=1(T_k(x))\frac{e^{-(nk+r) V_{0,k}(x)}}{\sqrt{1-x^2}}. 
\end{equation*}
 By Theorem \ref{th:smoothasy},
\begin{multline} \label{Hnk1b}
H_{nk+r}(\widetilde w_k)=H_{nk+r}\left(e^{-(nk+r)V_{0,k}}\right)\frac{\theta((nk+r) \Omega+\Upsilon|\tau)}{\theta((nk+r)\Omega|\tau)}\\ \times e^{(nk+r)\int_{J_{0,k}}f(x)d\mu_{V_{0,k}}(x)} e^{\mathcal Q_{J_{0,k}}(f)}(1+\mathcal O(n^{-1})),
\end{multline}
as $n\to \infty$.

By Lemma \ref{le:op2gue} (c) and \eqref{Hnk1},
\begin{multline}\label{Hnk1c}
H_{nk}\left(e^{-nkV_{0,k}}\right)=H_n\left(w_1\right)^k\frac{ P_n^{(1)}(1)^{k-1}}{2^{n(k-1)(nk-1)}}\frac{\theta(nk\Omega|\tau)}{\theta(nk \Omega+\Upsilon|\tau)}\\ \times e^{-nk\int_{J_{0,k}}f(x)d\mu_{V_{0,k}}(x)}  e^{-\mathcal Q_{J_{0,k}}(f)}(1+\mathcal O(1/n)),
\end{multline} 
as $n\to \infty$, where $P^{(1)}_n$ are the orthogonal polynomials associated with the weight $w_1$. Similarly, by Lemma \ref{le:op2gue} (c) and \eqref{Hnk1b},
\begin{multline}\label{Hnk1d}
H_{nk+r}\left(e^{-(nk+r)V_{0,k}}\right)=H_n\left(\widetilde w_{1,n}\right)^{k-r}H_{n+1}\left(\widehat w_{1,n+1}\right)^r \\
\frac{ \widetilde P_n^{(1)}(1)^{k-r}\widehat P_{n+1}^{(1)}(1)^{r-1}}{2^{n(k-1)(nk-1)+2nr(k-1)+(r-1)^2}} \frac{\theta((nk+r)\Omega|\tau)}{\theta((nk+r) \Omega+\Upsilon|\tau)} \\ 
\times  e^{-(nk+r)\int_{J_{0,k}}f(x)d\mu_{V_{0,k}}(x)}  e^{-\mathcal Q_{J_{0,k}}(f)}(1+\mathcal O(1/n)),
\end{multline} 
as $n\to \infty$,
where $\widetilde P^{(1)}_n$ and $\widehat P^{(1)}_{n+1}$ are the orthogonal polynomials associated with the weight $\widetilde w_{1,n}$ and $\widehat w_{1,n+1}$ respectively, given by:
\begin{equation*} \begin{aligned}\widetilde w_{1,n}(x)&=\frac{1(x)}{\sqrt{1-x^2}}e^{-2n\widetilde \sigma x^2}, &&\widetilde \sigma=\sigma(1+r/nk),\\
\widehat w_{1,n+1}(x)&=\frac{1(x)}{\sqrt{1-x^2}}e^{-2(n+1)\widehat \sigma x^2}, &&\widehat \sigma=\sigma\left(1+\frac{r/k-1}{n+1}\right).
\end{aligned} \end{equation*}

Observe that $\widetilde w_{1,n}(x)=\widehat w_{1,n+1}(x)$, however we give separate notations because it will be convenient when computing large $n$ asymptotics of $H_n (\widetilde w_{1,n})$ and  of $H_{n+1}(\widehat w_{1,n+1})$.

Observe also that $\widehat \sigma>1$ for $n$ sufficiently large.

Denote 
\begin{equation}\label{defDk} D_{k}(z) = \exp \Big( \frac{1}{2\pi i} \int_{J_{0,k}} f(\lambda)w_{z}(\lambda_{+})d\lambda \Big) \end{equation} and $w_{z}$ is defined in \eqref{defwlambda} for $k=2,3,\dots$ and \eqref{wzonecut} for $k=1$. This definition of $D_{k}$ coincides with the definition \eqref{eq:Ddef} of $D$ (see also \eqref{defdg}) after setting $d_{\epsilon}=0$ and $f(z)=-\frac{1}{2}\log(1-z^2)$.

\begin{lemma}\label{Unif} Let $w_1(x)$ be as defined in \eqref{w1wk} with $1(x)$ as in \eqref{def1b}. Given $1<\sigma_1<\sigma_2$,  the following statements hold.
\begin{itemize}
\item[(a)] As $n\to \infty$,
\begin{multline*}
\log \frac{H_n(w_1)}{H_n\left(e^{-2n\sigma x^2}\right)}=-\frac{\sigma n}{\pi} \int_{-1/\sqrt \sigma}^{1/\sqrt \sigma}\sqrt{1/\sigma-x^2}\log(1-x^2)dx+\mathcal Q_{J_{0,1}}(f)+\mathcal O(n^{-1}),
\end{multline*}
uniformly for $\sigma_1<\sigma<\sigma_2$.
\item[(b)] As $n\to \infty$,
\begin{multline*}P_n^{(1)}(1)=(1+\mathcal O(n^{-1}))\sqrt{\frac{2}{1+\sqrt{1-\sigma^{-1}}}}\left[ \gamma_1(1)+\gamma_1(1)^{-1}\right] \frac{1}{2}D_1(1)^{-1}\\ \exp \left(\frac{\sigma n}{\pi} \int_{-1/\sqrt \sigma}^{1/\sqrt \sigma}\sqrt{1/\sigma-x^2}\log(1-x^2)dx\right).
\end{multline*}
uniformly for $\sigma_1<\sigma<\sigma_2$, where we emphasize that $k=1$ by the notation $\gamma=\gamma_1$ with $\gamma$ defined in \eqref{eq:gamma}.
\end{itemize}
\end{lemma}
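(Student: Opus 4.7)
Both parts are applications of the Riemann--Hilbert analysis of Sections \ref{sec:trans}--\ref{sec:snorm} and Theorem \ref{th:smoothasy}, specialized to the one-cut regular potential $V(x) = 2\sigma x^2$ and the smooth perturbation $F(x) = 1(x)/\sqrt{1-x^2}$. Note $V$ is one-cut regular with $d\mu_V(x) = \frac{2\sigma}{\pi}\sqrt{1/\sigma - x^2}\,dx$ on $J_{0,1} = [-1/\sqrt{\sigma},1/\sqrt{\sigma}]$. Since $1(x) = 1$ on a neighborhood of $J_{0,1}$ for all $\sigma \in (\sigma_1,\sigma_2)$, the function $f = \log F = -\tfrac{1}{2}\log(1-x^2)$ is analytic on a neighborhood of $J_{0,1}$, and $F$ has compact support so conditions (a)--(c) preceding Theorem \ref{th:smoothasy} are satisfied. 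Uniformity in $\sigma$ throughout the RH analysis follows because the equilibrium measure, parametrices, and small-norm error bounds depend continuously on $\sigma$ in a compact subset of $(1,\infty)$.

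\textbf{Part (a).} Apply Theorem \ref{th:smoothasy} directly with $k=1$, for which the $\theta$-ratio equals $1$. Taking logarithms gives
\[
\log \frac{H_n(w_1)}{H_n(e^{-2n\sigma x^2})} = n\int_{J_{0,1}} f\, d\mu_V + \mathcal Q_{J_{0,1}}(f) + \mathcal O(n^{-1}),
\]
and evaluating $n\int f\, d\mu_V = -\tfrac{n\sigma}{\pi}\int_{-1/\sqrt\sigma}^{1/\sqrt\sigma}\sqrt{1/\sigma-x^2}\log(1-x^2)\,dx$ matches the claimed formula.

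\textbf{Part (b).} Use $P_n^{(1)}(z) = Y_{11}(z;n,w_1)$ from \eqref{eq:Ydef}. For $z=1$, which lies outside $J_{0,1}$ and outside the lenses and local disks $U_{\pm 1/\sqrt\sigma}$ for small enough $\delta$, chasing the transformations $Y\mapsto T\mapsto S$ and then $S = RM$ gives
\[
P_n^{(1)}(1) = [R(1)M(1)]_{11}\, e^{n g(1)}.
\]
By the small-norm estimate \eqref{smallnorm}, $R(1) = I + \mathcal O(n^{-1})$, so $[R(1)M(1)]_{11} = M_{11}(1)(1+\mathcal O(n^{-1}))$. From \eqref{eq:Pdef} we have $M_{11}(1) = D_\infty N_{\infty,11}(1) D_1(1)^{-1}$, and in the one-cut case $N_{\infty,11}(z) = \tfrac{1}{2}(\gamma_1(z) + \gamma_1(z)^{-1})$. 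For the exponential, symmetry of $d\mu_V$ yields $g(1) = \tfrac{1}{2}\int \log(1-\lambda^2)\, d\mu_V$, which produces the stated exponential factor.

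It remains to identify $D_\infty$ explicitly. Using \eqref{defdg} and the one-cut specialization $w_\infty(\lambda_+) = -1/\mathcal R^{1/2}_+(\lambda)$ from \eqref{repwlambda}, together with $\mathcal R^{1/2}_+(\lambda) = i\sqrt{(1/\sqrt\sigma - \lambda)(\lambda + 1/\sqrt\sigma)}$ on $J_{0,1}$, one obtains
\[
D_\infty = \exp\left(-\frac{1}{4\pi}\int_{-1/\sqrt\sigma}^{1/\sqrt\sigma}\frac{\log(1-\lambda^2)}{\sqrt{1/\sigma - \lambda^2}}\,d\lambda\right).
\]
The substitution $\lambda = \sin(\theta)/\sqrt\sigma$ reduces this to the classical integral $\int_0^{\pi/2}\log(1-\sigma^{-1}\sin^2\theta)\,d\theta = \pi\log\frac{1+\sqrt{1-\sigma^{-1}}}{2}$, giving $D_\infty = \sqrt{2/(1+\sqrt{1-\sigma^{-1}})}$ exactly as claimed. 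The only genuine computational step is this classical integral evaluation; otherwise the proof is a routine application of the machinery already in place, and uniformity in $\sigma$ is inherited from the uniformity of the RH parametrix construction on compact subsets of $(1,\infty)$.
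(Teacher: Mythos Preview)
Your proposal is correct and follows essentially the same route as the paper: both parts are immediate applications of Theorem~\ref{th:smoothasy} and the RH transformations $Y\mapsto T\mapsto S\mapsto R$ in the one-cut setting, with the only nontrivial computation being the explicit value of $D_\infty$. The one minor methodological difference is that the paper evaluates $D_\infty$ by deforming the contour integral \eqref{D1inftyinter} to the branch cuts of $\log(1-\lambda^2)$ and a large circle (Figure~\ref{fig:condef_D1}), whereas you use the real substitution $\lambda=\sigma^{-1/2}\sin\theta$ and the classical identity $\int_0^{\pi/2}\log(1-k^2\sin^2\theta)\,d\theta=\pi\log\tfrac{1+\sqrt{1-k^2}}{2}$; both give the same answer.
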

\begin{proof}
Observe that our Riemann-Hilbert analysis for the weight $w_1$ holds uniformly for $1<\sigma_1<\sigma<\sigma_2$ \footnote{This is because the singularities of $f(x)=-\frac{1}{2}\log (1-x^2)$, and also the singularities of $1(x)$, remain bounded away from the support of $V_{k,0}$. The technicalities are straightforward to verify: there are two local parametrices at $1/\sqrt \sigma$ and $-1/\sqrt \sigma$. Set the radius of each local parametrix to $\min \left \{ \frac{1}{3}( 1-\frac{1}{\sqrt{\sigma_1}}),\frac{2}{3\sqrt{\sigma_2}}
\right\}$. Then we leave it to the reader to verify that condition (c) in Lemma \ref{le:edgerhp} and Lemma \ref{expsmalljump} hold uniformly for $\sigma_1<\sigma<\sigma_2$, and thus \eqref{smallnorm} holds uniformly.}, and 
\begin{align*}
d\mu_{V_{0,1}}(x) = \frac{2\sigma}{\pi}\sqrt{\tfrac{1}{\sigma}-x^2}\,dx, \qquad x \in [-\tfrac{1}{\sqrt{\sigma}},\tfrac{1}{\sqrt{\sigma}}].
\end{align*}
Thus Theorem \ref{th:smoothasy} holds uniformly for $\sigma_1<\sigma<\sigma_2$, from which we obtain part (a) of the lemma.

We now prove part (b). By \eqref{eq:Ydef} we have $P_n^{(1)}(1)=Y_{11}(1)$ (where $Y(\cdot)=Y_{n}(\cdot;w_1)$), and by the definitions of $T$, $S$, and $R$ in \eqref{eq:Tdef}, \eqref{eq:Sdef}, and \eqref{eq:Rdef} respectively, we obtain
\begin{equation*}P_n^{(1)}(1)=(R(1)M(1))_{11}e^{ng(1)}.\end{equation*}
Note that  $d\mu_{V_{0,1}}(x)=d\mu_{V_{0,1}}(-x)$, and thus
\begin{equation*} g(1)=\int_{J_{0,1}} \log(1-x)d\mu_{V_{0,1}}(x)=-\int_{J_{0,1}} f(x)d\mu_{V_{0,1}}(x). 
\end{equation*}

Since $R$ is analytic in a neighbourhood of $1$ (because $w_1=0$ is analytic in a  neighbourhood of $1$), it follows that $R(1)=I+\mathcal O(n^{-1})$ as $n\to \infty$, uniformly for $\sigma_1<\sigma<\sigma_2$. Recalling the definition of $M$ in \eqref{eq:Pdef} (with $k=1$ so that the $\theta$ function is identically $1$), we obtain
\begin{equation}\label{Pn(1)}
P_n^{(1)}(1)=(1+\mathcal O(n^{-1}))D_1(\infty)\left[ \gamma_1(1)+\gamma_1(1)^{-1}\right] \frac{1}{2}D_1(1)^{-1}e^{-n\int_{J_{0,1}} f(x)d\mu_{V_{0,1}}(x)}.
\end{equation}

 Directly from the definition of $D_{1}$, we obtain
\begin{align}
D_1(\infty)&=
\exp \bigg( \lim_{z\to \infty} \frac{\mathcal R^{1/2}(z)}{8\pi i}\oint_\Gamma \frac{\log (1-\lambda^2)}{\mathcal R^{1/2}(\lambda)}\frac{d\lambda}{\lambda-z} \bigg) \nonumber \\
\label{D1inftyinter} & = \exp \bigg(\frac{-1}{8\pi i}\oint_\Gamma \frac{\log(1-\lambda^2) d\lambda}{\left(\lambda^2-1/\sigma\right)^{1/2}} \bigg), \end{align}
where $\Gamma$ is a counter-clockwise oriented curve surrounding $J_{0,1}$, and $\log(1-\lambda^2)$ is real on $(-1,1)$ and has a branch cut on $(-\infty,-1]\cup [1,+\infty)$.
For $\sigma>1$, the following identity holds:
\begin{equation}\label{D1infty}
D_1(\infty)=\sqrt{\frac{2}{1+\sqrt{1-\sigma^{-1}}}}.
\end{equation}

\begin{figure}[h!]

\begin{tikzpicture}
\draw (-4.5, 0) -- (4.5, 0);
\draw[fill=black] (-4,0) node[below left] {$-R$} circle (2pt)
				(4,0) node[below right] {$R$} circle (2pt)
				(-1,0) node[below right] {$-1$} circle (2pt)
				(1,0) node[below left] {$1$} circle (2pt);
\draw[thick, ->, dashed] (-2.5, 0.2) -- (-1, 0.2) (-4, 0.2) -- (-2.5, 0.2);
\draw[thick, ->, dashed] (-2.5, -0.2) -- (-4, -0.2) (-1, -0.2) -- (-2.5, -0.2);

\draw[thick, ->, dashed]  (4, 0.2) -- (2.5, 0.2) (1, 0.2) -- (2.5, 0.2);
\draw[thick, ->, dashed]  (1, -0.2) -- (2.5, -0.2) (4, -0.2) -- (2.5, -0.2);

\draw[->, dashed, thick] (4, 0.2)  arc [radius=4, start angle = 0, end angle = 90];
\draw[dashed, thick] (0, 4.2)  arc [radius=4, start angle = 90, end angle = 180];
\draw[->, dashed, thick] (-4, -0.2)  arc [radius=4, start angle = 180, end angle = 270];
\draw[dashed, thick] (0, -4.2)  arc [radius=4, start angle = 270, end angle = 360];

\draw[->, dashed, thick] (-1, 0.2)  arc [radius=0.2, start angle = 90, end angle = 0];
\draw[dashed, thick] (-1, -0.2)  arc [radius=0.2, start angle = 270, end angle = 360];

\draw[dashed, thick] (0.8, 0)  arc [radius=0.2, start angle = 180, end angle = 90];
\draw[->, dashed, thick] (1, -0.2)  arc [radius=0.2, start angle = 270, end angle = 180];

\end{tikzpicture}
\caption{Contour deformation for the evaluation of $D_1(\infty)$}
\label{fig:condef_D1}
\end{figure}
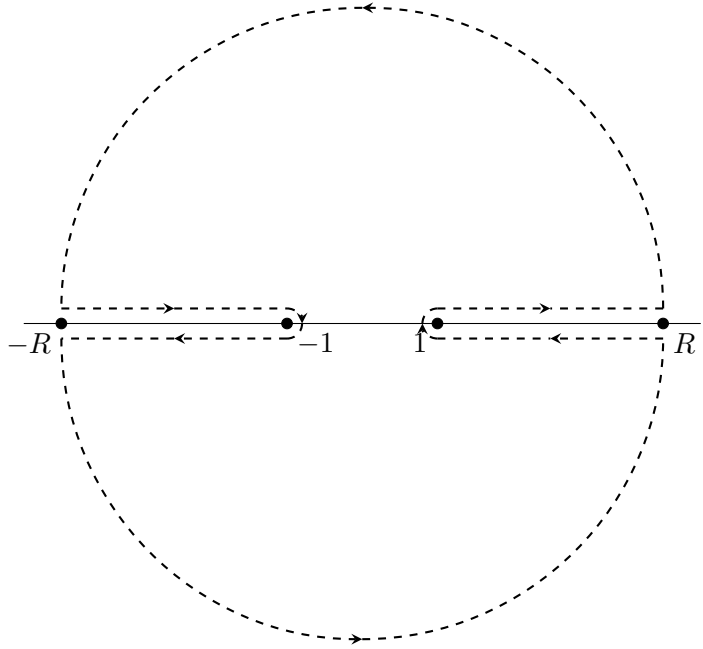

We obtain \eqref{D1infty} as follows. First, we deform the contour $\Gamma$ in \eqref{D1inftyinter} to the one in Figure \ref{fig:condef_D1}, which consists of one half-circle of radius $R > 0$ in each half plane, horizontal line contours right above/below the interval $(-R, -1]$ and $[1, R)$, and arbitrarily small arcs near $\pm 1$ (which are negligible). The contributions from the horizontal contours can be evaluated by
\begin{multline*}
-\frac{1}{8\pi i} \int_{-R}^{-1} \frac{\log(1-\lambda^2)_+-\log (1-\lambda^2)_-}{(\lambda^2 - 1/\sigma)^{1/2}}d\lambda\\
-\frac{1}{8\pi i} \int_{1}^R \frac{\log(1-\lambda^2)_+-\log (1-\lambda^2)_-}{(\lambda^2 - 1/\sigma)^{1/2}}d\lambda = \frac{1}{2} \int_{1}^R \frac{\d\lambda }{(\lambda^2 - 1/\sigma)^{1/2}}\\
= \frac{1}{2} \left[ \log\left(\frac{R + \sqrt{R^2 - 1/\sigma}}{2}\right) + \log \left(\frac{2}{1+\sqrt{1-1/\sigma}}\right)\right].
\end{multline*}

\noindent It is straightforward to verify that the first term in the last equality cancels with the contributions from the two half circles as $R \to \infty$, and the second term gives \eqref{D1infty} after exponentiation.

Substituting \eqref{D1infty} into \eqref{Pn(1)} we obtain part (b) of the lemma.
\end{proof}

By Lemma \ref{Unif} and \eqref{Hnk1c},
\begin{multline}\label{Hnk3}
\frac{H_{nk}\left(e^{-nkV_{0,k}}\right)}{H_n\left(e^{-2\sigma nx^2}\right)^k}=\frac{ 1}{2^{n(k-1)(nk-1)}}\frac{\theta(nk\Omega|\tau)}{\theta(nk \Omega+\Upsilon|\tau)}\left(\frac{2}{1+\sqrt{1-\sigma^{-1}}}\right)^{\frac{k-1}{2}}\\ 
\times \left[ \gamma_1(1)+\gamma_1(1)^{-1}\right]^{k-1} \frac{1}{2^{k-1}}D_1(1)^{-k+1}e^{-nk\int_{J_{0,k}}f(x)d\mu_{V_{0,k}}(x)}  e^{-\mathcal Q_{J_{0,k}}(f)} \\ 
\times \exp \left(-\frac{\sigma n}{\pi} \int_{-1/\sqrt \sigma}^{1/\sqrt \sigma}\sqrt{1/\sigma-x^2}\log(1-x^2)dx+k\mathcal Q_{J_{0,1}}(f)\right)(1+\mathcal O(1/n)),
\end{multline} 
as $n\to \infty$.

Since Lemma \ref{Unif} is uniform in $\sigma_1<\sigma< \sigma_2$, it may also be applied to evaluate the asymptotics of $H_n\left(\widetilde w_1\right)$ and $\widetilde P_n^{(1)}(1)$. In the next calculation, we emphasize the dependence of $\gamma_1(1)$, $D_1(1)$, and $\mathcal Q_{J_{0,1}}(f)$ in $\sigma$ by $\gamma_1(1;\sigma)$, $D_1(1;\sigma)$, and  $\mathcal Q_{J_{0,1}}(f;\sigma)$. As $n\to \infty$,
\begin{equation*} \begin{aligned}\gamma_1(1;\widetilde \sigma),\, \gamma_1(1;\widehat \sigma)&=\gamma_1(1;\sigma)(1+\mathcal O(1/n)),\\
D_1(1;\widetilde \sigma), \, D_1(1;\widehat \sigma)&=D_1(1;\sigma)(1+\mathcal O(1/n)),\\
\mathcal Q_{J_{0,1}}(f;\widetilde\sigma),\, \mathcal Q_{J_{0,1}}(f;\widehat\sigma)&=\mathcal Q_{J_{0,1}}(f;\sigma)(1+\mathcal O(1/n)).
\end{aligned} \end{equation*}
Thus, by Lemma \ref{Unif} and \eqref{Hnk1d},
\begin{multline}\label{Hnk3b}
\frac{H_{nk+r}\left(e^{-(nk+r)V_{0,k}}\right)}{H_n\left(e^{-2\widetilde\sigma nx^2}\right)^{k-r}H_{n+1}\left(e^{-2\widehat \sigma (n+1)x^2}\right)^r}=\frac{ 1}{2^{n(k-1)(nk-1)+2nr(k-1)+(r-1)^2}}
\\ \times  \frac{\theta((nk+r)\Omega|\tau)}{\theta((nk+r) \Omega+\Upsilon|\tau)}\left(\frac{2}{1+\sqrt{1-\sigma^{-1}}}\right)^{\frac{k-1}{2}}\\ 
\times \left[ \gamma_1(1)+\gamma_1(1)^{-1}\right]^{k-1} \frac{1}{2^{k-1}}D_1(1)^{-k+1}e^{-(nk+r)\int_{J_{0,k}}f(x)d\mu_{V_{0,k}}(x)}  e^{-\mathcal Q_{J_{0,k}}(f)} \\ 
\times \exp \left(-\frac{\widehat \sigma (n+1)}{\pi} \int_{- \widehat\sigma^{-1/2}}^{\widehat\sigma^{-1/2}}\sqrt{1/\widehat \sigma-x^2}\log(1-x^2)dx+k\mathcal Q_{J_{0,1}}(f)\right)(1+\mathcal O(1/n)),
\end{multline} 
where all functions are evaluated at $\sigma$ and not $\widetilde \sigma$ or $\widehat \sigma$ (e.g. $\gamma_1(1)=\gamma_1(1;\sigma)$), except where dependence on $\widetilde \sigma$ or $\widehat \sigma$ is explicitly indicated.

\begin{lemma}\label{LemmaellCheby} For $\sigma>1$,
\begin{multline}\nonumber
\int_{J_{0,k}}\log(1-x^2)d\mu_{V_{0,k}}(x)=\frac{2\sigma}{\pi k}\int_{J_{0,k}}\log(1-x^2)|T_k'(x)|\left|T_k(x)^2-1/\sigma\right|^{1/2}dx
\\=-\frac{2}{k}\left(-\sigma+\frac{1}{2}+\frac{1}{2}\log \sigma+k\log 2+\sqrt{\sigma(\sigma-1)}-\log\left(\sqrt{\sigma-1}+\sqrt{\sigma}\right)\right).
\end{multline}

\end{lemma}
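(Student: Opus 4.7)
The plan is to parametrize $J_{0,k}$ by $x=\cos\theta$ with $\theta\in[0,\pi]$, use the Chebyshev identities $T_k(\cos\theta)=\cos(k\theta)$ and $U_{k-1}(\cos\theta)=\sin(k\theta)/\sin\theta$ to rewrite the equilibrium density \eqref{eqlibCheby}, and then exploit the $\pi/k$-periodicity of $\cos(k\theta)$ to reduce the $k$-cut integral to one on a single arc. Concretely, I will split $[0,\pi]$ into the $k$ sub-arcs $I_j=[(j-1)\pi/k,\,j\pi/k]$ and introduce the local coordinate $\phi=k\theta-(j-1)\pi\in[0,\pi]$ on each $I_j$. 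With $\phi_0:=\arccos(1/\sqrt\sigma)$, the support condition $|T_k(x)|\le1/\sqrt\sigma$ becomes $\phi\in[\phi_0,\pi-\phi_0]$, independently of $j$, while $|\sin(k\theta)|=\sin\phi$, $\cos^2(k\theta)=\cos^2\phi$, and $\log(1-x^2)=2\log\sin\theta=2\log\sin((\phi+(j-1)\pi)/k)$ on $I_j$. The whole $j$-dependence of the integrand is therefore concentrated in the last factor.

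The key algebraic input is the classical product identity $\prod_{j=0}^{k-1}\sin(\psi+j\pi/k)=\sin(k\psi)/2^{k-1}$, applied with $\psi=\phi/k$, which collapses the sum to
\[
\sum_{j=1}^{k}\log\sin\!\left(\tfrac{\phi+(j-1)\pi}{k}\right)=\log\sin\phi-(k-1)\log 2.
\]
After substituting $u=\cos\phi$ and using $\int_{-1/\sqrt\sigma}^{1/\sqrt\sigma}\sqrt{1/\sigma-u^2}\,du=\pi/(2\sigma)$, the left-hand side of the lemma will reduce to
\[
\frac{2\sigma}{\pi k}\int_{-1/\sqrt\sigma}^{1/\sqrt\sigma}\!\sqrt{1/\sigma-u^2}\,\log(1-u^2)\,du\;-\;\frac{2(k-1)\log 2}{k}.
\]
All remaining $k$-dependence is now explicit, in agreement with the $1/k$-structure of the claimed answer.

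It remains to evaluate $J(a):=\int_{-a}^{a}\sqrt{a^2-u^2}\,\log(1-u^2)\,du$ at $a=1/\sqrt\sigma\in(0,1)$. The plan is to differentiate under the integral sign: $J'(a)=a\int_{-a}^{a}\log(1-u^2)/\sqrt{a^2-u^2}\,du$, and then to evaluate the inner integral by $u=a\sin t$, which turns it into $2\int_0^{\pi/2}\log(1-a^2\sin^2t)\,dt=2\pi\log((1+\sqrt{1-a^2})/2)$, a well-known closed form. Integrating $J'(a)=2\pi a\log((1+\sqrt{1-a^2})/2)$ by parts against $\pi a^2$ (the substitution $s=\sqrt{1-a^2}$ handles the leftover piece) and using $J(0)=0$ will give
\[
J(a)=\pi a^2\log\!\tfrac{1+\sqrt{1-a^2}}{2}-\tfrac{\pi a^2}{2}+\pi-\pi\sqrt{1-a^2}.
\]
Inserting $a=1/\sqrt\sigma$, using $\sqrt{1-1/\sigma}=\sqrt{\sigma-1}/\sqrt\sigma$, and combining with the $-2(k-1)\log 2/k$ contribution should reproduce the right-hand side of the lemma after straightforward rearrangement. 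No serious obstacle is anticipated: the only non-elementary inputs are the Chebyshev product identity and the standard evaluation of $\int_0^{\pi/2}\log(1-a^2\sin^2 t)\,dt$, and everything else is bookkeeping.
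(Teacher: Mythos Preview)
Your proof is correct and follows a genuinely different route from the paper. The paper evaluates the integral by complex contour deformation: it writes $\oint_{\gamma_0}\log(z+1)\,T_k'(z)(T_k(z)^2-1/\sigma)^{1/2}\,dz=0$ for a contour $\gamma_0$ enclosing $J_{0,k}$, then deforms to the branch cut of $\log(z+1)$ on $(-\infty,-1]$ and to a large circle $C_R$. The cut contribution is handled by the change of variable $y=(-1)^k\sqrt{\sigma}\,T_k(x)$ and the antiderivative of $\sqrt{y^2-1}$, while the circle contribution is computed from the Laurent expansion at infinity via integration by parts; the $k\log 2$ term emerges from a residue at infinity.

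Your argument is entirely real-variable: the Chebyshev parametrization $x=\cos\theta$ and the classical product identity $\prod_{j=0}^{k-1}\sin(\psi+j\pi/k)=2^{1-k}\sin(k\psi)$ collapse the $k$-cut integral to a single-arc integral plus the explicit constant $-2(k-1)k^{-1}\log 2$, after which the remaining one-variable integral is dispatched by differentiating under the integral sign and invoking the standard evaluation $\int_0^{\pi/2}\log(1-a^2\sin^2 t)\,dt=\pi\log\tfrac{1+\sqrt{1-a^2}}{2}$. This is more elementary and makes the $k$-dependence transparent from the outset; the paper's approach, by contrast, is in keeping with the complex-analytic machinery used throughout and avoids appealing to the sine product identity or the closed form for the trigonometric $\log$-integral. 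Both reach the same answer; your final algebraic check (plugging $a=1/\sqrt\sigma$ and simplifying $\log\tfrac{1+\sqrt{1-a^2}}{2}=\log(\sqrt{\sigma}+\sqrt{\sigma-1})-\tfrac12\log\sigma-\log 2$) indeed reproduces the claimed right-hand side.
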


\begin{proof}

Observe that if $\gamma_0$ is as in Figure \ref{FigDeform} and $C_R$ a circle of (large) radius $R$ centered at $0$ with counter-clockwise orientation, then
\begin{multline}\label{eq:zeroint}
0=\frac{2\sigma}{\pi k i}\oint_{\gamma_0} \log (z+1) T_k'(z)\left(T_k(z)^2-1/\sigma\right)^{1/2} dz\\
=2\int_{J_{0,k}} \log(x+1)d\mu_{V_{0,k}}(x)+\frac{4\sigma}{k}\int_{-R}^{-1} T_k'(x)\left(T_k(x)^2-1/\sigma \right)^{1/2} dx\\+\frac{2\sigma}{\pi k i}\oint_{C_R} \log (z+1) T_k'(z)\left(T_k(z)^2-1/\sigma\right)^{1/2} dz,
\end{multline}
with $\log (z+1)$ analytic on $\mathbb C\setminus (-\infty,-1]$ and real for $z>-1$, with $\left(T_k(z)^2-1/\sigma\right)^{1/2}$ analytic on $\mathbb C\setminus J_{0,k}$ and positive for $z>b_k$. We start by evaluating the integral $\int_{-R}^{-1}$ on the right-hand side of \eqref{eq:zeroint}.
Since $T_k(-1)=(-1)^{k}$, and $T_k'(x)\left(T_k(x)^2-1/\sigma\right)^{1/2}$ is negative on $[-R,-1)$, it follows by the change of variable $y=(-1)^k\sqrt{\sigma}T_k(x)$ and the relation
\begin{equation}\frac{1}{2}\left(y\sqrt{y^2-1}-\log\left(\sqrt{y^2-1}+y\right)\right)'=\sqrt{y^2-1},\nonumber 
\end{equation}
that
\begin{multline}
\label{intTkrootR}\int_{-R}^{-1} T_k'(x)\left(T_k(x)^2-1/\sigma \right)^{1/2} dx=-\frac{1}{\sigma }\int_{\sqrt{\sigma}}^{\sqrt{\sigma}|T_k(-R)|}\sqrt{y^2-1}dy\\
=\frac{1}{2\sigma}\Bigg(-\sigma T_k(-R)^2+\frac{1}{2}+k\log R+\frac{1}{2}\log \sigma+k\log 2\\ +\sqrt{\sigma(\sigma-1)} - \log\left(\sqrt{\sigma-1}+\sqrt{\sigma}\right) \Bigg)+o(1),
\end{multline}
as $R\to \infty$.
We now evaluate the integral $\oint_{C_R}$ in \eqref{eq:zeroint}.
As $z\to \infty$ we have $T_k'(z)\sqrt{T_k(z)^2-1/\sigma}=T_k'(z)T_k(z)-\frac{k}{2\sigma z}+\mathcal O\left(z^{-2}\right)$, and it follows that as $R\to \infty$,
\begin{multline}\label{logTkrootint}
\oint_{C_R} \log (z+1) T_k'(z)\sqrt{T_k(z)^2-1/\sigma} dz\\=\oint_{C_R} \log (z+1) T_k'(z)T_k(z)dz-\oint_{C_R}\frac{k \log (z+1) }{2\sigma z}dz+o(1).
\end{multline}
By integration by parts we have
\begin{multline*} 
2\pi i T_{k}(-R)^2=\oint_{C_R}\left(\log(z+1)T_k(z)^2\right)'dz\\ =\oint_{C_R}\frac{T_k(z)^2}{z+1}dz+2\oint_{C_R}\log(z+1)T_k(z)T_k'(z)dz, 
\end{multline*}
and it follows that
\begin{equation}\label{logTkTk'int}
\oint_{C_R}\log (z+1)T_k(z)T_k'(z)dz=\pi i T_k(-R)^2- \pi i T_k(-1)^2. 
\end{equation}
Substituting \eqref{logTkTk'int} into \eqref{logTkrootint} and evaluating the second integral on the right-hand side of \eqref{logTkrootint} in the limit $R\to \infty$ we obtain that 
\begin{multline}\label{logTkrootinfty}
\oint_{C_R} \log (z+1) T_k'(z)\sqrt{T_k(z)^2-1/\sigma} dz \\
=\pi i T_k(-R)^2-\pi i T_k(-1)^2-\frac{k\pi i \log R}{\sigma}+o(1),
\end{multline}
as $R\to \infty$. Substituting \eqref{intTkrootR} and \eqref{logTkrootinfty} into \eqref{eq:zeroint}, taking the limit $R\to \infty$ we have proven the lemma.
\end{proof}

Substituting  the identity in Lemma \ref{LemmaellCheby}  into \eqref{Hnk3} we obtain
\begin{multline}\label{Hnk4}
\frac{H_{nk}\left(e^{-nkV_{0,k}}\right)}{H_n\left(e^{-2n\sigma x^2}\right)^k}=\frac{1}{2^{n^2k(k-1)}2^{k-1}}\frac{\theta(nk\Omega|\tau)}{\theta(nk \Omega+\Upsilon|\tau)}D_1(1)^{-k+1}\\ \times \left(\frac{2}{1+\sqrt{1-\sigma^{-1}}}\right)^{(k-1)/2}\left(\gamma_1(1)+\gamma_1(1)^{-1}\right)^{k-1} e^{-\mathcal Q_{J_{0,k}}(f)} e^{k\mathcal Q_{J_{0,1}}(f)}(1+\mathcal O(1/n)),
\end{multline} 
as $n\to \infty$.

Substituting the identity in Lemma \ref{LemmaellCheby} into \eqref{Hnk3b} we obtain
\begin{multline}\label{Hnk4b}
\frac{H_{nk+r}\left(e^{-(nk+r)V_{0,k}}\right)}{H_n\left(e^{-2\widetilde\sigma nx^2}\right)^{k-r}H_{n+1}\left(e^{-2\widehat \sigma (n+1)x^2}\right)^r}=\frac{ D_1(1)^{-k+1}}{2^{n^2k(k-1)+2nr(k-1)+(r-1)^2}2^{k-1}}
\\ \times  \exp \Bigg[\frac{1}{2}\left(1-\frac{r}{k} \right)\log \sigma +(1-r)\log 2
+\left(\frac{r}{k}-1\right)\log \left(\sqrt{\sigma}+\sqrt{\sigma-1}\right)\Bigg]  \\ 
\times \frac{\theta((nk+r)\Omega|\tau)}{\theta((nk+r) \Omega+\Upsilon|\tau)}\left(\frac{2}{1+\sqrt{1-\sigma^{-1}}}\right)^{\frac{k-1}{2}}
 \left[ \gamma_1(1)+\gamma_1(1)^{-1}\right]^{k-1}  \\
 \times e^{-\mathcal Q_{J_{0,k}}(f)+k\mathcal Q_{J_{0,1}}(f)}
(1+\mathcal O(1/n)),
\end{multline} 
as $n\to \infty$.

It is not a straightforward matter to further simplify formulas \eqref{Hnk4}-\eqref{Hnk4b} for fixed $\sigma>1$, however in the limit $\sigma \to 1$ it is rather simple, so we now proceed with the limit $\sigma \to 1$.

\subsection{Limit $\sigma \to 1$}\label{Sec:limsigma}
We now consider the limit $\sigma\to 1$. Let $\xi_1,\dots, \xi_{k-1}$ be the ordered zeros of $U_{k-1}$ so that $\xi_j=-\cos\frac{\pi j}{k}$.
 We observe that the parameters $\{a_j,b_j\}_{j=1}^k$  depend on $\sigma$. Recalling that these are the zeros of $T_k^2(x)-\sigma^{-1}$, it is easily verified by writing $T_k(\cos \theta)=\cos k\theta$ that
\begin{equation}\label{bklim}
b_k=-a_1=1-\frac{\sigma-1}{2k^2}+\mathcal O\left((\sigma-1)^2\right), 
\end{equation}
as $\sigma\to 1$, and for $j=1,\dots,k-1$,
\begin{equation}\label{bjlim} b_j=\xi_j-\frac{\sqrt{\sigma-1}}{k}\sin \frac{\pi j}{k}+\mathcal O\left(\sigma-1\right), \qquad a_{j+1}=\xi_j+\frac{\sqrt{\sigma-1}}{k}\sin \frac{\pi j}{k}+\mathcal O\left(\sigma-1\right).
\end{equation}
Thus we have
\begin{equation}\label{Rlim}
\mathcal R^{1/2}(z)\to 2^{-k+1}\left(z^2-1\right)^{1/2}U_{k-1}(z), \end{equation}
for any fixed $z\in \mathbb C\setminus\{\xi_1,\dots,\xi_{k-1}\}$, while on $ (b_j,a_{j+1})$, $j=1,\ldots,k-1$,  we have
\begin{equation}\label{Rxij}
\begin{aligned}\mathcal R^{1/2}(z)&=\sqrt{(a_{j+1}-z)(z-b_j)}\widetilde{\mathcal R}_j\left(1+o(1)\right),\\
\widetilde{\mathcal R}_j&=(-1)^{k-j}\bigg(\prod_{i\neq j}|\xi_j-\xi_i|\bigg)\sqrt{1-\xi_j^2}
\end{aligned}
\end{equation}
as $\sigma\to 1$, uniformly on $(b_j,a_{j+1})$, and \eqref{Rxij} is also valid on any shrinking neighbourhood of $\xi_j$ with branch cuts for $z<b_j$ and $z>a_{j+1}$.

In order to evaluate the $\theta$-function in \eqref{Hnk4} as $\sigma\to 1$, we need to evaluate $\tau$ as $\sigma\to 1$. Recall from Remark \ref{explicittau} that $\tau=-\boldsymbol B(\boldsymbol A^{-1})^{T}$.
By the definition of $\boldsymbol A$ in Remark \ref{explicittau}, the formula for the determinant of $\boldsymbol A$ in \eqref{eq:detA}, and \eqref{Rxij}, we obtain
\begin{equation} \label{limA}
\begin{aligned}\boldsymbol A_{rj}&\to \frac{ \pi \xi_j^{r-1}}{ \widetilde{\mathcal R}_j}, \\ 
\det(\boldsymbol A(\sigma)) & \to \frac{(-1)^{\frac{k(k-1)}{2}}\pi^{k-1}}{\prod_{i<j}(\xi_j-\xi_i)\sqrt{\prod_{j=1}^{k-1}(1-\xi_j^2)}},
\end{aligned}
\end{equation}
as $\sigma\to 1$. Above, $\xi_j^{r-1}$ should be interpreted as equal to $1$ if $r=1$ and $\xi_{j}=0$ (this happens for even values of $k$). In particular $\boldsymbol A$ is continuous as $\sigma \to 1$, and since $\boldsymbol A$ is bounded and $\det \boldsymbol A$ remains bounded away from $0$, it follows that $\boldsymbol A^{-1}$ is continuous as $\sigma\to 1$ as well.

Now consider the matrix $\boldsymbol B$ with $\boldsymbol B_{jr}=\sum_{i=1}^j\int_{a_i}^{b_i} \frac{x^{r-1}dx}{\mathcal R_+^{1/2}(x)}$. 
Uniformly for $x$ in a neighbourhood of $\xi_j$, for $j=1,\dots, k-1$, we have
\begin{equation} \label{Rlim2} \frac{x^{r-1}}{\mathcal R^{1/2}(x)}=\frac{\xi_j^{r-1}}{\widetilde {\mathcal R}_j \sqrt{(a_{j+1}-x)(x-b_j)}} +\mathcal O(1), \end{equation}
as $\sigma \to 1$, where $\sqrt{(a_{j+1}-x)(x-b_j)}$ is analytic on $\mathbb C\setminus ((-\infty,b_j]\cup [a_{j+1},\infty))$ and is positive on $(b_j,a_{j+1})$. 
By relying on \eqref{Rlim} and \eqref{Rlim2},  we obtain
\begin{equation*}\boldsymbol B_{jr}=-\frac{i\xi_j^{r-1}}{2\widetilde{\mathcal R}_j}\log\frac{1}{\sigma-1}+\mathcal O(1), \end{equation*}
as $\sigma \to 1$.
Thus, since $\tau=-\boldsymbol B(\boldsymbol A^{-1})^{T}$, it follows by \eqref{limA} that 
\begin{equation}\tau=I\, \frac{1}{2\pi i} \log (\sigma-1)+o\left(\log  \frac{1}{\sigma-1}\right), \label{limtau} \end{equation}
 as $\sigma \to 1$. It follows that $\theta(x|\tau)\to 1$ as $\sigma \to 1$ uniformly for $x\in \mathbb R^{k-1}$.

Now recall  the formula for $\mathcal Q_{J_{0,k}}$ from \eqref{Hnk1}. Since $f(z)=-\frac{1}{2}\log(1-z^2)$ and $W(z,\lambda)=\frac{d}{dz}w_z(\lambda)$, we find by integration by parts (recalling the definition of $D_k$ from \eqref{defDk})
\begin{equation}\label{formulaQJ0} \mathcal Q_{J_{0,k}}(f)=\frac{1}{4}\left(\log D_k(1)+\log D_k(-1)-2\log D_k(\infty)\right).\end{equation}

All that remains is to evaluate $D_k(1)$, $D_k(-1)$, and $D_k(\infty)$ as $\sigma \to 1$. We start with $D_k(\infty)$.
\begin{lemma}\label{LemDlim2}
As $\sigma\to 1$, 
\begin{equation*} \log D_k(\infty)\to \frac{1}{2}\log 2. \end{equation*}
\end{lemma}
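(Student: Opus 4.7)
The plan is to convert $\log D_k(\infty)$ into a single explicit integral over $J_{0,k}$, identify the limit of the integrand as $\sigma\downarrow1$, and evaluate the resulting classical trigonometric integral.

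Starting from the definition \eqref{defDk}, we have $\log D_k(\infty) = \lim_{z\to\infty} d_f(z)$, where $d_f$ is as in \eqref{defdg} with $\mathfrak{g}=f$. Recall from Section \ref{sec:FH} (after the statement of the lemma expressed there) that $-\mathcal R^{1/2}(\lambda)\,w_\infty(\lambda)$ is the unique monic polynomial $\widetilde Q$ of degree $k-1$ satisfying \eqref{intQhat}; this follows from \eqref{polewlambda2}, \eqref{jumpswlambda1} and \eqref{zerointgap}. Taking $z\to\infty$ in \eqref{defdg} and using $w_\infty(\lambda)=-\widetilde Q(\lambda)/\mathcal R^{1/2}(\lambda)$, we obtain the clean formula
\begin{equation*}
\log D_k(\infty) = d_f(\infty) = -\frac{1}{2\pi i}\int_{J_{0,k}}\frac{\widetilde Q(\lambda)\,f(\lambda)}{\mathcal R^{1/2}_+(\lambda)}\,d\lambda,
\end{equation*}
with $f(\lambda)=-\tfrac12\log(1-\lambda^2)$.

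The next step is to identify $\lim_{\sigma\downarrow 1}\widetilde Q$. By \eqref{bjlim}, as $\sigma\downarrow 1$ each gap $(b_j,a_{j+1})$ shrinks to the point $\xi_j$, and a direct change of variables shows that $\int_{b_j}^{a_{j+1}} \tfrac{dx}{\sqrt{(a_{j+1}-x)(x-b_j)}}=\pi$. Combining this with \eqref{Rxij}, the constraint \eqref{intQhat} yields, in the limit,
\begin{equation*}
\frac{\pi\,\widetilde Q(\xi_j)}{\widetilde{\mathcal R}_j}=0,\qquad j=1,\ldots,k-1.
\end{equation*}
Since $\widetilde Q$ is monic of degree $k-1$, this forces the limiting polynomial to be $\prod_{j=1}^{k-1}(x-\xi_j)=2^{1-k}U_{k-1}(x)$. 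Combining this with \eqref{Rlim}, on the interior of $[-1,1]$ we get the pointwise limit
\begin{equation*}
\frac{\widetilde Q(x)}{\mathcal R^{1/2}_+(x)} \;\longrightarrow\; \frac{2^{1-k}U_{k-1}(x)}{2^{1-k}\,i\sqrt{1-x^2}\,U_{k-1}(x)} \;=\; \frac{1}{i\sqrt{1-x^2}},
\end{equation*}
and, crucially, the (potentially problematic) factors of $U_{k-1}$ in numerator and denominator cancel.

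Finally, we pass to the limit inside the integral and evaluate. The main technical point (and the place where care is required) is justifying this passage: the domain $J_{0,k}$ itself is $\sigma$-dependent and degenerates to $[-1,1]$ with the inner edges $\{a_j,b_j\}_{j=1}^{k-1}$ colliding pairwise at $\xi_j$, while the outer edges $a_1,b_k$ converge to $\mp 1$ (see \eqref{bklim}--\eqref{bjlim}). I will handle this by splitting $J_{0,k}$ into a fixed interior region (where \eqref{Rlim} gives uniform convergence of $\widetilde Q/\mathcal R^{1/2}_+$ to $1/(i\sqrt{1-x^2})$ away from $\pm1$ and from the $\xi_j$'s), a small neighbourhood of each $\xi_j$ (handled by Airy-type scaling: $x-\xi_j=\mathcal O(\sqrt{\sigma-1})$ and the cancellation of the $U_{k-1}$ factor means the integrand stays uniformly bounded by a constant multiple of $1/\sqrt{1-x^2}$), and a neighbourhood of $\pm1$ (integrable thanks to the square-root singularity against the logarithmic $f$). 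Dominated convergence with the integrable majorant $C\,|\log(1-x^2)|/\sqrt{1-x^2}$ on $[-1,1]$ gives
\begin{equation*}
\lim_{\sigma\downarrow 1}\log D_k(\infty) = -\frac{1}{2\pi i}\int_{-1}^{1}\frac{-\tfrac12\log(1-x^2)}{i\sqrt{1-x^2}}\,dx = -\frac{1}{4\pi}\int_{-1}^{1}\frac{\log(1-x^2)}{\sqrt{1-x^2}}\,dx.
\end{equation*}
The change of variables $x=\cos\theta$ reduces this to $-\tfrac{1}{4\pi}\cdot 2\int_0^\pi \log\sin\theta\,d\theta = -\tfrac{1}{4\pi}\cdot(-2\pi\log 2)=\tfrac12\log 2$, which is the claimed limit. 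The only real obstacle is the uniform integrability justification above; everything else is an algebraic identification followed by a standard classical integral.
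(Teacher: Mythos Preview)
Your proof is correct and follows essentially the same route as the paper: express $\log D_k(\infty)$ via $\widetilde Q/\mathcal R^{1/2}_+$, pass to the limit $\sigma\downarrow 1$, and evaluate the classical integral $-\tfrac{1}{4\pi}\int_{-1}^1\frac{\log(1-x^2)}{\sqrt{1-x^2}}\,dx=\tfrac12\log 2$.

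The one noteworthy difference is in how $\widetilde Q$ is identified. You determine $\widetilde Q$ only in the limit, by showing the limiting constraints force $\widetilde Q(\xi_j)=0$. The paper instead observes that for the Chebyshev potential one has \emph{exactly} $\widetilde Q(x)=2^{1-k}U_{k-1}(x)=\tfrac{1}{k2^{k-1}}T_k'(x)$ for every $\sigma>1$: since $\mathcal R^{1/2}(x)=2^{1-k}(T_k(x)^2-1/\sigma)^{1/2}$, the substitution $y=T_k(x)$ turns each constraint $\int_{b_j}^{a_{j+1}}\frac{T_k'(x)}{(T_k(x)^2-1/\sigma)^{1/2}}\,dx$ into an integral over a symmetric interval in $y$, which vanishes. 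This exact identification shortcuts your limiting argument for $\widetilde Q$ (and avoids having to argue separately that the coefficients of $\widetilde Q$ converge), making the subsequent pointwise limit $\frac{U_{k-1}(x)}{(T_k(x)^2-1/\sigma)_+^{1/2}}\to\frac{1}{(x^2-1)^{1/2}_+}$ immediate. On the other hand, your write-up is more explicit than the paper's about the dominated-convergence justification near the collapsing gaps and near $\pm1$, which the paper leaves to the reader.
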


\begin{proof}
By \eqref{Finalobs} with $f(x)=-\frac{1}{2}\log (1-x^2)$,
\begin{equation*} \log D_k(\infty)=\frac{1}{4\pi i} \int_{J_{0,k}}\frac{\widetilde{Q}(x) \log (1-x^2)}{\mathcal R^{1/2}_+(x)} dx, \end{equation*}
where $\widetilde{Q}(x)$ is the unique monic polynomial of degree $k-1$ satisfying $$\int_{b_k}^{a_{k+1}}\frac{\widetilde{Q}(x)}{\mathcal R^{1/2}(x)}dx=0$$ for $k=1,\dots,k-1$. Since $\mathcal R^{1/2}(x)=\frac{1}{2^{k-1}}\left(T_k(x)^2-1/\sigma\right)^{1/2}$, it is easily verified that $\widetilde{Q}(x)=\frac{1}{2^{k-1}}U_{k-1}(x)=\frac{1}{k2^{k-1}}T_k'(x)$. Since 
\begin{equation*}\frac{U_{k-1}(x)}{(T_k(x)^2-1/\sigma)_+^{1/2}}\to \frac{1}{\sqrt{x^2-1}_+}\end{equation*} as $\sigma\to 1$ pointwise on $(-1,1)\setminus\{\xi_j\}_{j=1}^{k-1}$, where $\xi_1,\dots,\xi_{k-1}$ are the ordered zeros of $U_{k-1}$, it follows that (we leave the details to the reader)
\begin{equation*} \log D_k(\infty)\to - \frac{1}{4\pi }\int_{-1}^1\frac{\log (1-x^2)dx}{\sqrt{1-x^2}}, \end{equation*}
as $\sigma \to 1$, and the right hand side is equal to $\frac{1}{2}\log 2$  (see e.g. \cite[formula (52)]{Krasovsky}).
\end{proof}
Recall that $\boldsymbol{\omega_{1}},\ldots,\boldsymbol{\omega_{k}}$, defined in \eqref{eq:Aint}, is the unique basis of holomorphic one-form satisfying $\oint_{A_{l}}\boldsymbol {\omega_{j}} = \delta_{l,j}$, and recall the form of $\omega_j(x)=\frac{\mathsf Q_j(x)}{\mathcal R^{1/2}(x)}dx$ from \eqref{formomega}. We also recall from below \eqref{eq:Aint2} that the coefficients of $\mathsf Q_j$ are real, and  since
\begin{equation*} \int_{b_i}^{a_{i+1}}\frac{\mathsf Q_j(x)}{\mathcal R_+^{1/2}(x)}dx=0,\end{equation*}
for $i\neq j$,
it follows that $\mathsf Q_j$ has a zero in each interval $(b_i,a_{i+1})$, again for $i\neq j$. Denote the zero of $\mathsf Q_j$  in the interval $(b_i,a_{i+1})$ by $y_{j,i}$. Since
\begin{equation*} \frac{(x-y_{j,i})}{\sqrt{(x-b_i)(x-a_{i+1})}}\to 1, \qquad i\neq j,\end{equation*} 
as $\sigma \to 1$, uniformly for $x$ bounded away from $\xi_i$, it follows that
\begin{equation*} u_j'(x)=\frac{\widehat C_j}{\sqrt{(x-b_k)(x-a_1)}(x-\xi_j)}(1+o(1)) \end{equation*}
as $\sigma \to 1$, uniformly for $x$ bounded away from $\xi_1,\dots,\xi_{k-1}$, for some constant $\widehat C_j$.
 Note that 
\begin{equation*}
\oint_{A_j} \frac{(\xi_l^2-1)_+^{1/2}}{2\pi i ((x-b_k)(x-a_1))^{1/2}(x-\xi_l)}dx\to \delta_{l,j},
\end{equation*}
as $\sigma\to 1$. Thus,
\begin{equation}\label{usigma1}
u_l'(x) = \frac{\boldsymbol{\omega_{l}}(x)}{dx} = \frac{(\xi_l^2-1)_+^{1/2}}{2\pi i ((x-b_k)(x-a_1))^{1/2}(x-\xi_l)}(1+o(1)),
\end{equation}
where the convergence is uniform for $x$  in compact subsets of $\mathbb{C}\setminus \{\xi_{1},\ldots,\xi_{k-1}\}$. 

\begin{lemma}\label{LemDlim}
As $\sigma\to 1$,
\begin{equation*} 
\log D_k(1),\, \log D_k(-1) = -\frac{1}{2} \log 2-\frac{1}{4}\log(\sigma-1)+\frac{1}{2}\log k+o(1).
\end{equation*}
\end{lemma}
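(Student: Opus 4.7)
The plan is to use the integral representation $\log D_k(z)=d_f(z)=\frac{1}{2\pi i}\int_{J_{0,k}} f(\lambda) w_z(\lambda_+)\, d\lambda$ from \eqref{defdg} with $f(\lambda)=-\tfrac{1}{2}\log(1-\lambda^2)$, together with the representation \eqref{repwlambda} for $w_z(\lambda_+)$, and to identify the pinching region of the integrand that produces the non-vanishing limit. The key preliminary computation is the behaviour of the prefactor $\mathcal R^{1/2}(1)=\prod_{j=1}^k\sqrt{(1-a_j)(1-b_j)}$: using \eqref{bklim}-\eqref{bjlim} to expand each factor, together with $\prod_{j=1}^{k-1}(1-\xi_j)=U_{k-1}(1)/2^{k-1}=k/2^{k-1}$, one gets $\mathcal R^{1/2}(1)=\sqrt{\sigma-1}/2^{k-1}\cdot(1+o(1))$, so the overall integral carries an explicit factor that tends to $0$ like $\sqrt{\sigma-1}$.

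Two of the three pieces of the integral are therefore $o(1)$. For the ``sum term'' $2\mathcal R^{1/2}(1)\sum_{j=1}^{k-1}u_{j,+}'(\lambda)\int_{b_j}^{a_{j+1}}dx/[\mathcal R^{1/2}(x)(x-1)]$, the inner gap integrals are uniformly bounded as $\sigma\downarrow 1$ (a scaling $x=b_j+(a_{j+1}-b_j)s$ reduces them to $\int_0^1 ds/\sqrt{s(1-s)}$ divided by a quantity bounded away from $0$, since $b_j\to \xi_j\neq 1$), and the integrals $\int_{J_{0,k}}f(\lambda)u_{j,+}'(\lambda)d\lambda$ are bounded by \eqref{usigma1} (the square-root endpoint singularities are integrable and the support intervals converge to fixed non-degenerate intervals). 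For the Cauchy-type term $\frac{1}{2\pi i}\int f(\lambda)\mathcal R^{1/2}(1)/[\mathcal R^{1/2}_+(\lambda)(\lambda-1)]d\lambda$, the contribution from any subinterval $(a_j,b_j)$ with $j<k$, and from the part of $(a_k,b_k)$ bounded away from $b_k$, is $O(\sqrt{\sigma-1})$ because $|\lambda-1|$ is bounded below while $\int |f/\mathcal R^{1/2}_+|d\lambda$ is finite.

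The non-negligible contribution comes from $\lambda\in(a_k,b_k)$ with $\lambda$ close to $b_k\to 1$. Writing $\lambda=b_k-t$ one has (since $\mathcal R^{1/2}_+(\lambda)=i|\mathcal R(\lambda)|^{1/2}$ on $(a_k,b_k)$) the local expansions $|\mathcal R(\lambda)|^{1/2}=k\sqrt{2t}/2^{k-1}\cdot(1+o(1))$ (again via $\prod_{j=1}^{k-1}(1-\xi_j)=k/2^{k-1}$), $\lambda-1=-(\tfrac{\sigma-1}{2k^2}+t)(1+o(1))$ from \eqref{bklim}, and $f(\lambda)=-\tfrac12\log 2-\tfrac12\log(\tfrac{\sigma-1}{2k^2}+t)+o(1)$. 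Substituting these gives a real, elementary integral in $t$; the successive changes of variable $s=\tfrac{\sigma-1}{2k^2}+t$ and $s=\tfrac{\sigma-1}{2k^2}u^2$ collapse all $\sigma$-dependent prefactors and reduce the leading term to
\[
-\frac{1}{2\pi}\int_1^\infty\frac{\log(\sigma-1)-2\log k+2\log u}{u\sqrt{u^2-1}}\,du.
\]
Evaluating via the classical identities $\int_1^\infty du/(u\sqrt{u^2-1})=\pi/2$ and $\int_1^\infty \log u\,du/(u\sqrt{u^2-1})=(\pi/2)\log 2$ yields $-\tfrac12\log 2-\tfrac14\log(\sigma-1)+\tfrac12\log k$, as claimed.

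The value $\log D_k(-1)$ follows either by the $x\mapsto -x$ symmetry of $V_{0,k}$ (which makes $J_{0,k}$, $\mu_{V_{0,k}}$, and $f$ all even, so the same analysis at $\lambda=a_1\to -1$ gives the identical limit) or directly by repeating the local analysis near $a_1$. The main obstacle in the proof is the bookkeeping required to show uniform $O(\sqrt{\sigma-1})$ bounds on the non-dominant contributions in the simultaneous-coalescence regime where $k-1$ pairs of inner endpoints merge at the $\xi_j$'s and the two outer endpoints approach $\pm 1$; once these bounds are in place, the explicit local reduction near $b_k$ is standard.
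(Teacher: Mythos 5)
Your proposal is correct and reaches the stated constant, but the key computation is organized differently from the paper's. Both arguments start from the same representation $\log D_k(\pm 1)=d_f(\pm1)$ with $f(\lambda)=-\tfrac12\log(1-\lambda^2)$ and $w_z$ expanded via \eqref{2ndrepwlambda}, and both dispose of the ``sum term'' through the vanishing prefactor $\mathcal R^{1/2}(\pm1)=O(\sqrt{\sigma-1})$. For the remaining Cauchy-type term the paper keeps a fixed contour $\Gamma$ around $J_{0,k}$, deforms it onto the branch cuts of $\log(1\mp x)$, and extracts the divergence from the explicit real integral $\int_{1+\epsilon}^{\infty}dx/[(x-1)\sqrt{x-b_k}]$; you instead collapse onto $J_{0,k}$ and do a local analysis of the pinching between $b_k$ and the logarithmic singularity at $1$, reducing to $\int_1^\infty u^{-1}(u^2-1)^{-1/2}du=\pi/2$ and $\int_1^\infty (\log u)\, u^{-1}(u^2-1)^{-1/2}du=\tfrac{\pi}{2}\log 2$. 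I checked your local expansions ($\mathcal R^{1/2}(1)\sim\sqrt{\sigma-1}/2^{k-1}$ and $|\mathcal R(b_k-t)|^{1/2}\sim k\sqrt{2t}/2^{k-1}$ via $\prod_{j=1}^{k-1}(1-\xi_j)=U_{k-1}(1)/2^{k-1}=k/2^{k-1}$, together with $1-b_k\sim(\sigma-1)/(2k^2)$) and the resulting integral; they do yield $-\tfrac12\log2-\tfrac14\log(\sigma-1)+\tfrac12\log k$. The paper's route has the advantage of keeping every $\lambda$-integral away from the coalescing inner endpoints $\xi_1,\dots,\xi_{k-1}$; yours makes the provenance of each constant more transparent but forces you to estimate integrals over $J_{0,k}$ itself.

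That last point is where your error bounds are stated too strongly. On $J_{0,k}$ the quantities $\int_{J_{0,k}}|f/\mathcal R^{1/2}_+|\,d\lambda$ and $\int_{J_{0,k}}|u_{j,+}'|\,d\lambda$ are \emph{not} uniformly bounded as $\sigma\downarrow1$: near a shrinking gap $(b_j,a_{j+1})$ one has $\int_0^\delta ds/\sqrt{s(s+g)}\sim\log(1/g)$ with $g\asymp\sqrt{\sigma-1}$, so these integrals grow like $\log\frac{1}{\sigma-1}$, and \eqref{usigma1} is only uniform on compacts away from the $\xi_j$'s, so it cannot be invoked to claim boundedness of $\int_{J_{0,k}}f\,u_{j,+}'\,d\lambda$. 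This does not damage the proof, since every such term carries the prefactor $\mathcal R^{1/2}(\pm1)=O(\sqrt{\sigma-1})$ and $O(\sqrt{\sigma-1}\,\log\frac{1}{\sigma-1})=o(1)$, but you should replace ``bounded'' and ``finite'' by $O(\log\frac{1}{\sigma-1})$ bounds (or keep the contour off the real axis near the $\xi_j$'s, as the paper does) to make the estimates honest.
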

\begin{proof}
By \eqref{2ndrepwlambda} and \eqref{defDk}, we have
\begin{multline*}
\log D_k(\pm 1)= \frac{\mathcal R^{1/2}(\pm 1)}{8\pi i} \\ \times \oint_\Gamma \log (1-\lambda^2)\left(\frac{1}{\mathcal R^{1/2}(\lambda)(\lambda \mp 1)}+ 2\sum_{j=1}^{k-1}u_j'(\lambda)\int_{b_j}^{a_{j+1}} \frac{dx}{\mathcal R^{1/2}(x)(x \mp 1)}\right) d\lambda,
\end{multline*}
where $\Gamma$ is a closed, counterclockwise oriented contour enclosing $J_{0,k}$, but not intersecting $(-\infty,-1]\cup[1,\infty)$. We can assume that $\Gamma $ is bounded away from $\xi_1,\dots, \xi_{k-1}$, and thus
$\oint_\Gamma \log (1-\lambda^2)u_j'(\lambda) d\lambda$ remains bounded as $\sigma \to 1$ by \eqref{usigma1}. By \eqref{Rxij},  $\int_{b_j}^{a_{j+1}}\frac{dx}{\mathcal R^{1/2}(x)(x-1)}$ remains bounded as $\sigma \to 1$. Since $\mathcal R^{1/2}(\pm 1)\to 0$ as $\sigma \to 1$, it follows that
\begin{equation}\label{Dintem1}\begin{aligned}
\log D_k(1)&=\frac{\mathcal R^{1/2}(1)}{8\pi i}\oint_\Gamma \frac{ \log(1-x^2) dx}{\mathcal R^{1/2}(x)(x-1)}+o(1),\\
\log D_k(-1)&=\frac{\mathcal R^{1/2}(-1)}{8\pi i}\oint_\Gamma \frac{ \log(1-x^2) dx}{\mathcal R^{1/2}(x)(x+1)}+o(1),
\end{aligned}
\end{equation}
where the branches of the logarithms are principal. Since $\frac{\mathcal R^{1/2}(1)}{\mathcal R^{1/2}_+(x)}=-\frac{\mathcal R^{1/2}(-1)}{\mathcal R^{1/2}_+(-x)}$ for all $x\in J$, it follows that
\begin{equation*} \frac{\mathcal R^{1/2}(1)}{8\pi i}\oint_\Gamma \frac{ \log(1-x^2) dx}{\mathcal R^{1/2}(x)(x-1)}=
\frac{\mathcal R^{1/2}(-1)}{8\pi i}\oint_\Gamma \frac{ \log(1-x^2) dx}{\mathcal R^{1/2}(x)(x+1)}. 
\end{equation*}
We evaluate the left hand side.
Deforming $\Gamma$ to the branch-cuts of the logarithm, we obtain
\begin{equation}\label{Dintem2}
 \begin{aligned}
\frac{1}{2\pi i}\oint_\Gamma \frac{ \log(1+x) dx}{\mathcal R^{1/2}(x)(x-1)}&=\int_{-\infty}^{-1}\frac{dx}{\mathcal R^{1/2}(x)(x-1)}
-\frac{\log 2}{\mathcal R^{1/2}(1)},\\
\frac{1}{2\pi i}\oint_\Gamma \frac{ \log(1-x) dx}{\mathcal R^{1/2}(x)(x-1)}&=\lim_{\epsilon\to 0}\left(-\frac{\log \epsilon}{\mathcal R^{1/2}(1)}-\int_{1+\epsilon}^\infty \frac{dx}{\mathcal R^{1/2}(x)(x-1)}\right).
\end{aligned} \end{equation}
Recall that $b_k$ is the largest zero of $\mathcal R(z)$,  and denote
\begin{equation}\nonumber \widehat {\mathcal R}^{1/2}(z)=\frac{\mathcal R^{1/2}(z)}{(z-b_k)^{1/2}}. \end{equation}
As $\epsilon \to 0$,
\begin{equation*}\int_{1+\epsilon}^\infty \frac{dx}{(x-1)(x-b_k)^{1/2}}\left(\frac{1}{\widehat{ \mathcal R}^{1/2}(x)}-\frac{1}{\widehat{ \mathcal R}^{1/2}(1)}\right) \end{equation*}
remains uniformly bounded for $\sigma>1$, and thus
\begin{equation*}
\int_{1+\epsilon}^\infty \frac{dx}{\mathcal R^{1/2}(x)(x-1)}=\frac{1}{\widehat{ \mathcal R}^{1/2}(1)}\int_{1+\epsilon}^\infty \frac{dx}{(x-1)(x-b_k)^{1/2}}+\mathcal O(1), \qquad \mbox{as } \epsilon \to 0,
\end{equation*}
uniformly for $\sigma>1$. We have
\begin{equation*}
\int_{1+\epsilon}^\infty \frac{dx}{(x-1)(x-b_k)^{1/2}}=\frac{1}{\sqrt{1-b_k}}\left(\log \big(4(1-b_k)\big)-\log \epsilon\right)+o(1),
\end{equation*} as $\epsilon \to 0$. Substituting this into \eqref{Dintem2}, we obtain
\begin{align*}
\frac{1}{2\pi i}\oint_\Gamma \frac{ \log(1-x) dx}{\mathcal R^{1/2}(x)(x-1)} = - \frac{\log \big( 4(1-b_{k}) \big)}{\mathcal{R}^{1/2}(1)} + \bigO(1), \qquad \mbox{as } \sigma \to 1,
\end{align*}
and recalling that $\mathcal R(1) \to 0$ as $\sigma \to 1$, from \eqref{Dintem1} we get
\begin{equation*}
\log D_k(1),\, \log D_k(-1) = -\frac{3}{4} \log 2-\frac{1}{4}\log(1-b_k)+o(1),\end{equation*}
as $\sigma \to 1$, and the lemma follows by \eqref{bklim}.
\end{proof}
By \eqref{eq:gamma} with $k=1$ and  \eqref{bklim}, we have
\begin{equation*} 
\gamma_1(1) = \bigg(\frac{\sigma-1}{4}\bigg)^{\frac{1}{4}}(1+o(1)), 
\end{equation*}
as $\sigma \to 1$. Thus, since $\theta(x|\tau)\to 1$ as $\sigma\to 1$ uniformly for $x\in \mathbb R^{k-1}$, we obtain by \eqref{formulaQJ0} and Lemmas \ref{LemDlim2} and \ref{LemDlim} that
\begin{multline*}
\frac{\theta(nk\Omega|\tau)}{\theta(nk \Omega+\Upsilon|\tau)} \left(\frac{2}{1+\sqrt{1-\sigma^{-1}}}\right)^{\frac{k-1}{2}}D_1(1)^{-k+1}\left(\gamma_1(1)+\gamma_1(1)^{-1}\right)^{k-1}\\ \times \frac{1}{2^{k-1}} e^{-\mathcal Q_{J_{0,k}}(f)} e^{k\mathcal Q_{J_{0,1}}(f)} 
 = k^{-1/4}(\sigma-1)^{-\frac{k-1}{8}}(1+o(1))
\end{multline*}
as $\sigma \to 1$.

Thus, from \eqref{Hnk4}, we obtain
\begin{multline}\label{Hnk5}\lim_{\sigma \downarrow 1 }\limsup_{n\to \infty}
\Bigg|\log H_{nk}\left(e^{-nkV_{0,k}}\right)-k\log H_n\left(e^{-2n\sigma x^2}\right) \\+\left[n^{2}k(k-1)\right]\log 2+\frac{1}{4}\log k
+\frac{k-1}{8}\log (\sigma-1)\Bigg|=0. 
\end{multline}

Similarly, from \eqref{Hnk4b}, 
\begin{multline}\label{Hnk5b}\lim_{\sigma \downarrow 1} \limsup_{n\to \infty}
\Bigg|\log H_{nk+r}\left(e^{-(nk+r)V_{0,k}}\right)-(k-r)\log H_n\left(e^{-2n\widetilde\sigma x^2}\right)\\-r\log H_{n+1}\left(e^{-2n\widetilde\sigma x^2}\right) +\left[n^{2}k(k-1)+2nr(k-1)+r(r-1)\right]\log 2 \\+\frac{1}{4}\log k
+\frac{k-1}{8}\log (\sigma-1)\Bigg|=0. 
\end{multline}
Substituting the asymptotics of \eqref{asymGUE} into \eqref{Hnk5} and \eqref{Hnk5b}, we obtain \eqref{LimCheby}.

\section{Partition function asymptotics}\label{sec:pasy}
In this section we compute the asymptotics of $\frac{d}{ds}\log H_N\left(e^{-NV_s}\right)$ where $V_s$ is the deformation introduced in Section \ref{sec:pfasy}. Our main goal is to prove \eqref{MainResSec11}.  Throughout the section, we are in the setting where $D(z)\equiv 1$ and $\Upsilon=0$, so that the main parametrix $M$ defined in \eqref{eq:Pdef} is given by $M(z)=N_\infty(z)=N_\infty(z;N\Omega)$.

Recall the notation $r(\nu_s)$ from Section \ref{sec:DI}.
 We start by proving that $r(e^{-NV_s})\to 0$ as $N\to \infty$. For $z$ outside the lenses,  we have by  \eqref{eq:Tdef}, \eqref{eq:Sdef}, and \eqref{eq:Rdef}
\[
Y(z)=e^{-N\frac{\ell}{2}\sigma_3} R(z)N_\infty(z)e^{N(g(z)+\frac{\ell}{2})\sigma_3},
\]
where we take $V=V_s$ in the definition of $N_\infty$, $R$, and $g$. Let $I$ be a single interval such that $[a_1,b_k]\subset I$.
By the uniform boundedness of  $R$, $R'$, $N_\infty$, and $N_\infty'$ (see \eqref{eq:expR} and \eqref{eq:Mlamdef} with $v=N\Omega$), it follows that 
 \begin{equation}\label{expsmall}\begin{aligned}
r(e^{-NV_s})=&-N\int_{\R\setminus I}[Y(x)^{-1} Y'(x)]_{21}e^{-NV_s(x)}\frac{\partial}{\partial s} V_s(x)\frac{dx}{2\pi i}\\
&=-N\int_{\R\setminus I}\mathcal O(1)\times \frac{\partial}{\partial s} V_s(x)e^{N(g_+(x)+g_-(x)-V_s(x) +\ell_s)}dx,
\end{aligned}
\end{equation}
where the implied constant is uniform in $s$ and $x$ as $N\to \infty$. We find from the $k$-cut regularity of $V_s(x)$, and in particular the assumption that \eqref{eq:EL2} is strict, that $g_+(x)+g_-(x)-V_s(x)+\ell_s<-\delta$ for some fixed $\delta>0$ and that $\frac{V_s(x)}{\log |x| }\to +\infty$ as $x\to \pm \infty$, so  it is straightforward to verify that \eqref{expsmall} is exponentially small as $N\to\infty$. Thus
\begin{multline}\label{eq:orders}
 \frac{d}{ds}\log D_N(H_N(e^{-NV_s}))= -N\oint_\Gamma [Y(z)^{-1} Y'(z)]_{11}\frac{\partial}{\partial s} V_s(z)\frac{dz}{2\pi i}+\mathcal O(N^{-1})\\
=-N^2\oint_\Gamma g'(z)\frac{\partial}{\partial s}V_s(z)\frac{dz}{2\pi i}-N\oint_\Gamma [N_\infty(z)^{-1}N_\infty'(z)]_{11}\frac{\partial}{\partial s}V_s(z)\frac{dz}{2\pi i}\\
 -\oint_\Gamma\left[N_\infty(z)^{-1}\left(\frac{d}{d z}R^{(1)}(z)\right) N_\infty(z)\right]_{11}
\frac{\partial}{\partial s}V_s(z)\frac{dz}{2\pi i}+\mathcal O(N^{-1})
\end{multline}
as $N\to \infty$,
where $R^{(1)}$, $N_\infty$, and $g$ were given in \eqref{formulaR1}, \eqref{eq:Mlamdef}, and \eqref{eq:g_fn} respectively.

\subsection{The order \texorpdfstring{$N^2$}{N squared} term}\label{sec:orderN2}
We prove that the leading order term has the following form.

\begin{lemma}\label{le:orderN2}
Let $V_s(x)$ denote either one of the family of potentials described in Section \ref{sec:pfasy}. Then
\begin{multline*}
\oint_\Gamma  g'(z)\frac{\partial}{\partial s} V_s(z)\frac{dz}{2\pi i}  \\
=\frac{\partial}{\partial s}\left(\iint \log|x-y|^{-1}d\mu_{V_s}(x)d\mu_{V_s}(y)+\int V_s(x)d\mu_{V_s}(x)\right),
\end{multline*}
where $d\mu_{V_s}(x)$ denotes the equilibrium measure associated to $V_s(x)$.
\end{lemma}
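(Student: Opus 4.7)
The strategy is to show that both sides of the claimed identity are equal to $\int \partial_s V_s(\lambda)\, d\mu_{V_s}(\lambda)$.

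First I would handle the left-hand side. Substituting the definition of $g$ from \eqref{eq:g_fn} gives $g'(z)=\int (z-\lambda)^{-1}\, d\mu_{V_s}(\lambda)$, and the contour $\Gamma$ surrounds the support of $\mu_{V_s}$. Since $\partial_s V_s$ is a polynomial (for $s\in[0,1]$) or equals $V-V_1$ with $V$ real-analytic in a neighbourhood of the support and $V_1$ a polynomial (for $s\in[1,2]$), it is analytic in a neighbourhood of $\Gamma$ and its interior. By Fubini's theorem and Cauchy's integral formula, exactly as in \eqref{eq:1stterm},
\begin{equation*}
\oint_\Gamma g'(z)\,\partial_s V_s(z)\,\frac{dz}{2\pi i}=\int \left(\oint_\Gamma \frac{\partial_s V_s(z)}{z-\lambda}\frac{dz}{2\pi i}\right)d\mu_{V_s}(\lambda)=\int \partial_s V_s(\lambda)\,d\mu_{V_s}(\lambda).
\end{equation*}

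Next I would treat the right-hand side by exploiting the variational characterization of $\mu_{V_s}$ as the unique minimizer of $I_{V_s}$. For any $s_1,s_2$, the minimality $I_{V_{s_2}}(\mu_{V_{s_2}})\leq I_{V_{s_2}}(\mu_{V_{s_1}})$ together with the identity $I_{V_{s_2}}(\mu)-I_{V_{s_1}}(\mu)=\int(V_{s_2}-V_{s_1})d\mu$ (valid for any $\mu$) yields the two-sided sandwich
\begin{equation*}
\int (V_{s_2}-V_{s_1})\,d\mu_{V_{s_2}}\leq I_{V_{s_2}}(\mu_{V_{s_2}})-I_{V_{s_1}}(\mu_{V_{s_1}})\leq \int (V_{s_2}-V_{s_1})\,d\mu_{V_{s_1}}.
\end{equation*}
Dividing by $s_2-s_1$ and letting $s_2\to s_1$, together with the continuity of $s\mapsto \mu_{V_s}$ in the weak sense (which follows from Proposition \ref{pr:defo1}(d) for $s\in[0,1]$ and from the explicit affine interpolation for $s\in[1,2]$), gives
\begin{equation*}
\frac{d}{ds}I_{V_s}(\mu_{V_s})=\int \partial_s V_s(x)\,d\mu_{V_s}(x).
\end{equation*}
Comparing the two expressions proves the lemma.

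The main technical point is justifying the $s\to s_1$ limit of $\int(V_{s_2}-V_{s_1})d\mu_{V_{s_2}}$, and this is where the regularity afforded by Proposition \ref{pr:defo1}(d) enters: for $s\in[0,1]$, the endpoints $\{a_j(s),b_j(s)\}$ move smoothly with $s$ and the coefficients of $V_s$ depend smoothly on these endpoints, so $\partial_s V_s$ is a polynomial whose coefficients depend continuously on $s$, and the densities $\psi_{V_s}$ vary continuously as well; for $s\in[1,2]$ the support is independent of $s$ and the $s$-dependence of both $V_s$ and $\mu_{V_s}$ is affine, making the limit trivial. I do not anticipate any further obstacle — the argument is essentially the standard envelope-type identity for a one-parameter family of minimization problems, combined with a routine contour-integral manipulation.
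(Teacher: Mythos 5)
Your proposal is correct, and while the treatment of the left-hand side is identical to the paper's (the Fubini/Cauchy computation as in \eqref{eq:1stterm}), your treatment of the right-hand side takes a genuinely different route. The paper differentiates $I_{V_s}(\mu_{V_s})$ directly by the product rule, writing $\int_{J_s}\partial_s V_s\,d\mu_{V_s}=\partial_s\int_{J_s}V_s\,d\mu_{V_s}-\int_{J_s}V_s\,\partial_s\psi_{V_s}\,dx$, and then uses the Euler--Lagrange equation \eqref{eq:EL1} together with $\int_{J_s}\partial_s\psi_{V_s}=0$ to identify $\int V_s\,\partial_s\psi_{V_s}$ with the $s$-derivative of the logarithmic energy; this requires differentiating the density $\psi_{V_s}$ in $s$ and discarding boundary terms at the moving endpoints (justified by the square-root vanishing of $\psi_{V_s}$ there). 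You instead use the two-sided minimality sandwich, i.e.\ the envelope-theorem argument, which avoids ever differentiating $\psi_{V_s}$ in $s$ and only needs weak continuity of $s\mapsto\mu_{V_s}$ plus smoothness of $s\mapsto V_s$ on a fixed compact set — a strictly weaker regularity input, available here from Proposition \ref{pr:defo1}(d) and the affine interpolation on $[1,2]$. The one point worth making explicit in your write-up is that the sandwich as stated only yields the right-hand derivative; repeating it with $s_2<s_1$ (which reverses the inequalities upon dividing by $s_2-s_1<0$) gives the left-hand derivative with the same value, so the two-sided derivative on the right-hand side of the lemma indeed exists and equals $\int\partial_s V_s\,d\mu_{V_s}$. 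With that small addition the argument is complete.
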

\begin{proof}
	For either interpolation, we saw in Section \ref{sec:deform} that the edge points of the support are smooth functions of $s$ (in fact, for one interpolation, they are affine functions of $s$ given by \eqref{eq:Js}, while for the other they are constant in $s$). We use the notation $d\mu_{V_s}(x)
	=\psi_{V_s}(x)dx$ (recall \eqref{eq:density})  for the equilibrium measure associated to $V_s$ and $J_s=\bigcup_{j=1}^k[a_j(s),b_j(s)]$ for its support. By definition \eqref{eq:g_fn},
	\[
	g'(z)=\int_{J_s}\frac{1}{z-x}d\mu_{V_s}(x)
	\]
	so by Cauchy's integral formula
	\begin{align}\label{eq:altrep}
	\oint_\Gamma  g'(z)\frac{\partial}{\partial s} V_s(z)\frac{dz}{2\pi i}&=\int_{J_s}\frac{\partial}{\partial s} V_s(x)d\mu_{V_s}(x)\\
&=\frac{\partial}{\partial s}\int_{J_s} V_s(x)d\mu_{V_s}(x)-\int_{J_s}V_s(x)\frac{\partial}{\partial s} \psi_{V_s}(x)dx.\label{eq:altrep2}
	\end{align}

	By the Euler-Lagrange equation \eqref{eq:EL1}
\begin{multline}\label{eq:altrep3}\int_{J_s}V_s(x)\frac{\partial}{\partial s} \psi_{V_s}(x)dx \\
=\int_{J_s}\int_{J_s}2\log|x-y|\, \psi_{V_s}(y)\frac{\partial}{\partial s}\psi_{V_s}(x)dydx+\ell_s\int_{J_s}\frac{\partial }{\partial s}\psi_{V_s}(x)dx, 
\end{multline}
and since $\int_{J_s}\frac{\partial }{\partial s}\psi_{V_s}(x)dx=\frac{\partial }{\partial s}\int_{J_s}\psi_{V_s}(x)dx=0$, the lemma follows upon substituting \eqref{eq:altrep3} into \eqref{eq:altrep2}.
\end{proof}

\subsection{The order \texorpdfstring{$N$}{N} term}\label{sec:orderN}

By \eqref{Ninftylambda},
\[
N_\infty(z)^{-1}N_\infty'(z)=N_\lambda^{-1}(z)N_\lambda'(z)
\]
for any $\lambda \in \mathbb C$.
Letting $\lambda \to z$, and relying on the fact that $N_z(z)=I$, we find that by the definition of $N_\lambda$ in \eqref{eq:Mlamdef} that
\begin{equation}\label{eq:PinvP'}
\left[N_\infty(z)^{-1}N_\infty'(z)\right]_{11}=\lim_{\lambda \to z}N_{\lambda,11}'(z)=\sum_{j=1}^{k-1} \frac{\partial_j \theta(N\Omega)}{\theta(N\Omega)}u_j'(z).
\end{equation}
We now prove the following proposition.
\begin{proposition}\label{pr:orderN}
    For either of the interpolations of Section \ref{sec:deform}, we have
    \[
	\frac{1}{2\pi i}\oint_\Gamma \partial_s V_s(z) u_j'(z)dz=- \frac{\partial}{\partial s} \Omega_j(s).
	\]
\end{proposition}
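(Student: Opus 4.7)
The plan is to reduce the identity to an elementary fact about the log-potential of the signed measure $\partial_s \mu_{V_s}$: namely that its contour integral against $u_j'$ vanishes by decay at infinity, while its boundary jumps across $J_s$ and across the gaps encode $\partial_s V_s$ and $\partial_s \Omega_l(s)$ respectively.

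First I would contract $\Gamma$ down onto $J_s$. Since $u_j'(z) = \mathsf Q_j(z)/\mathcal{R}^{1/2}(z)$ is analytic on $\mathbb{C}\setminus J_s$ (no jumps across the gaps $(b_i(s), a_{i+1}(s))$), and $u_{j,+}' + u_{j,-}' = 0$ on $J_s$, while $\partial_s V_s$ is analytic near $J_s$, the contour integral becomes
\[
\oint_\Gamma \partial_s V_s(z)\, u_j'(z)\, dz = -2\int_{J_s} \partial_s V_s(x)\, u_{j,+}'(x)\, dx.
\]
I would then introduce the auxiliary function $\widetilde g_s(z) := \int \log(z-y)\, d(\partial_s \mu_{V_s})(y)$ with principal-branch logarithm. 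Because $V_s$ is $k$-cut regular, $\psi_{V_s}$ vanishes at the endpoints of $J_s$, so $\partial_s \mu_{V_s}$ has density $\partial_s \psi_{V_s}(x)$ on the interior of $J_s$ and total mass zero. This yields the following properties of $\widetilde g_s$: it is analytic on $\mathbb{C}\setminus [a_1(s), b_k(s)]$ (the jump across $(-\infty, a_1(s))$ is $2\pi i$ times the total mass of $\partial_s \mu_{V_s}$, which vanishes); it satisfies $\widetilde g_s(z) = \mathcal{O}(z^{-1})$ as $z\to\infty$ by the same reason; on $J_s$ one has $\widetilde g_{s,+} + \widetilde g_{s,-} = \partial_s V_s - \partial_s \ell_s$, by differentiating the Euler--Lagrange identity \eqref{eq:EL1} in $s$; and on each gap $(b_l(s), a_{l+1}(s))$ the standard log-potential jump gives $\widetilde g_{s,+} - \widetilde g_{s,-} = 2\pi i \,(\partial_s\mu_{V_s})([a_{l+1}(s), b_k(s)]) = 2\pi i\, \partial_s \Omega_l(s)$, where the moving-endpoint corrections vanish since $\psi_{V_s}$ vanishes at the edges.

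The main step would be to compute $\oint_\Gamma \widetilde g_s(z) u_j'(z)\, dz$ in two different ways. Deforming $\Gamma$ out to infinity, the estimate $\widetilde g_s(z) u_j'(z) = \mathcal{O}(z^{-3})$ gives $0$. Contracting $\Gamma$ onto $[a_1(s), b_k(s)]$ instead gives a sum of two pieces: the $J_s$ part contributes $-\int_{J_s} u_{j,+}'(x)\bigl(\partial_s V_s(x)-\partial_s\ell_s\bigr) dx$, and the gap part contributes $\sum_{l=1}^{k-1}(-2\pi i\,\partial_s\Omega_l)\int_{b_l}^{a_{l+1}} u_j'(x)\,dx = \pi i\, \partial_s\Omega_j$, using the normalization $\int_{b_l}^{a_{l+1}} u_j'(x)\,dx = -\tfrac12 \delta_{lj}$ from \eqref{eq:Aint2}. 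The $\partial_s\ell_s$ term drops out because a short telescoping computation using \eqref{uofa}--\eqref{uofb} shows $\int_{J_s} u_{j,+}'(x)\, dx = 0$: the sum $\sum_{i=1}^k \bigl(u_{j,+}(b_i) - u_{j,+}(a_i)\bigr) = \tfrac12 \sum_i (\tau_{i,j}-\tau_{i-1,j})$ telescopes to zero once the base-point identity $u(b_k)=0$ and its companion $u_+(a_1) = \tfrac12\sum_{s=1}^{k-1} e_s$ are invoked. Equating the two evaluations of $\oint_\Gamma \widetilde g_s u_j' dz$ gives $\int_{J_s} u_{j,+}'(x)\, \partial_s V_s(x)\, dx = \pi i\, \partial_s\Omega_j$, and substituting into the display above proves $\frac{1}{2\pi i}\oint_\Gamma \partial_s V_s(z)\, u_j'(z)\, dz = -\partial_s\Omega_j$.

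The main obstacle to navigate is verifying the jump of $\widetilde g_s$ on the gaps for the first interpolation, where the support $J_s$ moves: naively differentiating $\Omega_l(s) = \int_{a_{l+1}(s)}^{b_k(s)} d\mu_{V_s}$ in $s$ produces endpoint boundary terms that a priori should appear. However, the $k$-cut regularity of $V_s$ (the square-root vanishing of $\psi_{V_s}$ at each $a_i(s),\, b_i(s)$) guarantees these boundary terms vanish, so that $\partial_s \Omega_l(s)$ coincides with the total $(\partial_s \mu_{V_s})$-mass of $[a_{l+1}(s), b_k(s)]$. With this observation the argument applies uniformly to both interpolations of Section~\ref{sec:deform}.
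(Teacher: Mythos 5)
Your proof is correct, but it takes a genuinely different route from the paper's. The paper integrates by parts to move the derivative onto $\partial_s V_s$, invokes \eqref{ddzV} to write $\partial_s V_s'$ as a Cauchy transform of $\partial_s\psi_{V_s}$, and then evaluates the resulting pairing $\oint_\gamma \partial_s\psi_{V_s}(w)\,u_j(w)\,dw$ by contracting $\gamma$ onto $[a_1,b_k]$ and using the jump relations \eqref{eq:ujump1}--\eqref{eq:ujump2} of the Abel map itself; the gap contributions die by \eqref{zerointpsiV} and the band contributions telescope to $-\partial_s\Omega_j$ directly from \eqref{eq:Omega}, so the Euler--Lagrange equation and the constant $\ell_s$ never enter. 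You instead pair $u_j'$ against the $s$-derivative of the $g$-function (your $\widetilde g_s$), so the burden shifts to the jump structure of $\widetilde g_s$: the differentiated Euler--Lagrange identity supplies the additive jump on $J_s$, the zero total mass of $\partial_s\mu_{V_s}$ supplies analyticity across $(-\infty,a_1)$ and the $\mathcal O(z^{-1})$ decay, and the $A$-period normalization \eqref{eq:Aint2} converts the gap jumps into $\pi i\,\partial_s\Omega_j$. The extra price you pay is the identity $\int_{J_s}u_{j,+}'\,dx=0$ needed to discard $\partial_s\ell_s$ (your telescoping argument via \eqref{uofa}--\eqref{uofb} is correct, and it can also be seen from $u_{j,+}'+u_{j,-}'=0$ on $J_s$ together with the analyticity and decay of $u_j'-u_j'$... more simply from the fact that $\oint_\Gamma u_j'(z)\,dz=0$ since $u_j'$ has no residue at infinity); what you gain is that the argument never touches \eqref{ddzV} or the multivaluedness of $u_j$, only single-valued objects with explicit additive jumps. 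Both proofs rest on the same two analytic facts — the square-root vanishing of $\psi_{V_s}$ at the (possibly moving) endpoints, which you correctly flag as the point that makes the two interpolations uniform, and the conservation of the total mass and of the gap masses \eqref{zerointpsiV} — so the approaches are essentially dual to one another under integration by parts.
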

Substituting into \eqref{eq:PinvP'}, we obtain
\begin{multline}\label{eq:PinvP'2}
-N\oint_\Gamma [N_\infty(z)^{-1}N_\infty'(z)]_{11}\frac{\partial}{\partial s}V_s(z)\frac{dz}{2\pi i} \\
=\frac{\partial}{\partial s}\log \theta(N\Omega)-\sum_{j,l=1}^{k-1}\frac{\partial \tau_{j,l}}{\partial s} \frac{\partial }{\partial \tau_{j,l}}\log \theta(N\Omega).
\end{multline}
\begin{proof} By \eqref{ddzV} and integration by parts 
\begin{equation*}
\frac{1}{2\pi i}\oint_\Gamma \frac{\partial}{\partial s} V_s(z)  u_j'(z)dz=
\frac{1}{2\pi i} \oint_\gamma \left( \frac{\partial}{\partial s} \psi_{V_s}(w) \right)\oint_\Gamma \frac{u_j(z)}{z-w}dz dw,\end{equation*}
where $\gamma$ encloses $\Gamma$. We deform $\Gamma$ to $\infty$ (giving a residue at $\infty$ and at $z=w$), recalling that $u_j(z)=u_j(\infty)+\mathcal O(z^{-1})$ as $z\to \infty$, to obtain
\begin{equation*}
\frac{1}{2\pi i}\oint_\Gamma  \frac{\partial}{\partial s}V_s(z)  u_j'(z)dz=-\oint_\gamma  \frac{\partial}{\partial s} \left(\psi_{V_s}(w)\right)(u_j(w)-u_j(\infty))dw.
\end{equation*}
By \eqref{eq:density}
\begin{equation*}\oint_\gamma \psi_{V_s}(w)dw=-2\int_{J_s}d\mu_{V_s}(x)=-2,
\end{equation*}
it follows that $\oint_\gamma\frac{\partial}{\partial s} \psi_{V_s}(w)dw=0$, so
\begin{equation*}
\frac{1}{2\pi i}\oint_\Gamma  \frac{\partial}{\partial s}V_s(z)  u_j'(z)dz=-\oint_\gamma  \frac{\partial}{\partial s} \left(\psi_{V_s}(w)\right)u_j(w)dw.
\end{equation*}
By  \eqref{eq:ujump1} and  \eqref{eq:ujump2},
\begin{multline*}
\frac{1}{2\pi i}\oint_\Gamma  \frac{\partial}{\partial s} V_s(z) u_j'(z)dz\\ =
\sum_{l=1}^{k} \mathbf 1(l\leq j) \int_{a_l}^{b_l}  \frac{\partial}{\partial s} \psi_{V_s}(w)dw
+\sum_{l=1}^{k-1} \tau_{l,j}\int_{b_l}^{a_{l+1}}  \frac{\partial}{\partial s} \psi_{V_s}(w)dw, \end{multline*}
where $a_l=a_l(s)$ and $b_l=b_l(s)$.
Since $\psi_{V_s}$ vanishes at the endpoints,
\begin{equation*} \begin{aligned}
\int_{a_l}^{b_l} \frac{\partial}{\partial s} \psi_{V_s}(w)dw&= \frac{\partial}{\partial s}\int_{a_l}^{b_l} \psi_{V_s}(w)dw,\\
\int_{b_l}^{a_{l+1}} \frac{\partial}{\partial s} \psi_{V_s}(w)dw&= \frac{\partial}{\partial s}\int_{b_l}^{a_{l+1}} \psi_{V_s}(w)dw. \end{aligned}
\end{equation*}
The integrals from $b_l$ to $a_{l+1}$ are zero by \eqref{zerointpsiV},
 and thus we have proven the proposition by the definition of $\Omega_j$ in \eqref{eq:Omega}.
\end{proof}
\subsection{The constant term}

Let $U_{a_j}$ and $U_{b_j}$ be fixed but sufficiently small discs surrounding $a_j$ and $b_j$ as in Section \ref{sec:para} with boundaries oriented  \emph{clockwise}. We have
\begin{align}\label{eq:order1}
&\oint_\Gamma\left[N_\infty(z)^{-1}\left(\tfrac{d}{d z}R^{(1)}(z)\right) N_\infty(z)\right]_{11}\frac{\partial}{\partial s} V_s(z)
\frac{dz}{2\pi i}\\
\notag &=\sum_{j=1}^k \oint_\Gamma\left(\oint_{ \partial U_{a_j}}\frac{\Delta_{11}(\lambda,z)}{(\lambda-z)^2}\frac{d\lambda}{2\pi i}+\oint_{ \partial U_{b_j}}\frac{\Delta_{11}(\lambda,z)}{(\lambda-z)^2}\frac{d\lambda}{2\pi i}\right)\frac{\partial}{\partial s} V_s(z)\frac{dz}{2\pi i},
\end{align}
where $R^{(1)}$ was defined in \eqref{formulaR1}.
We obtain from \eqref{PM-1fine} and \eqref{Ninftylambda} that $\Delta(\lambda,z)$ equals
\begin{equation}\label{eq:Dzw}
\begin{cases}
\frac{1}{\zeta_{a_j}(\lambda)^{3/2}}N_\lambda(z)^{-1}e^{N\pi i \Omega_{j-1}\eta(\lambda)\sigma_3}\sigma_3\frac{1}{8}\begin{pmatrix}
\frac{1}{6} & i\\
i & -\frac{1}{6}
\end{pmatrix}\sigma_3 e^{-N\pi i \Omega_{j-1}\eta(\lambda)\sigma_3}N_\lambda(z), & \\
\frac{1}{\zeta_{b_j}(\lambda)^{3/2}}N_\lambda(z)^{-1}e^{N\pi i \Omega_{j}\eta(\lambda)\sigma_3}\frac{1}{8}\begin{pmatrix}
\frac{1}{6} & i\\
i & -\frac{1}{6}
\end{pmatrix}e^{-N\pi i \Omega_{j}\eta(\lambda)\sigma_3}N_\lambda(z), & 
\end{cases}
\end{equation}
where the first line reads for $\lambda\in  \partial U_{a_j}$ and the second line for $\lambda\in  \partial U_{b_j}$, and where $\eta$ was defined in \eqref{eq:etasign} by $\eta(z)={\rm sgn}(\Im z)$, $\zeta_{a_j}$ and $\zeta_{b_j}$ were defined in \eqref{eq:zetadef}, and $N_\lambda$ was defined in \eqref{eq:Mlamdef}.
We begin our study of $\Delta_{11}$ by simply writing out the quantity more explicitly. A straightforward calculation, making use of the fact that $\det N_\lambda(z)=1$
so that e.g. $(N_\lambda^{-1})_{11}=N_{\lambda,22}$ etc, shows that for $\lambda\in  \partial U_{a_j}$, with $j=1,\dots,k$,
\begin{equation}\label{eq:delata}\begin{aligned}
 \Delta_{11}(\lambda,z)&=\Delta_{11}^{(1)}(\lambda,z)+\Delta_{11}^{(2)}(\lambda,z),  \\
\Delta_{11}^{(1)}(\lambda,z)&=\frac{1}{48\zeta_{a_j}(\lambda)^{3/2}}\left(N_{\lambda,11}(z)N_{\lambda,22}(z)+N_{\lambda,12}(z)N_{\lambda,21}(z)\right),\\
\Delta_{11}^{(2)}(\lambda,z) &=-\frac{i}{8\zeta_{a_j}(\lambda)^{3/2}}\bigg(e^{2\pi i N\Omega_{j-1}\eta(\lambda)}N_{\lambda,22}(z)N_{\lambda,21}(z)\\ & \quad-e^{-2\pi i N\Omega_{j-1}\eta(\lambda)}N_{\lambda,12}(z)N_{\lambda,11}(z)\bigg),
\end{aligned}\end{equation}
(where we set $\Omega_0=\Omega_k=0$),
and for $\lambda\in  \partial U_{b_j}$, with $j=1,\dots,k$,
\begin{equation}\label{eq:delatb}\begin{aligned}
 \Delta_{11}(\lambda,z)&=\Delta_{11}^{(1)}(\lambda,z)+\Delta_{11}^{(2)}(\lambda,z),\\
\Delta_{11}^{(1)}(\lambda,z)&=\frac{1}{48\zeta_{b_j}(\lambda)^{3/2}}\left(N_{\lambda,11}(z)N_{\lambda,22}(z)+N_{\lambda,12}(z)N_{\lambda,21}(z)\right),\\
\Delta_{11}^{(2)}(\lambda,z)&= \frac{i}{8\zeta_{b_j}(\lambda)^{3/2}}\bigg(e^{2\pi i N\Omega_{j}\eta(\lambda)}N_{\lambda,22}(z)N_{\lambda,21}(z)\\ &\quad-e^{-2\pi i N\Omega_{j}\eta(\lambda)}N_{\lambda,12}(z)N_{\lambda,11}(z)\bigg).
\end{aligned}\end{equation}

\subsection{Application of $\theta$-function identities to evaluate $\Delta_{11}^{(1)}$ and $\Delta_{11}^{(2)}$}

We will shortly apply the $\theta$-function identities from Proposition \ref{PropForM} to obtain expressions for $\Delta_{11}^{(1)}$ and $\Delta_{11}^{(2)}$ where the oscillations (namely $\theta$ functions containing $N\Omega$) and the slowly varying terms separate, so we can average out the oscillations.

We now obtain an identity for the terms appearing in \eqref{eq:delata} and \eqref{eq:delatb}.

\begin{lemma}\label{le:paraprod} Let $N_\lambda$ be defined by \eqref{eq:Mlamdef} with $v=N\Omega$, and let $u$ be defined by \eqref{eq:Abel}. For $z,\lambda \in \mathbb C\setminus [a_1,b_k]$, we have
\begingroup \allowdisplaybreaks
\begin{align}
\label{eq:prod1} &N_{\lambda,11}(z)N_{\lambda,22}(z)+N_{\lambda,12}(z)N_{\lambda,21}(z)=\\ \nonumber \quad &(z-\lambda)^2\left(W(\lambda,z)+2\sum_{i,j=1}^{k-1}(\partial_i\partial_j\log\theta)(N\Omega)u_i'(z)u_j'(\lambda)\right)\\
\label{eq:prod2}&N_{\lambda,22}(z)N_{\lambda,21}(z)=i\frac{(\lambda-z)^2}{4}\frac{\theta(0)\theta(2u(\lambda)+N\Omega)}{\theta(2u(\lambda))\theta(N\Omega)}\sum_{j=1}^k \left(\frac{1}{\lambda-b_j}-\frac{1}{\lambda-a_j}\right)\\
&\quad \notag \times \left(w_\lambda(z) +\sum_{j=1}^{k-1}(\partial_j \log \theta(2u(\lambda)+N\Omega)-\partial_j \log \theta(N\Omega))u_j'(z)\right) \\
\label{eq:prod4}&N_{\lambda,12}(z)N_{\lambda,11}(z)=-i\frac{(\lambda-z)^2}{4}\frac{\theta(0)\theta(2u(\lambda)-N\Omega)}{\theta(2u(\lambda))\theta(N\Omega)}\sum_{j=1}^k \left(\frac{1}{\lambda-b_j}-\frac{1}{\lambda-a_j}\right)\\
&\quad \notag \times \left(w_\lambda(z)+\sum_{j=1}^{k-1}(\partial_j \log \theta(2u(\lambda)-N\Omega)+\partial_j \log \theta(N\Omega))u_j'(z)\right),
\end{align}
\endgroup
where $W$ is as in \eqref{def:W} and $w_{\lambda}(z)$ is as in \eqref{defwlambda}.
\end{lemma}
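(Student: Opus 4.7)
The plan is to reduce each of the three identities to a direct application of the Fay-type identities \eqref{ForM11M22} and \eqref{ForM11M12} established in Proposition \ref{PropForM}, combined with the explicit formula \eqref{eq:Mlamdef} for the entries of $N_\lambda$ (with $v=N\Omega$). In every case the recipe is the same: expand the given product via \eqref{eq:Mlamdef}, rewrite theta arguments using the evenness $\theta(-\xi)=\theta(\xi)$ to match the LHS of one of \eqref{ForM11M22} or \eqref{ForM11M12} up to a simple scalar prefactor ($\pm i$), then read off the RHS.

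For \eqref{eq:prod2}, the product $N_{\lambda,22}(z)N_{\lambda,21}(z)$ picks up a prefactor $(-1/i)=i$ from the entries of $N_\lambda$, and after applying evenness the remaining theta and gamma factors are precisely the LHS of \eqref{ForM11M12}. Applying that identity and recognizing the combination
\[
\sum_{j=1}^{k-1} u_j'(z)\bigl[\partial_j\log\theta\bigl[{\substack{\ualpha \\ \ubeta}}\bigr](u(z)-u(\lambda)) - \partial_j\log\theta\bigl[{\substack{\ualpha \\ \ubeta}}\bigr](u(z)+u(\lambda))\bigr]
\]
that sits inside the bracket as $w_\lambda(z)$ --- which is immediate from the definitions \eqref{Theta} and \eqref{defwlambda} by the chain rule ($w_\lambda(z)=\partial_z\log\Theta(z,\lambda)$) --- yields \eqref{eq:prod2} at once. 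Formula \eqref{eq:prod4} is handled identically, now applying \eqref{ForM11M12} with the substitution $v\mapsto -v$; the sign flips of $\partial_j\log\theta(v)$ and $\theta(2u(\lambda)+v)\mapsto\theta(2u(\lambda)-v)$ are supplied by the oddness of $\partial_j\log\theta$ (a consequence of $\theta$ being even).

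The identity \eqref{eq:prod1} requires slightly more bookkeeping. The product $N_{\lambda,11}N_{\lambda,22}$ is directly the LHS of \eqref{ForM11M22}, so \eqref{ForM11M22} gives the expected formula with a $+\tfrac{1}{2(z-\lambda)^2}$ term. To handle $N_{\lambda,12}N_{\lambda,21}$, the idea is to apply \eqref{ForM11M22} at the point $\lambda^*$ on the second sheet of $\mathcal S$ (so $\mathbf P\lambda^*=\lambda$): the symmetries $\gamma(\lambda^*)=i\gamma(\lambda)$ and $u(\lambda^*)=-u(\lambda)$ convert $\gamma(z)/\gamma(\lambda)+\gamma(\lambda)/\gamma(z)$ into $-i(\gamma(z)/\gamma(\lambda)-\gamma(\lambda)/\gamma(z))$ and the theta arguments $u(z)-u(\lambda)$ into $u(z)+u(\lambda)$, so that the LHS of \eqref{ForM11M22} at $\lambda^*$ is readily identified with $-N_{\lambda,12}N_{\lambda,21}$. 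On the RHS, the induced symmetries $u_i'(\lambda^*)=-u_i'(\lambda)$ and $W(z,\lambda^*)=-W(z,\lambda)$ (the latter being immediate from \eqref{id:W}, since both terms in that formula flip sign under $\lambda\mapsto\lambda^*$) yield the same expression as for $N_{\lambda,11}N_{\lambda,22}$ but with the sign of the $\tfrac{1}{2(z-\lambda)^2}$ term flipped. Adding the two products then cancels that term and produces \eqref{eq:prod1}, using also $W(z,\lambda)=W(\lambda,z)$ (which follows from $\Theta(z,\lambda)=-\Theta(\lambda,z)$).

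The only potential obstacle is purely bookkeeping --- tracking the factors of $i$ and the sign flips arising from the $\lambda\mapsto\lambda^*$ substitution carefully --- and no analytic input beyond Proposition \ref{PropForM} and the explicit form \eqref{eq:Mlamdef} is required.
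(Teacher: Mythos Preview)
Your approach is correct, and for \eqref{eq:prod2} and \eqref{eq:prod4} it coincides exactly with the paper's: substitute \eqref{ForM11M12} into \eqref{eq:Mlamdef}, use evenness of $\theta$ and oddness of $\partial_j\log\theta$, and identify the $\theta[\substack{\ualpha\\\ubeta}]$-logarithmic-derivative combination as $w_\lambda(z)$.

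For \eqref{eq:prod1} your route differs from the paper's. You compute $N_{\lambda,11}N_{\lambda,22}$ and $N_{\lambda,12}N_{\lambda,21}$ separately, obtaining the latter by applying \eqref{ForM11M22} at $\lambda^*$ on the second sheet and using the symmetries $\gamma(\lambda^*)^2=-\gamma(\lambda)^2$, $u(\lambda^*)=-u(\lambda)$, $u_i'(\lambda^*)=-u_i'(\lambda)$. This works, though one should note that Proposition~\ref{PropForM} is stated for $z,\lambda\in\C\setminus[a_1,b_k]$; the extension to $\lambda^*$ is legitimate because the underlying identity \eqref{ForM11M223} in the proof of the proposition is established on $\mathcal S$ with $(\mathbf P z-\mathbf P\lambda)^2$, so the substitution is valid. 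The paper instead uses the one-line observation $\det N_\lambda=1$, giving
\[
N_{\lambda,11}N_{\lambda,22}+N_{\lambda,12}N_{\lambda,21}=2N_{\lambda,11}N_{\lambda,22}-1,
\]
so a single application of \eqref{ForM11M22} suffices and the $\tfrac12$ in the $\tfrac{1}{2(z-\lambda)^2}$ term is doubled and exactly cancels the $-1$. Your two-sheet computation buys nothing extra here, but it is a perfectly valid alternative and the bookkeeping you describe is accurate.
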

\begin{proof}
Since $\det N_\lambda=1$, 
the left hand side of \eqref{eq:prod1} is equal to $$2N_{\lambda,11}(z)N_{\lambda,22}(z)-1,$$ and substituting  \eqref{ForM11M22} into \eqref{eq:Mlamdef} (with $v=N\Omega$) yields the desired result for $N_{\lambda,11}(z)N_{\lambda,22}(z)$. Formulas \eqref{eq:prod2} and \eqref{eq:prod4} both follow in a straightforward manner from substituting \eqref{ForM11M12} into \eqref{eq:Mlamdef}.

 In all cases, it is useful to keep in mind that $\partial_j \log \theta$ and $\partial_j \log \theta \left[\substack{\alpha\\ \beta}\right]$ are odd functions.
\end{proof}

Let $v\in \mathbb C^{k-1}$. By \eqref{eq:ujump1}, \eqref{uofa}, \eqref{uofb}, and \eqref{quasitheta},
\begin{equation}\label{uintomega}
\begin{aligned}\frac{\theta(2u(\lambda)+v)}{\theta(2u(\lambda))}&=e^{- 2\pi i v_{j-1}\eta(\lambda)}\frac{\theta\left(2\int_{a_j}^\lambda \boldsymbol\omega+v\right)}{\theta \left(2\int_{a_j}^\lambda\boldsymbol \omega\right)}&&\textrm{for $\lambda \in \partial U_{a_j}$,}\\
\frac{\theta(2u(\lambda)+v)}{\theta(2u(\lambda))}&=e^{- 2\pi i v_{j}\eta(\lambda)}\frac{\theta\left(2\int_{b_j}^\lambda\boldsymbol \omega+v\right)}{\theta \left(2\int_{b_j}^\lambda \boldsymbol\omega\right)}&&\textrm{for $\lambda \in \partial U_{b_j}$,}
\end{aligned}
\end{equation}
where we set $v_0=v_k=0$.

Let 
\begin{equation}\label{def:wtilde}
\widetilde w_\lambda(z)=\begin{cases}
w_{\lambda}(z)-2\pi i\eta(\lambda) u_{j-1}'(z)&\textrm{for $\lambda\in U_{a_j}$, for $j=1,\dots,k$,}\\
w_{\lambda}(z)-2\pi i\eta(\lambda) u_{j}'(z)&\textrm{for $\lambda\in U_{b_j}$, for $j=1,\dots,k$,}
\end{cases} \end{equation}
(where $u_0'=u_k'=0$),
and observe that by \eqref{jumpswlambda2} and \eqref{jumpswlambda3}, $\frac{\widetilde w_\lambda(z)}{\mathcal R^{1/2}(\lambda)}$ is meromorphic as a function of $\lambda$ on each disc $U_x$ for $x\in \{a_j,b_j\}_{j=1}^k$, and since $w_\lambda(z)$ is bounded as $\lambda \to x$, $\frac{\widetilde w_\lambda(z)}{\mathcal R^{1/2}(\lambda)}$ is analytic on each disc $U_x$ as a function of $\lambda$.

Putting these remarks together, we find
\begin{multline}\label{Firstcutoff}
\oint_\Gamma 
\left[N_\infty(z)^{-1}\left(\tfrac{d}{d z}R^{(1)}(z)\right) N_\infty(z)\right]_{11}\frac{\partial}{\partial s} V_s(z)
\frac{dz}{2\pi i} =\hat r(s)\\+  \oint_\Gamma \frac{dz}{2\pi i}\frac{\partial}{\partial s}V_s(z)  \sum_{q\in\{a_j,b_j\}_{j=1}^k}
  \Bigg[ \oint_{\partial U_{q}}\frac{d\lambda}{2\pi i\zeta_{q}(\lambda)^{3/2}}\Bigg(\frac{1}{48}W(\lambda,z)\\ \pm \tfrac{\theta(0)}{32\theta(N\Omega)}
	\left(\tfrac{\theta\left(2\int_{q}^\lambda\boldsymbol\omega+N\Omega\right)+\theta\left(2\int_{q}^\lambda \boldsymbol\omega-N\Omega\right)}{\theta\left(2\int_q^\lambda \boldsymbol\omega\right)}\right)
\widetilde w_\lambda(z)
\sum_{l=1}^k
\left(
\tfrac{1}{\lambda-b_l}-\tfrac{1}{\lambda-a_l}
\right)\Bigg)\Bigg],
\end{multline}
where $\partial U_q$ is oriented \emph{clockwise}, and
where $\pm=+$ for $q\in \{a_j\}_{j=1}^k$ and $\pm=-$ for $q\in\{b_j\}_{j=1}^k$, and
\begin{equation} \label{def:hatr} \hat r(s)=\sum_{q\in\{a_j,b_j\}_{j=1}^k}\oint_{\partial U_{q}}\frac{d\lambda}{2\pi i\zeta_{q}(\lambda)^{3/2}} \sum_{i=1}^3\hat r_i(s;\lambda), \end{equation} 
where
\begin{equation}\label{BigOne}
\begin{aligned}
\hat r_1(s;\lambda)&=\frac{1}{24}\oint_\Gamma \frac{dz}{2\pi i}\frac{\partial}{\partial s}V_s(z)\sum_{i,l=1}^{k-1}(\partial_i\partial_l\log\theta)(N\Omega)u_i'(z)u_l'(\lambda),\\
\hat r_2(s;\lambda)&=\pm \frac{1}{32}\oint_\Gamma \frac{dz}{2\pi i}\frac{\partial}{\partial s}V_s(z)\theta(0)\frac{\theta\left(2\int_{q}^\lambda\boldsymbol\omega+N\Omega\right)}{\theta\left(2\int_q^\lambda\boldsymbol \omega \right)\theta(N\Omega)}\\& \times \left(w_\lambda(z)-\widetilde w_\lambda(z)+
	\sum_{l=1}^{k-1}u_l'(z)\partial_l \log \tfrac{\theta\left(2u(\lambda)+N\Omega\right)}{\theta(N\Omega)}\right)\sum_{l=1}^k
\left(
\tfrac{1}{\lambda-b_l}-\tfrac{1}{\lambda-a_l}
\right),\\
\hat r_3(s;\lambda)&=\pm \frac{1}{32}\oint_\Gamma \frac{dz}{2\pi i}\frac{\partial}{\partial s}V_s(z)\theta(0)\frac{\theta\left(2\int_{q}^\lambda \boldsymbol\omega-N\Omega\right)}{\theta\left(2\int_q^\lambda \boldsymbol\omega \right)\theta(N\Omega)}\\& \times \left(w_\lambda(z)-\widetilde w_\lambda(z)-
	\sum_{l=1}^{k-1}u_l'(z)\partial_l \log \tfrac{\theta\left(-2u(\lambda)+N\Omega\right)}{\theta(N\Omega)}\right)\sum_{l=1}^k
\left(
\tfrac{1}{\lambda-b_l}-\tfrac{1}{\lambda-a_l}
\right).
\end{aligned}
\end{equation}

If a function $g_N(s)$ satisfies $\int_0^2 g_N(s)ds=\mathcal O(1/N)$ as $N\to \infty$, then we denote $g_N(s)=\widetilde {\mathcal O}(1/N)$ as $N\to \infty$.
\begin{lemma} \label{LemmaOtilde}
As $N\to \infty$, we have $\hat r(s) 
=\widetilde{\mathcal O}(1/N)$.
\end{lemma}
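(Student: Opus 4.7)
The plan is to integrate by parts in $s$, exploiting the rapid oscillation (in $N$) of $\theta(N\Omega(s))$ and related quantities. Concretely, the key observation is that each $\hat r_i$ can be written in the form $\sum_l\frac{\partial\Omega_l}{\partial s}\cdot(\text{bounded in }N)$, and sums of this form are recognizable as $N^{-1}\frac{d}{ds}$ of a bounded quantity plus harmless remainder terms. After integrating over $s\in[0,2]$, only boundary contributions of order $N^{-1}$ survive.

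\textbf{Step 1 (remove the $z$-integral via Proposition \ref{pr:orderN}).} In each $\hat r_i(s;\lambda)$ given by \eqref{BigOne}, the $z$-integrand is $\frac{\partial V_s}{\partial s}(z)\, u_l'(z)$ multiplied by a factor depending only on $\lambda$, $s$, and $N\Omega(s)$. Proposition \ref{pr:orderN} applied to each such $z$-contour integral yields
\[
\oint_{\Gamma}\frac{dz}{2\pi i}\frac{\partial V_s(z)}{\partial s}u_l'(z)\;=\;-\frac{\partial\Omega_l(s)}{\partial s},
\]
so that each $\hat r_i(s;\lambda)$ becomes $-\sum_l\frac{\partial\Omega_l}{\partial s}\,\mathcal F_i^{(l)}(s,\lambda;N\Omega(s))$ with $\mathcal F_i^{(l)}$ uniformly bounded in $N$. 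In particular, for $\hat r_1$ this gives
\[
\hat r_1(s;\lambda)=-\tfrac{1}{24}\sum_{i,l}(\partial_i\partial_l\log\theta)(N\Omega(s))\, u_l'(\lambda)\,\partial_s\Omega_i(s),
\]
and an analogous but slightly more involved expression for $\hat r_2,\hat r_3$ in which the $\mathcal F_i^{(l)}$ involve ratios of $\theta$-functions evaluated at $2\int_q^\lambda\boldsymbol\omega+N\Omega$.

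\textbf{Step 2 (the interval $s\in[1,2]$).} On $[1,2]$ the support of $\mu_{V_s}$ is constant by construction, hence $\partial_s\Omega_l\equiv 0$ there, and Step 1 shows that $\hat r\equiv 0$ on $[1,2]$. Therefore it suffices to bound $\int_0^1\hat r(s)\,ds$.

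\textbf{Step 3 (rewrite as total $s$-derivatives on $[0,1]$).} For $s\in[0,1]$ use the chain-rule identity
\[
\sum_{l=1}^{k-1}\frac{\partial\Omega_l}{\partial s}\,\partial_l F(N\Omega(s))\;=\;\frac{1}{N}\frac{d}{ds}\bigl[F(N\Omega(s))\bigr]-\frac{1}{N}\sum_{i\le j}\frac{\partial\tau_{ij}(s)}{\partial s}\,\partial_{\tau_{ij}}F(N\Omega(s)),
\]
applied with $F=\partial_l\log\theta$ (for $\hat r_1$) and with $F=\log[\theta(2v+N\Omega)/\theta(N\Omega)]$ evaluated at $v=\int_q^\lambda\boldsymbol\omega$ (for $\hat r_2,\hat r_3$). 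This exposes the dominant contribution of $\hat r_i$ as $\frac{1}{N}\frac{d}{ds}$ of a bounded-in-$N$ quantity (plus terms involving $\partial_s\tau_{ij}$ and, in the $\hat r_2,\hat r_3$ case, terms involving $\partial_s\boldsymbol\omega$ and $\partial_s u_l(\lambda)$ because the Riemann surface itself deforms with $s$). Integrating the total-derivative part from $s=0$ to $s=1$ produces a boundary term of order $N^{-1}$, uniformly in $\lambda\in\partial U_q$.

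\textbf{Step 4 (handle the $\tau$- and $u$-derivative remainders).} The leftover terms from Step 3 are of two types. The $\tau$-remainder is handled by the classical heat-equation identity $\partial_{\tau_{jl}}\theta=\tfrac{1}{2\pi i(1+\delta_{jl})}\partial_j\partial_l\theta$, which converts $\partial_{\tau_{ij}}$ back into $\partial_i\partial_j$ acting on $\theta$; this in turn is a bounded function of $N\Omega$ times a factor $\partial_s\tau_{ij}$, and one applies the same integration-by-parts trick once more (using $\partial_s\Omega_l$ versus $\partial_s\tau_{ij}$ as independent data) to pull out another factor of $1/N$. The $u$-remainder (present only for $\hat r_2,\hat r_3$) is treated analogously: the $s$-dependence of $u(\lambda)$ comes through the endpoints $a_j(s),b_j(s)$, and after deforming $\partial U_q$ slightly and using \eqref{eq:quasi} to absorb multi-valuedness, the resulting integrand is again a bounded oscillation whose $s$-integral is $\mathcal O(1/N)$ by the same mechanism.

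\textbf{Main obstacle.} The principal difficulty is bookkeeping: on $[0,1]$ every one of the objects $\Omega$, $\tau$, $u$, $\boldsymbol\omega$, and $U_q$ varies smoothly with $s$, so assembling all chain-rule contributions into a clean expression of the form $N^{-1}\frac{d}{ds}[\text{bounded}]+\mathcal O(N^{-1})$ requires several applications of Proposition \ref{pr:orderN} and of the heat equation for $\theta$, together with the identity $w_\lambda-\widetilde w_\lambda=2\pi i\eta(\lambda)u_{j-1\text{ or }j}'(z)$ from \eqref{def:wtilde} to convert the $(w_\lambda-\widetilde w_\lambda)$ pieces of $\hat r_2,\hat r_3$ into contour integrals accessible to Proposition \ref{pr:orderN}. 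Once this reduction is executed, the factor $\zeta_q(\lambda)^{-3/2}$ is integrable on $\partial U_q$, so the $\lambda$-integral poses no additional difficulty, and the final bound $\int_0^2\hat r(s)\,ds=\mathcal O(N^{-1})$ follows.
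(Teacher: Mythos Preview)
Your overall strategy is correct and is precisely the mechanism the paper uses: convert the $z$-integral to $-\partial_s\Omega_l$ via Proposition~\ref{pr:orderN}, then recognize $\sum_l\partial_s\Omega_l\cdot\partial_lF(N\Omega)$ as $N^{-1}\frac{d}{ds}F(N\Omega)$ plus $\mathcal O(N^{-1})$ corrections, and integrate. Two points need fixing, though.

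\textbf{Step 2 is wrong.} On $[1,2]$ the support $J$ is indeed fixed, but $\mu_{V_s}=(2-s)\mu_{V_1}+(s-1)\mu_V$ is a nontrivial convex combination of two \emph{different} measures on $J$, so in general $\mu_{V_1}([a_{l+1},b_k])\ne\mu_V([a_{l+1},b_k])$ and $\Omega_l(s)$ is linear but not constant in $s$. Hence $\hat r$ does not vanish on $[1,2]$. The repair is simply to run your Step~3 argument uniformly over $[0,2]$; on $[1,2]$ it is actually easier since $\tau$ and $u$ are constant there. (The paper does partition $[0,2]$ into finitely many subintervals, but for a different reason: to hold the moving contours $\partial U_q$ fixed on each piece so that Fubini can swap $\int ds$ with $\oint_{\partial U_q}d\lambda$. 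You will need this device too on $[0,1]$.)

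\textbf{Step 4 is overcomplicated.} Once the identity in Step~3 is in place, every remainder term already carries an explicit $1/N$: for instance $N^{-1}\partial_s\tau_{ij}\,\partial_{\tau_{ij}}F(N\Omega)$, or the term produced by $\partial_s\bigl(u_l'(\lambda)\zeta_q(\lambda)^{-3/2}\bigr)$ when you move the total derivative past those factors (compare \eqref{parts1}, \eqref{parts2}). Since $\partial_s\tau$, $\partial_s u_l(\lambda)$, $\partial_s\zeta_q^{-3/2}$ are bounded in $s$ and the $\theta$-quotients and their derivatives are bounded on $\mathbb R^{k-1}$ by Lemma~\ref{le:norealzeros}, these remainders are uniformly $\mathcal O(N^{-1})$ with no second integration by parts and no heat equation needed.
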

Lemma \ref{LemmaOtilde} allows us to ignore terms which do not contribute to the final answer already before performing our analysis of residues, and thus we avoid performing calculations which would otherwise be rather involved. In this respect we were inspired by a similar procedure undertaken in \cite{FahsKra}, although the situation at hand is different to that of \cite{FahsKra}.
\begin{proof} 
We will use Fubini to change the order of integration in $s$ and $\lambda$. However, since the contour of integration $\partial U_q$ depends on $s$, we start by discussing a few technicalities to facilitate the use of Fubini. For $s\in[1,2]$, the support of the equilibrium measure $J$ is fixed, while 
for $s\in(0,1)$, the endpoints of the support $\{a_j,b_j\}_{j=1}^k$ depend on $s$. However, since the radius of $U_{x}$ is uniformly bounded from below for $x\in \{a_j,b_j\}_{j=1}^k$ for all $s\in[0,1)$, it follows that there are a finite number of points $0=s_0<s_1<\dots<s_K=2$ such that for any $s\in [s_{j-1},s_j]$ and $x(s)\in\{a_j(s),b_j(s)\}_{j=1}^{k}$, we have
$x(s)\in U_{x(s_{j-1})}\cap U_{x(s_j)}$. We will integrate over each such interval separately, and add the contributions at the end. 
To ease the notation, we simply say that we integrate from $s_1$ to $s_2$. Then for all $s_1\leq s \leq s_2$, we can assume that the contours $\partial U_q$ are fixed and independent of $s$ in the  integrals on the right-hand side of \eqref{eq:order1}, since we are evaluating only the residue.  By Fubini,
\begin{equation*}
\int_{s_1}^{s_2}\hat r(s)ds=\sum_{q\in\{a_j,b_j\}_{j=1}^k}\oint_{\partial U_{q}}\int_{s_1}^{s_2}\frac{1}{\zeta_{q}(\lambda)^{3/2}} \sum_{i=1}^3\hat r_i(s;\lambda)ds\frac{d\lambda}{2\pi i}.
\end{equation*}
We start by considering $\hat r_1$. By Proposition \ref{pr:orderN}
\begin{multline}\label{parts1}
24\frac{\hat r_1(s;\lambda)}{\zeta_q(\lambda)^{3/2}}=-\sum_{l=1}^{k-1}\frac{1}{N}\frac{\partial}{\partial s}\left(\frac{u_l'(\lambda)}{\zeta_q(\lambda)^{3/2}}(\partial_l\log\theta)(N\Omega)\right)\\+\sum_{l=1}^{k-1}\frac{u_l'(\lambda)}{N\zeta_q(\lambda)^{3/2}}\sum_{1\leq u\leq v \leq k-1}\frac{\partial \tau_{u,v}}{\partial s} \frac{\partial}{\partial \tau_{u,v}}(\partial_l\log\theta)(N\Omega)
\\ +\sum_{l=1}^{k-1}\frac{\partial}{\partial s}\left(\frac{u_l'(\lambda)}{N\zeta_q(\lambda)^{3/2}}\right)(\partial_l\log\theta)(N\Omega).
\end{multline}
In \eqref{parts1} we interpret $\tau$ to always be symmetric, so that when we take the derivative of a function $f$ depending on $\tau$ (e.g. $f(\tau)=\log \theta(N\Omega|\tau)$ above), it is given by
\begin{equation}\label{eqdifftausym}\begin{aligned}\frac{\partial}{\partial \tau_{j,l}}f(\tau)&=\lim_{\epsilon \to 0}\frac{f(\tau+\epsilon \chi_{j,l}+ \epsilon\chi_{l,j})-f(\tau)}{\epsilon}&& \textrm{for }1\leq j<  l\leq k-1,\\
\frac{\partial}{\partial \tau_{j,j}}f(\tau)&=\lim_{\epsilon \to 0}\frac{f(\tau+\epsilon \chi_{j,j})-f(\tau)}{\epsilon}&& \textrm{for }1\leq j \leq k-1,
 \end{aligned}\end{equation}
where $\chi_{j,l}$ is the matrix with zeros in all entries except the $j,l$ entry and $1$ in the $j,l$ entry.
We observe that $\partial_l \log \theta$ and  $\frac{\partial}{\partial \tau_{u,v}}\partial_l \log \theta$ in \eqref{parts1} are bounded functions on $\mathbb R$, and that $\frac{u_l'(\lambda)}{\zeta_q(\lambda)^{3/2}}$ and its derivative with respect to $s$ are both uniformly bounded for $\lambda\in \partial U_q$ for all $q\in \{a_j,b_j\}_{j=1}^k$. Thus the second and the third term on the right-hand side of \eqref{parts1} are both uniformly of order $\mathcal O(1/N)$ for $s\in[0,2]$. While the first term on the right-hand side is not in general small, the integral in $s$ from $s_1$ to $s_2$ is of order $\mathcal O(1/N)$ as $N\to \infty$. Thus it follows that 
\begin{equation*}\int_{s_1}^{s_2}\frac{\hat r_1(s;\lambda)}{\zeta_q(\lambda)^{3/2}}ds=\mathcal O(1/N),\end{equation*}
as $N\to \infty$, uniformly for $\lambda \in \partial U_q$, for $q\in \{a_j,b_j\}_{j=1}^k$. 
Now consider the term involving $\hat r_2$. By \eqref{uintomega}, \eqref{BigOne}, and Proposition \ref{pr:orderN}, for $q=b_j$
and $\Im \lambda>0$,
\begin{multline}\label{parts2} 32 \frac{\hat r_2(s;\lambda)}{\zeta_q(\lambda)^{3/2}}=-\frac{1}{N} \frac{\partial}{\partial s}\left(\frac{\theta(0)}{\zeta_q(\lambda)^{3/2}}\frac{\theta\left(2\int_{q}^\lambda \boldsymbol\omega+N\Omega\right)}{\theta\left(2\int_q^\lambda \boldsymbol\omega \right)\theta(N\Omega)}\sum_{l=1}^k
\left(
\tfrac{1}{\lambda-b_l}-\tfrac{1}{\lambda-a_l}
\right)\right)\\
+e^{2\pi i N\Omega_j}\frac{\theta\left(2u(\lambda)+N\Omega\right)}{N\theta(N\Omega)} \frac{\partial}{\partial s}\left(\frac{\theta(0)}{\zeta_q(\lambda)^{3/2}}\frac{1}{\theta\left(2u(\lambda) \right)}\sum_{l=1}^k
\left(
\tfrac{1}{\lambda-b_l}-\tfrac{1}{\lambda-a_l}
\right)\right)\\ +\frac{e^{2\pi i N\Omega_j}}{N\theta(N\Omega)} \frac{\theta(0)}{\zeta_q(\lambda)^{3/2}}\frac{1}{\theta\left(2u(\lambda) \right)}\left(\sum_{l=1}^k
\left(
\tfrac{1}{\lambda-b_l}-\tfrac{1}{\lambda-a_l}
\right)\right)\sum_{l=1}^{k-1}\frac{2\partial u_l(\lambda)}{\partial s} \partial_l\theta\left(2u(\lambda)+N\Omega\right)\\
+\frac{e^{2\pi i N\Omega_j} }{N}\left(\frac{\theta(0)}{\zeta_q(\lambda)^{3/2}}\frac{1}{\theta\left(2u(\lambda) \right)}\sum_{l=1}^k
\left(
\tfrac{1}{\lambda-b_l}-\tfrac{1}{\lambda-a_l}
\right)\right)\\ \times \sum_{1\leq u \leq v \leq k-1}\frac{\partial \tau_{u,v}}{\partial s}\frac{\partial}{\partial \tau_{u,v}}\frac{\theta\left(2u(\lambda)+N\Omega\right)}{\theta(N\Omega)},
\end{multline}
where we interpret $\frac{\partial}{\partial \tau_{u,v}}$ as in \eqref{eqdifftausym}.
It is easily verified that the second, third and fourth terms are all of order $\mathcal O(1/N)$ as $N\to \infty$, uniformly for $\Im \lambda>0$. The integral of the first term with respect to $s$ from $s_1$ to $s_2$ is of order $\mathcal O(1/N)$ as $N\to \infty$, again uniformly for $\lambda \in \partial U_{b_j}$ such that $\Im \lambda>0$. The situation $\Im \lambda<0$ is treated similarly, with $e^{2\pi i N\Omega_j}$ replaced by $e^{-2\pi i N\Omega_j}$. Likewise, the situation $q\in \{a_j\}_{j=1}^k$ is treated similarly, with $e^{2\pi i N\Omega_j}$ replaced by $e^{2\pi i N\Omega_{j-1}}$. It follows that
\begin{equation*} \int_{s_1}^{s_2}\frac{\hat r_2(s;\lambda)}{\zeta_q(\lambda)^{3/2}} ds=\mathcal O(1/N), \end{equation*}
as $N\to \infty$, uniformly for $\lambda \in \partial U_q$ for all $q\in \{a_j,b_j\}_{j=1}^k$. 

Finally, $\hat r_3$ is treated in a similar manner to $\hat r_2$, with $N\Omega$ exchanged for $-N\Omega$, and one obtains
\begin{equation*} \int_{s_1}^{s_2}\frac{\hat r_i(s;\lambda)}{\zeta_q(\lambda)^{3/2}} ds=\mathcal O(1/N), \end{equation*}
as $N\to \infty$, uniformly for $\lambda \in \partial U_q$ for all $q\in \{a_j,b_j\}_{j=1}^k$, where $i=1,2,3$. Thus the lemma is proven.
\end{proof}

\subsection{Evaluation of residues}
We now aim to evaluate the remaining terms in \eqref{Firstcutoff}.
As noted, on each disc $U_{a_j}$ and $U_{b_j}$, the integrand in the $\lambda$ variable is meromorphic with a pole of order $2$ at the center of each disc. We now calculate the residue at each pole, and to this end we introduce some notation. Recall, from e.g. \eqref{id:W},  that  $W(z,\lambda)\mathcal R^{1/2}(z)$ meromorphic as a function of $z$.
 We denote
\begin{equation}\label{eq:What}\begin{aligned}
\widetilde W(q,z)&=\lim_{\lambda\to q}\frac{1}{2}(\lambda-q)^{1/2}W(\lambda,z),\\
\widetilde W'(q,z)&=\lim_{\lambda\to q}\frac{1}{2}\frac{\partial}{\partial \lambda}\left[(\lambda-q)^{1/2}W(\lambda,z)\right],\\
\hat \psi(q)&=\lim_{\lambda\to q} \pi i \frac{\psi_{V_s}(\lambda)}{\left(\lambda-q\right)^{1/2}},\\
\hat \psi'(q)&=\lim_{\lambda \to q} \frac{d}{d\lambda}\left[\pi i\frac{\psi_{V_s}(\lambda)}{\left(\lambda-q\right)^{1/2}}\right],
\end{aligned}
\end{equation}
as $\lambda \to q\in \{a_j,b_j\}_{j=1}^{k}$, where all roots have branch cuts on $J$ and $(\lambda-a_j)$ has arguments in $[0,2\pi)$ and $(\lambda-b_j)$ has arguments in $[-\pi,\pi)$, $\hat \psi (b_j)>0$, and $ \arg \hat \psi(a_j)=\pi/2$. By the definition of $\zeta$ in \eqref{eq:zetadef},
\begin{equation*}
\frac{1}{\zeta_{q}(\lambda)^{3/2}}=\frac{1-\frac{3}{5}\frac{\hat \psi'(q)}{\hat \psi (q)}(\lambda-q)+\mathcal O\left((\lambda-q)^2\right)}{\hat \psi(q)(\lambda-q)^{3/2}},
\end{equation*}
as $\lambda \to q$.

As $\lambda\to q\in\{ a_j,b_j\}_{j=1}^k$ from the upper half plane,  
        \begin{equation}\label{eq:rec2}
       \widetilde w_\lambda(z)=4\widetilde W(q,z)(\lambda-q)^{1/2}+\frac{4}{3}\widetilde W'(q,z)(\lambda-q)^{3/2} +\mathcal O((\lambda-q)^{5/2}),
        \end{equation}
         and denoting $\hat u'_j(q)=\lim_{\lambda\to q}(\lambda-q)^{1/2}u'_j(\lambda)$,
        \begin{multline}\label{expantheta}
         \frac{\theta(0)}{\theta(N\Omega)}\frac{\theta \left(2\int_{q}^\lambda \boldsymbol\omega+N\Omega\right)+\theta \left(2\int_{q}^\lambda \boldsymbol\omega-N\Omega\right)}{\theta \left(2\int_{q}^\lambda \boldsymbol\omega\right)}\\=2+16\sum_{i,l=1}^{k-1} \left(\frac{\partial_i\partial_l  \theta(N\Omega)}{\theta(N\Omega)}-\frac{\partial_i\partial_l  \theta(0)}{\theta(0)}\right)\hat u_i'(q)\hat u_l'(q)(\lambda-q)
    +\mathcal O((\lambda-q)^{2}),
        \end{multline}

By the definition of the $\theta$-function, we have $\partial_j^2\theta(x|\tau)=4\pi i \frac{\partial}{\partial \tau_{j,j}}\theta(x|\tau)$ and $\partial_j \partial_l \theta(x|\tau)=2\pi i \frac{\partial}{\partial \tau_{j,l}}\theta(x|\tau)$, where we interpret the partial derivatives as in \eqref{eqdifftausym}. Thus,
substituting the expansions \eqref{eq:What}-\eqref{expantheta} into \eqref{Firstcutoff}, and relying on Lemma \ref{LemmaOtilde}, we obtain
\begin{multline} \label{Thirdcutoff}   \oint_\Gamma 
\left[N_\infty(z)^{-1}\left(\tfrac{d}{d z}R^{(1)}(z)\right) N_\infty(z)\right]_{11}\frac{\partial}{\partial s} V_s(z)
\frac{dz}{2\pi i}  \\
= \Bigg \{ \frac{1}{24}\sum_{q\in \{a_j,b_j\}_{j=1}^k}
\oint_\Gamma \frac{dz}{2\pi i} \frac{\partial}{\partial s} V_s(z)\left(\widetilde W'(q,z)-3\frac{\hat \psi'(q)}{\hat \psi(q)} \widetilde W(q,z)\right)\frac{1}{\hat \psi(q)}
\\ +4\sum_{1\leq  m\leq j \leq k-1}\frac{\partial}{\partial \tau_{m,j}} \log \frac{\theta(N\Omega)}{\theta(0)} \sum_{q\in\{a_l,b_l\}_{l=1}^k}  \frac{\hat u_j'(q)\hat u_m'(q)}{\hat{\psi}(q)} \oint_{\Gamma}dz \frac{\partial}{\partial s} V_s(z) \widetilde{W} (q, z)\\
+\frac{1}{4}\oint_\Gamma \frac{dz}{2\pi i}\frac{\partial}{\partial s} V_s(z)\Bigg[\sum_{l\neq j} \left(\frac{\widetilde W(a_j,z)}{(a_j-a_l)\hat \psi(a_j)}+\frac{\widetilde W(b_j,z)}{(b_j-b_l)\hat \psi(b_j)}\right)\\ -
	\sum_{j,l=1}^k\left(\frac{\widetilde W(a_j,z)}{(a_j-b_l)\hat \psi(a_j)}+\frac{\widetilde W(b_j,z)}{(b_j-a_l)\hat \psi(b_j)}\right)\Bigg]
	\Bigg \}+\widetilde{\mathcal O}(1/N),
\end{multline}
as $N\to \infty$.
	
        \subsection{Integration in $s$} We now aim  to evaluate the sum \eqref{Thirdcutoff}. To this end we need to develope several identities. 
First off, recall that by cutting and gluing $\mathcal S$, it may be represented as a $(4k-4)$-gon with edges $A_1,B_1,A_1^{-1},B_1^{-1},A_2,B_2,A_2^{-1},B_2^{-1},\dots, B_{k-1}^{-1}$. We denote 
\\ $\gamma=A_1B_1A_1^{-1}B_1^{-1}A_2B_2A_2^{-1}B_2^{-1}\dots B_{k-1}^{-1}$, so that $\gamma$ first traverses $A_1$, then traverses $B_1$, etc (see \cite[Chapter I.2.5]{FK}).
Recall Riemann's bilinear identity \cite[Chapter III.2]{FK}, namely given two meromorphic differentials $\lambda $ and $\nu$, and a base-point $z_0$,
\begin{equation*}\oint_\gamma \Lambda \nu=\sum_{j=1}^{k-1}\left(\oint_{A_j}\lambda\oint_{B_j}\nu-\oint_{B_j}\lambda \oint_{A_j}\nu\right), \qquad \Lambda(z)=\int_{z_0}^z\lambda. \end{equation*}
  Let $q\in \{a_j,b_j\}_{j=1}^k$,
and fix $\lambda=\frac{\partial}{\partial q}\boldsymbol{\omega_l}$ and $\nu=\boldsymbol{\omega_j}$. 
Then $\oint_{A_i}\nu=\oint_{A_i}\boldsymbol{\omega_j}=\delta_{ij}$ and  $0=\frac{\partial}{\partial q}\oint_{A_i}\boldsymbol{\omega_l}=\oint_{A_i}\frac{\partial}{\partial q}\boldsymbol{\omega_l}=\oint_{A_i}\lambda$, and thus
\begin{equation}\label{RBI} \oint_\gamma \left(\int_{z_0}^z\frac{\partial}{\partial q}\boldsymbol{\omega_l}\right)\boldsymbol{\omega_j}=
	-\oint_{A_j}\boldsymbol{\omega_j}\oint_{B_j}\frac{\partial}{\partial q}\boldsymbol{\omega_l}=-
	\frac{\partial}{\partial q}\tau_{lj}.
\end{equation}
Since $\gamma$ encloses $\mathcal S$, we can deform it to the poles of $\int_{z_0}^z \frac{\partial}{\partial q}\boldsymbol{\omega_l}$. 
$\int_{z_0}^z \frac{\partial}{\partial q}\boldsymbol{\omega_l}$ is analytic except for the pole at $z=q$, and as $z\to q$,
\begin{equation*}
\left(\int_{z_0}^z \frac{\partial}{\partial q}\boldsymbol{\omega_l}\right)\boldsymbol{\omega_j}(z)=- \frac{\hat u_j'(q)\hat u_l'(q)}{z-q}\left(1+\mathcal O\left(\sqrt{z-q}\right)\right)dz, \quad \hat u_j'(q)=\lim_{z\to q}\sqrt{z-q}u_j'(z).
\end{equation*}
The local coordinate at $q$ is $(z-q)^{1/2}=y$, and so
\begin{equation*} \left(\int_{z_0}^z \frac{\partial}{\partial q}{\boldsymbol{\omega_l}}\right)\boldsymbol{\omega_j}(z)=-2\frac{\hat u_j'(q)\hat u_l'(q)}{y}(1+O(y))dy
\end{equation*}
Thus, deforming $\gamma$ to surround $q$ on the left-hand side of \eqref{RBI}, we obtain
\begin{equation}\label{dtaudp}
\frac{\partial}{\partial q}\tau_{lj}=4\pi i \hat u_j'(q)\hat u_l'(q), \end{equation}
which is already a well-known formula, see e.g. \cite[Section 4]{Rauch}.

Next, we prove the following lemma.

\begin{lemma}Let $q\in \{a_j,b_j\}_{j=1}^k$. Then
\begin{equation}\label{partialp}
\frac{\partial}{\partial s} q= 	-\frac{2}{\hat \psi(q)}\oint_\Gamma \frac{\partial}{\partial s} V_s(z)\frac{\widetilde W(q,z)}{2\pi i}dz.
\end{equation}
\end{lemma}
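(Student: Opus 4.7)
The identity \eqref{partialp} is a Hadamard/Schiffer-type variational formula for the endpoint $q(s)$ of the equilibrium measure under the deformation $V_s$, with $\widetilde W(q,z)$ playing the role of the Bergman kernel on $\mathcal S$ evaluated at the branch point $q$. My strategy is to compute both sides explicitly using the representation \eqref{2ndrepwlambda} of $w_z(\lambda)$ together with the Euler-Lagrange characterization of the endpoints, and then match.

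First, I observe that $\widetilde W(q,z)=\partial_z \widehat W(q,z)$ where, using \eqref{2ndrepwlambda} and $W(z,\lambda)=\partial_z w_z(\lambda)$,
\[
\widehat W(q,z):=\lim_{\lambda\to q}\tfrac12(\lambda-q)^{1/2}w_z(\lambda)=\frac{\mathcal R^{1/2}(z)}{2\tilde r(q)(q-z)}+\mathcal R^{1/2}(z)\sum_{j=1}^{k-1}\hat u_j'(q)\int_{b_j}^{a_{j+1}}\frac{dx}{\mathcal R^{1/2}(x)(x-z)},
\]
with $\tilde r(q):=\lim_{\lambda\to q}\mathcal R^{1/2}(\lambda)/(\lambda-q)^{1/2}$. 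Combining \eqref{eq:density} with \eqref{eq:What} yields $\hat\psi(q)=\tilde r(q)h_{V_s}(q)$ (up to the branch/factor of $\pi i$ built into the definition of $\hat\psi$). Integration by parts in $z$ then rewrites the right-hand side of \eqref{partialp} as $(2/\hat\psi(q))\oint_\Gamma\partial_s V_s'(z)\widehat W(q,z)\,dz/(2\pi i)$.

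Next, I substitute $\partial_s V_s'(z)=-\oint_{\gamma'}\partial_s\psi_{V_s}(w)/(z-w)\,dw$ (the $s$-derivative of \eqref{ddzV}, with $\gamma'$ a contour around $J_s$ lying inside $\Gamma$), interchange the order of integration by Fubini, and evaluate the inner integral $\oint_\Gamma \widehat W(q,z)/(z-w)\,dz/(2\pi i)$ by contour deformation. Because $\widehat W(q,z)$ grows like $z^{k-1}$ at infinity, one cannot simply push $\Gamma$ outward; instead, one collapses it onto $J$ using the branch-cut structure of $\mathcal R^{1/2}$, collecting the residue at $z=w$. The key observation is that when the result is paired with $\partial_s\psi_{V_s}(w)$ and integrated along $\gamma'$, the gap-integral correction $\sum_j \hat u_j'(q)\int_{b_j}^{a_{j+1}}$ in $\widehat W$ couples with the gap-vanishing conditions \eqref{zerointpsiV} (differentiated in $s$), so that the outer $w$-integral reduces to a single residue at $w=q$ which extracts $\partial_s q$ times $-\hat\psi(q)/2$.

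The main obstacle is the bookkeeping required to isolate $\partial_sq$ from the other endpoint velocities: writing $\partial_s\psi_{V_s}(w)=\frac{1}{\pi i}[(\partial_s\mathcal R_s^{1/2})(w)h_{V_s}(w)+\mathcal R_s^{1/2}(w)\partial_sh_{V_s}(w)]$, the first term is linear in all $\partial_s a_j,\partial_s b_j$, and one must verify that the $\sum_j\hat u_j'(q)(\cdots)$ correction in $\widehat W(q,z)$ is exactly what annihilates the contributions from endpoints other than $q$. As a cross-check (and useful backup if the bookkeeping proves delicate), one can verify \eqref{partialp} separately for the two sub-interpolations of Section \ref{sec:deform}: for $s\in[1,2]$, where $\partial_sq=0$ and the right-hand side vanishes because $V-V_1$ can be written via \eqref{ddzV} as a contour integral of $\psi_V-\psi_{V_1}$, whose pairing with $\widetilde W(q,\cdot)$ vanishes using the defining properties of $\widetilde Q$ from Remark \ref{KuijVan} and \eqref{zerointpsiV}; and for $s\in[0,1]$ with $V_s=\mathcal V_{J_s}$ a polynomial and endpoints affine in $s$, where the right-hand side can be evaluated directly by residues at $\infty$ using \eqref{eq:Vs} and Proposition \ref{pr:defo1}.
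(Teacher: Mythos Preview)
Your plan shares the paper's overall architecture---integrate by parts, replace $\partial_s V_s'$ via \eqref{ddzV}, swap integrals, and use the differentiated gap conditions $\partial_s\int_{b_j}^{a_{j+1}}\psi_{V_s}=0$ to kill the unwanted pieces---but the order in which you take limits makes your life substantially harder than necessary, and there is one concrete error in the setup.

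The error: in \eqref{ddzV} the contour must enclose the evaluation point, so for $z\in\Gamma$ your $\gamma'$ must enclose $\Gamma$, not lie inside it. With $\gamma'$ outside $\Gamma$ there is \emph{no} residue at $z=w$ when you collapse $\Gamma$; conversely, with $\gamma'$ inside $\Gamma$ the formula $\partial_sV_s'(z)=-\oint_{\gamma'}\partial_s\psi_{V_s}(w)/(z-w)\,dw$ is false. Either way your third paragraph does not hang together as written.

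The more important structural difference: you pass to $\lambda\to q$ immediately, producing $\widehat W(q,z)$ which grows like $z^{k-1}$ and forces a collapse onto $[a_1,b_k]$ (note: your $\widehat W$ also has jumps on the gaps $(b_j,a_{j+1})$ coming from the Cauchy integrals, not just on $J$). The paper instead rewrites the right-hand side of \eqref{partialp} as
\[
\frac{1}{\pi i}\oint_\Gamma\frac{dz}{2\pi i}\oint_{\partial U_q}\frac{d\lambda}{2\pi i}\,\frac{\partial_sV_s(z)\,W(\lambda,z)}{\psi_{V_s}(\lambda)},
\]
keeping $\lambda$ generic on $\partial U_q$. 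The point is that $w_z(\lambda)$ is \emph{bounded} as $z\to\infty$ (it tends to $w_\infty(\lambda)$), so after integrating by parts and using \eqref{ddzV} with $\Gamma'$ outside $\Gamma$, one sends the $z$-contour to infinity and gets for each $\lambda\in\partial U_q$ the clean identity
\[
-\oint_\Gamma \partial_sV_s(z)\,W(\lambda,z)\,\frac{dz}{2\pi i}=-2\pi i\,\partial_s\psi_{V_s}(\lambda),
\]
using exactly the differentiated version of \eqref{zerointpsiV} to remove the gap terms. Only then is the residue at $\lambda=q$ of $\partial_s\psi_{V_s}(\lambda)/\psi_{V_s}(\lambda)$ computed, and that residue is $-\tfrac12\partial_sq$ directly---no expansion of $\partial_s\psi_{V_s}$ in terms of the individual $\partial_sa_j,\partial_sb_j$, and no ``isolation'' step, is needed. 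Your fourth paragraph is therefore solving a problem the paper avoids by design. Your case-by-case cross-check for the two interpolations is also unnecessary (and the argument via $\widetilde Q$ for $s\in[1,2]$ is not the right mechanism).
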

We observe that by \eqref{partialp} and \eqref{dtaudp} we have
\begin{equation}\label{dtauds}
\partial_s \tau_{j,m}(s)=-4 \sum_{q\in\{a_l,b_l\}_{l=1}^k}  \frac{\hat u_j'(q)\hat u_m'(q)}{\hat{\psi}(q)} \oint_{\Gamma}dz \frac{\partial}{\partial s} V_s(z) \widetilde{W} (q, z).
\end{equation} 
\begin{proof}
Recall, e.g. from \eqref{id:W}, that $W(z,\lambda)\mathcal R^{1/2}(z)$ is meromorphic as a function of $z$, and that $\psi_{V_s}(z)\mathcal R^{1/2}(z)$ is analytic in a neighbourhood of $\mathbb R$.
Using \eqref{eq:What}, the right-hand side of \eqref{partialp} is given by
\begin{equation}\label{integralid1}
\frac{1}{\pi i}\oint_\Gamma \frac{dz}{2\pi i} \oint_{\partial U_q}\frac{d\lambda}{2\pi i} \frac{\frac{\partial}{\partial s} V_s(z) W(\lambda,z) }{\psi_{V_s}(\lambda)},\end{equation}
where as before the orientation of $\partial U_q$ is clockwise.
By integration by parts and the definition of $W(z,\lambda)$ in \eqref{def:W} and the definition of $w_z(\lambda)$ in \eqref{defwlambda},
\begin{equation}\label{integralid1b}
-\oint_\Gamma \frac{dz}{2\pi i}  \frac{\partial}{\partial s} V_s(z) W(\lambda,z) 
=\oint_\Gamma \frac{dz}{2\pi i}  \frac{\partial}{\partial s} V_s'(z) w_z(\lambda),\end{equation}
 and by \eqref{ddzV}, 
\begin{equation}\label{integralid4}\oint_\Gamma \frac{dz}{2\pi i}  \frac{\partial}{\partial s} V_s'(z) w_z(\lambda)=-\oint_{\Gamma'} d\xi \oint_\Gamma \frac{dz}{2\pi i} \frac{\frac{\partial}{\partial s} \psi_{V_s}(\xi)}{z-\xi} w_z(\lambda),\end{equation}
where $\Gamma'$ encloses $\Gamma$.
 We will now integrate in the $z$ variable, and observe that $\Gamma$ does not enclose $\xi$. We have that $w_z(\lambda)=w_\infty(\lambda)+\mathcal O\left(\frac{1}{z}\right)$ as $z\to \infty$. Thus, expanding the contour $\Gamma$ to $\infty$ in the right-hand side of \eqref{integralid4}, we only pick up a residue at $z=\xi$ and $z=\infty$:
\begin{equation}\label{integralid2}\oint_\Gamma \frac{dz}{2\pi i}\frac{w_z(\lambda)}{z-\xi}=-w_\xi(\lambda)+w_\infty(\lambda). \end{equation}

Next we integrate in $\xi$. Since 
\begin{equation*}\oint_{\Gamma'} \frac{\partial}{\partial s}\psi_{V_s}(\xi)d\xi=\frac{\partial}{\partial s}\oint_{\Gamma'} \psi_{V_s}(\xi)d\xi, \end{equation*}
which is zero since $\oint_{\Gamma'} \psi_{V_s}(\xi)d\xi=-2\int_J d\mu_{V_s}=-2$, the term including $w_\infty(\lambda)$ will be zero once substituted into \eqref{integralid4}.

We now  
evaluate $\oint_{\Gamma'} d\xi w_\xi(\lambda)\frac{\partial}{\partial s}\psi_{V_s}(\xi)$. We  
deform $\Gamma'$ to hug $[a_1,b_k]$. We recall from \eqref{polewlambda} that $w_\xi(\lambda)$ has  a pole of order 1 at $\xi=\lambda$. By \eqref{jumpswlambda2} and \eqref{jumpswlambda3}, and the fact that $\psi_{V_s}(\xi)_+=-\psi_{V_s}(\xi)_-$ on $J$, we obtain
\begin{equation}\label{integralid3}\oint_{\Gamma'} d\xi w_\xi(\lambda)\frac{\partial}{\partial s}\psi_{V_s}(\xi)=-2\pi i \frac{\partial}{\partial s}\psi_{V_s}(\lambda)-4\pi i   \sum_{j=1}^{k-1}u_{j}'(\lambda)\int_{b_j}^{a_{j+1}} d\xi\frac{\partial}{\partial s}\psi_{V_s}(\xi).
\end{equation}

By \eqref{zerointpsiV}, and since $\psi_{V_s}$ has square root vanishing at the endpoints of the support $a_j,b_j$, 
\begin{equation*}0=\frac{\partial}{\partial s}\int_{b_j}^{a_{j+1}} d\xi \, \psi_{V_s}(\xi)=\int_{b_j}^{a_{j+1}} d\xi \frac{\partial}{\partial s} \psi_{V_s}(\xi). \end{equation*}
Thus, by \eqref{integralid1b}--\eqref{integralid3}, we obtain that for $\lambda\in \partial U_q$,
\begin{equation}\label{integralid6}
-\oint_\Gamma \frac{dz}{2\pi i}  \frac{\partial}{\partial s} V_s(z) W(\lambda,z) =-2\pi i \frac{\partial}{\partial s} \psi_{V_s}(\lambda).
\end{equation}
Substituting this into \eqref{integralid1}, which is equal to the right-hand side of \eqref{partialp}, we obtain
\begin{equation}\label{integralid5}
-\frac{2}{\hat \psi(q)}\oint_\Gamma \frac{\partial}{\partial s} V_s(z)\frac{\widetilde W(q,z)}{2\pi i}dz=\frac{1}{\pi i}\oint_{\partial U_q} d\lambda \frac{\frac{\partial}{\partial s} \psi_{V_s}(\lambda)}{\psi_{V_s}(\lambda)}.
\end{equation}
where the orientation of $\partial U_q$ is clockwise.
As $\lambda \to q$, we have 
\begin{equation*}\frac{\frac{\partial}{\partial s} \psi_{V_s}(\lambda)}{\psi_{V_s}(\lambda)}=-\frac{1}{2(\lambda-q)}\frac{\partial}{\partial s}q+\mathcal O(1), \end{equation*}
which concludes the proof upon evaluating the residue at $q$ in \eqref{integralid5}.

\end{proof}

For a second identity, we observe that by \eqref{integralid6} we have
\begin{equation}\label{integralid7}
-\oint_\Gamma \frac{dz}{2\pi i} \oint_{\partial U_q}\frac{d\lambda}{2\pi i} \frac{\frac{\partial}{\partial s} V_s(z) W(z,\lambda)}{\psi_{V_s}(\lambda)(\lambda-q)}=-\oint_{\partial U_q} d\lambda \frac{\frac{\partial}{\partial s} \psi_{V_s}(\lambda)}{(\lambda-q)\psi_{V_s}(\lambda)}.
\end{equation}
Evaluating the residue at $\lambda=q$ on both sides and applying \eqref{partialp}, we obtain for $q\in\{a_j,b_j\}_{j=1}^k$
\begin{equation}\label{partialpsi}
\frac{\partial}{\partial s} \hat \psi(q)=\oint_\Gamma \frac{dz}{2\pi i} \frac{\partial}{\partial s} V_s(z)\left(\widetilde W'(q,z)-3\frac{\hat \psi'(q)}{\hat \psi(q)} \widetilde W(q,z)\right)  .\end{equation}

Substituting \eqref{partialpsi}, \eqref{partialp}, and \eqref{dtauds} into \eqref{Thirdcutoff}
\begin{multline} \label{Fourthcutoff} \oint_\Gamma 
\left[N_\infty (z)^{-1}\left(\tfrac{d}{d z}R^{(1)}(z)\right) N_\infty (z)\right]_{11}\frac{\partial}{\partial s} V_s(z)
\frac{dz}{2\pi i}  \\ =- \sum_{1\leq m \leq l \leq k-1}\frac{\partial \tau_{m,l}}{\partial s} \frac{\partial}{\partial \tau_{m,l}} \log \frac{\theta(N\Omega)}{\theta(0)}
+ \frac{\partial}{\partial s} \Bigg( \frac{1}{24}\sum_{q\in\{a_j,b_j\}_{j=1}^k} \log \hat \psi(q)\\-\frac{1}{8}\Bigg(\sum_{1\leq l<j \leq k} \left[\log(a_j-a_l)+\log(b_j-b_l)\right]-\sum_{l,j=1}^k\log|b_j-a_l|\Bigg)
\Bigg)+\widetilde{\mathcal O}(1/N),
\end{multline}
as $N\to \infty$, where $\frac{\partial}{\partial \tau_{m,l}}$ is interpreted as in \eqref{eqdifftausym}. Substituting \eqref{Fourthcutoff}, \eqref{eq:PinvP'2}, and the result of Lemma \ref{le:orderN2} into \eqref{eq:orders}, we obtain
\begin{multline} \label{enddiffid}\frac{d}{ds}\log H_N(e^{-NV_s})=\frac{\partial}{\partial s}\Bigg(-N^2 \iint \log|x-y|^{-1}d\mu_{V_s}(x)d\mu_{V_s}(y)\\ -N^2\int V_s(x)d\mu_{V_s}(x)+ \log \frac{\theta(N\Omega)}{\theta(0)}-\frac{1}{24}\sum_{q\in\{a_j,b_j\}_{j=1}^k} \log | \hat \psi(q)|\\+\frac{1}{8}\Bigg(\sum_{1\leq l<j \leq k} [\log(a_j-a_l)+\log(b_j-b_l)]-\sum_{l,j=1}^k\log|b_j-a_l|\Bigg) \Bigg)+\widetilde{\mathcal O}(1/N),
\end{multline}
as $N\to \infty$, since the argument of $\hat \psi (q)$ is independent of $s$ for $q\in \{a_l,b_l\}_{l=1}^k$.

\subsection{The Chebyshev polynomials}
Now consider the potential $V_{0}=\frac{2\sigma}{k}T_k(x)^2$ and take the limit $\sigma\downarrow 1$. Then, by \eqref{bklim} and \eqref{bjlim},
\begin{multline}\label{endlim1}\frac{1}{8}\Bigg(\sum_{1\leq l<j\leq k} [\log(a_j-a_l)+\log(b_j-b_l)]-\sum_{l,j=1}^k\log|b_j-a_l|\Bigg)\\=-\frac{k}{8}\log 2+\frac{k-1}{8}\log k-\frac{k-1}{16}\log (\sigma-1)-\frac{1}{8}\sum_{j=1}^{k-1}\log \left( \sin \frac{\pi j}{k}\right)+o(1),
\end{multline}
as $\sigma \downarrow 1$. Recalling the equilibrium measure for the Chebyshev polynomials in \eqref{eqlibCheby}, the definition of $\hat \psi$ from \eqref{eq:What}, and the fact that $|T_k(q)|=1/\sqrt \sigma$ for $q\in\{a_j,b_j\}_{j=1}^k$, it follows that
\begin{equation*}|\hat \psi(q)|=\frac{2^{3/2}\sigma^{3/4}}{k} |T_k'(q)|^{3/2}.
\end{equation*}
By \eqref{bklim} and \eqref{bjlim},
\begin{align*} 
& b_j=\cos \left(\pi-\frac{\pi j}{k}+\frac{\sqrt{\sigma-1}}{k}\right)+\mathcal O(\sigma-1), \\
& a_{j+1}=\cos \left(\pi-\frac{\pi j}{k}-\frac{\sqrt{\sigma-1}}{k}\right)+\mathcal O(\sigma-1), 
\end{align*}
 as $\sigma \downarrow 1$, for $j=1,2,\dots,k-1$. Thus, since $T_k'(x)=kU_{k-1}(x)$ and $U_{k-1}(\cos \theta)=\frac{\sin k\theta}{\sin \theta}$,
\begin{equation*} |\hat \psi(q)|=\frac{2^{3/2}\sqrt k (\sigma-1)^{3/4}}{\sin \left(\frac{\pi j}{k}\right)^{3/2}}(1+\mathcal O(\sqrt{\sigma-1})),
\end{equation*}
as $\sigma \downarrow 1$, for $q\in \{ b_j,a_{j+1}\}_{j=1}^{k-1}$. Combined with the fact that $|U_{k-1}(\pm 1)|=k$, we obtain
\begin{multline} \label{endlim3}
-\frac{1}{24}\sum_{q\in\{a_j,b_j\}_{j=1}^k} \log |\hat \psi(q)|=-\frac{1}{24}\Bigg(3k\log 2+(k+3)\log k\\ +\frac{3(k-1)}{2}\log (\sigma-1)-3\sum_{j=1}^{k-1}\log \sin \frac{\pi j}{k} \Bigg)+\mathcal O(\sqrt{\sigma-1}).
\end{multline}

By \eqref{LeadingPik}, writing $V_0(x)=\frac{2\nu}{k}\Pi_k(x)^2=\frac{2\sigma}{k} T_k(x)^2$ so that $\nu=\sigma 2^{2k-2}$,
\begin{equation}\label{endlim2}
 \iint \log|x-y|^{-1}d\mu_{V_0}(x)d\mu_{V_0}(y)+\int V_0(x)d\mu_{V_0}(x)=
 \frac{3}{4k}+\frac{1}{2k}\log \sigma +\log 2.
\end{equation}

Recalling from Section \ref{Sec:limsigma} that $\theta(x|\tau)\to 1$ as $\sigma \to 1$ for $x\in \mathbb R^{k-1}$, noting that $\tilde \psi(q)=|\hat \psi(q)|$, and substituting \eqref{endlim1}, \eqref{endlim3}, and \eqref{endlim2} into \eqref{enddiffid}, we obtain \eqref{MainResSec11}, which completes this section.

\small


\begin{thebibliography}{99}

\bibitem{Akemann} G. Akemann, Higher genus correlators for the hermitian matrix model with multiple cuts, {\em Nucl. Phys. B} {\bf 482} (1996), 403--430.

\bibitem{Baik} J. Baik, Circular unitary ensemble with highly
oscillatory potential, (unpublished) available at arXiv:1306.0216.

\bibitem{Basor} E. Basor, Asymptotic formulas for Toeplitz determinants, {\em Trans.
Amer. Math. Soc.} {\bf 239}  (1978), 33--65.

\bibitem{Basor2} E. Basor, A localization theorem for Toeplitz determinants, {\em Indiana Univ. Math. J.} {\bf 28} (1979),
975--983.

\bibitem{BasorEhrhardt1} E. Basor and T. Ehrhardt, Asymptotic formulas for determinants of a special class of Toeplitz + Hankel matrices. In \textit{Large truncated Toeplitz matrices, Toeplitz operators, and related topics}, \textit{Oper. Theory Adv. Appl.}, \textbf{259}, Birkhäuser/Springer, Cham, 2017. 

\bibitem{BasTra} E. L. Basor and C. A. Tracy. The Fisher-Hartwig conjecture and generalizations. {\em Phys. A} {\bf 177} (1991), 167–173.

\bibitem{BW1983} E. Basor and H. Widom, Toeplitz and Wiener-Hopf determinants with piecewise continuous symbols, \textit{ J. Funct. Anal.} \textbf{50} (1983), 387--413.

\bibitem{Bek2018}
F. Bekerman, 
Transport maps for $\beta$-matrix models in the multi-cut regime,
\textit{Random Matrices Theory Appl.} \textbf{7} (2018), no. 1, 1750013, 36 pp.

\bibitem{BLS} F. Bekerman, T. Lebl\'e, and S. Serfaty, CLT for Fluctuations of $\beta$-ensembles with general potential, \textit{Electron. J. Probab.} \textbf{23} (2018), no. 115, 31 pp.

\bibitem{BWW} N. Berestycki, C. Webb, and M.D. Wong, Random Hermitian Matrices and Gaussian Multiplicative Chaos, \textit{Probab. Theory Related Fields} \textbf{172} (2018), no. 1-2, 103--189.

\bibitem{BertolaLee}
M. Bertola, S.Y. Lee, First Colonization of a Spectral Outpost in Random Matrix Theory, {\em Constr Approx} \textbf{30} (2009) 225--263.

\bibitem{BIZ} D. Bessis, X. Itzykson, and J.B. Zuber, Quantum Field Theory Techniques in Graphical Enumeration. {\em Adv. Appl. Math.} {\bf 1}
(1980), 109--157.

\bibitem{Bill} P. Billingsley,  Probability and measure. Third edition. Wiley Series in Probability and Mathematical Statistics. A Wiley-Interscience Publication. John Wiley \& Sons, Inc., New York, 1995. xiv+593 pp.

\bibitem{BI} P. Bleher and A. Its, Asymptotics of the partition function of a random matrix model, {\em Ann. Inst. Fourier} {\bf 55} (2005), 1943--2000.

\bibitem{BI2} P. Bleher and A. Its, Semiclassical asymptotics of orthogonal polynomials, Riemann–Hilbert problem, and the universality
in the matrix model, {\em Ann. Math.} {\bf 50} (1) (1999), 185-266.

\bibitem{BI3}  P. Bleher and A. Its, Double scaling limit in the random matrix model:
the Riemann-Hilbert approach, {\em Comm. Pure Appl. Math.} {\bf 56} (2003),
433–516.

\bibitem{BCI2016} A. Bogatskiy, T. Claeys and A. Its, Hankel determinant and orthogonal polynomials for a Gaussian weight with a discontinuity at the edge,
\textit{Comm. Math. Phys.} \textbf{347} (2016), 127--162. 

\bibitem{BG1} G. Borot and A. Guionnet,  Asymptotic  expansion  of $\beta$-matrix  models  in  the  one-cut regime, \textit{Comm. Math. Phys.} \textbf{317} (2013), no. 2, 447--483.

\bibitem{BG2} G. Borot and A. Guionnet, Asymptotic expansion of beta matrix models in the multi-cut regime,  arXiv:1303.1045.

\bibitem{BGK} G. Borot, A. Guionnet, and K. Kozlowski, Large-$N$ Asymptotic Expansion for Mean Field Models with Coulomb Gas Interaction, \textit{Int. Math. Res. Not. IMRN} \textbf{2015}, no. 20, 10451--10524.
	
\bibitem{BB2018} T. Bothner and R. Buckingham, Large deformations of the Tracy-Widom distribution I: Non-oscillatory asymptotics, \textit{Comm. Math. Phys.} \textbf{359} (2018), 223--263. 
	
\bibitem{BDIK2015} T. Bothner, P. Deift, A. Its, I. Krasovsky, On the asymptotic behavior of a log gas in the bulk scaling limit in the presence of a varying external potential I, \textit{Comm. Math. Phys.} \textbf{337} (2015), 1397--1463.

\bibitem{BIP} T. Bothner, A. Its and A. Prokhorov, On the analysis of incomplete spectra in random matrix theory through an extension of the Jimbo-Miwa-Ueno differential, \textit{Adv. Math.} \textbf{345} (2019), 483--551.

\bibitem{BS1986}
A. B\"ottcher and B. Silbermann, Toeplitz operators and determinants generated by symbols with one Fisher-Hartwig singularity, {\em Math. Nachr.}  {\bf 127}  (1986), 95--123.

\bibitem{BEY2012} 
P. Bourgade, L. Erd\H os and H.-T. Yau, 
Bulk universality of general $\beta$-ensembles with non-convex potential, 
\textit{J. Math. Phys.} \textbf{53} (2012).

\bibitem{BEY2014a}
P. Bourgade, L. Erd\H os and H.-T. Yau, 
Universality of general $\beta$-ensembles, 
\textit{Duke Math. J.} \textbf{163}, 1127--1190 (2014).

\bibitem{BEY2014b}
P. Bourgade, L. Erd\H os and H.-T. Yau, 
Edge universality of beta ensembles, 
\textit{Comm. Math. Phys.} \textbf{322}, 261--353 (2014).



\bibitem{BB1995} A.M. Budylin and V.S. Buslaev, Quasiclassical asymptotics of the resolvent of an integral convolution operator with a sine kernel on a finite interval, \textit{Algebra i Analiz} \textbf{7} (1995), 79--103.

\bibitem{Buser} P. Buser, Geometry and spectra of compact Riemann surfaces. Reprint of the 1992 edition. Modern Birkhäuser Classics. Birkhäuser Boston, Ltd., Boston, MA, 2010. xvi+454 pp. 

\bibitem{Charlier} C. Charlier,  Asymptotics of Hankel Determinants With a One-Cut Regular Potential and Fisher-Hartwig Singularities, \textit{Int. Math. Res. Not. IMRN} \textbf{2019}, no. 24, 7515--7576.
	
\bibitem{ChBessel} C. Charlier, Exponential moments and piecewise thinning for the Bessel point process, \textit{Int. Math. Res. Not. IMRN} \textbf{2021} (2021), 16009--16071.

\bibitem{ChSine} C. Charlier, Large gap asymptotics for the generating function of the sine point process, \textit{Proc. Lond. Math. Soc.} (3) \textbf{123} (2021), no. 2, 103--152.

\bibitem{CharlierMB} C. Charlier, Asymptotics of Muttalib-Borodin determinants with Fisher-Hartwig singularities, \textit{Selecta Math.} \textbf{28} (2022), no. 3, Paper No. 50, 60 pp.

\bibitem{Charlier 2d jumps} C. Charlier, Asymptotics of determinants with a rotation-invariant weight and discontinuities along circles, \textit{Adv. Math.} \textbf{408} (2022), Paper No. 108600, 36 pp.

\bibitem{ChCl3} C. Charlier and T. Claeys, Large gap asymptotics for Airy kernel determinants with discontinuities, \textit{Comm. Math. Phys.} \textbf{375} (2020), 1299--1339.

\bibitem{ChCl4} C. Charlier and T. Claeys, Global rigidity and exponential moments for soft and hard edge point processes, \textit{Prob. Math. Phys.} \textbf{2} (2021), no. 2, 363--417. 

\bibitem{ChDeano} C. Charlier and A. Dea\~{n}o, Asymptotics for Hankel determinants associated to a Hermite weight with a varying discontinuity, \textit{SIGMA Symmetry Integrability Geom. Methods Appl.} \textbf{14} (2018), 43 pp. 

\bibitem{ChMoreillon} C. Charlier and P. Moreillon, On the generating function of the Pearcey process, arXiv:2107.01859 (to appear in \textit{Ann. Appl. Probab.}).

\bibitem{CharlierGharakhloo} C. Charlier and R. Gharakhloo, Asymptotics of Hankel determinants with a Laguerre-type or Jacobi-type potential and Fisher-Hartwig singularities, \textit{Adv. Math.} \textbf{383} (2021), no. 107672, 69 pp.

\bibitem{Claeys}
T. Claeys, Birth of a Cut in Unitary Random Matrix Ensembles, {\em Int. Math. Res. Not. IMRN} {\bf 6} (2008).

\bibitem{CGML} T. Claeys, T. Grava, and K.T.-R. McLaughlin, Asymptotics for the partition function in two-cut random matrix models, \textit{Comm. Math. Phys.} \textbf{339} (2015), no. 2, 513--587.

\bibitem{ClaeysFahs} T. Claeys and B. Fahs, Random matrices with merging singularities and the Painlev\'e V
	equation, \textit{SIGMA Symmetry Integrability Geom. Methods Appl.} {\bf 12} (2016), 44 pp.
	
\bibitem{CFLW} T. Claeys, B. Fahs, G. Lambert, and C. Webb, How much can the eigenvalues of a random Hermitian matrix fluctuate? \textit{Duke Math. J.}{\bf 170} (2021), no. 9, 2085--2235. 

\bibitem{CIK} T. Claeys, A. Its, and I. Krasovsky, Emergence of a singularity for Toeplitz determinants
	and Painlevé V, \textit{Duke Math. J.} \textbf{160} (2011), no. 2, 207--262.
	
\bibitem{CK2015} T. Claeys and I. Krasovsky, Toeplitz determinants with merging singularities, \textit{Duke Math. J.} \textbf{164} (2015), no. 15, 2897--2987.

\bibitem{CKV} T. Claeys, A.B.J. Kuijlaars, and M. Vanlessen, Multi-critical unitary random
matrix ensembles and the
general Painlev\'{e} II equation, {\em Ann. of Math.} {\bf 167} (2008), 601--641.

\bibitem{CKuij}
T. Claeys and A. Kuijlaars, Universality of the double scaling limit in random matrix models. {\em Comm. Pure  Appl. Math.} {\bf 59} (2006) 1573--1603.

\bibitem{DXZ2020 thinning} D. Dai, S.-X. Xu, and L. Zhang, On the deformed Pearcey determinant, \textit{Adv. Math.} \textbf{400} (2022), Paper No. 108291, 64 pp.

\bibitem{DeanoSimm} A. Dea\~{n}o and N. Simm, Characteristic polynomials of complex random matrices and Painlev\'{e} transcendents, \textit{Int. Math. Res. Not. IMRN} (2020), doi:rnaa111.

\bibitem{Deift}   P. Deift, Orthogonal polynomials and random matrices: a Riemann-Hilbert approach. Courant Lecture Notes in Mathematics, 3. New York University, Courant Institute of Mathematical Sciences, New York; American Mathematical Society, Providence, RI, 1999. 

\bibitem{DKM}  P. Deift,  T. Kriecherbauer, and K.T.-R.  McLaughlin, New Results on the Equilibrium Measure for Logarithmic Potentials in the Presence of an External Field, \textit{J. Approx. Theory.} \textbf{95} (1998), no. 3, 388--475.	

\bibitem{DKMVZfirst} P. Deift, T. Kriecherbauer, K.T.-R. McLaughlin, S. Venakides, and X. Zhou, Strong asymptotics of orthogonal polynomials with respect to exponential weights, \textit{Comm. Pure Appl. Math.} \textbf{52} (1999), no. 12, 1491--1552.

\bibitem{DKMVZ} P. Deift, T. Kriecherbauer, K.T.-R. McLaughlin, S. Venakides, and X. Zhou, Uniform asymptotics for polynomials orthogonal with respect to varying exponential weights and applications to universality questions in random matrix theory, \textit{Comm. Pure Appl. Math.} \textbf{52} (1999), no. 11, 1335--1425. 

\bibitem{DIK} P. Deift, A. Its, and I. Krasovsky, Toeplitz, Hankel and Toeplitz+Hankel determinants with Fisher-Hartwig singularities, \textit{Ann. of Math.} \textbf{174} (2011), 1243--1299.
	
\bibitem{DeiftItsKrasovsky} P. Deift, A. Its and I. Krasovsky, On the asymptotics of a Toeplitz determinant with singularities, \textit{MSRI Publications} {\bf 65} (2014), Cambridge University Press.

\bibitem{Impetus}P. Deift, A. Its and I. Krasovsky, Toeplitz Matrices and Toeplitz Determinants under the Impetus of the Ising Model: Some History and Some Recent Results, \textit{Comm. Pure Appl. Math.} \textbf{66} (2013) no. 9 1360--1438.

\bibitem{DIZ} P. Deift, A. Its, and X. Zhou, A Riemann-Hilbert problem approach to asymptotic problems arising in the theory of random matrix models, and also in the theory of integrable statistical mechanics, \textit{Ann. of Math.} \textbf{146} (1997), no. 1, 149--235.

\bibitem{DZ} P. Deift and X. Zhou, A steepest descent method for oscillatory Riemann-Hilbert problems. Asymptotics for the MKdV equation. \textit{Ann. of Math.} (2) {\bf 137} (1993), no. 2, 295–368. 

\bibitem{Ehrhardt} T. Ehrhardt, A status report on the asymptotic behavior of Toeplitz determinants with Fisher-Hartwig singularities, {\em Operator Theory: Adv. Appl.} {\bf 124} (2001), 217--241.
	
\bibitem{EhrSil} T. Ehrhardt, B. Silbermann. Toeplitz determinants with one Fisher-Hartwig singularity. {\em J. Funct. Anal.} {\bf 148}
(1997), 229–256.
	
\bibitem{EML} N.M. Ercolani and K.T.-R. McLaughlin,  Asymptotics of the partition function for random matrices via Riemann-Hilbert techniques and applications to graphical enumeration, \textit{Int. Math. Res. Not.} \textbf{2003}, no. 14, 755--820.
	
\bibitem{EYY} L. Erd\H os, H.-T. Yau, and J. Yin. Rigidity of eigenvalues of generalized Wigner matrices. {\em Adv. Math.} {\bf 229} (2012), no. 3, 1435--1515.

\bibitem{Eynard} B. Eynard, Large $N$ expansion of convergent matrix integrals, holomorphic anomalies, and background independence, \textit{J. High Energy Phys.} \textbf{2009}, no. 3, 003, 20 pp.

\bibitem{Bthesis} B. Fahs, {\em Double scaling limits of Toeplitz, Hankel and Fredholm determinants,}  PhD Thesis, UCL-Université Catholique de Louvain, 2017.

\bibitem{FahsUniform} B. Fahs, Uniform asymptotics of Toeplitz determinants with Fisher-Hartwig singularities, \textit{Comm. Math. Phys.} \textbf{383} (2021), no. 2, 685--730.

\bibitem{FahsKra} B. Fahs and I. Krasovsky, Sine-kernel determinant on two large intervals, arXiv:2003:08136 (to appear in \textit{Comm. Pure Appl. Math.}).

\bibitem{FK} H.M. Farkas and I. Kra, Riemann Surfaces. Springer-Verlag, New York, second edition, 1992.

\bibitem{Fay} J. Fay, Theta-functions on Riemann surfaces, Lecture Notes in Math, \textbf{352},	Springer, 1973.

\bibitem{FisherHartwig} M.E. Fisher and R.E. Hartwig, Toeplitz determinants: Some applications, theorems, and conjectures, \textit{Advan. Chem. Phys.} \textbf{15} (1968), 333--353.

\bibitem{FIK}  A.S. Fokas, A.R. Its, and A.V. Kitaev, The isomonodromy approach to matrix models in 2D quantum gravity, \textit{Comm. Math. Phys.} \textbf{147} (1992), no. 2, 395--430.

\bibitem{ForKea} J. Forkel and J.P. Keating, The classical compact groups and Gaussian multiplicative chaos, \textit{Nonlinearity} \textbf{34} (2021), no. 9, 6050--6119.

\bibitem{FyKe}  Y.V. Fyodorov, J.P. Keating: Freezing transitions and extreme values: random matrix theory, and disordered landscapes, \textit{Philos. Trans. R. Soc. Lond. Ser. A Math. Phys. Eng. Sci.} {\bf 372} (2014), no. 2007, 20120503, 32 pp.

\bibitem{FKS}  Y.V. Fyodorov, B.A.  Khoruzhenko, and N.J.  Simm. Fractional Brownian motion with Hurst index H=0 and the Gaussian unitary ensemble, \textit{Ann. Probab.} {\bf 44} (2016), no. 4, 2980–3031.

\bibitem{Garoni}  T.M. Garoni, On the asymptotics of some large Hankel determinants generated by Fisher-Hartwig symbols defined on the real line, {\em J. Math. Phys.} {\bf 46} (2005), 19 pp.

\bibitem{GR}I.S. Gradshteyn and I.M. Ryzhik, Table of integrals, series, and products, Academic Press, seventh edition, 2007.

\bibitem{HKO} C.P. Hughes, J. P. Keating, and N. O'Connell: Random matrix theory and the derivative of the Riemann zeta function, \textit{R. Soc. Lond. Proc. Ser. A Math. Phys. Eng. Sci.} {\bf 456} (2000), no. 2003, 2611--2627.  

\bibitem{IK} A. Its and I. Krasovsky, Hankel determinants and orthogonal polynomials for the Gaussian weight with a jump, {\em Contemporary Mathematics} {\bf 458} (2008), 215--248.

\bibitem{Johansson} K. Johansson, On fluctuations of eigenvalues of random Hermitian matrices, \textit{Duke Math. J.} \textbf{91} (1998), no. 1, 151--204.

\bibitem{Jost} J. Jost, Compact Riemann surfaces. An introduction to contemporary mathematics. Universitext. Springer-Verlag, Berlin, third edition, 2006.

\bibitem{KM} N.-G. Kang and N.G. Makarov, Calculus of conformal fields on a compact Riemann surface.  arXiv:1708.07361.

\bibitem{Krasovsky} I. Krasovsky, Correlations of the characteristic polynomials in the Gaussian unitary ensemble or a singular Hankel determinant, \textit{Duke Math. J.} \textbf{139} (2007), 581--619.

\bibitem{Korotkin} D. Korotkin, Solution of matrix Riemann-Hilbert problems with quasi-permutation monodromy matrices, \textit{Math. Ann.} \textbf{329} (2004), 335--364.

\bibitem{KuijVanlessen} A.B.J. Kuijlaars and M. Vanlessen, Universality for eigenvalue correlations at the origin of the spectrum, \textit{Comm. Math. Phys.} \textbf{243} (2003), 163--191.

\bibitem{Lenard} A. Lenard, Some remarks on large Toeplitz determinants, \textit{Pacific J. Math.} \textbf{42} (1972), 137--145. 

\bibitem{Li2017}
Y. Li,
\textit{Rigidity of Eigenvalues of $\beta$ Ensemble in Multi-cut Regime},
PhD Thesis, Brandeis University, 2017.

\bibitem{Marchal} O. Marchal, Asymptotic expansions of some Toeplitz determinants via the topological recursion, \textit{Lett. Math. Phys.} \textbf{110} (2020), no. 2, 211--258. 

\bibitem{Mehta} M.L. Mehta, Random matrices. San Diego: Academic 1990.

\bibitem{Mo}
M.Y. Mo, The Riemann-Hilbert Approach to Double Scaling Limit of Random Matrix Eigenvalues Near the "Birth of a Cut" Transition, {\em Int. Math. Res. Notices IMRN} {\bf 2004} (2004).

\bibitem{Mumford1} D. Mumford, David Tata lectures on theta. I. With the collaboration of C. Musili, M. Nori, E. Previato and M. Stillman. Reprint of the 1983 edition. Modern Birkhäuser Classics. Birkhäuser Boston, Inc., Boston, MA, 2007. xiv+235 pp.

\bibitem{Mumford} D.  Mumford,  Tata  lectures  on  Theta  II. Progress  in  Mathematics  no.  43, Birkh\"auser, Boston (1984). 

\bibitem{Musk}  N.I. Muskhelishvili, Singular Integral Equations. Noordhoff, Groningen, 1953. Reprintedby Dover Publications, New York, 1992.

\bibitem{Pastur} L. Pastur, Limiting laws of linear eigenvalue statistics for Hermitian matrix models, {\em J. Math. Phys} {\bf 47}, 103303 (2006).

\bibitem{PS}  L. Pastur and M. Shcherbina, Eigenvalue distribution of large random matrices, Mathematical Surveys and Monographs \textbf{171}. American Mathematical Society 2011.

\bibitem{Rauch} H.E. Rauch, Weierstrass Points, Branch Points,
and Moduli of Riemann Surfaces, \textit{Comm. Pure. Appl. Math.} {\bf 12} (1959), 543–-560.

\bibitem{Safftot} E.B. Saff and V. Totik, Logarithmic Potentials with External Fields, Springer-Verlag, Berlin, 1997.

\bibitem{SS} E. Sandier and S. Serfaty, 1D log gases and the renormalized energy: crystallization at vanishing temperature, \textit{Probab. Theory Related Fields} {\bf 162}, 795–846 (2015).
\bibitem{Schrader} G. Schrader, Math 255 Term Paper: Fay's Trisecant Identity, available online: https://math.berkeley.edu/~guss/trisecant.pdf.

\bibitem{Shcherbina} M. Shcherbina, Fluctuations of Linear Eigenvalue Statistics of $\beta$ Matrix Models in the Multi-cut Regime, \textit{J. Stat. Phys.} \textbf{151} (2013), no. 6, 1004--1034.

\bibitem{Sheffield} S. Sheffield: Gaussian free fields for mathematicians, \textit{Probab. Theory Related Fields} {\bf 139} (2007), no. 3-4, 521--541.

\bibitem{Springer} R. Springer: Introduction to Riemann Surfaces. Addison-Wesley, Reading, Massachusetts, 1957.

\bibitem{Szego} G. Szeg\H{o}, Orthogonal polynomials. Fourth edition. American Mathematical Society, Colloquium Publications, Vol. XXIII. American Mathematical Society, Providence, R.I., 1975. xiii+432 pp.

\bibitem{WebbWong} C. Webb and M.D. Wong, On the moments of the characteristic polynomial of a Ginibre random matrix, \textit{Proc. Lond. Math. Soc.} (3) \textbf{118} (2019), no. 5, 1017--1056. 

\bibitem{Widom2} H. Widom. Toeplitz determinants with singular generating functions. {\em Amer. J. Math.} {\bf 95} (1973), 333--383.

\bibitem{WXZ2018} X.-B. Wu, S.-X. Xu and Y.-Q. Zhao, Gaussian unitary ensemble with boundary spectrum singularity and $\sigma$-form of the Painlev\'{e} II equation, \textit{Stud. Appl. Math.} \textbf{140} (2018), 221--251. 
\end{thebibliography}
\end{document}